\pdfoutput=1

\PassOptionsToPackage{names,dvipsnames}{xcolor} %
\documentclass[sigconf]{acmart}
\settopmatter{}
\usepackage{amsmath,amsfonts}
\usepackage{algorithmic}
\usepackage{graphicx}
\usepackage{textcomp}
\usepackage{bmpsize}
\usepackage{lipsum}
\usepackage{paralist}

\usepackage[english]{babel}
\usepackage[inference]{semantic}
\usepackage{amsmath}\allowdisplaybreaks
\usepackage{mathpartir}
\usepackage{amsthm} 
\usepackage{stmaryrd} %
\usepackage{xspace}
\usepackage{latexsym}
\usepackage{ifthen}
\usepackage{bussproofs}
\usepackage{mathtools} 
\usepackage{color}
\usepackage{listings}
\usepackage{verbatim}
\usepackage[colorinlistoftodos]{todonotes}
\usepackage{tikz}
\usetikzlibrary{positioning,shadows,arrows,calc,backgrounds,fit,shapes,snakes,shapes.multipart,decorations.pathreplacing,shapes.misc,patterns}
\usepackage{xspace}
\usepackage[T1]{fontenc}
\usepackage[scaled=.83]{beramono}
\usepackage{epigraph}
\usepackage{booktabs}
\usepackage{float}
\usepackage{etoolbox}
\usepackage{wrapfig}
\usepackage{pgfplots}
\usepackage{nameref}
\usepackage{scalerel}
\usepackage{bm}
\usepackage{centernot}
\usepackage{mdframed}
\usepackage{enumitem}

\usepackage{natbib} %
\setcitestyle{number}

\usepackage{subfigure}

\usepackage{cleveref}

\newcommand{\MP}[1]{\todo[color=blue!30]{TODO: #1}}

\newcommand{\MG}[1]{\todo[color=green!30]{TODO: #1}}
\newcommand{\MGin}[1]{\todo[color=green!30,inline]{TODO: #1}}

\newcommand{\mi}[1]{\ensuremath{\mathit{#1}}}

\newcommand{\mtt}[1]{\ensuremath{\mathtt{#1}}}
\newcommand{\mf}[1]{\ensuremath{\mathbf{#1}}}

\newcommand{\ms}[1]{\ensuremath{\mathsf{#1}}}
\newcommand{\mb}[1]{\ensuremath{\mathbb{#1}}}

\newcommand{\isdef}[0]{\ensuremath{\mathrel{\overset{\makebox[0pt]{\mbox{\normalfont\tiny\sffamily def}}}{=}}}}

\newcommand{\relmiddle}[1]{\mathrel{}\middle#1\mathrel{}}

\newcommand\bnfdef{\ensuremath{\mathrel{::=}}}

\newcommand{\OB}[1]{\ensuremath{\overline{#1}}}

\newcommand{\myset}[2]{\ensuremath{\left\{#1 ~\relmiddle|~ #2\right\}}}

\newcommand*{\QEDA}{\hfill\ensuremath{\blacksquare}}%

\AtEndEnvironment{problem}{\null\hfill\QEDA}
\AtEndEnvironment{example}{}%

\Crefname{lstlisting}{Listing}{Listings}
\Crefname{problem}{Problem}{Problems}

\Crefname{equation}{Rule}{Rules}

\newcommand{\compskel}[3]{\ensuremath{\bl{\left\llbracket \src{#1} \right\rrbracket^{#2}_{#3}}}}
\newcommand{\comp}[1]{\compskel{\bl{#1}}{}{}}

\newcommand{\compslh}[1]{\compskel{#1}{\com{s}}{}}
\newcommand{\compsslh}[1]{\compskel{#1}{\com{ss}}{}}
\newcommand{\compslht}[1]{\compskel{#1}{\com{s}}{\com{n}}}
\newcommand{\compslhd}[1]{\compskel{#1}{\com{s}}{\com{3}}}
\newcommand{\complfence}[1]{\compskel{#1}{\com{f}}{}}

\newcommand{\comprp}[1]{\compskel{#1}{\com{r}}{}}

\newcommand{\funname}[1]{\mtt{#1}}
\newcommand{\fun}[2]{\ensuremath{{\bl{\funname{#1}\left(#2\right)}}}\xspace}
\newcommand{\dom}[1]{\fun{dom}{#1}}

\newcommand{\backtrskel}[3]{\ensuremath{\bl{\left\langle\!\left\langle {#1} \right\rangle\!\right\rangle^{#2}_{#3}}}}
\newcommand{\backtr}[1]{\backtrskel{#1}{}{}}

\newcommand{\backtrfencec}[1]{\backtrskel{#1}{\com{f}}{\com{c}}}

\newcommand{\SR}[0]{\src{L}\xspace}
\newcommand{\TR}[0]{\trg{T}\xspace}
\newcommand{\wSR}[0]{\src{\weak{L}}\xspace}
\newcommand{\wTR}[0]{\trg{\weak{T}}\xspace}

\newcommand{\strong}[1]{#1\ensuremath{^{\text{+}}}\xspace}
\newcommand{\weak}[1]{#1\ensuremath{^{\text{-}}}}%

\let\oldS\S
\renewcommand{\S}[0]{\src{{S}}\xspace}
\newcommand{\T}[0]{\trg{{T}}\xspace}

\newcommand{\contextletter}[0]{A}
\newcommand{\ctx}[1]{\ensuremath{\contextletter}} %
\newcommand{\ctxs}[1]{\src{\ctx{\contextletter}#1}\xspace}
\newcommand{\ctxt}[1]{\trg{\ctx{\contextletter}#1}\xspace}%
\newcommand{\ctxc}[1]{\com{\ctx{\contextletter}#1}\xspace}%
\newcommand{\hole}[1]{\ensuremath{\left[#1\right]}}

\newcommand{\trues}[0]{\src{{true}}\xspace}

\newcommand{\truet}[0]{\trg{{true}}\xspace}
\newcommand{\falset}[0]{\trg{false}\xspace}

\newcommand{\vdasht}[0]{\trgb{\vdash}}

\newcommand{\lskip}{\ensuremath{{skip}}}
\newcommand{\skips}{\ensuremath{\src{skip}}}
\newcommand{\skipc}{\ensuremath{\com{skip}}}
\newcommand{\skipt}{\ensuremath{\trg{skip}}}

\newcommand{\srce}[0]{\src{\emptyset}\xspace}
\newcommand{\trge}[0]{\trg{\emptyset}\xspace}

\newcommand{\come}[0]{\com{\emptyset}\xspace}

\newcommand{\SInit}[1]{\ensuremath{{\Omega_0}\left({#1}\right)}\xspace}
\newcommand{\SInits}[1]{\ensuremath{\src{\Omega_0}\left(\src{#1}\right)}\xspace}
\newcommand{\SInitt}[1]{\ensuremath{\trg{\Omega_0}\left(\trg{#1}\right)}\xspace}

\newcommand{\neutcol}[0]{black}
\newcommand{\stlccol}[0]{RoyalBlue}
\newcommand{\ulccol}[0]{RedOrange}

\newcommand{\commoncol}[0]{black}    %

\newcommand{\col}[2]{\ensuremath{{\color{#1}{#2}}}}

\newcommand{\src}[1]{\ms{\col{\stlccol}{#1}}}
\newcommand{\trg}[1]{{\mf{\col{\ulccol }{#1}}}}
\newcommand{\trgb}[1]{\ensuremath{\bm{\col{\ulccol }{#1}}}}

\newcommand{\bl}[1]{\col{\neutcol }{#1}}
\newcommand{\com}[1]{\mi{\col{\commoncol }{#1}}}

\newcounter{typerule}
\crefname{typerule}{rule}{rules}

\newcommand{\typeruleInt}[5]{%
	\def\thetyperule{#1}%
	\refstepcounter{typerule}%
	\label{tr:#4}%
  \ensuremath{\begin{array}{c}#5 \inference{#2}{#3}\end{array}} 
}
\newcommand{\typerule}[4]{%
  \typeruleInt{#1}{#2}{#3}{#4}{\textsf{\scriptsize ({#1})} \\      }
}

\pgfdeclarelayer{background}
\pgfdeclarelayer{veryback}
\pgfdeclarelayer{veryback2}
\pgfdeclarelayer{veryback3}
\pgfdeclarelayer{back2}
\pgfdeclarelayer{foreground}
\pgfsetlayers{veryback3,veryback2,veryback,background,back2,main,foreground}

\newcommand{\tikzpic}[1]{
\begin{tikzpicture}[shorten >=1pt,auto,node distance=6mm]
\tikzstyle{state} =[fill=white,minimum size=4pt]
\tikzstyle{field} =[fill=gray!5,draw=black!70, rectangle, minimum width={width("whiskersfieldww")+2pt}]]
#1
\end{tikzpicture}
}

\newcommand{\myfig}[3]{\begin{figure} [!ht]
#1
\caption{\label{fig:#2}#3}
\end{figure}}
\newcommand{\myfigonecol}[3]{\begin{figure*} [!hbt]
#1
\caption{\label{fig:#2}#3}
\end{figure*}}

\newcommand{\myfigp}[4]{\begin{figure} #4
#1
\caption{\label{fig:#2}#3}
\end{figure}}

\newcommand{\etal}[0]{\textit{et al.}\xspace} 

\newcommand{\BREAK}[0]{
\botrule
\begin{center}$\spadesuit$\end{center}
\botrule}

\newcommand{\mytoprule}[1]{\vspace{1mm}\noindent\hrulefill\ \raisebox{-0.5ex}{\fbox{\ensuremath{#1}}} \hrulefill\hrulefill\hrulefill\vspace{0.5mm}}

\def\botrule{\vspace{0mm}\hrule\vspace{2mm}}

\newcommand{\myparagraph}[1]{ \smallskip \noindent\noindent\textit{#1}~}

\newcommand{\hl}[1]{\colorbox{yellow}{#1}}
\newcounter{line}

\newcommand{\asm}[1]{\mtt{#1}}

\newcommand{\xto}[1]{\ensuremath{~\mathrel{\xrightarrow{~#1~}}~}}
\newcommand{\Xto}[1]{\ensuremath{~\mathrel{\xRightarrow{~#1~}}~}}

\newcommand{\xtos}[1]{\src{\xto{#1}}}
\newcommand{\Xtos}[1]{\src{\Xto{#1}}}

\newcommand{\xtot}[1]{\trg{\xto{#1}}}
\newcommand{\Xtot}[1]{\trg{\Xto{#1}}}

\definecolor{mygreen}{rgb}{0,0.6,0}
\definecolor{mygray}{rgb}{0.5,0.5,0.5}
\definecolor{mymauve}{rgb}{0.58,0,0.82}

\lstdefinelanguage{Java} %
{morekeywords={abstract, all, and, as, assert, but, disj, else, exactly, extends, fact, for, fun, iden, if, iff, implies, in, Int, void, int, let, lone, module, no, none, not, one, open, or, part, pred, run, seq, set, sig, some, sum, then, univ, package, class, public, private, null, return, new, interface, extern, object, implements, System, static, super, try , catch, throw, throws, Unit, var, val, of, principal, trust},
sensitive=true,
keywordstyle=\bfseries\color{\stlccol}, %
commentstyle=\itshape\color{purple!40!black},
morecomment=[l][\small\itshape\color{purple!40!black}]{//},
identifierstyle=\color{\stlccol},
stringstyle=\color{orange},
basicstyle=\small,
basicstyle={\small\ttfamily},
numbers=left,
numberstyle=\tiny\color{mygray},
tabsize=2,
numbersep=5pt,
breaklines=true,
lineskip=-2pt,
stepnumber=1,
captionpos=b,
breaklines=true,
breakatwhitespace=false,
showspaces=false,
showtabs=false,
float=!h,
columns=fullflexible,escapeinside={(*@}{@*)},
moredelim=**[is][\color{red!60}]{@}{@},
literate={->}{{$\to$}}1 {^}{{$\mspace{-3mu}\widehat{\quad}\mspace{-3mu}$}}1
{<}{$<$ }2 {>}{$>$ }2 {>=}{$\geq$ }2 {=<}{$\leq$ }2
{<:}{{$<\mspace{-3mu}:$}}2 {:>}{{$:\mspace{-3mu}>$}}2
{=>}{{$\Rightarrow$ }}2 {+}{$+$ }2 {++}{{$+\mspace{-8mu}+$ }}2
{<=>}{{$\Leftrightarrow$ }}2 {+}{$+$ }2 {++}{{$+\mspace{-8mu}+$ }}2
{\~}{{$\mspace{-3mu}\widetilde{\quad}\mspace{-3mu}$}}1
{!=}{$\neq$ }2 {*}{${}^{\ast}$}1 %
{\#}{$\#$}1
}

\lstdefinelanguage{General} %
{morekeywords={abstract, all, and, as, assert, but, disj, else, exactly, extends, fact, for, fun, iden, if, iff, implies, in, Int, void, int, let, lone, module, no, none, not, one, open, or, part, pred, run, seq, set, sig, some, sum, then, univ, package, class, public, private, null, return, new, interface, extern, object, implements, System, static, super, try , catch, throw, throws, Unit, var, val, of, principal, trust},
sensitive=true,
keywordstyle=\bfseries\color{\neutcol},
commentstyle=\itshape\color{purple!40!black},
morecomment=[l][\small\itshape\color{purple!40!black}]{//},
identifierstyle=\color{\neutcol},
stringstyle=\color{orange},
basicstyle=\small,
basicstyle={\small\ttfamily},
numbers=left,
numberstyle=\tiny\color{mygray},
tabsize=2,
numbersep=5pt,
breaklines=true,
lineskip=-2pt,
stepnumber=1,
captionpos=b,
breaklines=true,
breakatwhitespace=false,
showspaces=false,
showtabs=false,
float=!h,
columns=fullflexible,escapeinside={(*@}{@*)},
moredelim=**[is][\color{red!60}]{@}{@},
literate={->}{{$\to$}}1 {^}{{$\mspace{-3mu}\widehat{\quad}\mspace{-3mu}$}}1
{<}{$<$ }2 {>}{$>$ }2 {>=}{$\geq$ }2 {=<}{$\leq$ }2
{<:}{{$<\mspace{-3mu}:$}}2 {:>}{{$:\mspace{-3mu}>$}}2
{=>}{{$\Rightarrow$ }}2 {+}{$+$ }2 {++}{{$+\mspace{-8mu}+$ }}2
{<=>}{{$\Leftrightarrow$ }}2 {+}{$+$ }2 {++}{{$+\mspace{-8mu}+$ }}2
{\~}{{$\mspace{-3mu}\widetilde{\quad}\mspace{-3mu}$}}1
{!=}{$\neq$ }2 {*}{${}^{\ast}$}1 %
{\#}{$\#$}1
}

\lstdefinelanguage{Asm}
{morekeywords={abstract, all, and, as, assert, but, check, disj, else, exactly, extends, fact, for, fun, iden, if, iff, implies, in, Int, void, int, let, lone, module, no, none, not, one, open, part, pred, run, seq, set, sig, some, sum, then, univ, package, class, public, private, null, return, new, interface, extern, object, implements, System, static, super, try , catch, throw, throws, Unit, var, val, principal, trust, label, load, add, addi, into, test, mov, cmov, cmova, movzx, cmp, jbe, sar, cmovbe, or, jmp, shl, ret, jae, lea, lfence, jne},
sensitive=true,
identifierstyle=\color{\ulccol},
keywordstyle=\bfseries\color{\ulccol},
commentstyle=\itshape\color{purple!40!black},
morecomment=[l][\small\itshape\color{purple!40!black}]{//},
stringstyle=\color{orange},
basicstyle=\small,
basicstyle={\small},
numbers=left,
numberstyle=\tiny\color{mygray},
tabsize=2,
numbersep=5pt,
breaklines=true,
lineskip=-2pt,
stepnumber=1,
captionpos=b,
breaklines=true,
breakatwhitespace=false,
showspaces=false,
showtabs=false,
float=!h,
columns=fullflexible,escapeinside={(*@}{@*)},
moredelim=**[is][\color{red!60}]{@}{@},
literate={->}{{$\to$}}1 {^}{{$\mspace{-3mu}\widehat{\quad}\mspace{-3mu}$}}1
{<}{$<$ }2 {>}{$>$ }2 {>=}{$\geq$ }2 {=<}{$\leq$ }2
{<:}{{$<\mspace{-3mu}:$}}2 {:>}{{$:\mspace{-3mu}>$}}2
{=>}{{$\Rightarrow$ }}2 {+}{$+$ }2 {++}{{$+\mspace{-8mu}+$ }}2
{<=>}{{$\Leftrightarrow$ }}2 {+}{$+$ }2 {++}{{$+\mspace{-8mu}+$ }}2
{\~}{{$\mspace{-3mu}\widetilde{\quad}\mspace{-3mu}$}}1
{!=}{$\neq$ }2 {*}{${}^{\ast}$}1 %
{\#}{$\#$}1
}
\lstset{language=General,numbersep=5pt,frame=single}%

\DeclareMathOperator\ceq{\ensuremath{\mathrel{\simeq_{\mi{ctx}}}}}

\DeclareMathOperator\loweq{\ensuremath{\mathrel{=_{\text{L}}}}}

\def\teqaux#1{\vcenter{\hbox{\ooalign{\hfil
       \raise6pt \hbox{\scriptsize{T}}\hfil\cr\hfil
       $=$}}}}

\def\relssa{\approx} %
\def\nrelssa{\not\approx} %

\DeclareMathOperator\rels{\ensuremath{\com{\relssa}}}
\DeclareMathOperator\nrels{\ensuremath{\com{\nrelssa}}}

\newcommand{\reltext}[0]{\relssa}
\newcommand{\relref}[0]{\ensuremath{\Cref{cr:tracerel}}\xspace}
\newcommand{\tracerel}[0]{\relref}
\newcommand{\reldef}[0]{\criteria{\reltext}{tracerel}}

\def\relssaux#1{\vcenter{\hbox{\ooalign{\hfil
       \raise6pt \hbox{\tiny{$\boldsymbol{+}$}}\hfil\cr\hfil
       $\approx$}}}}

\def\hrelssaux#1{\vcenter{\hbox{\ooalign{\hfil
       \raise6pt \hbox{\tiny{\com{H}}}\hfil\cr\hfil
       $\approx$}}}}
\def\hrelssb{\mathrel{\mathpalette\hrelssaux{}}}

\DeclareMathOperator\hrel{\ensuremath{\com{\hrelssb}}}

\newcommand{\hreltext}[0]{\hrel}
\newcommand{\hrelref}[0]{\ensuremath{\Cref{cr:htracerel}}\xspace}

\newcommand{\hreldef}[0]{\criteria{\hreltext}{htracerel}}

\def\vrelssaux#1{\vcenter{\hbox{\ooalign{\hfil
       \raise6pt \hbox{\tiny{\com{V}}}\hfil\cr\hfil
       $\approx$}}}}
\def\vrelssb{\mathrel{\mathpalette\vrelssaux{}}}

\DeclareMathOperator\vrel{\ensuremath{\com{\vrelssb}}}

\newcommand{\vreltext}[0]{\vrel}
\newcommand{\vrelref}[0]{\ensuremath{\Cref{cr:vtracerel}}\xspace}

\newcommand{\vreldef}[0]{\criteria{\vreltext}{vtracerel}}

\def\srelssaux#1{\vcenter{\hbox{\ooalign{\hfil
       \raise6pt \hbox{\tiny{\com{S}}}\hfil\cr\hfil
       $\approx$}}}}
\def\srelssb{\mathrel{\mathpalette\srelssaux{}}}

\DeclareMathOperator\srel{\ensuremath{\com{\srelssb}}}

\newcommand{\sreltext}[0]{\srel}
\newcommand{\srelref}[0]{\ensuremath{\Cref{cr:stracerel}}\xspace}

\newcommand{\sreldef}[0]{\criteria{\sreltext}{stracerel}}

\def\brelssaux#1{\vcenter{\hbox{\ooalign{\hfil
       \raise6pt \hbox{\tiny{\com{B}}}\hfil\cr\hfil
       $\approx$}}}}
\def\brelssb{\mathrel{\mathpalette\brelssaux{}}}

\DeclareMathOperator\brel{\ensuremath{\com{\brelssb}}}

\newcommand{\breltext}[0]{\brel}
\newcommand{\brelref}[0]{\ensuremath{\Cref{cr:btracerel}}\xspace}

\newcommand{\breldef}[0]{\criteria{\breltext}{btracerel}}

\def\crelssaux#1{\vcenter{\hbox{\ooalign{\hfil
       \raise6pt \hbox{\tiny{\com{C}}}\hfil\cr\hfil
       $\approx$}}}}
\def\crelssb{\mathrel{\mathpalette\crelssaux{}}}

\DeclareMathOperator\crel{\ensuremath{\com{\crelssb}}}

\newcommand{\creltext}[0]{\crel}
\newcommand{\crelref}[0]{\ensuremath{\Cref{cr:ctracerel}}\xspace}

\newcommand{\creldef}[0]{\criteria{\creltext}{ctracerel}}

\def\arelssaux#1{\vcenter{\hbox{\ooalign{\hfil
       \raise6pt \hbox{\tiny{\com{A}}}\hfil\cr\hfil
       $\approx$}}}}
\def\arelssb{\mathrel{\mathpalette\arelssaux{}}}

\DeclareMathOperator\arel{\ensuremath{\com{\arelssb}}}

\newcommand{\areltext}[0]{\arel}
\newcommand{\arelref}[0]{\ensuremath{\Cref{cr:atracerel}}\xspace}

\newcommand{\areldef}[0]{\criteria{\areltext}{atracerel}}

\newcommand{\relslh}[0]{\Bumpeq} %

\def\shrelssaux#1{\vcenter{\hbox{\ooalign{\hfil
       \raise6pt \hbox{\tiny{\com{H}}}\hfil\cr\hfil
       $\relslh$}}}}
\def\shrelssb{\mathrel{\mathpalette\shrelssaux{}}}

\DeclareMathOperator\shrel{\ensuremath{\com{\shrelssb}}}

\newcommand{\shreltext}[0]{\shrel}

\newcommand{\shreldef}[0]{\criteria{\shreltext}{shtracerel}}

\def\svrelssaux#1{\vcenter{\hbox{\ooalign{\hfil
       \raise6pt \hbox{\tiny{\com{V}}}\hfil\cr\hfil
       $\relslh$}}}}

\def\ssrelssaux#1{\vcenter{\hbox{\ooalign{\hfil
       \raise6pt \hbox{\tiny{\com{S}}}\hfil\cr\hfil
       $\relslh$}}}}
\def\ssrelssb{\mathrel{\mathpalette\ssrelssaux{}}}

\DeclareMathOperator\ssrel{\ensuremath{\com{\ssrelssb}}}

\newcommand{\ssreltext}[0]{\ssrel}
\newcommand{\ssrelref}[0]{\ensuremath{\Cref{cr:sstracerel}}\xspace}

\newcommand{\ssreldef}[0]{\criteria{\ssreltext}{sstracerel}}

\def\sbrelssaux#1{\vcenter{\hbox{\ooalign{\hfil
       \raise6pt \hbox{\tiny{\com{B}}}\hfil\cr\hfil
       $\relslh$}}}}
\def\sbrelssb{\mathrel{\mathpalette\sbrelssaux{}}}

\DeclareMathOperator\sbrel{\ensuremath{\com{\sbrelssb}}}

\newcommand{\sbreltext}[0]{\sbrel}

\newcommand{\sbreldef}[0]{\criteria{\sbreltext}{sbtracerel}}

\def\screlssaux#1{\vcenter{\hbox{\ooalign{\hfil
       \raise6pt \hbox{\tiny{\com{C}}}\hfil\cr\hfil
       $\relslh$}}}}

\def\sarelssaux#1{\vcenter{\hbox{\ooalign{\hfil
       \raise6pt \hbox{\tiny{\com{A}}}\hfil\cr\hfil
       $\relslh$}}}}

\newcommand{\relslhp}[0]{\propto} %

\def\cssrelssaux#1{\vcenter{\hbox{\ooalign{\hfil
       \raise6pt \hbox{\tiny{\com{S}}}\hfil\cr\hfil
       $\relslhp$}}}}
\def\cssrelssb{\mathrel{\mathpalette\cssrelssaux{}}}

\DeclareMathOperator\cssrel{\ensuremath{\com{\cssrelssb}}}

\newcommand{\cssreltext}[0]{\cssrel}
\newcommand{\cssrelref}[0]{\ensuremath{\Cref{cr:csstracerel}}\xspace}

\newcommand{\cssreldef}[0]{\criteria{\cssreltext}{csstracerel}}

\def\csbrelssaux#1{\vcenter{\hbox{\ooalign{\hfil
       \raise6pt \hbox{\tiny{\com{B}}}\hfil\cr\hfil
       $\relslhp$}}}}
\def\csbrelssb{\mathrel{\mathpalette\csbrelssaux{}}}

\DeclareMathOperator\csbrel{\ensuremath{\com{\csbrelssb}}}

\newcommand{\csbreltext}[0]{\csbrel}

\newcommand{\csbreldef}[0]{\criteria{\csbreltext}{csbtracerel}}

\def\cshrelssaux#1{\vcenter{\hbox{\ooalign{\hfil
       \raise6pt \hbox{\tiny{\com{H}}}\hfil\cr\hfil
       $\relslhp$}}}}
\def\cshrelssb{\mathrel{\mathpalette\cshrelssaux{}}}

\DeclareMathOperator\cshrel{\ensuremath{\com{\cshrelssb}}}

\def\ceqwaux#1{\vcenter{\hbox{\ooalign{\hfil
       \raise6pt \hbox{\scriptsize{w-b}}\hfil\cr\hfil
       $\ceq$}}}}

\def\praux#1{\vcenter{\hbox{\ooalign{\hfil
       \raise4pt \hbox{$\subset$}\hfil\cr\hfil
       $\sim$}}}}

\newcommand{\labelfont}[1]{\ensuremath{\asm{#1}}}

\newcommand{\clh}[3]{\ensuremath{\labelfont{call}~ #1~ #2{?}}} 	%
\newcommand{\cbh}[3]{\ensuremath{\labelfont{call}~ #1~ #2{!}}} 	%
\newcommand{\rth}[2]{\ensuremath{\labelfont{ret}{!}}}	 		%
\newcommand{\rbh}[2]{\ensuremath{\labelfont{ret}{?}}}			%

\newcommand{\rdl}[1]{\ensuremath{\labelfont{read}(#1)}}
\newcommand{\wrl}[1]{\ensuremath{\labelfont{write}(#1)}}

\newcommand{\ifl}[1]{\ensuremath{\labelfont{if}(#1)}}

\newcommand{\rollbl}[0]{\ensuremath{\labelfont{rlb}}}

\newcommand{\clgen}[2]{\ensuremath{\labelfont{call}~#1~#2}}
\newcommand{\rtgen}[1]{\ensuremath{\labelfont{ret}~#1}}

\newcommand{\behav}[1]{{Beh}{\left(#1\right)}}
\newcommand{\behavs}[1]{\src{\behav{#1}}}
\newcommand{\behavt}[1]{\trg{\behav{#1}}}
\newcommand{\behavc}[1]{\com{\behav{#1}}}

\newcommand{\length}[1]{\ensuremath{|#1|}}
\newcommand{\abs}[1]{\length{#1}}

\newcommand{\proc}[2]{\ensuremath{(#1)_{#2} }}

\newcommand{\op}[0]{\ensuremath{\oplus}}
\newcommand{\bop}[0]{\ensuremath{\otimes}}

\newcommand{\lfence}[0]{{lfence}}
\newcommand{\ret}[0]{{return;}}%
\newcommand{\letin}[3]{{let}~#1=#2~{in}~#3}

\newcommand{\letread}[3]{{let}~#1=rd~{#2}~{in}~#3}
\newcommand{\letreads}[3]{\src{let}~#1=\src{rd}~{#2}~\src{in}~#3}
\newcommand{\letreadt}[3]{\trg{let}~#1=\trg{rd}~{#2}~\trg{in}~#3}
\newcommand{\letreadp}[3]{{let}~#1=rd_{\prv}~{#2}~{in}~#3}
\newcommand{\letreadps}[3]{\src{let}~#1=\src{rd_{\prv}}~#2~\src{in}~#3}
\newcommand{\letreadpt}[3]{\trg{let}~#1=\trg{rd_{\prv}}~{#2}~\trg{in}~#3}

\newcommand{\prv}{pr}

\newcommand{\asgn}[2]{#1 := #2}

\newcommand{\asgnp}[2]{#1 :=_{\prv} #2}

\newcommand{\asgnpt}[2]{#1 :=_\trg{\prv} #2}

\newcommand{\letins}[3]{\src{let}~#1\src{=}#2~\src{in}~#3}
\newcommand{\letint}[3]{\trg{let}~#1\trg{=}#2~\trg{in}~#3}

\newcommand{\cmove}[4]{{let}~#1=#2~({if}~#3)~{in}~#4}
\newcommand{\cmovet}[4]{\trg{let}~#1=#2~(\trg{if}~#3)~\trg{in}~#4}
\newcommand{\call}[1]{{call}~#1}

\newcommand{\ifte}[3]{{if}~#1~{then}~#2~{else}~#3}

\newcommand{\ifzte}[3]{{ifz}~#1~{then}~#2~{else}~#3}
\newcommand{\ifztes}[3]{\src{ifz}~#1~\src{then}~#2~\src{else}~#3}
\newcommand{\ifztet}[3]{\trg{ifz}~#1~\trg{then}~#2~\trg{else}~#3}

\newcommand{\formatCompilers}[1]{\mf{\mi{#1}}\xspace}

\newcommand{\bigred}[0]{\ensuremath{\ \downarrow\ }}

\DeclareMathOperator\bigreds{\src{\bigred}}
\DeclareMathOperator\bigredt{\trg{\bigred}}

\newcommand{\xltot}[1]{\,\trg{\xleadsto{#1}}\,}

\AtEndEnvironment{example}{\null\hfill$\boxdot$}

\makeatletter
\xdef\@thefnmark{\@empty}

\newcommand{\Thmref}[1]{\Cref{#1}~(\nameref{#1})}
\makeatother

\renewcommand{\emptyset}[0]{\varnothing}

\newcounter{hps}
\crefname{hps}{}{}

\newcommand{\proven}[1]{\ensuremath{\checkmark}}

\Crefname{mydefinition}{Definition}{Definitions}
\crefname{mydefinition}{Definition}{Definitions}
\theoremstyle{definition}

\newcommand{\nspecProject}[1]{#1\!\!\upharpoonright_{\mi{nse}}}

\newcommand{\nspecproj}[1]{\nspecProject{#1}}

\newcommand{\sni}[0]{\Cref{cr:sni}\xspace} %
\newcommand{\snitext}[0]{\text{SNI}\xspace}
\newcommand{\snidef}[0]{\criteria{\snitext}{sni}}

\newcommand{\rsni}[0]{\Cref{cr:rsni}\xspace}
\newcommand{\rsnitext}[0]{\text{RSNI}\xspace}
\newcommand{\rsnidef}[0]{\criteria{\rsnitext}{rsni}}

\newcommand{\xleadsto}[1]{%
   \if\relax\detokenize{#1}\relax
   \rightsquigarrow
   \else
   \mathrel{%
     \begin{tikzpicture}[%
       baseline={(current bounding box.south)}
       ]
       \node[%
       ,inner sep=.44ex
       ,align=center
       ] (tmp) {$\scriptstyle #1$};
       \path[%
       ,draw,<-
       ,decorate,decoration={%
         ,zigzag
         ,amplitude=0.7pt
         ,segment length=1.2mm,pre length=3.5pt
       }
       ] 
       (tmp.south east) -- (tmp.south west);
     \end{tikzpicture}
   }
   \fi
 }

\newcommand{\tracesymbol}[0]{\lambda}
\newcommand{\actionsymbol}[0]{\alpha}

\newcommand{\terminationsymbol}[0]{\lightning}
\newcommand{\ts}[0]{\terminationsymbol}

\newcommand{\taintt}{\trgb{\sigma}}

\newcommand{\tra}[1]{\OB{\tracesymbol{#1}}}

\newcommand{\tras}[1]{\src{\tra{#1}}}

\newcommand{\trat}[1]{\trg{\OB{\trgb{\tracesymbol}{#1}}}}

\newcommand{\trac}[1]{\com{\tra{#1}}}

\newcommand{\acas}[1]{\src{\actionsymbol{#1}}}

\newcommand{\acat}[1]{\trg{\trgb{\actionsymbol}{#1}}}

\newcommand{\acac}[1]{\com{\actionsymbol{#1}}}

\newcommand{\safeta}[0]{\ensuremath{S}}
\newcommand{\unta}[0]{\ensuremath{U}}

\DeclareMathOperator\glb{\ensuremath{\sqcap}}

\DeclareMathOperator\sem{\ensuremath{\rightsquigarrow}}
\DeclareMathOperator\sems{\src{\rightsquigarrow}}
\DeclareMathOperator\semt{\trg{\boldsymbol{\rightsquigarrow}}}

\newcounter{criteria}
\crefname{criteria}{}{}
\newcommand{\criteria}[2]{%
	\def\thecriteria{\detokenize{#1}}%
  	\refstepcounter{criteria}%
  	\label{cr:#2}%
  	#1%
}

\newcommand{\snipcomp}[0]{\formatCompilers{RSNIP}}
\newcommand{\snip}[0]{\Cref{cr:snip}\xspace}
\newcommand{\rsnip}[0]{\snip}
\newcommand{\snipdef}[0]{\criteria{\snipcomp}{snip}}

\newcommand{\rdsscomp}[0]{\formatCompilers{RSSC}}
\newcommand{\rdss}[0]{\Cref{cr:rdss}\xspace}
\newcommand{\rssc}[0]{\rdss}
\newcommand{\rdssdef}[0]{\criteria{\rdsscomp}{rdss}}
\newcommand{\rsscdef}[0]{\rdssdef}

\newcommand{\rdsspcomp}[0]{\formatCompilers{RSSP}}
\newcommand{\rdssp}[0]{\Cref{cr:rdssp}\xspace}
\newcommand{\rssp}[0]{\rdssp}
\newcommand{\rdsspdef}[0]{\criteria{\rdsspcomp}{rdssp}}

\newcommand{\facomp}[0]{\formatCompilers{FAC}}
\newcommand{\facref}[0]{\Cref{cr:fac}\xspace}
\newcommand{\fac}[0]{\facref}
\newcommand{\facdef}[0]{\criteria{\facomp}{fac}}

\newcommand{\ctpcomp}[0]{\formatCompilers{CTPC}}
\newcommand{\ctpref}[0]{\Cref{cr:ctp}\xspace}
\newcommand{\ctp}[0]{\ctpref}
\newcommand{\ctpdef}[0]{\criteria{\ctpcomp}{ctp}}

\newcommand{\predState}[0]{{pr}}

\newcommand{\rsstext}[0]{\text{RSS}\xspace}
\newcommand{\rssref}[0]{\Cref{cr:rss}\xspace}
\newcommand{\rss}[0]{\rssref}
\newcommand{\rssdef}[0]{\criteria{\rsstext}{rss}}

\newcommand{\sstext}[0]{\text{SS}}
\newcommand{\ssref}[0]{\Cref{cr:ss}\xspace}
\renewcommand{\ss}[0]{\ssref}
\newcommand{\ssdef}[0]{\criteria{\sstext}{ss}}

\newcounter{proofr}
\crefname{proofr}{}{}
\newcommand{\proofref}[2]{%
	\def\theproofref{\detokenize{#1}}%
  	\refstepcounter{proofr}%
  	\label{prf:#2}%
}

\newcommand{\showproof}[1]{Proof \Cref{prf:#1}} %

\newcommand{\Bv}[0]{\ensuremath{B_v}\xspace}
\newcommand{\Bs}[0]{\ensuremath{B_{t}}\xspace}%
\newcommand{\Bt}[0]{\ensuremath{B_{t}}\xspace}%
\newcommand{\Hv}[0]{\ensuremath{H_v}\xspace}
\newcommand{\Hs}[0]{\ensuremath{H_{t}}\xspace}%
\newcommand{\Ht}[0]{\ensuremath{H_{t}}\xspace}%

\newcommand{\Ov}[0]{\ensuremath{\Omega_v}\xspace}
\newcommand{\Os}[0]{\ensuremath{\Omega_{t}}\xspace}%
\newcommand{\Ot}[0]{\ensuremath{\Omega_{t}}\xspace}%

\newcommand{\Pv}[0]{\ensuremath{\Phi_v}\xspace}
\newcommand{\Pt}[0]{\ensuremath{\Phi_{t}}\xspace}%

\newcommand{\Be}[0]{\ensuremath{B_e}\xspace}
\newcommand{\He}[0]{\ensuremath{H_e}\xspace}%
\newcommand{\Oe}[0]{\ensuremath{\Omega_e}\xspace}%
\newcommand{\Se}[0]{\ensuremath{\Sigma_e}\xspace}%
\newcommand{\Pe}[0]{\ensuremath{\Phi_e}\xspace}

\newcommand{\yes}{{\huge$\bullet$}}
\newcommand{\no}{{\huge$\circ$}}

\newcommand{\tainthl}[1]{\colorbox{gray!30}{\ensuremath{#1}}}

\DeclareMathOperator\lub{\bl{\sqcup}}

\colorlet{REDORANGE}{RedOrange}

\setcopyright{rightsretained}

\copyrightyear{2021}
\acmYear{2021}
\setcopyright{acmcopyright}\acmConference[CCS '21]{Proceedings of the 2021 ACM SIGSAC Conference on Computer and Communications Security}{November 15--19, 2021}{Virtual Event, Republic of Korea}
\acmBooktitle{Proceedings of the 2021 ACM SIGSAC Conference on Computer and Communications Security (CCS '21), November 15--19, 2021, Virtual Event, Republic of Korea}
\acmPrice{15.00}
\acmDOI{10.1145/3460120.3484534}
\acmISBN{978-1-4503-8454-4/21/11}

\settopmatter{printacmref=true, printccs=true, printfolios=true} %

\begin{document}
\title{
	Exorcising Spectres with Secure Compilers
}
\author{Marco Patrignani}		
\orcid{0000-0003-3411-9678}			%
\affiliation{
  \institution{CISPA Helmholtz Center for Information Security}		%
  \country{Germany}
}
\additionalaffiliation{
  \institution{Stanford University}		%
  \country{USA}
}
\email{mp @ cs.stanford.edu}		%

\author{Marco Guarnieri}
\orcid{0000-0001-5767-555X}			%
\affiliation{
  \department{}			%
  \institution{IMDEA Software Institute}			%
  \country{Spain}
}
\email{marco.guarnieri@imdea.org}

\begin{abstract}
Attackers can access sensitive information of programs by exploiting the side-effects of speculatively-executed instructions using Spectre attacks.
To mitigate these attacks, popular compilers deployed a wide range of countermeasures whose security, however, has not been ascertained: while some are \emph{believed} to be secure, others are \emph{known} to be insecure and result in vulnerable programs.

This paper develops formal foundations for reasoning about the security of these defenses. 
For this, it proposes a framework of secure compilation criteria that characterise when compilers produce code resistant against Spectre v1 attacks.
With this framework, this paper performs a comprehensive security analysis of countermeasures against Spectre v1 attacks implemented in major compilers, deriving the first security proofs  of said countermeasures.

\begin{center}\small\it
	This paper uses
	a \src{blue}, \src{sans\text{-}serif} font for elements of the \src{source} language and an \trg{orange}, \trg{bold} font for elements of the \trg{target} language.
	Elements common to all languages are typeset in a \com{\commoncol}, \com{italic} font (to avoid repetitions).
	For a better experience, please print or view this in colour~\cite{patrignani2020use}.
\end{center}
\end{abstract}

\begin{CCSXML}
<ccs2012>
   <concept>
       <concept_id>10002978.10002986.10002989</concept_id>
       <concept_desc>Security and privacy~Formal security models</concept_desc>
       <concept_significance>500</concept_significance>
       </concept>
   <concept>
       <concept_id>10002978.10003006</concept_id>
       <concept_desc>Security and privacy~Systems security</concept_desc>
       <concept_significance>300</concept_significance>
       </concept>
   <concept>
       <concept_id>10011007.10011006.10011041</concept_id>
       <concept_desc>Software and its engineering~Compilers</concept_desc>
       <concept_significance>500</concept_significance>
       </concept>
 </ccs2012>
\end{CCSXML}

\ccsdesc[500]{Security and privacy~Formal security models}
\ccsdesc[300]{Security and privacy~Systems security}
\ccsdesc[500]{Software and its engineering~Compilers}
\keywords{Spectre, Speculative Execution, Secure Compilation, Robust Safety}  %

\maketitle

\section{Introduction}\label{sec:intro}
By predicting the outcome of branching (and other) instructions, CPUs can trigger speculative execution and speed up computation by executing code based on such predictions.
When predictions are incorrect, CPUs roll back the effects of speculatively-executed instructions on the architectural state, i.e., memory, flags, and registers.
However, they do \emph{not} roll back effects on microarchitectural components like caches.

Exploiting microarchitectural leaks caused by speculative execution leads to Spectre attacks~\cite{maisuradze2018ret2spec, Kocher2018spectre,schwarz2019netspectre,220586,kiriansky2018speculative}.
Compilers support  a number of  countermeasures,
e.g.,
the insertion of \texttt{lfence} speculation barriers~\cite{Intel} 
and speculative load hardening~\cite{spec-hard}, that
\emph{can}  mitigate leaks introduced by speculation over branch instructions like those exploited in the Spectre v1 attack~\cite{Kocher2018spectre}.

Existing countermeasures, however, are often developed in an unprincipled way, %
that is, they are not \emph{proven} to be secure, and some of them fail in blocking \emph{speculative leaks}, i.e., leaks introduced by speculatively-executed instructions. %
For instance, the Microsoft Visual C++ compiler misplaces speculation barriers, thereby producing programs that are still vulnerable to Spectre attacks~\cite{spectector,kocher2018examples}.

In this paper,  
we propose a novel secure compilation framework for reasoning about speculative execution attacks and we use it to provide the first precise characterization of security for a comprehensive class of compiler countermeasures against Spectre v1 attacks. %
Let us now discuss our contributions more in detail:
\begin{asparaitem}[$\blacktriangleright$]

	\item We present a secure compilation framework tailored towards reasoning about speculative execution attacks (\Cref{sec:modeling}).
	The distinguishing feature of our framework is that compilers translate programs from a source language \SR, with a standard imperative semantics, into a target language \TR  equipped with a speculative semantics 
	capturing the effects of speculatively-executed instructions. %
	This matches a programmer's mental model: programmers do not think about speculative execution when writing source code (and they should not!) since speculation only exists in processors ( captured by \TR's  speculative semantics).
	It is the duty of a (secure) compiler to ensure the features of \TR cannot be exploited. %
	Our framework encompasses two different security models for speculative execution:
	\begin{inparaenum}[(1)]
	\item 
	\emph{(Strong) speculative non-interference}~\cite{spectector} (SNI), which considers \emph{all} leaks derived from speculatively-executed instructions as harmful, and
	\item  
	\emph{Weak speculative non-interference}~\cite{contracts}, which  considers harmful \emph{only} leaks of speculatively-accessed data. %
	\end{inparaenum}

	\item We introduce \emph{speculative safety} (\ss, \Cref{sec:robust-speculative-safety}), a novel safety property that implies the absence of classes of speculative leaks.
	The key features of \ss are that
	\begin{inparaenum}[(1)]
	\item it is parametric in a taint-tracking mechanism, which we leverage to reason about security by focusing on single traces, and 
	\item it is formulated to simplify proving that a compiler preserves it.
	\end{inparaenum}
	We instantiate \ss using two different taint-tracking mechanisms obtaining \textit{strong \ss} and \textit{weak \ss}.
	We precisely characterize the security guarantees of \ss by showing that strong (resp. weak) \ss over-approximates  strong (resp. weak) \sni. %

	\item We define two novel secure compilation criteria:  \emph{Robust Speculative Safety Preservation} (\rssp) and  \emph{Robust Speculative Non-In\-ter\-fer\-ence Preservation} (\rsnip, \Cref{sec:rob-saf-comp}).
	These criteria respectively ensure that compilers preserve (strong or weak) \ss and \sni \emph{robustly}, i.e, even when linked against arbitrary (potentially malicious) code.
	Satisfying these criteria implies that compilers correctly place countermeasures to prevent speculative leaks.
	However, \rssp requires preserving a safety property (\ss) and it is  simpler to prove than \rsnip, which requires preserving a hyperproperty~\cite{ClarksonS10}.
	To the best of our knowledge, these are the first criteria that concretely instantiate a recent theory that phrases security of compilers as the preservation of (hyper)properties~\cite{rhc,rhc-rel,rsc-j} to reason about a concrete security property, that is, the absence of speculative leaks.

\item Using our framework, we perform a comprehensive security analysis of  countermeasures against Spectre v1 attacks implemented in major C compilers  (\Cref{sec:frame-instances-insec}). 
	Specifically, we focus on \begin{inparaenum}[(1)]
	\item automated insertion of \lstinline{lfence}s (implemented in the Microsoft Visual C++ and the Intel ICC compilers~\cite{Intel-compiler,microsoft}), and
	\item speculative load hardening (SLH, implemented in Clang~\cite{spec-hard}).
	\end{inparaenum}
	We prove that:
	\begin{compactitem}
		\item The Microsoft Visual C++ implementation of (1) violates weak \rsnip and is thus insecure. %
		\item The Intel ICC implementation of (1) provides strong \rsnip, so compiled programs have \textit{no} speculative leaks.
		\item SLH provides weak \rsnip, so compiled programs do not leak speculatively-accessed data.
			This prevents Spectre-style attacks, but compiled programs might still speculatively leak data accessed non-speculatively. 
		\item The non-interprocedural variant of SLH violates weak \rsnip and is thus insecure.
		\item Our novel variant of SLH, called strong SLH, provides strong \rsnip and blocks all speculative leaks.
	\end{compactitem}
	All our security proofs follow a common methodology (see \Cref{sec:methodology}) whose  key insight is that proving a countermeasure to be \rssp is sufficient to ensure its security since \ss over-approximates \sni.
	This allows us to  leverage \ss to simplify our  proofs.

\end{asparaitem}
We conclude by discussing limitations and extensions of our approach  (\Cref{sec:disc-new}) and related work (\Cref{sec:rw}).
For simplicity, we only discuss  key aspects of our formal models. Full details and proofs  are in the companion report~\cite{guarnieri2019exorcising}.

\section{Modelling Speculative Execution} %
\label{sec:modelling_speculative_execution}\label{sec:modeling}
To illustrate our speculative execution model, we first introduce  Spectre v1 (\Cref{lis:spectrev1}). %
Using that, we define the threat model that we consider (\Cref{sec:threat}).
Then, we present the syntax of our languages (\Cref{sec:syn}) and their trace model (\Cref{sec:trace-model}).
This is followed by the operational semantics of our languages (\Cref{sec:non-spec}).
Next, we present the source (non-speculative) trace semantics (\Cref{sec:non-spec-tr-source}) and the target (speculative) trace semantics (\Cref{sec:trg}).
This formalisation focuses on the strong \sni model, so we conclude by defining the changes necessary for weak \sni  (\Cref{sec:weak-additions}).

\begin{lstlisting}[mathescape,label=lis:spectrev1,caption=The classic Spectre v1 snippet.] 
void get (int y) 
	if (y < size) then 
		temp = B[A[y]*512] 
\end{lstlisting}\label{sec:spectre-listing}
Consider the standard Spectre v1 example~\cite{Kocher2018spectre} in \Cref{lis:spectrev1}.
Function \lstinline{get} checks whether the index stored in variable \lstinline{y} is less than the size of array \lstinline{A}, stored in the global variable \lstinline{size}.
If so, the program retrieves \lstinline{A[y]}, multiplies it by the cache line size (here: 512), and uses the result to access array \lstinline{B}.
If \lstinline{size} is not cached, modern processors predict the guard's outcome  and speculatively continue the execution. %
Thus, line 3 might be executed even if \lstinline{y} $\geq$ \lstinline{size}.
When \lstinline{size} becomes available, the processor checks whether the prediction was correct.
If not, it rolls back all changes to the architectural state and executes the correct branch.
However, the speculatively-executed memory accesses leave a footprint in the cache, which enables an attacker to retrieve \lstinline{A[y]} even for \lstinline{y} $\geq$ \lstinline{size}.

\subsection{Threat Model}\label{sec:threat}

We study compiler countermeasures that translate source programs into (hardened) target programs.
In our setting, an attacker is an arbitrary program at target level that is linked against a (compiled) partial program of interest. %
The partial program (or, \emph{component}) stores sensitive information in a private heap that the attacker cannot access.
For this, we assume that attacker and component run on separate processes and OS-level memory protection restricts access to the private heap. %
For example, in \Cref{lis:spectrev1}, the array \lstinline{A} would be stored in the private heap and the attacker is code that runs before and after function \lstinline{get}.

While attackers cannot directly access the private heap, they can mount confused deputy attacks~\cite{confused,confused-dg} to trick components into leaking sensitive information despite the memory protection.%
\footnote{
    Adopting such an active attacker model turns our security definitions into `robust' ones, as we discuss in \Cref{sec:robustness-attackers}.
}
We focus on preventing \textit{only} speculative leaks, i.e., those caused by speculatively-executed instructions. %
For this, our attacker can observe the program counter and the locations of memory accesses during program execution.
This attacker model is commonly used to formalise code that has no timing side-channels~\cite{molnar2005program,almeida2016verifying} without requiring microarchitectural models.
Following~\citet{spectector}, we capture this model in our semantics through traces that record the address of all memory accesses (e.g., the address of  \lstinline{B[A[y]*512]} in  \Cref{lis:spectrev1}) and  the outcome of all control-flow instructions.

To model the  effects of speculative execution, our target language mispredicts the outcome of all branch instructions in the component.
This is the worst-case scenario in terms of leakage regardless of how  attackers poison the branch predictor~\cite{spectector}.

\subsection{Languages \SR and \TR}\label{sec:syn}

Technically, we have a pair of source and target languages (\SR and \TR) for studying security in the strong \sni model and a pair of source and target languages (\src{\wSR} and \trg{\wTR}) for studying weak \sni.
Strong (\SR-\TR) and weak (\wSR-\wTR) languages have the same syntax and a very similar semantics, which differ \textit{only} in the security-relevant observations produced during the computation.
We focus this section and the following ones on the strong languages \SR-\TR; we introduce the small changes for the weak languages \src{\wSR}-\trg{\wTR} in \Cref{sec:weak-additions}.

The source (\SR) and target (\TR) languages are  single-threaded While languages with a heap, a stack to lookup local variables, and a notion of components (our unit of compilation).
We focus on such a setting, instead of an assembly-style language like~\cite{spectector,cauligi2019towards}, to reason about speculative leaks without getting bogged down in complications like unstructured control flow.
This does not limit the power of attackers: since attackers reside in another process, they would not be able to exploit the additional features of assembly languages (e.g., unstructured control flow) to compromise  components.

The common syntax of \SR and \TR is presented below; we indicate sequences of elements $e_1,\cdots,e_n$ as \OB{e} and \com{\OB{e}\cdot e} denotes a stack with top element \com{e} and rest of the stack \com{\OB{e}}.
{
\begin{gather*}
\begin{aligned}
	\mi{Programs}\, \com{W}, \com{P} \bnfdef&\ \com{ H , \OB{F} , \OB{I}}
	&
	\mi{Codebase}~\com{C} \bnfdef&\ \com{\OB{F} , \OB{I}}
	&
	\mi{Imports}~\com{I} \bnfdef&\ \com{f}
	\end{aligned}
	\\
	\begin{aligned}
	\mi{Functions}~\com{F} \bnfdef&\ \com{f(x)\mapsto s;\ret}
	&
	\mi{Attackers}~\ctxc{} \bnfdef&\ \com{H , \OB{F}\hole{\cdot}}
	\\
	\mi{Heaps}~\com{H} \bnfdef&\ \come \mid \com{H ; n\mapsto v} 
	&\text{ where }& \com{n}\in\mb{Z}
	\\
	\mi{Expressions}~\com{e} \bnfdef&\ \com{x} \mid \com{v} \mid \com{e \op e} %
	&
		\mi{Values}~\com{v} \bnfdef&\ \com{n}\in\mb{N} 
	\end{aligned}
	\\
	\begin{aligned}
	\mi{Statements}~\com{s} \bnfdef&\ \lskip \mid \com{s;s} \mid \com{\letin{x}{e}{s}} \mid \com{\call{f}~e} 
			\mid \com{\asgn{e}{e}} 
		\\
		\mid&\
			\com{\asgnp{e}{e}} 
			\mid \com{\letread{x}{e}{s}} \mid \com{\letreadp{x}{e}{s}} 
		\\
			\mid&\ \com{\ifzte{e}{s}{s}} \mid \com{\cmove{x}{e}{e}{s}} \mid \com{\lfence}
\end{aligned}
\end{gather*}
}
We model \textit{components}, i.e., partial programs (\com{P}), and \textit{attackers} (\ctxc{}).
A (partial) program \com{P} defines its heap \com{H}, a list of functions \com{\OB{F}}, and a list of imports \com{\OB{I}}, which are all the functions an attacker can define.
An attacker \ctxc{} just defines its heap and its functions.
We indicate the code base of a program (its functions and imports) as \com{C}.

Functions are untyped, and their bodies are sequences of statements \com{s} that include standard instructions: skipping, sequencing, let-bindings, function calls, writing the public and the private heap, reading the public and private heap, conditional branching, conditional assignments and speculation barriers.
Statements can contain expressions \com{e}, which include program variables \com{x}, natural numbers \com{n}, arithmetic and comparison operators \com{\op}.
Heaps \com{H} map memory addresses $\com{n}\in\mb{Z}$ to values $\com{v}$.
Heaps are partitioned in a public part (when the domain $\com{n}\geq0$) and a private part (if $\com{n}<0$).
An attacker \com{A} can only define and access the public heap.
A program \com{P} defines a private heap and it can access both private and public heaps.

\subsection{Labels and Traces}\label{sec:trace-model}
Computation steps in \SR and \TR are \textit{labelled} with labels $\com{\lambda}$, which can be the \textit{empty label} $\com{\epsilon}$, 
an \textit{action} $\com{\alpha?}$ or \com{\alpha!} recording the control-flow between attacker and code (as required for secure compilation proofs~\cite{rsc-j,rhc,catalinRSC,PatrignaniASJCP15}), or a \textit{$\mu$arch$\ldotp$ action} $\com{\delta}$  capturing what a microarchitectural attacker can observe.
\begin{gather*}
\begin{aligned}
	\mi{\mu arch.\ Acts.}~\com{\delta} \bnfdef
	    &\ 
	    \com{\rdl{n}} \mid \com{\wrl{n}} \mid \com{\rdl{n\mapsto v}} 
	   \\
	   	\mid&\ \com{\wrl{n\mapsto v}} \mid \com{\ifl{v}} \mid \com{\rollbl}
	    \\
\end{aligned}
\\
\begin{aligned}
	\mi{Actions}~\acac{} \bnfdef&\ \com{\clgen{f}{v}{}}
		\mid \com{\rtgen{v}{}} 
	&
	\mi{Labels}~\com{\lambda} \bnfdef
		&\
		\com{\epsilon} \mid \com{\alpha?} \mid \com{\alpha!} \mid \com{\delta}
\end{aligned}
\end{gather*}

Action $\com{\clh{f}{v}{H}}$ represents a call to a function $f$ in the component with value $v$.
Dually, $\com{\cbh{f}{v}{H}}$ represents a call(back) to the attacker with value $v$.
Action $\com{\rth{v}{H}}$ represents a return to the attacker and $\com{\rbh{v}{H}}$ a return(back) to the component.

The $\com{\rdl{n}}$ and $\com{\wrl{n}}$ actions denote respectively read and write accesses to the private heap location \com{n}.
Dually, the $\com{\rdl{n\mapsto v}}$ and $\com{\wrl{n\mapsto v}}$ actions denote respectively read and write accesses to the public heap location $\com{n}$ where $\com{v}$ is the value read from/written to memory.
In these actions, locations $\com{n}$ model leaks through the data cache whereas values $\com{v}$, which only appear in operations on the public heap, model that attackers have access to the public heap.
In contrast, the $\com{\ifl{v}}$ action denotes the outcome of branch instructions and the $\com{\rollbl}$ action indicates the roll-back of speculatively-executed instructions.
These actions implicitly expose which instruction we are currently executing, and thus the instruction cache content.
Traces \com{\trac{}} are sequences of labels.
The semantics only track $\mu$arch$\ldotp$ actions executed inside the component $\com{P}$, whereas those executed in the attacker-controlled context $\ctxc{}$ are ignored (\Cref{tr:eus-tr-sin} later on).
The reason is that $\mu$arch$\ldotp$ actions produced by $\ctxc{}$ can be safely ignored since $\ctxc{}$ cannot access the private heap (this is analogous to other robust safety works~\citep{autysec,tydisa,cca,davidcaps}).

\subsection{Operational Semantics for \SR and \TR}\label{sec:non-spec}

Both languages are given a labelled operational semantics that describes how statements execute.
This semantics is defined in terms of program states $\com{C, H, \OB{B}\triangleright \proc{s}{\OB{f}} } $ that consist of a codebase \com{C}, a heap \com{H}, a stack of local variables \com{\OB{B}}, a statement \com{s}, and a stack of function names \com{\OB{f}}.
\com{C} is used to look up function bodies, %
whereas function names \com{\OB{f}}, which we often omit for simplicity, are used to infer if the code that is executing comes from the attacker or from the component, and this determines the produced labels. %
\begin{align*}
	\mi{Bindings}~\com{B} \bnfdef&\ \come \mid \com{B; x\mapsto v}
	&
	\mi{Prog.\ States}~\com{\Omega}\bnfdef&\ \com{C, H, \OB{B}\triangleright \proc{s}{\OB{f}} } 
\end{align*}
Both \SR and \TR have a big-step operational semantics for expressions and a small-step, structural operational semantics for statements that generates labels.
The former follows judgements $\com{B \triangleright e\bigred v}$ meaning: ``according to variables \com{B}, expression \com{e} reduces to value \com{v}.''
The latter follows judgements $\com{\Omega \xto{\lambda} \Omega'} $ meaning: ``state \com{\Omega} reduces in one step to \com{\Omega'} emitting label \com{\lambda}.''

We remark that values are computed as expected (though we use \com{0} for true in \com{ifz} statements; see \Cref{tr:eus-ift}) and expressions access only local variables in $\com{B}$ (reading from the heap is treated as a statement); therefore, we omit the expression semantics. 
Similarly, many of the rules for the statement semantics are standard and thus omitted; the most illustrative ones are given below.
We use \abs{n} for the absolute value of \com{n} and $\com{H}(n)$ to look up the binding for \com{n} in \com{H}.
\looseness=-1
\begin{center}
	\typerule{E-if-true}{
		\com{B \triangleright e\bigred 0}
	}{
		\begin{multlined}
			\com{C,H, \OB{B}\cdot B \triangleright \ifzte{e}{s}{s'}} \xto{(\ifl{0})} 
			\com{C, H, \OB{B}\cdot B \triangleright s}
		\end{multlined}
	}{eus-ift}
	\smallskip
	\typerule{E-read-prv}{
		\com{B \triangleright e\bigred n}
		&
		\com{H(-\abs{n})= v}
	}{
		\begin{multlined}
			\com{C, H, \OB{B}\cdot B \triangleright \letreadp{x}{e}{s}} \xto{\rdl{-\abs{n}}} 
			\\
			\com{C, H, \OB{B}\cdot B\cup x\mapsto v \triangleright s}
		\end{multlined}
	}{eus-rd-com-p}
	\smallskip	
	\typerule{E-write-prv}{
		\com{B \triangleright e\bigred n}
		&
		\com{H}=\com{{H}_1; -\abs{n}\mapsto v' ; H_2}
		\\
		\com{B \triangleright e'\bigred v}
		&
		\com{H'}=\com{{H}_1; -\abs{n}\mapsto v ; {H}_2}
	}{
		\com{C, H, \OB{B}\cdot B \triangleright \asgnp{e}{e'} \xto{\wrl{-\abs{n}}} C, H', \OB{B}\cdot B \triangleright \lskip} 
	}{eus-up-com-p}
\end{center}
The rules of conditionals, read, and write emit the related $\mu$arch$\ldotp$~actions (from \Cref{sec:trace-model}).
Specifically, conditionals produce observations recording the outcome of the condition (\Cref{tr:eus-ift}), whereas memory operations produce observations recording the accessed memory address (\Cref{tr:eus-rd-com-p} and \Cref{tr:eus-up-com-p}).

\subsection{Non-speculative Semantics for \SR}\label{sec:non-spec-tr-source}

We now define the non-speculative semantics of \SR, which describes how (whole) programs behave when executed on a processor without speculative execution.
A component \com{P} and an attacker \ctxc{} can be linked to obtain a whole program \com{W \equiv \ctxc{}\hole{P}} that contains the functions and heaps of \ctxc{} and \com{P}.
Only whole programs can run, and a program is whole only if it defines all functions that are called and if the attacker defines all the functions in the  interfaces of \com{P}. %

For this, we define the big-step semantics  $\Xtos{}$ of \SR, which concatenates single steps (defined by $\src{\to}$) into multiple ones and single labels into traces.  
The judgement $\src{\Omega} \Xtos{\tras{}} \src{\Omega'}$ is read: ``state \src{\Omega} emits trace \tras{} and becomes \src{\Omega'}''.
The most interesting rule is  below.
As mentioned in \Cref{sec:trace-model}, the trace does not contain $\mu$arch$\ldotp$ actions performed by the attacker (see the `then' branch, recall that functions in \src{\OB{I}} are defined by the attacker).
\begin{center}
	\typerule{E-\SR-single}{
		\src{\Omega} \equiv \src{\OB{F},\OB{I},H,B\triangleright \proc{s}{\OB{f}\cdot f}}
		&
		\src{\Omega'} \equiv \src{\OB{F},\OB{I},H',B'\triangleright \proc{s'}{\OB{f'}\cdot f'}}
		\\
		\src{\Omega}
		\xtos{\alpha}\src{\Omega'}
		&
		\text{ if } \src{f} == \src{f'} \text{ and } \src{f} \in \src{\OB{I}} \text{ then } \src{\tra{}} = \src{\epsilon} \text{ else } \src{\tra{}} = \src{\alpha}
	}{
		\src{\Omega}\Xtos{\tra{}}\src{\Omega'}
	}{eus-tr-sin}
\end{center}

Finally, the behaviour \src{\behavs{W}} of a whole program \src{W} is the trace \src{\tras{}} generated from the \Xtos{} semantics starting from the initial state of \src{W} (indicated as \SInits{W}) until termination. 
Intuitively, a program's initial state  is the \src{main} function, which is defined by the attacker.
\begin{example}[\,\SR  trace for \Cref{lis:spectrev1}]\label{ex:traces-src}
	Consider \lstinline{size}=\lstinline{4}.
	Trace \src{t_{ns}} indicates a valid execution of the code in \SR (without speculation).	
	\begin{align*}
	    \src{t_{ns}} =
	    	&\ 
	    	\src{ \clh{get}{0}{} \cdot \ifl{0} \cdot \rdl{n_A}{} \cdot \rdl{n_B+v_A^0} \cdot \rth{}{}}	
	\end{align*}
	We indicate the addresses of arrays \lstinline{A} and \lstinline{B} in the \SR heap with \src{n_A} and \src{n_B} respectively and the value stored at \lstinline{A[i]} with \src{v_A^i}.
	\lstset{language=Asm}
	\lstset{language=Java}
\end{example}

\subsection{Speculative Semantics for \TR}\label{sec:trg}

Our semantics for \TR models the effects of speculatively-executed instructions.
This semantics is inspired by the ``always mispredict'' semantics of \citet{spectector}, which captures the worst-case scenario (from an information theoretic perspective) independently of the branch prediction outcomes. %
Whenever the semantics executes a branch instruction, it first mis-speculates by executing the wrong branch for a fixed number \trg{w} of steps (called \textit{speculation window}). 
After speculating for \trg{w} steps, the speculative execution is terminated, the changes to the program state are rolled back, and the semantics restarts by executing the correct branch.
The $\mu$arch$\ldotp$~effects of speculatively-executed instructions are recorded on the trace as actions.
Speculative program states (\trg{\Sigma}) are defined as stacks of speculation instances (\trg{\Phi} \com{=} \trg{(\Omega,\trg{w})}), each one recording the program state \trg{\Omega} and the remaining speculation window \trg{w}.
The speculation window is a natural number \trg{n} or \trgb{\bot} when no speculation is happening; its maximum length is a global constant \trgb{\omega} that depends on physical characteristics of the CPU like the size of the reorder buffer.
\begin{align*}
	\mi{Speculative\ States}~\trg{\Sigma} \bnfdef&\ \trg{\OB{\Phi}}
	&
	\mi{Speculation\ Instance}~\trg{\Phi} \bnfdef&\ \trg{(\Omega,\trg{w})}
\end{align*}
The execution of program \trg{W} starts in state $ \trg{(\SInitt{W},\trg{\bot})}$, i.e., in the same initial state that \SR starts in. %
In the small-step operational semantics $\trg{\OB{\Phi} \xltot{\trgb{\lambda}} \OB{\Phi'}}$, reductions happen at the top of the stack: %
\begin{center}
	\typerule{E-\TR-speculate-if}{
		\trg{\Omega}\equiv\trg{C, H, \OB{B} \cdot B \triangleright \proc{s;s'}{\OB{f}\cdot f}}
		&
		\trg{s}\equiv\trg{\ifzte{e}{s''}{s'''}}
		\\
		\trg{\Omega \xtot{\acat} \Omega'}	
		&
		\trg{C}\equiv\trg{\OB{F};\OB{I}}
		&
		\trg{f}\notin\trg{\OB{I}}
		&
		\trg{j} = \fun{min}{\trgb{\omega},\trg{n}}
		\\
		\text{ if }
			\trg{B \triangleright e\bigred 0} 
		\text{ then }
			\trg{\Omega''}\equiv\trg{C, H, \OB{B} \cdot B \triangleright s''';s'}
		\\
		\text{ else }
			\trg{\Omega''}\equiv\trg{C, H, \OB{B} \cdot B \triangleright s'';s'}
	}{
		\trg{\OB{\Phi} \cdot (\Omega,n+1) \xltot{ \acat{} } \OB{\Phi} \cdot (\Omega',n)\cdot (\Omega'',j)}
	}{et-sp-if}
	\typerule{E-\TR-speculate-action}{
		\trg{\Omega \xtot{\trgb{\lambda}} \Omega'}	
		&
		\trg{\Omega}\equiv\trg{C, H, \OB{B} \triangleright \proc{s;s'}{\OB{f}\cdot f} }
		\\
		(\trg{s}\not\equiv\trg{ifz \cdots} \text{ and } \trg{s}\not\equiv\trg{\lfence})
        \text{ or }
        (\trg{C}\equiv\trg{\OB{F};\OB{I}} \text{ and } \trg{f}\in\trg{\OB{I}})
	}{
		\trg{\OB{\Phi} \cdot (\Omega,n+1) \xltot{ \trgb{\lambda}} \OB{\Phi} \cdot (\Omega',n)}
	}{et-sp-act}
	\typerule{E-\TR-speculate-lfence}{
		\trg{\Omega \xtot{\epsilon} \Omega'}	
		\\
		\trg{\Omega}\equiv\trg{C, H, \OB{B} \triangleright s;s'}
		\\
		\trg{s}\equiv\trg{\lfence}
	}{
		\trg{\OB{\Phi} \cdot (\Omega,n+1) \xltot{\epsilon} \OB{\Phi} \cdot (\Omega',0)}
	}{et-sp-lf}
	\typerule{E-\TR-speculate-rollback}{
		\trg{n}=\trg{0}\ \text{or}\ \trg{\Omega}\ \text{is stuck}
	}{
		\trg{\OB{\Phi} \cdot (\Omega,n) \xltot{ \rollbl } \OB{\Phi} }
	}{et-sp-rb}
\end{center}

Executing a statement updates the program state on top of the state and reduces the speculation window by $\trg{1}$ (\Cref{tr:et-sp-act}).
Mis-speculation pushes the mis-speculating state on top of the stack (\Cref{tr:et-sp-if}).
Note that speculation does not happen in attacker code (condition $\trg{f}\notin\trg{\OB{I}}$, recall that \trg{f} is the function executing now and \trg{\OB{I}} are all attacker-defined functions).
This is without loss of generality since (1) attackers cannot directly access  the private heap, and (2) our security definitions (\Cref{sec:robust-speculative-safety}) will consider any possible attacker, so the speculative behavior of an attacker (i.e., the speculative execution of the `wrong branch') will be captured by another one who has the same branches but inverted (e.g., the `then' code of one attacker is the `else' code of another).
When the speculation window is exhausted (or if the speculation reaches a stuck state), speculation ends and the top of the stack is popped (\Cref{tr:et-sp-rb}).
The role of the \trg{\lfence} instruction is setting to zero the speculation window, so that rollbacks are triggered (\Cref{tr:et-sp-lf}).

As before, the behaviour \trg{\behavt{W}} of a whole program \trg{W} is the trace \trg{\trat{}} generated, according to the \Xtot{} semantics, starting from the initial state of $\trg{W}$ until termination.

\begin{example}[\,\TR Trace for \Cref{lis:spectrev1}]\label{ex:traces-trg}
	Consider the same setting as \Cref{ex:traces-src}. 
	Trace \trg{t_{sp}} is a valid execution of the code in \TR, and therefore with speculation.
	As before, we indicate the addresses of arrays \lstinline{A} and \lstinline{B} in the source and target heaps with \com{n_A} and \com{n_B} respectively and the value stored at \lstinline{A[i]} with \com{v_A^i}.

	{
	\small
	\vspace{.2em}
	\hfill
	    \trg{t_{sp}} =
	    \trg{ \clh{get}{8}{} \cdot \ifl{1} \cdot \rdl{n_A+8}{} \cdot }\trg{  \rdl{n_B+v_A^8} \cdot \rollbl \cdot \rth{}{} }
	\hfill\hfill
	\vspace{.2em}
	}
	\lstset{language=Asm}

	Differently from \src{t_{ns}} in \Cref{ex:traces-src}, trace \trg{t_{sp}}  contains speculatively executed instructions whose side effects are represented by the actions \trg{\rdl{n_A+8}{}} and \trg{  \rdl{n_B+v_A^8}}.
	\lstset{language=Java}
\end{example}

\subsection{Weak Languages \src{\wSR} and \trg{\wTR}}\label{sec:weak-additions}

To conclude, we now introduce the weak languages \src{\wSR} and \trg{\wTR}, which we use to study security in the weak \sni model.
Following~\cite{contracts}, these languages differ from \src{L} and \trg{T} in a single aspect, that is, in the actions produced by memory reads.
Specifically, in  \src{\wSR} and \trg{\wTR}, non-speculatively reading from the private heap produces an action $\com{\rdl{n \mapsto v}}$ that contains the read value $\com{v}$ as well as the accessed memory address $\com{n}$.
As we show next, this difference allows us to precisely characterize \textit{only} the leaks of transiently loaded data, which are exactly those leaks exploited in speculative disclosure gadgets like \Cref{lis:spectrev1}, rather than all speculative leak.
\section{Security Definitions for Secure Speculation} %
\label{sec:robust-speculative-safety}

We now present  \emph{semantic} security definitions against speculative leaks.
We start by presenting (robust) speculative non-interference (\rsni, \Cref{sec:rsni}).
Next, we introduce (robust) speculative safety (\rss, \Cref{sec:rss}).
These definitions can be applied to programs in the four languages \SR, \TR, \src{\wSR}, and \trg{\wTR}.
Therefore,  we write \rsni{}\com{(L)} and \rss{}(\com{L}) to indicate which language \com{L} the definitions are referring to.
Since these languages have the same syntax but different semantics, we also study the relationships between \rsni and \rss for weak and strong languages.
We depict these results below (only for \trg{\TR} and \trg{\wTR} since all  security definitions trivially hold for the source non-speculative languages \src{\SR} and \src{\wSR}) and discuss them further down.
	\begin{center}
		
		\tikzpic{
				\node[](rss){ \rss{}(\TR)};
				\node[below = of rss, yshift = .5em](wrss){\rss{}(\trg{\wTR})};
				\node[right = of rss, xshift = 3 em](rsni){\rsni{}(\trg{\TR})};
				\node[]at (rsni |- wrss) (wrsni) {\rsni{}(\trg{\wTR})};
				
				\draw[=,double,-implies,double equal sign distance] (rss) to node[midway, above, font = \scriptsize](i1){\Cref{thm:rss-overapp-rsni}} (rsni);
				\draw[=,double,-implies,double equal sign distance] (wrss) to node[midway, above, font = \scriptsize](i2){\Cref{thm:rss-overapp-rsni-weak}} (wrsni);
				\draw[=,double,-implies,double equal sign distance] (rss) to node[midway, left, font = \scriptsize](i3){\Cref{thm:strong-impl-weak:ss}} (wrss);
				\draw[=,double,-implies,double equal sign distance] (rsni) to node[midway, right, font = \scriptsize](i4){\Cref{thm:strong-impl-weak:sni}} (wrsni);
	
				\node[left = 1.5 of rss](ms){\small most secure};
				\node[below = of ms](ls){\small least secure};
				\node[above = .1 of rss](mp){\small least precise};
				\node[above = .1 of rsni](lp){\small most precise};
			}
	\end{center}
\subsection{Robust Speculative Non-Interference}\label{sec:rsni}
Speculative non-interference (\sni) is a class of security properties~\cite{spectector,contracts} that is based on comparing the information leaked by instructions executed speculatively and non-speculatively.
\sni requires that speculatively-executed instructions do not leak more information than what is leaked by executing the program without speculative execution, which is obtained by ignoring observations produced speculatively.
Hence, \sni semantically characterize the information leaks that are introduced by speculative execution, that is, those leaks that are exploited in Spectre-style attacks. %

\subsubsection*{Property}
Here, we instantiate robust speculative non-interference in our framework by following \sni's trace-based characterization~\citep[Proposition 1]{spectector}.
Thus we need to introduce two concepts:
\begin{asparaitem}
\item \sni is parametric in a policy denoting sensitive information. 
As mentioned in \Cref{sec:threat}, we assume that only the private heap is sensitive. 
Hence, whole programs \com{W} and \com{W'} are \emph{low-equivalent}, written \com{W' \loweq W}, if they differ only in their private heaps.

\item \sni requires comparing the leakage resulting from non-spec\-ula\-ti\-ve and speculative instructions.
The \textit{non-speculative projection} $\nspecProject{\com{t}}$~\cite{spectector} of a trace \com{t} extracts the observations associated  with non-speculatively-executed instructions.
We obtain $\nspecProject{\com{t}}$ by removing from $\com{t}$ all sub-strings enclosed between $\com{\ifl{v}}$ and $\com{\rollbl}$ observations.
We illustrate this using an example: $\nspecProject{\cdot}$ applied to \trg{t_{sp}} from \Cref{ex:traces-trg} produces $\nspecProject{\trg{t_{sp}}} =\
\trg{ \clh{get}{8}{} \cdot \ifl{1} \cdot \rth{}{} }$.
\end{asparaitem}

We now formalise \sni{}.
A whole program \com{W} is \sni{} if its  traces do not leak more than their non-speculative projections.
 That is, if an attacker can distinguish the traces produced by \com{W} and a low-equivalent program \com{W'}, the distinguishing observation must be made by an instruction that does not result from mis-speculation.
\begin{definition}[Speculative Non-Interference (\snidef)]\label{def:sni}
	\begin{align*}
		\vdash \com{W} : \sni \isdef&\ 
			\forall \com{W'}.\ 	
			\text{ if } 
			\com{W' \loweq W}
			\\
			\text{ and } 
			&\
			\nspecproj{\behavc{\SInit{W}}} = \nspecproj{\behavc{\SInit{W'}}}
			\\
			\text{ then }
			&\ \behavc{\SInit{W}} = \behavc{\SInit{W'}}
	\end{align*}
\end{definition}

A component \com{P} is robustly speculatively non-interferent if it is \sni no matter what valid attacker it is linked to (\Cref{def:rsni}), where an attacker \com{A} is valid ($\vdash \ctxc{} : \com{atk}$) if it does not define a private heap and does not contain instructions to read and write it. %
\begin{definition}[Robust Speculative Non-Interference (\rsnidef)]\label{def:rsni}
	\begin{align*}
		\vdash \com{P} : \rsni \isdef&\ \forall \ctxc{} . \text{ if }\vdash \ctxc{} : \com{atk} \text{ then } \vdash \ctxc{}\hole{P} : \sni
	\end{align*}
\end{definition}

\begin{example}[\Cref{lis:spectrev1} is \rsni in \SR and not in \TR]\label{ex:sni}
	\lstset{language=Asm}
	Consider the code of \Cref{lis:spectrev1}.
	As expected, this code is \rsni in \SR. 
	Indeed, \SR does not support speculative execution and, therefore, for any trace $\src{t_{ns}}$ produced by an \SR-program $\nspecProject{\src{t_{ns}}} = {\src{t_{ns}}}$.
\lstset{language=Asm}
	The same code, however, is not \rsni in \TR.
Consider the code of \Cref{lis:spectrev1} (indicated as \trg{P_1}) and an  attacker \ctxt{^{\!8}} that calls function \lstinline{get}  with \trg{8}.
Since array \lstinline{A} is in the private heap, the low-equivalent program required by \Cref{def:sni} is the same \ctxt{^{\!8}} linked with some \trg{P_N}, which is the same \trg{P_1} with some array \lstinline{N} with contents different from \lstinline{A} in the heap such that \trg{\lstinline{A[8]}}$\neq$\trg{\lstinline{N[8]}}.
Whole program \trg{\ctxt{^{\!8}}\hole{P_1}} generates trace \trg{t_{sp}} from \Cref{ex:traces-trg} while \trg{\ctxt{^{\!8}}\hole{P_N}} generates \trg{t_{sp}'} below.
We indicate the address of array \lstinline{N} as \trg{n_N} and the content of \trg{\lstinline{N[i]}} as \trg{v_N^i}.
Low-equivalence yields that addresses are the same ($\trg{n_A+8}=\trg{n_N+8}$) but contents are not ($\trg{v_A^8}\neq\trg{v_N^8}$), and thus \lstinline{B} is accessed at different offsets ($\trg{n_B+v_A^8}\neq\trg{n_B+v_N^8}$).
\begin{align*}
	\trg{t_{sp}'} =&\ \trg{ \clh{get}{8}{} \cdot \ifl{1} \cdot \rdl{n_N+8}{} \cdot \rdl{n_B+v_N^8} \cdot \rollbl \cdot \rth{}{} }
\end{align*}
\lstset{language=General}%
\Cref{lis:spectrev1} is not \rsni in \TR (and neither in \wTR) since the non-speculative projections of \trg{t_{sp}'} and of \trg{t_{sp}} are the same (see above) while \trg{t_{sp}'} and \trg{t_{sp}} are \emph{different} ($\trg{\rdl{n_B+v_A^8}}\neq\trg{\rdl{n_B+v_N^8}}$).
\end{example}

\subsubsection*{Security Guarantees}
Since \rsni is defined in terms of traces, its security guarantees depend on which of the four languages  \SR, \TR, \src{\wSR}, and \trg{\wTR} we consider. %
As expected, for the source languages \SR and \src{\wSR}, \rsni is trivially satisfied; there is no speculative execution in \SR and \wSR and all traces are identical to their non-speculative projections.

\begin{theorem}[All \SR and \src{\wSR} programs are \rsni]\label{thm:all-s-rsni}
		\begin{align*}
			\forall \src{P} \ldotp 
			&\ 
			\vdash\src{P}:\rsni{}(\SR)
			\text{ and } 
			\vdash\src{P}:\rsni{}(\src{\wSR})
		\end{align*}
	\end{theorem}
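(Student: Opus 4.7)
The plan is to reduce \rsni on source programs to the trivial observation that the non-speculative projection $\nspecProject{\cdot}$ acts as the identity on source traces, making the premise of \snidef collapse onto its conclusion.

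First, I would establish the key structural lemma: for any whole source program $\src{W}$ (and analogously for $\src{\wSR}$), every trace $\tras{}\in\behavs{\SInits{W}}$ contains no $\src{\rollbl}$ labels and no $\src{\ifl{v}}$ labels that are later `closed' by a rollback. This is immediate from inspecting the semantics: the only rule that can emit $\com{\rollbl}$ is \Cref{tr:et-sp-rb}, which belongs to the speculative $\Xtot{}$ semantics of \TR, and the source big-step semantics $\Xtos{}$ (built on top of \Cref{tr:eus-tr-sin} and the shared small-step rules) never pushes a speculation instance onto a stack nor triggers a rollback. A short induction on the length of the derivation $\src{\Omega}\Xtos{\tras{}}\src{\Omega'}$ suffices.

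From this, it follows by definition of $\nspecProject{\cdot}$ (which removes exactly the sub-strings bracketed between $\com{\ifl{v}}$ and $\com{\rollbl}$) that $\nspecproj{\tras{}} = \tras{}$ for every $\tras{}\in\behavs{\SInits{W}}$. Hence for any two source whole programs $\src{W}$ and $\src{W'}$ the hypothesis $\nspecproj{\behavs{\SInits{W}}} = \nspecproj{\behavs{\SInits{W'}}}$ is equivalent to $\behavs{\SInits{W}} = \behavs{\SInits{W'}}$, which is exactly the conclusion of \snidef. Thus every whole source program is \sni.

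To lift this to \rsni, I take an arbitrary source component $\src{P}$ and an arbitrary valid attacker $\ctxs{}$ with $\vdash\ctxs{}:\com{atk}$. The linked program $\ctxs{}\hole{\src{P}}$ is a whole source program, so by the argument above it is \sni. Since this holds for every valid $\ctxs{}$, we conclude $\vdash\src{P}:\rsni(\SR)$. The case for $\src{\wSR}$ is identical: the only difference between $\SR$ and $\src{\wSR}$ is that private reads additionally expose the read value in their label, but neither language has speculative rules, so $\nspecProject{\cdot}$ is still the identity on its traces.

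I do not anticipate a substantive obstacle; the result is essentially a sanity check that our security property degenerates in the absence of speculation. The only mildly delicate point is being careful that the big-step source semantics $\Xtos{}$ really does not inherit any speculative machinery from the common reduction framework — this is easy to verify by inspection of \Cref{tr:eus-tr-sin}, which lifts the non-speculative single-step relation $\xtos{}$ without wrapping states in speculation instances $\trg{(\Omega,w)}$.
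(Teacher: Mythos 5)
Your proposal is correct and follows essentially the same route as the paper: the paper's proof (of the corresponding appendix theorem covering both \ss{} and \sni{} for source programs) likewise observes that source traces are identical to their non-speculative projections because \SR{} and \src{\wSR} have no speculative behaviour, so the premise of \snidef{} coincides with its conclusion, and robustness follows by quantifying over attackers. Your extra care in checking that $\Xtos{}$ never emits $\com{\rollbl}$ is a slightly more explicit justification of the same one-line observation.
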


For the target languages \TR and \trg{\wTR}, which support speculative execution, \rsni provides different security guarantees.

	\rsni{}(\trg{\TR}) corresponds to speculative non-interference~\cite{spectector,contracts}, which ensures the absence of \textit{all} speculative leaks.
	In our setting, the only allowed leaks are those depending either on information from the public heap or information from the private heap that is disclosed through actions produced non-speculatively, e.g., as an address of a non-speculative memory access. %
	Any other speculative leak of information from the private heap is disallowed by \rsni{}(\trg{\TR}).

	\rsni{}(\trg{\wTR}), in contrast, corresponds to weak speculative non-interference~\cite{contracts}, which allows speculative leaks of information that has been retrieved non-speculatively.
	Indeed, in \trg{\wTR} non-speculative reads from the private heap produce actions $\trg{\rdl{n \mapsto v}}$ that additionally disclose the value $\trg{v}$ read from the heap as part of the non-speculative projection.
	As a result, data retrieved non-speculatively from the private heap can influence speculative actions, which are not part of the non-speculative projection of the trace, without violating \rsni{}(\trg{\wTR}).
	That is, \rsni{}(\trg{\wTR}) ensures the absence only of leaks of speculatively-accessed data.

	Since \rsni{}(\trg{\TR}) ensures the absence of \textit{all} speculative leaks while \rsni{}(\trg{\wTR}) only ensures the absence of \emph{some} of them, any \rsni{}(\trg{\TR}) program is also \rsni{}(\trg{\wTR}). 
	
	\begin{theorem}[\rsni{}(\trg{\TR}) Implies \rsni{}(\trg{\wTR})]\label{thm:strong-impl-weak:sni}
		\begin{align*}
			\forall \trg{P}.&\ \text{ if } \vdash \trg{P} : \rsni{}(\trg{\TR}) \text{ then } \vdash \trg{P} : \rsni{}(\trg{\wTR}) 
		\end{align*}
	\end{theorem}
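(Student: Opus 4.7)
\begin{proofsketch}
The plan is to exploit the fact that \trg{\wTR} and \trg{\TR} share the same operational behavior and differ only in the labels emitted by non-speculative private reads: a \trg{\wTR}-action \trg{\rdl{n \mapsto v}} is just the corresponding \trg{\TR}-action \trg{\rdl{n}} augmented with the read value \trg{v}. I would first make this correspondence precise by defining a value-erasure map \trg{\varphi} on actions (identity everywhere except on non-speculative private reads, where $\trg{\varphi}(\trg{\rdl{n\mapsto v}})=\trg{\rdl{n}}$), extending it pointwise to traces. A straightforward induction on the derivations of \trg{\Xtot{}} shows that for every whole program \trg{W}, the \trg{\TR}-behavior and the \trg{\wTR}-behavior are related by \trg{\varphi}: $\behavt{\SInitt{W}}_{\trg{\TR}} = \trg{\varphi}(\behavt{\SInitt{W}}_{\trg{\wTR}})$, since the two semantics take identical steps and only the emitted labels differ on the single rule that reads from the private heap.

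Next I would verify that \trg{\varphi} commutes with the non-speculative projection, i.e.\ $\nspecProject{\trg{\varphi}(\trg{\tilde t})}=\trg{\varphi}(\nspecProject{\trg{\tilde t}})$, which is immediate since \trg{\varphi} acts label-wise and preserves the \trg{\ifl{v}}/\rollbl{} bracketing used to compute $\nspecProject{\cdot}$. With these two ingredients in hand, assume $\vdash \trg{P}:\rsni{}(\trg{\TR})$, pick an arbitrary valid attacker \ctxt{} and a low-equivalent whole program \trg{W'} with $\nspecProject{\behavt{\SInitt{\ctxt{}\hole{P}}}_{\trg{\wTR}}}=\nspecProject{\behavt{\SInitt{W'}}_{\trg{\wTR}}}$. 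Applying \trg{\varphi} to both sides and using the commutation lemma yields the same equality on the \trg{\TR}-projections; by \rsni{}(\trg{\TR}) we conclude $\behavt{\SInitt{\ctxt{}\hole{P}}}_{\trg{\TR}}=\behavt{\SInitt{W'}}_{\trg{\TR}}$, i.e.\ the value-erased \trg{\wTR}-behaviors coincide.

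It remains to upgrade this equality from erased behaviors to full \trg{\wTR}-behaviors. Here I use that two \trg{\wTR}-traces $\trg{\tilde t}$ and $\trg{\tilde t'}$ agree exactly when $\trg{\varphi}(\trg{\tilde t})=\trg{\varphi}(\trg{\tilde t'})$ and, in addition, the values decorating non-speculative private reads agree at matching positions. The first conjunct we just established; the second conjunct follows from the hypothesis that the full \emph{non-speculative} projections in \trg{\wTR} already agree, because non-speculative private reads of the form \trg{\rdl{n\mapsto v}} are exactly the actions where \trg{\wTR} and \trg{\TR} carry different information, and such actions are preserved by $\nspecProject{\cdot}$ (they sit outside any $\trg{\ifl{v}}\cdots\rollbl{}$ bracket). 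Combining both conjuncts gives $\behavt{\SInitt{\ctxt{}\hole{P}}}_{\trg{\wTR}}=\behavt{\SInitt{W'}}_{\trg{\wTR}}$, which is what \rsni{}(\trg{\wTR}) demands.

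The main obstacle, and the only place that requires care, is the extra-value bookkeeping in the last paragraph: one must be sure that whenever two full \trg{\wTR}-traces agree after value erasure, the positions of non-speculative private reads line up so that the \wTR non-speculative projection hypothesis pins down the missing values. This follows because both semantics execute the same transitions in lockstep, so the \emph{positions} of non-speculative private reads coincide across \trg{\TR} and \trg{\wTR} for any fixed whole program; everything else is label-wise and mechanical.
\end{proofsketch}
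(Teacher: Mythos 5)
Your proof is correct and is essentially the argument the paper intends: the main text justifies \Cref{thm:strong-impl-weak:sni} only with the one-line observation that \rsni{}(\trg{\TR}) forbids all speculative leaks while \rsni{}(\trg{\wTR}) forbids a subset of them, and your value-erasure map $\trg{\varphi}$, its commutation with $\nspecProject{\cdot}$, and the recovery of the erased values from the \trg{\wTR} non-speculative-projection hypothesis is the natural formalization of exactly that intuition. The one point that genuinely needs care --- and which you handle correctly --- is that $\trg{\varphi}$ must erase values only from \emph{non-speculative} private reads, which presupposes that in \trg{\wTR} a \emph{speculatively} executed private read still emits $\trg{\rdl{n}}$ without the value (as the paper's prose stipulates); were speculative private reads also value-decorated, both your lemma relating $\behavt{\cdot}$ in \trg{\TR} to $\trg{\varphi}$ applied to $\behavt{\cdot}$ in \trg{\wTR} and the theorem itself would fail, since a speculatively read secret would then appear in the full \trg{\wTR} trace but be invisible to both non-speculative projections.
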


	As shown in~\cite{contracts}, strong and weak speculative non-interference (that is, \rsni{}(\trg{\TR}) and \rsni{}(\trg{\wTR})) have different implications for secure programming.
	In particular, programs that are traditionally constant-time (i.e., constant-time under the non-speculative semantics) and satisfy strong speculative non-interference are also constant-time w.r.t. the speculative semantics.
	Similarly, programs that are traditionally sandboxed (i.e., do not access out-of-the-sandbox data non-speculatively) and satisfy weak speculative non-interference are also sandboxed w.r.t. the speculative semantics.

\subsection{Robust Speculative Safety}\label{sec:rss}

We now introduce \textit{speculative safety} (\ss), a safety property that soundly over-approximates \sni.
To enable reasoning about security using single traces (rather than pairs of traces as in \sni), we extend our languages with a taint-tracking mechanism that (1) taints values as ``safe'' (denoted by  \com{\safeta}) whenever they can be leaked speculatively without violating \sni (e.g., the public heap is ``safe'') or  ``unsafe'' (denoted by \com{\unta}) otherwise, and (2) propagates taints to labels across computations.
Speculatively safe programs produce traces containing only safe labels.

\myparagraph{Taint tracking}
Taint-tracking is at the foundation of our speculative safety definition and it enables reasoning about security on single traces. 
For this, we extend the semantics of our languages \src{\SR}, \src{\wSR}, \trg{\TR}, and \trg{\wTR} with a taint tracking mechanism.
We consider two taint-tracking mechanisms, a strong and a weak one, that lead to different security guarantees, as we show later.
Each mechanism is adopted in the related pair of languages: strong (resp. weak) languages use the strong (resp. weak) taint-tracking.
Our taint-tracking is rather standard, so we provide an informal overview of its key features below using the rules for reading from the private heap as an example; full details are  \Cref{sec:taint-app}.
These rules simply extend \Cref{tr:eus-rd-com-p} with taint, which is highlighted in gray.
\begin{center}\small
	\typerule{T-read-prv}{
		\com{B \triangleright e\bigred n : \tainthl{\sigma'}}
		&
		\com{H}(-\abs{n}) = v : \tainthl{\sigma''}
		&
		\tainthl{\com{\sigma}} = \tainthl{\com{\sigma''\glb\sigma'}}
	}{
		\begin{multlined}
			\com{\tainthl{\sigma_{pc}}; C, H, \OB{B}\cdot B \triangleright \letreadp{x}{e}{s} \xto{ \rdl{-\abs{n}}^{\tainthl{\sigma\lub\sigma_{pc}}}}}
				\\
				\com{C, H, \OB{B}\cdot B\cup x\mapsto v : \tainthl{\unta} \triangleright s}
		\end{multlined}
	}{tus-rd-com-p-strong}
	\typerule{T-read-prv-weak}{
		\com{B \triangleright e\bigred n : \tainthl{\sigma'}}
		&
		\com{H}(-\abs{n}) = v : \tainthl{\sigma''}
		&
		\tainthl{\com{\sigma}} = \tainthl{\com{\sigma''\glb\sigma'}}
	}{
		\begin{multlined}
			\com{\tainthl{\sigma_{pc}}; C, H, \OB{B}\cdot B \triangleright \letreadp{x}{e}{s} \xto{ \rdl{-\abs{n}\mapsto v}^{\tainthl{ \sigma\lub\sigma_{pc}}}}}
				\\
				\com{C, H, \OB{B}\cdot B\cup x\mapsto v : \tainthl{\sigma' \lub \sigma_{pc}} \triangleright s}
		\end{multlined}
	}{tus-rd-com-p-weak}
\end{center}

\begin{asparaitem}
\item All values $\com{v}$ are tainted with a taint $\com{\sigma} \in \{\com{\safeta}, \com{\unta}\}$. %
Heaps $\com{H}$ and variable bindings $\com{B}$ are extended to record the taint of values.
Taints form the usual \emph{integrity} lattice $\safeta\leq\unta$ (which is the dual of the lattice used for non-interference) and are combined using the least-upper-bound ($\glb$) and greatest-lower-bound ($\lub$) operators.
For simplicity, we report here the key cases: $\safeta\glb\unta = \unta$ and $\safeta\lub\unta=\safeta$.

\item The public part of the initial heap is tainted as safe, and its private part is tainted as unsafe.

\item The taint-tracking mechanism also tracks the taint $\com{\sigma_{pc}}$ associated with the program counter.
The program counter taint is $\com{\safeta}$ whenever we are not speculating and it is raised to $\com{\unta}$ whenever we are executing instructions speculatively.
The latter can happen only in the \trg{\TR} and \trg{\wTR} languages, where it is represented by the speculative state containing more than one speculation instance.
In the source languages, instead, $\src{\sigma_{pc}}$ is always $\src{\safeta}$.

\item Taint is propagated in the standard way across computations.
For example, expressions combine taints using the least-upper-bound $\glb$, i.e.,  expressions involving unsafe values are tainted \com{U}.

The strong and weak taint-tracking mechanisms differ, however, in how they handle memory reads from the private heap. 
When reading from the private heap, the strong mechanism used in \src{\SR} and \trg{\TR} taints the variable where the data is stored as unsafe (\com{U}) (\Cref{tr:tus-rd-com-p-strong}).
In contrast, the weak mechanism of \src{\wSR} and \trg{\wTR}, taints the target value with the greatest-lower-bound of the taints of the memory address and of the program counter (\Cref{tr:tus-rd-com-p-weak}).
This ensures that information retrieved non-speculatively from the private heap (i.e., the program counter taint is $\com{\safeta}$) is tainted $\com{\safeta}$.
\looseness=-1

\item 
The taint tracking does not keep track of implicit flows.
Since the program counter is part of the actions, any sensitive implicit flow would appear in the trace due to the corresponding \com{\ifl{v}} action.

\item %
The taint of labels is the greatest-lower-bound of the taint of the expressions generating the label and the program counter taint (\Cref{tr:tus-rd-com-p-strong} and \Cref{tr:tus-rd-com-p-weak}).
This ensures that non-speculative labels are tainted as safe (\com{\safeta}), while speculative labels are tainted as unsafe (\com{\unta}) if they depend on unsafe data and safe otherwise.

\end{asparaitem}

With a slight abuse of notation, in the following we refer to the languages \src{\SR}, \src{\wSR}, \trg{\TR}, and \trg{\wTR} extended with the corresponding taint-tracking mechanisms outlined above whenever we talk about speculative safety.
That is, for speculative safety, programs in  \src{\SR}, \src{\wSR}, \trg{\TR}, and \trg{\wTR} produce traces $\trac{^\sigma}$ of tainted labels $\com{\lambda^\sigma}$, where taints $\com{\sigma}$ are computed as described above.

\subsubsection*{Property}
Speculative safety ensures that \emph{whole} programs \com{W} generate only safe (\safeta) actions in their traces.
As we show later, \ss security guarantees depend on the underlying language (and on its taint-tracking mechanism). 

\begin{definition}[Speculative Safety (\ssdef)]\label{def:ss}
	\begin{align*}
	\vdash\com{W} : \ss \isdef&\
		\forall \trac{^\sigma}\in\behavc{W}\ldotp \forall \acac{^\sigma}\in\trac{^\sigma}\ldotp \com{\sigma}\equiv\com{\safeta}
	\end{align*}
\end{definition}
A component \com{P} is \rss if it upholds \ss when linked against arbitrary valid attackers (\Cref{def:rss}).
\begin{definition}[Robust Speculative Safety (\rssdef)]\label{def:rss}
	\begin{align*}
		\vdash \com{P} : \rss \isdef&\ \forall \ctxc{}. \text{ if } \vdash \ctxc{} : \com{atk} \text{ then } \vdash \com{\ctxc{}\hole{P}} : \ss
	\end{align*}
\end{definition}

\begin{example}[\Cref{lis:spectrev1} is \rss in \SR and not in \TR]\label{ex:rss}
The code of   \Cref{lis:spectrev1} is \rss in \SR because $\src{\sigma_{pc}}$ is always $\src{\safeta}$ and, therefore, all actions are tainted as $\src{\safeta}$.
The code, however, is neither \rss in \TR nor in \wTR.
For this, consider the trace from \Cref{ex:traces-trg}. 
The taint-tracking mechanism taints the actions as follows:
	{\small
	\begin{align*}
	    \trg{t_{sp}} =&\ \trg{ \clh{get}{8}{}^\safeta \cdot \ifl{1}^\safeta \cdot \rdl{A[8]}{}^\safeta \cdot \rdl{B[A[8]]}^\unta \cdot  \rollbl^\safeta \cdot \rth{}{}^\safeta }
	\end{align*}
	}
	\lstset{language=Asm}
The trace contains an unsafe action corresponding to the second memory access.
This happens because the action has been generated speculatively (that is, $\trg{\sigma_{pc}}$ is $\trg{\unta}$) and it depends on data retrieved from the private heap (which \TR's taint-tracking  taints as $\trg{\unta}$).
\lstset{language=Java}
\end{example}

\subsubsection*{Security Guarantees}
Similarly to \sni, the security guarantees of \ss depend on the underlying language.
As expected, \rss  trivially holds for  \SR and \src{\wSR} since they only produce labels tainted \src{\safeta}.

\begin{theorem}[All \SR and \src{\wSR} programs are \rss]\label{thm:all-s-rdss}
		\begin{align*}
			\forall \src{P} \ldotp 
			&\ 
			\vdash\src{P}:\rss{}(\SR)
			\text{ and } 
			\vdash\src{P}:\rss{}(\src{\wSR})
		\end{align*}
	\end{theorem}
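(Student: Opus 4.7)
The plan is to prove that $\com{\sigma_{pc}}$ remains $\com{\safeta}$ throughout any execution of a source program, from which the result follows almost immediately. I will prove both cases ($\SR$ and $\src{\wSR}$) uniformly since the taint-tracking rules in the source languages behave identically on this point: neither $\src{\SR}$ nor $\src{\wSR}$ supports speculative execution, hence the speculative stack never grows beyond a single speculation instance, and so the program counter taint is never promoted to $\com{\unta}$.

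First, I would fix an arbitrary source program $\src{P}$ and an arbitrary valid attacker $\ctxs{}$ with $\vdash\ctxs{}:\src{atk}$. By \Cref{def:rss}, it suffices to show $\vdash\src{\ctxs{}\hole{P}}:\ss$, i.e., that every label $\src{\acas{^\sigma}}$ appearing in any trace $\src{\tras{^\sigma}}\in\src{\behavs{\SInits{\ctxs{}\hole{P}}}}$ satisfies $\src{\sigma}\equiv\src{\safeta}$. I would then establish the following key invariant by induction on the length of the reduction sequence $\src{\SInits{\ctxs{}\hole{P}}} \Xtos{\src{\tras{^\sigma}}} \src{\Omega}$: for every intermediate state, the program counter taint is $\src{\safeta}$.

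The base case is immediate from the definition of the initial state. For the inductive step, inspection of the source semantics rules shows that $\src{\sigma_{pc}}$ is only raised to $\src{\unta}$ by the speculative rules (\Cref{tr:et-sp-if,tr:et-sp-act,tr:et-sp-lf,tr:et-sp-rb}), which exist only in the target languages $\trg{\TR}$ and $\trg{\wTR}$. In $\src{\SR}$ and $\src{\wSR}$, every small-step reduction preserves $\src{\sigma_{pc}}\equiv\src{\safeta}$.

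Given the invariant, the conclusion follows by examining the label-generating rules. Every taint on an emitted label is, as illustrated in \Cref{tr:tus-rd-com-p-strong} and \Cref{tr:tus-rd-com-p-weak}, of the form $\src{\sigma\lub\sigma_{pc}}$ for some $\src{\sigma}$. Since $\src{\sigma_{pc}}\equiv\src{\safeta}$ by the invariant, and $\src{\sigma\lub\safeta}\equiv\src{\safeta}$ in the integrity lattice (as $\src{\safeta}$ is the bottom w.r.t.\ $\lub$), every label produced by $\src{\ctxs{}\hole{P}}$ carries the taint $\src{\safeta}$. Recall also that labels emitted inside the attacker context are filtered out by \Cref{tr:eus-tr-sin}, so they pose no issue. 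Hence $\vdash\src{\ctxs{}\hole{P}}:\ss$, and since $\ctxs{}$ was arbitrary, $\vdash\src{P}:\rss$. I expect no substantive obstacle: the proof is essentially a syntactic check that the rules which could raise $\src{\sigma_{pc}}$ are confined to the target semantics, combined with a trivial lattice identity.
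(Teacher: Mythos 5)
Your proposal is correct and follows essentially the same route as the paper: the paper's proof is a one-liner observing that $\SR$ and $\src{\wSR}$ programs only ever produce $\src{\safeta}$-tainted actions, precisely because $\src{\sigma_{pc}}$ is always $\src{\safeta}$ in the absence of speculation and label taints are computed as $\src{\sigma\lub\sigma_{pc}}$ with $\src{\sigma\lub\safeta}=\src{\safeta}$ (indeed, the paper's source trace rule even stamps $\src{\safeta}$ on emitted actions directly, citing exactly this justification). Your added induction on the reduction sequence just makes explicit what the paper treats as immediate.
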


In contrast, \rss' guarantees are different for  \trg{\TR} and \trg{\wTR}, which are equipped with distinct taint tracking mechanisms.

\rss{}(\TR) is a strict over-approximation of \rsni{}(\trg{\TR}) (and, thus, of speculative non-interference in terms of~\cite{spectector, contracts}) and its preservation through compilation is easier to prove than \rsni{}(\trg{\TR})-preservation.
\begin{theorem}[\rss{}(\TR) over-approximates \rsni{}(\trg{\TR})]\label{thm:rss-overapp-rsni}
	\begin{align*}
		1) &\ \forall \trg{P}. \text{ if } \vdash \trg{P} : \rss{}(\TR) \text{ then } \vdash \trg{P} : \rsni{}(\trg{\TR})
		\\
		2) &\ \exists \trg{P}. \vdash \trg{P} : \rsni{}(\trg{\TR}) \text{ and } \nvdash \trg{P} : \rss{}(\TR)
	\end{align*}
\end{theorem}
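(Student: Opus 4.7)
My plan is to split the argument along the two claims of the theorem and handle each with a distinct technique.

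For claim (1), the plan is to show that \rss implies \rsni via a taint-soundness argument. Fix \trg{P} with $\vdash\trg{P}:\rss(\TR)$ and an arbitrary valid attacker $\ctxt{}$, and consider any low-equivalent $\trg{W}=\ctxt{}\hole{\trg{P}}$ and $\trg{W'}=\ctxt{}\hole{\trg{P'}}$ with $\nspecproj{\behavt{\SInitt{W}}}=\nspecproj{\behavt{\SInitt{W'}}}$. The heart of the argument is a single-step ``noninterference for safe labels'' lemma: if two tainted program states agree on all components whose values carry taint \trg{\safeta} (including the program-counter taint), and both produce an action \trg{\alpha^{\sigma}}, then either $\trg{\sigma}=\trg{\unta}$, or the two emitted actions coincide and the post-states again agree on their safe parts. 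Since the initial private heaps are tainted \trg{\unta} and the initial public heaps coincide, the two initial states agree on all safe components; combined with the \rss assumption, which forbids any \trg{\unta}-tainted action on either run, the lemma can be lifted by induction on the length of the joint trace to yield $\trg{\behavt{\SInitt{W}}}=\trg{\behavt{\SInitt{W'}}}$, as required by \rsnidef.

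The key obligation for this lemma is verifying each statement rule of \trg{\TR} (both non-speculative and the speculative-stack rules of \Cref{tr:et-sp-if,tr:et-sp-act,tr:et-sp-lf,tr:et-sp-rb}): whenever the action emitted has taint \trg{\safeta}, the sub-expressions it depends on, the pc-taint, and the addresses and values involved must all be safe, and hence cannot have been influenced by the varied private heap. The mis-speculation rule is the subtle case, but since mis-speculation happens only inside the component and merely pushes a new speculation instance whose pc-taint is raised to \trg{\unta} when the guard is \trg{\unta}-tainted, any action produced under an \trg{\unta} pc-taint is itself \trg{\unta}-tainted and therefore ruled out by \rss. Thus the two runs step in lockstep through all speculative windows as well.

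For claim (2), the plan is to exhibit a concrete witness of a program that is semantically \rsni but conservatively rejected by \rss. A natural candidate in the style of \Cref{lis:spectrev1} is a component whose body speculatively reads \trg{temp\leftarrow A[y]} from the private heap and then performs \trg{B[temp\ominus temp]}, or equivalently uses a masked expression such as \trg{temp~\op~0} that is syntactically tainted \trg{\unta} but evaluates to a constant independent of \trg{temp}. The generated action carries taint \trg{\unta} (so \rss fails), yet the accessed address is identical for every low-equivalent heap, so no speculative leak exists and \rsni holds. A short direct verification on this program, together with claim (1), establishes strictness.

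The main obstacle I expect is the single-step noninterference lemma in claim (1): it must be stated strongly enough to be closed under the speculative-stack reductions, meaning the invariant must simultaneously track agreement on safe values in \emph{every} speculation instance on the stack, and correctly account for rollbacks (\Cref{tr:et-sp-rb}) that discard state. Getting the invariant right, especially relating the taints of actions to the taints of the residual post-states after mis-speculation, is where the real work lies; once the invariant is fixed, the induction and the counterexample in claim (2) are routine.
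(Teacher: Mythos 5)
Your plan for part (2) is fine (your masked-constant witness is a valid, if different, counterexample to the paper's \lstinline{get_nc}-style one), but there is a genuine gap in part (1): your single-step lemma is false, and the induction built on it proves too much. In the strong taint tracking, the taint of an emitted action is the combination of the data taint with the pc taint, and the pc taint is \trg{\safeta} whenever execution is not mis-speculating; consequently \emph{every} non-speculatively produced action is tainted \trg{\safeta}, even when it directly reveals private data. Concretely, take two safe-equivalent states about to execute a non-speculative \trg{ifz} whose guard was loaded from the private heap (taint \trg{\unta}): both emit an \trg{\safeta}-tainted \com{\ifl{v}} action, yet the two guard values---and hence the two actions and the two control-flow successors---may differ. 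The same happens for a non-speculative public-heap write of a privately-loaded value. So ``the action is \trg{\safeta}, hence its inputs are safe and the two runs agree'' is exactly the implication the strong taint tracking does \emph{not} provide; it deliberately under-taints non-speculative leaks, because \sni permits them.

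The symptom is that your induction never uses the hypothesis $\nspecproj{\behavc{\SInit{W}}} = \nspecproj{\behavc{\SInit{W'}}}$, and would therefore conclude $\behavc{\SInit{W}} = \behavc{\SInit{W'}}$ from \rss{} and low-equivalence alone --- which is false (a program that non-speculatively branches on the private heap and then performs only constant-address accesses is \rss{} but has distinct traces on low-equivalent heaps; it is still \sni because its non-speculative projections also differ). The fix, which is what the paper does, is to weaken the step invariant to a disjunction: from safe-equivalent states producing safe labels, either the emitted prefixes already have \emph{distinct non-speculative projections}, or the labels coincide and the post-states remain safe-equivalent; the equal-projections hypothesis then kills the first disjunct at the end of the induction. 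You also need to handle the case where the two runs terminate after different numbers of steps (one reaches a terminal state while the other can still step), which the lockstep formulation glosses over. With those repairs your approach becomes essentially the paper's proof.
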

To understand point 1, observe that \rss{}(\TR) ensures that only safe observations are produced by a program $\trg{P}$.
This, in turn, ensures that no information originating from the private heap is leaked through speculatively-executed instructions in  $\trg{P}$.
Therefore,  $\trg{P}$ satisfies \rsni{}(\trg{\TR}) because everything except the private heap is visible to the attacker, i.e., there are no additional leaks due to speculatively-executed instructions.

\lstset{language=Asm}
To understand point 2, consider \lstinline{get_nc} from \Cref{lis:rel-rss-rsni}, which always accesses \trg{\lstinline{B[A[y]]}}.
This code is \rsni{}(\trg{\TR}) because  states that can be distinguished by  the traces can also be distinguished by  their non-speculative projections, i.e., speculatively-executed instructions do not leak additional information.
However, it is not \rss{}(\TR) because speculative memory accesses will produce \trg{\unta} actions.
\lstset{language=Asm}
\begin{lstlisting}[mathescape,label=lis:rel-rss-rsni,caption=Code that is \rsni but not \rss.] 
void get_nc (int y) 
	if (y < size) then B[A[y] ] else B[A[y] ]
\end{lstlisting}\label{sec:spectre-listing}

\rss{}(\trg{\wTR}), in contrast, is a strict over-approximation of \rsni{}(\trg{\wTR}) (and, therefore, of weak speculative non-interference in terms of~\cite{contracts}).
\begin{theorem}[\rss{}(\trg{\wTR}) over-approximates \rsni{}(\trg{\wTR})]\label{thm:rss-overapp-rsni-weak}
	\begin{align*}
		1) &\ \forall \trg{P}. \text{ if } \vdash \trg{P} : \rss{}(\trg{\wTR}) \text{ then } \vdash \trg{P} : \rsni{}(\trg{\wTR})
		\\
		2) &\ \exists \trg{P}. \vdash \trg{P} : \rsni{}(\trg{\wTR}) \text{ and } \nvdash \trg{P} : \rss{}(\trg{\wTR})
	\end{align*}
\end{theorem}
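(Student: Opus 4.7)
The plan is to mirror the proof of \Cref{thm:rss-overapp-rsni}, adjusting for the weak observation model and taint propagation. For part (1), I assume $\vdash \trg{P} : \rss{}(\trg{\wTR})$, fix a valid attacker $\ctxt{}$, and consider low-equivalent whole programs $\trg{W} \equiv \ctxt{}\hole{P}$ and $\trg{W'}$ whose non-speculative projections coincide. I would set up a cross-state simulation between reductions of $\trg{W}$ and $\trg{W'}$ that keeps (a) public heaps, (b) stacks, and (c) the ``already disclosed'' fragment of the private heaps in sync, where ``already disclosed'' refers to those private locations whose values have appeared on the non-speculative projection through actions of the form $\trg{\rdl{n \mapsto v}}$. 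An induction on joint reductions then shows that whenever the next action of $\trg{W}$ is tainted $\trg{\safeta}$, the next action of $\trg{W'}$ is identical, relying on the key lemma that, under the weak taint rule (\Cref{tr:tus-rd-com-p-weak}), any expression whose taint is $\trg{\safeta}$ depends only on data that is either public or has already been exposed non-speculatively, and hence coincides across the two runs by the hypothesis on non-speculative projections.

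For part (2), I would reuse \lstinline{get_nc} from \Cref{lis:rel-rss-rsni} as the witness. Both branches of the conditional perform the same memory accesses $\trg{B[A[y]]}$, so any speculative sub-execution emits exactly the addresses that the correct, non-speculative branch will emit later; coupled with the fact that the non-speculative projection in \trg{\wTR} already discloses the values $\trg{v_A^y}$ through actions $\trg{\rdl{n_A + y \mapsto v_A^y}}$, any two low-equivalent programs whose non-speculative projections agree also agree on the speculative actions, establishing $\vdash \trg{get\_nc} : \rsni{}(\trg{\wTR})$. However, during the speculatively-executed access to $\trg{A[y]}$ we have $\trg{\sigma_{pc}} = \trg{\unta}$, so \Cref{tr:tus-rd-com-p-weak} taints the emitted label as $\trg{\unta}$, witnessing $\nvdash \trg{get\_nc} : \rss{}(\trg{\wTR})$.

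The main obstacle lies in part (1): unlike the strong case, in which the taint tracking cleanly blocks any flow of private-heap data into safe labels, the weak rule deliberately downgrades non-speculatively-loaded private values to $\trg{\safeta}$. The simulation invariant must therefore precisely characterize the exposed fragment of the private heap at each point along the trace and keep it synchronized across $\trg{W}$ and $\trg{W'}$. This requires a careful case analysis on the speculative semantics rules (\Cref{tr:et-sp-if}, \Cref{tr:et-sp-act}, and \Cref{tr:et-sp-rb}) to ensure that private values first used inside a speculative sub-execution and surviving past a rollback can never contribute to a safe label without having first been exposed by a prior non-speculative read.
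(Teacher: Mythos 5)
Your proposal follows essentially the same route as the paper: part (1) is proved by a two-run indistinguishability argument whose crux is that the weak read action $\com{\rdl{n\mapsto v}}$ places the loaded \emph{value} on the non-speculative projection, so equal projections force the two runs to load equal values exactly where the weak taint rule downgrades them to $\trg{\safeta}$; part (2) reuses \lstinline{get_nc} from \Cref{lis:rel-rss-rsni}. Two refinements are needed, though neither is fatal. First, your simulation invariant should not be phrased as keeping an ``already disclosed fragment of the private heap'' in sync: a disclosed location can later be overwritten (e.g.\ with a speculatively-loaded, hence unsafe, value), after which the two runs may legitimately differ there. The paper instead uses a \emph{safe-equivalence} relation on whole states --- values must agree precisely when their current taint is $\trg{\safeta}$ --- which is preserved by writes because the taint is updated along with the value; your ``key lemma'' (safe-tainted expressions depend only on public or non-speculatively exposed data) is really a consequence of that invariant, and the induction must be stated disjunctively (``either safe-equivalence is preserved or the non-speculative projections already differ''), since during the induction you only know agreement of the full projections, not of every intermediate action. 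Second, in part (2) the $\trg{\unta}$ label is not the speculative access to $\trg{A[y]}$ itself: a speculative label is unsafe only if it \emph{depends on unsafe data}, and the address $\trg{n_A+y}$ is safe, so that read's label is $\trg{\safeta}$. The violation of \rss{}(\trg{\wTR}) comes from the subsequent access $\trg{B[x]}$, whose address depends on $\trg{x}$, which the weak rule taints $\trg{\unta}$ because it was loaded while $\trg{\sigma_{pc}}=\trg{\unta}$. (If $\trg{\sigma_{pc}}=\trg{\unta}$ alone sufficed to make a label unsafe, \ss{} would forbid all speculative observations and the SLH result of the paper could not hold.) With these two adjustments your argument matches the paper's proof.
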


Finally, it is easy to see that any \rss{}(\TR) program is also \rss{}(\trg{\wTR}) since all actions  tainted \trg{\safeta} by the taint-tracking of \TR are  tainted \trg{\safeta} also by the taint-tracking of \wTR.
\begin{theorem}[\rss{}(\trg{\TR}) Implies \rss{}(\trg{\wTR})]\label{thm:strong-impl-weak:ss}
	\begin{align*}
		\forall \trg{P}.&\ \text{ if } \vdash \trg{P} : \rss{}(\TR)  \text{ then } \vdash \trg{P} : \rss{}(\trg{\wTR}) 
	\end{align*}
\end{theorem}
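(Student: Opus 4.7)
The plan is to prove the contrapositive via a lock-step simulation between the \TR and \wTR semantics of $\ctxt{}\hole{P}$. The key observation is that \TR and \wTR share identical statement-level reduction rules; they differ only in (i) whether private-heap read actions carry a value payload, and (ii) the taint assigned to the local variable receiving the read value, namely \trg{\unta} in \TR versus $\sigma' \lub \sigma_{pc}$ in \wTR. Crucially, the formula $\sigma \lub \sigma_{pc}$ used to taint the \emph{emitted label} of a private read coincides in both languages, and so do all taint-propagation formulas for expressions, assignments, and conditionals.

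The main step is a lemma I will call \emph{strong-dominates-weak}: for any valid attacker $\ctxt{}$, any initial state of $\ctxt{}\hole{P}$, and any sequence of statement-level reductions, the two interpretations under \TR's and \wTR's taint tracking produce related executions in which (a) every value stored in a binding, and every heap cell, carries a \TR-taint that is $\geq$ its \wTR-taint in the integrity order $\safeta \leq \unta$; and (b) consequently, every emitted label carries a \TR-taint $\geq$ its \wTR-taint. The proof is by induction on the length of the reduction. The only rule where the two taint assignments diverge is the private-heap read, and there the \TR variable receives \trg{\unta}, which trivially dominates $\sigma' \lub \sigma_{pc}$. For every other rule, the taint formulas are syntactically identical across the two semantics, so the invariant on values and heaps propagates by monotonicity of $\glb$ and $\lub$ from the inductive hypothesis; since the label-taint formulas are also syntactically identical, (b) follows from (a) at each step.

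From the lemma the theorem follows immediately. Suppose for contradiction that $\nvdash \trg{P} : \rss{}(\trg{\wTR})$. Then there exist a valid attacker $\ctxt{}$ and a trace $\trst{}_w \in \behavt{\SInitt{\ctxt{}\hole{P}}}$ (in \wTR) containing an action with taint \trg{\unta}. Applying the lemma to the underlying reduction sequence, the corresponding trace $\trst{}_s \in \behavt{\SInitt{\ctxt{}\hole{P}}}$ in \TR contains, at the same position, an action whose taint dominates \trg{\unta}, hence equals \trg{\unta}. Therefore $\nvdash \ctxt{}\hole{P} : \ss(\TR)$, and so $\nvdash \trg{P} : \rss{}(\TR)$.

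The main obstacle is purely bookkeeping: the simulation must track the stack of speculation instances introduced by \Cref{tr:et-sp-if}, \Cref{tr:et-sp-lf}, and \Cref{tr:et-sp-rb}, and must ensure the domination invariant survives rollbacks and speculation-window exhaustion. Since none of these rules inspects or rewrites value taints—they only manipulate program states and the counter \trg{n}—the invariant passes through unchanged, but the case analysis needs to cover every combination of top-of-stack reduction and rollback. Apart from this administrative overhead, the argument is structural and does not rely on any property of attackers beyond validity, which is preserved identically in both languages.
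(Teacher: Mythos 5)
Your proof is correct and follows essentially the same route as the paper, which justifies this theorem with the single observation that every action tainted \trg{\safeta} by \TR's taint tracking is also tainted \trg{\safeta} by \wTR's; your ``strong-dominates-weak'' invariant (with the private-heap read as the only divergence point, where \trg{\unta} trivially dominates $\com{\sigma'} \lub \com{\sigma_{pc}}$, and monotonicity of $\glb$/$\lub$ everywhere else) is exactly the formalization of that remark. No gaps.
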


\section{Compiler Criteria for Secure Speculation}\label{sec:rob-saf-comp}
We now introduce our secure compilation criteria:
\emph{robust speculative safety preservation} (\rssp, \Cref{sec:rssp}), which  preserves  \rss, and \emph{robust speculative non-interference preservation} (\rsnip, \Cref{sec:rsnip}), which preserves \rsni.
We conclude by discussing how compilers can be proven secure or insecure using these criteria (\Cref{sec:methodology}).

As before, criteria can be instantiated using pairs of languages \src{L}-\trg{T} or \src{\wSR}-\trg{\wTR}.
Criteria instantiated with the strong languages (say \rssp{}(\SR,\TR)) are indicated with a \text{+} (that is, \strong{\rssp{}}). 
Those instantiated with weak languages (say \rsnip{}(\wSR,\wTR)) are indicated with a \text{-} (that is, \weak{\rsnip{}}).
When we omit the `sign', we refer to both criteria.
For simplicity, we only present the strong criteria (for \src{L}-\trg{T}), weak ones are defined identically (but for \src{L^-}-\trg{T^-}).

\subsection{Robust Speculative Safety Preservation}\label{sec:rssp}
The first criterion is clear: a compiler preserves \rss if given a source component that is \rss, the compiled counterpart is also \rss.
\begin{definition}[\strong{\rdsspdef{}}]\label{def:rssp}
		\begin{align*}
		\vdash \comp{\cdot} : \strong{\rssp} \isdef&\ 
		\forall\src{P} \in \SR\ldotp 
		\text{ if } \vdash\src{P}: \rss{}(\SR)
		\text{ then }
		\vdash\comp{\src{P}} : \rss(\TR)
	\end{align*}
\end{definition}

\Cref{def:rssp} is a ``property-ful'' criterion since it explicitly refers to the preserved property~\citep{rhc,rhc-rel}. %
Proving a ``property-ful'' criterion, however, can be fairly complex.
Fortunately, it is generally possible to turn a ``property-ful'' definition into an \emph{equivalent} ``property-free'' one~\cite{rhc,rsc-j,rhc-rel}, which come in so-called \emph{backtranslation} form with established proof techniques~\cite{rhc,catalinRSC,rsc-j,PatrignaniASJCP15,nonintfree,NewBA16}.
To state the equivalence of these criteria, we introduce a cross-language relation between traces of the two languages, which specifies when two possibly different traces have the same ``meaning''.
Our property-free security criterion (\rssc, \Cref{def:rdss}) states that a compiler is \rssc if for any target-level attacker \ctxt{} that generates a trace \trat{^{\taintt}}, we can build a source-level attacker \ctxs{} that generates a trace \tras{^\sigma} that is related to \trat{^{\taintt}}.
A source trace \tras{^\sigma} and a target trace \trat{^{\taintt}} are related (denoted with $\tras{^\sigma} \rels \trat{^{\taintt}}$) if the target trace contains all the actions of the source trace, plus possible interleavings of safe (\trg{\safeta}) actions (\Cref{tr:tr-rel-safe,tr:tr-rel-safe-h}).
All other actions must be the same (i.e., $\equiv$, \Cref{tr:tr-rel-same,tr:tr-rel-same-h}).

\begin{center}
	\typerule{Trace-Relation-Same}{
		\src{\tras{^\sigma} } \rels \trg{\trat{^{\taintt}} }	
		&
		\src{\alpha^\sigma} \equiv \trg{ \acat{^{\taintt}} }
	}{
		\src{\tras{^\sigma} \cdot \alpha^\sigma} \rels \trg{\trat{^{\taintt}} \cdot \acat{^{\taintt}} }
	}{tr-rel-same}
	\typerule{Trace-Relation-Same-Heap}{
		\src{\tras{^\sigma} } \rels \trg{\trat{^{\taintt}} }	
		&
		\src{\delta^\sigma} \equiv \trg{ \trgb{\delta}^{\taintt} }
	}{
		\src{\tras{^\sigma} \cdot \delta^\sigma} \rels \trg{\trat{^{\taintt}} \cdot \trgb{\delta}^{\taintt} }
	}{tr-rel-same-h}
	\typerule{Trace-Relation-Safe}{
		\src{\tras{^\sigma} } \rels \trg{\trat{^{\taintt}} }	
	}{
		\src{\tras{^\sigma} } \rels \trg{\trat{^{\taintt}} \cdot \acat{^{\safeta}} }
	}{tr-rel-safe}
	\typerule{Trace-Relation-Safe-Heap}{
		\src{\tras{^\sigma} } \rels \trg{\trat{^{\taintt}} }	
	}{
		\src{\tras{^\sigma} } \rels \trg{\trat{^{\taintt}} \cdot \trgb{\delta}^{\safeta} }
	}{tr-rel-safe-h}
	\vspace{.3em}
\end{center}

We are now ready to formalise \rdss, which intuitively states that compiled programs produce the same traces as their source counterparts with  possibly additional safe actions.
Crucially, \rdss is equivalent to \rdssp (\Cref{thm:rdssp-rdss-eq}), this result implies that our choice for the trace relation is correct; a relation that is too strong or too weak would not let us prove this equivalence.
\begin{definition}[\strong{\rsscdef{}}]\label{def:rdss}\label{def:rssc}
	\begin{align*}
		\vdash \comp{\cdot} : \strong{\rdss} \isdef&\
			\forall\src{P}\in\SR, 
			\ctxt{}, \trat{^{\taintt}}\ldotp
			\text{ if }
			\behavt{\trg{\ctxt{}\hole{\comp{\src{P}}}}} = \trat{^{\taintt}}
			\\
			\text{ then }&\
			\exists \ctxs{}, \tras{^\sigma} \ldotp 
			\behavs{\src{\ctxs{}\hole{P}}} = \tras{^\sigma}
			\text{ and }
			\tras{^\sigma}\rels\trat{^{\taintt}}
	\end{align*}
\end{definition}
\begin{theorem}[\rdssp and \rdss are equivalent]\label{thm:rdssp-rdss-eq}
\begin{align*}
	\forall \comp{\cdot}\ldotp&
	\vdash\comp{\cdot} : \strong{\rdssp{}}
	\iff \vdash\comp{\cdot} : \strong{\rdss{}} 
	\\
	\forall \comp{\cdot}\ldotp&
	\vdash\comp{\cdot} : \weak{\rdssp{}}
	\iff \vdash\comp{\cdot} : \weak{\rdss{}}
\end{align*}
\end{theorem}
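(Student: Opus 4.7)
The plan is to prove each implication of the biconditional separately; the argument for the strong and weak variants is essentially identical since they differ only in the underlying taint-tracking discipline, so I focus on $\strong{\rdssp} \iff \strong{\rdss}$ and the weak case follows the same template.

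For the direction $\rdss \Rightarrow \rdssp$, the argument is direct. Fix a compiler $\comp{\cdot}$ satisfying $\rdss$, a source program $\src{P}$ with $\vdash \src{P} : \rss(\SR)$, and an arbitrary valid target attacker $\ctxt{}$ producing a trace $\trat{^{\taintt}}$ in $\ctxt{}\hole{\comp{\src{P}}}$. First I would apply $\rdss$ to obtain a source attacker $\ctxs{}$ and source trace $\tras{^\sigma}$ with $\behavs{\src{\ctxs{}\hole{P}}} = \tras{^\sigma}$ and $\tras{^\sigma} \rels \trat{^{\taintt}}$; note that $\ctxs{}$ must be a valid source attacker since otherwise $\behavs{\src{\ctxs{}\hole{P}}}$ would not be defined. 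Then by $\vdash \src{P} : \rss(\SR)$ every action in $\tras{^\sigma}$ is tainted $\safeta$. A routine induction on the derivation of $\tras{^\sigma} \rels \trat{^{\taintt}}$ now transfers safety to the target trace: the $\mathrm{Same}$ and $\mathrm{Same\text{-}Heap}$ rules preserve the taint of paired source/target labels, while $\mathrm{Safe}$ and $\mathrm{Safe\text{-}Heap}$ introduce only $\safeta$-tainted labels on the target side.

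For the direction $\rdssp \Rightarrow \rdss$, the plan is to build a trace-driven backtranslation of any target attacker $\ctxt{}$ into a source attacker $\ctxs{}$. Fix $\src{P}$, $\ctxt{}$, and $\trat{^{\taintt}} = \behavt{\ctxt{}\hole{\comp{\src{P}}}}$. First, using \Cref{thm:all-s-rdss} the hypothesis of $\rdssp$ is vacuously discharged, so $\rdssp$ yields that $\comp{\src{P}}$ is $\rss$ and hence every action in $\trat{^{\taintt}}$ is tainted $\safeta$. The backtranslated context $\ctxs{}$ is then synthesised by scanning $\trat{^{\taintt}}$: at each component-entry call action emit a source attacker function that reproduces the same call with the same value, and in between matching call/return boundaries mirror in source the public-heap reads and writes recorded by attacker-owned actions. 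Speculative segments of $\trat{^{\taintt}}$, namely those enclosed between an $\com{\ifl{v}}$ and a $\com{\rollbl}$, occur only inside component code because \Cref{tr:et-sp-if} forbids speculation when the currently executing function is attacker-defined, so these segments need not be reproduced in the source.

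The main obstacle is the backtranslation-correctness lemma: showing that $\src{\ctxs{}\hole{P}}$ genuinely produces a source trace $\tras{^\sigma}$ with $\tras{^\sigma} \rels \trat{^{\taintt}}$. I would establish this by a cross-language simulation in the style of~\cite{rsc-j,PatrignaniASJCP15}, matching non-speculative attacker-controlled segments of $\trat{^{\taintt}}$ to $\tras{^\sigma}$ via the $\mathrm{Same}$ and $\mathrm{Same\text{-}Heap}$ rules of $\rels$ and absorbing speculative segments via the $\mathrm{Safe}$ and $\mathrm{Safe\text{-}Heap}$ rules; the latter step is legal precisely because every target action has already been shown to be $\safeta$-tainted. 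Together with the validity of the synthesised $\ctxs{}$, these facts yield the witness $(\ctxs{}, \tras{^\sigma})$ demanded by $\rdss$, closing the proof.
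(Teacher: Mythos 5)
Your first direction, \strong{\rssc{}} implies \strong{\rssp{}}, is fine and is essentially the paper's own argument (the paper phrases it as a contradiction, but the content --- pushing safety of every source action through the derivation of $\tras{^\sigma}\rels\trat{^{\taintt}}$ via \Cref{tr:tr-rel-same,tr:tr-rel-same-h,tr:tr-rel-safe,tr:tr-rel-safe-h} --- is the same). The converse direction, however, has a genuine gap. The theorem quantifies over \emph{all} compilers $\comp{\cdot}$, and \rssp{} constrains only the \emph{taints} of target actions; it provides no semantic correspondence between $\src{P}$ and $\comp{\src{P}}$. Your plan rests on a backtranslation-correctness lemma proved by a cross-language simulation ``in the style of''~\cite{rsc-j}, but such simulations are established \emph{per compiler}, using the compiler's definition to relate source and target states; they cannot be carried out for an arbitrary, universally quantified $\comp{\cdot}$. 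Concretely: when your synthesised $\ctxs{}$ calls a component function $\src{f}$ with the value read off the target trace, the body of $\src{f}$ in $\src{P}$ runs and emits source heap actions, and for $\tras{^\sigma}\rels\trat{^{\taintt}}$ to hold these must occur verbatim, in order, as a subsequence of $\trat{^{\taintt}}$ (\Cref{tr:tr-rel-same-h}); an arbitrary compiler satisfying \rssp{} can make $\comp{\src{P}}$ touch entirely different addresses, diverge, or return without doing anything. Your proposal discusses only attacker-controlled segments and speculative segments and is silent on precisely these non-speculative component-generated actions, which is where it breaks.

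The workable, compiler-agnostic argument --- the one the paper gives --- exploits the fact that $\rels$ was engineered to be weak enough for this equivalence: once \rssp{} together with \Cref{thm:all-s-rdss} forces every action of $\trat{^{\taintt}}$ to be tainted \trg{\safeta}, the Safe rules can absorb essentially the entire target trace, so the source witness need not simulate $\comp{\src{P}}$ at all. The paper then argues by contradiction that, after the universal quantification over $\ctxs{}$ disposes of all purely structural mismatches, the only remaining obstruction to relatedness would be a \trg{\unta}-tainted target action, which \rssp{} rules out. If you want a constructive witness instead, aim not at a trace-mirroring context but at (essentially) the trivial one: a source attacker that never calls into $\src{P}$ produces a trace related to any all-\trg{\safeta} target trace by \Cref{tr:tr-rel-safe,tr:tr-rel-safe-h} alone.
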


\Cref{def:rssc} requires providing an existentially-quantified source attacker \ctxs{}.
The general proof technique for these criteria is called \emph{backtranslation}~\cite{scsurvey,rhc}, and it can either be attacker-based~\cite{DevriesePP16,nonintfree,NewBA16} or trace-based~\cite{PatrignaniASJCP15,catalinRSC,rsc-j}.
The distinction tells us what quantified element one can use  to build the source attacker \ctxs{}, either the target attacker \ctxt{} or the trace \trat{^{\taintt}} respectively.
In our proofs, we will use an attacker-based backtranslation.

\subsection{Robust Speculative Non-Interference Preservation}\label{sec:rsnip}

Here, we only present a property-ful criterion for the preservation of \rsni (\Cref{def:rsnip}).
The reason is that we only directly prove that compilers do \textit{not} attain \rsnip.
This kind of proof is simple already (\Cref{def:not-rsnip}), and we do not need a property-free criterion.

\begin{definition}[\strong{\snipdef{}}]\label{def:rsnip}
	\begin{align*}
		\vdash \comp{\cdot} : \strong{\rsnip} \isdef&\ 
		\forall\src{P} \in \SR \ldotp 
		\text{ if } \vdash\src{P} : \rsni{}(\SR)
		\text{ then }
		\vdash\comp{\src{P}} : \rsni{}(\TR)
	\end{align*}
\end{definition}
\begin{corollary}[\com{\nvdash \comp{\cdot}:\strong{\rsnip}}]\label{def:not-rsnip}
	\begin{align*}
		\nvdash \comp{\cdot} : \strong{\rsnip} \isdef&\ 
		\exists \src{P} \in \SR\ldotp 
		\text{ } \vdash\src{P} : \rsni{}(\SR)
		\text{ and }
		\nvdash\comp{\src{P}} : \rsni{}(\TR)
	\end{align*}
	Let us now unfold the corollary in order to understand what must be proven to show that a compiler is not \strong{\rsnip{}}.
	The crux is the second clause of the corollary, which gets unfolded to the following.
	Recall that low-equivalent programs simply differ in their private heap, so $\trg{\ctxt{}\hole{\comp{\src{P'}}}}$ is the same as $\trg{\ctxt{}\hole{\comp{\src{P}}}}$ but with a different private heap.
	\begin{align*}
		\nvdash\comp{\src{P}} : \rsni{}(\TR)
		=
		&\
		\exists \ctxt{} . \vdash \ctxt{} : \com{atk}
		\text{ and given } 
		\trg{\ctxt{}\hole{\comp{\src{P'}}}}\loweq \trg{\ctxt{}\hole{\comp{\src{P}}}}
		\\
		\text{ we have }
		&
		\nspecproj{\behavt{\SInit{\trg{\ctxt{}\hole{\comp{\src{P}}}}}}} = \nspecproj{\behavt{\SInitt{\trg{\ctxt{}\hole{\comp{\src{P'}}}}}}}
			\\
			\text{ and }
			&\ 
				\behavt{\SInitt{\trg{\ctxt{}\hole{\comp{\src{P}}}}}} \neq \behavt{\SInitt{\trg{\ctxt{}\hole{\comp{\src{P'}}}}}}
	\end{align*}
	That is, we need to find a  program $\src{P}$ and an attacker $\trg{A}$ that violate \rsni.
	Finding these existentially-quantified program (and attacker) may be hard. %
	Fortunately, failed attempts at proving \rssc often provide hints for how to do this.\looseness=-1 %
	\qed
\end{corollary}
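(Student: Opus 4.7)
The statement is a pure logical unfolding, so my plan is to derive it mechanically from the definitions without any new mathematical content. First I would take \Cref{def:rsnip} as the starting point, namely $\vdash \comp{\cdot} : \strong{\rsnip} \iff \forall \src{P} \in \SR.\ \vdash \src{P} : \rsni(\SR) \Rightarrow \vdash \comp{\src{P}} : \rsni(\TR)$, and simply negate both sides. Negating a universally quantified implication yields an existential conjunction, which gives exactly the first displayed statement: there exists a source program $\src{P}$ that is $\rsni(\SR)$ but whose compilation $\comp{\src{P}}$ fails $\rsni(\TR)$.

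Next, for the ``unfolded'' form at the bottom of the corollary, I would apply the same purely logical procedure to \Cref{def:rsni} (robust \sni) and then to \Cref{def:sni} (\sni). The negation of $\vdash \comp{\src{P}} : \rsni(\TR)$, i.e.\ of $\forall \ctxt{}.\ \vdash \ctxt{} : \com{atk} \Rightarrow \vdash \ctxt{}\hole{\comp{\src{P}}} : \sni$, becomes the existence of a valid attacker $\ctxt{}$ such that $\ctxt{}\hole{\comp{\src{P}}}$ is not \sni. Then I would expand ``not \sni'' using \Cref{def:sni}: negating the inner ``$\com{W'}\loweq \com{W}$ and non-speculative projections agree imply full behaviours agree'' produces the existence of a low-equivalent witness program whose non-speculative projection matches but whose full trace differs. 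Instantiating $\com{W}$ with $\trg{\ctxt{}\hole{\comp{\src{P}}}}$ and calling the witness $\trg{\ctxt{}\hole{\comp{\src{P'}}}}$ reproduces verbatim the last three displayed lines.

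There is essentially no obstacle here, since each step is just a schematic application of the equivalences $\neg(\forall x.\ \phi \Rightarrow \psi) \equiv \exists x.\ \phi \wedge \neg\psi$ and $\neg(\phi_1 \wedge \phi_2 \Rightarrow \psi) \equiv \phi_1 \wedge \phi_2 \wedge \neg\psi$. The only point that requires a sentence of prose is to justify presenting the low-equivalent witness as ``$\trg{\ctxt{}\hole{\comp{\src{P'}}}}$'' rather than as an arbitrary low-equivalent $\com{W'}$: this is justified because low-equivalence only allows the private heap to differ, and the private heap belongs to the component rather than to the attacker, so any low-equivalent program must share the same attacker context $\ctxt{}$ and the same compiled public-facing structure of $\comp{\src{P}}$, differing only in what corresponds to a source program $\src{P'}$ whose private heap differs from that of $\src{P}$. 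With this observation, the second displayed equation is immediate and the corollary is complete.
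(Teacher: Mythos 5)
Your proposal is correct and follows essentially the same route as the paper: the corollary is just the contrapositive/negation of \Cref{def:rsnip} followed by unfolding \Cref{def:rsni} and \Cref{def:sni} via $\neg(\forall x.\ \phi \Rightarrow \psi) \equiv \exists x.\ \phi \wedge \neg\psi$, and your justification for writing the low-equivalent witness as $\trg{\ctxt{}\hole{\comp{\src{P'}}}}$ (low-equivalence only permits the component's private heap to differ, so the attacker context is unchanged) is exactly the remark the paper makes.
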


We remark that the insecurity part of our methodology is used to show its completeness w.r.t. vulnerability to Spectre v1 attacks. 
Unfortunately, one still has to manually come up with the insecure counterexample and verify that it is not \rsni.

\subsection{A Methodology for Provably-(In)Secure Countermeasures}\label{sec:methodology}

To prevent speculative leaks, secure compilers should produce target programs that satisfy \rsni (cf. \Cref{sec:rsni}) whereas insecure compilers will produce some programs that fail to achieve \rsni.
In this section, we show how to combine the results from the previous sections to derive exactly these facts about compilers;
we depict this with the two chains of implications below.
The first one (1) lists the assumptions (black dashed lines) and logical steps (theorem-annotated implications) to conclude compiler security while the second one (2) lists assumptions and logical steps for compiler insecurity.
For simplicity, the diagram focuses on security definitions and compiler criteria for  \src{L} and \trg{T}.
There are similar chains of implications for \src{\wSR} and \trg{\wTR} that use \Cref{thm:rss-overapp-rsni-weak} instead of \Cref{thm:rss-overapp-rsni}.
\begin{center}
		\tikzpic{
			\node[draw, rounded corners, dashed](srss){$\vdash \src{P} : \rss{}(\src{L})$ };
			\node[below = of srss](trss){$\vdash \comp{\src{P}} : \rss{}(\trg{T}) $};
			\node[above = of srss.-150](n1){$\forall \src{P} \in \SR$};
			\node[right = .6 of trss](trsni){$\vdash \comp{\src{P}} : \rsni{}(\trg{\TR})$};

			\draw[=,double,-implies,double equal sign distance] (srss) to node[midway, right, font = \footnotesize, xshift=.5em](i1){ $\vdash \comp{\cdot} : \strong{\rssp}$ } (trss);
			\draw[=,double,-implies,double equal sign distance] (trss) to node[midway, below, font = \scriptsize,yshift=-.3em](i2){\Cref{thm:rss-overapp-rsni}} (trsni);

			\node[draw, rounded corners, dashed, above = of i1.30, font = \footnotesize](i0){ $\vdash \comp{\cdot} : \strong{\rssc}$ };
			\draw[=,double,-implies,double equal sign distance] (i0) to node[midway, right, font = \scriptsize](i){ \Cref{thm:rdssp-rdss-eq} } (i1.30);

			\node[draw, rounded corners, dashed, right = 3.5 of srss](srss2){$\vdash \src{P} : \rsni{}(\SR)$};
			\node[above = of srss2.-150,xshift=1em](n2){$\exists \src{P} \in \SR$};
			\node[below = of srss2,yshift=-.5em](trss2){${\nvdash} \comp{\src{P}} : \rsni{}(\TR)$ };

			\draw[=,double,-implies,double equal sign distance] (srss2.-140) to node[draw, rounded corners, dashed, midway, right, font = \footnotesize, xshift=.5em](i12){ $\nvdash \comp{\cdot} : \strong{\rsnip}$ } (srss2.-140|-trss2.90);

			\node[ left = .05 of trss2, yshift = -1.8em] (ss1) {};
			\node[yshift = 1em]at (ss1 |- n2) (ss2) {};
			\draw[-,draw] (ss1) to node[pos=1,left, font = \scriptsize, align = left] (o1) {(1)} (ss2);
			\draw[] (ss2) to node[pos=0,right,font=\scriptsize,align=left] (o2) {(2)} (ss1);

		}
\end{center}
To show security (1), we need to prove that any compiled component is \rsni in the target language.
Rather than directly reasoning about \rsni, we rely on \rss, which over-approximates \rsni (cf. \Cref{thm:rss-overapp-rsni}).
This significantly simplifies our security proofs since it allows us to reason about single traces rather than pairs of traces.
Thus, it suffices to show that any compiled component is \rss in the target.
This can be obtained by (i) an \strong{\rssp{}} compiler so long as (ii) any \src{P} is \rss in the source.
By \Cref{thm:rdssp-rdss-eq}, for point (i) it is sufficient to show that the compiler is \strong{\rssc{}}.
Point (ii) holds for any \src{P} (\Cref{thm:all-s-rdss}).
This direction highlights how \rss really is a working security definition that simplifies proving the more precise, semantic security definition which is \rsni.

To show insecurity (2), we need to prove that there exists a compiled component that is \emph{not} \rsni in the target language.
For this, we show (A) that the compiler is not \strong{\rsnip{}} given that (B) the source component \src{P} was \rsni in the source.
To show (A), we follow \Cref{def:not-rsnip}, whereas point (B) holds for any  \src{P}  (\Cref{thm:all-s-rdss}).

Our security criteria, instantiated for the strong (\src{L}-\trg{T}) and weak (\src{\wSR}-\trg{\wTR}) languages, provide a way of characterizing the security guarantees of any countermeasure \comp{\cdot}, which is what we do next.
In particular,  showing that \comp{\cdot} is \strong{\rssc{}} ensures that compiled code has no speculative leaks.
Similarly, showing that \comp{\cdot} is \weak{\rssc{}} (\emph{and not} \strong{\rsnip{}}) ensures that compiled code does not leak information about speculatively-accessed data, i.e., it would prevent Spectre attacks.
Finally, showing that \comp{\cdot} is not \weak{\rsnip{}} implies that compiled code leaks  speculatively accessed data, like in Spectre attacks. %

\paragraph{Preservation or Enforcement?}
\rsnip and \rssp focus on  \emph{preserving} the related security property.
Since their premise is always satisfied, we could also state them  in terms of \emph{enforcing}  \rsni and \rss over compiled programs.
We choose against this to be able to reuse established compiler theory~\cite{Leroy09b}, and since it is unclear how to prove \Cref{thm:rdssp-rdss-eq} with enforcement statements.
\section{Countermeasures Analysis %
}\label{sec:frame-instances-insec} %

In this section, we characterise the security guarantees of the main Spectre v1 countermeasures implemented by compiler vendors: insertion of speculation barriers (\texttt{lfence}) and speculative load hardening (\texttt{slh}).
For this, we develop formal models that capture the key aspects of these countermeasures as implemented by the Microsoft Visual C++ compiler~\cite{microsoft} (MSVC, \Cref{sec:countermeasures:microsoft}), the Intel C++ compiler~\cite{Intel-compiler} (ICC, \Cref{sec:countermeasures:intel}), and the Clang compiler (\Cref{sec:countermeasures:clang}), and we analyze their guarantees using our secure compilation criteria.
We continue the section with an overview of our proofs (\Cref{sec:countermeasures:proving}).
We conclude by discussing our analysis' results (\Cref{sec:countermeasures:summary}). %
For space constraints, compiled snippets, their formalisation, and full security proofs can be found in~\cite{guarnieri2019exorcising}.

\subsection{MSVC is Insecure}\label{sec:countermeasures:microsoft} 
\lstset{language=Asm}
Inserting speculation barriers---the \lstinline{lfence} x86 instruction---after branch instructions is a simple  countermeasure against Spectre v1~\cite{Intel,Intel-compiler,microsoft}. %
This instruction stops speculative execution at the price of significant performance overhead.

MSVC implements a countermeasure that tries to minimize the number of \lstinline{lfence}s by selectively determining which branches to patch~\cite{microsoft}.\footnote{The countermeasure can be activated with the \texttt{/Qspectre} flag.} 
However, MSVC fails in inserting some necessary \lstinline{lfence}s, thereby producing  insecure code that is not  \rsni{}(\trg{\wTR}) and that is vulnerable to Spectre-style attacks.

\lstset{language=Java}
To show this, we follow \Cref{def:not-rsnip} and provide a program that is \rsni{}(\src{\wSR}) and its compilation  is not \rsni{}(\trg{\wTR}).
The program we consider, which is \rsni{}(\src{\wSR}) (\Cref{thm:all-s-rdss}), is given in \Cref{lis:spectre-10-listing}.
\begin{lstlisting}[mathescape,label=lis:spectre-10-listing,caption={A variant of the classic Spectre v1 snippet (Example~10 from~\cite{kocher2018examples}).}] 
void get (int y) 
	if (y < size) then
		if (A[y] == 0) then
			temp = B[0];
\end{lstlisting}
\lstset{language=Asm}
As shown in~\cite{kocher2018examples,spectector}, MSVC fails in injecting an \lstinline{lfence} after the first branch instruction.
As a result, the compiled target program is identical to \Cref{lis:spectre-10-listing}, and it  speculatively leaks  whether \lstinline{A[y]} is \lstinline{0} through the branch statement in line 3, i.e.,  it violates \rsni{}(\trg{\wTR}).
We refer to~\cite{kocher2018examples,spectector} for additional examples of MSVC's insecurity.

\subsection{ICC is Secure} %
\label{sec:countermeasures:intel}

The Intel C++ compiler also implements a countermeasure that inserts \lstinline{lfence}s after each branch instruction~\cite{Intel-compiler}.%
\footnote{The countermeasure can be activated with flag: \texttt{-mconditional-branch=all-fix}}

We model this countermeasure with \complfence{\cdot}, a homomorphic compiler that takes a component in \SR and translates all of its subparts to \TR.
Its key feature is inserting an \trg{\lfence} statement at the beginning of every \trg{then} and \trg{else} branch of compiled code.  
All other statements are left unmodified by the compiler.
\begin{align*}
	\complfence{ \ifzte{e}{s}{s'} }\!\! &= \!
		\trg{ 
			\ifztet{
				\complfence{e}\!
			}{ 
				\!\trg{\{\lfence; \complfence{s} \}}\!
			}{
				\!\trg{\{\lfence; \complfence{s'}\}}\!
			}
		}
\end{align*}
It should come at no surprise that \complfence{\cdot} is \strong{\rssc{}} (\Cref{thm:lfence-comp-rdss}).
In \TR, the only source of speculation are branches (\Cref{tr:et-sp-if}) but any branch, whether it evaluates to true or false, will execute an \trg{\lfence} (\Cref{tr:et-sp-lf}), triggering a rollback (\Cref{tr:et-sp-rb}).
Since compiled code performs no action during speculation, it can only perform actions when the program counter is tainted as \trg{\safeta}, which makes all actions \trg{\safeta}.
These actions are easy to relate to their source-level counterparts since they are generated according to the non-speculative semantics.
\begin{theorem}[ICC is secure for \SR-\TR]\label{thm:lfence-comp-rdss}
$		\vdash \complfence{\cdot} : \strong{\rssc} $
\end{theorem}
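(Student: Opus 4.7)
The plan is to prove \strong{\rssc{}} directly, then transfer the result to \strong{\rssp{}} via \Cref{thm:rdssp-rdss-eq}. Given a target attacker \ctxt{} and any trace $\trat{^{\taintt}}$ produced by $\trg{\ctxt{}\hole{\complfence{\src{P}}}}$, we must exhibit a source attacker \ctxs{} and a trace $\tras{^{\sigma}}$ such that $\src{\ctxs{}\hole{P}}$ emits $\tras{^{\sigma}}$ with $\tras{^{\sigma}} \rels \trat{^{\taintt}}$. Because \SR and \TR share the same syntax and speculation is disabled inside attacker code by the side condition $\trg{f}\notin\trg{\OB{I}}$ in \Cref{tr:et-sp-if}, the natural backtranslation is to take \ctxs{} to be \ctxt{} verbatim (with any \trg{\lfence} acting as a no-op in \SR). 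The core of the proof is then a step-wise simulation between the source execution of $\src{\ctxs{}\hole{P}}$ and the target execution of $\trg{\ctxt{}\hole{\complfence{\src{P}}}}$.

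The key invariant is that, inside compiled code, speculation is completely inert. Whenever \Cref{tr:et-sp-if} fires on a branch inside $\complfence{\src{P}}$, the head of the wrong-branch continuation is, by construction of $\complfence{\cdot}$, of the form $\trg{\lfence};\ldots$; the very next step is therefore \Cref{tr:et-sp-lf}, which zeroes the speculation window and immediately triggers \Cref{tr:et-sp-rb}, producing a $\rollbl$ and popping the speculation instance. Consequently no component statement ever executes with a non-bottom speculation window; the program-counter taint remains \safeta throughout the component, and every action recorded on the trace by the component is tainted \safeta. This already yields speculative safety of compiled code, and it also pins down the shape of the target trace: it contains exactly the same actions, in the same order, as the corresponding source trace, plus one additional $\rollbl$ (tainted \safeta) after each component-level branch.

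Closing the trace relation is then routine. Each non-speculative action on the target side is matched with its source counterpart via \Cref{tr:tr-rel-same} or \Cref{tr:tr-rel-same-h}, and each inserted $\trg{\rollbl^{\safeta}}$ is absorbed using \Cref{tr:tr-rel-safe-h}. The control-flow labels at the component-attacker boundary ($\clh{f}{v}{}$, $\cbh{f}{v}{}$, $\rth{v}{}$, $\rbh{v}{}$) coincide on both sides because $\complfence{\cdot}$ is homomorphic on call and return constructs, so switching between \ctxs{} and \src{P} in the source mirrors exactly the switches between \ctxt{} and $\complfence{\src{P}}$ in the target.

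The main obstacle is formalising the invariant carefully enough to cover every case of the speculative semantics. In particular, one must verify that (i) no nested speculation can arise from two component branches, which is ruled out by the immediate rollback triggered by the inserted $\trg{\lfence}$; (ii) speculative execution never crosses into the attacker, which is guaranteed by the $\trg{f}\notin\trg{\OB{I}}$ side condition; and (iii) the only label emitted during a component speculative segment is $\rollbl$, since the $\trg{\lfence}$ itself emits $\epsilon$ and \Cref{tr:eus-tr-sin} filters away any microarchitectural action produced by attacker-controlled frames. Once these details are pinned down, the simulation proceeds by a straightforward induction on the length of the target trace.
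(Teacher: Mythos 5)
Your proposal is correct and follows essentially the same route as the paper: a homomorphic backtranslation of the target attacker (the paper's $\backtrfencec{\cdot}$), a cross-language lock-step simulation whose only nontrivial case is the compiled branch, and the key observation that the $\trg{\lfence}$ at the head of each branch zeroes the speculation window so the only speculatively emitted action is a $\safeta$-tainted $\trg{\rollbl}$, absorbed by the safe-action rules of the trace relation. The one detail to tighten is that \ctxs{} cannot literally be \ctxt{} ``verbatim,'' since \SR lacks $\trg{\lfence}$ and \texttt{cmov}; both must be backtranslated (to $\skips$ and to a source conditional, respectively), exactly as the paper's $\backtrfencec{\cdot}$ does.
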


\subsection{Speculative Load Hardening}\label{sec:countermeasures:clang}

Clang implements a countermeasure called speculative load hardening~\cite{spec-hard} (SLH) that works as follows:%
\footnote{
	SLH can be activated with flag: \texttt{-mllvm -x86-speculative-load-hardening}
}

\begin{asparaitem}
\item Compiled code keeps track of a \emph{predicate bit} that records whether the processor is mis-speculating (predicate bit set to \trg{1}) or not (predicate bit set to \trg{0}).
This is done by replicating the behaviour of all branch instructions using branch-less \lstinline{cmov} instructions, which do not trigger speculation.
SLH-compiled code tracks the predicate bit inter-procedurally by storing it into the most-significant bits of the stack pointer register, which are always unused.
Note that when all speculative transactions have been rolled back, the predicate bit is reset to \trg{0} by the rollback capabilities of the processor.

\item Compiled code uses the predicate bit to initialise a mask whose usage is detailed below.
At the beginning of a function, SLH-compiled code retrieves the predicate bit from the stack and uses it to initialize a mask either to \trg{0xF..F} if predicate bit is \trg{1} or to \trg{0x0..0} otherwise.
During the computation, SLH-compiled code uses \lstinline{cmov} instructions to conditionally update the mask and preserve the invariant that $\mathit{mask} = \trg{0xF..F}$ if code is mis-speculating and $\mathit{mask} = \trg{0x0..0}$  otherwise.
Before returning from a function, SLH-compiled code pushes the most-significant bit of the current mask to the stack; thereby preserving the predicate bit.

\item All inputs to control-flow and store instructions are hardened by masking their values with $\mathit{mask}$ (i.e., by \lstinline{or}-ing their value with $\mathit{mask}$).
That is, whenever code is mis-speculating (i.e., $\mathit{mask} = \trg{0xF..F}$) the inputs to these statements are ``F-ed'' to $\trg{0xF..F}$, otherwise they are left unchanged.
This prevents speculative leaks through control-flow and store statements.

\item The outputs of memory loads instructions are hardened by \lstinline{or}-ing their value with $\mathit{mask}$.
So, when code is mis-speculating, the result of load instructions is ``F-ed'' to \trg{0xF..F}.
This prevents leaks of speculatively-accessed memory locations.
Inputs to load instructions, however, are \emph{not} masked.
\end{asparaitem}

In the following, we analyse the security guarantees of SLH.

\subsubsection{SLH is not $\strong{\rsnip}$}
We show that SLH is not $\strong{\rsnip}$, i.e., it does not preserve (strong) speculative non-interference and thus it allows speculative leaks of data retrieved non-speculatively.

Following \Cref{def:not-rsnip}, we do this by providing a program that is $\rsni{}(\SR)$ and that is compiled to a program that is not $\rsni{}(\TR)$.
The program in \Cref{lis:spectre-variant-listing} differs from \Cref{lis:spectrev1} in that the first memory access is performed non-speculatively (line~2).
\lstset{language=Java}
\begin{lstlisting}[mathescape,label=lis:spectre-variant-listing,caption={Another variant of the classic Spectre v1 snippet.}] 
void get (int y) 
	x = A[y];
	if (y < size) then
		temp = B[x];
\end{lstlisting}
\lstset{language=Asm}
In its compilation, SLH hardens the value of \trg{\lstinline{A[y]}} using the mask retrieved from the stack pointer.
When the \lstinline{get} function is invoked non-speculatively, the mask is set to \trg{0x0..0} and \trg{\lstinline{A[y]}} is not masked.
Thus, speculatively-executing the load in (the compiled counterpart of) line 4  leaks the value of \trg{\lstinline{A[y]}}, which might  differ for  low-equivalent states, and violates $\rsni{}(\TR)$.
\subsubsection{SLH is $\weak{\rdss}$}\label{sec:slh:weak:rssc}
We now show that SLH is $\weak{\rdss}$, that is, it 
prevents leaks of speculatively-accessed data.

We formalise SLH using the $\compslh{\cdot }$ compiler, whose most interesting cases are given in the top of \Cref{fig:sslh}.
The compiler takes components in \src{\wSR} and outputs compiled code in \trg{\wTR}. 
The compiler keeps track of the predicate bit in a cross-procedural way, masks inputs to control-flow and store instructions, and masks outputs of load instructions as described before.

\myfigonecol{
	\begin{gather*}
	\begin{aligned}
		&
		\begin{aligned}
			\compslh{ H ; \OB{F} ; \OB{I}} &= \trg{  \compslh{H}\cup(-1\mapsto 1 : \safeta) ; \compslh{\OB{F}} ; \compslh{\OB{I}}}
			&
			\compslh{H , -n\mapsto v:\unta} &= \trg{\compslh{H}, -\compslh{n}-1\mapsto\compslh{v}:\unta}
		\end{aligned}
	\\
	\compslh{ 
			\ifztes{
				\src{e}
			}{
				\src{s}
			}{\src{s'}}
	} &= \trg{ 
			\letint{\trg{x_f}}{\compslh{e}}{
				\letreadpt{\trg{\predState}}{\trg{-1}}{
					\cmovet{\trg{x_f}}{\trg{0}}{\trg{\predState}}{
						\ifztet{\trg{x_f}
							}{ 
							\asgnpt{\trg{-1}}{\trg{\predState \vee \neg x_f}} ;
							\compslh{s}
						}{
							\asgnpt{\trg{-1}}{\trg{\predState \vee x_f}} ;
							\compslh{s'}
						}
					}
				}
			}
	} 
	\\
	\compslh{ \asgnp{e}{e'} } &= 
		\trg{
				\letint{\trg{x_f}}{\compslh{e}\trg{+1}}{
					\letint{\trg{x_f'}}{\compslh{e'}}{
						\letreadpt{\trg{\predState}}{\trg{-1}}{
							\cmovet{\trg{x_f}}{\trg{0}}{\trg{\predState}}{ 
								\cmovet{\trg{x_f'}}{\trg{0}}{\trg{\predState}}{ 
									\trg{\asgnp{x_f}{x_f'}}
								}
							}
						}
					}
				}
		}
	\\
	\compslh{
			\letreadps{
				\src{x}
			}{
				\src{e}
			}{
				\src{s}
			}
	} &= 
		\trg{
				\letint{\trg{x_f}}{\compslh{e}\trg{+1}}{
					\letreadpt{\trg{\predState}}{\trg{-1}}{
						\letreadpt{\trg{x}}{\trg{x_f}}{
							\cmovet{\trg{x}}{\trg{0}}{\trg{\predState}}{
								\compslh{s}
							}
						}
					}	
				}	
		} 
	\end{aligned}
\\
	\rule{\textwidth}{0.4pt}
\\
	\begin{aligned}
	\compsslh{
			\letreadps{
				\src{x}
			}{
				\src{e}
			}{
				\src{s}
			}
	} &= 
		\trg{
				\letint{\trg{x_f}}{\compsslh{e}\trg{+1}}{
					\letreadpt{\trg{\predState}}{\trg{-1}}{
						\cmovet{\trg{x_f}}{\trg{0}}{\trg{\predState}}{
							\letreadpt{\trg{x}}{\trg{x_f}}{\compsslh{s}}
						}
					}	
				}	
		} 
	\end{aligned}
\end{gather*}
}{sslh}{Key bits of the SLH compiler \compslh{\cdot} (above). The SSLH compiler $\compsslh{\cdot}$ (below) differs in the compilation of memory reads.}

Since the stack pointer is not accessible from an attacker residing in another process, $\compslh{\cdot }$ tracks the predicate bit  in the first location of the private heap which attackers cannot access.
So location \trg{-1} is initialised to \trg{1} (false) and updated to \trg{0} whenever we are speculating. %
Compiled code must update the predicate bit right after the \trg{then} and \trg{else} branches (statements \trg{\asgnp{-1}{\cdots}}).
Since location \trg{-1} is reserved for the predicate bit, all private memory accesses and the private heap are shifted by~1.

Several statements may leak information to the attacker: calling attacker functions, reading and writing the public and private heap,  and branching.
For function calls, memory writes, and branch instructions, $\compslh{\cdot }$ masks the input to these statement.
That is, we evaluate the sub-expressions used in those statements and store them in auxiliary variables (called \trg{x_f}).
Then, we look up the predicate bit (via statement \trg{\letreadpt{{\predState}}{{-1}}{\cdots}}) and store it in variable \trg{\predState}.
Finally, using the conditional assignment, we set the result of those expressions to \trg{0} (tainted \trg{\safeta} as all constants) if the predicate bit is \trg{0} (true).
In contrast, for memory reads, $\compslh{\cdot }$ masks the output of these statement
based on the predicate bit stored in \trg{\predState}.

As stated in \Cref{thm:slh-comp-rdss}, programs compiled with SLH are \rss{}(\wTR) and, therefore, \rsni{}(\wTR) (\Cref{thm:rss-overapp-rsni}).
Hence, they are free of leaks of speculatively-accessed data, which is sufficient to stop Spectre-style leaks like those in \Cref{lis:spectrev1}.

\begin{theorem}[SLH is secure for \wSR-\wTR]\label{thm:slh-comp-rdss}
$		\vdash \compslh{\cdot} : \weak{\rdss}$
\end{theorem}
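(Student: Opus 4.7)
The plan is to apply \Cref{thm:rdssp-rdss-eq} and prove the equivalent property-free criterion $\vdash \compslh{\cdot} : \weak{\rdss}$ directly. Unfolding its definition, for every source component $\src{P} \in \src{\wSR}$, every valid target attacker $\ctxt{}$, and every trace $\trat{}$ of $\trg{\ctxt{}\hole{\compslh{\src{P}}}}$, I must exhibit a source attacker $\ctxs{}$ and a source trace $\tras{}$ of $\src{\ctxs{}\hole{P}}$ with $\tras{}\rels\trat{}$.

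First I build $\ctxs{}$ via an attacker-based backtranslation $\backtr{\ctxt{}}$ that is essentially syntactic: since valid attackers do not touch the private heap and do not use SLH-specific constructs (the predicate cell at $\trg{-1}$, the masking \texttt{cmov}s), the translation just undoes the private-heap shift that $\compslh{\cdot}$ introduces in the component. This guarantees that $\ctxs{}$ performs the same attacker-visible interactions with the component's interface as $\ctxt{}$.

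The core of the proof is a cross-language simulation relation $R$ between \src{\wSR}-states of $\src{\ctxs{}\hole{P}}$ and \trg{\wTR}-speculative states of $\trg{\ctxt{}\hole{\compslh{\src{P}}}}$. The relation tracks: (i) the heap-shift invariant, mapping source private location $\src{-n}$ to target $\trg{-n-1}$; (ii) a predicate-bit invariant, ensuring the value at $\trg{-1}$ correctly mirrors whether a mis-speculated frame sits on top of the target speculation stack; and (iii) state synchronisation, aligning source and target step-for-step outside mis-speculation and freezing the source while the target is inside a mis-speculated frame. By induction on target reductions I show that non-speculative target steps either are internal SLH bookkeeping (reading or writing $\trg{-1}$, masking a value with \texttt{cmov}) emitting only $\safeta$-tagged actions absorbed by \Cref{tr:tr-rel-safe-h}, or they mirror source steps producing layout-equivalent actions related by \Cref{tr:tr-rel-same} or \Cref{tr:tr-rel-same-h}. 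Rollbacks emit $\trg{\rollbl}$ (tagged $\safeta$ by construction) and pop the speculation stack, re-establishing $R$ with the unchanged source state.

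The main obstacle is the mis-speculation-safety invariant: every action emitted while the target is inside a mis-speculated frame must be tagged $\safeta$, so that \Cref{tr:tr-rel-safe} and \Cref{tr:tr-rel-safe-h} absorb it. For branches and stores this follows directly from the \texttt{cmov} masking, which overwrites the condition or the written address with the constant $\trg{0}$ (tainted $\safeta$) whenever the predicate bit is set, so the emitted action carries a $\safeta$ content taint. Loads are the delicate case because $\compslh{\cdot}$ hardens the loaded value but not the load address. Here I rely on the weak read rule \Cref{tr:tus-rd-com-p-weak}, combined with an auxiliary invariant of $R$ stating that all component-computed load addresses are derived only from $\safeta$-tainted data (attacker arguments, constants, and outputs of previously masked loads); this ensures the resulting labels collapse to $\safeta$ inside mis-speculation. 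Proving that this address-safety invariant is preserved by every compiler clause---in particular, that the \texttt{cmov} masking of prior load outputs prevents $\unta$-tainted values from flowing into a subsequent load's address computation---is where most of the technical effort lies, and is precisely what separates the $\weak{\rdss}$ guarantee of $\compslh{\cdot}$ from the stronger $\strong{\rssp}$ guarantee of its variant $\compsslh{\cdot}$.
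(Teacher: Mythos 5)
Your proposal is correct and follows essentially the same route as the paper: an attacker-based homomorphic backtranslation, a cross-language state relation tracking the heap shift and the predicate bit at $\trg{-1}$, a simulation in which SLH bookkeeping steps are absorbed as $\safeta$-tagged actions, and a speculation-safety invariant guaranteeing that every label emitted inside a mis-speculated frame is $\safeta$ and hence relates to the empty source trace. The paper states your address-safety invariant in a slightly more uniform way---all bindings in speculating frames hold $\safeta$-tainted values, which is inductive because data loaded non-speculatively from the private heap is already $\safeta$ under the weak taint-tracking and data loaded speculatively is immediately masked to the constant $\trg{0}$---but this is the same argument.
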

$\compslh{\cdot}$ is $\weak{\rdss}$ 
for two reasons.
First, location \trg{-1} (and thus variable \trg{pr} where its contents are loaded) always correctly tracks whether speculation is ongoing or not.
This holds because location \trg{-1} and \trg{pr} cannot be tampered by the attacker, the compiler initializes \trg{-1} correctly, and the assignments right after the branches correctly update location \trg{-1} (via the negation of the guard \trg{x_f}).
Second, whenever speculation is happening, the result of load operations is set to a constant \trg{0} whose taint is \trg{\safeta}. %
So, computations happening during speculation either depend on data loaded non-speculatively, which are tainted as \trg{\safeta} by the taint-tracking of \trg{\wTR}, or on masked values, which are also tainted \trg{\safeta}.
Speculative actions are tainted with glb ($\lub$) of data taint (\trg{\safeta}) and pc taint (\trg{\unta}). 
Since $\safeta\lub\unta=\safeta$ (see \Cref{sec:rss}), speculative actions are tainted \trg{\safeta}, satisfying $\rss{}(\trg{\wTR})$.

\subsubsection{Making SLH More Secure}
We now show how to modify SLH to prevent \textit{all} speculative leaks. 
We do so by introducing \textit{strong SLH} (SSLH for short) that differs from  standard SLH in that it masks the input (rather than the output) of memory read operations 
(as such, we expect an implementation of SSLH to have a small overhead caused by the newly introduced data dependencies that might delay some masked loads).
We model SSLH using the $\compsslh{\cdot}$ compiler that  takes components in \src{L} and outputs compiled code in \trg{T}.
$\compsslh{\cdot}$ differs from $\compslh{\cdot}$ in how memory reads are compiled (\Cref{fig:sslh}).
The compiler  masks the input of memory loads by evaluating the sub-expressions and storing them in auxiliary variables (called \trg{x_f}), retrieving the predicate bit and storing it in variable \trg{\predState}, conditionally masking the value of \trg{x_f}, and, finally, performing the memory access using \trg{x_f} as address.

As stated in \Cref{thm:sslh-comp-rdss}, programs compiled using SSLH are \rss{}(\TR) and, thanks to \Cref{thm:rss-overapp-rsni}, \rsni{}(\TR).
Therefore, they are free of all speculative leaks.

\begin{theorem}[SSLH is secure for \SR-\TR]\label{thm:sslh-comp-rdss}
$		\vdash \compsslh{\cdot} : \strong{\rdss}$
\end{theorem}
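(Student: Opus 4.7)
The plan is to establish \strong{\rdss} via its equivalence with the property-ful criterion \strong{\rssp} (\Cref{thm:rdssp-rdss-eq}) and then exhibit an attacker-based backtranslation that turns any target attacker \ctxt{} and target trace \trat{^{\taintt}} produced by \trg{\ctxt{}\hole{\compsslh{\src{P}}}} into a source attacker \ctxs{} and source trace \tras{^\sigma} with $\tras{^\sigma} \rels \trat{^{\taintt}}$. Since \compsslh{\cdot} is homomorphic outside its speculation-related instrumentation, \ctxs{} can essentially mirror \ctxt{} at the source level, following the same strategy used in the proof of \Cref{thm:slh-comp-rdss}. The heart of the argument is then to show that the extra target actions introduced by speculative execution are all tainted \trg{\safeta}, so that once non-speculative target actions are aligned with source actions, the remaining target actions are absorbed by the safe-action rules of the trace relation.

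The core of the proof is a simulation between the \SR-level execution of \src{\ctxs{}\hole{P}} and the \TR-level execution of \trg{\ctxt{}\hole{\compsslh{\src{P}}}}, maintained by two key invariants. First, the predicate bit stored at target location \trg{-1} faithfully records whether the top of the speculation stack is a non-speculative instance or a mis-speculated one: this holds because \trg{-1} lies in the private heap and is inaccessible to the attacker (by \trg{\vdash \ctxt{} : \com{atk}}), the compiler initialises it appropriately, and the \trg{\asgnpt{-1}{\cdot}} updates inserted at the head of each compiled branch correctly toggle it. Second, whenever speculation is ongoing, every operand feeding into a branch guard, a store address, or---crucially for SSLH---a load address is masked to the constant \trg{0} by the \trg{cmov} instructions inserted by \compsslh{\cdot}. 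Given these invariants, the taint analysis of the emitted actions is immediate: non-speculative actions occur with $\trg{\sigma_{pc}} = \trg{\safeta}$ and are therefore tainted \trg{\safeta}; speculative actions occur with $\trg{\sigma_{pc}} = \trg{\unta}$ but their observable operands are safe constants (taint \trg{\safeta}), so by the taint-tracking rules of \TR the action taint collapses to $\trg{\safeta \lub \unta = \safeta}$. This is precisely where SSLH strengthens SLH: masking the \emph{input} of loads, rather than the output, ensures the address exposed on the trace during mis-speculation is a safe constant, which is necessary under the strong taint discipline of \TR, where any value freshly read from the private heap is always tainted \trg{\unta}.

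Having established that every action emitted by \trg{\ctxt{}\hole{\compsslh{\src{P}}}} is tainted \trg{\safeta}, the trace relation follows: non-speculative target actions are matched one-to-one against \src{\ctxs{}\hole{P}}'s source actions via \Cref{tr:tr-rel-same,tr:tr-rel-same-h}, and each speculative sub-sequence enclosed between a \trg{\ifl{v}} and the corresponding \trg{\rollbl} is absorbed by repeated applications of \Cref{tr:tr-rel-safe,tr:tr-rel-safe-h}. I expect the main obstacle to be the bookkeeping around cross-boundary transitions: proving invariant (i) is preserved across calls into and returns from attacker code---which cannot touch \trg{-1} but executes non-speculatively even while speculation may be live in an enclosing component frame---and ensuring that the rollback rule (\Cref{tr:et-sp-rb}) is synchronised with the reset of the predicate bit so that the invariants are re-established as each speculation instance is popped from the stack.
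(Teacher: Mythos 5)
Your proposal is correct and follows essentially the same route as the paper: an attacker-based homomorphic backtranslation, a cross-language simulation whose key invariants are (i) that location \trg{-1} faithfully tracks whether speculation is live and (ii) that during mis-speculation every trace-visible operand (including, crucially for SSLH, load \emph{addresses}) is masked to the safe constant \trg{0}, so every speculative action is tainted \trg{\safeta} and absorbed by the safe-action rules of the trace relation while non-speculative actions are matched one-to-one. The only cosmetic quibble is the opening framing — the backtranslation argument \emph{is} the direct proof of \strong{\rssc{}}, so the appeal to \Cref{thm:rdssp-rdss-eq} is not needed for this theorem (it is used afterwards to transfer the result to \strong{\rssp{}}).
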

$\compsslh{\cdot}$ satisfies $\strong{\rdss}$ for two reasons.  
First, the compiler correctly tracks whether speculation is ongoing (cf. \oldS\ref{sec:slh:weak:rssc}).
Second, whenever speculation is happening, the result of any possibly-leaking expression is set to a constant \trg{0} whose taint is  \trg{\safeta}.
That is, labels during speculation are tainted as  \trg{\safeta}, and $\rss{}(\TR)$ holds.
 \subsubsection{Non-interprocedural SLH is insecure}\label{sec:non-interp-slh}%
 We conclude by showing that the  non-interprocedural variant of SLH, where the predicate bit is set to $\trg{0}$ at the beginning of each function, is insecure and does not prevent all speculative leaks.\footnote{Flags: { \texttt{-mllvm} \texttt{-x86-speculative-load-hardening} \texttt{-mllvm} \texttt{--x86-slh-ip=false}} }
 \lstset{language=Java}
 Consider the program in \Cref{lis:spectre-variant-listing-proc} that splits the memory accesses of \lstinline{A} and \lstinline{B} of the classical Spectre v1 snippet across functions \lstinline{get} and \lstinline{get_2}.
\begin{lstlisting}[mathescape,label=lis:spectre-variant-listing-proc,caption={Inter-procedural variant of Spectre v1 snippet~\cite{crossproc}}]
void get (int y) 
	x = A[y]; 
	if (y < size) then  get_2 (x);

void get_2 (int x)  temp = B[x];
\end{lstlisting}
\lstset{language=Asm}
Once compiled, \lstinline{get} starts the speculative execution (line 3), then the compiled code corresponding to \lstinline{get_2} is executed speculatively.
However, the predicate bit of \lstinline{get_2} is set to $\trg{0}$ upon calling the function.
Hence, the memory access corresponding to \lstinline{B[x]} is not masked leading to the leak of \lstinline{x} (which contains \lstinline{A[y]}), so the target program violates $\rsni{}(\trg{\wTR})$.

It is also possible to secure the non-interprocedural variant of SLH.
We model NISLH as \compslht{\cdot} by having the predicate bit initialized at the beginning of each function to \trg{1} (false) in a local variable \trg{\predState}.
As before, compiled code updates \trg{\predState}  after every branching instruction.
To ensure that \trg{\predState} correctly captures whether we are mis-speculating, we place an \trg{\lfence} as the first instruction of every compiled function.
\begin{align*}
    \compslht{ 
        \begin{aligned}
            &
            \src{f(x)\mapsto s; }
            \\
            &\
            \src{\ret} 
        \end{aligned}
    } =&\ 
        \trg{f(x)\mapsto  
            \left|\begin{aligned}
                &
                \trg{\lfence;}\ 
                \letint{\trg{\predState}}{\trg{1}}{
                \\
                &\ 
                \compslht{s}};
                \trg{\ret}
            \end{aligned}\right.
        }
    \\
    \compslht{ 
        \begin{aligned}
            &
            \ifztes{
                \src{e}
            \\&\
            }{
                \src{s}
            \\&\
            }{\src{s'}}
        \end{aligned}
    } =&\ \trg{ 
        \left|\begin{aligned}
            &
            \letint{\trg{x_f}}{\compslht{e}}{
            \\&\
                \ifztet{\trg{x_f}
                    }{ 
                    \letint{\trg{\predState}}{\trg{\predState \vee \neg x_f}}{\compslht{s}}
                \\
                &\
                }{
                    \letint{\trg{\predState}}{\trg{\predState \vee x_f}}{\compslht{s'}}
                }
            }
        \end{aligned}\right.
    }
\end{align*}

This compiler is also \weak{\rssc{}} since (1) it correctly tracks whether we are speculating (this time using local variable \trg{\predState} rather than location \trg{-1} as in \compslh{\cdot}), (2) speculation across function boundaries is blocked by \trg{lfence} statements, and (3) masking is done as in \compslh{\cdot}.
\begin{theorem}[The NISLH compiler is \weak{\rdss{}}]\label{thm:slh-comp-rdss-ni}
$       \vdash \compslht{\cdot} \! :\! \weak{\rdss}$
\end{theorem}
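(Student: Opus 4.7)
}
My plan is to follow the same methodology used for \Cref{thm:slh-comp-rdss}: rather than tackling $\weak{\rssp{}}$ (a property-ful criterion) directly, I will prove the equivalent property-free criterion $\weak{\rdss{}}$ using an attacker-based backtranslation, and then invoke \Cref{thm:rdssp-rdss-eq}. Concretely, given a target attacker $\ctxt{}$ interacting with $\compslht{\src{P}}$ and producing a trace $\trat{^{\taintt}}$, I will construct a source attacker $\ctxs{}$ (essentially the identity-like backtranslation that keeps the attacker's control flow and memory shape, since the compiler only rewrites component functions) such that $\src{\ctxs{}\hole{P}}$ produces a trace $\tras{^\sigma}$ related to $\trat{^{\taintt}}$ via $\rels$. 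Because attacker code is not compiled, the only $\mu$arch$\ldotp$ actions that matter for the trace relation are those emitted inside $\compslht{\src{P}}$.

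The core of the argument is a cross-language invariant between a source state $\src{\Omega}$ and a target speculative state $\trg{\OB{\Phi}}$ whose top frame is currently executing $\compslht{\src{s}}$. The invariant asserts: (i) the source statement $\src{s}$ and the top-of-stack target statement correspond under $\compslht{\cdot}$ modulo a fixed unfolding of \texttt{let}-binders introduced by masking; (ii) the source and target public heaps agree and the taints of corresponding values match (private values are tainted \unta{} on both sides for the strong relation, but here for the weak language \trg{\wTR} the taint of a private value read non-speculatively propagates as \safeta{}, which is exactly what $\compslht{\cdot}$ relies on); (iii) whenever the target speculative stack has depth $>1$ in the current function, the local variable \trg{\predState} in the top frame contains $\trg{0}$ (true predicate, i.e.\ ``mis-speculating''), and whenever depth is $1$, \trg{\predState}$=\trg{1}$. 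Clause (iii) is the analogue of the ``$\trg{-1}$ tracks speculation correctly'' lemma used for \compslh{\cdot}, but local to each activation.

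I would then do a case analysis on the target reduction rule, showing the invariant is preserved and that any emitted action is tainted \safeta{}. The non-branching, non-memory cases are routine. For branches (\Cref{tr:et-sp-if}), the compiled pattern updates \trg{\predState} to $\trg{\predState \vee \neg x_f}$ (resp.\ $\trg{\predState \vee x_f}$) so that along the mis-speculated path \trg{\predState}$=\trg{0}$, re-establishing (iii). For memory reads and writes the masking lemma gives that, whenever \trg{\predState}$=\trg{0}$, the address/value fed to the heap operation is the constant \trg{0} (taint \safeta), so the emitted $\rdl{\cdot}$ or $\wrl{\cdot}$ action carries taint \safeta{} regardless of $\trg{\sigma_{pc}}$. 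Non-speculatively ($\trg{\sigma_{pc}}=\safeta$) all actions are \safeta{} by the taint rules. Thus every action emitted by $\compslht{\src{P}}$ is \safeta, which gives \ss{}(\trg{\wTR}) for the whole program and hence $\rels$-relatedness of the traces (each target action is either matched by a corresponding source action or, if it is a compiler-introduced bookkeeping action, it is \safeta{} and absorbed by \Cref{tr:tr-rel-safe,tr:tr-rel-safe-h}).

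The main obstacle, and the place where this proof diverges from the one for \compslh{\cdot}, is the \emph{function-boundary} argument justifying clause (iii). Under \compslh{\cdot} the predicate bit is stored in the private heap slot \trg{-1}, so it survives calls; under $\compslht{\cdot}$ each activation's \trg{\predState} starts at \trg{1}, so if a caller invokes a component function while mis-speculating, the callee's \trg{\predState} would spuriously say ``not speculating'' and masking would not fire. I will close this gap by showing that the \trg{\lfence} inserted as the very first instruction of every compiled function zeroes the speculation window of the enclosing speculation instance via \Cref{tr:et-sp-lf}, which forces an immediate \rollbl{} (by \Cref{tr:et-sp-rb}) before any action of the callee can be emitted. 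Consequently, when any compiled-function body actually runs and can produce actions, it does so either with the speculation stack collapsed to depth $1$ (and \trg{\predState}$=\trg{1}$ correct) or within a fresh speculation instance opened \emph{inside} the callee by its own branch, where (iii) has been maintained by the branch case above. Once this boundary lemma is in place, the remainder of the proof is a mechanical replay of the \compslh{\cdot} argument, and $\weak{\rdss{}}$ follows.
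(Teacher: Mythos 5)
Your proposal is correct and follows essentially the same route as the paper: an attacker-based backtranslation, a cross-language state relation whose speculative clause ties the local \trg{\predState} binding to whether the current activation is mis-speculating, and a function-boundary lemma in which the \trg{\lfence} at the head of every compiled function zeroes the speculation window and forces a \rollbl{} before the callee can emit any action (this is exactly the call case of the paper's ``compiled speculation is safe inter-procedurally'' lemma). You have correctly identified the one place where this proof diverges from the \compslh{\cdot} argument, and resolved it the same way the paper does.
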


In a similar way, one can construct a secure, non-interprocedural version of \compsslh{\cdot} that satisfies \strong{\rssc{}}.

\subsection{How to Prove \rssc}\label{sec:countermeasures:proving}
We now illustrate the backtranslation proof technique used to prove SLH-related countermeasures secure.
Our backtranslation is a simple adaptation of the general backtranslation proof technique~\cite{rsc-j}.
To prove that a compiler is \rdss, we \emph{backtranslate} a target attacker (\ctxt{}) to create a source attacker (\ctxs{}=\backtr{\ctxt{}}) so that they produce traces related by the relation of \Cref{sec:rob-saf-comp}.
Our backtranslation function (\backtr{\cdot}), which is the same for all proofs, homomorphically translates target heaps, functions, statements etc$\ldotp$ into source ones. %

We depict our proof approach in \Cref{fig:proof-slh}.
There, circles and contoured statements represent source and target states.
A black dotted connection between source and target states indicates that they are related; dashed target states are not related to any source state.
In our setup, execution happens either on the attacker side or on the component side, coloured connections between same-colour states represent reductions.
\myfig{
	\centering
	\tikzpic{
		\node[draw=\stlccol,rounded corners,font=\footnotesize](sc1){};
		\node[draw=\stlccol,rounded corners,font=\footnotesize, right = .5 of sc1](sc2){};
		\node[draw=\stlccol,rounded corners,font=\footnotesize, right = .7 of sc2](sc3){};
		\node[draw=\stlccol,rounded corners,font=\footnotesize, right = .5 of sc3](sc4-if){\src{ifz}};

		\node[draw=\ulccol,rounded corners,font=\footnotesize, below = .8 of sc1](tc1){};
		\node[draw=\ulccol,rounded corners,font=\footnotesize,] at(sc2|-tc1) (tc2){};
		\node[draw=\ulccol,rounded corners,font=\footnotesize,] at(sc3|-tc1) (tc3){};
		\node[draw=\ulccol,rounded corners,font=\footnotesize,] at(sc4-if|-tc1) (tc4-if){\compslh{\src{ifz}}};

		\node[draw=\ulccol,dashed,rounded corners,font=\footnotesize, below= .5 of tc4-if](ts1){};
		\node[draw=\ulccol,dashed,rounded corners,font=\footnotesize, right = .5 of ts1](ts3){};
		\node[draw=\ulccol,dashed,rounded corners,font=\footnotesize, right = .5 of ts3](ts4){};
		\node[draw=\ulccol,dashed,rounded corners,font=\footnotesize, right = .5 of ts4](ts2){\trg{w}=\trg{0}};		
		\node[font=\footnotesize, left = .5 of ts1](ph){};

		\node[draw=\ulccol,rounded corners,font=\footnotesize,] at(ts2|-tc1) (tc5){};
		\node[draw=\ulccol,rounded corners,font=\footnotesize, right = .5 of tc5](tc6){};
		\node[draw=\ulccol,rounded corners,font=\footnotesize, right = .7 of tc6](tc7){};
		\node[draw=\ulccol,rounded corners,font=\footnotesize, right = .5 of tc7](tc8){};

		\node[draw=\stlccol,rounded corners,font=\footnotesize,] at(tc5|-sc1) (sc5){};
		\node[draw=\stlccol,rounded corners,font=\footnotesize,] at(tc6|-sc1) (sc6){};
		\node[draw=\stlccol,rounded corners,font=\footnotesize,] at(tc7|-sc1) (sc7){};
		\node[draw=\stlccol,rounded corners,font=\footnotesize,] at(tc8|-sc1) (sc8){};
		\node[font=\footnotesize, ] at (ph -| sc6)(ph2){};

		\draw[dashed, draw = \stlccol] (sc1) to (sc2);
		\draw[dashed, draw = \stlccol, bend left] (sc2) to node[above,font=\footnotesize](as1){\src{\acas{?^\sigma}}} (sc3);
		\draw[dashed, draw = \stlccol] (sc3) to (sc4-if);
		\draw[->, draw = \stlccol] (sc4-if) to node[](sd0){} (sc5);
		\draw[dashed, draw = \stlccol] (sc5) to (sc6);
		\draw[dashed, draw = \stlccol, bend left] (sc6) to node[above,font=\footnotesize](as2){\src{\acas{!^\sigma}}} (sc7);
		\draw[dashed, draw = \stlccol] (sc7) to (sc8);
		\draw[draw = \stlccol, - ] ([yshift=.6em]sc3.center) to ([yshift=.9em]sc3.center) to node[above,font=\footnotesize](sd1){\src{\OB{\delta_1^\sigma}}} ([yshift=.9em]sc4-if.center) to ([yshift=.6em]sc4-if.center);
		\draw[draw = \stlccol, - ] ([yshift=.3em]sc5.north) to ([yshift=.6em]sc5.north) to node[above,font=\footnotesize](sd2){\src{\OB{\delta_2^\sigma}}} ([yshift=.6em]sc6.north) to ([yshift=.3em]sc6.north);

		\draw[dashed, draw = \ulccol] (tc1) to (tc2);
		\draw[dashed, draw = \ulccol, bend left] (tc2) to node[above,font=\footnotesize](at1){\trg{\acat{?^{\taintt}}}} (tc3);
		\draw[dashed, draw = \ulccol] (tc3) to (tc4-if);
		\draw[dashed, draw = \ulccol] (tc5) to (tc6);
		\draw[dashed, draw = \ulccol, bend left] (tc6) to node[above,font=\footnotesize](at2){\src{\acat{!^{\taintt}}}} (tc7);		
		\draw[dashed, draw = \ulccol] (tc7) to (tc8);
		\draw[draw = \ulccol, - ] ([yshift=.6em]tc3.center) to ([yshift=.9em]tc3.center) to node[above,font=\footnotesize](td1){\trg{\OB{\trgb{\delta}_1^{\taintt}}}} ([yshift=.9em]tc4-if.center) to ([yshift=.6em]tc4-if.center);
		\draw[draw = \ulccol, - ] ([yshift=.3em]tc5.north) to ([yshift=.6em]tc5.north) to node[above,font=\footnotesize](td2){\trg{\OB{\trgb{\delta}_2^{\taintt}}}} ([yshift=.6em]tc6.north) to ([yshift=.3em]tc6.north);

		\draw[draw = \ulccol, - ] ([yshift=-.6em]ts1.center) to ([yshift=-.9em]ts1.center) to node[below,font=\footnotesize](td0){\trat{^{\taintt}}} ([yshift=-.9em]ts2.center) to ([yshift=-.6em]ts2.center);

		\draw[draw = \ulccol, decoration={zigzag, segment length=4, amplitude=.9, post=lineto, post length=2pt},decorate,->] (tc4-if) to (ts1);
		\draw[draw = \ulccol, decoration={zigzag, segment length=4, amplitude=.9, post=lineto, post length=2pt},decorate,->] (ts1) to (ts3);
		\draw[draw = \ulccol, loosely dotted, thick ] (ts3) to (ts4);
		\draw[draw = \ulccol, decoration={zigzag, segment length=4, amplitude=.9, post=lineto, post length=2pt},decorate,->] (ts4) to (ts2);
		\draw[draw = \ulccol, decoration={zigzag, segment length=4, amplitude=.9, post=lineto, post length=2pt},decorate,->] (ts2) to  node[pos=.4,left,font=\footnotesize](ar){\trg{\rollbl}} (tc5);

		\draw[rounded corners, dotted, fill=yellow, opacity = .2 ] (as1.north) -| ([xshift=.2em]sc3.east) |- ([yshift=-.2em]tc2.south) -| ([xshift=-.2em]sc1.west) |- (as1.north);
		\draw[rounded corners, dotted, fill=yellow, opacity = .2 ] ([yshift=.2em]sc7.north) -| ([xshift=.2em]sc8.east) |- ([yshift=-.2em]tc8.south) -| ([xshift=-.2em]sc7.west) |- ([yshift=.2em]sc7.north);

		\draw[rounded corners, dotted, fill=green, opacity = .1 ] (sd1.north) -| ([xshift=.2em]sc7.east) |- ([yshift=-.2em]tc4-if.south) -| ([xshift=-.2em]sc3.west) |- (sd1.north);

		\draw[rounded corners, dotted, fill=black, opacity = .1 ] (tc4-if.south) -| ([xshift=.2em]ts2.east) |- (td0.south) -| ([xshift=-.2em]ts1.north west) |- (tc4-if.south);		

		\draw[-] ([yshift=3em]sc3.west) -- ([yshift=-1em]tc3.west);
		\draw[-] ([yshift=3em]sc7.east) -- ([yshift=-1em]tc7.east);
		\draw[-] ([yshift=-.9em,xshift=-4.5em]tc3.west) -- ([yshift=-.9em,xshift=2em]tc7.east);

		\node[font = \footnotesize,above = .5 of sd0.north](y1){\src{P} / \compslh{P} executes};
		\node[font = \footnotesize, left = 2 of y1, align = center](yl){\backtr{\ctxt{}} / \ctxt{} \\ executes};
		\node[font = \footnotesize,right = 2 of y1, align = center](yr){\backtr{\ctxt{}} / \ctxt{} \\ executes};

		\node[font = \footnotesize, align = center] at (as1 |- ts2) (yx){either \ctxt{} \\ or \compslh{P} \\ executes};

		\draw[-,thin, dotted] (sc1) to (tc1);
		\draw[-,thin, dotted] (sc2) to (tc2);
		\draw[-,thin, dotted] (sc3) to (tc3);
		\draw[-,thin, dotted] (sc4-if) to (tc4-if);
		\draw[-,thin, dotted] (sc5) to (tc5);
		\draw[-,thin, dotted] (sc6) to (tc6);
		\draw[-,thin, dotted] (sc7) to (tc7);
		\draw[-,thin, dotted] (sc8) to (tc8);
		\draw[-,thin, dotted] (as1) to (at1);
		\draw[-,thin, dotted] (as2) to (at2);
		\draw[-,thin, dotted] (sd1) to (td1);
		\draw[-,thin, dotted] (sd2) to (td2);
	}
}{proof-slh}{Diagram depicting the proof that \compslh{\cdot} is \weak{\rssc{}}.}

To prove that source and target traces are related, we set up a cross-language relation between source and target states and prove that reductions both preserve this relation and generate related traces.
The state relation we use is strong: a source state is related to a target one if the latter is a singleton stack and all the sub-part of the state are identical, i.e., heaps bind the same locations to the same values and bindings bind the same variables to the same values.
To reason about attacker reductions, we use a lock-step simulation: we show that starting from related states, if \ctxt{} does a step, then \backtr{\ctxt{}} does the same step and ends up in related states (yellow areas).
To reason about component reductions, we adapt a reasoning from compiler correctness results~\cite{Leroy09b,barthe-ct2}.
That is, if \src{s} steps and emits a trace, then \compslh{s} does one or more steps and emits a trace such that both ending states and traces are related (green areas, related traces are connected by black-dotted lines).
This proof is straightforward except for the compilation of $\ifztes{\src{e} }{ \src{s} }{\src{s'}}$ 
since it triggers speculation in \TR (grey area).
After \compslh{\ifztes{\src{e} }{ \src{s} }{\src{s'}}} is executed, speculation starts and the cross-language state relation is temporarily broken (the stack of target states is not a singleton, so the cross-language state relation cannot hold).
Speculative execution continues for \trg{w} steps in both attacker and compiled code and generating a trace \trat{^{\taintt}}.
We then prove that \trat{^{\taintt}} is related to the empty source trace because 
all actions in \trat{^{\taintt}} are tainted \trg{\safeta}, and so they do not leak.
This fact follows from proving that while speculating, bindings always contain \trg{\safeta} values and therefore any generated action is \trg{\safeta}.
In turn, this follows from proving that \trg{\predState} correctly captures if speculation is ongoing or not and that the  mask 
is \trg{\safeta}.
As mentioned, both of these hold for \compslh{\cdot} and \compsslh{\cdot}, so they are secure.

The compiler \complfence{\cdot} can be proved secure in a simpler way since speculative reductions immediately trigger an \trg{\lfence}, which rolls the speculation back (the speculation window \trg{w} is \trg{0}) reinstating the cross-language state relation right away. %

\subsection{Summary}\label{sec:countermeasures:summary}

Our security analysis is the first rigorous characterization of the security guarantees provided by Spectre v1 compiler countermeasures, and it complements existing results that focus on selected code snippets~\cite{kocher2018examples,spectector}.
The table below
depicts the results of our analysis in terms of the security properties satisfied by compiled programs.
There, \yes{} denotes that \emph{all} compiled programs satisfy the criterion and \no{} denotes that some compiled programs violates it.
\begin{center}
	\begin{tabular}{l | c c}
								&	\rsni(\TR)	&	\rsni(\wTR)	\\
	\hline
		\texttt{lfence}(MSVC), \ \ \texttt{SLH-no-interp}
			&	\no			&	\no			\\
		\texttt{lfence}(ICC)/\complfence{\cdot}, \ \ \texttt{SSLH}/\compsslh{\cdot}	
			&	\yes		&	\yes		\\
		\texttt{SLH}(Clang)/\compslh{\cdot}, \texttt{NISLH}/\compslht{\cdot}				&	\no			&	\yes		
	\end{tabular}
\end{center}
The main findings of our security analysis are summarized below: 
\begin{asparaitem}
	\item The \texttt{lfence} countermeasure implemented in MSVC, denoted  \texttt{lfence}(MSVC), is insecure.  
	It violates $\weak{\rsnip}$ and produces programs that are not speculatively non-interference, i.e., that violate both \rsni(\TR) and \rsni(\wTR).
	Hence, compiled programs still contain speculative leaks and might be vulnerable to Spectre attacks.
	
	\item The \texttt{lfence} countermeasure implemented in ICC, denoted \texttt{lfence}(ICC) and modelled by \complfence{\cdot}, is secure.
	The model satisfies $\strong{\rssp}$ (\Cref{thm:lfence-comp-rdss}) and, as a result, produces only compiled programs that satisfy speculative non-interference, that is, \rsni{}(\TR).
	Hence, compiled programs are free of speculative leaks.
	
	\item The speculative load hardening countermeasure implemented in Clang, denoted \texttt{SLH}(Clang) and modelled by \compsslh{\cdot} is secure for \wSR-\wTR.
	The model satisfies $\weak{\rssp}$ (\Cref{thm:slh-comp-rdss}) and, as a result, produces only compiled programs that satisfy weak speculative non-interference, that is, \rsni{}(\wTR).
	Hence, compiled programs are free of speculatively leaks that involve speculatively-accessed data.
	While this is sufficient for preventing Spectre-style attacks, compiled programs may still speculatively leak data retrieved non-speculatively, which might result in breaking properties like constant-time (see~\cite{contracts}).

	\item The strong variant of \texttt{SLH}, denoted \texttt{SSLH}  and modelled by \compsslh{\cdot} is secure for \SR-\TR.
	The model satisfies $\strong{\rssp}$ (\Cref{thm:sslh-comp-rdss}) and produces compiled programs that satisfy  speculative non-interference, that is, \rsni{}(\TR).
	Thus, compiled programs have \emph{no} speculative leaks.

	\item Non-interprocedural \texttt{SLH}, denoted \texttt{SLH-no-interp}, is insecure.
	It violates $\weak{\rsnip}$ and produces programs that violate both \rsni(\TR) and \rsni(\wTR).	
	Hence, compiled programs  might still be vulnerable to Spectre attacks.
	
	\item Non-interprocedural \texttt{SLH} can be made secure as we show in \Cref{sec:non-interp-slh}.
	This variant, denoted \texttt{NISLH} and modelled by \compslht{\cdot}, is secure for \wSR-\wTR and it produces programs that are free of speculatively leaks involving speculatively-accessed data.
\end{asparaitem}

\paragraph{Additional security guarantees}
In addition to \rsnip, the secure compilers \complfence{\cdot}, \compslh{\cdot}, \compsslh{\cdot}, and \compslht{\cdot} also preserve the non-speculative behavior of source programs.
That is, if two source programs $\src{W}$ and $\src{W'}$ produce the same traces, then their compiled counterparts produce traces with the same non-speculative projection.
This directly follows from the  compilers only modifying the speculative behavior of programs, either through \trg{lfence}s or conditional masking.

By combining \rsnip{} with the preservation of non-speculative behaviors, we can derive an additional security guarantee for our compilers: preservation of non-interference.
For simplicity, we only focus on whole programs and we use $\complfence{\cdot}$ as an example; the same argument applies to \compslh{\cdot}, \compsslh{\cdot}, and \compslht{\cdot}.
We say that a program \emph{$\com{W}$ is non-interferent} (NI) if all programs $\com{W'}$ that differ from $\com{W}$ only in the private heap (i.e., they are low-equivalent) produce the same traces as $\com{W}$.
Given a source program $\src{W} \in \SR$ that is NI, we obtain that \complfence{\src{W}} is NI if we restrict ourselves to the non-speculative projection of traces since \complfence{\src{W}} preserves the non-speculative behavior of $\src{W}$.
Since $\complfence{\src{W}}$ is $\rsni{}(\TR)$, the full traces do not leak more than their non-speculative projections and thus $\complfence{\src{W}}$ is also non-interferent.

The security guarantees of NI depend on the underlying language.
For strong languages \SR-\TR, NI ensures that programs are \emph{constant-time} with respect to the private heap
	(in \SR, we have classical constant-time~\cite{molnar2005program,almeida2016verifying} while in \TR we have speculative constant-time~\cite{cauligi2019towards}).
Indeed, information from the private heap cannot influence the traces where $\com{\rdl{n}}$, $\com{\wrl{n}}$, and $\com{\ifl{v}}$ actions correspond to the standard constant-time observer.
For the weak languages \wSR-\wTR, NI ensures a form of \emph{sandboxing} where programs (1)  cannot access information from the private heap non-speculatively (because reading values from the private heap violates NI through actions $\com{\rdl{n \mapsto v}}$), and (2) cannot speculatively leak information about the private heap. 
We leave exploring these additional security results as future work.

\section{Scope and Limitations of the Model}\label{sec:disc-new}

Lifting our analysis to real CPUs is only valid to the extent that our attacker model and speculative semantics capture the target system.\looseness=-1

Our attacker  observes the location of memory accesses and the outcome of control-flow statements.
This attacker model offers a good trade-off between precision and simplicity~\cite{almeida2016verifying,molnar2005program}, and it has proven to capture interesting microarchitectural leaks, like those resulting from caches and port contention.
Other classes of microarchitectural leaks, like those resulting from internal buffers~\cite{ridl2019} or hardware prefetchers~\cite{prefetcher}, might not be captured by our model.

We also assume that attackers cannot access the private heap since there can be no protection against same-process attackers.
This can be achieved by running attacker and component in separate processes and leveraging OS-level memory protection.

Finally, the semantics of our target languages are adequate to reason only about Spectre v1-style attacks.
These semantics ignore the effects of out-of-order execution.
As a result, they cannot be used to reason about countermeasures that rely only on data dependencies to restrict speculatively executed instructions~\cite{DBLP:journals/corr/abs-1805-08506}.
For a similar reason, our analysis of SLH might be too pessimistic in that the data dependencies resulting from the injected masking operations might effectively limit the scope of speculative execution.
Our semantics also ignore  other sources of speculation (e.g., indirect jumps) that are exploited by other Spectre variants, as we discuss next.

\paragraph{Beyond Spectre v1}\label{sec:disc} %
Spectre v1 (also called Spectre-PHT) is just one of the (many) Spectre variants, we recount other variants below and discuss how to extend this work to reason about them. %
\begin{asparaitem}
\item Spectre BTB~\cite{Kocher2018spectre} exploits speculation over indirect jump instructions.
The \emph{retpoline} compiler countermeasure~\cite{retpoline} replaces indirect jumps with a return-based trampoline that leads to code that perform busy waiting.
As a result, the speculated jump executes no code and thus cannot leak anything.

\item Spectre-RSB~\cite{maisuradze2018ret2spec}, in contrast, exploits speculation over return addresses (through \texttt{ret}  instructions).
To prevent it, Intel deployed a microcode update~\cite{retpoline} that renders \emph{retpoline} a valid countermeasure also against Spectre-RSB~\cite{spectreSoK}.

\item Spectre-STL~\cite{spectrev4} exploits speculation over data dependencies between in-flight store and load operations. %
To mitigate it, ARM introduced a dedicated \texttt{SSBB} speculation barrier to prevent store bypasses that could  be injected by compilers.
\end{asparaitem}

To reason about these Spectre variants, we need to extend the speculative semantics of \TR to capture the new kinds of speculative execution; this is analogous to other semantics~\cite{McIlroy19,cauligi2019towards,blade, balliu2019inspectre}.
Crucially, the traces  must capture events that are meaningful for the related variant (e.g., reads and writes for Spectre-STL, returns for Spectre-RSB).
These actions are already present in traces of \TR, so the new semantics can reuse the trace model presented here.
This, in turn, ensures that we can use the  secure compilation criteria and  trace relation from \Cref{sec:rob-saf-comp}  to reason about whether compiler-inserted countermeasures for these variants are secure or not.
Any proof that countermeasures for these variants are \rssp should follow the overview in \Cref{sec:countermeasures:proving}.
Specifically, proofs for \emph{retpoline} would follow the approach of \Cref{fig:proof-slh} since speculative execution gets diverted to code that does not produce observations (we provide an in-depth discussion on \emph{retpoline} in \Cref{sec:v2-details}).
In contrast, reasoning about \texttt{SSBB} would be similar to reasoning about \complfence{\cdot} since \texttt{SSBB}s instructions act as speculation barriers.
We leave investigating these topics in detail for future work.

\section{Related Work}\label{sec:rw}%

\paragraph{Speculative execution attacks}
Many attacks analogous to Spectre~\cite{Kocher2018spectre, kiriansky2018speculative} exist; they differ in the exploited speculation sources~\cite{maisuradze2018ret2spec, 220586, spectrev4}, the covert channels~\cite{trippel2018meltdownprime,schwarz2019netspectre, stecklina2018lazyfp}, or the target platforms~\cite{chen2018sgxpectre}.
We refer the reader to~\cite{spectreSoK} for a survey of existing attacks.

\myparagraph{Speculative semantics}
These semantics model the effects of specu\-la\-ti\-vely-executed instructions.
Several semantics~\cite{McIlroy19,cauligi2019towards,blade, balliu2019inspectre, contracts} explicitly model microarchitectural details like multiple pipeline stages, reorder buffers, caches, and  predictors. %
These semantics are significantly more complex than ours (which is inspired by~\cite{spectector}), and they would lead to much harder proofs.

\myparagraph{Security definition against Spectre attacks}
\sni~\cite{spectector} has been used as security definition against speculative leaks also by~\cite{balliu2019inspectre,contracts}.
Cheang \etal~\cite{Cheang19} propose \textit{trace property-dependent observational determinism}, a property similar to \sni.
Cauligi \etal~\cite{cauligi2019towards} present speculative constant-time (SCT), i.e., constant-time w.r.t. the speculative semantics.
Differently from \sni, SCT captures leaks under the non-speculative \textit{and} the speculative semantics, and it is inadequate for reasoning about  countermeasures that \emph{only} modify a program's speculative behaviour. 
More generally, Guarnieri \etal~\cite{contracts} presents a secure programming framework that subsumes both \sni and SCT.

\myparagraph{Compiler countermeasures for Spectre v1}
Apart from the insertion of speculation barriers~\cite{Intel,amd} and SLH~\cite{DBLP:journals/corr/abs-1805-08506, spec-hard}, few countermeasures for Spectre v1 exist.
Replacing branch instructions with branchless computations (using \texttt{cmov} and bit masking) is effective~\cite{Webkit} but not generally applicable.
oo7~\cite{oo7} is a tool that automatically patches speculative leaks by injecting speculation barriers.
However, oo7 misses some speculative leaks~\cite{spectector} 
and  violates $\weak{\rsnip}$.

Blade~\cite{blade} is a compiler countermeasure that aims at optimising compiled code performance.
It finds the minimal set of variables that need to be masked in order to eliminate paths between sources (i.e., speculative memory reads) and sinks (i.e., operations resulting in microarchitectural side-effects).
Similarly to our framework, Blade consider a source language \emph{without} speculation and a target language \emph{with} speculation and it preserves constant-time from source to target~\cite[Corollary 1]{blade}.
This is different from the compilers we study, which block (classes of) speculative leaks regardless of whether the source program is constant-time.
Blade's design relies on fine-grained barriers whose scope are single instructions.
Since these barriers are not available in current CPUs, Blade's prototype realises them via both \lstinline{lfence}s and masking.
We believe that our framework can be applied to reason about both Blade's design and prototype, but we leave this for future work.
The challenges are extending the target languages with fine-grained barriers and formalising the optimal placement of those barriers.

Recent work~\cite{kocher2018examples, spectector} studied the security of compiler countermeasures by inspecting specific compiled code snippets and detected insecurities in MSVC.
Our work extends and complements these results by providing the first rigorous characterization of these countermeasures' security guarantees.
In particular, we prove the security of countermeasures for all source programs, rather than simply detecting insecurities on specific examples.

\myparagraph{Secure compilation}
\rdss and \rdssp are instantiations of robustly-safe compilation~\cite{rsc-j,rhc,rhc-rel,catalinRSC}.
Like \cite{rsc-j,rhc-rel}, we relate source and target traces using a cross-language relation; however, our target language has a speculative semantics.
While program behaviors are sets of traces due to non-determinism in~\cite{rhc,rhc-rel}, behaviors are single traces for our (deterministic) languages~\cite{Leroy09b}.

Fully abstract compilation (\facdef) is a  widely used secure compilation criterion ~\cite{PatrignaniASJCP15,JuglaretHAEP16,fstar2js,fab,domicapsj,scsurvey}.
\fac compilers must preserve (and reflect) observational equivalence of source programs in their compiled counterparts~\cite{DBLP:conf/icalp/Abadi98,scsurvey}.
While \fac has been used to reason about microarchitectural side-effects~\cite{busi}, it is unclear whether \fac is well-suited for   speculative leaks as it would require explicitly modelling microarchitectural components that are modified speculatively (like caches). %

Constant-time-preserving compilation (\ctpdef) has been used to show that compilers preserve constant-time~\cite{barthe-ct2,barthe-ct1,jasmin}.
Similarly to \rsnip, proving \ctp requires proving the preservation of a hypersafety property, which is more challenging than preserving safety properties like \rss.
Additionally, \ctp has been devised for whole programs only (like \sni), and it cannot be used to reason about countermeasures like SLH that do not preserve constant-time.

\myparagraph{Verifying Hypersafety as Safety}\label{sec:hyper-to-safe}
Verifying if a program satisfies a 2-hypersafety property~\cite{ClarksonS10} (like \rsni) is notoriously challenging. %
Approaches for this include taint-tracking~\cite{ni-taint1,ni-taint2} (which over-approximates the 2-hypersafety property with a safety property), secure multi-execution~\cite{smu} (which runs the code twice in parallel) and self-composition~\cite{barthe-smc,aiken-smc} (which runs the code twice sequentially).
Our criteria leverage taint-tracking (\rss); we leave investigating criteria  based on the other approaches as future work. 

\section{Conclusion}\label{sec:conc}
The paper presented a comprehensive and precise characterization of the security guarantees of  compiler countermeasures against Spectre v1, as well as the first proofs of security for such countermeasures.
For this, it introduced \ss, a safety property implying the absence of (classes of) speculative leaks.
\ss provides precise security guarantees in that it can be instantiated to over-approximate both  strong~\cite{spectector} and weak~\cite{contracts} \sni, and it is tailored towards simplifying secure compilation proofs.
As a basis for security proofs, the paper formalised secure compilation criteria capturing the robust preservation of  \ss and \sni. %

{\small\subsubsection*{Acknowledgements}
This work was partially supported by the German Federal Ministry of Education and Research (BMBF) through funding for the CISPA-Stanford Center for Cybersecurity (FKZ: 13N1S0762), 
by the Community of Madrid under the  project S2018/TCS-4339 BLOQUES and the Atracci\'on de Talento Investigador grant 2018-T2/TIC-11732A,
by the Spanish Ministry of Science, Innovation, and University under the project RTI2018-102043-B-I00 SCUM and the Juan de la Cierva-Formaci\'on grant FJC2018-036513-I, and by a gift from Intel Corporation.
}

\newpage

\bibliographystyle{ACM-Reference-Format}
\bibliography{./../biblio.bib}

\newpage
\onecolumn
\appendix 
\crefalias{section}{appendix}

\section{Taint Tracking Overview}\label{sec:taint-app}
The language semantics we devise contains two kinds of semantics that operate in parallel: the operational semantics, presented in the paper, and the taint tracking semantics, presented here.
Thus, technically, the top-level semantics is parametric in  the taint tracking semantics. %
The semantics of strong languages \SR and \TR uses the strong form of taint tracking while the semantics of weak languages \wSR and \wTR uses the weak form of taint tracking.
We now give an in-depth overview of our taint-tracking semantics; see~\cite{guarnieri2019exorcising} for the full models. %

To add taint-tracking to our semantics, we enrich the program state with taint information and devise a taint-tracking semantics that determines how taint is propagated.
The top-level semantic judgement is then expressed in terms of the extended program states.
An extended state steps if its operational part steps according to the semantics of \Cref{sec:non-spec} and if its taint part steps according to the rules of the taint semantics.

We now define all the elements needed to define the extended program states: extended heaps and extended bindings.
In this appendix, we indicate the heap, state, and bindings used by the operational semantics with a $v$ suffix, so the \com{H}, \com{\Omega} and \com{B} from \Cref{sec:non-spec} are denoted as \com{\Hv}, \com{\Ov} and \com{\Bv} respectively.
Formally, we indicate taint as $\sigma\bnfdef \safeta \mid \unta$.
Extended heaps \He extend heaps with the taint of each location, whereas taint heaps \Hs only track the taint.
Extended heaps \com{\He} can be split/merged in their value-only part \com{\Hv} (used for the language semantics) and their taint-only part \com{\Hs} (used for taint-tracking).
We denote this split as $\com{\He} \equiv \com{\Hv+\Hs}$. 
Just like heaps, extended variable bindings \com{\Be} extend the binding with the taint of the variable, whereas taint bindings \Bs only track the taint.
Still like heaps, bindings can be split/merged as $\com{\Be}\equiv\com{\Bv + \Bs}$.

\begin{align*}
	\mi{Extended\ Heaps}~\com{\He} \bnfdef&\ \come \mid \com{\He ; n\mapsto v : \sigma} \quad\text{ where } \com{n}\in\mb{Z}
	\\
	\mi{Taint\ Heaps}~\com{\Hs} \bnfdef&\ \come \mid \com{\Hs ; n\mapsto \sigma} \quad\text{ where } \com{n}\in\mb{Z}
	\\
	\mi{Extended\ Bindings}~\com{\Be} \bnfdef&\ \come \mid \com{\Be; x\mapsto v:\sigma}
	\\
	\mi{Taint\ Bindings}~\com{\Bs} \bnfdef&\ \come \mid \com{\Bs; x\mapsto \sigma}
	\\
	\mi{Exended\ Prog.\ States}~\com{\Oe}\bnfdef&\ \com{C, \He, \OB{\Be}\triangleright \proc{s}{\OB{f}} } 
	\\
	\mi{Taint\ States}~\com{\Os}\bnfdef&\ \com{C, \Hs, \OB{\Bv}\triangleright \proc{s}{\OB{f}} } 
\end{align*}

The taint semantics follows two judgements: 
\begin{compactitem}
\item Judgment	\com{\Bt \triangleright e \bigred \sigma} reads as ``expression \com{e} is tainted as \com{\sigma} according to the variable taints \Bt''.
\item Judgement \com{\sigma; \Ot \xto{\sigma'} \Ot'} reads as ``when the pc has taint \com{\sigma}, state \com{\Ot} single-steps to \com{\Ot'} producing a (possibly empty) action with taint \com{\sigma'}''.
\end{compactitem}
Below are the most representative rules for the taint tracking used by strong languages: %
\begin{center}\small
	\typerule{T-write-prv}{
		\com{\Be \triangleright e\bigred n : \sigma}
		&
		\com{\Be \triangleright e'\bigred \_ : \sigma''}
		&
		\com{\Hs'}=\com{\Hs \cup -\abs{n}\mapsto \sigma'' }
	}{
		\com{\sigma_{pc}; C, \Hs, \OB{\Be}\cdot \Be \triangleright \asgnp{e}{e'} \xto{\sigma\lub\sigma_{pc}} C, \Hs', \OB{\Be}\cdot \Be \triangleright \skipc} 
	}{tus-up-com-p}
	\typerule{T-read-prv}{
		\com{B \triangleright e\bigred n : \sigma'}
		&
		\com{n_a} = \com{-\abs{n}}
		&
		\com{\Hs}(n_a) = \sigma''
		&
		\com{\sigma} = \com{\sigma''\glb\sigma'}
	}{
		\begin{multlined}
			\com{\sigma_{pc}; C, \Hs, \OB{\Be}\cdot \Be \triangleright \letreadp{x}{e}{s} \xto{\sigma\lub\sigma_{pc}}}
				\\
				\com{C, \Hs, \OB{\Be}\cdot \Be\cup x\mapsto 0 : \unta \triangleright s}
		\end{multlined}
	}{tus-rd-com-p-strong}
\end{center}
Writing to the private heap (\Cref{tr:tus-up-com-p}) taints the location (\com{-\abs{n}}) with the taint of the written expression (\com{\sigma''}).
In contrast, reading from the private heap (\Cref{tr:tus-rd-com-p-strong}) taints the variable where the content is stored as unsafe (\com{\unta}) and the  read value is set to \com{0} (this information is not used by the taint-tracking).

For taint-tracking of the weak languages, we replace \Cref{tr:tus-rd-com-p-strong} with the one below that taints the read variable  with the glb of the taints of the pc and of the  read value (\com{\sigma'\lub\sigma_{pc}}) instead of  \com{\unta}.
\begin{center}\small
	\typerule{T-read-prv-weak}{
		\com{B \triangleright e\bigred n : \sigma'}
		&
		\com{n_a} = \com{-\abs{n}}
		&
		\com{\Hs}(n_a) = \sigma''
		&
		\com{\sigma} = \com{\sigma''\glb\sigma'}
	}{
		\begin{multlined}
			\com{\sigma_{pc}; C, \Hs, \OB{B}\cdot B \triangleright \letreadp{x}{e}{s} \xto{\sigma\lub\sigma_{pc}}}
				\\
				\com{C, \Hs, \OB{B}\cdot B\cup x\mapsto 0 : \sigma' \lub \sigma_{pc} \triangleright s}
		\end{multlined}
	}{tus-rd-com-p-weak}
\end{center}

To correctly taint memory accesses, we need to evaluate expression \com{e} to derive the accessed location \com{\abs{n}}; see, for instance, \Cref{tr:tus-up-com-p}.
This is why taint-tracking states \com{\Ot} contain the full stack of bindings \com{\Bv} and not just the taints \com{\Bt}.
The rules above rely on a judgement \com{\Be \triangleright e\bigred n : \sigma} which is obtained by joining the result of the expression semantics on the values of \com{\Be} and of the taint-tracking semantics on the taints of \com{\Be}.
\begin{center}\small
	\typerule{Combine-B}{
		\com{\Bv + \Bs} \equiv \com{\Be}
		&
		\com{\Bv \triangleright e \bigred v}
		&
		\com{\Bs \triangleright e \bigred \sigma}
	}{
		\com{\Be \triangleright e \bigred v : \sigma}
	}{e-comb}
\end{center}

The operational and taint single-steps from \Cref{sec:non-spec} are combined according to the judgement $\src{\Oe} \xtos{\lambda{^{\sigma}}} \src{\Oe'}$ below.
\begin{center}\small
	\typerule{Combine-s-\SR}{
		\src{\Ov + \Os} \equiv \src{\Oe}
		&
		\src{\Ov' + \Os'} \equiv \src{\Oe'}
		&
		\src{\Ov \xto{\lambda} \Ov'}
		&
		\src{\safeta ; \Os \xto{\sigma} \Os'}
	}{
		\src{\Oe \xto{\lambda^{\sigma}} \Oe'}
	}{s-comb}

	\typerule{Merge-$\Omega$}{
		\com{\Hv+\Hs}\equiv\com{\He}
		&
		\com{\OB{\Bv'} + \OB{\Bs}} \equiv \com{\OB{\Be}}
		&
		\com{\OB{\Bv}+\OB{\Bs}} \equiv \com{\OB{\Be'}}
	}{
		\com{C;\Hv;\OB{\Bv}\triangleright s + C;\Hs;\OB{\Be}\triangleright s'} \equiv \com{C ; \He ; \OB{\Be'} \triangleright s}
	}{merge-s}
\end{center}
The operational semantics determines how  states reduce (\src{\Ov \xto{\lambda} \Ov'}), whereas the taint-tracking semantics determines the action's label and how taints are updated (\src{\safeta ; \Os \xto{\sigma} \Os'}).
As already mentioned, the pc taint is always safe since there is no speculation in \SR.
Moreover, merging states \src{\Ov + \Os} results in ignoring the value information accumulated in \src{\Os} since  we rely on the computation performed by the operational semantics for values (\Cref{tr:merge-s}).

In the speculative semantics, as for the non-speculative one, we decouple the operational aspects from the taint-tracking ones.
At the top level, speculative program states (\trg{\Se}) are defined as stacks of extended speculation instances (\trg{\Pe}), which can be merged/split in their operational (\trg{\Pv}) and taint (\trg{\Pt}) sub-parts.
The operational part (\trg{\Pv}) was presented in \Cref{sec:modelling_speculative_execution}.
The taint part (\trg{\Pt}) keeps track of the taint part of the program state (\trg{\Ot}) and the taint of the pc (\taintt).
As before,  \trg{\Pv} and \trg{\Pt} can be split/merged as  $\trg{\Pe} \equiv \trg{\Pv + \Pt}$.
\begin{align*}
	\mi{Speculative\ States}~\trg{\Se} \bnfdef&\ \trg{\OB{\Pe}}
	\\
	\mi{Extended\ Speculation\ Instance}~\trg{\Pe} \bnfdef&\ \trg{(\Oe,\trg{w},\trgb{\taintt})}
	\\
	\mi{Speculation\ Instance\ Taint}~\trg{\Pt} \bnfdef&\ \trg{({\Ot, \taintt})}
\end{align*}

In the taint tracking used by the speculative semantics, similarly to the operational one, reductions happen at the top of the stack: $\trg{\OB{\Pt} \xltot{\trgb{\sigma}} \OB{\Pt'}}$.
Selected  rules are  below:
\begin{center}\small
	\typerule{T-\TR-speculate-action}{
		\trg{\taintt'; \Ot \xtot{\taintt} \Ot'}	
		&
		\trg{\Ot}\equiv\trg{C, \Hs, \OB{B} \triangleright s;s'}
		\\
		\trg{s}\not\equiv\trg{\ifzte{\_}{\_}{\_}} \text{ and } \trg{s}\not\equiv\trg{\lfence}
	}{
		\trg{\OB{\Pt} \cdot (\Ot,\taintt) \xltot{ \taintt'\glb\taintt } \OB{\Pt} \cdot (\Ot',\taintt)}
	}{tt-sp-act}
	\typerule{T-\TR-speculate-if}{
		\trg{\Ot}\equiv\trg{C, \Ht, \OB{B} \cdot B \triangleright \proc{s;s'}{\OB{f}\cdot f}}
		&
		\trg{s}\equiv\trg{\ifzte{e}{s''}{s'''}}
		\\
		\trg{\taintt'; \Ot \xtot{\taintt} \Ot'}	
		&
		\trg{C}\equiv\trg{\OB{F};\OB{I}}
		&
		\trg{f}\notin\trg{\OB{I}}
		\\
		\text{ if }
			\trg{B \triangleright e\bigred 0:\taintt} 
		\text{ then }
			\trg{\Ot''}\equiv\trg{C, \Ht, \OB{B} \cdot B \triangleright s''';s'}
		\\
		\text{ if }
			\trg{B \triangleright e\bigred n:\taintt} 
			\text{ and }
			\trg{n}>\trg{0}
		\text{ then }
			\trg{\Ot''}\equiv\trg{C, \Ht, \OB{B} \cdot B \triangleright s'';s'}
	}{
		\trg{\OB{\Pt} \cdot  (\Ot,\taintt') \xltot{ \taintt\glb\taintt'} \OB{\Pt} \cdot  (\Ot',\taintt')\cdot(\Ot'',\unta)}
	}{tt-sp-if}
\end{center}

In these rules, \taintt{} is the program counter taint which is combined with the action taint \trg{\taintt'} (\Cref{tr:tt-sp-act,tr:tt-sp-if}).
Mis-speculation pushes a new state on top of the stack whose program counter is tainted \trg{\unta} denoting  the beginning of speculation (\Cref{tr:tt-sp-if}).

The two operational and taint-tracking single steps from \Cref{sec:trg} are combined in a single reduction as follows:
\begin{center}\small
	\typerule{Combine-\T}{
		\trg{\OB{\Pv} + \OB{\Pt}} \equiv \trg{\Se}
		&
		\trg{\OB{\Pv'} + \OB{\Pt'}} \equiv \trg{\Se'}
		&
		\trg{\OB{\Pv} \xltot{\trgb{\lambda}} \OB{\Pv'}}
		&
		\trg{\OB{\Pt} \xltot{\taintt} \OB{\Pt'}}
	}{
		\trg{\Se \xltot{\trgb{\lambda}^{\taintt}}  \Se' }
	}{t-comb}
\end{center}
This reduction is used by the big-step semantics $\trg{\Se \Xtot{\trat{^{\taintt}}} \Se'}$ that concatenates single labels into traces, which, as before, do not contain microarchitectural actions generated by the attacker.

\section{The Spectre v2 Case}\label{sec:v2-details}

This section describes how to apply our methodology to reason about countermeasures against the Spectre v2 attack.
The Spectre v2 attack relies on  speculation over the outcome of  indirect jumps, rather than branch instructions.
When an indirect jump is encountered, if the location where to jump is not present in the cache, heuristics are used in order to understand where to jump to.
As for the speculation over branches, these heuristics can be wrong, and when this is detected, execution is rolled back.
An attacker can therefore exploit this kind of speculative execution in order to make benign code execute malicious one.
The main countermeasure against this kind of attack is the use of a \emph{retpoline}, i.e., a \emph{ret}urn-based tram\emph{poline}.
Intuitively, the retpoline replaces indirect jumps with a return to dead code, where the program will effectively sleep until the speculation window is over.

In order to prove security of the retpoline countermeasure, we therefore need the following:
\begin{itemize}
	\item add indirect jumps to our languages and give them a regular semantics (\Cref{sec:ind-j});
	\item give a speculative reduction to jump in \TR such that the location where to jump is nondeterministically chosen; this will be the start of speculation (\Cref{sec:spec-j});
	\item change the call/return semantics in order to model retpolines, i.e., have the return address explicit (\Cref{sec:callret-explicit}).
\end{itemize}
With these changes, we can formalise a compiler that introduces the retpoline countermeasure (\Cref{sec:retpoline}) and reason about whether it is secure (\Cref{sec:sec-rp}).

\subsection{Indirect Jumps}\label{sec:ind-j}
The simplest way to add indirect jumps to our while languages is to treat function names \com{f} as natural numbers and add a statement \com{goto\ e} that jumps to function \com{f} where $\com{B\triangleright e\bigred f}$.
Additionally, we need to add the way for a component to specify private functions, i.e., functions that are not callable from the attacker.
This is still generic enough that one can model the assembly-level kind of attacks without having to add a pc to all instructions or labels to the language.

\subsection{Speculative Execution of Jumps}\label{sec:spec-j}
To focus only on speculation over jumps, we would replace \Cref{tr:et-sp-if} (handling the speculation over branch instructions) with a rule that checks that the statement being executed is a \com{goto\ e} where \com{e} evaluates to \com{f}.
In that case, the right state (jumping to \com{f}) is pushed on the stack of states, but on top of that we push another state with a jump to function \com{f'\neq f}, for a non-deterministically chosen \com{f'} that is valid.

\subsection{Explicit Call and Return Semantics}\label{sec:callret-explicit}
We need to add a return address, keep track of the return address in a stack of return addresses as well as a register where the return address can be read from.
The reason is that the retpoline countermeasure relies on another kind of speculation, the one on return addresses.
Normally, architectures push the return address on the stack and in a specific register \mtt{rsp}.
When it is time to return, if the value on top of the stack differs from that on \mtt{rsp}, speculation starts, and a return to the top of the stack is made.
When speculation ends, it is rolled back (as before, with the usual microarchitectural leaks) and a return to the value of \mtt{rsp} is done.

\subsection{The Retpoline Countermeasure}\label{sec:retpoline}
The retpoline countermeasure \comprp{\cdot} is a homomorphic compiler with a single salient case: the compilation of \src{goto\ e}, where we encode the implementation of retpolines from
Compiling a \src{goto} will not rely on target-level \trg{goto}, since they would trigger the goto-speculation and result in vulnerable code.
Instead, the compilation of \src{goto} will be turned into a call to an auxiliary function \trg{aux}.
Function \trg{aux} will change the contents of register \trg{rsp} to the function where the source \src{goto} wanted to jump.
Then, function \trg{aux} will contain code that sleeps.
This way, when the compiled \src{goto} is executed, function \trg{aux} is called and the address where to the \src{goto} should have jumped to to is pushed on the stack.
This function speculatively returns to the code that sleeps and then, when speculation ends, execution resumes from the address popped from the stack (the target of the \src{goto}). 

\subsection{Security of \comprp{\cdot}}\label{sec:sec-rp}
We believe \comprp{\cdot} is \strong{\rdss{}} and we can argue that using the same proof technique described in the paper.
As before, the key part of these proofs is reasoning when speculation happens, i.e., in the gray area of \Cref{fig:proof-slh}.
In the case of \comprp{\cdot}, we see that the only code executed during speculation is sleeping code.
Additionally, once the speculation window runs out, we need to prove that the state we end up in is the same as the source state that executed the \src{goto}.
However, this last step only amounts to proving that the retpoline is correct, i.e., that it jumps where it is supposed to.

%
%
%
%
%

\section{Robustness and Attackers}\label{sec:robustness-attackers}

Typically, works that deal with Spectre attacks do not consider an active attacker, like us, but a passive one.
If we were to adopt the same view, we would have to elide the whole `robustness' aspect in our paper.
We believe that dealing with robustness and with an explicit representation of attackers has its merits (among them, applying existing secure compilation theory), and this is why we opted in favour of it.
As already mentioned, these attackers can mount confused deputy attacks~\cite{confused,confused-dg}, unlike passive ones.
Then, by using robustness we give a precise characterisation of the attacker and of its power. 
It is by having this characterisation that we can tell precisely that with a single memory shared between code and attacker, no defence mechanism is possible. 
Thus we need two memories (in the model), which gets justified in practice by saying that the attacker needs to reside in another process. 
Deriving this conclusion seems harder --if at all possible-- without a concrete notion of attacker.
Conversely, our precise definition of the attacker power also limits the scope of the attackers we can meaningfully reason about; see \Cref{sec:disc-new} for a discussion on this topic.

Thus, we need to ensure that the model faithfully comprises all attack vectors that practical attackers mounting Spectre attack rely on -- which is what we believe the model does.

Finally, this approach lets us apply existing secure compilation theory.

\newpage
\section{\SR : A Source Language Without Speculation}\label{sec:src}
\SR is a sequential, untyped while language with expressions and statements.
A \SR component (i.e., a partial program) is a collection of function definitions and imports (functions it requires of the programs it links against).
We could add more, but this suffices.
A component links against an attacker (a context) in order to create a whole program, which is then evaluated starting from its \src{main} function, which is defined in the attacker.
Expressions are given a big step semantics ($\bigreds$).
Statements are given a labelled structural operational semantics ($\xtos{}$) that records calls and returns between a component and the attacker.
The labels generated by a program are collected in a trace semantics ($\Xtos{}$), whose actions are tagged as secure or insecure.
In \SR, no speculation is possible, so only safe actions are produced; the metavariable \src{\sigma} is introduced for modularity of rules since it will be expanded in \TR.

The heap is a map from integers to values.
To prevent the attacker from operating directly on the heap of the component -- a power that is not given to him normally -- the heap consists of two parts.
The positive one is shared, while the negative one is provate to the component.
Instructions to access the private heap cannot be used in the context.

Importantly, the heap is preallocated, so all locations from 0 up to plus and minus infinity are already allocated and initialised to 0.

Call actions only contain the value passed because they model the attack where code is tricked into passing a speculatively-load value directly to the attacker.
Return actions are only needed as proof devices. 
Technically, we need two different write actions: writing on the private heap only leaks the content while writing on the public heap also leaks the value.

\subsection{Syntax}\label{sec:src-syn}
\begin{align*}
	\mi{Whole\ Programs}~\src{W} \bnfdef&\ \src{ H , \OB{F} , \OB{I}}
	\\
	\mi{Programs}~\src{P} \bnfdef&\ \src{ H , \OB{F} , \OB{I}}
	\\
	\mi{Components}~\src{C} \bnfdef&\ \src{\OB{F} , \OB{I}}
	\\
	\mi{Contexts}~\src{\ctxs{}} \bnfdef&\ \src{H , \OB{F}\hole{\cdot}}
	\\
	\mi{Imports}~\src{I} \bnfdef&\ \src{f}
	\\
	\mi{Functions}~\src{F} \bnfdef&\ \src{f(x)\mapsto s;\ret}
	\\
	\mi{Operations}~\src{\op} \bnfdef&\ \src{+} \mid \src{-} \mid \src{\cdot} %
	\\
	\mi{Comparisons}~\src{\bop} \bnfdef&\ \src{==} \mid \src{<} \mid \src{>}
	\\
	\mi{Values}~\src{v} \bnfdef&\  \src{n}\in\mb{N} %
	\\
	\mi{Expressions}~\src{e} \bnfdef&\ \src{x} \mid \src{v} \mid \src{e \op e} \mid \src{e \bop e} %
	\\
	\mi{Statements}~\src{s} \bnfdef&\ \skips \mid \src{s;s} \mid \src{\letin{x}{e}{s}} \mid \src{\ifzte{e}{s}{s}} 
		\\
		\mid&\ \src{\call{f}~e} \mid \src{\asgn{e}{e}} \mid \src{\letread{x}{e}{s}} %
			\mid \src{\letreadp{x}{e}{s}} \mid \src{\asgnp{e}{e}}
	\\
	\mi{Security\ Tags}~\src{\sigma} \bnfdef&\ \src{\safeta} \mid \src{\unta}
	\\
	\src{B} \bnfdef&\ \srce \mid \src{B ; x\mapsto v : \sigma} 
	\\
	\src{\Bv} \bnfdef&\ \srce \mid \src{\Bv ; x\mapsto v} 
	\\
	\src{\Bs} \bnfdef&\ \srce \mid \src{\Bs ; x\mapsto \sigma} 
	\\
	\src{H} \bnfdef&\ \srce \mid \src{H ; n\mapsto v : \sigma} 
	\\
	\src{\Hv} \bnfdef&\ \srce \mid \src{\Hv ; n\mapsto v} 
	\\
	\src{\Hs} \bnfdef&\ \srce \mid \src{\Hs ; n\mapsto \sigma} 
	\\
	\src{\Omega} \bnfdef&\ \src{C;H;\OB{B}\triangleright s}
	\\
	\src{\Ov} \bnfdef&\ \src{C;\Hv;\OB{\Bv}\triangleright s}
	\\
	\src{\Os} \bnfdef&\ \src{C;\Hs;\OB{B}\triangleright s}
	\\
	\mi{Labels}~\src{\lambda} \bnfdef&\ \src{\epsilon} \mid \src{\alpha} \mid \src{\delta} \mid \src{\ts}
	\\
	\mi{Actions}~\src{\alpha} \bnfdef&\ \src{(\clh{f}{v}{H})} \mid \src{(\cbh{f}{v}{H})} \mid \src{(\rth{}{H})} \mid \src{(\rbh{}{H})} 
	\\
	\mi{Heap\&Pc\ Act.s}~\src{\delta} \bnfdef&\ \src{(\rdl{n})} \mid \src{(\wrl{n})} \mid \src{(\ifl{v})} \mid \src{(\wrl{n\mapsto v})}
	\\
	\mi{Traces}~\src{\tras{^\sigma}} \bnfdef&\ \srce \mid \src{\tras{^\sigma}\cdot\alpha^\sigma} \mid \src{\tras{^\sigma}\cdot\delta^\sigma}
\end{align*}

Heaps and bindings contain the tags of the values they map.

The additional condition on a trace \src{\tras{}} is that it is list with this shape: $\src{\OB{\alpha?\OB{\delta}\alpha!}}$.
We do not filter nor reorder heap actions in traces because they represent cache-visible actions.
The attacker is assumed to operate concurrently to our program, so it can effectively observe a difference between $\src{\wrl{0}}$ and $\src{\wrl{0}}\cdot\src{\wrl{0}}$.

For simplicity of the trace semantics, reading a location is a statement (despite it being pure, and thus an expression).

In order to model conditional updates, we do not perform substitutions for variables, instead each function has its stack of bindings \src{B} where to allocate and lookup variables.
Each function can only access its stack frame for simplicity.
We concatenate whole stacks of bindings as $\src{B \cdot B'}$.
We update the bindings for \src{x} in a stack \src{B} to \src{v} by writing \src{B\cup x\mapsto v}.
If an update is made for a binding that is not in the stack, then the update just adds the binding.

The taints are safe $\safeta$ and unsafe $\unta$ and they are ordered in the usual safety lattice $\safeta\leq\unta$.
The tag of an action-generating expression is the tag of the data involved in that expression.
When data is generated (\Cref{tr:es-big-op,tr:es-big-bop}), it is tagged with the label resulting of the lub ($\sqcup$) of the label of all its subdatas.
A value (natural number, location or boolean) is safe (\Cref{tr:es-big-val}), a variable has the same tag of its content (\Cref{tr:es-big-var}).
Reading a value from the heap tags the value (and thus the variable) as unsafe (\Cref{tr:eus-rd-com}).

\subsection{Dynamic Semantics}\label{sec:src-sem}
\Cref{tr:us-aux-intern,tr:us-aux-in,tr:us-aux-out} dictate the kind of a jump between two functions: if internal to the component/attacker, in(from the attacker to the component) or out(from the component to the attacker).
\Cref{tr:plug-us} tells how to obtain a whole program from a component and an attacker.
\Cref{tr:whole-us} tells when a program is whole.
\Cref{tr:ini-us} tells the initial state of a whole program.

We change the way the list of imports is used between partial and whole programs.
For partial programs, imports are effectively imports, i.e., the functions that contexts define and that the program relies on.
So a context can define more functions.
In whole programs, we change the imports to be the list of all context defined function (\Cref{tr:plug-us}) to keep track of what is and what is not context.

\mytoprule{\text{Helpers}}
\begin{center}
	\typerule{Intfs}{
		\src{C} = \src{\OB{F} , \OB{I}}
	}{
		\src{C}.\mtt{intfs} = \src{\OB{I}}
	}{}
	\typerule{Funs}{
		\src{C} = \src{\OB{F} , \OB{I}}
	}{
		\src{C}.\mtt{funs} = \src{\OB{F}}
	}{}

	\typerule{\SR-Jump-Internal}{
		((\src{f'}\in\src{\OB{I}} \wedge \src{f}\in\src{\OB{I}}) \vee
				\\
		(\src{f'}\notin\src{\OB{I}} \wedge \src{f}\notin\src{\OB{I}}))
	}{
		\src{\OB{I}}\vdash\src{f,f'}:\src{internal}
	}{us-aux-intern}
	\typerule{\SR-Jump-IN}{
		\src{f}\in\src{\OB{I}} \wedge \src{f'}\notin\src{\OB{I}}
	}{
		\src{\OB{I}}\vdash\src{f,f'}:\src{in}
	}{us-aux-in}
	\typerule{\SR-Jump-OUT}{
		\src{f}\notin\src{\OB{I}} \wedge \src{f'}\in\src{\OB{I}}
	}{
		\src{\OB{I}}\vdash\src{f,f'}:\src{out}
	}{us-aux-out}

	\typerule{Call Label - call}{
		\src{C}.\mtt{intfs}\vdash\src{f',f} : \src{in}
	}{
		\src{C};\src{f',f} ; \src{call} ; \src{v} \vdash \src{\clh{f}{v}{}}
	}{us-aux-call}
	\typerule{Call Label - callback}{
		\src{C}.\mtt{intfs}\vdash\src{f',f} : \src{out}
	}{
		\src{C};\src{f',f} ; \src{call} ; \src{v} \vdash \src{\cbh{f}{v}{}}
	}{us-aux-callback}
	\typerule{Call Label - internal}{
		\src{C}.\mtt{intfs}\vdash\src{f',f} : \src{internal}
	}{
		\src{C};\src{f',f} ; \src{call} ; \src{v} \vdash \src{\epsilon}
	}{us-aux-call-int}

	\typerule{Ret Label - return}{
		\src{C}.\mtt{intfs}\vdash\src{f',f} : \src{in}
	}{
		\src{C};\src{f',f} ; \src{ret}  \vdash \src{\rth{f}{v}{}}
	}{us-aux-ret}
	\typerule{Ret Label - returnback}{
		\src{C}.\mtt{intfs}\vdash\src{f',f} : \src{out}
	}{
		\src{C};\src{f',f} ; \src{ret}  \vdash \src{\rbh{f}{v}{}}
	}{us-aux-retback}
	\typerule{Ret Label - internal}{
		\src{C}.\mtt{intfs}\vdash\src{f',f} : \src{internal}
	}{
		\src{C};\src{f',f} ; \src{ret}  \vdash \src{\epsilon}
	}{us-aux-ret-int}

	\typerule{\SR-Plug}{
		\src{\ctxs{}} \equiv \src{H , \OB{F}\hole{\cdot}}
		&
		\src{P}\equiv\src{ H' , \OB{F'} , \OB{I}} 
		&
		\vdash\src{\OB{F'};\OB{F} , \OB{I}}:\src{whole}
		&
		\src{main}\in\fun{names}{\src{\OB{F}}}
		\\
		\dom{\src{H}}\cap\dom{\src{H'}}=\emptyset
		\\
		\forall \src{n\mapsto v : \sigma} \in \src{H'}, \src{n}<\src{0} \text{ and } \src{\sigma}=\src{\unta}
	}{
		\src{\ctxs{}\hole{P}} = \src{H;H' , \OB{F;F'}, \dom{\src{\OB{F}}}}
	}{plug-us}
	\typerule{\SR-Whole}{
		\fun{names}{\src{\OB{F}}}\cap\fun{names}{\src{\OB{F'}}}=\emptyset
		\\
		\fun{names}{\src{\OB{I}}}\subseteq \fun{names}{\src{\OB{F}}}\cup\fun{names}{\src{\OB{F'}}}
		&
		\fun{fv}{\src{\OB{F}}}\cup\fun{fv}{\src{\OB{F'}}}=\srce
	}{
		\vdash\src{\OB{F'};\OB{F} , \OB{I}}:\src{whole}
	}{whole-us}
	\typerule{\SR-Initial State}{
		\src{H_0} = \src{H''}\cup\src{H}\cup\src{H'}
		\\
		\src{H'} = \myset{ \src{n\mapsto 0 : \safeta} }{ \src{n}\in\mb{N}\setminus\dom{\src{H}} } 
		\\
		\src{H''} = \myset{ \src{-n\mapsto 0 : \unta} }{ \src{n}\in\mb{N}, \src{-n}\notin\dom{\src{H}} }
	}{
		\SInits{(\src{H , \OB{F} , \OB{I}})} = \src{\OB{F} , \OB{I} , H_0 , \srce\cdot x\mapsto 0 \triangleright \call{main}~ x}
	}{ini-us}
	\typerule{\SR-Terminal State}{
		\nexists \src{\Omega'},\src{\lambda} . \src{\Omega \xto{\lambda} \Omega'}
	}{
		\vdash \src{\Omega} : \bot
	}{term-state}

	\typerule{Merge-B-base}{}{
		\srce + \srce = \srce
	}{}
	\typerule{Merge-B-ind}{
		\src{\Bv+\Bs}=\src{B}
	}{
		\src{\Bv ; x\mapsto v+\Bs ; x\mapsto \sigma} = \src{B ; x\mapsto v:\sigma}
	}{}
	\typerule{Merge-H-base}{}{
		\srce + \srce = \srce
	}{}
	\typerule{Merge-H-ind}{
		\src{\Hv+\Hs}=\src{H}
	}{
		\src{\Hv ; n\mapsto v+\Hs ; n\mapsto \sigma} = \src{H ; n\mapsto v:\sigma}
	}{}
	\typerule{Merge-s-\S}{
		\src{\Hv+\Hs}=\src{H}
		&
		\src{\OB{\Bv'} + \OB{\Bs}} = \src{\OB{B}}
		&
		\src{\OB{\Bv}+\OB{\Bs}}=\src{\OB{B'}}
	}{
		\src{C;\Hv;\OB{\Bv}\triangleright s + C;\Hs;\OB{B}\triangleright s'} = \src{C ; H ; \OB{B'} \triangleright s}
	}{}
\end{center}
\botrule

\subsubsection{Component Semantics}\label{src:src-sem-com}
\mytoprule{\text{Judgements}}
\begin{align*}
	&\src{\Bv \triangleright e\bigreds v}
	&&\text{Expression \src{e} big-steps to value \src{v}.}
	\\
	&\src{\Bs \triangleright e\bigreds \sigma}
	&&\text{Expression \src{e} is tainted \src{\sigma}.}
	\\
	&\src{B \triangleright e\bigreds v : \sigma}
	&&\text{Expression \src{e} big-steps to value \src{v} tagged \src{\sigma}.}
	\\
	&\src{\Ov \xtos{\lambda} \Ov'}
	&&\text{State \src{\Ov} small-steps to \src{\Ov'} and emits action \src{\lambda}.} 
	\\
	&\src{\sigma' ; \Os \xtos{\sigma} \Os'}
	&&\text{With pc tainted \src{\sigma'}, the action of state \src{\Ov} is tainted \src{\sigma}.} 
	\\
	&\src{\Omega \xtos{\lambda^\sigma} \Omega'} \text{ aka}
	\\
	&\src{C, H, \OB{B} \triangleright \proc{s}{\OB{f}}} \xtos{\lambda^\sigma} \src{C, H', \OB{B'} \triangleright \proc{s'}{\OB{f'}}} 
	&&\text{Statement \src{s} reduces to \src{s'} and evolves the rest}
	\\
	&&&\text{ accordingly, emitting tagged label \src{\lambda^\sigma}.}
	\\
	&\src{\Omega} \Xtos{\tras{^\sigma}} \src{\Omega'}
	&& \text{Program state \src{\Omega} steps to \src{\Omega'} }
	\\
	&&&\text{ emitting tagged trace \src{\tras{^\sigma}}.}
	\\
	&\src{P\sems \tras{^\sigma}}
	&& \text{Whole program \src{P} produces tagged trace \tras{^\sigma}}
\end{align*}
\botrule

\mytoprule{\src{\Bv\triangleright e \bigreds v} }
\begin{center}
	\typerule{E-\SR-val}{
	}{
		\src{\Bv \triangleright v \bigreds v}
	}{es-big-val}
	\typerule{E-\SR-var}{
		\src{\Bv}(\src{x}) = \src{v}
	}{
		\src{\Bv \triangleright x \bigreds v}
	}{es-big-var}
	\typerule{E-\SR-op}{
		\src{\Bv \triangleright e\bigreds n }	
		&
		\src{\Bv \triangleright e'\bigreds n' }
		&
		\src{n''} = [\src{n\op n'}]		
	}{
		\src{\Bv \triangleright e\op e' \bigreds n'' }
	}{es-big-op}
	\typerule{E-\SR-comparison}{
		\src{\Bv \triangleright e\bigreds n }	
		&
		\src{\Bv \triangleright e'\bigreds n' }
		&
		\src{n''} = [\src{n\bop n'}]		
	}{
		\src{\Bv \triangleright e\bop e' \bigreds n''}
	}{es-big-bop}
\end{center}
\botrule

\mytoprule{\src{\Bs\triangleright e \bigreds \sigma} }
\begin{center}
	\typerule{T-\SR-val}{
	}{
		\src{\Bs \triangleright v \bigreds \safeta}
	}{ts-big-val}
	\typerule{T-\SR-var}{
		\src{\Bs}(\src{x}) = \src{\sigma}
	}{
		\src{\Bs \triangleright x \bigreds \sigma}
	}{ts-big-var}
	\typerule{T-\SR-op}{
		\src{\Bs \triangleright e\bigreds \sigma}	
		&
		\src{\Bs \triangleright e'\bigreds \sigma'}
		&
		\src{\sigma''} = \src{\sigma}\sqcup\src{\sigma'}
	}{
		\src{\Bs \triangleright e\op e' \bigreds \sigma''}
	}{ts-big-op}
	\typerule{T-\SR-comparison}{
		\src{\Bs \triangleright e\bigreds \sigma}	
		&
		\src{\Bs \triangleright e'\bigreds \sigma'}
		&
		\src{\sigma''} = \src{\sigma}\sqcup\src{\sigma'}
	}{
		\src{\Bs \triangleright e\bop e' \bigreds \sigma''}
	}{ts-big-bop}
\end{center}
\botrule

\mytoprule{\src{B \triangleright e\bigreds v : \sigma}}
\begin{center}
	\typerule{Combine-e-\S}{
		\src{\Bv + \Bs} = \src{B}
		&
		\src{\Bv \triangleright e \bigred v}
		&
		\src{\Bs \triangleright e \bigred \sigma}
	}{
		\src{B \triangleright e \bigred v : \sigma}
	}{e-comb}
\end{center}
\botrule

The taint propagation for operations is standard, using the lub.
A more refined version is possible (i.e., not propagating taint for an op with an identity operand, or propagating taint with glb), but not needed in this case.

\mytoprule{\src{\Ov \xtos{\lambda} \Ov'} }
\begin{center}
	\typerule{E-\SR-sequence}{
	}{
		\src{C, \Hv, \OB{\Bv} \triangleright \skips;s} \xtos{\epsilon} \src{C, \Hv, \OB{\Bv} \triangleright s}
	}{eus-seq}
	\typerule{E-\SR-step}{
		\src{C, \Hv, \OB{\Bv} \triangleright s} \xtos{\lambda} \src{C, \Hv', \OB{\Bv'} \triangleright s'}
	}{
		\src{C, \Hv, \OB{\Bv} \triangleright s;s''} \xtos{\lambda} \src{C, \Hv', \OB{\Bv'} \triangleright s';s''}
	}{eus-step}
	\typerule{E-\SR-if-true}{
		\src{\Bv \triangleright e\bigreds 0}
	}{
		\src{C, \Hv, \OB{\Bv}\cdot \Bv \triangleright \ifzte{e}{s}{s'}} \xtos{(\ifl{0})} \src{C, \Hv, \OB{\Bv}\cdot \Bv \triangleright s}
	}{eus-ift}
	\typerule{E-\SR-if-false}{
		\src{\Bv \triangleright e\bigreds n} 
		& 
		\src{n}>\src{0}
	}{
		\src{C, \Hv, \OB{\Bv}\cdot \Bv \triangleright \ifzte{e}{s}{s'}} \xtos{(\ifl{n})} \src{C, \Hv, \OB{\Bv}\cdot \Bv \triangleright s}
	}{eus-iff}
	\typerule{E-\SR-letin}{
		\src{\Bv \triangleright e\bigreds v}
	}{
		\src{C, \Hv, \OB{\Bv}\cdot \Bv \triangleright \letin{x}{e}{s}} \xtos{\epsilon} \src{C, \Hv, \OB{\Bv}\cdot \Bv\cup x\mapsto v \triangleright s}
	}{eus-letin}
	\typerule{E-\SR-write}{
		\src{\Bv \triangleright e\bigreds n}
		&
		\src{\Bv \triangleright e'\bigreds v}
		\\
		\src{\Hv}=\src{{\Hv}_1; \abs{n}\mapsto v' ; {\Hv}_2}
		&
		\src{\Hv'}=\src{{\Hv}_1; \abs{n}\mapsto v ; {\Hv}_2}
	}{
		\src{C, \Hv, \OB{\Bv}\cdot \Bv \triangleright \asgn{e}{e'}} \xtos{\wrl{\abs{n} \mapsto v}} \src{C, \Hv', \OB{\Bv}\cdot \Bv \triangleright \skips} 
	}{eus-up-com}
	\typerule{E-\SR-read}{
		\src{\Bv \triangleright e\bigreds n}
		&
		\src{\Hv}=\src{{\Hv}_1; \abs{n}\mapsto v ; {\Hv}_2}
	}{
		\src{C, \Hv, \OB{\Bv}\cdot \Bv \triangleright \letread{x}{e}{s}} \xtos{\rdl{\abs{n}}} \src{C, \Hv, \OB{\Bv}\cdot \Bv\cup x\mapsto v \triangleright s}
	}{eus-rd-com}
	\typerule{E-\SR-write-prv}{
		\src{\Bv \triangleright e\bigreds n}
		&
		\src{\Bv \triangleright e'\bigreds v}
		&
		\src{n_a} = \src{-\abs{n}}
		\\
		\src{\Hv}=\src{{\Hv}_1; n_a\mapsto v' ; {\Hv}_2}
		&
		\src{\Hv'}=\src{{\Hv}_1; n_a\mapsto v ; {\Hv}_2}
	}{
		\src{C, \Hv, \OB{\Bv}\cdot \Bv \triangleright \asgnp{e}{e'}} \xtos{\wrl{n_a}} \src{C, \Hv', \OB{\Bv}\cdot \Bv \triangleright \skips} 
	}{eus-up-com-p}
	\typerule{E-\SR-read-prv}{
		\src{\Bv \triangleright e\bigreds n }
		&
		\src{n_a} = \src{-\abs{n}}
		&
		\src{\Hv}=\src{{\Hv}_1; n_a\mapsto v ; {\Hv}_2}
	}{
		\src{C, \Hv, \OB{\Bv}\cdot \Bv \triangleright \letreadp{x}{e}{s}} \xtos{\rdl{n_a}} \src{C, \Hv, \OB{\Bv}\cdot \Bv\cup x\mapsto v \triangleright s}
	}{eus-rd-com-p}
	\typerule{E-\SR-call}{
		\src{\OB{f'}} = \src{\OB{f''};f'}
		&
		\src{f(x)\mapsto s;\ret}\in\src{C}.\mtt{funs}
		\\
		\src{C};\src{f',f} ; \src{call} ; \src{v} \vdash \src{\lambda}
		&
		\src{\Bv \triangleright e\bigreds v }
	}{
		\src{C, \Hv, \OB{\Bv}\cdot \Bv \triangleright \proc{{\call{f}~e}}{\OB{f'}}} \xtos{\lambda} \src{C, \Hv, \OB{\Bv}\cdot \Bv\cdot x\mapsto v \triangleright \proc{{s;\ret}}{\OB{f'};f}}
	}{eus-call}
	\typerule{E-\SR-return}{
		\src{\OB{f'}} = \src{\OB{f''};f'}
		&
		\src{C};\src{f',f} ; \src{ret}  \vdash \src{\lambda}
	}{
		\src{C, \Hv, \OB{\Bv}\cdot \Bv \triangleright \proc{{\ret}}{\OB{f'};f}} %
		\xtos{\lambda}
		\src{C, \Hv, \OB{\Bv} \triangleright \proc{\skips}{\OB{f'}}}
	}{eus-ret}
	\end{center}
\botrule

\mytoprule{\src{\sigma_{pc}; \Os \xtos{\sigma} \Os'} }
\begin{center}
	\typerule{T-\SR-sequence}{
	}{
		\src{\sigma_{pc}; C, \Hs, \OB{B} \triangleright \skips;s \xtos{\epsilon} C, \Hs, \OB{B} \triangleright s}
	}{tus-seq}
	\typerule{T-\SR-step}{
		\src{\sigma_{pc}; C, \Hs, \OB{B} \triangleright s \xtos{\sigma} C, \Hs', \OB{\Bs'} \triangleright s'}
	}{
		\src{\sigma_{pc}; C, \Hs, \OB{B} \triangleright s;s'' \xtos{\sigma} C, \Hs', \OB{\Bs'} \triangleright s';s''}
	}{tus-step}
	\typerule{T-\SR-if-true}{
		\src{B \triangleright e\bigreds v : \sigma}
	}{
		\src{\sigma_{pc}; C, \Hs, \OB{B}\cdot B \triangleright \ifzte{e}{s}{s'}} \xtos{\sigma} \src{C, \Hs, \OB{B}\cdot B \triangleright s}
	}{tus-ift}
	\typerule{T-\SR-letin}{
		\src{B \triangleright e\bigreds v : \sigma}
	}{
		\src{\sigma_{pc}; C, \Hs, \OB{B}\cdot B \triangleright \letin{x}{e}{s}} \xtos{\epsilon} \src{C, \Hs, \OB{B}\cdot B\cup x\mapsto \sigma \triangleright s}
	}{tus-letin}
	\typerule{T-\SR-write}{
		\src{B \triangleright e\bigreds n : \sigma}
		&
		\src{B \triangleright e'\bigreds \_ : \sigma''}
		\\
		\src{\Hs'}=\src{\Hs \cup \abs{n}\mapsto \safeta }
	}{
		\src{\sigma_{pc}; C, \Hs, \OB{B}\cdot B \triangleright \asgn{e}{e'}} \xtos{\sigma\sqcup\sigma''} \src{C, \Hs', \OB{B}\cdot B \triangleright \skips} 
	}{tus-up-com}
	\typerule{T-\SR-read}{
		\src{B \triangleright e\bigreds n : \sigma}
		&
		\src{\Hs(n_a)} = \src{\sigma}
	}{
		\src{\sigma_{pc}; C, \Hs, \OB{B}\cdot B \triangleright \letread{x}{e}{s}} \xtos{\sigma'} \src{C, \Hs, \OB{B}\cdot B\cup x\mapsto 0 : \sigma \triangleright s}
	}{tus-rd-com}
	\typerule{T-\SR-write-prv}{
		\src{B \triangleright e\bigreds n : \sigma}
		&
		\src{B \triangleright e'\bigreds \_ : \sigma''}
		&
		\src{n_a} = \src{-\abs{n}}
		\\
		\src{\Hs'}=\src{\Hs \cup \abs{n}\mapsto \sigma'' }
	}{
		\src{\sigma_{pc}; C, \Hs, \OB{B}\cdot B \triangleright \asgnp{e}{e'}} \xtos{\sigma} \src{C, \Hs', \OB{B}\cdot B \triangleright \skips} 
	}{tus-up-com-p}
	\typerule{T-\SR-read-prv}{
		\src{B \triangleright e\bigreds n : \sigma'}
		&
		\src{n_a} = \src{-\abs{n}}
		&
		\src{\Hs(n_a)} = \src{\sigma}
		&
		\src{\sigma''} = \src{\sigma}\sqcup\src{\sigma'}
	}{
		\src{\sigma_{pc}; C, \Hs, \OB{B}\cdot B \triangleright \letreadp{x}{e}{s}} \xtos{\sigma''} \src{C, \Hs, \OB{B}\cdot B\cup x\mapsto 0 : \unta \triangleright s}
	}{tus-rd-com-p}
	\typerule{T-\SR-call-internal}{
		\src{C}.\mtt{intfs}\vdash\src{f,f'}:\src{internal}
		&
		\src{\OB{f'}} = \src{\OB{f''};f'}
		\\
		\src{f(x)\mapsto s;\ret}\in\src{C}.\mtt{funs}
		&
		\src{B \triangleright e\bigreds v : \sigma}
	}{
		\src{\sigma_{pc}; C, \Hs, \OB{B}\cdot B \triangleright \proc{{\call{f}~e}}{\OB{f'}}} \xtos{\epsilon} \src{C, \Hs, \OB{B}\cdot B\cdot x\mapsto \sigma \triangleright \proc{{s;\ret}}{\OB{f'};f}}
	}{tus-call-i}
	\typerule{T-\SR-call}{
		\src{\OB{f'}} = \src{\OB{f''};f'}
		&
		\src{f(x)\mapsto s;\ret}\in\src{C}.\mtt{funs}
		\\
		\src{C}.\mtt{intfs}\nvdash\src{f',f}:\src{internal}
		&
		\src{B \triangleright e\bigreds v : \sigma}
	}{
		\src{\sigma_{pc}; C, \Hs, \OB{B}\cdot B \triangleright \proc{{\call{f}~e}}{\OB{f'}}} \xtos{\sigma} \src{C, \Hs, \OB{B}\cdot B\cdot x\mapsto \safeta \triangleright \proc{{s;\ret}}{\OB{f'};f}}
	}{tus-call}
	\typerule{T-\SR-ret-internal}{
		\src{\OB{f'}} = \src{\OB{f''};f'}
		&
		\src{C}.\mtt{intfs}\vdash\src{f,f'}:\src{internal}
	}{
		\src{\sigma_{pc}; C, \Hs, \OB{B}\cdot B \triangleright \proc{{\ret}}{\OB{f'};f}} \xtos{\epsilon} \src{C, \Hs, \OB{B} \triangleright \proc{\skips}{\OB{f'}}}
	}{tus-ret-i}
	\typerule{T-\SR-return}{
		\src{\OB{f'}} = \src{\OB{f''};f'}
		&
		\src{C}.\mtt{intfs}\nvdash\src{f,f'}:\src{internal}
	}{
		\src{\sigma_{pc}; C, \Hs, \OB{B}\cdot B \triangleright \proc{{\ret}}{\OB{f'};f} \xtos{\safeta} C, \Hs, \OB{B} \triangleright \proc{\skips}{\OB{f'}}}
	}{tus-ret}
	\end{center}
\botrule
Note that we perform a trick for the taint tracking of reading values.
We do not know the value that is being read, and so we read a default 0, that, however, we taint correctly.
When the stack of bindings of the taint tracking is merged, the values stored here are not considered, only the taint, so it is ok to track 0 because it will not affect the semantics at all.

\mytoprule{\src{\Omega\xto{\lambda^\sigma}\Omega'}}
\begin{center}
	\typerule{Combine-s-\S}{
		\src{\Ov + \Os} = \src{\Omega}
		&
		\src{\Ov' + \Os'} = \src{\Omega'}
		\\
		\src{\Ov \xto{\lambda} \Ov'}
		&
		\src{\safeta ; \Os \xto{\sigma} \Os'}
	}{
		\src{\Omega \xto{\lambda^{\sigma\sqcap\safeta}} \Omega'}
	}{s-comb}
\end{center}
\botrule

Reading from the private heap taints the read value as unsafe, so no matter what you write in the private heap, it'll be always \unta.
Writing to the public heap taints the written value as safe, so no matter what you read from the public heap, it'll be always \safeta.
In both cases, the action is labelled with the lub.
If i read a private value, that value is \unta\ but the read itself may be \safeta.
If i write a public value, that value is \safeta\ but the write itself may be \unta, if done speculatively (and thus rollbacked).
Public writes taint their action with the lub of data and address because an attacker sees both effects via that read.
A private write taints only with the address because the attacker does not see the data, only the address.

\mytoprule{ \src{\Omega} \Xtos{\tra{^\sigma}} \src{\Omega'} }
\begin{center}
	\typerule{E-\SR-single}{
		\src{\Omega}
		\xtos{\alpha^\sigma}\src{\Omega'}
		&
		\src{\Omega} = \src{\OB{F},\OB{I},H,B\triangleright \proc{s}{\OB{f}\cdot f}}
		&
		\src{\Omega'} = \src{\OB{F},\OB{I},H',B'\triangleright \proc{s'}{\OB{f'}\cdot f'}}
		\\
		\text{ if } \src{f} == \src{f'} \text{ and } \src{f} \in \src{I} \text{ then } \src{\tra{^\sigma}} = \src{\epsilon} \text{ else } \src{\tra{^\sigma}} = \src{\alpha^\safeta}
	}{
		\src{\Omega}\Xtos{\tra{^\sigma}}\src{\Omega'}
	}{eus-tr-sin}
	\typerule{E-\SR-silent}{
		\src{\Omega}\xtos{\epsilon}\src{\Omega'}
	}{
		\src{\Omega}\Xtos{}\src{\Omega'}
	}{eus-tr-silent}
	\typerule{E-\SR-trans}{
		\src{\Omega}\Xtos{\tra{_1^{\sigma_1}}}\src{\Omega''}
		&
		\src{\Omega''}\Xtos{\tra{_2^{\sigma_2}}}\src{\Omega'}
	}{
		\src{\Omega}\Xtos{\tra{_1^{\sigma_1}}\cdot\tra{_2^{\sigma_2}}}\src{\Omega'}
	}{eus-tr-trans}
\end{center}
\botrule

\Cref{tr:eus-tr-sin} tells that when you have a single action, if that action is done within the context, then no action is shown.
Otherwise, if the action is done within the component, or if the action is done between component and context, then the action is shown and it is tagged as \src{\safeta}.
All generated actions are $\src{\safeta}$ despite their label (\src{\sigma}) because the only source of unsafety is speculation, which only exists in \TR.
Technically, we should take the $\sqcup$ of the pc (which here is always \src{\safeta}) and of the data, but since this always results ins \src{\safeta}, we just write \src{\safeta}.

\mytoprule{\src{P\sems \tra{^\sigma}}}
\begin{center}
	\typerule{E-\SR-trace}{
		\exists\src{\Omega}\ldotp \vdash \src{\Omega} : \bot 
		&
		\src{\SInit{P} \Xtos{\tra{^\sigma}} \Omega}
	}{
		\src{P\sems \tra{^\sigma}\ts}
	}{eus-trace}
	\typerule{E-\SR-behaviour}{
	}{
		\src{\behavs{P}} = \myset{\tras{^\sigma}}{ \src{P\sems \tra{^\sigma}} }
	}{eus-beh}
\end{center}
\botrule

All traces are finite and the set of traces includes all traces that terminate.

\paragraph{Alternative Semantics Definition}

\mytoprule{\src{P\sems \tra{^\sigma}}}
\begin{center}
	\typerule{E-\SR-trace}{
		\exists\src{\Omega}\ldotp \src{\SInit{P} \Xtos{\tra{^\sigma}} \Omega}
	}{
		\src{P\sems \tra{^\sigma}}
	}{2eus-trace}
\end{center}
\botrule
This defnition drops the condition of \src{\Omega} being final in \Cref{tr:2eus-trace}.
This way, the set of behaviours includes all prefixes of a prefix, i.e., it is subset closed.
All the compiler proofs work with this definition too without concerns.
Having this definition complicates relating \Thmref{def:dss} and \Thmref{def:rsni} so we do not use this.

\newpage
\section{\TR : Adding Speculation to \SR}
\TR extends \SR by adding the ability to speculate as well as the programming constructs that are used as countermeasures againsts speculation: a conditional move and the lfence.
\TR defines a new notion of program states in order to model speculative execution and it adds rules to the semantics of statements to capture speculation ($\trg{\xltot{}} $).
Thus, the trace alphabet of \TR is richer than the one if \SR.

Notation-wise, \TR includes all that is \SR, i.e., all that is typeset in \src{blue} also exists in \trg{red}.

\subsection{Syntax}\label{sec:trg-syn}
All elements from \SR also exist in this language.

\begin{align*}
	\mi{Statements}~\trg{s} \bnfdef&\ \cdots \mid \trg{\lfence} \mid \trg{\cmove{x}{e}{e}{s}}
	\\
	\mi{Speculative\ States}~\trg{\Sigma} \bnfdef&\ \trg{n;\OB{\Phi}}
	\\
	\mi{Speculation\ Instance}~\trg{\Phi} \bnfdef&\ \trg{(\Omega,\trg{w},\trgb{\taintt})}
	\\
	\mi{Speculation\ Instance\ Vals.}~\trg{\Pv} \bnfdef&\ \trg{n ; {\Ov , \omega }}
	\\
	\mi{Speculation\ Instance\ Taint}~\trg{\Pt} \bnfdef&\ \trg{{\Ot, \taintt}}
	\\
	\mi{Actions}~\acat{} \bnfdef&\ \trg{\rollbl} %
	\\
	\mi{Window}~\trgb{\omega} \bnfdef&\ \trg{n} \mid \trgb{\bot}
	\\
	\mi{Droppable\ Names}~\trg{D} \bnfdef&\ \trge \mid \trg{D, x}
\end{align*}

\paragraph{Reading the Notation:}
A stack of elements $e_1,\cdots, e_n$ is indicated with \OB{e}.
So, given an \OB{e}, its elements are possibly distinct.
If we want to refer to the top of a stack of elements, we will use notation $\OB{e}\cdot e$, where $e$ is the top, \OB{e} is the rest of the stack, and $e$ is possibly different from all elements in \OB{e}.

\bigskip

We define pattern-matching on windows as follows.
The form \trg{k+1} matches \trgb{\bot} and any number different from \trg{0} so that \trg{k} is interpreted as \trgb{\bot} if $\trgb{\omega}=\trgb{\bot}$ ad it is interpreted as \trg{k} if $\trgb{\omega}=\trg{k+1}$.

Droppable names are a technicality needed for cross-language relations, as explained in \Cref{sec:sim-code-fence}.

\subsection{Dynamic Semantics}\label{sec:trg-sem}
\subsubsection{Auxiliary Functions}
\begin{center}
	\typerule{Merge-S}{
		\trg{\OB{\Ov} + \OB{\Os}} = \trg{\OB{\Omega}}
	}{
		\trg{\OB{n ; \Ov , \omega } + \OB{\Os, \sigma}} = \trg{n ; \OB{\Omega , \omega , \sigma}}
	}{}
\end{center}

\subsubsection{Component Semantics}\label{src:trg-sem-com}
\begin{align*}
	&\trg{\Pv \xltot{\trat{}} \OB{\Pv'}}
	&& \text{Speculative state \trg{\Pv} evolves into stack \trg{\OB{\Pv'}} emitting action \acat{}. }
	\\
	&\trg{\OB{\Pv} \xltot{\trat{}} \OB{\Pv'}}
	&& \text{ Stack \trg{\OB{\Pv}} evolves into \trg{\OB{\Pv'}}.}
	\\
	&\trg{\Pt \xltot{\sigma} \OB{\Pt'}}
	&& \text{Speculative tainted state \trg{\Pt} evolves into stack \trg{\OB{\Pt'}} emitting taint {\taintt}. }
	\\
	&\trg{\OB{\Pt} \xltot{\taintt} \OB{\Pt'}}
	&& \text{ Stack \trg{\OB{\Pt}} evolves into \trg{\OB{\Pt'}}.}
	\\
	&\trg{\Sigma \xltot{\trat{^\sigma}} \Sigma'}
	&& \text{Speculative state \trg{\Sigma} evolves into \trg{\Sigma'} emitting tainted action \acat{^{\taintt}}. }
\end{align*}

\mytoprule{\trg{\Ov \xtot{\trgb{\lambda}} \Ov'} }
\begin{center}
	\typerule{E-\TR-lfence}{
	}{
		\trg{C, \Hv, \OB{\Bv} \triangleright \lfence} \xtot{\epsilon} \trg{C, \Hv, \OB{\Bv} \triangleright \skipt}
	}{eut-lf}
	\typerule{E-\TR-cmove-true}{
		\trg{x}\in\dom{\trg{\Bv}}
		&
		\trg{\Bv \triangleright e'\bigred 0}
		&
		\trg{\Bv \triangleright e\bigred v}
	}{
		\trg{C, \Hv, \OB{\Bv}\cdot \Bv \triangleright \cmove{x}{e}{e'}{s}} \xtot{\epsilon} \trg{C, \Hv, \OB{\Bv}\cdot \Bv\cup x\mapsto v \triangleright s}
	}{eut-cmv-t}
	\typerule{E-\TR-cmove-false}{
		\trg{x}\in\dom{\trg{\Bv}}
		&
		\trg{\Bv \triangleright e'\bigred n }
		&
		\trg{n}>\trg{0}
		&
		\trg{\Bv(x)} = \trg{v}
	}{
		\trg{C, \Hv, \OB{\Bv}\cdot \Bv \triangleright \cmove{x}{e}{e'}{s}} \xtot{\epsilon} \trg{C, \Hv, \OB{\Bv}\cdot \Bv\cup x\mapsto v \triangleright s }
	}{eut-cmv-f}
\end{center}
\botrule

\mytoprule{\trg{\taintt_{pc}; \Ot \xtot{\trgb{\taintt}} \Ot'} }
\begin{center}
	\typerule{T-\TR-lfence}{
	}{
		\trg{\taintt_{pc}; C, \Ht, \OB{B} \triangleright \lfence} \xtot{\epsilon} \trg{C, \Ht, \OB{B} \triangleright \skipt}
	}{tut-lf}
	\typerule{T-\TR-cmove-true}{
		\trg{x}\in\dom{\trg{B}}
		&
		\trg{B \triangleright e'\bigred 0 : \taintt'}
		&
		\trg{B \triangleright e\bigred v : \taintt}
		&
		\trg{\taintt'' = \taintt \sqcup \taintt'}
	}{
		\trg{\taintt_{pc}; C, \Ht, \OB{B}\cdot B \triangleright \cmove{x}{e}{e'}{s}} \xtot{\epsilon} \trg{C, H, \OB{B}\cdot B\cup x\mapsto v :\taintt'' \triangleright s}
	}{tut-cmv-t}
	\typerule{E-\TR-cmove-false}{
		\trg{x}\in\dom{\trg{B}}
		&
		\trg{B \triangleright e'\bigred n : \taintt'}
		&
		\trg{n}>\trg{0}
		&
		\trg{B(x)} = \trg{v : \taintt}
		&
		\trg{\taintt'' = \taintt \sqcup \taintt'}
	}{
		\trg{\taintt_{pc}; C,\Ht, \OB{B}\cdot B \triangleright \cmove{x}{e}{e'}{s}} \xtot{\epsilon} \trg{C, \Ht, \OB{B}\cdot B\cup x\mapsto v :\taintt'' \triangleright s }
	}{tut-cmv-f}
\end{center}
\botrule

In a conditional move we require that \trg{x} is always bound afterwards.

\mytoprule{\trg{\OB{\Pv} \xltot{\acat{}} \OB{\Pv}'}}
\begin{center}
	\typerule{E-\TR-speculate-stack}{
		\trg{\Pv \xltot{\acat{}} \OB{\Pv'}}
	}{
		\trg{\OB{\Pv}\cdot\Pv \xltot{\acat{}} \OB{\Pv}\cdot\OB{\Pv'}}
	}{}
\end{center}
\botrule

\mytoprule{\trg{\Pv \xltot{\acat{}} \OB{\Pv}'}}
\begin{center}
	\typerule{E-\TR-speculate-epsilon}{
		\trg{\Ov \xtot{\epsilon} \Ov'}	
		&
		\trg{\Ov}\equiv\trg{C, \Hv, \OB{\Bv} \triangleright s;s'}
		&
		\trg{s}\not\equiv\trg{\ifzte{\_}{\_}{\_}} \text{ and } \trg{s}\not\equiv\trg{\lfence}
	}{
		\trg{w, (\Ov,n+1) \xltot{\epsilon} w, (\Ov',n)}
	}{et-sp-eps}
	\typerule{E-\TR-speculate-lfence}{
		\trg{\Ov \xtot{\epsilon} \Ov'}	
		&
		\trg{\Ov}\equiv\trg{C, \Hv, \OB{\Bv} \triangleright s;s'}
		&
		\trg{s}\equiv\trg{\lfence}
	}{
		\trg{w, (\Ov,n+1) \xltot{\epsilon} w, (\Ov',0)}
	}{et-sp-lf}
	\typerule{E-\TR-speculate-action}{
		\trg{\Ov \xtot{\trgb{\lambda}} \Ov'}	
		&
		\trg{\Ov}\equiv\trg{C, \Hv, \OB{\Bv} \triangleright s;s'}
		&
		\trg{s}\not\equiv\trg{\ifzte{\_}{\_}{\_}} \text{ and } \trg{s}\not\equiv\trg{\lfence}
	}{
		\trg{w, (\Ov,n+1) \xltot{ \trgb{\lambda}} w, (\Ov',n)}
	}{et-sp-act}
	\typerule{E-\TR-speculate-if}{
		\trg{\Ov \xtot{\acat} \Ov'}	
		&
		\trg{\Ov}\equiv\trg{C, \Hv, \OB{\Bv} \cdot \Bv \triangleright \proc{s;s'}{\OB{f}\cdot f}}
		&
		\trg{s}\equiv\trg{\ifte{e}{s''}{s'''}}
		\\
		\trg{C}\equiv\trg{\OB{F};\OB{I}}
		&
		\trg{f}\notin\trg{\OB{I}}
		\\
		\text{ if }
			\trg{\Bv \triangleright e\bigred 0} 
		\text{ then }
			\trg{\Ov''}\equiv\trg{C, \Hv, \OB{\Bv} \cdot \Bv \triangleright s''';s'}
		\\
		\text{ if }
			\trg{\Bv \triangleright e\bigred n} 
			\text{ and }
			\trg{n}>\trg{0}
		\text{ then }
			\trg{\Ov''}\equiv\trg{C, \Hv, \OB{\Bv} \cdot \Bv \triangleright s'';s'}
		\\
		\trg{j} = \fun{min}{\trg{w},\trg{n}}
	}{
		\trg{w, (\Ov,n+1) \xltot{ \acat{} } w, (\Ov',n)\cdot w,(\Ov'',j)}
	}{et-sp-if}
	\typerule{E-\TR-speculate-rollback}{
	}{
		\trg{w, (\Ov,0) \xltot{ \rollbl } w, \trge}
	}{et-sp-rb}
	\typerule{E-\TR-speculate-rollback-stuck}{
		\vdash \trg{\Pv} : \bot
		&
		\trg{\Pv} = \trg{w, (\Ov,\omega)}
	}{
		\trg{\Pv \xltot{ \rollbl } w, \trge }
	}{et-sp-rb-s}

	\typerule{E-\TR-speculate-if-att}{
		\trg{\Ov \xtot{\acat} \Ov'}	
		&
		\trg{\Ov}\equiv\trg{C, \Hv, \OB{\Bv} \cdot \Bv \triangleright \proc{s;s'}{\OB{f}\cdot f}}
		&
		\trg{s}\equiv\trg{\ifte{e}{s''}{s'''}}
		\\
		\trg{C}\equiv\trg{\OB{F};\OB{I}}
		&
		\trg{f}\in\trg{\OB{I}}
	}{
		\trg{w, (\Ov,n+1) \xltot{ \acat{} } w, (\Ov',n)}
	}{et-sp-if}
\end{center}
\botrule

Note that our attacker essentially cannot speculate.
This is because speculation would only trigger execution of the `other' branch.
But recall that our attackers are universally quantified.
It is therefore needless to have an attacker speculate and do something while the same behaviour is anyway considered by another attacker.

If we allowed the attacker to speculate, we would have a problem: the pc taint would be \trg{\unta} and if the attacker called us, our actions would be \trg{\unta}.
However, this would be an overapproximation of our taint-tracking: as already said, those actions are perfectly valid since other attackers will make our code do that without speculation.
Thus, if we were to allow the attacker to speculate, we should not raise the pc to \trg{\unta} there, but keep it \trg{\safeta}.

\mytoprule{\trg{\OB{\Pt} \xltot{\taintt} \OB{\Pt}'}}
\begin{center}
	\typerule{T-\TR-speculate-stack}{
		\trg{\Pt \xltot{\taintt} \OB{\Pt'}}
	}{
		\trg{\OB{\Pt}\cdot\Pt \xltot{\taintt} \OB{\Pt}\cdot\OB{\Pt'}}
	}{}
\end{center}
\botrule

\mytoprule{\trg{\Pt \xltot{\taintt} \OB{\Pt'}}}
\begin{center}
	\typerule{T-\TR-speculate-epsilon}{
		\trg{\taintt; \Ot \xtot{\epsilon} \Ot'}	
		&
		\trg{\Ot}\equiv\trg{C, \Hs, \OB{B} \triangleright s;s'}
		&
		\trg{s}\not\equiv\trg{\ifzte{\_}{\_}{\_}}
	}{
		\trg{w, (\Ot,\taintt) \xltot{\epsilon} w, (\Ot',\taintt)}
	}{tt-sp-eps}
	\typerule{T-\TR-speculate-action}{
		\trg{\taintt; \Ot \xtot{\taintt'} \Ot'}	
		&
		\trg{\Ot}\equiv\trg{C, \Hs, \OB{B} \triangleright s;s'}
		&
		\trg{s}\not\equiv\trg{\ifzte{\_}{\_}{\_}} \text{ and } \trg{s}\not\equiv\trg{\lfence}
	}{
		\trg{w, (\Ot,\taintt) \xltot{ \taintt'\glb\taintt } w, (\Ot',\taintt)}
	}{tt-sp-act}
	\typerule{T-\TR-speculate-if}{
		\trg{\taintt'; \Ot \xtot{\taintt} \Ot'}	
		&
		\trg{\Ot}\equiv\trg{C, \Ht, \OB{B} \cdot B \triangleright \proc{s;s'}{\OB{f}\cdot f}}
		&
		\trg{s}\equiv\trg{\ifte{e}{s''}{s'''}}
		\\
		\trg{C}\equiv\trg{\OB{F};\OB{I}}
		&
		\trg{f}\notin\trg{\OB{I}}
		\\
		\text{ if }
			\trg{B \triangleright e\bigred 0:\taintt} 
		\text{ then }
			\trg{\Ot''}\equiv\trg{C, \Ht, \OB{B} \cdot B \triangleright s''';s'}
		\\
		\text{ if }
			\trg{B \triangleright e\bigred n:\taintt} 
			\text{ and }
			\trg{n}>\trg{0}
		\text{ then }
			\trg{\Ot''}\equiv\trg{C, \Ht, \OB{B} \cdot B \triangleright s'';s'}
	}{
		\trg{w, (\Ot,\taintt') \xltot{ \taintt\glb\taintt'} w, (\Ot',\taintt')\cdot(\Ot'',\unta)}
	}{tt-sp-if}
	\typerule{T-\TR-speculate-rollback}{
	}{
		\trg{w, (\Ot,\taintt) \xltot{ \safeta } w, \trge}
	}{tt-sp-rb}

	\typerule{T-\TR-speculate-if-attacker}{
		\trg{\taintt'; \Ot \xtot{\taintt} \Ot'}	
		&
		\trg{\Ot}\equiv\trg{C, \Ht, \OB{B} \cdot B \triangleright \proc{s;s'}{\OB{f}\cdot f}}
		&
		\trg{s}\equiv\trg{\ifte{e}{s''}{s'''}}
		\\
		\trg{C}\equiv\trg{\OB{F};\OB{I}}
		&
		\trg{f}\in\trg{\OB{I}}
	}{
		\trg{w, (\Ot,\taintt') \xltot{ \taintt\glb\taintt'} w, (\Ot',\taintt')}
	}{tt-sp-if}
\end{center}
\botrule

\mytoprule{\trg{\Sigma \xltot{\lambda^\sigma} \Sigma'}}
\begin{center}
	\typerule{Combine-\T}{
		\trg{\Sigma} = \trg{\OB{\Phi}}
		&
		\trg{\Sigma'} = \trg{\OB{\Phi'}}
		\\
		\trg{\OB{\Pv} + \OB{\Pt}} = \trg{\OB{\Phi}}
		&
		\trg{\OB{\Pv'} + \OB{\Pt'}} = \trg{\OB{\Phi'}}
		\\
		\trg{\OB{\Pv} \xltot{\lambda} \OB{\Pv'}}
		&
		\trg{\OB{\Pt} \xltot{\sigma} \OB{\Pt'}}
	}{
		\trg{\Sigma \xltot{\lambda^\sigma} \Sigma'}
	}{t-comb}
\end{center}
\botrule

We allow speculation only when it happens inside the component, not in the attacker (\Cref{tr:et-sp-if}), for this reason.
Suppose a context speculates and call us with a parameter \trg{v} that it loaded speculatively.
This is not a concern because we quantify over all attackers, so there exist also the attacker that would call us with \trg{v} without speculation.
This is the reason why the semantics simplifies the situation and does not let the context speculate.

If we allowed speculation in the context, we would also have to allow the context to possibly access the private memory.
This would be the case for a 'reverse' spectre attack, suppose the standard vulnerable snippet is in the context and that compiled code calls it with an out-of-bound parameter.
The context could effectively access the memory of the component.
Now, we do not model this for a simple reason: this attack can not be defended against if we link with things in the same address space.
If we allow to link with things in different address spaces, then this attack would not be possible anymore.
Since this is not possible, and since this is the only interesting attack that the context can mount, we do not let the context speculate.

We keep track of only those actions that occur inside the component or between component and context, not of those that happen inside the context (\Cref{tr:eut-tr-sin}).

Note that the rule for rollback is nondeterministic: this works because the + of value and taint states is only valid if its sub-stacks have the same cardinality.
Since the only operation that eliminates from the stack is a rollback, the last two rules can only be used in conjunction with their corresponding rule on value states.

\mytoprule{ \trg{\Sigma \Xtot{\trat{^\sigma}} \Sigma'} }
\begin{center}
	\typerule{E-\TR-single}{
		\trg{\Sigma \xltot{\acat{^{\taintt}}} \Sigma'}
		\\
		\trg{\Sigma} = \trg{(w, \OB{(\Omega,m,\taintt)}\cdot (\OB{F},\OB{I},H,\OB{B}\triangleright \proc{s}{\OB{f}\cdot f} , n, \taintt')) }
		\\
		\trg{\Sigma'} = \trg{(w, \OB{(\Omega,m,\taintt)}\cdot (\OB{F},\OB{I},H',\OB{B'}\triangleright \proc{s'}{\OB{f'}\cdot f'} , n', \taintt'')) }
		\\
		\text{ if } \trg{f} == \trg{f'} \text{ and } \trg{f} \in \trg{I} \text{ then } \trg{\trat{^{\taintt}}} = \trg{\epsilon} \text{ else } \trg{\trat{^{\taintt}}} = \trg{\acat{^{\taintt}}}
		\\
		\text{ if } \trg{\acat{^{\taintt}}} == \trg{\rollbl^\safeta} \text{ then } \trg{j}=\trg{n} \text{ else } \trg{j}=\trg{n+1}
	}{
		\trg{(n,\Sigma) \Xtot{\trat{^{\taintt}}} (j,\Sigma')}
	}{eut-tr-sin}
	\typerule{E-\TR-silent}{
		\trg{(n,\Sigma) \Xtot{\trat{_1^{\taintt}}} (n',\Sigma'')}
		&
		\trg{\Sigma'' \xltot{ \epsilon } \Sigma'}
	}{
		\trg{(n,\Sigma) \Xtot{\trat{_1^{\taintt}}} (n'+1,\Sigma')}
	}{eut-tr-silent}
	\typerule{E-\TR-init}
	{
		\
	}{
		\trg{(n,\Sigma) \Xtot{\epsilon} (n,\Sigma)}
	}{eut-tr-init}
\end{center}
\botrule

\mytoprule{\text{Helpers}}
\begin{center}
	\typerule{\TR-Initial State}{
		\trg{H_0} = \trg{H''}\cup\trg{H}\cup\trg{H'}
		\\
		\trg{H'} = \myset{ \trg{n\mapsto 0 : \safeta} }{ \trg{n}\in\mb{N}\setminus\dom{\trg{H}} } 
		\\
		\trg{H''} = \myset{ \trg{-n\mapsto 0 : \unta} }{ \trg{n}\in\mb{N}, \trg{-n}\notin\dom{\trg{H}}  }
		\\
		\trg{\Sigma}\equiv  \trg{ w, (\OB{F} ; \OB{I} ; H_0 ; \trge\cdot x\mapsto 0 \triangleright \call{main}~ x ; \skipt , \bot, \safeta)}
	}{
		\SInitt{(\trg{H ; \OB{F} ; \OB{I}})} = \trg{(0,\Sigma)}
	}{ini-ut}
	\typerule{E-\TR-trace}{
		\exists\trg{\Sigma}\ldotp 
		\trg{\Pv + \Pt} = \trg{\Sigma} \text{ and }
		\vdash \trg{\Pv} : \bot_f 
		&
		\trg{\SInitt{P} \Xtot{\trat{^{\taintt}}} (\_,\Sigma)}
	}{
		\trg{P\semt \trat{^{\taintt}}\trgb{\ts}}
	}{eut-trace}
	\typerule{E-\TR-behaviour}{
	}{
		\trg{\behavt{P}} = \myset{\trat{^{\taintt}}}{ \trg{P\semt \trat{^{\taintt}}} }
	}{eut-beh}
	\typerule{\TR-Terminal State}{
		\trg{\Pv} = \trg{w, \OB{(\Ov,\trgb{\omega})}\cdot(\Ov,\omega)}
		&
		\nexists \trg{\Ov'},\trgb{\lambda} . \trg{\Ov \xtot{\trgb{\lambda}} \Ov'}
	}{
		\vdash \trg{\Pv} : \bot
	}{term-state-t}
	\typerule{\TR-Terminal Ending State}{
		\trg{\Pv} = \trg{w, (\Ov,\bot)}
		&
		\vdash \trg{\Pv} : \bot
	}{
		\vdash \trg{\Pv} : \bot_f
	}{term-state-t}
\end{center}
\botrule

The speculative semantics starts with the pc tag as safe (\trg{\safeta}, \Cref{tr:ini-ut}).
Any misspeculation sets the pc tag as unsafe (\trg{\unta}, \Cref{tr:et-sp-if}).
When a speculation is rolled back, the pc returns to be the one set previously, so when all speculation is rolled back, the pc will return to be \trg{\safeta}, otherwise it'll be \trg{\unta}.
Roll backs (\Cref{tr:et-sp-rb}) are triggered by the window reaching 0. Rules \Cref{tr:et-sp-eps} and \Cref{tr:et-sp-act} decrement the window, whereas  \trg{\lfence} sets the remaining window to 0 (\Cref{tr:et-sp-lf}).

When an action is generated (\Cref{tr:et-sp-act}) it is tagged with the label resulting of the glb ($\glb$) of the label of the action and the label of the pc, thus:
\begin{itemize}
	\item a safe action-generating expression (\safeta) done while not speculating (\safeta) generates a safe action $\safeta\glb\safeta=\safeta$.
	\item an unsafe action-generating expression (\unta) done while not speculating (\safeta) generates a safe action $\unta\glb\safeta=\safeta$.
	\item a safe action-generating expression (\safeta) done while speculating (\unta) generates a safe action $\safeta\glb\unta=\safeta$.
	\item an unsafe action-generating expression (\unta) done while speculating (\unta) generates an unsafe action $\unta\glb\unta=\unta$.
\end{itemize}

\subsection{Alternative Modelling of Attacker Speculation}
We could have let the attacker always speculate.
Then we'd need to change how we keep track of the pc taint.
In fact, all actions done during the speculation started by an attacker can also be done by another attacker (whose code simply has the if guards negated).
Thus, those actions must not be considered unsafe, and thus the pc taint is left \trg{\safeta} (else branch on line 2).

\begin{center}
	\typerule{E-\TR-speculate-if-alt}{
		\trg{\Ov \xtot{\acat} \Ov'}	
		&
		\trg{\Ov}\equiv\trg{C, \Hv, \OB{\Bv} \cdot \Bv \triangleright \proc{s;s'}{\OB{f}\cdot f}}
		&
		\trg{s}\equiv\trg{\ifte{e}{s''}{s'''}}
		\\
		\text{ if }
			\trg{\Bv \triangleright e\bigred 0} 
		\text{ then }
			\trg{\Ov''}\equiv\trg{C, \Hv, \OB{\Bv} \cdot \Bv \triangleright s''';s'}
		\\
		\text{ if }
			\trg{\Bv \triangleright e\bigred n} 
			\text{ and }
			\trg{n}>\trg{0}
		\text{ then }
			\trg{\Ov''}\equiv\trg{C, \Hv, \OB{\Bv} \cdot \Bv \triangleright s'';s'}
		\\
		\trg{j} = \fun{min}{\trg{w},\trg{n}}
	}{
		\trg{w, (\Ov,n+1) \xltot{ \acat{} } w, (\Ov',n)\cdot w,(\Ov'',j)}
	}{et-sp-if-alt}

	\typerule{T-\TR-speculate-if-alt}{
		\trg{\taintt'; \Ot \xtot{\taintt} \Ot'}	
		&
		\trg{\Ot}\equiv\trg{C, \Ht, \OB{B} \cdot B \triangleright \proc{s;s'}{\OB{f}\cdot f}}
		&
		\trg{s}\equiv\trg{\ifte{e}{s''}{s'''}}
		\\
		\trg{C}\equiv\trg{\OB{F};\OB{I}}
		&
		\text{ if } \trg{f}\notin\trg{\OB{I}} \text{ then } \trg{\taintt_f}=\trg{\unta} \text{ else } \trg{\taintt_f} = \trg{\safeta}
		\\
		\text{ if }
			\trg{B \triangleright e\bigred 0:\taintt} 
		\text{ then }
			\trg{\Ot''}\equiv\trg{C, \Ht, \OB{B} \cdot B \triangleright s''';s'}
		\\
		\text{ if }
			\trg{B \triangleright e\bigred n:\taintt} 
			\text{ and }
			\trg{n}>\trg{0}
		\text{ then }
			\trg{\Ot''}\equiv\trg{C, \Ht, \OB{B} \cdot B \triangleright s'';s'}
	}{
		\trg{w, (\Ot,\taintt') \xltot{ \taintt\glb\taintt'} w, (\Ot',\taintt')\cdot(\Ot'',\taintt_f)}
	}{tt-sp-if-alt}
\end{center}

\newpage
\section{\weak{\SR}: a Language with Weak Taint-Tracking}\label{sec:weak-src}
This language considers a different read label, which is manifested in the read semantics rule, as well as a different form of taint tracking.
\begin{align*}
	\mi{Heap\&Pc\ Act.s}~\src{\delta} \bnfdef&\ \cdots \mid \src{(\rdl{n\mapsto v})}
\end{align*}
\begin{center}
	\typerule{E-\SR-read}{
		\src{\Bv \triangleright e\bigreds n}
		&
		\src{\Hv}=\src{{\Hv}_1; \abs{n}\mapsto v ; {\Hv}_2}
	}{
		\src{C, \Hv, \OB{\Bv}\cdot \Bv \triangleright \letread{x}{e}{s}} \xtos{\rdl{\abs{n}\mapsto v}} \src{C, \Hv, \OB{\Bv}\cdot \Bv\cup x\mapsto v \triangleright s}
	}{eus-rd-com-2}

	\typerule{T-\SR-read-prv-weak}{
		\src{B \triangleright e\bigred n : \sigma'}
		&
		\src{n_a} = \src{-\abs{n}}
		&
		\src{\Hs(n_a)} = \src{\sigma''}
		&
		\src{\sigma} = \src{\sigma''\sqcup\sigma'}
	}{
			\src{\sigma_{pc}; C, \Hs, \OB{B}\cdot B \triangleright \letreadp{x}{e}{s} \xto{\sigma\sqcap\sigma_{pc}}
				C, \Hs, \OB{B}\cdot B\cup x\mapsto 0 : \sigma' \sqcap \sigma_{pc} \triangleright s}
	}{tus-rd-com-p-weak}
\end{center}

\section{\weak{\TR}: a Language with Weak Leaks}\label{sec:weak-trg}
This language is obtained by adopting the changes made for \weak{\SR} in \TR. 
\newpage
\section{Examples}\label{sec:example}
In this section we write the classical spectre-susceptible program in both \SR and \TR and see what kind of semantics it yields.

\begin{example}[The classical Spectre V1 attack]\label{ex:spv1}
	We have this pseudocode:
	\begin{lstlisting}
		if (y < size) {
			temp = B[A[y] * 512]
		}
	\end{lstlisting}
	The program checks whether the index stored in the variable y is less than the size of the array \lstinline{A}, stored in the variable size. 
	If that is the case, the program retrieves \lstinline{A[y]}, amplifies it with a multiple (here: 512) of the cache line size, and uses the result as an address for accessing the array \lstinline{B}.
	The memory accesses in line 2 may be executed even if \lstinline{y >= size}.
	However, the speculatively executed memory accesses leave a footprint in the microarchitectural state, in particular in the cache, which enables an adversary to retrieve \lstinline{A[y]}, even for \lstinline{y >= size}, by probing the array \lstinline{B}.

	\begin{align*}
		&
		\src{
			\left.\begin{aligned}
				\src{
					\begin{aligned}[c]
					&
					\src{-1\mapsto 0 : \safeta} ;
					\\
					&
					\src{-2\mapsto 10 : \safeta}
					\\
					&
					\src{-3\mapsto 10 : \safeta}
					\\
					&
					\src{-4\mapsto 30 : \safeta}
					\end{aligned}
				}
				~,
				&~
				\src{
					\begin{aligned}[c]
						&
						\OB{B} \cdot
						\\
						&
						\src{y \mapsto 0 : \safeta}
						;
						\\
						&
						\src{ size \mapsto 3 : \safeta}
					\end{aligned}
				}
				~\triangleright
				&~
				\src{
					\begin{aligned}[c]
						&
						\ifztes{
							\src{(y < size)}
						}{
							\\
							&\ \
							\begin{aligned}
								&
								\src{\letreadps{x_a}{1+y}{}}
								\\
								&
								\src{\letreadps{x_b}{4+x_a}{}}
								\\
								&
								\src{\letins{temp}{x_b}{\skips}}
							\end{aligned}
							\\
							&
						}{
							\skips
						}		
					\end{aligned}
				}
			\end{aligned}\right\}\src{\Omega}
		}
		\\
		&\text{\Cref{tr:eus-ift}}
		\\
		&
		\text{ since } \src{\src{y \mapsto 0 ; size \mapsto 1} \triangleright (y < size) \bigreds 0 : \safeta}
		\\
		\xtos{(\ifl{0})^\safeta}
		&
		\src{
			\left.\begin{aligned}
				\src{
					\begin{aligned}[c]
					&
					\src{-1\mapsto 0 : \safeta} ;
					\\
					&
					\src{-2\mapsto 10 : \safeta}
					\\
					&
					\src{-3\mapsto 10 : \safeta}
					\\
					&
					\src{-4\mapsto 30 : \safeta}
					\end{aligned}
				}
				~,
				&~
				\src{
					\begin{aligned}[c]
						&
						\OB{B} \cdot
						\\
						&
						\src{y \mapsto 0 : \safeta}
						;
						\\
						&
						\src{ size \mapsto 3 : \safeta}
					\end{aligned}
				}
				~\triangleright
				&~
				\src{
					\begin{aligned}[c]
						&
						\src{\letreadps{x_a}{1+y}{}}
						\\
						&
						\src{\letreadps{x_b}{4+x_a}{}}
						\\
						&
						\src{\letins{temp}{x_b}{\skips}}
					\end{aligned}
				}
			\end{aligned}\right\}\src{\Omega_1}
		}	
		\\
		&\text{\Cref{tr:eus-rd-com}}
		\\
		&\text{ since } \src{\src{\cdots} \triangleright 1+y \bigreds 1 : \safeta}  \text{ and } \src{\safeta}\sqcup\src{\safeta} = \src{\safeta}
		\\
		\xtos{\rdl{-1}^\safeta}
		&
		\src{
			\left.\begin{aligned}
				\src{
					\begin{aligned}[c]
					&
					\src{-1\mapsto 0 : \safeta} ;
					\\
					&
					\src{-2\mapsto 10 : \safeta}
					\\
					&
					\src{-3\mapsto 10 : \safeta}
					\\
					&
					\src{-4\mapsto 30 : \safeta}
					\end{aligned}
				}
				~,
				&~
				\src{
					\begin{aligned}[c]
						&
						\OB{B} \cdot
						\\
						&
						\src{y \mapsto 0 : \safeta}
						;
						\\
						&
						\src{ size \mapsto 3 : \safeta}
						\\
						&
						\src{x_a \mapsto 0 : \unta}
					\end{aligned}
				}
				~\triangleright
				&~
				\src{
					\begin{aligned}[c]
						&
						\src{\letreadps{x_b}{4+x_a}{}}
						\\
						&
						\src{\letins{temp}{x_b}{\skips}}
					\end{aligned}
				}
			\end{aligned}\right\}\src{\Omega_2}
		}	
		\\
		&\text{\Cref{tr:eus-rd-com}}
		\\
		&\text{ since } \src{\src{\cdots} \triangleright x_a \bigreds 4 : \unta} \text{ and } \src{\safeta}\sqcup\src{\unta} = \src{\unta}
		\\
		\xtos{\rdl{-4}^{ \unta }}
		&
		\src{
			\left.\begin{aligned}
				\src{
					\begin{aligned}[c]
					&
					\src{-1\mapsto 0 : \safeta} ;
					\\
					&
					\src{-2\mapsto 10 : \safeta}
					\\
					&
					\src{-3\mapsto 10 : \safeta}
					\\
					&
					\src{-4\mapsto 30 : \safeta}
					\end{aligned}
				}
				~,
				&~
				\src{
					\begin{aligned}[c]
						&
						\OB{B} \cdot
						\\
						&
						\src{y \mapsto 0 : \safeta}
						;
						\\
						&
						\src{ size \mapsto 3 : \safeta}
						\\
						&
						\src{x_a \mapsto 0 : \unta}
						\\
						&
						\src{x_b \mapsto 30 : \unta}
					\end{aligned}
				}
				~\triangleright
				&~
				\src{
					\begin{aligned}[c]
						&
						\src{\letins{temp}{x_b}{\skips}}
					\end{aligned}
				}
			\end{aligned}\right\}\src{\Omega_3}
		}	
		\\
		&\text{\Cref{tr:eus-letin}}
		\\
		\xtos{\epsilon}
		&
		\src{
			\left.\begin{aligned}
				\src{
					\begin{aligned}[c]
					&
					\src{-1\mapsto 0 : \safeta} ;
					\\
					&
					\src{-2\mapsto 10 : \safeta}
					\\
					&
					\src{-3\mapsto 10 : \safeta}
					\\
					&
					\src{-4\mapsto 30 : \safeta}
					\end{aligned}
				}
				~,
				&~
				\src{
					\begin{aligned}[c]
						&
						\OB{B} \cdot
						\\
						&
						\src{y \mapsto 0 : \safeta}
						;
						\\
						&
						\src{ size \mapsto 3 : \safeta}
						\\
						&
						\src{x_a \mapsto 0 : \unta}
						\\
						&
						\src{x_b \mapsto 30 : \unta}
						\\
						&
						\src{temp \mapsto 30 : \unta}
					\end{aligned}
				}
				~\triangleright
				&~
				\src{
					\begin{aligned}[c]
						&
						\skips
					\end{aligned}
				}
			\end{aligned}\right\}\src{\Omega_4}
		}
	\end{align*}
	
	This is not going to be a problem for \SR, when calculating the trace semantics, all actions are then turned into safe since there is no speculation in \SR.
	Thus, this program performs the following action:
	\begin{center}\small
		\AxiomC{  }
		\UnaryInfC{ $\src{\Omega\Xtos{}\Omega}$}
			\AxiomC{ $\src{ \Omega \xtos {(\ifl{0})^\safeta} \Omega_1}$ }
		\BinaryInfC{ $\src{\Omega \Xtos{(\ifl{0})^\safeta} \Omega_1}$ }
			\AxiomC{ $\src{\Omega_1 \xtos{\rdl{\ell_A}^{ \safeta }} \Omega_2}$ }
		\BinaryInfC{ $\src{\Omega \Xtos{(\ifl{0})^\safeta \cdot\rdl{\ell_A}^{ \safeta }} \Omega_2}$ }
			\AxiomC{ $\src{\Omega_2 \xtos{\rdl{\ell_B}^{ \unta }} \Omega_3}$ }
		\BinaryInfC{ $\src{\Omega \Xtos{(\ifl{0})^\safeta \cdot\rdl{\ell_A}^{ \safeta } \cdot \rdl{\ell_B}^{ \safeta }} \Omega_3}$ }
			\AxiomC{ $\src{\Omega_3 \xtos{\epsilon} \Omega_4}$ }
		\BinaryInfC{ $\src{\Omega \Xtos{(\ifl{0})^\safeta \cdot\rdl{\ell_A}^{ \safeta } \cdot \rdl{\ell_B}^{ \safeta }} \Omega_4}$ }
		\DisplayProof
	\end{center}

	If we consider the same execution in \TR, however, something different happens when considering the trace semantics.
	Each individual action is generated as before, but the ``if-then-else'' will trigger a speculation, which raises the pc tag to \trg{\unta}.
	We start from a different state with \trg{y\mapsto 2}.

	\begin{align*}
		&
		\trg{
			\left.\begin{aligned}
				\trg{
					\begin{aligned}[c]
					&
					\trg{-1\mapsto 10 : \safeta} ;
					\\
					&
					\trg{-2\mapsto 0 : \safeta}
					\\
					&
					\trg{-3\mapsto 10 : \safeta}
					\\
					&
					\trg{-4\mapsto 30 : \safeta}
					\end{aligned}
				}
				~,
				&~
				\trg{
					\begin{aligned}[c]
						&
						\OB{B} \cdot
						\\
						&
						\trg{y \mapsto 2 : \safeta}
						;
						\\
						&
						\trg{ size \mapsto 1 : \safeta}
					\end{aligned}
				}
				~\triangleright
				&~
				\trg{
					\begin{aligned}[c]
						&
						\ifztet{
							\trg{(y < size)}
						}{
							\\
							&\ \
							\begin{aligned}
								&
								\trg{\letreadpt{x_a}{0+y}{}}
								\\
								&
								\trg{\letreadpt{x_b}{4+x_a}{}}
								\\
								&
								\trg{\letint{temp}{x_b}{\skipt}}
							\end{aligned}
							\\
							&
						}{
							\skipt
						}		
					\end{aligned}
				}
			\end{aligned}\right\}\trg{\Omega}
		}
		\\
		&
		\text{ since } \trg{\trg{y \mapsto 2 ; size \mapsto 1} \triangleright (y < size) \bigredt 1 : \safeta}
		\\
		\xtot{(\ifl{1})^\safeta}
		&
		\trg{
			\left.\begin{aligned}
				\trg{
					\begin{aligned}[c]
					&
					\trg{-1\mapsto 10 : \safeta} ;
					\\
					&
					\trg{-2\mapsto 0 : \safeta}
					\\
					&
					\trg{-3\mapsto 10 : \safeta}
					\\
					&
					\trg{-4\mapsto 30 : \safeta}
					\end{aligned}
				}
				~,
				&~
				\trg{
					\begin{aligned}[c]
						&
						\OB{B} \cdot
						\\
						&
						\trg{y \mapsto 2 : \safeta}
						;
						\\
						&
						\trg{ size \mapsto 1 : \safeta}
					\end{aligned}
				}
				~\triangleright
				&~
				\trg{
					\begin{aligned}[c]
						&
						\skipt
					\end{aligned}
				}
			\end{aligned}\right\}\trg{\Omega'}
		}
		\\
		\\
		&
		\text{ the rest of the states needed for speculation are :}
		\\
		\xtot{(\ifl{0})^\safeta}
		&
		\trg{
			\left.\begin{aligned}
				\trg{
					\begin{aligned}[c]
					&
					\trg{-1\mapsto 10 : \safeta} ;
					\\
					&
					\trg{-2\mapsto 0 : \safeta}
					\\
					&
					\trg{-3\mapsto 10 : \safeta}
					\\
					&
					\trg{-4\mapsto 30 : \safeta}
					\end{aligned}
				}
				~,
				&~
				\trg{
					\begin{aligned}[c]
						&
						\OB{B} \cdot
						\\
						&
						\trg{y \mapsto 2 : \safeta}
						;
						\\
						&
						\trg{ size \mapsto 1 : \safeta}
					\end{aligned}
				}
				~\triangleright
				&~
				\trg{
					\begin{aligned}[c]
						&
						\trg{\letreadpt{x_a}{0+y}{}}
						\\
						&
						\trg{\letreadpt{x_b}{4+x_a}{}}
						\\
						&
						\trg{\letint{temp}{x_b}{\skipt}}
					\end{aligned}
				}
			\end{aligned}\right\}\trg{\Omega_1}
		}	
		\\
		\\
		&\text{ since } \trg{\trg{\cdots} \triangleright 0+y \bigredt 2 : \safeta}  \text{ and } \trg{\safeta}\sqcup\trg{\safeta} = \trg{\safeta}
		\\
		\xtot{\rdl{-2}^\safeta}
		&
		\trg{
			\left.\begin{aligned}
				\trg{
					\begin{aligned}[c]
					&
					\trg{-1\mapsto 10 : \safeta} ;
					\\
					&
					\trg{-2\mapsto 0 : \safeta}
					\\
					&
					\trg{-3\mapsto 10 : \safeta}
					\\
					&
					\trg{-4\mapsto 30 : \safeta}
					\end{aligned}
				}
				~,
				&~
				\trg{
					\begin{aligned}[c]
						&
						\OB{B} \cdot
						\\
						&
						\trg{y \mapsto 2 : \safeta}
						;
						\\
						&
						\trg{ size \mapsto 1 : \safeta}
						\\
						&
						\trg{x_a \mapsto 0 : \unta}
					\end{aligned}
				}
				~\triangleright
				&~
				\trg{
					\begin{aligned}[c]
						&
						\trg{\letreadpt{x_b}{4+x_a}{}}
						\\
						&
						\trg{\letint{temp}{x_b}{\skipt}}
					\end{aligned}
				}
			\end{aligned}\right\}\trg{\Omega_2}
		}	
		\\
		\\
		&\text{ since } \trg{\trg{\cdots} \triangleright x_a \bigredt 4 : \unta} \text{ and } \trg{\safeta}\sqcup\trg{\unta} = \trg{\unta}
		\\
		\xtot{\rdl{-4}^{ \unta }}
		&
		\trg{
			\left.\begin{aligned}
				\trg{
					\begin{aligned}[c]
					&
					\trg{-1\mapsto 10 : \safeta} ;
					\\
					&
					\trg{-2\mapsto 0 : \safeta}
					\\
					&
					\trg{-3\mapsto 10 : \safeta}
					\\
					&
					\trg{-4\mapsto 30 : \safeta}
					\end{aligned}
				}
				~,
				&~
				\trg{
					\begin{aligned}[c]
						&
						\OB{B} \cdot
						\\
						&
						\trg{y \mapsto 2 : \safeta}
						;
						\\
						&
						\trg{ size \mapsto 1 : \safeta}
						\\
						&
						\trg{x_a \mapsto 2 : \unta}
						\\
						&
						\trg{x_b \mapsto 30 : \unta}
					\end{aligned}
				}
				~\triangleright
				&~
				\trg{
					\begin{aligned}[c]
						&
						\trg{\letint{temp}{x_b}{\skipt}}
					\end{aligned}
				}
			\end{aligned}\right\}\trg{\Omega_3}
		}	
		\\
		\\
		\xtot{\epsilon}
		&
		\trg{
			\left.\begin{aligned}
				\trg{
					\begin{aligned}[c]
					&
					\trg{-1\mapsto 10 : \safeta} ;
					\\
					&
					\trg{-2\mapsto 0 : \safeta}
					\\
					&
					\trg{-3\mapsto 10 : \safeta}
					\\
					&
					\trg{-4\mapsto 30 : \safeta}
					\end{aligned}
				}
				~,
				&~
				\trg{
					\begin{aligned}[c]
						&
						\OB{B} \cdot
						\\
						&
						\trg{y \mapsto 2 : \safeta}
						;
						\\
						&
						\trg{ size \mapsto 1 : \safeta}
						\\
						&
						\trg{x_a \mapsto 4 : \unta}
						\\
						&
						\trg{x_b \mapsto 30 : \unta}
						\\
						&
						\trg{temp \mapsto 30 : \unta}
					\end{aligned}
				}
				~\triangleright
				&~
				\trg{
					\begin{aligned}[c]
						&
						\skipt
					\end{aligned}
				}
			\end{aligned}\right\}\trg{\Omega_4}
		}
	\end{align*}
	We now take a look at the speculating semantics for this program.
	We assume we are not already speculating and that \trg{\Omega} is our starting state.
	For simplicity, we consider a speculation window of $4$ steps.

	Given these states, we have the following reductions
	\begin{align*}
		\trg{\Sigma} =&\ \trg{(4, \trge\cdot(\Omega,4,\safeta))}
		\\
		\trg{\Sigma_1} =&\ \trg{(4, \trge\cdot(\Omega',3,\safeta)\cdot(\Omega_1,3,\unta) )}
		\\
		\trg{\Sigma_2} =&\ \trg{(4, \trge\cdot(\Omega',3,\safeta)\cdot(\Omega_2,2,\unta) )}
		\\
		\trg{\Sigma_3} =&\ \trg{(4, \trge\cdot(\Omega',3,\safeta)\cdot(\Omega_3,1,\unta) )}
		\\
		\trg{\Sigma_4} =&\ \trg{(4, \trge\cdot(\Omega',3,\safeta)\cdot(\Omega_4,0,\unta) )}
		\\
		\trg{\Sigma'} =&\ \trg{(4, \trge\cdot(\Omega',3,\safeta))}
	\end{align*}
	\begin{align*}
		\trg{\underbrace{\trg{(4, \trge\cdot(\Omega,4,\safeta))}}_{\Sigma} \xltot{(\ifl{0})^\safeta} \underbrace{\trg{(4, \trge\cdot(\Omega',3,\safeta)\cdot(\Omega_1,3,\unta) )}}_{\Sigma_1}} &\text{ by \Cref{tr:et-sp-if}}
		\\
		\trg{\underbrace{\trg{(4, \trge\cdot(\Omega',3,\safeta)\cdot(\Omega_1,3,\unta) )}}_{\Sigma_1} \xltot{(\rdl{\ell_A})^\safeta} \underbrace{\trg{(4, \trge\cdot(\Omega',3,\safeta)\cdot(\Omega_2,2,\unta) )}}_{\Sigma_2}} &\text{ by \Cref{tr:et-sp-act}}
		\\
		\trg{\underbrace{\trg{(4, \trge\cdot(\Omega',3,\safeta)\cdot(\Omega_2,2,\unta) )}}_{\Sigma_2} \xltot{(\rdl{\ell_B})^\unta} \underbrace{\trg{(4, \trge\cdot(\Omega',3,\safeta)\cdot(\Omega_3,1,\unta) )}}_{\Sigma_3}} &\text{ by \Cref{tr:et-sp-act}}
		\\
		\trg{\underbrace{\trg{(4, \trge\cdot(\Omega',3,\safeta)\cdot(\Omega_3,1,\unta) )}}_{\Sigma_3} \xltot{} \underbrace{\trg{(4, \trge\cdot(\Omega',3,\safeta)\cdot(\Omega_4,0,\unta) )}}_{\Sigma_4}} &\text{ by \Cref{tr:et-sp-eps}}
		\\
		\trg{\underbrace{\trg{(4, \trge\cdot(\Omega',3,\safeta)\cdot(\Omega_4,0,\unta) )}}_{\Sigma_4} \xltot{\rollbl} \underbrace{\trg{(4, \trge\cdot(\Omega',3,\safeta))}}_{\Sigma'}} &\text{ by \Cref{tr:et-sp-rb}}
	\end{align*}
	The crucial reduction is the third one, where the label is tagged as \trg{\unta} since the pc tag is \trg{\unta} and the label itself is \trg{\unta}.
	The second reduction is tagged \trg{\safeta} since the action itself is \trg{\safeta} and so it is ok to perform it even under speculation.

	For simplicity, we omit the \trg{n} parameter in the $\trg{(n,\Sigma)\Xtot{}(n',\Sigma')}$ reductions.

	\begin{center}\small
		\AxiomC{}
		\UnaryInfC{
			$\trg{\Sigma\Xtot{}\Sigma}$
		}
			\AxiomC{$ \trg{\Sigma \xltot{(\ifl{0})^\safeta} \Sigma_1} $}
		\BinaryInfC{$ \trg{\Sigma \Xtot{(\ifl{0})^\safeta} \Sigma_1} $}
			\AxiomC{$ \trg{\Sigma_1 \xltot{(\rdl{\ell_A})^\safeta} \Sigma_2} $}
		\BinaryInfC{$ \trg{\Sigma \Xtot{(\ifl{0})^\safeta\cdot(\rdl{\ell_A})^\safeta} \Sigma_2} $}
			\AxiomC{$ \trg{\Sigma_2 \xltot{(\rdl{\ell_B})^\unta} \Sigma_3} $}
		\BinaryInfC{$ \trg{\Sigma \Xtot{(\ifl{0})^\safeta\cdot(\rdl{\ell_A})^\safeta\cdot(\rdl{\ell_B})^\unta} \Sigma_3} $}
			\AxiomC{$ \trg{\Sigma_3 \xltot{} \Sigma_4} $}
		\BinaryInfC{$ \trg{\Sigma \Xtot{(\ifl{0})^\safeta\cdot(\rdl{\ell_A})^\safeta\cdot(\rdl{\ell_B})^\unta} \Sigma_4} $}
			\AxiomC{$ \trg{\Sigma_4 \xltot{\rollbl} \Sigma'} $}
		\BinaryInfC{$ \trg{\Sigma \Xtot{(\ifl{0})^\safeta\cdot(\rdl{\ell_A})^\safeta\cdot(\rdl{\ell_B})^\unta\cdot\rollbl} \Sigma'} $}
		\DisplayProof
	\end{center}
\end{example}

\begin{example}[Spectre V1 with implicit flow]
	We have this pseudocode:
	\begin{lstlisting}
		y = A[x]
		if (y < size) {
			temp = B[y * 512]
		}
	\end{lstlisting}	
	This is analogous to the code above, but the leak is implicit.

	We could support the thesis that this is not a leak and to do so our semantics would turn the taint of the private heap to \unta only when speculating.
	We do not share that idea so we do not do it.
\end{example}

\subsection{Additional Snippets}
\lstset{language=Java}

\begin{lstlisting}[mathescape,label=lis:spectrev1,caption=The classic Spectre v1 snippet.] 
void get (int y) 
	if (y < size) then 
		temp = B[A[y]*512] 
\end{lstlisting}\label{sec:spectre-listing}

\begin{lstlisting}[mathescape,label=lis:rel-rss-rsni,caption=Code that is \rsni but not \rss.] 
void get_nc (int y) 
	if (y < size) then B[A[y] * 512] else B[A[y] * 512]
\end{lstlisting}\label{sec:spectre-listing}

\begin{lstlisting}[mathescape,label=lis:spectre-10-listing,caption={A variant of the classic Spectre v1 snippet (Example~10 from~\cite{kocher2018examples}).}] 
void get (int y) 
	if (y < size) then
		if (A[y] == 0) then
			temp = B[0];
\end{lstlisting}

\begin{lstlisting}[mathescape,label=lis:spectre-variant-listing,caption={Another variant of the classic Spectre v1 snippet.}] 
void get (int y) 
	x = A[y];
	if (y < size) then
		temp = B[x];
\end{lstlisting}
 
\newpage

\section{Formalisation of Code Snippets}\label{sec:target-snippets}
\lstset{language=General}
Generally, \src{n_A}, \src{n_B}, \trg{n_A} and \trg{n_B} indicate the addresses of arrays \lstinline{A} and \lstinline{B} in the source and target heaps respectively.
Assume variable \lstinline{size} is passed through location \com{1}.

\subsection{Snippet of \Cref{lis:spectrev1}}\label{test}
\begin{align*}
	&
	\src{get(y)\mapsto}
		\letreads{\src{size}}{\src{1}}{
		\ifztes{
			\src{y<size}
		\\
		&\quad
		}{
			\src{\letreadps{{x}}{{n_A + y}}{}}
				\src{\letreads{{temp}}{{n_B + x} }{\skips}}
		\\
		&\quad
		}{
			\skips
		}
	}
\end{align*}

\subsection{Snippet of \Cref{lis:rel-rss-rsni}}
\begin{align*}
	&
	\src{get(y)\mapsto}
		\letreads{\src{size}}{\src{1}}{
		\ifztes{
				\src{y<size}
			\\
			&\quad
			}{
				\src{\letreadps{{x}}{{n_A + y}}{}}
					\src{\letreads{{temp}}{{n_B + x} }{\skips}}
			\\
			&\quad
			}{
				\src{\letreadps{{x}}{{n_A + y}}{}}
					\src{\letreads{{temp}}{{n_B + x} }{\skips}}
			}
		}
\end{align*}

\subsection{Snippet of \Cref{lis:spectre-1-listing-compiled-msr}}
\begin{align*}
	&
	\trg{get(y)\mapsto}
		\letreadt{\trg{size}}{\trg{1}}{
		\ifztet{
			\trg{y<size}
			\\
			&\quad
			}{
				\trg{ \lfence ; }\
					\trg{\letreadpt{\trg{x}}{\trg{n_A + y}}{}}
					\\
					&\qquad
					\trg{\letreadt{\trg{temp}}{\trg{n_B + x} }{\skipt}}
			\quad
			\quad
			}{
				\skipt
			}
		}
\end{align*}

\subsection{Snippet of \Cref{lis:spectre-10-listing}}
\begin{align*}
	&
	\src{get(y)\mapsto}
		\letreads{\src{size}}{\src{1}}{}
		\ifztes{
			\src{y<size}
			\\
			&\quad
			}{
				\src{\letreadps{\src{x}}{\src{n_A + y}}{}}
					\ifztes{
						\src{x == 0}
					\\
					&\qquad\enskip
					}{
						\src{\letreads{\src{temp}}{\src{n_B + 0} }{\skips}}
					\\
					&\qquad\enskip
					}{
						\skips
					}
			\quad
			\quad
			}{
				\skips
			}
\end{align*}

\subsection{Snippet of \Cref{lis:spectre-10-listing-compiled}}
\begin{align*}
	&
	\trg{get(y)\mapsto}
	\\
	&\enskip
		\letreadt{\trg{size}}{\trg{1}}{}
		\ifztet{
				\trg{y<size}
			\\
			&\quad
			}{
				\trg{\letreadpt{\trg{x}}{\trg{n_A + y}}{}}
					\ifztet{
						\trg{x == 0}
					\\
					&\qquad\enskip
					}{
						\trg{\lfence;}\
						\trg{\letreadt{\trg{temp}}{\trg{n_B + 0} }{\skipt}}
					\\
					&\qquad\enskip
					}{
						\skipt
					}
			\quad\quad
			}{
				\skipt
			}
\end{align*}

\subsection{Snippet of \Cref{lis:spectre-1-listing-compiled-clang}}
Here we are saving the predicate bit in location \trg{-1} so source \src{n_a} is stored at \trg{n_a -1}, which is the address that gets calculated in the reads.
\begin{align*}
	&
	\trg{get(y)\mapsto}
	\\
	&\enskip
		\letreadt{\trg{size}}{\trg{1}}{
			\letint{
				\trg{x_g}
			}{
				\trg{y<size}
			}{
				\ifztet{
					\trg{x_g}
				\\
				&\quad
				}{
					\letreadpt{
						\trg{x}
					}{
						\trg{-1}
					}{
						\trg{\asgnpt{-1}{x \vee \neg x_g} ; }\
				\\
				&\qquad
						\letreadpt{
							\trg{x}
						}{
							\trg{n_a + y + 1}
						}{
							\letreadpt{
								\trg{\predState}
							}{
								\trg{-1}
							}{
				\\
				&\qquad
								\cmovet{
									\trg{x}
								}{
									\trg{0}
								}{
									\trg{\predState}
								}{
									\letreadt{
										\trg{temp}
									}{
										\trg{n_b + x}
									}{
										\skipt
									}
								}
							}
						}
					}
				\\
				&\quad
				}{
					\letreadpt{
						\trg{x}
					}{
						\trg{-1}
					}{
						\trg{\asgnpt{-1}{x \vee x_g} ; }\ 
						\skipt
					}
				}
			}
		}
\end{align*}

\subsection{Snippet of \Cref{lis:spectre-variant-listing}}
\begin{align*}
	&
	\src{get(y)\mapsto}
	\letreads{
		\src{size}
	}{
		\src{1}
	}{
		\letreadps{
			\src{x}
		}{
			\src{n_a+y}
		}{
			\ifztes{
				\src{y<size}
			\\
			&\quad
			}{
				\src{\letreads{{temp}}{{n_B + x} }{\skips}}
			\quad \quad
			}{
				\skips
			}
		}
	}
\end{align*}

\subsection{Snippet of \Cref{lis:spectre-variant-listing-compiled-clang}}\label{sec:code2}
\begin{align*}
	&
	\trg{get(y)\mapsto}
	\\
	&\enskip
		\letreadt{
			\trg{size}
		}{
			\trg{1}
		}{
			\letreadpt{
				\trg{x}
			}{
				\trg{n_a + y + 1}
			}{
			\\
			&\enskip
				\letreadpt{
					\trg{\predState}
				}{
					\trg{-1}
				}{
					\cmovet{
						\trg{x}
					}{
						\trg{
							0
						}
					}{
						\trg{\predState}
					}{
						\ifztet{
							\trg{y < size}
							\\
							&\quad
						}{
							\letreadpt{
								\trg{x_f}
							}{
								\trg{-1}
							}{
								\trg{\asgnpt{-1}{x_f \vee \neg x_g} ; }\
							\\
							&\qquad
								\letreadt{
									\trg{temp}
								}{
									\trg{n_b + x}
								}{
									\skipt
								}
							}
							\\
							&\quad
						}{
							\letreadpt{
								\trg{x_f}
							}{
								\trg{-1}
							}{
								\trg{\asgnpt{-1}{x_f \vee x_g} ; }\ 
								\skipt
							}
						}
					}
				}
			}
		}
\end{align*}

\subsection{Snippet of \Cref{lis:spectre-variant-listing-proc}}\label{sec:code1}
In this case the predicate bit is stored in each function in a local variable \trg{\predState}, so heap accesses are not shifted by 1 in the target.
Below is the source code and its compiled counterpart.
\begin{align*}
	&
	\src{get(y)\mapsto}
	\letreads{
		\src{size}
	}{
		\src{1}
	}{
		\letreadps{
			\src{x}
		}{
			\src{n_a+y}
		}{
			\ifztes{
				\src{y<size}
			\\
			&\quad
			}{
				\src{\call{get2}~ x}
			\quad
			\quad
			}{
				\skips
			}
		}
	}
	\\
	&
	\src{get2(x)\mapsto}
	\
	\src{
		\letreads{
			{temp}
		}{
			{n_B + x} 
		}{
			\skips
		}
	}
\\
	&
	\trg{get(y) \mapsto}
	\\
	&\enskip
		\letint{
			\trg{\predState}
		}{
			\trg{1}
		}{
			\letreadt{
				\trg{size}
			}{
				\trg{1}
			}{
				\letreadpt{
					\trg{x}
				}{
					\trg{n_a + y}
				}{
				\\
				&\enskip
					\letint{
						\trg{x_g}
					}{
						\trg{y < size}
					}{
						\ifztet{
							\trg{x_g}
						\\
						&\quad
						}{
							\letint{
								\trg{\predState}
							}{
								\trg{\predState \vee \neg x_g }
							}{
								\trg{\call{get2}~x}
							}
						\\
						&\quad
						}{
							\letint{
								\trg{\predState}
							}{
								\trg{\predState \vee x_g }
							}{
								\skipt
							}
						}
					}
				}
			}
		}
	\\
	&
	\trg{get2(x) \mapsto}
	\trg{
		\letint{
			\trg{\predState}
		}{
			\trg{1}
		}{
			\letreadt{
				\trg{temp}
			}{
				\trg{n+b + x}
			}{
				\skipt
			}
		}
	}
\end{align*}

\newpage
\section{Speculative Safety}

We need to define some helpers first.

A heap is valid if and only if its private part contains unsafe values.
We do not enforce this on the public part too because a program may write a private value in the public heap.
\begin{definition}[Valid Heap]\label{def:val-heap}
	\begin{align*}
		\vdash \com{H} :\com{vld} \isdef&\
			\forall \com{n}\mapsto \com{v}:\com{\sigma} \in \com{H}, \text{ if } \com{n}<0 \text{ then } \com{\sigma}=\com{\unta}
	\end{align*}
\end{definition}

An attacker is one that has no private heap nor instructions to manipulate it directly.
\begin{definition}[Attacker]\label{def:atk}
	\begin{align*}
		\vdash\com{\ctxc{}} :\com{atk} \isdef&\
			\com{\ctxc{}} \equiv \com{H};\com{\OB{F}} 
			\text{ and }
			\forall \com{n \mapsto v : \sigma} \in \com{H}, \com{n}\geq0
			\\&\
			\text{ and }
			\forall \com{f(x)\mapsto s;\ret} \in \com{\OB{F}}, \com{\letreadp{x}{e}{s'}}, \com{\asgnp{e}{e}} \notin \com{s}
	\end{align*}
\end{definition}

A whole program of some language \com{L} is speculatively safe (\ssdef(L)) if all its actions are safe.
\begin{definition}[Speculative Safety (\ss{}(L))]\label{def:dss}
\begin{align*}
	\vdash
	\com{P} : \ss(L) \isdef&\
		\forall \trac{^\sigma}\in\behavc{P}\ldotp \forall \acac{^\sigma}\in\trac{^\sigma}\ldotp \com{\sigma}\equiv\com{\safeta}
\end{align*}
\end{definition}

A component is robustly speculatively safe (\rssdef{}(L)) if it is safe no matter what attacker it is linked againt.
\begin{definition}[Robust Speculative Safety (\rss{}(L))]\label{def:rdss}
\begin{align*}
	\vdash\com{P} : \rss(L) \isdef&\
		\forall \ctxc{} \ldotp \text{ if } \vdash\ctxc{}:\com{atk} \text{ then } \vdash\com{\ctxc{}\hole{P}} : \ss(L)
\end{align*}	
\end{definition}

If we consider the code of \Cref{ex:spv1}, we see that the \SR execution only produces \src{\safeta} actions.
This is intuitive, since there is no speculation in \SR.

Still in \Cref{ex:spv1}, the \TR execution can produce \trg{\unta} actions, precisely when a memory access is done speculatively and based on other memory-accessed data.

\newpage
\section{Review: Speculative Non-Interference}

We first define what it means for heaps and programs to be low equivalent.
\begin{definition}[Low-equivalence for heaps and programs]\label{def:loweq}
	\begin{align*}
		\com{H \loweq H'} \isdef&\
			\vdash \com{H} : \com{vld} \wedge \vdash \com{H'} : \com{vld} \wedge \dom{H} = \dom{H'}\wedge 
			\\&\
			\forall \com{n\mapsto v:\sigma}\in \com{H}, \text{ if } \com{n}\geq0 \text{ then } \com{n\mapsto v:\sigma} \in \com{H'}
			\\&\
			\text{ if } \com{n}<0 \text{ then } \exists v'.\ \com{n\mapsto v':\sigma} \in \com{H'}
		\\
		\com{P\loweq P'} \isdef&\
			\com{P}\equiv \com{H;\OB{F};\OB{I}} \text{ and } \com{P'}\equiv \com{H';\OB{F};\OB{I}} \text{ and } \com{H \loweq H'}
	\end{align*}
\end{definition}

We can define the non-speculative projection of a trace:

\begin{align*}
	\nspecproj{\trac{}} =&\ \nspecproj{(\trac{}, 0)}
	\\
	\nspecproj{(\come, n)} =&\ \come
	\\
	\nspecproj{(\acac{^\sigma} \cdot \trac{} , 0)} =&\  \acac{^\sigma} \cdot \nspecproj{(\trac{} , 0)}	\\
	\nspecproj{(\delta{^\sigma} \cdot \trac{} , 0)} =&\  \delta{^\sigma} \cdot \nspecproj{(\trac{} , 0)} &\text{ where } \com{\delta^\sigma}\neq\com{(\ifl{v})^\sigma}
	\\
	\nspecproj{((\ifl{v})^\sigma \cdot \trac{} , 0)} =&\  (\ifl{v})^\sigma \cdot \nspecproj{(\trac{} , 1)} 
	\\
	\nspecproj{(\acac{^\sigma} \cdot \trac{} , n+1)} =&\   \nspecproj{(\trac{} , 0)}	\\
	\nspecproj{(\delta{^\sigma} \cdot \trac{} , n+1)} =&\   \nspecproj{(\trac{} , n+1)} &\text{ where } \com{\delta^\sigma}\neq\com{(\ifl{v})^\sigma}
	\\
	\nspecproj{((\ifl{v})^\sigma \cdot \trac{} , n+1)} =&\   \nspecproj{(\trac{} , n+2)} 
	\\
	\nspecproj{((\rollbl)^\sigma \cdot \trac{}, n+1)} =&\  \nspecproj{(\trac{} , n)}
\end{align*}

A program of a language \com{L} is SNI if, taking any other program that is low-equivalent to itself, if their non-speculative traces are non-interferent, then their speculative traces must also be non-interferent.
\begin{definition}[Speculative Non-interference]\label{def:sni}
	\begin{align*}
		\vdash \com{P} : \sni(L) \isdef&\
			\forall \com{P' \loweq P},
			\forall \trac{_1}\in\behavc{\SInit{P}}, \trac{_2}\in\behavc{\SInit{P'}} 
			\\&\
			\text{ if } \nspecproj{\trac{_1}} = \nspecproj{\trac{_1}} \text{ then } \trac{_1} = \trac{_2}	
	\end{align*}
\end{definition}

A component it robustly SNI if it is SNI for any attacker it links against.
\begin{definition}[Robust Speculative Non-Interference]\label{def:rsni}
	\begin{align*}
		\vdash \com{P} : \rsni(L) \isdef&\
			\forall \ctxc{} \text{ if } \vdash \ctxc{} : \com{atk} \text{ then } \vdash \com{\ctxc{}\hole{P}} : \sni(L)
	\end{align*}
\end{definition}

\newpage
\section{Security Definitions and their Implications}

Here, we show the relationship between speculative safety and speculative non-interference.
We prove these relationships only for programs in $\TR$, since both definitions are trivially satisfied in $\SR$ (this immediately follows from $\SR$ not having any speculative behavior; see \Cref{thm:ss-sni-source}).

\newcommand{\safe}[1]{\vdash #1 : \mathit{safe}}
\newcommand{\unsafe}[1]{\vdash #1 : \mathit{unsafe}}

\begin{theorem}[\ssdef{} and \snidef hold for all source programs]\label{thm:ss-sni-source}
\begin{align*}
	\forall \src{P} \in \SR.
	 \vdash \src{P} : \ss(\SR) 
	\text{ and } \vdash \src{P} : \sni(\SR) \\
	\forall \src{P} \in \src{L^-}.
	 \vdash \src{P} : \ss(\src{L^-}) 
	\text{ and } \vdash \src{P} : \sni(\src{L^-}) 
\end{align*}
\end{theorem}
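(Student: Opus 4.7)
Both claims reduce to the same core observation: source languages lack the rules that produce speculative behavior, and this degenerates both properties to something trivial. My plan is to isolate this as a single invariant and then derive each claim as a one-line consequence.

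For \ss, I would first establish the invariant that in any reduction $\src{\Omega} \xtos{\lambda^\sigma} \src{\Omega'}$ of the source semantics, the program-counter taint $\sigma_{pc}$ remains $\safeta$. This follows by straightforward inspection of the taint rules: the only rule that sets $\sigma_{pc}$ to $\unta$ is \Cref{tr:tt-sp-if}, which belongs to the speculative semantics of \TR and has no counterpart in \SR or $\src{\wSR}$. By the combining rule \Cref{tr:s-comb}, every label in every source trace carries taint $\sigma \lub \sigma_{pc}$ for some data taint $\sigma$. Since $\safeta$ is the bottom element of the $\lub$ (greatest-lower-bound) lattice, we have $\sigma \lub \safeta = \safeta$ for every action, and hence every action of every source trace is tainted $\safeta$. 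This is exactly the \ss condition.

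For \sni, the critical observation is that no source reduction produces a $\rollbl$ label: the only rule that emits $\rollbl$ is \Cref{tr:et-sp-rb}, which lives in the speculative semantics. The non-speculative projection $\nspecproj{\cdot}$ removes precisely the substrings delimited by an $\ifl{v}$ and a matching $\rollbl$; with no $\rollbl$ ever appearing, there are no such substrings to strip. A short induction on trace length (unpacking the auxiliary counter $n$ in the clauses defining $\nspecproj{\cdot}$) therefore gives $\nspecproj{\tras{}} = \tras{}$ for every source trace. Plugging this into \Cref{def:sni}, the hypothesis $\nspecproj{\behavs{\SInit{W}}} = \nspecproj{\behavs{\SInit{W'}}}$ collapses to $\behavs{\SInit{W}} = \behavs{\SInit{W'}}$, which is exactly the conclusion, so \sni holds vacuously.

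Neither argument depends on the particular labels attached to memory reads (the sole distinction between $\SR$ and $\src{\wSR}$) nor on the strong-versus-weak taint-tracking rule for reads: the pc-taint invariant and the absence of $\rollbl$ are properties of the overall control flow, not of how reads propagate taint or produce values. Hence the same proof applies verbatim to $\src{L^-}$. The only subtlety I anticipate is the small lemma $\nspecproj{\tras{}} = \tras{}$, whose proof must carefully handle the auxiliary counter so that an $\ifl{v}$ occurring in a source trace does not spuriously trigger the ``skip until rollback'' clause; once that lemma is in hand, both halves of the theorem are one-step corollaries.
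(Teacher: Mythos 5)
Your proposal is correct and follows exactly the same route as the paper: the paper's own proof is a two-line remark asserting precisely your two facts, namely that (1) source programs only ever emit $\safeta$-tagged actions because the pc taint is never raised (indeed, rule \Cref{tr:eus-tr-sin} even hard-codes the $\safeta$ tag on source trace elements), and (2) source traces coincide with their non-speculative projections since no $\rollbl$ is ever produced, so \Cref{def:sni} becomes a tautology. The one subtlety you flag — that the counter-based definition of $\nspecproj{\cdot}$ must not strip the suffix following an unmatched $\ifl{v}$ — is real and is left implicit in the paper, so your more careful treatment is, if anything, more complete than the published argument.
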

\begin{proof}
This trivially holds from (1) programs in $\SR$ and $\src{L^-}$ produce only actions labelled with $\src{\safeta}$ (for \ss{}(\SR) and \ss{}(\src{L^-})), and (2) traces are identical to their non-speculative projection for programs $\SR$ and $\src{L^-}$ (for \sni{}(\SR) and \sni{}(\src{L^-})).
\end{proof}

\subsection{SS implies SNI}

\subsubsection{Strong Variants}

\begin{theorem}[\ssdef{} implies \snidef (strong)]\label{thm:ss-impl-sni}
\begin{align*}
	\forall \trg{P} \in \TR.
	\text{ if } \vdash \trg{P} : \ss(\TR),  
	\text{ then } \vdash \trg{P} : \sni(\TR)
\end{align*}
\end{theorem}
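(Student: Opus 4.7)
The proof plan is to derive $\sni(\TR)$ from $\ss(\TR)$ by a standard noninterference-via-taint-tracking argument, instantiated on the strong speculative semantics of $\TR$. Fix $\trg{P}$ with $\vdash \trg{P} : \ss(\TR)$ and take an arbitrary $\trg{P'}$ with $\trg{P'} \loweq \trg{P}$ and $\nspecproj{\behavt{\SInitt{\trg{P}}}} = \nspecproj{\behavt{\SInitt{\trg{P'}}}}$. The goal is $\behavt{\SInitt{\trg{P}}} = \behavt{\SInitt{\trg{P'}}}$.

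First, I would introduce a cross-execution indistinguishability relation $\approx$ on extended speculative states of $\TR$ requiring: (i) identical structure, i.e.\ same codebases, same function-call stacks, same statement under reduction, same speculation-stack depth, and same residual speculation windows; (ii) identical taint information on every component of the state (heaps, bindings, and pc-taint agree); (iii) equality of values for every datum tainted $\safeta$, with no constraint on values tainted $\unta$. By the definition of $\loweq$ together with the initial taint assignment made in $\SInitt{\cdot}$ (public locations tainted $\safeta$, private ones tainted $\unta$), we get $\SInitt{\trg{P}} \approx \SInitt{\trg{P'}}$ at once.

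Next, I would prove the central soundness lemma for the strong taint-tracking: whenever $\Sigma_1 \approx \Sigma_2$ and $\Sigma_1 \xrightarrow{\trg{\lambda}^{\sigma}} \Sigma_1'$, there is a matching transition $\Sigma_2 \xrightarrow{\trg{\lambda}'^{\sigma}} \Sigma_2'$ with the \emph{same} taint $\sigma$, and moreover, if $\sigma = \safeta$ then $\trg{\lambda}' = \trg{\lambda}$ and $\Sigma_1' \approx \Sigma_2'$. The lemma rests on a subsidiary expression-agreement claim: under $\approx$-related bindings, any expression whose taint is $\safeta$ evaluates to equal values in both states, because the $\glb$-based propagation rules on expressions together with the $\lub$-based combination with the pc-taint imply that $\safeta$-tainted data descends only from initially $\safeta$-tainted sources (i.e.\ the public heap), which $\approx$ equates. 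The rule-by-rule case analysis is routine for sequencing, assignment, let-binding, memory reads and writes, and function boundaries; the only delicate cases are the speculative ones, particularly \Cref{tr:et-sp-if}, where a branch guard could in principle evaluate differently under $\Sigma_1$ and $\Sigma_2$ when it is $\unta$-tainted---but then the emitted $\ifl{\cdot}$ action would itself be $\unta$-tainted, which is exactly what SS forbids in $\trg{P}$'s trace. Thus under the SS hypothesis this divergence never materialises, and $\approx$ is preserved across speculation-start, speculation-step, and (trivially) rollback.

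Finally, I would close the argument by induction on the length of $\behavt{\SInitt{\trg{P}}}$. Because $\trg{P}$ is $\ss(\TR)$, every label along its trace carries taint $\safeta$, so at every step the simulation lemma applies in its ``$\sigma = \safeta$'' case, forcing $\trg{P'}$ to emit the same label and reach an $\approx$-related state. Hence the two behaviours coincide pointwise, and thus as traces, concluding the proof. The hypothesis on non-speculative projections is, strictly speaking, not needed beyond providing the alignment of rollback boundaries between the two runs, but it is harmless. The main obstacle, as anticipated, is the speculative-branch case of the simulation lemma: establishing that $\approx$ is robust across the push of a fresh speculation instance and the accompanying escalation of the pc-taint to $\unta$ requires combining structural bookkeeping (speculation window arithmetic, stack shape) with the observation that SS rules out any $\unta$-tainted action that could cause the speculations of $\trg{P}$ and $\trg{P'}$ to diverge in observable structure.
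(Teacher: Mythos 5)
Your overall strategy—an indistinguishability relation $\approx$ that equates taints everywhere and values only on $\safeta$-tainted data, established on initial states and propagated by a step-wise simulation—is the same one the paper uses. But your central soundness lemma is false as stated, and the error is localised exactly in your parenthetical claim that the hypothesis $\nspecproj{\behavt{\SInitt{\trg{P}}}} = \nspecproj{\behavt{\SInitt{\trg{P'}}}}$ is ``not needed.'' It is essential. Under the strong taint tracking, the taint of a label is the combination of the data taint with the pc taint via $\lub$, where $\safeta \lub \unta = \safeta$; hence every \emph{non-speculatively} emitted label is tainted $\safeta$ regardless of whether the data it exposes is $\unta$. Concretely, $\trg{P}$ may non-speculatively read a private location into $\trg{x}$ (tainted $\unta$, so $\approx$ does not equate its value across the two runs) and then non-speculatively branch on $\trg{x}$ or use it as an address: the emitted $\ifl{v}^{\safeta}$ or $\rdl{n}^{\safeta}$ actions are safe, so \ss{} permits them, yet they can differ between $\trg{P}$ and a low-equivalent $\trg{P'}$. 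Your lemma's conclusion ``if $\sigma = \safeta$ then $\trg{\lambda}' = \trg{\lambda}$ and $\Sigma_1' \approx \Sigma_2'$'' therefore fails. Your taint-based argument that a divergent branch ``would itself be $\unta$-tainted, which is exactly what SS forbids'' only applies when the pc taint is already $\unta$, i.e.\ inside speculation; it says nothing about non-speculative divergence.

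The fix is the one the paper adopts: the step lemma must have a disjunctive conclusion—either the non-speculative projections of the two traces already differ, or the emitted labels coincide and $\approx$ is preserved. The first disjunct absorbs precisely the non-speculative leaks of $\unta$ data that \ss{} does not rule out, and it is discharged globally by the assumed equality of non-speculative projections (which is why the theorem is a contradiction/conditional statement rather than an unconditional trace equality). With that weakened lemma your induction goes through, modulo one further piece of bookkeeping you omit: the two runs need not take the same number of steps, so one must also argue (using $\approx$ at the longest common prefix) that if one run is stuck or terminated, so is the other.
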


\begin{proof}
Let $\trg{P}$ be an arbitrary program in $\TR$ such that $\vdash \trg{P}:\ss(\TR)$.
Assume, for contradiction's sake, that $\vdash \trg{P}:\sni(\TR)$ does not hold.
That is, there is another program $\trg{P'}$ and traces $\trat{_1}\in\behavt{\SInit{\trg{P}}}$, $\trat{_2}\in\behavt{\SInit{\trg{P'}}}$ such that $\trg{P} \loweq \trg{P'}$, $\nspecproj{\trat{_1}} = \nspecproj{\trat{_2}}$, and $\trat{_1} \neq \trat{_2}$.
By unrolling the $\TR$-Terminal State rule, we have that $\trg{P\semt \trat{_1}}$ and $\trg{P'\semt \trat{_2}}$.
By unrolling the rule E-$\TR$-trace, we have that there are $\trg{\Sigma}$, $\trg{\Sigma'}$, $\trg{n}$, $\trg{n'}$ such that $\vdash \trg{\Sigma} : \bot$, $\vdash \trg{\Sigma'} : \bot$, $\trg{(0,\SInitt{P}) \Xtot{\trat{_1}} (n,\Sigma)}$, and $\trg{(0,\SInitt{P'}) \Xtot{\trat{_2}} (n',\Sigma')}$.
From $\trg{P} \loweq \trg{P'}$ and \cref{lemma:low-equivalent-programs-low-equivalent-states}, we get $\trg{(0,\SInitt{P})} \approx \trg{(0,\SInitt{P'}) }$. 
There are two cases:
\begin{description}
	\item[$\trg{n} = \trg{n'}$:]
	Then,  we have $\safe{\trg{\Sigma_0}}$ and from  $\vdash \trg{P}:\ss(\TR)$, we get that $\safe{\trat{_1}}$.
	From $\trg{(0,\SInitt{P})} \approx \trg{(0,\SInitt{P'}) }$, $\trg{(0,\SInitt{P}) \Xtot{\trat{_1}} (n,\Sigma)}$,  $\trg{(0,\SInitt{P'}) \Xtot{\trat{_2}} (n',\Sigma')}$, and \cref{lemma:trans-semantics-preserve-low-equivalence-or-different-non-spec-projections}, we get that $\trg{\nspecProject{\trat{_1}}} <> \trg{\nspecProject{\trat{_2}}} \vee \trg{\trat{_1}} = \trg{\trat{_2}}$. 
	From this and $\nspecproj{\trat{_1}} = \nspecproj{\trat{_2}}$, we get $\trg{\trat{_1}} = \trg{\trat{_2}}$ leading to a contradiction.
	
	\item[$\trg{n} \neq \trg{n'}$:]
	Let $\trg{m}$ be the maximum value such that 
	$\trg{(0,\SInitt{P}) \Xtot{\lambda} (m,\Sigma_1)}$ and
	$\trg{(0,\SInitt{P'}) \Xtot{\lambda} (m,\Sigma_1')}$.
	Observe that (1) such an $\trg{m}$ always exists, and (2) $\trg{\lambda}$ is a prefix to both $\safe{\trat{_1}}$ and $\safe{\trat{_2}}$
	From $\vdash \trg{P}:\ss(\TR)$ and $\trg{\lambda}$ being a prefix of  $\safe{\trat{_1}}$, we have $\safe{\trg{\lambda}}$.
	From $\trg{(0,\SInitt{P})} \approx \trg{(0,\SInitt{P'}) }$, $\trg{(0,\SInitt{P}) \Xtot{\lambda} (m,\Sigma_1)}$,
	$\trg{(0,\SInitt{P'}) \Xtot{\lambda} (m,\Sigma_1')}$, $\safe{\trg{\lambda}}$, and \cref{lemma:trans-semantics-preserve-low-equivalence-or-different-non-spec-projections}, we have that $\trg{\Sigma_1} \approx \trg{\Sigma_1'}$.
	There are three cases:
	\begin{description}
	\item[$\neg \exists \trg{\alpha},\trg{\sigma},\trg{\Sigma_2}.\ \trg{(0,\SInitt{P}) \Xtot{\lambda \cdot \alpha^\sigma} (m+1,\Sigma_2)}$:]
	From this, we get that $\Sigma_1 = \Sigma$ and therefore $\vdash \trg{\Sigma} : \bot$.
	From this and $\trg{\Sigma_1} \approx \trg{\Sigma_1'}$, we get that $\vdash \trg{\Sigma'} : \bot$ holds as well.
	From this, we get that $\trg{m} = \trg{n} = \trg{n'}$, leading to a contradiction.
	
	\item[$\neg \exists \trg{\alpha'},\trg{\sigma'},\trg{\Sigma_2'}.\ \trg{(0,\SInitt{P'}) \Xtot{\lambda \cdot {\alpha'}^{\sigma'}} (m+1,\Sigma_2')}$:]
	The proof of this case is similar to the case $\neg \exists \trg{\alpha},\trg{\sigma},\trg{\Sigma_2}.\ \trg{(0,\SInitt{P}) \Xtot{\lambda \cdot \alpha^\sigma} (m+1,\Sigma_2)}$.
		
	\item[$\exists \trg{\alpha},\trg{\sigma},\trg{\Sigma_2},\trg{\alpha'},\trg{\sigma'},\trg{\Sigma_2'}.\ \trg{(0,\SInitt{P}) \Xtot{\lambda \cdot \alpha^\sigma} (m+1,\Sigma_2)} \wedge \trg{(0,\SInitt{P'}) \Xtot{\lambda \cdot {\alpha'}^{\sigma'}} (m+1,\Sigma_2')} \wedge \trg{\alpha^\sigma} \neq \trg{ {\alpha'}^{\sigma'} }$:]
	Observe that $\trg{\lambda \cdot \alpha^\sigma}$ is a prefix of $\trg{\trat{_1}}$.
	Therefore, $\safe{\trg{\lambda \cdot \alpha^\sigma}}$ follows from $\vdash \trg{P}:\ss$.
	From $\trg{(0,\SInitt{P})} \approx \trg{(0,\SInitt{P'}) }$, $\trg{(0,\SInitt{P}) \Xtot{\lambda \cdot \alpha^\sigma} (m+1,\Sigma_2)}$, $\trg{(0,\SInitt{P'}) }$ $\trg{\Xtot{\lambda \cdot {\alpha'}^{\sigma'}} (m+1,\Sigma_2')}$, $\safe{\trg{\lambda \cdot \alpha^\sigma}}$, and \cref{lemma:trans-semantics-preserve-low-equivalence-or-different-non-spec-projections}, we have $\trg{\nspecProject{(\lambda \cdot \alpha^\sigma)}}$ $<> \trg{\nspecProject{(\lambda \cdot {\alpha'}^{\sigma'})}} \vee (\trg{\Sigma_2} \approx \trg{\Sigma_2'} \wedge  \trg{\lambda \cdot \alpha^\sigma} = \trg{\lambda \cdot {\alpha'}^{\sigma'}})$.
	There are two cases:
	\begin{description}
	\item[$\trg{\nspecProject{(\lambda \cdot \alpha^\sigma)}} <> \trg{\nspecProject{(\lambda \cdot {\alpha'}^{\sigma'})}}$:]
	This contradicts $\nspecproj{\trat{_1}} = \nspecproj{\trat{_2}}$ given that $\trg{\lambda \cdot \alpha^\sigma}$ is a prefix of $\trg{\trat{_1}}$ and $\trg{\lambda \cdot {\alpha'}^{\sigma'}}$ is a prefix of $\trg{\trat{_2}}$.
	\item[$(\trg{\Sigma_2} \approx \trg{\Sigma_2'} \wedge  \trg{\lambda \cdot \alpha^\sigma} = \trg{\lambda \cdot {\alpha'}^{\sigma'}})$:]
	This leads to a contradiction with $\trg{\alpha^\sigma} \neq \trg{ {\alpha'}^{\sigma'} }$.
	\end{description}
	\end{description} 
\end{description}
Since all cases lead to a contradiction, this completes the proof of our theorem.
\end{proof}

\subsubsection{Low-equivalence and inital states}
\begin{lemma}[Low-equivalent programs have low-equivalent initial states]\label{lemma:low-equivalent-programs-low-equivalent-states}
\begin{align*}
\forall \trg{P}, \trg{P'}. &
	\text{ if } \trg{P} \loweq \trg{P'} 
	\text{ then } \SInitt{P} \approx \SInitt{P'}
\end{align*}
\end{lemma}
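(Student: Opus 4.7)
The plan is to unfold both sides of the implication and observe that everything outside the private heap is forced to coincide. Assume $\trg{P} \loweq \trg{P'}$. By \Cref{def:loweq}, we can write $\trg{P} \equiv \trg{H;\OB{F};\OB{I}}$ and $\trg{P'} \equiv \trg{H';\OB{F};\OB{I}}$ with the \emph{same} $\trg{\OB{F}}$ and $\trg{\OB{I}}$, the heaps $\trg{H}$ and $\trg{H'}$ are both valid (so all negative addresses have taint $\trg{\unta}$), share the same domain, agree pointwise on all non-negative addresses (including taints), and on each negative address $\trg{n}$ store values that may differ but have the same taint.

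Next I would apply \Cref{tr:ini-ut} to both $\trg{P}$ and $\trg{P'}$ to obtain explicit forms of $\SInitt{\trg{P}}$ and $\SInitt{\trg{P'}}$. These initial states are both singleton speculation stacks $\trg{(0,(\Omega_0,\bot,\safeta))}$ whose underlying $\trg{\Omega_0}$ consists of:
(i) the same codebase $\trg{\OB{F};\OB{I}}$;
(ii) a heap $\trg{H_0}$ constructed as $\trg{H''\cup H\cup H'_{\text{pad}}}$ where $\trg{H'_{\text{pad}}}$ fills all missing non-negative addresses with $\trg{0:\safeta}$ and $\trg{H''}$ fills all missing negative addresses with $\trg{0:\unta}$;
(iii) the same initial stack frame $\trge\cdot \trg{x\mapsto 0}$; and
(iv) the same control point $\trg{\call{main}~x;\skipt}$.

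The remaining step is to verify $\SInitt{\trg{P}} \approx \SInitt{\trg{P'}}$ by checking the clauses of $\approx$ (the state-level low-equivalence, which lifts \Cref{def:loweq} to program states): codebases, stack frames, control pointers, speculation window, and pc-taint all coincide trivially; for the heaps, one checks that the padding is identical on both sides because it depends only on $\dom{\trg{H}} = \dom{\trg{H'}}$, that non-negative entries agree because $\trg{H}\loweq\trg{H'}$ forces them to, and that negative entries have matching taints (both $\trg{\unta}$, by validity and by the construction of $\trg{H''}$) even though their values may differ.

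I expect no real obstacle: the only subtlety is bookkeeping — making sure the padding construction in \Cref{tr:ini-ut} is applied to the \emph{same} set of missing addresses on both sides (which follows from $\dom{\trg{H}}=\dom{\trg{H'}}$), and making sure the resulting heap low-equivalence lifts cleanly to low-equivalence of the full speculative state. In particular, no step requires the semantics to execute, so the whole argument is a short unfolding calculation.
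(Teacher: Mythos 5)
Your proposal is correct and follows essentially the same route as the paper: unfold $\SInitt{\cdot}$ on both programs, observe that the padding and all non-heap components coincide because the programs share $\trg{\OB{F}}$, $\trg{\OB{I}}$, and $\dom{\trg{H}}=\dom{\trg{H'}}$, and conclude $\trg{H_0}\approx\trg{H_0'}$ (hence state equivalence) from $\trg{H}\loweq\trg{H'}$. You are somewhat more explicit than the paper — which simply asserts that $\trg{H}\approx\trg{H'}$ follows from $\trg{P}\loweq\trg{P'}$ — but the extra detail (validity forcing negative entries to taint $\trg{\unta}$ so that the value-agreement clause of $\approx$ is vacuous there) is exactly the right justification.
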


\begin{proof}
Let $\trg{P} = (\trg{H ; \OB{F} ; \OB{I}})$ and $\trg{P'} = (\trg{H' ; \OB{F'} ; \OB{I'}})$ be two arbitrary programs such that $\trg{P} \loweq \trg{P'}$.
Then, $\SInitt{P}$ and $\SInitt{P'}$ are as follows:
\begin{align*}
	\trg{H_0} &= \trg{H_1}\cup\trg{H}\cup\trg{H_2}\\
	\trg{H_1} &= \myset{ \trg{n\mapsto 0 : \safeta} }{ \trg{n}\in\mb{N}\setminus\dom{\trg{H}} } \\
	\trg{H_2} &= \myset{ \trg{-n\mapsto 0 : \unta} }{ \trg{n}\in\mb{N}, \trg{-n}\notin\dom{\trg{H}}  }\\
	\SInitt{P} \isdef & \trg{ w, (\OB{F} ; \OB{I} ; H_0 ; \trge\cdot x\mapsto 0 \triangleright \call{main}~ x ; \skipt , \bot, \safeta)}\\
	\trg{H_0'} &= \trg{H_1'}\cup\trg{H'}\cup\trg{H_2'}\\
	\trg{H_1'} &= \myset{ \trg{n\mapsto 0 : \safeta} }{ \trg{n}\in\mb{N}\setminus\dom{\trg{H'}} } \\
	\trg{H_2'} &= \myset{ \trg{-n\mapsto 0 : \unta} }{ \trg{n}\in\mb{N}, \trg{-n}\notin\dom{\trg{H'}}  }\\
	\SInitt{P'} \isdef & \trg{ w, (\OB{F} ; \OB{I} ; H_0' ; \trge\cdot x\mapsto 0 \triangleright \call{main}~ x ; \skipt , \bot, \safeta)}	
\end{align*}
Moreover, from $\trg{P} \loweq \trg{P'}$, we get $\trg{H} \approx \trg{H'}$.
Therefore, $\SInitt{P} \approx \SInitt{P'}$ immediately follows.
\end{proof}

Before continuing with our proof, we define what it means for program states to be safe-equivalent.

\begin{definition}[Safe-equivalence for heaps and program states]\label{def:safeeq}
	\begin{align*}
		\unsafe{w, ss_0 \cdot (\Omega, m, \sigma)} \isdef & \sigma = \unta\\
		\safe{w, (\Omega, m, \sigma)} \isdef & \sigma = \safeta\\
		\safe{\epsilon} \isdef & \mathit{true} \\
		\safe{\alpha^\sigma} \isdef & \sigma = \safeta \\
		\safe{\trac{^\sigma} \cdot \acac{^\sigma}} \isdef & 
			\safe{\trac{^\sigma}} \text{ and } \safe{\acac{^\sigma}}\\
		\vdash H(n) : \mathit{def} \isdef& \exists v, \sigma.\ H(n) = v:\sigma\\
		\vdash B(x) : \mathit{def} \isdef& \exists v, \sigma.\ B(x) = v:\sigma\\
		\com{v:\sigma \approx v':\sigma'} \isdef&\ 
			\sigma = \sigma' \text{ and }
			\text{if } \sigma = \safeta \text{ then } v = v'\\
		\com{H \approx H'} \isdef&\
			\forall \com{n}.\ 
				\vdash H(n) : \mathit{def} \text{ iff } \vdash H'(n) : \mathit{def} \text{ and }\\
				& \qquad 
				\text{ if }\vdash H(n) : \mathit{def} \text{ then }	H(n) \approx H'(n) \\
		\com{B \approx B'} \isdef&\
				\forall \com{x}.\
				\vdash B(x) : \mathit{def} \text{ iff } \vdash B'(x) : \mathit{def} \text{ and }\\
				& \qquad 
				\text{ if }\vdash B(x) : \mathit{def} \text{ then }	B(x) \approx B'(x) \\
		\com{\OB{B} \cdot B \approx \OB{B}' \cdot B'} \isdef&\
			\com{\OB{B} \approx \OB{B}'} \text{ and } \com{B \approx B'} \\
		\com{\Omega \approx \Omega'} \isdef&\
			\com{\Omega}\equiv \com{C, H, \OB{B} \triangleright s}  \text{ and } \com{\Omega'}\equiv \com{C, H', \OB{B}' \triangleright s} \text{ and } \com{H \approx H'} \text{ and } \com{\OB{B} \approx \OB{B}'}
		\\
		\com{\emptyset \approx \emptyset}
		\\
		\com{ss \approx ss'} \isdef&\
		\com{ss} \equiv \com{ss_0 \cdot (\Omega, m, \sigma)} \text{ and }
		\com{ss'} \equiv \com{ss_0' \cdot (\Omega', m, \sigma)} \text{ and }
		\com{ss_0 \approx ss_0'} \text{ and }
		\com{\Omega \approx \Omega'}
		\\
		\com{\Sigma \approx \Sigma'} \isdef&\
			\com{\Sigma} \equiv (w, ss) \text{ and } \com{\Sigma'} \equiv (w, ss') \text{ and }
				\com{ss} \approx \com{ss'}
	\end{align*}
\end{definition}

\subsubsection{Transitive-closure semantics $\Xtot{}$}

\begin{definition}
\begin{align*}
	\alpha^\sigma <> {\alpha'}^{\sigma'} &\text{if } \alpha \neq \alpha' \vee \sigma \neq \sigma'\\	
	\alpha^\sigma \cdot \lambda <> {\alpha'}^{\sigma'} \cdot \lambda' & \text{if } \alpha^\sigma <> {\alpha'}^{\sigma'} \vee \lambda <> \lambda'
\end{align*}	
\end{definition}

\begin{lemma}[Steps of $\Xtot{}$ preserve low-equivalence or produce distinct non-speculative projections]\label{lemma:trans-semantics-preserve-low-equivalence-or-different-non-spec-projections}
\begin{align*}
	\forall  \trg{P}, \trg{P'}, \trg{\Sigma_0}, \trg{\Sigma_0'}, \trg{\Sigma_1}, \trg{\Sigma_1'},\trg{\lambda}, \trg{\lambda'}, \trg{n}.
	& \text{ if } 
		\trg{\Sigma_0} = \trg{\SInit{P}}, \trg{\Sigma_0'} = \trg{\SInit{P'}}, \\
	& \quad
		\trg{(0,\Sigma_0) \Xtot{\lambda} (n,\Sigma_1)}, \trg{(0,\Sigma_0') \Xtot{\lambda'} (n,\Sigma_1')}, \\
	& \quad \trg{\Sigma_0} \approx \trg{\Sigma_0'}, \safe{\trg{\Sigma_0}}, \safe{\trg{\lambda}}, \\%
	& \text{ then }  
		\trg{\nspecProject{\lambda}} <> \trg{\nspecProject{\lambda'}}
	 \vee (\trg{\Sigma_1} \approx \trg{\Sigma_1'} \wedge  \trg{\lambda} = \trg{\lambda'})
\end{align*}
\end{lemma}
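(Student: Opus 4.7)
The plan is to prove the lemma by induction on $\trg{n}$, the number of steps in the transitive closure. The base case $\trg{n}=0$ is immediate: both traces must be empty (by inspection of the $\Xtot{}$ rules), so $\trg{\Sigma_1} = \trg{\Sigma_0}$, $\trg{\Sigma_1'} = \trg{\Sigma_0'}$, and the second disjunct holds by the assumption $\trg{\Sigma_0} \approx \trg{\Sigma_0'}$. For the inductive step, I would factor each reduction sequence into a prefix of $\trg{n}{-}1$ counted steps (possibly followed by some rollbacks) and one final single step, apply the IH to the prefixes, and then extend with a single-step counterpart of the lemma. If the IH already yields divergent non-speculative projections, monotonicity of $\nspecProject{\cdot}$ under extension preserves that divergence; otherwise, the prefixes produce identical safe traces ending in low-equivalent states and the inductive step reduces to a single-step argument.

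The key auxiliary result is a single-step lemma: if $\trg{\Sigma} \approx \trg{\Sigma'}$, $\trg{\Sigma} \xltot{\acat{^{\taintt}}} \trg{\Sigma_1}$, and $\trg{\taintt} = \trg{\safeta}$, then either no step from $\trg{\Sigma'}$ can match the projection of $\trg{\acat{^{\taintt}}}$ (witnessing divergence of the non-speculative projections) or there exists $\trg{\Sigma_1'}$ with $\trg{\Sigma'} \xltot{\acat{^{\taintt}}} \trg{\Sigma_1'}$ and $\trg{\Sigma_1} \approx \trg{\Sigma_1'}$. The proof proceeds by case analysis on the reduction rule applied at the top of the speculation stack, decomposing $\trg{\Sigma}$ via $\trg{\OB{\Pv} + \OB{\Pt}}$. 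The crucial invariant is that the value part of any binding or heap cell tainted $\trg{\safeta}$ coincides across the two sides (by definition of $\approx$), so any safe-tainted expression evaluates to the same value in both.

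The case-by-case reasoning then follows predictably: for $\lfence$, $\skipt$, sequencing, and cmov the semantics do not depend on the heap and the state relation is trivially preserved; for calls/returns, a safe-tainted action means the transmitted value is safe, so both sides take matching transitions; for reads and writes, safety of the action forces safety of the address expression, so the addresses agree, and the values written/read preserve $\approx$ (for reads of the private heap the result is tagged $\trg{\unta}$ in the strong language, so equality of values is not required, yet the tag agrees); for speculation-start via \Cref{tr:et-sp-if} the guard is safe-tainted and so evaluates identically on both sides, pushing matching instances with the same window $\trg{j} = \fun{min}{\trgb{\omega},\trg{n}}$ and fresh pc taint $\trg{\unta}$; for rollback via \Cref{tr:et-sp-rb}, $\approx$ already forces identical window values and identical stack shape, so both sides roll back in lockstep.

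The principal obstacle will be handling the mis-speculation sub-computation cleanly. Once speculation starts, the pc taint is $\trg{\unta}$ and the taint-tracking semantics no longer forces new unsafe-tainted values to agree across the two executions; nevertheless, because the hypothesis $\safe{\trg{\lambda}}$ restricts us to safe-tainted actions, each individual emitted observation must still come from a safe-tainted expression, and this property survives under the $\glb$ with the pc taint only when the action-generating data was already safe. I therefore expect to prove an auxiliary monotonicity lemma for safe evaluation inside speculative frames—stating that safe-tainted bindings propagate forward to safe-tainted expression values under arbitrary speculative reductions—and to use it to close the speculation-internal cases uniformly. The second delicate point is the stuck-state rollback rule \Cref{tr:et-sp-rb-s}: I need that if one side becomes stuck inside speculation then so does the other, which again follows from $\approx$ plus the observation that stuckness is determined by the current statement and by the safe-tainted portion of the store, both of which coincide across the two runs.
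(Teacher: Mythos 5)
Your overall strategy coincides with the paper's: induction on $\trg{n}$, peeling off the last counted step, applying the IH to the prefixes, and closing with a single-step case analysis that splits on whether the current configuration is speculating (where $\safe{\trg{\lambda}}$ forces the two runs to emit \emph{identical} safe actions and stay low-equivalent, since the divergence disjunct is unavailable for observations erased by the projection) or not speculating (where a mismatch is tolerated because it surfaces in the non-speculative projection). Your single-step lemma bundles what the paper states as three separate lemmas (safe-observation steps from unsafe configurations, identical-observation steps, and silent steps), and your auxiliary "safe bindings propagate forward under speculative reductions" observation is exactly the invariant the paper uses to discharge the speculation-internal cases. To that extent this is the same proof.

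There is one concrete gap: steps executed \emph{inside the attacker}. Rule \Cref{tr:eut-tr-sin} erases the $\mu$arch action of such a step from the trace (the emitted trace is $\trg{\epsilon}$ whenever $\trg{f}=\trg{f'}$ and $\trg{f}\in\trg{\OB{I}}$), so for these steps the hypothesis $\safe{\trg{\lambda}}$ says nothing about the taint of the underlying observation, and your single-step lemma --- which is conditioned on the emitted action being safe-tainted --- does not apply. You must still show that the two runs take literally the same attacker step (otherwise low-equivalence, and hence the remainder of the induction, is lost), and neither disjunct of your lemma can produce that conclusion. The paper closes this case with a separate invariant (\cref{lemma:trans-semantics-in-components-only-safe-values}): every binding reachable in attacker code is tainted $\trg{\safeta}$, because attacker data originates only from the public heap and from call/return values that the semantics tags as safe; combined with $\approx$ this forces the attacker-side bindings to be \emph{equal} across the two runs, so determinism of the semantics yields identical steps and identical successor states. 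You need to add this invariant, and show it is preserved along the transitive-closure semantics, before your inductive step goes through.
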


\begin{proof}
Let $\trg{P}$, $\trg{P'}$, $\trg{\Sigma_0}$, $\trg{\Sigma_0'}$, $\trg{\Sigma_1}$, $\trg{\Sigma_1'}$, $\trg{\lambda}$, $\trg{\lambda'}$, $\trg{n}$ be such that 	$\trg{\Sigma_0} = \trg{\SInit{P}}$, $\trg{\Sigma_0'} = \trg{\SInit{P'}}$, $\trg{(0,\Sigma_0) \Xtot{\lambda} (n,\Sigma_1)}$, $\trg{(0,\Sigma_0') \Xtot{\lambda'} (n,\Sigma_1')}$, $\trg{\Sigma_0} \approx \trg{\Sigma_0'}$, $\safe{\trg{\Sigma_0}}$, and $\safe{\trg{\lambda}}$. %
We prove the lemma by  induction on $n$:
\begin{description}
	\item [Base case:]
	For the base case, we consider $n = 0$.
	Then, both $\trg{(0,\Sigma_0) \Xtot{\lambda} (n,\Sigma_1)}$ and $\trg{(0,\Sigma_0') \Xtot{\lambda'} (n,\Sigma_1')}$ have been derived using the E-\TR-init rule.
	Therefore, $\trg{\lambda} = \trg{\epsilon}$, $\trg{\Sigma_1} = \trg{\Sigma_0}$, $\trg{\lambda'} = \trg{\epsilon}$, and $\trg{\Sigma_1'} = \trg{\Sigma_0'}$.
	As a result, $\trg{\lambda} = \trg{\lambda'}$ follows from $\trg{\lambda} = \trg{\epsilon}$ and $\trg{\lambda'} = \trg{\epsilon}$, whereas $\trg{\Sigma_1} \approx \trg{\Sigma_1'}$ follows from $\trg{\Sigma_1} = \trg{\Sigma_0}$, $\trg{\Sigma_1'} = \trg{\Sigma_0'}$, and $\trg{\Sigma_0} \approx \trg{\Sigma_0'}$.
	Therefore, $\trg{\nspecProject{\lambda}} <> \trg{\nspecProject{\lambda'}}
	 \vee (\trg{\Sigma_1} \approx \trg{\Sigma_1'} \wedge  \trg{\lambda} = \trg{\lambda'})$ holds for the base case.
	
	\item[Induction step:]
	For the induction step, we assume that the claim holds for all $\trg{n'}<\trg{n}$ and we show that it holds for $\trg{n}$ as well.
	We proceed by case distinction on the rule used to derive $\trg{(0,\Sigma_0) \Xtot{\lambda} (n,\Sigma_1)}$:
	\begin{description}
	\item[E-\TR-silent:]
	Then, $\trg{(0,\Sigma_0) \Xtot{\lambda} (n,\Sigma_1)}$ has been derived using the E-\TR-silent rule.
	Therefore, we have $\trg{(0,\Sigma_0) \Xtot{\trat{_1^{\taintt}}} (n-1,\Sigma_2)}$, $\trg{\Sigma_2 \xltot{ \epsilon } \Sigma_1}$, and $\trg{\lambda} = \trat{_1^{\taintt}} \cdot \trg{\epsilon}$.
	
	Since $\trg{n} > \trg{1}$, $\trg{(0,\Sigma_0') \Xtot{\lambda'} (n,\Sigma_1')}$ has been derived using the E-\TR-silent or E-\TR-single rules.
	Therefore, we have  $\trg{(0,\Sigma_0') \Xtot{\trat{_1'^{\taintt'}}} (n-1,\Sigma_2')}$ and $\trg{\Sigma_2' \xltot{ \lambda_1' } \Sigma_1'}$.
	From the induction hypothesis, there are two cases:
	\begin{description}
		\item [$\nspecProject{ \trat{_1^{\taintt}}} <> \nspecProject{\trat{_1'^{\taintt'}} }$:]
		Observe that $\trg{\nspecProject{\lambda}} = \nspecProject{ \trat{_1^{\taintt}}}$ and $ \trg{\nspecProject{\lambda'}}$ is either $ \trg{\nspecProject{ \trat{_1'^{\taintt'}} }}$ or $ \trg{\nspecProject{ \trat{_1'^{\taintt'}}  }} \cdot \trg{\lambda_1'}$.
		From this and $\nspecProject{ \trat{_1^{\taintt}}} <> \nspecProject{\trat{_1'^{\taintt'}} }$, we get $\trg{\nspecProject{\lambda}} <> \trg{\nspecProject{\lambda'}}$.
		
		\item[$\trg{\Sigma_2} \approx \trg{\Sigma_2'} \wedge \trg{\trat{_1^{\taintt}}} = \trg{\trat{_1'^{\taintt'}}}$:]
		From $\trg{\Sigma_2} \approx \trg{\Sigma_2'}$, $\trg{\Sigma_2 \xltot{ \epsilon } \Sigma_1}$, \cref{lemma:no-obs-preserve-low-equivalence-spec-semantics}, we get $\trg{\Sigma_2' \xltot{ \epsilon } \Sigma_1''}$ and $\trg{\Sigma_1} \approx \trg{\Sigma_1''}$.
		From this and the determinism of $\xltot{}$, we get that $\trg{\lambda_1'} = \trg{\epsilon}$ and $\trg{\Sigma_1'} = \trg{\Sigma_1''}$.
		Hence, $\trg{\lambda} = \trg{\lambda'}$ follows from $\trg{\lambda} = \trat{_1^{\taintt}}$, $\trg{\lambda'} = \trat{_1'^{\taintt'}} \cdot \trg{\lambda_1'}$, $ \trat{_1^{\taintt}} =  \trat{_1'^{\taintt'}}$, and $\trg{\lambda_1'} = \trg{\epsilon}$.
		Moreover, $\trg{\Sigma_1} \approx \trg{\Sigma_1'}$ follows from $\trg{\Sigma_1} \approx \trg{\Sigma_1''}$ and $\trg{\Sigma_1'} = \trg{\Sigma_1''}$.

	\end{description}
	Therefore, $\trg{\nspecProject{\lambda}} <> \trg{\nspecProject{\lambda'}}
	 \vee (\trg{\Sigma_1} \approx \trg{\Sigma_1'} \wedge  \trg{\lambda} = \trg{\lambda'})$ holds for this case.

	\item[E-\TR-single:]
	Then, $\trg{(0,\Sigma_0) \Xtot{\lambda} (n,\Sigma_1)}$ has been derived using the E-\TR-single rule.
	Therefore, we have $\trg{(0,\Sigma_0) \Xtot{\trat{_1^{\taintt_1}}} (n-1,\Sigma_2)}$, $\trg{\Sigma_2 \xltot{ \acat{^{\taintt}} } \Sigma_1}$, $\trg{\Sigma_2} = \trg{(w, \OB{(\Omega,m,\taintt)}\cdot (\OB{F},\OB{I},H_2,B_2\triangleright \proc{s}{\OB{f}\cdot f} , n, \taintt')) }$, $\trg{\Sigma_1} = \trg{(w, \OB{(\Omega,m,\taintt)}}$ $\trg{\cdot (\OB{F},\OB{I},H_1,B_1\triangleright \proc{s'}{\OB{f'}\cdot f'} , n', \taintt'')) }$, and $\trg{\lambda} = \trat{_1^{\taintt}} $ if $\trg{f} == \trg{f'}  \wedge \trg{f} \in \trg{I}$ and  $\trg{\lambda} = \trg{\trat{_1^{\taintt}}\cdot \trg{\acat{^{\taintt}}}} $ otherwise.
	
	Since $\trg{n} > \trg{1}$, $\trg{(0,\Sigma_0') \Xtot{\lambda'} (n,\Sigma_1')}$ has been derived using the E-\TR-silent or E-\TR-single rules.
	Therefore, we have  $\trg{(0,\Sigma_0') \Xtot{\trat{_1'^{\taintt'}}} (n-1,\Sigma_2')}$ and $\trg{\Sigma_2' \xltot{ \lambda_1' } \Sigma_1'}$.
	From the induction hypothesis, there are two cases:
	\begin{description}
		\item [$\trg{\nspecProject{ \trat{_1^{\taintt}}}} <> \trg{\nspecProject{\trat{_1'^{\taintt'}} }}$:]
		Observe that $\trg{\nspecProject{\lambda}}$ is either $ \nspecProject{ \trat{_1^{\taintt}}}$ or $\nspecProject{ \trat{_1^{\taintt}}} \cdot \trg{\acat{^{\taintt}}}$ and $ \trg{\nspecProject{\lambda'}}$ is either $ \trg{\nspecProject{ \trat{_1'^{\taintt'}} }}$ or $ \trg{\nspecProject{ \trat{_1'^{\taintt'}}  }} \cdot \trg{\lambda_1'}$.
		From this and $\nspecProject{ \trat{_1^{\taintt}}} <> \nspecProject{\trat{_1'^{\taintt'}} }$, we get $\trg{\nspecProject{\lambda}} <> \trg{\nspecProject{\lambda'}}$.
		
		\item[$\trg{\Sigma_2} \approx \trg{\Sigma_2'} \wedge \trg{\trat{_1^{\taintt}}} = \trg{\trat{_1'^{\taintt'}}}$:]
		From $\trg{\Sigma_2} \approx \trg{\Sigma_2'}$ and the determinism of  $\xltot{}$, then $\trg{(0,\Sigma_0') \Xtot{\lambda'} (n,\Sigma_1')}$ must have been derived using the E-\TR-single rule (since $\trg{\Sigma_2 \xltot{ \acat{^{\taintt}} } \Sigma_1}$ has produced an observation that is not $\trg{\epsilon}$).
		From this, $\trg{\Sigma_2} \approx \trg{\Sigma_2'}$, and the determinism of $\xltot{}$, we get that $\trg{\Sigma_2' \xltot{ 
				\acat{'{^{\taintt'}}} } \Sigma_1'}$, $\trg{\Sigma_2'} = \trg{(w, \OB{(\Omega',m,\taintt)}\cdot (\OB{F},\OB{I},H_2',B_2'\triangleright \proc{s}{\OB{f}\cdot f} , n, \taintt')) }$, $\trg{\Sigma_1'} = \trg{(w, \OB{(\Omega',m,\taintt)}}$ $\trg{\cdot (\OB{F},\OB{I},H_1',B_1'\triangleright \proc{s'}{\OB{f'}\cdot f'} , n', \taintt'')) }$, and  $\trg{\lambda'} = \trg{ \trat{_1'{^{\taintt'}}} } $ if $\trg{f} == \trg{f'}  \wedge \trg{f} \in \trg{I}$ and  $\trg{\lambda'} = \trg{\trat{_1'{^{\taintt'}}}\cdot \acat{'{^{\taintt'}}}} $ otherwise.
		There are two cases:
		\begin{description}
			\item[$\trg{f} == \trg{f'}  \wedge \trg{f} \in \trg{I}$:]
			Hence, $\trg{\lambda} = \trat{_1^{\taintt}}$ and $\trg{\lambda'} = \trat{_1'^{\taintt'}} $.
			Therefore, $\trg{\lambda} = \trg{\lambda'}$ follows from $\trg{\lambda} = \trat{_1^{\taintt}}$, $\trg{\lambda'} = \trat{_1'^{\taintt'}} $, and $\trg{\trat{_1^{\taintt}}} = \trg{\trat{_1'^{\taintt'}}}$.
			
			From $\trg{f} \in \trg{I}$, $\trg{(0,\Sigma_0) \Xtot{\trat{_1^{\taintt}}} (n-1,\Sigma_2)}$, $\trg{\Sigma_2} = \trg{(w, \OB{(\Omega,m,\taintt)}\cdot}$ $\trg{(\OB{F},\OB{I},H_2,B_2\triangleright \proc{s}{\OB{f}\cdot f} , n, \taintt')) }$, and \cref{lemma:trans-semantics-in-components-only-safe-values}, we get that for all $\trg{x}$, $\trg{v}$, $\trg{\sigma}$, if $\trg{B_2(x)} = \trg{v:\sigma}$, then $\trg{\sigma} = \trg{\safeta}$.
			From this, $\trg{\Sigma_2'} = \trg{(w, \OB{(\Omega',m,\taintt)}\cdot (\OB{F},\OB{I},H_2',B_2'\triangleright \proc{s}{\OB{f}\cdot f} , n, \taintt')) }$, and $\trg{\Sigma_2} \approx \trg{\Sigma_2'}$, we get that $\trg{B_2} = \trg{B_2'}$. 
			From this, $\trg{\Sigma_2 \xltot{ \acat{^{\taintt}} } \Sigma_1}$, and the determinism of $\xltot{}$, we get that $\trg{\Sigma_2' \xltot{ \acat{^{\taintt}} } \Sigma_1'}$.
			Then, $\trg{\Sigma_1} \approx \trg{\Sigma_1'}$ follows from 	$\trg{\Sigma_2 \xltot{ \acat{^{\taintt}} } \Sigma_1}$, $\trg{\Sigma_2' \xltot{ \acat{^{\taintt}} } \Sigma_1'}$, $\trg{\Sigma_2} \approx \trg{\Sigma_2'}$, and \cref{lemma:same-obs-preserve-low-equivalence-for-spec-semantics}.

			\item[$\neg(\trg{f} == \trg{f'}  \wedge \trg{f} \in \trg{I})$:]
			Hence, $\trg{\lambda} = \trg{\trat{_1^{\taintt}}\cdot \trg{\acat{^{\taintt}}}} $ and $\trg{\lambda'} = \trg{\trat{_1'{^{\taintt'}}}\cdot \acat{'{^{\taintt'}}}} $.
			There are two cases:
			\begin{description}
			\item[$\unsafe{\trg{\Sigma_2}}$:]
			Then, from $\safe{\trg{\lambda}}$, we have $\safe{\trg{\acat{^{\taintt}}}}$.
			From $\unsafe{\trg{\Sigma_2}}$, $\trg{\Sigma_2 \xltot{ \acat{^{\taintt}} } \Sigma_1}$, $\safe{\trg{\acat{^{\taintt}}}}$, $\trg{\Sigma_2} \approx \trg{\Sigma_2'}$, $\trg{\Sigma_2' \xltot{ \acat{'{^{\taintt}}} } \Sigma_1'}$, the determinism of $\xltot{}$, and \cref{lemma:safe-obs-preserve-low-equivalence-for-spec-semantics}, we get $\trg{\Sigma_1} \approx \trg{\Sigma_1'}$, $\trg{\acat{{^{\taintt}}} } = \trg{ \acat{'{^{\taintt'}}} }$, and $\trg{\lambda}= \trg{\lambda'}$ (from  $\trg{\acat{{^{\taintt}}} } = \trg{ \acat{'{^{\taintt'}}} }$ and $\trg{\trat{_1^{\taintt}}} = \trg{\trat{_1'^{\taintt'}}}$).
			
			\item[$\safe{\trg{\Sigma_2}}$:]
			From this and $\trg{\Sigma_2} \approx \trg{\Sigma_2'}$, we have $\safe{\trg{\Sigma_2'}}$.
			From this, $\trg{\Sigma_0} = \trg{\SInit{P}}$, $\trg{\Sigma_0'} = \trg{\SInit{P'}}$, and \cref{lemma:non-speculative-projection}, we get 
			 $\trg{\nspecProject{\lambda}} = \trg{ \nspecProject{ \trat{_1^{\taintt}} } \cdot \acat{^{\taintt} } } $ and $\trg{\nspecProject{\lambda'}} = \trg{ \nspecProject{ \trat{_1'{^{\taintt'}}} }\cdot \acat{'{^{\taintt'}}}} $.
			 There are two cases:
			 \begin{description}
			 \item[$\trg{ \acat{^{\taintt} }} = \trg{\acat{'{^{\taintt'}}}}$:]
			 Then, $\trg{\lambda} = \trg{\lambda'}$ follows from $\trg{ \acat{^{\taintt} }} = \trg{\acat{'{^{\taintt'}}}}$ and $\trg{\trat{_1^{\taintt}}} = \trg{\trat{_1'^{\taintt'}}}$.
			 Moreover, $\trg{\Sigma_1} \approx \trg{\Sigma_1'}$ follows from  $\trg{\Sigma_2} \approx \trg{\Sigma_2'}$,  	$\trg{\Sigma_2 \xltot{ \acat{^{\taintt}} } \Sigma_1}$, $\trg{\Sigma_2' \xltot{ \acat{'{^{\taintt'}}} } \Sigma_1'}$, $\trg{ \acat{^{\taintt} }} = \trg{\acat{'{^{\taintt'}}}}$, and \cref{lemma:same-obs-preserve-low-equivalence-for-spec-semantics}.

			 \item[$\trg{ \acat{^{\taintt} }} \neq \trg{\acat{'{^{\taintt'}}}}$:]
			 From this, $\trg{\nspecProject{\lambda}} = \trg{ \nspecProject{ \trat{_1^{\taintt}} } \cdot \acat{^{\taintt} } } $, $\trg{\nspecProject{\lambda'}} = \trg{ \nspecProject{ \trat{_1'{^{\taintt'}}} }\cdot \acat{'{^{\taintt'}}}} $, and $\trg{\nspecProject{ \trat{_1^{\taintt}}}} = \trg{\nspecProject{\trat{_1'^{\taintt'}} }}$, we get that $\trg{\nspecProject{\lambda}} <> \trg{\nspecProject{\lambda'}}$. 
			 \end{description}
			\end{description}
		\end{description}	
	Therefore, $\trg{\nspecProject{\lambda}} <> \trg{\nspecProject{\lambda'}}
	 \vee (\trg{\Sigma_1} \approx \trg{\Sigma_1'} \wedge  \trg{\lambda} = \trg{\lambda'})$ holds for this case.	
	\end{description}
	Therefore, $\trg{\nspecProject{\lambda}} <> \trg{\nspecProject{\lambda'}}
	 \vee (\trg{\Sigma_1} \approx \trg{\Sigma_1'} \wedge  \trg{\lambda} = \trg{\lambda'})$ holds for the induction step.
	\end{description}
\end{description}
This concludes the proof.
\end{proof}

\subsubsection{Speculative semantics $\xltot{}$}

\begin{lemma}[Steps of $\xltot{}$ with safe observations preserve low-equivalence for unsafe configurations]\label{lemma:safe-obs-preserve-low-equivalence-for-spec-semantics}
\begin{align*}
	\forall  \trg{\Sigma_0}, \trg{\Sigma_0'}, \trg{\Sigma_1}, \trg{\lambda}.
	& \text{ if } 
		\trg{\Sigma_0 \xltot{\lambda} \Sigma_1}, \trg{\Sigma_0} \approx \trg{\Sigma_0'}, \safe{\trg{\lambda}}, \unsafe{\trg{\Sigma_0}} \\
	& \text{ then } \exists \trg{\Sigma_1'}, \trg{\lambda'}.\ \trg{\Sigma_0' \xltot{\lambda'} \Sigma_1'},\ \trg{\Sigma_1} \approx \trg{\Sigma_1'},\ \trg{\lambda} = \trg{\lambda'}
\end{align*}
\end{lemma}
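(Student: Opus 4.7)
The plan is to proceed by case analysis on the rule that last fires in deriving $\trg{\Sigma_0 \xltot{\lambda} \Sigma_1}$, which via E-\TR-speculate-stack reduces to case analysis on the per-frame rule applied to the top speculation instance of $\trg{\Sigma_0}$. The decisive preliminary observation is that $\unsafe{\trg{\Sigma_0}}$ says the top frame carries pc-taint $\unta$, and every visible observation is labeled $\taintt_{\text{data}} \glb \taintt_{pc}$; since $\taintt \glb \unta = \unta$ by the integrity-lattice equations, every memory, call, return, and branch action produced by such a frame is labeled $\unta$. Combined with $\safe{\trg{\lambda}}$, this eliminates all but two families of rules: the $\epsilon$-emitting transitions (E-\TR-speculate-epsilon, E-\TR-speculate-lfence) and the rollback transitions (E-\TR-speculate-rollback, E-\TR-speculate-rollback-stuck).

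For the $\epsilon$-emitting cases, the rule that fires depends only on the syntactic shape of the statement at the top of the active frame. Since $\trg{\Sigma_0} \approx \trg{\Sigma_0'}$ forces the two top frames to share codebase, heap domain, stack shape, window counter, pc taint, and pending statement, the same rule is enabled in $\trg{\Sigma_0'}$ and produces some $\trg{\Sigma_1'}$ with $\trg{\lambda'} = \trg{\epsilon} = \trg{\lambda}$. To re-establish $\trg{\Sigma_1} \approx \trg{\Sigma_1'}$ I would rely on the invariant that under $\taintt_{pc} = \unta$ every binding or heap cell introduced by an $\epsilon$ step either (i) originates from a safe sub-expression, so by $\approx$ the two evaluations yield the same value and the same taint, or (ii) originates from an unsafe sub-expression, in which case the freshly recorded taint is $\unta$ and $\approx$ permits the values to differ.

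For the rollback cases, the definition of $\approx$ on speculative stacks forces matching window counters and pc taints at every layer. In particular, the top frames agree on both, so whenever E-\TR-speculate-rollback is enabled in $\trg{\Sigma_0}$ (window zero) it is enabled in $\trg{\Sigma_0'}$, fires with the same observation $\trg{\rollbl^\safeta}$, and strips the top frame from both stacks; the remaining stacks were already related, so $\trg{\Sigma_1} \approx \trg{\Sigma_1'}$ follows directly. The stuck variant additionally requires that stuckness transfers across $\approx$, which it does because stuckness is determined by the pending statement and by the codebase, both of which $\approx$ preserves.

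The main obstacle is the bookkeeping for constructs that can update the store without emitting an observation, most notably $\cmove{x}{e}{e'}{s}$ and the let-binding rule. For private reads and writes whose address or payload is unsafe, the resulting observation inherits the pc taint and is therefore $\unta$, hence excluded by $\safe{\trg{\lambda}}$; but conditional moves and let-bindings emit $\epsilon$ regardless of the taint of their sub-expressions, so $\approx$ must be re-established by taint-propagation reasoning alone. Here I would lean on the propagation invariant in the T-\TR-cmove and let-binding rules to show that the binding recorded for $x$ absorbs the taint of the source value (and, when applicable, of the guard) so that any divergence between the two related executions is dominated by an unsafe taint and therefore tolerated by $\approx$. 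Making this invariant fully precise---in particular, that every freshly safe-tainted binding is uniquely determined by the common safe portion of the related states---is what drives the remaining bookkeeping; once it is in place, the case analysis closes without further subtlety.
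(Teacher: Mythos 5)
There is a genuine gap, and it sits at the very first step of your case analysis. You claim that because the top frame has pc-taint $\unta$ and observations are labeled with the combination of data taint and pc taint, ``every memory, call, return, and branch action produced by such a frame is labeled $\unta$,'' so that $\safe{\trg{\lambda}}$ leaves only the $\epsilon$ and rollback rules. This inverts the intended combination operator: the label taint is the combination satisfying $\safeta \lub \unta = \safeta$ (see \Cref{tr:tus-rd-com-p-strong} and the discussion of SLH, where speculative actions on masked data are explicitly tainted $\trg{\safeta}$). Consequently a speculatively-executed instruction whose data is safe emits a \emph{safe, non-empty} observation, and these are precisely the cases this lemma exists to handle --- they are what make SLH-compiled code speculatively safe. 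The paper's proof therefore does the opposite of what you do: in the E-\TR-speculate-action case it uses $\unsafe{\trg{\Sigma_0}}$ together with $\safe{\trg{\lambda}}$ to conclude that the \emph{underlying data taint} of the action is $\trg{\safeta}$, and then invokes the corresponding lemma for the non-speculative semantics (\cref{lemma:safe-obs-preserve-low-equivalence-for-non-spec-semantics}) to obtain the matching step, equal label, and preservation of $\approx$.

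The most serious casualty of this omission is E-\TR-speculate-if, which you never consider. A branch executed under speculation with a safe guard emits a safe $\com{\ifl{v}}$ observation and \emph{pushes a new speculation instance onto the stack}, so the stack shape changes. The proof must show that safety of the label forces the guard's taint to be $\trg{\safeta}$, hence (by \cref{lemma:low-equiv-bindings-low-equiv-results}) the guard evaluates to the same value in both related states, so both executions mis-speculate into the same branch, push frames with the same window $\fun{min}{\trgb{\omega},\trg{n}}$ and taint $\trg{\unta}$, and the resulting stacks remain related. None of your reasoning about $\epsilon$ steps or rollbacks covers this; without it the induction in \cref{lemma:trans-semantics-preserve-low-equivalence-or-different-non-spec-projections} (which consumes this lemma) does not go through. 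Your treatment of the $\epsilon$ and rollback cases is essentially right, but the action and branch cases are the substance of the lemma and must be restored.
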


\begin{proof}
Let $\trg{\Sigma_0}$, $\trg{\Sigma_0'}$, $\trg{\Sigma_1}$, and $\trg{\lambda}$ be such that $\trg{\Sigma_0 \xltot{\lambda} \Sigma_1}$, $\trg{\Sigma_0} \approx \trg{\Sigma_0'}$, and $\safe{\trg{\lambda}}$.
We proceed by case distinction on the rule used to derive $\trg{\Sigma_0 \xltot{\lambda} \Sigma_1}$:
\begin{description}
\item[E-\TR-speculate-epsilon:]
Then, $\trg{\Sigma_0} \isdef \trg{w, \OB{(\Omega,\trgb{\omega},\taintt)}\cdot(\Omega_0,n+1,\taintt)}$, $\trg{\Omega_0 \xtot{\epsilon} \Omega_1}$, $\trg{\Omega_0}\equiv\trg{C, H, \OB{B} \triangleright s;s'}$, $\trg{s}\not\equiv\trg{s'';s'''} \text{ and } \trg{s}\not\equiv\trg{\lfence}$, and $\trg{\lambda} = \trg{\epsilon}$, and $\trg{\Sigma_1} \isdef \trg{w, \OB{(\Omega,\trgb{\omega},\taintt)}\cdot(\Omega_1,n,\taintt)}$.
From $\trg{\Sigma_0} \approx \trg{\Sigma_0'}$, we get that $\trg{\Sigma_0'} \isdef$ $\trg{w, \OB{(\Omega',\trgb{\omega},\taintt)}}$ $\trg{\cdot(\Omega_0',n+1,\taintt)}$ and $\trg{\Omega_0} \approx \trg{\Omega_0'}$.
From $\trg{\Omega_0 \xtot{\epsilon} \Omega_1}$, $\trg{\Omega_0} \approx \trg{\Omega_0'}$, $\safe{\trg{\epsilon}}$, and \cref{lemma:safe-obs-preserve-low-equivalence-for-non-spec-semantics}, we get that $\trg{\Omega_0' \xtot{\epsilon} \Omega_1'}$ and $\trg{\Omega_1} \approx \trg{\Omega_1'}$.
We can therefore apply the E-\TR-speculate-epsilon rule to derive $\trg{\Sigma_0'} \xltot{\trg{\lambda'}} \trg{\Sigma_1'}$ where $\trg{\lambda'} = \trg{\epsilon}$ and $\trg{\Sigma_1'} \isdef \trg{w, \OB{(\Omega,\trgb{\omega},\taintt)}\cdot(\Omega_1',n,\taintt)}$.
Hence, $\trg{\lambda} = \trg{\lambda'}$ follows from $\trg{\lambda'} = \trg{\epsilon}$ and $\trg{\lambda'} = \trg{\epsilon}$, whereas $\trg{\Sigma_1} \approx \trg{\Sigma_1'}$ follows from $\trg{\Sigma_0} \approx \trg{\Sigma_0'}$ and $\trg{\Omega_1} \approx \trg{\Omega_1'}$.

\item[E-\TR-speculate-lfence:]
The proof of this case is similar to that of E-\TR-speculate-epsilon.

\item[E-\TR-speculate-action:]
Then,  $\trg{\Sigma_0} \isdef \trg{w, \OB{(\Omega,\trgb{\omega},\taintt)}\cdot(\Omega_0,n+1,\taintt)  }$, $\trg{\Omega_0 \xtot{\trgb{\lambda_0}^{\taintt_0}} \Omega_1}$, $\trg{\Omega_0}\equiv\trg{C, H, \OB{B} \triangleright s;s'}$, $\trg{s}\not\equiv\trg{s'';s'''} \text{ and } \trg{s}\not\equiv\trg{\lfence}$, and $\trg{\lambda} = \trg{ \trgb{\lambda_0}^{{\taintt_0}\glb\taintt}}$, and $\trg{\Sigma_1} \isdef \trg{w, \OB{(\Omega,\trgb{\omega},\taintt)}\cdot(\Omega_1,n,\taintt)}$.
From $\trg{\Sigma_0} \approx \trg{\Sigma_0'}$, we get that $\trg{\Sigma_0'} \isdef \trg{w, \OB{(\Omega',\trgb{\omega},\taintt)}\cdot(\Omega_0',n+1,\taintt)}$ and $\trg{\Omega_0} \approx \trg{\Omega_0'}$.
From $\unsafe{\trg{\Sigma_0}}$ and $\safe{\trg{\lambda}}$, we get that $\safe{\trgb{\lambda}^{\taintt_0}}$.
From this, $\trg{\Omega_0} \approx \trg{\Omega_0'}$, and \cref{lemma:safe-obs-preserve-low-equivalence-for-non-spec-semantics}, we get that $\trg{\Omega_0' \xtot{\trgb{\lambda}^{\taintt_0}} \Omega_1'}$ and $\trg{\Omega_1} \approx \trg{\Omega_1'}$.
We can therefore apply the E-\TR-speculate-action rule to derive $\trg{\Sigma_0'} \xltot{\trg{\lambda'}} \trg{\Sigma_1'}$ where $\trg{\lambda'} = \trg{\trgb{\lambda_0}^{{\taintt_0}\glb\taintt}}$ and $\trg{\Sigma_1'} \isdef \trg{w, \OB{(\Omega,\trgb{\omega},\taintt)}\cdot(\Omega_1',n,\taintt)}$.
Hence, $\trg{\lambda} = \trg{\lambda'}$ follows from $\trg{\lambda'} = \trg{\trgb{\lambda_0}^{{\taintt_0}\glb\taintt}}$ and $\trg{\lambda'} = \trg{\trgb{\lambda_0}^{{\taintt_0}\glb\taintt}}$, whereas $\trg{\Sigma_1} \approx \trg{\Sigma_1'}$ follows from $\trg{\Sigma_0} \approx \trg{\Sigma_0'}$ and $\trg{\Omega_1} \approx \trg{\Omega_1'}$.

\item[E-\TR-speculate-if:]
Then,  $\trg{\Sigma_0} \isdef \trg{w, \OB{(\Omega,\trgb{\omega},\taintt)}\cdot(\Omega_0,n+1,\taintt) }$,
$\trg{\Omega_0} \isdef \trg{C, H, \OB{B} \cdot B \triangleright  }$ $\trg{\proc{\ifte{e}{s''}{s'''};s'}{\OB{f}\cdot f}}$,
$\trg{C}\equiv\trg{\OB{F};\OB{I}}$, $\trg{f}\notin\trg{\OB{I}}$, 
$\trg{\Omega_0 \xtot{\acat{^{\taintt_0}}} \Omega_1}$,
$\trg{\lambda} = \trg{\acat{^{\taintt_0\glb\taintt}}}$, and 
$\trg{\Sigma_1} \isdef \trg{w, \OB{(\Omega,\trgb{\omega},\taintt)}\cdot(\Omega_1,n,\taintt) \cdot (\Omega_2, \fun{min}{\trg{w},\trg{n}}, \unta )}$ where $\trg{\Omega_2} = \trg{C, H,}$ $\trg{\OB{B} \cdot B \triangleright s''';s'}$ if $\trg{B \triangleright e\bigred 0:{\taintt_0}}$ and   $\trg{\Omega_2} = \trg{C, H, \OB{B} \cdot B \triangleright s'';s'}$ if $\trg{B \triangleright e\bigred n:}$ $\trg{{\taintt_0} \wedge n > 0}$.
From $\trg{\Sigma_0} \approx \trg{\Sigma_0'}$, we get that $\trg{\Sigma_0'} \isdef  \trg{w, \OB{(\Omega',\trgb{\omega},\taintt)}\cdot(\Omega_0',n+1,\taintt) }$ where $\trg{\OB{(\Omega',\trgb{\omega},\taintt)}} \approx \trg{\OB{(\Omega,\trgb{\omega},\taintt)}}$  and $\trg{\Omega_0} \approx \trg{\Omega_0'}$.
Thus, $\trg{\Omega_0'} \isdef \trg{C, H', \OB{B'} \triangleright  }$ $\trg{\proc{\ifte{e}{s''}{s'''};s'}{\OB{f}\cdot f}}$ where $\trg{H} \approx \trg{H'}$ and $\trg{\OB{B}} \approx \trg{\OB{B'}}$.

From $\unsafe{\trg{\Sigma_0} }$ and $\safe{\trg{\lambda}}$, we get $\safe{\trg{\acat{^{\taintt_0}}}}$ and $\trg{\taintt_0} = \trg{\safeta}$.
From this, $\trg{\Sigma_0} \approx \trg{\Sigma_0'}$, and  \cref{lemma:safe-obs-preserve-low-equivalence-for-non-spec-semantics}, we get that $\trg{\Omega_0' \xtot{\acat{^{\taintt_0}}} \Omega_1'}$ and $\trg{\Omega_1} \approx \trg{\Omega_1'}$.

From $\trg{\Omega_0} \approx \trg{\Omega_0'}$ and  $\trg{\Omega_0} \isdef \trg{C, H, \OB{B} \cdot B \triangleright  }$ $\trg{\proc{\ifte{e}{s''}{s'''};s'}{\OB{f}\cdot f}}$, we have $\trg{\Omega_0} \isdef \trg{C, H', \OB{B'} \cdot B' \triangleright  }$ $\trg{\proc{\ifte{e}{s''}{s'''};s'}{\OB{f}\cdot f}}$ where $\trg{H} \approx \trg{H'}$, $\trg{B} \approx \trg{B'}$, and $\trg{\OB{B}} \approx \trg{\OB{B'}}$. 
From $\trg{B} \approx \trg{B'}$, $\trg{B \triangleright e\bigred v:{\taintt_0}}$, $\trg{\taintt_0} = \trg{\safeta}$, and \cref{lemma:low-equiv-bindings-low-equiv-results}, we have that  $\trg{B' \triangleright e\bigred v:{\taintt_0}}$.

We can therefore apply the E-\TR-speculate-if rule to derive $\trg{\Sigma_0'} \xltot{\trg{\lambda'}} \trg{\Sigma_1'}$ where $\trg{\lambda'} = \trg{ \acat{^{\taintt_0\glb\taintt'}}}$ and $\trg{\Sigma_1'} \isdef \trg{w, \OB{(\Omega',\trgb{\omega},\taintt)}\cdot(\Omega_1',n,\taintt')\cdot(\Omega_2',j,\unta)}$ where $\trg{\Omega_2'} =   \trg{C, H',}$ $\trg{\OB{B'} \cdot B' \triangleright s''';s'}$ if $\trg{B' \triangleright e\bigred 0:{\taintt_0}}$ and   $\trg{\Omega_2'} = \trg{C, H', \OB{B'} \cdot B' \triangleright s'';s'}$ if $\trg{B' \triangleright e\bigred n:{\taintt_0} \wedge n > 0}$.
Hence, $\trg{\lambda} = \trg{\lambda'}$ follows from $\trg{\lambda'} = \trg{\trg{\acat{^{\taintt_0\glb\taintt'}}}}$ and $\trg{\lambda'} = \trg{\trg{\acat{^{\taintt_0\glb\taintt'}}}}$, whereas $\trg{\Sigma_1} \approx \trg{\Sigma_1'}$ follows from $\trg{\Sigma_0} \approx \trg{\Sigma_0'}$, $\trg{\Omega_1} \approx \trg{\Omega_1'}$, and $\trg{\Omega_2} \approx \trg{\Omega_2'}$ (which follows from $\trg{B \triangleright e\bigred v:{\taintt_0}}$ and $\trg{B' \triangleright e\bigred v:{\taintt_0}}$).

\item[E-\TR-speculate-rollback:]
Then, $\trg{\Sigma_0} \isdef \trg{w, \OB{(\Omega,\trgb{\omega},\taintt)}\cdot(\Omega,0,\taintt) }$, $\trg{\lambda} = \trg{\rollbl}$, and $\trg{\Sigma_1} \isdef \trg{ w, \OB{(\Omega,\trgb{\omega},\taintt)}}$.
From $\trg{\Sigma_0} \approx \trg{\Sigma_0'}$, we get that $\trg{\Sigma_0'} \isdef \trg{w, \OB{(\Omega',\trgb{\omega},\taintt)}\cdot(\Omega_0',}$ $\trg{0,\taintt) }$ where $\trg{\OB{(\Omega',\trgb{\omega},\taintt)}} \approx \trg{\OB{(\Omega,\trgb{\omega},\taintt)}}$ and $\trg{\Omega_0} \approx \trg{\Omega_0'}$.
Hence, we can apply the E-\TR-speculate-rollback rule to derive $\trg{\Sigma_0'} \xltot{\trg{\lambda'}} \trg{\Sigma_1'}$ where $\trg{\lambda'} = \trg{\rollbl}$ and $\trg{\Sigma_1'} \isdef \trg{w, \OB{(\Omega',\trgb{\omega},\taintt)}}$.
Hence, $\trg{\lambda} = \trg{\lambda'}$ follows from $\trg{\lambda'} = \trg{\rollbl}$ and $\trg{\lambda'} = \trg{\rollbl}$, whereas $\trg{\Sigma_1} \approx \trg{\Sigma_1'}$ follows from $\trg{\Sigma_0} \approx \trg{\Sigma_0'}$.

\item[E-\TR-speculate-rollback-stuck:]
Then, $\trg{\Sigma_0} \isdef \trg{w, \OB{(\Omega,\trgb{\omega},\taintt)}\cdot(\Omega_0,W,\taintt)}$, $\vdash \trg{\Sigma_0} : \bot$,  $\trg{\lambda} = \trg{\rollbl}$, and $\trg{\Sigma_1} = \trg{(w, \OB{(\Omega,\trgb{\omega},\taintt)})}$.
From $\trg{\Sigma_0} \approx \trg{\Sigma_0'}$, we get that $\trg{\Sigma_0'} \isdef \trg{w, \OB{(\Omega',\trgb{\omega},\taintt)}\cdot(\Omega_0',0,\taintt) }$ where $\trg{\OB{(\Omega',\trgb{\omega},\taintt)}} \approx \trg{\OB{(\Omega,\trgb{\omega},\taintt)}}$ and $\trg{\Omega_0} \approx \trg{\Omega_0'}$.
From $\trg{\Sigma_0} \approx \trg{\Sigma_0'}$ and $\vdash \trg{\Sigma_0} : \bot$, we get $\vdash \trg{\Sigma_0'} : \bot$.
Therefore, we can apply the E-\TR-speculate-rollback-stuck to derive  $\trg{\Sigma_0'} \xltot{\trg{\lambda'}} \trg{\Sigma_1'}$ where $\trg{\lambda'} = \trg{\rollbl}$ and $\trg{\Sigma_1'} \isdef \trg{w, \OB{(\Omega',\trgb{\omega},\taintt)}}$.
Hence, $\trg{\lambda} = \trg{\lambda'}$ follows from $\trg{\lambda'} = \trg{\rollbl}$ and $\trg{\lambda'} = \trg{\rollbl}$, whereas $\trg{\Sigma_1} \approx \trg{\Sigma_1'}$ follows from $\trg{\Sigma_0} \approx \trg{\Sigma_0'}$.

\end{description} 
This concludes our proof.
\end{proof}

\begin{lemma}[Steps of $\xltot{}$ without observations preserve low-equivalence]\label{lemma:no-obs-preserve-low-equivalence-spec-semantics}
\begin{align*}
	\forall  \trg{\Sigma_0}, \trg{\Sigma_0'}, \trg{\Sigma_1}, \trg{\lambda}.
	& \text{ if } 
		\trg{\Sigma_0 \xltot{\epsilon} \Sigma_1}, \trg{\Sigma_0} \approx \trg{\Sigma_0'} \\
	& \text{ then } \exists \trg{\Sigma_1'}.\ \trg{\Sigma_0' \xltot{\epsilon'} \Sigma_1'},\ \trg{\Sigma_1} \approx \trg{\Sigma_1'}
\end{align*}
\end{lemma}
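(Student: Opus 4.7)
The plan is to prove this by case analysis on the rule used to derive $\trg{\Sigma_0 \xltot{\epsilon} \Sigma_1}$. Inspecting the six speculative reduction rules, only E-\TR-speculate-epsilon, E-\TR-speculate-lfence, and E-\TR-speculate-action (in the sub-case where the underlying small step emits $\trg{\epsilon}$) can produce an $\trg{\epsilon}$ top-level label, because E-\TR-speculate-if always emits an action $\trg{\acat{}}$ and the two rollback rules always emit $\trg{\rollbl}$. So only three sub-cases need to be handled, and in each case the stack shape of $\trg{\Sigma_0}$ is fixed: $\trg{\Sigma_0 \equiv w, \OB{(\Omega,\trgb{\omega},\taintt)} \cdot (\Omega_0,n{+}1,\taintt')}$ for some top frame.

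From $\trg{\Sigma_0} \approx \trg{\Sigma_0'}$ and the shape of $\approx$ on speculative states (\Cref{def:safeeq}), I immediately get that $\trg{\Sigma_0'}$ has the same stack length, the same speculation window $\trg{w}$, matching $\trgb{\omega}$ and $\trg{\taintt}$ components frame-by-frame, and a top frame of the form $\trg{(\Omega_0',n{+}1,\taintt')}$ with $\trg{\Omega_0} \approx \trg{\Omega_0'}$. In each sub-case I then invoke the analogue lemma on the non-speculative single-step semantics (the companion of \Cref{lemma:safe-obs-preserve-low-equivalence-for-non-spec-semantics} for steps producing no observation, or the same lemma specialised to $\trg{\epsilon}$, which is trivially safe) to obtain $\trg{\Omega_0' \xtot{\epsilon} \Omega_1'}$ with $\trg{\Omega_1} \approx \trg{\Omega_1'}$. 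I then reapply the \emph{same} speculative rule to $\trg{\Sigma_0'}$: the syntactic guards (e.g.\ that the statement is \trg{\lfence} in the lfence case, or is neither an \trg{if} nor an \trg{\lfence} in the other cases) transfer because $\trg{\Omega_0}$ and $\trg{\Omega_0'}$ share the same code by $\approx$. This gives me $\trg{\Sigma_0' \xltot{\epsilon} \Sigma_1'}$ where $\trg{\Sigma_1'}$ is obtained by rebuilding the top frame with $\trg{\Omega_1'}$ and either decrementing the window by one (epsilon/action cases) or resetting it to $\trg{0}$ (lfence case), exactly mirroring $\trg{\Sigma_1}$.

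Finally, $\trg{\Sigma_1} \approx \trg{\Sigma_1'}$ is assembled piecewise: the untouched prefix $\trg{\OB{(\Omega,\trgb{\omega},\taintt)}}$ stays equivalent from $\trg{\Sigma_0} \approx \trg{\Sigma_0'}$, the window and taint of the top frame are identical by construction, and the top program states are related by $\trg{\Omega_1} \approx \trg{\Omega_1'}$ delivered by the single-state lemma. The $\trg{\lambda'} = \trg{\epsilon}$ side-condition of the conclusion holds by choice of rule.

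The only non-trivial step is the E-\TR-speculate-action sub-case: here the underlying small step $\trg{\Omega_0 \xtot{\trgb{\lambda_0}} \Omega_1}$ produces some label that is lifted to the $\trg{\epsilon}$ speculative observation only in degenerate cases. Since all the taint-propagating rules we have emit at least the pc-taint glb'd with the label taint, I need to confirm that whenever the lifted label is $\trg{\epsilon}$ the underlying $\trgb{\lambda_0}$ is also $\trg{\epsilon}$ (so that no heap or branch observation is hidden); this follows directly from the shape of \Cref{tr:et-sp-act}, which reuses $\trgb{\lambda}$ verbatim. That suffices, and the remaining reasoning is routine matching of the speculative rules against the lifted single-state step.
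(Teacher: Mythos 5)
Your proof is correct and takes essentially the same route as the paper: the paper's own proof of this lemma simply reduces it to the safe-observation speculative lemma restricted to the E-\TR-speculate-epsilon and E-\TR-speculate-lfence cases, invoking the no-observation single-step lemma in place of the safe-observation one, exactly as you do. Your extra sub-case for E-\TR-speculate-action with an underlying $\trg{\epsilon}$ label is a harmless bit of additional care that degenerates to the epsilon case, so nothing of substance differs.
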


\begin{proof}
Proof is similart to  \cref{lemma:safe-obs-preserve-low-equivalence-for-spec-semantics} (only rules E-\TR-speculate-epsilon and E-\TR-speculate-lfence) but using \cref{lemma:no-obs-preserve-low-equivalence-for-non-spec-semantics} instead of \cref{lemma:safe-obs-preserve-low-equivalence-for-non-spec-semantics}.	
\end{proof}

\begin{lemma}[Steps of $\xltot{}$ with same observations preserve low-equivalence]\label{lemma:same-obs-preserve-low-equivalence-for-spec-semantics}
\begin{align*}
	\forall  \trg{\Sigma_0}, \trg{\Sigma_0'}, \trg{\Sigma_1}, \trg{\Sigma_1'},  \trg{\lambda}, \trg{\lambda'}.
	& \text{ if } 
		\trg{\Sigma_0 \xltot{\lambda} \Sigma_1}, \trg{\Sigma_0' \xltot{\lambda'} \Sigma_1'}, \trg{\Sigma_0} \approx \trg{\Sigma_0'}, \trg{\lambda} = \trg{\lambda'} \\%
	& \text{ then } \trg{\Sigma_1} \approx \trg{\Sigma_1'}
\end{align*}
\end{lemma}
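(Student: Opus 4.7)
The plan is to mirror the structure of \cref{lemma:safe-obs-preserve-low-equivalence-for-spec-semantics}, proceeding by case analysis on the rule used to derive $\trg{\Sigma_0 \xltot{\lambda} \Sigma_1}$. For each of the six rules (E-\TR-speculate-epsilon, -lfence, -action, -if, -rollback, -rollback-stuck), I first use $\trg{\Sigma_0} \approx \trg{\Sigma_0'}$ to extract the matching structure of $\trg{\Sigma_0'}$: the stack has the same length, the top frame carries the same speculation window and pc taint, and the top $\trg{\Omega_0}$ and $\trg{\Omega_0'}$ agree on code and share low-equivalent heaps and bindings. From $\trg{\lambda} = \trg{\lambda'}$ and the determinism of $\xltot{}$ (applied to the second premise), I show that $\trg{\Sigma_0' \xltot{\lambda'} \Sigma_1'}$ must be derived with the same rule as the first, so the two steps line up and I only need to propagate low-equivalence through the step.

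For the silent rules (E-\TR-speculate-epsilon and -lfence), I appeal to \cref{lemma:no-obs-preserve-low-equivalence-for-non-spec-semantics} on the inner $\xtot{}$ step to get $\trg{\Omega_1} \approx \trg{\Omega_1'}$, which lifts to the top frame while the rest of the stack is unchanged. For the two rollback rules, the top frame is simply popped and $\trg{\Sigma_1} \approx \trg{\Sigma_1'}$ follows immediately from $\trg{\Sigma_0} \approx \trg{\Sigma_0'}$. For E-\TR-speculate-action and E-\TR-speculate-if, I invoke a companion lemma on the non-speculative step of the form: if $\trg{\Omega_0 \xtot{\trgb{\lambda_0}^{\taintt_0}} \Omega_1}$ and $\trg{\Omega_0' \xtot{\trgb{\lambda_0}^{\taintt_0}} \Omega_1'}$ with $\trg{\Omega_0} \approx \trg{\Omega_0'}$, then $\trg{\Omega_1} \approx \trg{\Omega_1'}$. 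For the branch case, I additionally observe that since both speculative states take the \emph{wrong} branch of the same guard, the speculative continuations $\trg{\Omega_2}$ and $\trg{\Omega_2'}$ run the same statement and inherit low-equivalence from $\trg{\Omega_0} \approx \trg{\Omega_0'}$.

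The main obstacle will be establishing the companion same-observation lemma for the non-speculative $\xtot{}$, since each operational rule requires its own inspection of what the label reveals. The critical observation is that labels encode enough information to synchronise both executions on the value-relevant aspects: the branch label $\trg{\ifl{v}}$ encodes the concrete guard value $\trg{v}$, so equal labels force equal branch choice; the public write label $\trg{\wrl{n \mapsto v}}$ encodes both address and written value, preserving public-heap low-equivalence; the private write label $\trg{\wrl{n}}$ only reveals the address, but the stored value is tainted $\trg{\unta}$ so differing values are allowed by $\approx$; and a private read in \TR binds the target variable to a value tagged $\trg{\unta}$, which by \cref{def:safeeq} may differ between the two runs. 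For call/return labels, the argument/return value is either in the label (hence equal) or handled as in the attacker-only bindings-are-safe argument reused from \cref{lemma:trans-semantics-in-components-only-safe-values}. Safe-taint evaluations of expressions on low-equivalent bindings yielding equal values is provided by \cref{lemma:low-equiv-bindings-low-equiv-results}, closing the remaining value-dependence gaps.
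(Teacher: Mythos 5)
Your proposal is correct and follows essentially the same route as the paper's proof: case analysis on the speculative rule, alignment of the two derivations via determinism and $\trg{\lambda}=\trg{\lambda'}$, reduction to a same-observation preservation lemma for the non-speculative $\xtot{}$ step, and the observation that equal $\trg{\ifl{v}}$ labels force the same (mis-)taken branch and hence low-equivalent speculative continuations. The only detail you elide is that in the action and if cases $\trg{\lambda}=\trg{\lambda'}$ only equates the labels \emph{after} the glb with the pc taint, so before invoking your companion lemma with identical tainted labels on both sides one still needs the small auxiliary fact — which the paper isolates as a separate lemma — that low-equivalent configurations emitting the same observation cannot disagree on the inner taint.
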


\begin{proof}
Let $\trg{\Sigma_0}$, $\trg{\Sigma_0'}$, $\trg{\Sigma_1}$, $\trg{\Sigma_1'}$,  $\trg{\lambda}$, $\trg{\lambda'}$ be such that $\trg{\Sigma_0 \xltot{\lambda} \Sigma_1}$, $\trg{\Sigma_0' \xltot{\lambda'} \Sigma_1'}$, $\trg{\Sigma_0} \approx \trg{\Sigma_0'}$, and $\trg{\lambda} = \trg{\lambda'}$. %
We proceed by case distinction on the rule used to derive $\trg{\Sigma_0 \xltot{\lambda} \Sigma_1}$:
\begin{description}
\item[E-\TR-speculate-epsilon:]
Then, $\trg{\Sigma_0} \isdef \trg{w, \OB{(\Omega,\trgb{\omega},\taintt)}\cdot(\Omega_0,n+1,\taintt)}$, $\trg{\Omega_0 \xtot{\epsilon} \Omega_1}$, $\trg{\Omega_0}\equiv\trg{C, H, \OB{B} \triangleright s;s'}$, $\trg{s}\not\equiv\trg{s'';s'''} \text{ and } \trg{s}\not\equiv\trg{\lfence}$, and $\trg{\lambda} = \trg{\epsilon}$, and $\trg{\Sigma_1} \isdef \trg{w, \OB{(\Omega,\trgb{\omega},\taintt)}\cdot(\Omega_1,n,\taintt)}$.
From $\trg{\Sigma_0} \approx \trg{\Sigma_0'}$, we get that $\trg{\Sigma_0'} \isdef$ $\trg{w, \OB{(\Omega',\trgb{\omega},\taintt)}}$ $\trg{\cdot(\Omega_0',n+1,\taintt)}$ and $\trg{\Omega_0} \approx \trg{\Omega_0'}$.
From $\trg{\Omega_0} \approx \trg{\Omega_0'}$ and $\trg{\Omega_0}\equiv\trg{C, H, \OB{B} \triangleright s;s'}$, we get that $\trg{\Omega_0'}\equiv\trg{C, H, \OB{B} \triangleright s;s'}$.
Since $\trg{\lambda} = \trg{\lambda'}$ and $\trg{\lambda} = \trg{\epsilon}$, $\trg{\Sigma_0' \xltot{\lambda'} \Sigma_1'}$ has also been derived using the E-\TR-speculate-epsilon rule.
Thus, $\trg{\Sigma_1'} \isdef \trg{w, \OB{(\Omega',\trgb{\omega},\taintt)}\cdot(\Omega_1',n,\taintt)}$ and  $\trg{\Omega_0' \xtot{\epsilon} \Omega_1'}$.
From $\trg{\Omega_0} \approx \trg{\Omega_0'}$, $\trg{\Omega_0 \xtot{\epsilon} \Omega_1}$, $\trg{\Omega_0' \xtot{\epsilon} \Omega_1'}$, and \cref{lemma:same-obs-preserve-low-equivalence-for-non-spec-semantics}, we get $\trg{\Omega_1} \approx \trg{\Omega_1'}$.
Hence, $\trg{\Sigma_1} \approx \trg{\Sigma_1'}$ follows from $\trg{\Sigma_0} \approx \trg{\Sigma_0'}$ and $\trg{\Omega_1} \approx \trg{\Omega_1'}$.

\item[E-\TR-speculate-lfence:]
The proof of this case is similar to that of E-\TR-speculate-epsilon.

\item[E-\TR-speculate-action:]
Then,  $\trg{\Sigma_0} \isdef \trg{w, \OB{(\Omega,\trgb{\omega},\taintt)}\cdot(\Omega_0,n+1,\taintt)  }$, $\trg{\Omega_0 \xtot{\trgb{\lambda_0}^{\taintt_0}} \Omega_1}$, $\trg{\Omega}\equiv\trg{C, H, \OB{B} \triangleright s;s'}$, $\trg{s}\not\equiv\trg{s'';s'''} \text{ and } \trg{s}\not\equiv\trg{\lfence}$, and $\trg{\lambda} = \trg{ \trgb{\lambda_0}^{{\taintt_0}\glb\taintt}}$, and $\trg{\Omega_1} \isdef \trg{w, \OB{(\Omega,\trgb{\omega},\taintt)}\cdot(\Omega_1,n,\taintt)}$.
From $\trg{\Sigma_0} \approx \trg{\Sigma_0'}$, we get that $\trg{\Sigma_0'} \isdef \trg{w, \OB{(\Omega',\trgb{\omega},\taintt)}\cdot(\Omega_0',n+1,\taintt)}$ and $\trg{\Omega_0} \approx \trg{\Omega_0'}$.
From $\trg{\Omega_0} \approx \trg{\Omega_0'}$ and $\trg{\Omega_0}\equiv\trg{C, H, \OB{B} \triangleright s;s'}$, we get that $\trg{\Omega_0'}\equiv\trg{C, H, \OB{B} \triangleright s;s'}$.
Since $\trg{\lambda} = \trg{\lambda'}$ and $\trg{\lambda} = \trg{ \trgb{\lambda_0}^{{\taintt_0}\glb\taintt}}$, $\trg{\Sigma_0' \xltot{\lambda'} \Sigma_1'}$ has also been derived using the E-\TR-speculate-action rule.
Thus, $\trg{\Sigma_1'} \isdef \trg{w, \OB{(\Omega',\trgb{\omega},\taintt)}\cdot(\Omega_1',n,\taintt)}$ and  $\trg{\Omega_0' \xtot{\trgb{\lambda_0'}^{\taintt_0'}} \Omega_1'}$.
From  $\trg{\lambda} = \trg{ \trgb{\lambda_0}^{{\taintt_0}\glb\taintt}}$,  $\trg{\lambda'} = \trg{ \trgb{\lambda_0'}^{{\taintt_0'}\glb\taintt}}$, and $\trg{\lambda} = \trg{\lambda'}$, we get $\trg{\lambda_0} = \trg{\lambda_0'}$.
From $\trg{\Omega_0} \approx \trg{\Omega_0'}$, $\trg{\Omega_0 \xtot{\lambda_0^{\taintt_0}} \Omega_1}$, $\trg{\Omega_0' \xtot{\lambda_0^{\taintt_0'}} \Omega_1'}$, and \cref{lemma:same-obs-cannot-disagree-on-labels-for-low-equivalence-confs-non-spec-semantics}, we get $\trg{\taintt_0} = \trg{\taintt_0'}$.
From $\trg{\Omega_0} \approx \trg{\Omega_0'}$, $\trg{\Omega_0 \xtot{\lambda_0^{\taintt_0}} \Omega_1}$, $\trg{\Omega_0' \xtot{\lambda_0^{\taintt_0}} \Omega_1'}$, and \cref{lemma:same-obs-preserve-low-equivalence-for-non-spec-semantics}, we get $\trg{\Omega_1} \approx \trg{\Omega_1'}$.
Hence, $\trg{\Sigma_1} \approx \trg{\Sigma_1'}$ follows from $\trg{\Sigma_0} \approx \trg{\Sigma_0'}$ and $\trg{\Omega_1} \approx \trg{\Omega_1'}$.

\item[E-\TR-speculate-if:]
Then,  $\trg{\Sigma_0} \isdef \trg{w, \OB{(\Omega,\trgb{\omega},\taintt)}\cdot(\Omega_0,n+1,\taintt) }$,
$\trg{\Omega_0} \isdef \trg{C, H, \OB{B} \cdot B \triangleright  }$ $\trg{\proc{\ifte{e}{s''}{s'''};s'}{\OB{f}\cdot f}}$,
$\trg{C}\equiv\trg{\OB{F};\OB{I}}$, $\trg{f}\notin\trg{\OB{I}}$, 
$\trg{\Omega_0 \xtot{\acat{^{\taintt_0}}} \Omega_1}$,
$\trg{\lambda} = \trg{\acat{^{\taintt_0\glb\taintt}}}$, and 
$\trg{\Sigma_1} \isdef \trg{w, \OB{(\Omega,\trgb{\omega},\taintt)}\cdot(\Omega_1,n,\taintt) \cdot (\Omega_2, \fun{min}{\trg{w},\trg{n}}, \unta )}$ where $\trg{\Omega_2} = \trg{C, H,}$ $\trg{\OB{B} \cdot B \triangleright s''';s'}$ if $\trg{B \triangleright e\bigred 0:{\taintt_0}}$ and   $\trg{\Omega_2} = \trg{C, H, \OB{B} \cdot B \triangleright s'';s'}$ if $\trg{B \triangleright e\bigred n:}$ $\trg{{\taintt_0} \wedge n > 0}$.
From $\trg{\Sigma_0} \approx \trg{\Sigma_0'}$, we get that $\trg{\Sigma_0'} \isdef  \trg{w, \OB{(\Omega',\trgb{\omega},\taintt)}\cdot(\Omega_0',n+1,\taintt) }$ where $\trg{\OB{(\Omega',\trgb{\omega},\taintt)}} \approx \trg{\OB{(\Omega,\trgb{\omega},\taintt)}}$  and $\trg{\Omega_0} \approx \trg{\Omega_0'}$.
Thus, $\trg{\Omega_0'} \isdef \trg{C, H', \OB{B'} \triangleright  }$ $\trg{\proc{\ifte{e}{s''}{s'''};s'}{\OB{f}\cdot f}}$ where $\trg{H} \approx \trg{H'}$ and $\trg{\OB{B}} \approx \trg{\OB{B'}}$.
Hence, $\trg{\Sigma_0' \xltot{\lambda'} \Sigma_1'}$ has also been derived using the E-\TR-speculate-action rule.
Thus, $\trg{\Sigma_1'} \isdef \trg{w, \OB{(\Omega',\trgb{\omega},\taintt)}\cdot(\Omega_1',n,\taintt) \cdot (\Omega_2', \fun{min}{\trg{w},\trg{n}}, \unta )}$ and $\trg{\Omega_0' \xtot{{\acat'}{^{\taintt_0'}}} \Omega_1'}$.
From  $\trg{\lambda} = \trg{ \trgb{\alpha}^{{\taintt_0}\glb\taintt}}$,  $\trg{\lambda'} = \trg{ \trgb{\alpha'}^{{\taintt_0'}\glb\taintt}}$, and $\trg{\lambda} = \trg{\lambda'}$, we get $\trg{\alpha} = \trg{\alpha'}$.
From $\trg{\Omega_0} \approx \trg{\Omega_0'}$, $\trg{\Omega_0 \xtot{\alpha^{\taintt_0}} \Omega_1}$, $\trg{\Omega_0' \xtot{{\alpha'}^{\taintt_0'}} \Omega_1'}$, and \cref{lemma:same-obs-cannot-disagree-on-labels-for-low-equivalence-confs-non-spec-semantics}, we get $\trg{\taintt_0} = \trg{\taintt_0'}$.
From $\trg{\acat{^{\taintt_0'}}} = \trg{\acat'^{\taintt_0'}}$, $\trg{\Sigma_0} \approx \trg{\Sigma_0'}$, $\trg{\Omega_0 \xtot{{\acat{}}{^{\taintt_0}}} \Omega_1'}$, $\trg{\Omega_0' \xtot{{\acat'}{^{\taintt_0'}}} \Omega_1'}$, and \cref{lemma:same-obs-preserve-low-equivalence-for-non-spec-semantics}, we get $\trg{\Omega_1} \approx \trg{\Omega_1'}$.
Observe that from $\trg{\Omega_0} \approx \trg{\Omega_0'}$ and  $\trg{\Omega_0'} \isdef \trg{C, H', \OB{B'} \triangleright  }$ $\trg{\proc{\ifte{e}{s''}{s'''};s'}{\OB{f}\cdot f}}$, we get that 
$\trg{{\acat{}}{^{\taintt_0}}} = $ $\trg{(\ifl{n})^{\sigma_0} }$ 
$\leftrightarrow \trg{B \triangleright e\bigred n:}$ $\trg{\taintt_0}$ and 
$\trg{{\acat'}^{\taintt_0'}}= $ $\trg{(\ifl{n'})^{\sigma_0'} }$ 
$\leftrightarrow \trg{B' \triangleright e\bigred n':}$ $\trg{\taintt_0'} $.
From this and $\trg{\acat{^{\taintt_0'}}} = \trg{\acat'^{\taintt_0'}}$, we get $\trg{n:\taintt_0} = \trg{n':\taintt_0'}$.
Therefore, $\trg{\Omega_2} \approx \trg{\Omega_2'}$.
Hence, $\trg{\Sigma_1} \approx \trg{\Sigma_1'}$ follows from $\trg{\Sigma_0} \approx \trg{\Sigma_0'}$, $\trg{\Omega_1} \approx \trg{\Omega_1'}$, and $\trg{\Omega_2} \approx \trg{\Omega_2'}$.

\item[E-\TR-speculate-rollback:]
Then, $\trg{\Sigma_0} \isdef \trg{w, \OB{(\Omega,\trgb{\omega},\taintt)}\cdot(\Omega,0,\taintt) }$, $\trg{\lambda} = \trg{\rollbl}$, and $\trg{\Sigma_1} \isdef \trg{ w, \OB{(\Omega,\trgb{\omega},\taintt)}}$.
From $\trg{\Sigma_0} \approx \trg{\Sigma_0'}$, we get that $\trg{\Sigma_0'} \isdef \trg{w, \OB{(\Omega',\trgb{\omega},\taintt)}\cdot(\Omega_0',}$ $\trg{0,\taintt) }$ where $\trg{\OB{(\Omega',\trgb{\omega},\taintt)}} \approx \trg{\OB{(\Omega,\trgb{\omega},\taintt)}}$ and $\trg{\Omega_0} \approx \trg{\Omega_0'}$.
Hence, $\trg{\Sigma_0' \xltot{\lambda'} \Sigma_1'}$ has also been derived using the E-\TR-speculate-rollback rule.
Thus, $\trg{\Sigma_1'} \isdef \trg{ w, \OB{(\Omega',\trgb{\omega},\taintt)}}$.
Hence, $\trg{\Sigma_1} \approx \trg{\Sigma_1'}$ follows from $\trg{\OB{(\Omega',\trgb{\omega},\taintt)}} \approx \trg{\OB{(\Omega,\trgb{\omega},\taintt)}}$.

\item[E-\TR-speculate-rollback-stuck:]

Then, $\trg{\Sigma_0} \isdef \trg{w, \OB{(\Omega,\trgb{\omega},\taintt)}\cdot(\Omega_0,W,\taintt)}$, $\vdash \trg{\Sigma_0} : \bot$,  $\trg{\lambda} = \trg{\rollbl}$, and $\trg{\Sigma_1} = \trg{(w, \OB{(\Omega,\trgb{\omega},\taintt)})}$.
From $\trg{\Sigma_0} \approx \trg{\Sigma_0'}$, we get that $\trg{\Sigma_0'} \isdef \trg{w, \OB{(\Omega',\trgb{\omega},\taintt)}\cdot(\Omega_0',0,\taintt) }$ where $\trg{\OB{(\Omega',\trgb{\omega},\taintt)}} \approx \trg{\OB{(\Omega,\trgb{\omega},\taintt)}}$ and $\trg{\Omega_0} \approx \trg{\Omega_0'}$.
From $\trg{\Sigma_0} \approx \trg{\Sigma_0'}$ and $\vdash \trg{\Sigma_0} : \bot$, we get $\vdash \trg{\Sigma_0'} : \bot$.
Hence, $\trg{\Sigma_0' \xltot{\lambda'} \Sigma_1'}$ has also been derived using the E-\TR-speculate-rollback-stuck rule.
Thus, $\trg{\Sigma_1'} \isdef \trg{ w, \OB{(\Omega',\trgb{\omega},\taintt)}}$.
Hence, $\trg{\Sigma_1} \approx \trg{\Sigma_1'}$ follows from $\trg{\OB{(\Omega',\trgb{\omega},\taintt)}} \approx \trg{\OB{(\Omega,\trgb{\omega},\taintt)}}$.
\end{description} 
This concludes our proof.
\end{proof}

\subsubsection{Non-speculative semantics $\xtot{}$}

\begin{lemma}[Steps of $\xtot{}$ with same observations preserve low-equivalence]\label{lemma:same-obs-preserve-low-equivalence-for-non-spec-semantics}
\begin{align*}
	\forall \trg{\Omega_0}, \trg{\Omega_0'}, \trg{\lambda}, \trg{\lambda'}, \trg{\Omega_1}, \trg{\Omega_1'}.
	& \text{ if } 
		\trg{\Omega_0 \xtot{\lambda} \Omega_1},  \trg{\Omega_0' \xtot{\lambda'} \Omega_1'}, \trg{\Omega_0} \approx \trg{\Omega_0'}, \trg{\lambda} = \trg{\lambda'} \\
	& \text{ then } \trg{\Omega_1} \approx \trg{\Omega_1'}
\end{align*}
\end{lemma}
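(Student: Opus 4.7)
The plan is to proceed by case analysis on the reduction rule used to derive $\trg{\Omega_0 \xtot{\lambda} \Omega_1}$. In every case, the low-equivalence hypothesis $\trg{\Omega_0 \approx \Omega_0'}$ forces $\trg{\Omega_0'}$ to have the same codebase $\trg{C}$, the same call stack $\trg{\OB{f}}$, the same current statement, and bindings/heaps that agree on \emph{safe} entries (\Cref{def:safeeq}); only the values stored at \unta-tainted positions may differ. Together with $\trg{\lambda}=\trg{\lambda'}$ this will pin down the rule used on the primed side and the relevant pieces of data, after which the preservation of $\approx$ in the conclusion is routine.

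First I would handle the ``structural'' cases (sequencing, skip, let-binding, cmove, lfence, internal call/return, call/return across the boundary). These are driven purely by the statement shape, which is shared by $\trg{\Omega_0}$ and $\trg{\Omega_0'}$. For bindings introduced by a let or by parameter passing, I would invoke an auxiliary lemma of the form: if $\trg{B \approx B'}$ and $\trg{B \triangleright e \bigredt v:\taintt}$ and $\trg{B' \triangleright e \bigredt v':\taintt'}$, then $\trg{\taintt}=\trg{\taintt'}$, and moreover $\trg{v}=\trg{v'}$ whenever $\trg{\taintt}=\trg{\safeta}$ (essentially \texttt{lemma:low-equiv-bindings-low-equiv-results}, composed for values and taints). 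This already suffices: the new binding $\trg{x\mapsto v:\taintt}$ on one side is related to $\trg{x\mapsto v':\taintt}$ on the other under the $\approx$ ordering on values, so $\approx$ is preserved.

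Next I would handle the memory and control-flow cases, where $\trg{\lambda}$ carries the interesting observation. For \Cref{tr:eus-ift} (the \trg{ifz} rule) the label is $\trg{(\ifl{n})^{\taintt}}$, and $\trg{\lambda}=\trg{\lambda'}$ forces the guard to evaluate to the same value and taint on both sides, so both reductions pick the same branch and the continuations are syntactically identical. For \Cref{tr:eus-rd-com} and \Cref{tr:eus-rd-com-p}, the address \trg{n} is fixed by the label, and either (i) the read is public, in which case safe low-equivalence of the heap forces the same value, or (ii) the read is private, in which case the resulting binding is tainted \unta by \Cref{tr:tus-rd-com-p-strong} and $\approx$ places no constraint on its value. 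Dually, for \Cref{tr:eus-up-com} the public write has $\trg{\wrl{n\mapsto v}}$ in the label and $\trg{\lambda}=\trg{\lambda'}$ fixes both $\trg{n}$ and $\trg{v}$, so the updated heaps remain $\approx$-related; for \Cref{tr:eus-up-com-p}, the private write only fixes $\trg{n}$, and since the new value is written at a negative location it is still required only to have matching taint, which comes from equal label taints via the same auxiliary lemma.

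The main obstacle I foresee is precisely the private-write case: we need that after $\trg{\asgnp{e}{e'}}$ both updated heaps remain $\approx$-related even though $\trg{e'}$ may evaluate to different values on the two sides (it can involve \unta-tainted variables). The resolution is that for $\trg{n}<0$ the definition of heap $\approx$-equivalence only constrains the \emph{taint} of the updated cell, and the written taint is determined by evaluating $\trg{e'}$ with $\approx$-related bindings, which yields equal taints. A parallel subtlety arises for public writes because of the value in the label: here one must observe that $\trg{\lambda}=\trg{\lambda'}$ already forces agreement of both address and value, so no appeal to the auxiliary evaluation lemma is needed beyond taint matching. Once these two points are settled, all remaining cases collapse to direct reconstruction of the same rule on $\trg{\Omega_0'}$ and a routine check that the resulting heap, bindings, and continuation remain $\approx$-related.
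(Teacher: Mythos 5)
Your proposal is correct and follows essentially the same route as the paper: case analysis on the rule deriving $\trg{\Omega_0 \xtot{\lambda} \Omega_1}$, using $\trg{\Omega_0}\approx\trg{\Omega_0'}$ together with $\trg{\lambda}=\trg{\lambda'}$ to force the same rule and the same observable data on the primed side, and appealing to the low-equivalent-bindings evaluation lemma for let-bindings, calls, and written values. One small imprecision: in the private-write case you say the heap relation at negative locations ``only constrains the taint,'' whereas \Cref{def:safeeq} also requires equal values whenever that taint is $\safeta$; this does not break your argument because the evaluation lemma you invoke already yields $\trg{v:\sigma_1}\approx\trg{v':\sigma_1'}$, which is exactly the condition the updated cell must satisfy, and this is precisely how the paper closes that case.
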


\begin{proof}
Let $\trg{\Omega_0}$, $\trg{\Omega_0'}$, $\trg{\lambda}$, $\trg{\lambda'}$, $\trg{\Omega_1}$,  $\trg{\Omega_1'}$ be such that $\trg{\Omega_0 \xtot{\lambda} \Omega_1}$,  $\trg{\Omega_0' \xtot{\lambda'} \Omega_1'}$, $\trg{\Omega_0} \approx \trg{\Omega_0'}$, and $\trg{\lambda} = \trg{\lambda'}$. 
We proceed by structural induction on the rule used to derive $\trg{\Omega_0 \xtot{\lambda} \Omega_1}$.

\begin{description}
\item[Base case:]
There are several cases depending on the rule used to derive $\trg{\Omega_0 \xtot{\lambda} \Omega_1}$:
\begin{description}
\item[E-\TR-sequence:]
Then, $\trg{\Omega_0} \isdef \trg{C, H, \OB{B} \triangleright \skipt;s}$, $\trg{\lambda} = \trg{\epsilon}$, and $\trg{\Omega_1} \isdef  \trg{C, H, \OB{B} \triangleright s}$.
From $\trg{\Omega_0} \approx \trg{\Omega_0'}$, we have that $\trg{\Omega_0'} \isdef \trg{C, H', \OB{B'} \triangleright \skipt;s}$ where $\trg{H} \approx \trg{H'}$ and $\trg{\OB{B}} \approx \trg{\OB{B'}}$.
Therefore, $\trg{\Omega_0' \xtot{\lambda'} \Omega_1'}$ has also been derived using the  E-\TR-sequence  where $\trg{\Omega_1'} \isdef \trg{C, H', \OB{B'} \triangleright s}$ and $\trg{\lambda'}= \trg{\epsilon}$.
Hence, $\trg{\Omega_1} \approx \trg{\Omega_1'}$ follows from  $\trg{H} \approx \trg{H'}$, $\trg{\OB{B}} \approx \trg{\OB{B'}}$, and $\trg{B} \approx \trg{B'}$.

\item[E-\TR-if-true:]
Then, $\trg{\Omega_0} \isdef \trg{C, H, \OB{B}\cdot B \triangleright \ifzte{e}{s}{s'}}$, $\trg{B \triangleright e\bigredt 0 : \sigma}$
$\trg{\lambda} = \trg{(\ifl{0})^\sigma}$, and $\trg{\Omega_1} \isdef \trg{C, H, \OB{B}\cdot B \triangleright s}$.
From $\trg{\Omega_0} \approx \trg{\Omega_0'}$, we have that $\trg{\Omega_0'} \isdef \trg{C, H', \OB{B'}\cdot B' \triangleright \ifzte{e}{s}{s'}}$ where $\trg{H} \approx \trg{H'}$, $\trg{\OB{B}} \approx \trg{\OB{B'}}$, and $\trg{B} \approx \trg{B'}$.
From this, $\trg{\lambda} = \trg{(\ifl{0})^\sigma}$, and $\trg{\lambda} = \trg{\lambda'}$, $\trg{\Omega_0' \xtot{\lambda'} \Omega_1'}$ has also been derived using the E-\TR-if-true rule.
Therefore,  $\trg{\Omega_1} \isdef \trg{C, H', \OB{B'}\cdot B' \triangleright s}$.
Hence, $\trg{\Omega_1} \approx \trg{\Omega_1'}$ follows from  $\trg{H} \approx \trg{H'}$, $\trg{\OB{B}} \approx \trg{\OB{B'}}$, and $\trg{B} \approx \trg{B'}$.
	
\item[E-\TR-if-false:]
The proof of this case is similar to that of the E-\TR-if-true case.

\item[E-\TR-letin:]
Then, $\trg{\Omega_0} \isdef \trg{C, H, \OB{B}\cdot B \triangleright \letin{x}{e}{s}}$, $\trg{\lambda} = \trg{\epsilon}$, $\trg{B \triangleright e\bigredt v : \sigma}$, and $\trg{\Omega_1} \isdef \trg{C, H, \OB{B}\cdot B\cup x\mapsto v : \sigma \triangleright s}$.
From $\trg{\Omega_0} \approx \trg{\Omega_0'}$, we have that  $\trg{\Omega_0'} \isdef \trg{C, H', \OB{B'}\cdot B' \triangleright \letin{x}{e}{s}}$ where $\trg{H} \approx \trg{H'}$, $\trg{\OB{B}} \approx \trg{\OB{B'}}$, and $\trg{B} \approx \trg{B'}$.
From this, $\trg{\Omega_0' \xtot{\lambda'} \Omega_1'}$ has also been derived using the E-\TR-letin rule.
Hence,  $\trg{\Omega_1'} \isdef \trg{C, H', \OB{B'}\cdot B'\cup x\mapsto v' : \sigma' \triangleright s}$ where $\trg{B' \triangleright e\bigredt v' : \sigma'}$.
From \cref{lemma:low-equiv-bindings-low-equiv-results1}, $\trg{B} \approx \trg{B'}$, $\trg{B \triangleright e\bigredt v : \sigma}$, and  $\trg{B' \triangleright e\bigredt v' : \sigma'}$, we have $\trg{v:\sigma} \approx \trg{v':\sigma'}$.
Hence, $\trg{\Omega_1} \approx \trg{\Omega_1'}$ follows from $\trg{H} \approx \trg{H'}$, $\trg{\OB{B}} \approx \trg{\OB{B'}}$,  $\trg{B} \approx \trg{B'}$, and $\trg{v:\sigma} \approx \trg{v':\sigma'}$.

\item[E-\TR-write:]
Then, $\trg{\Omega_0} \isdef \trg{C, H_0, \OB{B}\cdot B \triangleright \asgn{e_1}{e_2}}$, $\trg{\lambda} = \trg{\wrl{\abs{n}\mapsto v}^{\sigma_1\sqcup\sigma_2}}$, $\trg{B \triangleright e_1\bigredt n : \sigma_1}$, $\trg{B \triangleright e_2\bigredt v : \sigma_2}$, $\trg{H_0}=\trg{H_2; \abs{n}\mapsto v_0 : \sigma_0 ; H_3}$, $\trg{H_1}=\trg{H_2; \abs{n}\mapsto v : \safeta ; H_3}$, and $\trg{\Omega_1} \isdef \trg{C, H_1, \OB{B}\cdot B \triangleright \skipt}$.
From $\trg{\Omega_0} \approx \trg{\Omega_0'}$, $\trg{\Omega_0'} \isdef \trg{C, H_0', \OB{B}\cdot B \triangleright \asgn{e_1}{e_2}}$ where $\trg{H_0} \approx \trg{H_0'}$, $\trg{\OB{B}} \approx \trg{\OB{B'}}$, and $\trg{B} \approx \trg{B'}$. 
From $\trg{H_0} \approx \trg{H_0'}$ and $\trg{H_0}=\trg{H_2; \abs{n}\mapsto v_0 : \sigma_0 ; H_3}$, it follows that there are $\trg{H_2'}, \trg{H_3'}$ such that $\trg{H_0'}=\trg{H_2'; \abs{n}\mapsto v_0' : \sigma_0 ; H_3'}$, $\trg{H_2} \approx \trg{H_2'}$, and $\trg{H_3} \approx \trg{H_3}'$.
Therefore, $\trg{\Omega_0' \xtot{\lambda'} \Omega_1'}$ has also been derived using the E-\TR-write rule.
From this, we get $\trg{\Omega_1'} \isdef \trg{C, \trg{H_2'; \abs{n'}\mapsto v' : \safeta ; H_3'}, \OB{B'}\cdot B' \triangleright \skipt}$ where $\trg{B' \triangleright e_1\bigredt n' : \sigma_1'}$ and $\trg{B' \triangleright e_2\bigredt v' : \sigma_2'}$.
From $\trg{\lambda} = \trg{\wrl{\abs{n}\mapsto v}^{\sigma_1\sqcup\sigma_2}}$, $\trg{\lambda'} = \trg{\wrl{\abs{n'}\mapsto v'}^{\sigma_1'\sqcup\sigma_2'}}$, and $\trg{\lambda} = \trg{\lambda'}$, we get that $\trg{|n|} = \trg{|n'|}$ and $\trg{v} = \trg{v'}$.
Hence, $\trg{\Omega_1} \approx \trg{\Omega_1'}$ follows from $\trg{H_2} \approx \trg{H_2'}$, $\trg{H_3} \approx \trg{H_3}'$, $\trg{\OB{B}} \approx \trg{\OB{B'}}$,  $\trg{B} \approx \trg{B'}$, $\trg{|n|} = \trg{|n'|}$, and $\trg{v} = \trg{v'}$ (the latter is needed since the label of the written value is $\trg{\safeta}$).

\item[E-\TR-read:]
Then, $\trg{\Omega_0} \isdef \trg{C, H, \OB{B}\cdot B \triangleright \letread{x}{e}{s}}$, $\trg{B \triangleright e\bigredt n : \sigma_1}$, $\trg{H}=\trg{H_1; \abs{n}\mapsto v : \sigma_0 ; H_2}$,  $\trg{\lambda} = \trg{\rdl{\abs{n}}^{\sigma_1}}$, and $\trg{\Omega_1} \isdef \trg{C, H, \OB{B}\cdot B\cup x\mapsto v : \sigma_0 \triangleright s}$.
From $\trg{\Omega_0} \approx \trg{\Omega_0'}$, we get that $\trg{\Omega_0'} \isdef \trg{C, H', \OB{B'}\cdot B' \triangleright \letread{x}{e}{s}}$, $\trg{H} \approx \trg{H'}$, $\trg{\OB{B}} \approx \trg{\OB{B'}}$, and $\trg{B} \approx \trg{B'}$.
Therefore, $\trg{\Omega_0' \xtot{\lambda'} \Omega_1'}$ has also been derived using the E-\TR-read rule.
Thus, $\trg{\Omega_1'} \isdef \trg{C, H', \OB{B'}\cdot B'\cup x\mapsto v' : \sigma_0' \triangleright s}$ where $\trg{B' \triangleright e\bigredt n' : \sigma_1'}$, $\trg{H'(\abs{n'})} = \trg{v' : \sigma_0'}$, and $\trg{\lambda'} = \trg{\rdl{\abs{n'}}^{\sigma_1'}}$.
From $\trg{\lambda} = \trg{\rdl{\abs{n}}^{\sigma_1}}$, $\trg{\lambda'} = \trg{\rdl{\abs{n'}}^{\sigma_1'}}$, and $\trg{\lambda} = \trg{\lambda'}$, we get $\trg{\abs{n}} = \trg{\abs{n'}}$.
From $\trg{\abs{n}} = \trg{\abs{n'}}$, $\trg{H} \approx \trg{H'}$, $\trg{H(\abs{n})} = \trg{v : \sigma_0}$, and $\trg{H'(\abs{n'})} = \trg{v' : \sigma_0'}$, we get $\trg{v:\sigma_0} \approx \trg{v':\sigma_0'}$.
Hence, $\trg{\Omega_1} \approx \trg{\Omega_1'}$ follows from $\trg{H} \approx \trg{H'}$, $\trg{\OB{B}} \approx \trg{\OB{B'}}$,  $\trg{B} \approx \trg{B'}$, $\trg{|n|} = \trg{|n'|}$, and $\trg{v:\sigma_0} \approx \trg{v':\sigma_0'}$.

\item[E-\TR-write-prv:]
Then, we have that $\trg{\Omega_0} \isdef \trg{C, H, \OB{B}\cdot B \triangleright \asgnp{e}{e'}}$, $\trg{B \triangleright e\bigredt}$ $\trg{n : \sigma_0}$, $\trg{B \triangleright e'\bigredt v : \sigma_1}$, $\trg{\lambda} = \trg{\wrl{-\abs{n}}^{\sigma_0}}$, $\trg{H}=\trg{H_2; -\abs{n}\mapsto v_0 : \sigma_2 ;}$ $\trg{H_3}$, $\trg{H_1}=\trg{H_2; -\abs{n}\mapsto v : \sigma_1 ; H_3}$, and $\trg{\Omega_1} \isdef \trg{ {C, H_1, \OB{B}\cdot B \triangleright \skipt} }$.
From $\trg{\Omega_0} \approx \trg{\Omega_0'}$, we get that  $\trg{\Omega_0'} \isdef \trg{C, H', \OB{B'}\cdot B' \triangleright \asgnp{e}{e'}}$, $\trg{H} \approx \trg{H'}$, $\trg{\OB{B}} \approx \trg{\OB{B'}}$, and $\trg{B} \approx \trg{B'}$.
Hence, $\trg{\Omega_0' \xtot{\lambda'} \Omega_1'}$ has also been derived using the E-\TR-write-prv rule.
Thus, $\trg{\Omega_1'} \isdef \trg{ {C, H_1', \OB{B'}\cdot B' \triangleright \skipt} }$ where  $\trg{B' \triangleright e\bigredt}$ $\trg{n' : \sigma_0'}$, $\trg{B' \triangleright e'\bigredt v' : \sigma_1'}$,  $\trg{H_1'} = \trg{H_2'; -\abs{n'}\mapsto v' : \sigma_1' ; H_3}$, and $\trg{\lambda'} = \trg{\wrl{-\abs{n'}}^{\sigma_0'}}$.
From $\trg{\lambda} = \trg{\lambda'}$, $\trg{\lambda'} = \trg{\wrl{-\abs{n'}}^{\sigma_0'}}$, and $\trg{\lambda} = \trg{\wrl{-\abs{n}}^{\sigma_0}}$, we get $\trg{\abs{n}:\sigma_0} = \trg{\abs{n'} : \sigma_0'}$.
From  $\trg{B} \approx \trg{B'}$,  $\trg{B' \triangleright e'\bigredt v' : \sigma_1'}$,  $\trg{B \triangleright e'\bigredt v : \sigma_1}$, and \cref{lemma:low-equiv-bindings-low-equiv-results1}, we get $\trg{v : \sigma_1 } \approx \trg{v' : \sigma_1'}$.
Hence, $\trg{\Omega_1} \approx \trg{\Omega_1'}$ follows from $\trg{H} \approx \trg{H'}$, $\trg{\OB{B}} \approx \trg{\OB{B'}}$,  $\trg{B} \approx \trg{B'}$, $\trg{\abs{n}:\sigma_0} = \trg{\abs{n'} : \sigma_0'}$, and $\trg{v : \sigma_1 } \approx \trg{v' : \sigma_1'}$.

\item[E-\TR-read-prv:]
Then, $\trg{\Omega_0} \isdef \trg{C, H, \OB{B}\cdot B \triangleright \letreadp{x}{e}{s}}$, $\trg{B \triangleright e\bigredt n :}$ $\trg{\sigma_0}$, $\trg{H}=\trg{H_1; -\abs{n}\mapsto v : \sigma_1 ; H_2}$, 	$\trg{\lambda} = \trg{\rdl{-\abs{n}}^{\sigma_0}}$, and $\trg{\Omega_1} \isdef \trg{C, H, \OB{B}}$ $\trg{\cdot B\cup x\mapsto v : \unta \triangleright s}$.
From $\trg{\Omega_0} \approx \trg{\Omega_0'}$, we get that $\trg{\Omega_0'} \isdef \trg{C, H', \OB{B'}\cdot B' \triangleright}$ $\trg{\letreadp{x}{e}{s}}$ where $\trg{H} \approx \trg{H'}$, $\trg{\OB{B}} \approx \trg{\OB{B'}}$, and $\trg{B} \approx \trg{B'}$.
Hence, $\trg{\Omega_0' \xtot{\lambda'} \Omega_1'}$ has also been derived using the E-\TR-read-prv rule.
Thus, $\trg{\Omega_1'} \isdef \trg{C, H', \OB{B'}\cdot B'\cup x\mapsto v' : \unta \triangleright s}$ where $\trg{B' \triangleright e\bigredt n' : \sigma_0'}$, $\trg{\lambda'} = \trg{\rdl{-\abs{n'}}^{\sigma_0'}}$, and $\trg{H'(-\abs{n'})} = \trg{v' : \sigma_1'}$.
From $\trg{\lambda} = \trg{\lambda'}$, $\trg{\lambda} = \trg{\rdl{-\abs{n}}^{\sigma_0}}$, and $\trg{\lambda'} = \trg{\rdl{-\abs{n}}^{\sigma_0'}}$, we have $\trg{\abs{n}:\sigma_0} = \trg{\abs{n'}:\sigma_0'}$.
From this and $\trg{H} \approx \trg{H'}$, we have $\trg{H(-\abs{n})} \approx \trg{H'(-\abs{n'})}$.
Hence, $\trg{\Omega_1} \approx \trg{\Omega_1'}$ follows from $\trg{H} \approx \trg{H'}$, $\trg{\OB{B}} \approx \trg{\OB{B'}}$,  $\trg{B} \approx \trg{B'}$, $\trg{\abs{n}:\sigma_0} = \trg{\abs{n'} : \sigma_0'}$, and $\trg{v : \sigma_1 } \approx \trg{v' : \sigma_1'}$ (observe that the latter is enough since $\trg{x}$ is tagged $\trg{\unta}$).

\item[E-\TR-call-internal:]
Then, $\trg{\Omega_0} \isdef \trg{C, H, \OB{B}\cdot B \triangleright \proc{{\call{f}~e}}{\OB{f'}}}$, $\trg{\lambda} = \trg{\epsilon}$, $		\trg{C}.\mtt{intfs}\vdash\trg{f,f'}:\trg{internal}$, $\trg{\OB{f'}} = \trg{\OB{f''};f'}$, $\trg{f(x)\mapsto s;\ret}\in\trg{C}.\mtt{funs}$,  $\trg{B \triangleright e\bigredt v : \sigma}$, and $\trg{\Omega_1} \isdef \trg{C, H, \OB{B}\cdot B\cdot x\mapsto v : \sigma \triangleright \proc{{s;\ret}}{\OB{f'};f}}$.
From $\trg{\Omega_0} \approx \trg{\Omega_0'}$, we get that $\trg{\Omega_0'} \isdef \trg{C, H', \OB{B'}\cdot B' \triangleright \proc{{\call{f}~e}}{\OB{f'}}}$ where $\trg{H} \approx \trg{H'}$, $\trg{\OB{B}} \approx \trg{\OB{B'}}$, and $\trg{B} \approx \trg{B'}$.
Hence, $\trg{\Omega_0' \xtot{\lambda'} \Omega_1'}$ has also been derived using the E-\TR-call-internal rule.
Thus, $\trg{\Omega_1'} \isdef \trg{C, H', \OB{B'}\cdot B'\cdot x\mapsto v' : \sigma' \triangleright \proc{{s;\ret}}{\OB{f'};f}}$ and $\trg{B' \triangleright e\bigredt v' : \sigma'}$.
From $\trg{B} \approx \trg{B'}$, $\trg{B \triangleright e\bigredt v : \sigma}$, $\trg{B' \triangleright e\bigredt v' : \sigma'}$, and \cref{lemma:low-equiv-bindings-low-equiv-results1}, we get $\trg{v : \sigma} \approx \trg{v' : \sigma'}$.
Hence, $\trg{\Omega_1} \approx \trg{\Omega_1'}$ follows from $\trg{H} \approx \trg{H'}$, $\trg{\OB{B}} \approx \trg{\OB{B'}}$,  $\trg{B} \approx \trg{B'}$,  and $\trg{v : \sigma } \approx \trg{v' : \sigma'}$.

\item[E-\TR-callback:]
Then, $\trg{\Omega_0} \isdef \trg{C, H, \OB{B}\cdot B \triangleright \proc{{\call{f}~e}}{\OB{f'}}}$, $\trg{\lambda} = \trg{\cbh{f}{v}{H}^\sigma}$, $\trg{\OB{f'}} = \trg{\OB{f''};f'}$, $\trg{f(x)\mapsto s;\ret}\in\trg{C}.\mtt{funs}$, ${\trg{C}}.\mtt{intfs}\vdash\trg{f',f}:\trg{out}$, $\trg{B \triangleright e \bigredt v:\sigma}$, and $\trg{\Omega_1} = \trg{C, H, \OB{B}\cdot B\cdot x\mapsto v:\sigma \triangleright \proc{{s;\ret}}{\OB{f'};f}}$.
From $\trg{\Omega_0} \approx \trg{\Omega_0'}$, we get that $\trg{\Omega_0'} \isdef \trg{C, H', \OB{B'}\cdot B' \triangleright \proc{{\call{f}~e}}{\OB{f'}}}$  where $\trg{H} \approx \trg{H'}$, $\trg{\OB{B}} \approx \trg{\OB{B'}}$, and $\trg{B} \approx \trg{B'}$.
Hence, $\trg{\Omega_0' \xtot{\lambda'} \Omega_1'}$ has also been derived using the E-\TR-callback rule.
Thus,  $\trg{\Omega_1'} = \trg{C, H', \OB{B'}\cdot B'\cdot x\mapsto v':}$ $\trg{\sigma' \triangleright \proc{{s;\ret}}{\OB{f'};f}}$, $\trg{B' \triangleright e \bigredt v':\sigma'}$, and $\trg{\lambda'} = \trg{\cbh{f}{v'}{H}^{\sigma'}}$.
From $\trg{\lambda} = \trg{\lambda'}$, $\trg{\lambda} = \trg{\cbh{f}{v}{H}^\sigma}$, and $\trg{\lambda'} = \trg{\cbh{f}{v'}{H}^{\sigma'}}$, we get $\trg{v:\sigma} = \trg{v':\sigma'}$.
Hence, $\trg{\Omega_1} \approx \trg{\Omega_1'}$ follows from $\trg{H} \approx \trg{H'}$, $\trg{\OB{B}} \approx \trg{\OB{B'}}$,  $\trg{B} \approx \trg{B'}$,  and $\trg{v : \sigma } = \trg{v' : \sigma'}$.

\item[E-\TR-call:]
The proof of this case is similar to that of the case E-\TR-callback.

\item[E-\TR-ret-internal:]
Then, $\trg{\Omega_0} \isdef \trg{C, H, \OB{B}\cdot B \triangleright \proc{{\ret}}{\OB{f'};f}}$, $\trg{\OB{f'}} = \trg{\OB{f''};f'}$, $\trg{{C}}.\mtt{intfs}\vdash\trg{f,f'}:\trg{internal}$, $\trg{\lambda} = \trg{\epsilon}$, and $\trg{\Omega_1} = \trg{C, H, \OB{B} \triangleright \proc{\skipt}{\OB{f'}}}$.
From $\trg{\Omega_0} \approx \trg{\Omega_0'}$, we get that $\trg{\Omega_0'} \isdef  \trg{C, H', \OB{B'}\cdot B' \triangleright \proc{{\ret}}{\OB{f'};f}}$  where $\trg{H} \approx \trg{H'}$, $\trg{\OB{B}} \approx \trg{\OB{B'}}$, and $\trg{B} \approx \trg{B'}$.
Hence, $\trg{\Omega_0' \xtot{\lambda'} \Omega_1'}$ has also been derived using the E-\TR-ret-internal rule.
Thus,  $\trg{\Omega_1} = \trg{C, H', \OB{B'} \triangleright \proc{\skipt}{\OB{f'}}}$.
Hence, $\trg{\Omega_1} \approx \trg{\Omega_1'}$ follows from $\trg{H} \approx \trg{H'}$ and $\trg{\OB{B}} \approx \trg{\OB{B'}}$.

\item[E-\TR-retback:]
The proof of this case is similar to that of E-\TR-retback.

\item[E-\TR-return:]
The proof of this case is similar to that of E-\TR-retback.

\item[E-\TR-lfence:]
The proof of this case is similar to that of E-\TR-skip.

\item[E-\TR-cmove-true:]

Then, $\trg{\Omega_0} \isdef \trg{C, H, \OB{B}\cdot B \triangleright \cmove{x}{e_0}{e_1}{s}}$, $\trg{x}\in\dom{\trg{B}}$, $\trg{\lambda} = \trg{\epsilon}$,  $\trg{\Omega_1} \isdef \trg{C, H, \OB{B}\cdot B\cup x\mapsto v_0 : \taintt \triangleright s}$, $\trg{B \triangleright e_0\bigred v_0 : \taintt_0}$, $\trg{B \triangleright e_1\bigred 0}$ $\trg{ : \taintt_1}$, and $\trg{\taintt} = \trg{\taintt_0} \sqcup \trg{\taintt_1}$.
From $\trg{\Omega_0} \approx \trg{\Omega_0'}$, we have that $\trg{\Omega_0'} \isdef \trg{C, H', \OB{B'}\cdot B' \triangleright \cmove{x}{e_0}{e_1}{s}}$ where $\trg{H} \approx \trg{H'}$, $\trg{\OB{B}\cdot B} \approx \trg{\OB{B}'\cdot B'}$, and $\trg{B} \approx \trg{B'}$.
From $\trg{B} \approx \trg{B'}$, $\trg{B \triangleright e_1\bigred 0}$ $\trg{ : \taintt_1}$, and \cref{lemma:low-equiv-bindings-low-equiv-results}, we get that $\trg{B' \triangleright e_1\bigred n'}$ $\trg{ : \taintt_1'}$ and $\trg{0:\taintt_1} \approx \trg{n':\taintt_1'}$.
There are two cases:
\begin{description}
\item[$\trg{\taintt_1} = \trg{\safeta}$:]
Then, $\trg{\taintt_1'} = \trg{\safeta}$ and $\trg{v_1'} = \trg{0}$ follows from $\trg{0:\taintt_1} \approx \trg{n':\taintt_1'}$.
Hence, $\trg{\Omega_0' \xtot{\lambda'} \Omega_1'}$ has also been derived using the E-\TR-cmove-true rule.
Thus, $\trg{\Omega_1} \isdef \trg{C, H', \OB{B'}\cdot B'\cup x\mapsto v_0' : \taintt' \triangleright s}$ where $\trg{B' \triangleright e_0\bigred v_0' : \taintt_0'}$ and $\trg{\taintt'} = \trg{\taintt_0'} \sqcup \trg{\taintt_1'}$.
From $\trg{B} \approx \trg{B'}$, $\trg{B \triangleright e_0\bigred v_0}$ $\trg{: \taintt_0}$, $\trg{B' \triangleright e_0\bigred v_0' : \taintt_0'}$, and \cref{lemma:low-equiv-bindings-low-equiv-results1}, we get $\trg{v_0:\taintt_0} \approx \trg{v_0':\taintt_0'}$.
Hence,  $\trg{\Omega_1} \approx \trg{\Omega_1'}$ follows from $\trg{H} \approx \trg{H'}$, $\trg{\OB{B}} \approx \trg{\OB{B'}}$, $\trg{B} \approx \trg{B'}$, and $\trg{v_0:\taintt_0} \approx \trg{v_0':\taintt_0'}$.

\item[$\trg{\taintt_1} = \trg{\unta}$:]
Then, $\trg{\taintt_1'} = \trg{\unta}$ holds as well.
There are two cases:
\begin{description}
\item[$\trg{v_1'} = \trg{0}$:]
Hence, $\trg{\Omega_0' \xtot{\lambda'} \Omega_1'}$ has also been derived using the E-\TR-cmove-true rule.
Thus, $\trg{\Omega_1} \isdef \trg{C, H', \OB{B'}\cdot B'\cup x\mapsto v_0' : \taintt' \triangleright s}$ where $\trg{B' \triangleright e_0\bigred v_0' : \taintt_0'}$ and $\trg{\taintt'} = \trg{\taintt_0'} \sqcup \trg{\taintt_1'}$.
Hence,  $\trg{\Omega_1} \approx \trg{\Omega_1'}$ follows from $\trg{H} \approx \trg{H'}$, $\trg{\OB{B}} \approx \trg{\OB{B'}}$, $\trg{B} \approx \trg{B'}$, and $\sigma = \sigma' = \trg{\unta}$.

\item[$\trg{v_1'} > \trg{0}$:]
Hence, $\trg{\Omega_0' \xtot{\lambda'} \Omega_1'}$ has been derived using the E-\TR-cmove-false rule.
Thus, $\trg{\Omega_1'} \isdef \trg{C, H', \OB{B'}\cdot B'\cup x\mapsto v_0 :\taintt' \triangleright s }$ where $\trg{B(x)} = \trg{v_0' : \taintt_0'}$ and $\trg{\taintt' = \taintt_0' \sqcup \taintt_1'}$.
Hence,  $\trg{\Omega_1} \approx \trg{\Omega_1'}$ follows from $\trg{H} \approx \trg{H'}$, $\trg{\OB{B}} \approx \trg{\OB{B'}}$, $\trg{B} \approx \trg{B'}$, and $\sigma = \sigma' = \trg{\unta}$.

\end{description}

\end{description}

\item[E-\TR-cmove-false:]
The proof of this case is similar to that of the E-\TR-cmove-true case.
\end{description}

\item[Induction step:]

Then, $\trg{\Omega_0 \xtot{\lambda} \Omega_1}$ has been derived using the E-\TR-step rule.
Then, $\trg{\Omega_0} \isdef \trg{C, H, \OB{B} \triangleright s;s''}$, $\trg{C,}$ $\trg{H,}$ $\trg{\OB{B} \triangleright s} \xtot{\trg{\lambda}} \trg{C,}$ $\trg{H_1, \OB{B_1} \triangleright s_1}$, and $\trg{\Omega_1} = \trg{C, H_1, \OB{B_1} \triangleright s_1;s''}$.
From $\trg{\Omega_0} \approx \trg{\Omega_0'}$, we get that $\trg{\Omega_0} \isdef \trg{C, H', \OB{B'} \triangleright s;s''}$ where $\trg{H} \approx \trg{H'}$ and $\trg{\OB{B}} \approx \trg{\OB{B'}}$.
Hence,  $\trg{\Omega_0' \xtot{\lambda'} \Omega_1'}$ has also been derived using the E-\TR-step rule.
Therefore, we get $\trg{\Omega_0'} \isdef \trg{C, H', \OB{B'} \triangleright s;s''}$, $\trg{C,}$ $\trg{H',}$ $\trg{\OB{B'} \triangleright s} \xtot{\trg{\lambda'}} \trg{C,}$ $\trg{H_1', \OB{B_1'} \triangleright s_1}$, and $\trg{\Omega_1'} = \trg{C, H_1', \OB{B_1'} \triangleright s_1;s''}$.
From $\trg{H} \approx \trg{H'}$, $\trg{\OB{B}} \approx \trg{\OB{B'}}$, $\trg{C,}$ $\trg{H,}$ $\trg{\OB{B} \triangleright s} \xtot{\trg{\lambda}} \trg{C,}$ $\trg{H_1, \OB{B_1} \triangleright s_1}$, $\trg{C,}$ $\trg{H',}$ $\trg{\OB{B'} \triangleright s} \xtot{\trg{\lambda'}} \trg{C,}$ $\trg{H_1', \OB{B_1'} \triangleright s_1}$, $\trg{\lambda} = \trg{\lambda'}$, and the induction hypothesis, we get that $\trg{H_1} \approx \trg{H_1'}$ and $\trg{\OB{B_1}} \approx \trg{\OB{B_1'}}$.
Hence, $\trg{\Omega_1} \approx \trg{\Omega_1'}$ follows from $\trg{H_1} \approx \trg{H_1'}$ and $\trg{\OB{B_1}} \approx \trg{\OB{B_1'}}$.
\end{description}
\end{proof}

\begin{lemma}[Steps of $\xtot{}$ with safe observations preserve low-equivalence]\label{lemma:safe-obs-preserve-low-equivalence-for-non-spec-semantics}
\begin{align*}
	\forall \trg{\Omega_0}, \trg{\Omega_0'}, \trg{\lambda}, \trg{\Omega_1}.
	& \text{ if } 
		\trg{\Omega_0 \xtot{\lambda} \Omega_1}, \trg{\Omega_0} \approx \trg{\Omega_0'}, \safe{\trg{\lambda}} \\
	& \text{ then } \exists \trg{\lambda'},\trg{\Omega_1}.\  \trg{\Omega_0' \xtot{\lambda'} \Omega_1'}, \trg{\Omega_1} \approx \trg{\Omega_1'},\ \trg{\lambda} = \trg{\lambda'}
\end{align*}
\end{lemma}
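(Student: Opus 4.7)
The plan is to proceed by structural induction on the derivation of $\trg{\Omega_0 \xtot{\lambda} \Omega_1}$, doing a case analysis on the last rule applied. For most rules, the reasoning is essentially the same as in \cref{lemma:same-obs-preserve-low-equivalence-for-non-spec-semantics}, except that here we do not already have a step from $\trg{\Omega_0'}$ and must construct one. The strategy in every case is: (i) unfold $\trg{\Omega_0} \approx \trg{\Omega_0'}$ to deduce that $\trg{\Omega_0'}$ has the same syntactic shape as $\trg{\Omega_0}$ up to low-equivalent heaps and bindings; (ii) use \cref{lemma:low-equiv-bindings-low-equiv-results} (combined with the hypothesis $\safe{\trg{\lambda}}$) to lift the relevant big-step evaluations from $\trg{B}$ to $\trg{B'}$; (iii) apply the same semantic rule on $\trg{\Omega_0'}$ to obtain the matching step, yielding $\trg{\lambda}=\trg{\lambda'}$; (iv) conclude $\trg{\Omega_1}\approx\trg{\Omega_1'}$ from the preservation of low-equivalence on heaps and bindings.

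First, I would handle the purely structural base cases (E-\TR-sequence, E-\TR-lfence, E-\TR-letin, E-\TR-call-internal, E-\TR-ret-internal), where $\safe{\trg{\lambda}}$ is trivially satisfied and the matching step on $\trg{\Omega_0'}$ is immediate. Next, I would handle the control-flow and call/return cases (E-\TR-if-true/false, E-\TR-call, E-\TR-callback, E-\TR-return, E-\TR-retback). Here $\safe{\trg{\lambda}}$ forces the taint of the argument or guard to be $\trg{\safeta}$; by the safe-then-equal clause of $\approx$ on values (\Cref{def:safeeq}) and \cref{lemma:low-equiv-bindings-low-equiv-results}, the same value is produced in $\trg{B'}$, so the same branch is taken or the same argument is passed, and $\trg{\lambda}=\trg{\lambda'}$ follows. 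The \trg{cmov} cases follow the same pattern, using that the safety of the guard forces both configurations to select the same value.

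The memory operations are the interesting cases. For E-\TR-write and E-\TR-read on the public heap, $\safe{\trg{\lambda}}$ forces the taint of the address (and, in the case of writes, the value) to be $\trg{\safeta}$, so \cref{lemma:low-equiv-bindings-low-equiv-results} gives identical addresses and values in $\trg{B'}$; since public locations of low-equivalent heaps agree, the updated heaps remain low-equivalent. For E-\TR-write-prv, safety of the address taint suffices to conclude the same private location is updated in both heaps; the updated private values may differ, but private locations are not constrained by $\approx$ beyond having the same domain and the same tag, so low-equivalence is preserved. The case I expect to need the most care is E-\TR-read-prv: the address expression is safe (so evaluates to the same $\trg{n}$ in $\trg{B'}$), but the values $\trg{v}$ and $\trg{v'}$ stored at $\trg{-\abs{n}}$ in $\trg{H}$ and $\trg{H'}$ may genuinely differ. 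Crucially, the new binding for $\trg{x}$ is tagged $\trg{\unta}$ in both configurations, so $\trg{v:\unta}\approx\trg{v':\unta}$ holds by \Cref{def:safeeq} even when $\trg{v}\neq\trg{v'}$, and the new bindings remain low-equivalent.

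For the induction step (rule E-\TR-step), the step derives from an inner step on the leading statement; I would apply the induction hypothesis to that inner step (its label is the same $\trg{\lambda}$, hence still safe) to obtain a matching inner step from $\trg{\Omega_0'}$, then reassemble using E-\TR-step. The main obstacle in the whole proof will be the private-read case, precisely because it is the one case where the state is genuinely updated with potentially distinct values and security relies on the taint being raised to $\trg{\unta}$; keeping the bookkeeping consistent between the two configurations (same address, different values, same tag) is what makes the argument go through.
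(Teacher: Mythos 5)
Your proposal matches the paper's proof essentially step for step: structural induction on the derivation, case analysis on the last rule, unfolding $\approx$ to recover the shape of $\trg{\Omega_0'}$, using the low-equivalent-bindings lemma together with $\safe{\trg{\lambda}}$ to force equal addresses/guards/arguments, and the key observation that private reads remain related because the new binding is tagged $\trg{\unta}$. One small wording caveat: for \textsf{E-\TR{}-write-prv} the paper does not argue that private locations are unconstrained by $\approx$ (they are constrained to have the same tag and, if safe, the same value); it instead derives $\trg{v:\sigma_1} \approx \trg{v':\sigma_1'}$ for the written values from the bindings lemma, which your step (ii) already supplies.
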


\begin{proof}
Let  $\trg{\Omega_0}$, $\trg{\Omega_0'}$, $\trg{\lambda}$, and $\trg{\Omega_1}$ be such that $\trg{\Omega_0} \approx \trg{\Omega_0'}$, $\trg{\Omega_0 \xtot{\lambda} \Omega_1}$, and $\safe{\trg{\lambda}}$.
We proceed by structural induction on the rules used to derive $\trg{\Omega_0 \xtot{\lambda} \Omega_1}$:
\begin{description}
\item[Base case:]
There are several cases depending on the rule used to derive $\trg{\Omega_0 \xtot{\lambda} \Omega_1}$:
\begin{description}
\item[E-\TR-sequence:]
Then, $\trg{\Omega_0} \isdef \trg{C, H, \OB{B} \triangleright \skipt;s}$, $\trg{\lambda} = \trg{\epsilon}$, and $\trg{\Omega_1} \isdef  \trg{C, H, \OB{B} \triangleright s}$.
From $\trg{\Omega_0} \approx \trg{\Omega_0'}$, we have that $\trg{\Omega_0'} \isdef \trg{C, H', \OB{B'} \triangleright \skipt;s}$ where $\trg{H} \approx \trg{H'}$ and $\trg{\OB{B}} \approx \trg{\OB{B'}}$.
We can apply the E-\TR-sequence rule to derive $\trg{\Omega_0' \xtot{\lambda'} \Omega_1'}$ where $\trg{\Omega_1} \isdef \trg{C, H', \OB{B'} \triangleright s}$ and $\trg{\lambda'}= \trg{\epsilon}$.
Hence, $\trg{\Omega_1} \approx \trg{\Omega_1'}$ follows from $\trg{\Omega_0} \approx \trg{\Omega_0'}$ and $\trg{\lambda} = \trg{\lambda'}$ follows from $\trg{\lambda} = \trg{\epsilon}$ and $\trg{\lambda'}= \trg{\epsilon}$.

\item[E-\TR-if-true:]
Then, $\trg{\Omega_0} \isdef \trg{C, H, \OB{B}\cdot B \triangleright \ifzte{e}{s}{s'}}$, $\trg{B \triangleright e\bigredt 0 : \sigma}$
$\trg{\lambda} = \trg{(\ifl{0})^\sigma}$, and $\trg{\Omega_1} \isdef \trg{C, H, \OB{B}\cdot B \triangleright s}$.
From $\trg{\Omega_0} \approx \trg{\Omega_0'}$, we have that $\trg{\Omega_0'} \isdef \trg{C, H', \OB{B'}\cdot B' \triangleright \ifzte{e}{s}{s'}}$ where $\trg{H} \approx \trg{H'}$, $\trg{\OB{B}} \approx \trg{\OB{B'}}$, and $\trg{B} \approx \trg{B'}$.
From $\trg{B \triangleright e\bigredt 0 : \sigma}$, $\trg{B} \approx \trg{B'}$, and \cref{lemma:low-equiv-bindings-low-equiv-results}, we get that $\trg{B' \triangleright e\bigredt v' : \sigma'}$ and $\trg{0 :\sigma} \approx \trg{v':\sigma'}$.
From $\safe{\trg{\lambda}}$ and $\trg{\lambda} = \trg{(\ifl{0})^\sigma}$, it follows that $\sigma = \safeta$ and $\trg{\lambda} = \trg{(\ifl{0})^\safeta}$.
From this and $\trg{0 :\sigma} \approx \trg{v':\sigma'}$, it follows that $v' = 0$.
Hence, $\trg{B' \triangleright e\bigredt 0 : \safeta'}$ holds.
Therefore, we can apply the E-\TR-if-true rule to derive $\trg{\Omega_0' \xtot{\lambda'} \Omega_1'}$ where $\trg{\Omega_1} \isdef \trg{C, H', \OB{B'}\cdot B' \triangleright s}$ and $\trg{\lambda'}= \trg{(\ifl{0})^\safeta}$.
Hence,  $\trg{\Omega_1} \approx \trg{\Omega_1'}$ follows from $\trg{\Omega_0} \approx \trg{\Omega_0'}$ and $\trg{\lambda} = \trg{\lambda'}$ follows from $\trg{\lambda} = \trg{(\ifl{0})^\safeta}$ and $\trg{\lambda'}= \trg{(\ifl{0})^\safeta}$.

\item[E-\TR-if-false:]
The proof of this case is similar to that of the E-\TR-if-true case.

\item[E-\TR-letin:]
Then, $\trg{\Omega_0} \isdef \trg{C, H, \OB{B}\cdot B \triangleright \letin{x}{e}{s}}$, $\trg{\lambda} = \trg{\epsilon}$, $\trg{B \triangleright e\bigredt v : \sigma}$, and $\trg{\Omega_1} \isdef \trg{C, H, \OB{B}\cdot B\cup x\mapsto v : \sigma \triangleright s}$.
From $\trg{\Omega_0} \approx \trg{\Omega_0'}$, we have that  $\trg{\Omega_0'} \isdef \trg{C, H', \OB{B'}\cdot B' \triangleright \letin{x}{e}{s}}$ where $\trg{H} \approx \trg{H'}$, $\trg{\OB{B}} \approx \trg{\OB{B'}}$, and $\trg{B} \approx \trg{B'}$.
From $\trg{B} \approx \trg{B'}$, $\trg{B \triangleright e\bigredt v : \sigma}$, and \cref{lemma:low-equiv-bindings-low-equiv-results}, we have $\trg{B' \triangleright e\bigredt v' : \sigma'}$ and $\trg{v:\sigma} \approx \trg{v':\sigma'}$.
Therefore, we can apply the rule E-\TR-letin to derive $\trg{\Omega_0' \xtot{\lambda'} \Omega_1'}$ where $\trg{\lambda'} = \trg{\epsilon}$ and $\trg{\Omega_1} \isdef \trg{C, H', \OB{B'}\cdot B'\cup x\mapsto v' : \sigma' \triangleright s}$.
Hence, $\trg{\Omega_1} \approx \trg{\Omega_1'}$ follows from $\trg{\Omega_0} \approx \trg{\Omega_0'}$ and $\trg{v:\sigma} \approx \trg{v':\sigma'}$ whereas $\trg{\lambda} = \trg{\lambda'}$ follows from $\trg{\lambda} = \trg{\epsilon}$ and $\trg{\lambda'}= \trg{\epsilon}$.

\item[E-\TR-write:]
Then, $\trg{\Omega_0} \isdef \trg{C, H_0, \OB{B}\cdot B \triangleright \asgn{e_1}{e_2}}$, $\trg{\lambda} = \trg{\wrl{\abs{n}\mapsto v}^{\sigma_1\sqcup\sigma_2}}$, $\trg{B \triangleright e_1\bigredt n : \sigma_1}$, $\trg{B \triangleright e_2\bigredt v : \sigma_2}$, $\trg{H_0}=\trg{H_2; \abs{n}\mapsto v_0 : \sigma_0 ; H_3}$, $\trg{H_1}=\trg{H_2; \abs{n}\mapsto v : \safeta ; H_3}$, and $\trg{\Omega_1} \isdef \trg{C, H_1, \OB{B}\cdot B \triangleright \skipt}$.

From $\trg{\Omega_0} \approx \trg{\Omega_0'}$, $\trg{\Omega_0'} \isdef \trg{C, H_0', \OB{B}\cdot B \triangleright \asgn{e_1}{e_2}}$ where $\trg{H_0} \approx \trg{H_0'}$, $\trg{\OB{B}} \approx \trg{\OB{B'}}$, and $\trg{B} \approx \trg{B'}$. 
From $\trg{H_0} \approx \trg{H_0'}$ and $\trg{H_0}=\trg{H_2; \abs{n}\mapsto v_0 : \sigma_0 ; H_3}$, it follows that there are $\trg{H_2'}, \trg{H_3'}$ such that $\trg{H_0'}=\trg{H_2'; \abs{n}\mapsto v_0' : \sigma_0 ; H_3'}$.
From $\trg{B} \approx \trg{B'}$, $\trg{B \triangleright e_1\bigredt n : \sigma_1}$, $\trg{B \triangleright e_2\bigredt v : \sigma_2}$, and \cref{lemma:low-equiv-bindings-low-equiv-results}, we get  $\trg{B' \triangleright e_1\bigredt n' : \sigma_1'}$ and $\trg{B' \triangleright e_2\bigredt v' : \sigma_2'}$ such that $\trg{n:\sigma_1} \approx \trg{n':\sigma_1'}$ and $\trg{v:\sigma_2} \approx \trg{v':\sigma_2'}$.
From $\safe{\trg{\lambda}}$, we get $\trg{\sigma_1\sqcup\sigma_2} = \trg{\safeta}$ and therefore $\trg{\sigma_1} = \trg{\safeta}$ and $\trg{\sigma_2} = \trg{\safeta}$.
From  $\trg{\sigma_1} = \trg{\safeta}$, $\trg{\sigma_2} = \trg{\safeta}$, $\trg{n:\sigma_1} \approx \trg{n':\sigma_1'}$, and $\trg{v:\sigma_2} \approx \trg{v':\sigma_2'}$, we therefore get that $\trg{n:\sigma_1} = \trg{n':\sigma_1'}$, and $\trg{v:\sigma_2} = \trg{v':\sigma_2'}$.
Therefore, we can apply the rule E-\TR-write to derive  $\trg{\Omega_0' \xtot{\lambda'} \Omega_1'}$ where $\trg{\lambda'} = \trg{\wrl{\abs{n'} \mapsto v'}^{\sigma_1'\sqcup\sigma_2'}}$, $H_1' = \trg{H_2'; \abs{n}\mapsto v' : \safeta ; H_3'}$, $\trg{\Omega_1} \isdef \trg{C, H_1', \OB{B'}\cdot B' \triangleright \skipt}$.

From $\trg{n:\sigma_1} = \trg{n':\sigma_1'}$,  $\trg{v:\sigma_2} = \trg{v':\sigma_2'}$, $\trg{\lambda'} = \trg{\wrl{\abs{n'}\mapsto v'}^{\sigma_1'\sqcup\sigma_2'}}$, and $\trg{\lambda} = \trg{\wrl{\abs{n} \mapsto v}^{\sigma_1\sqcup\sigma_2}}$, we immediately get that $\trg{\lambda} = \trg{\lambda'}$.

Finally,  $\trg{\Omega_1} \approx \trg{\Omega_1'}$  follows from $\trg{\Omega_0} \approx \trg{\Omega_0'}$, $\trg{\Omega_1} \isdef \trg{C, H_1', \OB{B'}\cdot B' \triangleright \skipt}$, $\trg{\Omega_1'} \isdef \trg{C, H_1', \OB{B'}\cdot B' \triangleright \skipt}$, $\trg{H_1}=\trg{H_2; \abs{n}\mapsto v : \safeta ; H_3}$,  $\trg{H_1'} = \trg{H_2';}$ $\trg{\abs{n'}\mapsto v' : \safeta ;}$ $\trg{H_3'}$, $\trg{n:\sigma_1} = \trg{n':\sigma_1'}$,  $\trg{v:\sigma_2} = \trg{v':\sigma_2'}$,  $\trg{\OB{B}} \approx \trg{\OB{B'}}$, and $\trg{B} \approx \trg{B'}$.

\item[E-\TR-read:]
Then, $\trg{\Omega_0} \isdef \trg{C, H, \OB{B}\cdot B \triangleright \letread{x}{e}{s}}$, $\trg{B \triangleright e\bigredt n : \sigma_1}$, $\trg{H}=\trg{H_1; \abs{n}\mapsto v : \sigma_0 ; H_2}$,  $\trg{\lambda} = \trg{\rdl{\abs{n}}^{\sigma_1}}$, and $\trg{\Omega_1} \isdef \trg{C, H, \OB{B}\cdot B\cup x\mapsto v : \sigma_0 \triangleright s}$.
From $\trg{B} \approx \trg{B'}$, $\trg{B \triangleright e\bigredt n : \sigma_1}$, and \cref{lemma:low-equiv-bindings-low-equiv-results}, we get $\trg{B' \triangleright e\bigredt n' : \sigma_1'}$ and $\trg{n : \sigma_1} \approx \trg{n' : \sigma_1'}$.
From $\safe{\trg{\lambda}}$, we have that $\trg{\sigma_1} = \trg{\safeta}$.
From $\trg{\sigma_1} = \trg{\safeta}$ and $\trg{n : \sigma_1} \approx \trg{n' : \sigma_1'}$, we get $\trg{n : \sigma_1} = \trg{n' : \sigma_1'}$.

From $\trg{\Omega_0} \approx \trg{\Omega_0'}$ and $\trg{n : \sigma_1} = \trg{n' : \sigma_1'}$, we get that $\trg{\Omega_0'} \isdef \trg{C, H', \OB{B'}\cdot B' \triangleright \letread{x}{e}{s}}$ where  $\trg{H'}=\trg{H_1'; \abs{n}\mapsto v' : \sigma_0' ; H_2'}$, $\trg{H} \approx \trg{H'}$, $\trg{\OB{B}} \approx \trg{\OB{B'}}$,  $\trg{B} \approx \trg{B'}$, and $\trg{v : \sigma_0} \approx \trg{v' : \sigma_0'}$.

Therefore, we can apply the rule E-\TR-read to derive  $\trg{\Omega_0' \xtot{\lambda'} \Omega_1'}$ where $\trg{\lambda'} = \trg{\rdl{\abs{n}}^{\sigma_1'}}$, and  $\trg{\Omega_1} \isdef \trg{C, H', \OB{B'}\cdot B' \cup x \mapsto v' : \taintt_0'\triangleright \skipt}$.
Hence, $\trg{\Omega_1} \approx \trg{\Omega_1'}$  and $\trg{\lambda} = \trg{\lambda'}$ follow from $\trg{H} \approx \trg{H'}$, $\trg{\OB{B}} \approx \trg{\OB{B'}}$,  $\trg{B} \approx \trg{B'}$, $\trg{v : \sigma_0} \approx \trg{v' : \sigma_0'}$, and $\trg{n : \sigma_1} = \trg{n' : \sigma_1'}$.

\item[E-\TR-write-prv:]
Then, we have that $\trg{\Omega_0} \isdef \trg{C, H, \OB{B}\cdot B \triangleright \asgnp{e}{e'}}$, $\trg{B \triangleright e\bigredt n : \sigma_0}$, $\trg{B \triangleright e'\bigredt v : \sigma_1}$, $\trg{\lambda} = \trg{\wrl{-\abs{n}}^{\sigma_0}}$, $\trg{H}=\trg{H_2; -\abs{n}\mapsto v_0 : \sigma_2 ; H_3}$, $\trg{H_1}=\trg{H_2; -\abs{n}\mapsto v : \sigma_1 ; H_3}$, and $\trg{\Omega_1} \isdef \trg{ {C, H_1, \OB{B}\cdot B \triangleright \skipt} }$.

From $\trg{\Omega_0} \approx \trg{\Omega_0'}$, we get that  $\trg{\Omega_0'} \isdef \trg{C, H', \OB{B'}\cdot B' \triangleright \asgnp{e}{e'}}$, $\trg{H'}=\trg{H_2'; -\abs{n}\mapsto v_0' : \sigma_2' ; H_3'}$, $\trg{H} \approx \trg{H'}$, $\trg{\OB{B}} \approx \trg{\OB{B'}}$, and $\trg{B} \approx \trg{B'}$.

From $\safe{\trg{\lambda}}$, we get that $\trg{\sigma_0} = \trg{\safeta}$.
From $\trg{\Omega_0} \approx \trg{\Omega_0'}$, $\trg{B \triangleright e\bigredt n : \sigma_0}$, and \cref{lemma:low-equiv-bindings-low-equiv-results}, we get $\trg{B \triangleright e\bigredt n' : \sigma_0'}$ and $\trg{n : \sigma_0} \approx \trg{n' : \sigma_0'}$.
From this and $\trg{\sigma_0} = \trg{\safeta}$, we get $\trg{n : \sigma_0} = \trg{n' : \sigma_0'}$.

From  $\trg{\Omega_0} \approx \trg{\Omega_0'}$, $\trg{B \triangleright e'\bigredt v : \sigma_1}$, and \cref{lemma:low-equiv-bindings-low-equiv-results}, we get $\trg{B \triangleright e'\bigredt v' : \sigma_1'}$ and $\trg{v : \sigma_1} \approx \trg{v' : \sigma_1'}$.

Hence, we can apply the rule E-\TR-write-prv to derive  $\trg{\Omega_0' \xtot{\lambda'} \Omega_1'}$ where $\trg{\lambda} = \trg{\wrl{-\abs{n'}}^{\sigma_0'}}$ and  $\trg{\Omega_1} \isdef \trg{C, H', \OB{B'}\cdot B' \triangleright \skipt}$ where $\trg{H_1'}=\trg{H_2'; -\abs{n'}\mapsto v' : \sigma_1' ; H_3}$.
Therefore, $\trg{\lambda} = \trg{\lambda'}$ follows from $\trg{n : \sigma_0} = \trg{n' : \sigma_0'}$ and $\trg{\Omega_1} \approx \trg{\Omega_1'}$ follows from $\trg{\Omega_0} \approx \trg{\Omega_0'}$, $\trg{n : \sigma_0} = \trg{n' : \sigma_0'}$, and $\trg{v : \sigma_1} \approx \trg{v' : \sigma_1'}$.

\item[E-\TR-read-prv:]
Then, $\trg{\Omega_0} \isdef \trg{C, H, \OB{B}\cdot B \triangleright \letreadp{x}{e}{s}}$, $\trg{B \triangleright e\bigredt n : \sigma_0}$, $\trg{H}=\trg{H_1; -\abs{n}\mapsto v : \sigma_1 ; H_2}$, 	$\trg{\lambda} = \trg{\rdl{-\abs{n}}^{\sigma_0}}$, and $\trg{\Omega_1} \isdef \trg{C, H, \OB{B}\cdot B\cup x\mapsto v : \unta \triangleright s}$.
	
From $\trg{\Omega_0} \approx \trg{\Omega_0'}$, we get that $\trg{\Omega_0'} \isdef \trg{C, H', \OB{B'}\cdot B' \triangleright \letreadp{x}{e}{s}}$ where $\trg{H} \approx \trg{H'}$, $\trg{\OB{B}} \approx \trg{\OB{B'}}$, and $\trg{B} \approx \trg{B'}$.
From $\trg{H} \approx \trg{H'}$ and $\trg{H}=\trg{H_1; -\abs{n}\mapsto v : \sigma_1 ; H_2}$, we get that $\trg{H}=\trg{H_1'; -\abs{n}\mapsto v' : \sigma_1' ; H_2'}$ and $\trg{v:\sigma_1} \approx \trg{v':\sigma'}$.
Moreover, from $\trg{B \triangleright e\bigredt n : \sigma_0}$, $\trg{B} \approx \trg{B'}$, and \cref{lemma:low-equiv-bindings-low-equiv-results}, we get that $\trg{B' \triangleright e\bigredt n' : \sigma_0'}$ and $\trg{n : \sigma_0} \approx \trg{n':\sigma_0'}$.

From $\safe{\trg{\lambda}}$, we get that $\trg{\sigma_0} = \trg{\safeta}$.
From this and $\trg{n : \sigma_0} \approx \trg{n':\sigma_0'}$, we get $\trg{n : \sigma_0} = \trg{n':\sigma_0'}$.

We can apply the rule E-\TR-read-prv to derive  $\trg{\Omega_0' \xtot{\lambda'} \Omega_1'}$ where $\trg{\lambda} = \trg{\rdl{-\abs{n'}}^{\sigma_0'}}$,  and  $\trg{\Omega_1'} \isdef \trg{C, H', \OB{B'}\cdot B' \cup x \mapsto v':\unta \triangleright \skipt}$.
Therefore, $\trg{\lambda} = \trg{\lambda'}$ follows from $\trg{n : \sigma_0} = \trg{n':\sigma_0'}$, whereas $\trg{\Omega_1} \approx \trg{\Omega_1'}$ follows from $\trg{\Omega_0} \approx \trg{\Omega_0'}$ and the fact that the values assigned to $\trg{x}$ has tag $\trg{\unta}$. 

\item[E-\TR-call-internal:]
Then, $\trg{\Omega_0} \isdef \trg{C, H, \OB{B}\cdot B \triangleright \proc{{\call{f}~e}}{\OB{f'}}}$, $\trg{\lambda} = \trg{\epsilon}$, $		\trg{C}.\mtt{intfs}\vdash\trg{f,f'}:\trg{internal}$, $\trg{\OB{f'}} = \trg{\OB{f''};f'}$, $\trg{f(x)\mapsto s;\ret}\in\trg{C}.\mtt{funs}$,  $\trg{B \triangleright e\bigredt v : \sigma}$, and $\trg{\Omega_1} \isdef \trg{C, H, \OB{B}\cdot B\cdot x\mapsto v : \sigma \triangleright \proc{{s;\ret}}{\OB{f'};f}}$.
From $\trg{\Omega_0} \approx \trg{\Omega_0'}$, we get that $\trg{\Omega_0'} \isdef \trg{C, H', \OB{B'}\cdot B' \triangleright \proc{{\call{f}~e}}{\OB{f'}}}$ where $\trg{H} \approx \trg{H'}$, $\trg{\OB{B}} \approx \trg{\OB{B'}}$, and $\trg{B} \approx \trg{B'}$.
From  $\trg{B} \approx \trg{B'}$, $\trg{B \triangleright e\bigredt v : \sigma}$, and \cref{lemma:low-equiv-bindings-low-equiv-results}, we get that $\trg{B \triangleright e\bigredt v' : \sigma'}$ and $\trg{v:\sigma} \approx \trg{v:\sigma'}$.
Therefore, we can apply the rule E-\TR-call-internal to derive  $\trg{\Omega_0' \xtot{\lambda'} \Omega_1'}$ where $\trg{\lambda'} = \trg{\epsilon}$,  and  $\trg{\Omega_1'} \isdef \trg{C, H', \OB{B'}\cdot B'\cdot x\mapsto v' : \sigma' \triangleright \proc{{s;\ret}}{\OB{f'};f}}$.
Hence, $\trg{\lambda} = \trg{\lambda'}$ (which follows from $\trg{\lambda} = \trg{\epsilon}$ and $\trg{\lambda'} = \trg{\epsilon}$), and $\trg{\Omega_1} \approx \trg{\Omega_1'}$ (which follows from $\trg{\Omega_0} \approx \trg{\Omega_0'}$ and $\trg{v:\sigma} \approx \trg{v:\sigma'}$).

\item[E-\TR-callback:]

Then, $\trg{\Omega_0} \isdef \trg{C, H, \OB{B}\cdot B \triangleright \proc{{\call{f}~e}}{\OB{f'}}}$, $\trg{\lambda} = \trg{\cbh{f}{v}{H}^\sigma}$, 
$\trg{\OB{f'}} = \trg{\OB{f''};f'}$, $\trg{f(x)\mapsto s;\ret}\in\trg{C}.\mtt{funs}$, ${\trg{C}}.\mtt{intfs}\vdash\trg{f',f}:\trg{out}$, $\trg{B \triangleright e \bigredt v:\sigma}$, and $\trg{\Omega_1} = \trg{C, H, \OB{B}\cdot B\cdot x\mapsto v:\sigma \triangleright \proc{{s;\ret}}{\OB{f'};f}}$.
From $\trg{\Omega_0} \approx \trg{\Omega_0'}$, we get that $\trg{\Omega_0'} \isdef \trg{C, H', \OB{B'}\cdot B' \triangleright \proc{{\call{f}~e}}{\OB{f'}}}$  where $\trg{H} \approx \trg{H'}$, $\trg{\OB{B}} \approx \trg{\OB{B'}}$, and $\trg{B} \approx \trg{B'}$.
From  $\trg{B} \approx \trg{B'}$, $\trg{B \triangleright e\bigredt v : \sigma}$, and \cref{lemma:low-equiv-bindings-low-equiv-results}, we get that $\trg{B \triangleright e\bigredt v' : \sigma'}$ and $\trg{v:\sigma} \approx \trg{v:\sigma'}$.
Therefore, we can apply the rule E-\TR-callback to derive  $\trg{\Omega_0' \xtot{\lambda'} \Omega_1'}$ where $\trg{\lambda'} = \trg{\cbh{f}{v'}{H}^{\sigma'}}$ and   $\trg{\Omega_1'} \isdef \trg{C, H', \OB{B'}\cdot B'\cdot x\mapsto v':\sigma' \triangleright \proc{{s;\ret}}{\OB{f'};f}}$.
From $\safe{\trg{\lambda}}$, we get that $\trg{\sigma} = \trg{\safeta}$.
From this and $\trg{v:\sigma} \approx \trg{v:\sigma'}$, we get that $\trg{v:\sigma} = \trg{v:\sigma'}$.
Hence, $\trg{\lambda} = \trg{\lambda'}$ (which follows from $\trg{\lambda} = \trg{\cbh{f}{v}{H}^\sigma}$, $\trg{\lambda} = \trg{\cbh{f}{v'}{H}^{\sigma'}}$, and $\trg{v:\sigma} = \trg{v:\sigma'}$), and $\trg{\Omega_1} \approx \trg{\Omega_1'}$ (which follows from $\trg{\Omega_0} \approx \trg{\Omega_0'}$ and $\trg{v:\sigma} \approx \trg{v:\sigma'}$).

\item[E-\TR-call:]
The proof of this case is similar to that of the case E-\TR-callback.

\item[E-\TR-ret-internal:]
Then, $\trg{\Omega_0} \isdef \trg{C, H, \OB{B}\cdot B \triangleright \proc{{\ret}}{\OB{f'};f}}$, $\trg{\OB{f'}} = \trg{\OB{f''};f'}$, $\trg{{C}}.\mtt{intfs}\vdash\trg{f,f'}:\trg{internal}$, $\trg{\lambda} = \trg{\epsilon}$, and $\trg{\Omega_1} = \trg{C, H, \OB{B} \triangleright \proc{\skipt}{\OB{f'}}}$.
From $\trg{\Omega_0} \approx \trg{\Omega_0'}$, we get that $\trg{\Omega_0'} \isdef  \trg{C, H', \OB{B'}\cdot B' \triangleright \proc{{\ret}}{\OB{f'};f}}$  where $\trg{H} \approx \trg{H'}$, $\trg{\OB{B}} \approx \trg{\OB{B'}}$, and $\trg{B} \approx \trg{B'}$.
Therefore, we can apply the rule E-\TR-ret-internal to derive  $\trg{\Omega_0' \xtot{\lambda'} \Omega_1'}$ where $\trg{\lambda'} = \trg{\epsilon}$ and   $\trg{\Omega_1'} \isdef \trg{C, H', \OB{B'} \triangleright \proc{\skipt}{\OB{f'}}}$.
Hence, $\trg{\lambda} = \trg{\lambda'}$ (which follows from  $\trg{\lambda} = \trg{\epsilon}$ and  $\trg{\lambda'} = \trg{\epsilon}$) and $\trg{\Omega_1} \approx \trg{\Omega_1'}$ (which follows from $\trg{H} \approx \trg{H'}$ and $\trg{\OB{B}} \approx \trg{\OB{B'}}$).

\item[E-\TR-retback:]
Then, $\trg{\Omega_0} \isdef \trg{C, H, \OB{B}\cdot B \triangleright \proc{{\ret}}{\OB{f'};f}}$, $\trg{\OB{f'}} = \trg{\OB{f''};f'}$, $\trg{{C}}.\mtt{intfs}\vdash\trg{f,f'}:\trg{internal}$, $\trg{\lambda} = \trg{\rbh{}{H}^\safeta}$, and $\trg{\Omega_1} = \trg{C, H, \OB{B} \triangleright \proc{\skipt}{\OB{f'}}}$.
From $\trg{\Omega_0} \approx \trg{\Omega_0'}$, we get that $\trg{\Omega_0'} \isdef  \trg{C, H', \OB{B'}\cdot B' \triangleright \proc{{\ret}}{\OB{f'};f}}$  where $\trg{H} \approx \trg{H'}$, $\trg{\OB{B}} \approx \trg{\OB{B'}}$, and $\trg{B} \approx \trg{B'}$.
Therefore, we can apply the rule E-\TR-ret-retback to derive  $\trg{\Omega_0' \xtot{\lambda'} \Omega_1'}$ where $\trg{\lambda'} = \trg{\rbh{}{H}^\safeta}$ and   $\trg{\Omega_1'} \isdef \trg{C, H', \OB{B'} \triangleright \proc{\skipt}{\OB{f'}}}$.
Hence, $\trg{\lambda} = \trg{\lambda'}$ (which follows from  $\trg{\lambda} = \trg{\rbh{}{H}^\safeta}$ and  $\trg{\lambda'} = \trg{\rbh{}{H}^\safeta}$) and $\trg{\Omega_1} \approx \trg{\Omega_1'}$ (which follows from $\trg{H} \approx \trg{H'}$ and $\trg{\OB{B}} \approx \trg{\OB{B'}}$).

\item[E-\TR-return:]
The proof of this case is similar to that of E-\TR-retback.

\item[E-\TR-lfence:]
Then, $\trg{\Omega_0} \isdef \trg{C, H, \OB{B} \triangleright \lfence}$, $\trg{\lambda} = \trg{\epsilon}$, and $\trg{\Omega_1} \isdef \trg{C, H, \OB{B} \triangleright \skipt}$.
From $\trg{\Omega_0} \approx \trg{\Omega_0'}$, we get that  $\trg{\Omega_0'} \isdef \trg{C, H', \OB{B}' \triangleright \lfence}$ where $\trg{H'} \approx \trg{H}$ and $\trg{\OB{B}} \approx \trg{\OB{B}'}$.
Hence, we can apply the E-\TR-lfence rule to derive $\trg{\Omega_0' \xtot{\lambda'} \Omega_1'}$ where $\trg{\lambda'} =\trg{\epsilon}$ and $\trg{\Omega_1'} \isdef \trg{C, H', \OB{B}' \triangleright \skipt}$.
Hence, $\trg{\Omega_1} \approx \trg{\Omega_1'}$ and $\trg{\lambda} = \trg{\lambda'}$ hold.
\item[E-\TR-cmove-true:]
Then, $\trg{\Omega_0} \isdef \trg{C, H, \OB{B}\cdot B \triangleright \cmove{x}{e_0}{e_1}{s}}$, $\trg{x}\in\dom{\trg{B}}$, $\trg{\lambda} = \trg{\epsilon}$,  $\trg{\Omega_1} \isdef \trg{C, H, \OB{B}\cdot B\cup x\mapsto v_0 : \taintt \triangleright s}$, $\trg{B \triangleright e_0\bigred v_0 : \taintt_0}$, $\trg{B \triangleright e_1\bigred 0 : \taintt_1}$, and $\trg{\taintt} = \trg{\taintt_0} \sqcup \trg{\taintt_1}$.
From $\trg{\Omega_0} \approx \trg{\Omega_0'}$, we have that $\trg{\Omega_0'} \isdef \trg{C, H', \OB{B'}\cdot B' \triangleright \cmove{x}{e_0}{e_1}{s}}$ where $\trg{H \approx H'}$, $\trg{\OB{B}\cdot B} \approx \trg{\OB{B}'\cdot B'}$, and $\trg{B} \approx \trg{B'}$.
From $\trg{B} \approx \trg{B'}$, $\trg{B \triangleright e_0\bigred v_0 : \taintt_0}$, $\trg{B \triangleright e_1\bigred 0 : \taintt_1}$, and \cref{lemma:low-equiv-bindings-low-equiv-results}, we get that $\trg{B' \triangleright e_0\bigred v_0' : \taintt_0'}$, $\trg{B' \triangleright e_1\bigred v_1' : \taintt_1'}$ where $\trg{v_0:\taintt_0} \approx \trg{v_0' : \taintt_0'}$ and $\trg{0:\taintt_1} \approx \trg{v_1' : \taintt_1'}$.
There are two cases:
\begin{description}
\item[$\trg{\taintt_1} = \trg{\safeta}$:]
Then, $\trg{\taintt_1'} = \trg{\safeta}$ and $\trg{v_1'} = \trg{0}$.
Hence, we can apply the E-\TR-cmove-true rule to derive $\trg{\Omega_0' \xtot{\lambda'} \Omega_1'}$ where $\trg{\Omega_1} \isdef \trg{C, H', \OB{B'}\cdot B'\cup x\mapsto v_0' : \taintt' \triangleright s}$, and $\trg{\taintt'} = \trg{\taintt_0'} \sqcup \trg{\taintt_1'}$.
Observe that $\trg{\Omega_1} \approx \trg{\Omega_1'}$ immediately follows from $\trg{\Omega_0} \approx \trg{\Omega_0'}$ and $\trg{v_0:\taintt_0} \approx \trg{v_0' : \taintt_0'}$.
Hence, $\trg{\Omega_1} \approx \trg{\Omega_1'}$ and $\trg{\lambda} = \trg{\lambda'}$ hold.

\item[$\trg{\taintt_1} = \trg{\unta}$:]
Then, $\trg{\taintt_1'} = \trg{\unta}$ holds as well.
There are two cases:
\begin{description}
\item[$\trg{v_1'} = \trg{0}$:]
Then, we can apply the E-\TR-cmove-true rule to derive $\trg{\Omega_0' \xtot{\lambda'} \Omega_1'}$ where $\trg{\Omega_1} \isdef \trg{C, H', \OB{B'}\cdot B'\cup x\mapsto v_0' : \taintt' \triangleright s}$, and $\trg{\taintt'} = \trg{\taintt_0'} \sqcup \trg{\taintt_1'}$.
Observe that $\trg{\Omega_1} \approx \trg{\Omega_1'}$ immediately follows from $\trg{\Omega_0} \approx \trg{\Omega_0'}$ and $\trg{\taintt} = \trg{\taintt'} = \trg{\unta}$.
Hence, $\trg{\Omega_1} \approx \trg{\Omega_1'}$ and $\trg{\lambda} = \trg{\lambda'}$ hold.

\item[$\trg{v_1'} > \trg{0}$:]
Then, we can apply the E-\TR-cmove-false rule to derive $\trg{\Omega_0' \xtot{\lambda'} \Omega_1'}$ where $\trg{\Omega_1} \isdef \trg{C, H', \OB{B'}\cdot B'\cup x\mapsto v' : \taintt' \triangleright s}$, $\trg{B(x)} = \trg{v' : \taintt_0'}$, and  $\trg{\taintt'} = \trg{\taintt_0 \sqcup \taintt_1'}$.
Observe that $\trg{\Omega_1} \approx \trg{\Omega_1'}$ immediately follows from $\trg{\Omega_0} \approx \trg{\Omega_0'}$ and $\trg{\taintt} = \trg{\taintt'} = \trg{\unta}$.
Hence, $\trg{\Omega_1} \approx \trg{\Omega_1'}$ and $\trg{\lambda} = \trg{\lambda'}$ hold.
\end{description}
\end{description}

\item[E-\TR-cmove-false:]
The proof of this case is similar to that of the E-\TR-cmove-true case.
\end{description}

\item[Induction step:]
Then, $\trg{\Omega_0 \xtot{\lambda} \Omega_1}$ using the E-\TR-step rule.
Then, $\trg{\Omega_0} \isdef \trg{C, H, \OB{B} \triangleright s;s''}$, $\trg{C,}$ $\trg{H,}$ $\trg{\OB{B} \triangleright s} \xtot{\trg{\lambda}} \trg{C_1,}$ $\trg{H_1, \OB{B_1} \triangleright s_1}$, and $\trg{\Omega_1} = \trg{C, H_1, \OB{B_1} \triangleright s_1;s''}$.
From $\trg{\Omega_0} \approx \trg{\Omega_0'}$, we get that $\trg{\Omega_0} \isdef \trg{C, H', \OB{B'} \triangleright s;s''}$ where $\trg{H} \approx \trg{H'}$ and $\trg{\OB{B}} \approx \trg{\OB{B'}}$.
From the induction hypothesis, $\safe{\trg{\lambda}}$, and $\trg{\Omega_0} \approx \trg{\Omega_0'}$, we get that $\trg{C, H', \OB{B'} \triangleright s} \xtot{\trg{\lambda'}} \trg{C, H_1', \OB{B_1'} \triangleright s_1}$ such that $\trg{H_1} \approx \trg{H_1'}$, $\trg{\OB{B_1}} \approx \trg{\OB{B_1'}}$, and $\trg{\lambda} = \trg{\lambda'}$.
Therefore, we can apply the E-\SR-step rule to derive $\trg{\Omega_0' \xtot{\lambda'} \Omega_1'}$ where $\trg{\Omega_1} \isdef \trg{C, H_1', \OB{B_1'} \triangleright s_1;s''}$.
Observe that $\trg{\Omega_1} \approx \trg{\Omega_1'}$ and $\trg{\lambda} = \trg{\lambda'}$ hold.

\end{description}
This concludes the proof of our lemma. 
\end{proof}

\begin{lemma}[Steps of $\xtot{}$ without observations preserve low-equivalence]\label{lemma:no-obs-preserve-low-equivalence-for-non-spec-semantics}
\begin{align*}
	\forall \trg{\Omega_0}, \trg{\Omega_0'}, \trg{\lambda}, \trg{\Omega_1}.
	& \text{ if } 
		\trg{\Omega_0 \xtot{\epsilon} \Omega_1}, \trg{\Omega_0} \approx \trg{\Omega_0'} \\
	& \text{ then } \exists \trg{\Omega_1}.\  \trg{\Omega_0' \xtot{\epsilon'} \Omega_1'}, \trg{\Omega_1} \approx \trg{\Omega_1'}
\end{align*}
\end{lemma}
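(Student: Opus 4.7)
The plan is to observe that this lemma is essentially a direct corollary of the preceding Lemma (Steps of $\xtot{}$ with safe observations preserve low-equivalence). Looking at the definition of $\safe{\cdot}$ given earlier (Definition of safe-equivalence), we have $\safe{\trg{\epsilon}} \isdef \mathit{true}$, so the empty label is trivially safe. Therefore, given the hypotheses $\trg{\Omega_0 \xtot{\epsilon} \Omega_1}$ and $\trg{\Omega_0} \approx \trg{\Omega_0'}$, I can simply instantiate the previous lemma with $\trg{\lambda} = \trg{\epsilon}$. This yields some $\trg{\lambda'}, \trg{\Omega_1'}$ with $\trg{\Omega_0' \xtot{\lambda'} \Omega_1'}$, $\trg{\Omega_1} \approx \trg{\Omega_1'}$, and $\trg{\lambda} = \trg{\lambda'}$. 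From $\trg{\lambda} = \trg{\epsilon}$ and $\trg{\lambda} = \trg{\lambda'}$, we conclude $\trg{\lambda'} = \trg{\epsilon}$, giving the required existential witnesses.

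If a self-contained proof is desired rather than a one-line corollary, the fallback is a direct structural induction on the derivation of $\trg{\Omega_0 \xtot{\epsilon} \Omega_1}$, restricted to only the rules that can produce $\trg{\epsilon}$. The base cases to handle are E-$\TR$-sequence, E-$\TR$-letin, E-$\TR$-call-internal, E-$\TR$-ret-internal, E-$\TR$-lfence, E-$\TR$-cmove-true, and E-$\TR$-cmove-false, plus the inductive step E-$\TR$-step. Each case unfolds $\trg{\Omega_0 } \approx \trg{\Omega_0'}$ to get matching source code and low-equivalent heaps and bindings, then uses \cref{lemma:low-equiv-bindings-low-equiv-results} (or its companion) to transport any expression evaluation used by the rule, then reassembles the same rule on the primed side. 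For the conditional-move cases, the only interesting subtlety mirrors the one in the preceding lemma: when the guard has taint $\unta$, the primed side might fire the opposite rule (true vs.\ false), but both resulting bindings give $\trg{x}$ the taint $\unta$ and so yield $\approx$-equivalent states regardless.

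The only nonroutine step would be the cmove case in a direct proof, but since the preceding lemma already handled it, I would present the proof as a direct appeal to that lemma. Concretely the proof would read: ``Since $\safe{\trg{\epsilon}}$ holds by definition, the result follows immediately from \cref{lemma:safe-obs-preserve-low-equivalence-for-non-spec-semantics} instantiated with $\trg{\lambda} = \trg{\epsilon}$; the returned $\trg{\lambda'}$ satisfies $\trg{\lambda'} = \trg{\epsilon}$ from $\trg{\lambda} = \trg{\lambda'}$.'' No real obstacle is expected, as the heavy lifting has already been done in the preceding lemma.
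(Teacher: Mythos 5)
Your proposal is correct and matches the paper exactly: the paper's proof of this lemma is literally ``Special case of \cref{lemma:safe-obs-preserve-low-equivalence-for-non-spec-semantics} since $\safe{\trg{\epsilon}}$ is always satisfied.'' Your one-line corollary argument (instantiating that lemma with $\trg{\lambda}=\trg{\epsilon}$ and using $\trg{\lambda}=\trg{\lambda'}$ to conclude $\trg{\lambda'}=\trg{\epsilon}$) is the intended proof; the fallback induction is unnecessary.
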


\begin{proof}
Special case of \cref{lemma:safe-obs-preserve-low-equivalence-for-non-spec-semantics} since $\safe{\trg{\epsilon}}$ is always satisfied.	
\end{proof}

\begin{lemma}[Steps of $\xtot{}$ that agree on observation and low-equivalence cannot disagree on label]\label{lemma:same-obs-cannot-disagree-on-labels-for-low-equivalence-confs-non-spec-semantics}
\begin{align*}
	\forall \trg{\Omega_0}, \trg{\Omega_0'}, \trg{\lambda}, \trg{\sigma}, \trg{\sigma'}, \trg{\Omega_1}, \trg{\Omega_1'}.
	& \text{ if } 
		\trg{\Omega_0 \xtot{\lambda^{\sigma}} \Omega_1},  \trg{\Omega_0' \xtot{\lambda^{\sigma'}} \Omega_1'}, \trg{\Omega_0} \approx \trg{\Omega_0'}\\
	& \text{ then } \trg{\sigma} = \trg{\sigma'}
\end{align*}
\end{lemma}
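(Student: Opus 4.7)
The plan is to proceed by structural induction on the derivation of $\trg{\Omega_0 \xtot{\lambda^{\sigma}} \Omega_1}$, with case analysis on the last rule applied. Because $\trg{\Omega_0} \approx \trg{\Omega_0'}$ forces both states to have the same codebase, the same current statement, and low-equivalent heaps and bindings, and because the shape of the label $\trg{\lambda}$ together with the top statement uniquely determines the applicable rule, $\trg{\Omega_0' \xtot{\lambda^{\sigma'}} \Omega_1'}$ must be derived by the matching rule (e.g., a $\trg{\rdl{n}}$ label pairs E-\TR-read with E-\TR-read, an $\trg{\ifl{v}}$ label pairs E-\TR-if-true/false with itself, etc.).

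The key observation is that, in every producing rule, the taint $\trg{\sigma}$ attached to the emitted label is a fixed $\sqcup$/$\sqcap$-combination of two kinds of components: (a) taints $\trg{\sigma_e}$ of sub-expressions evaluated in the top binding $\trg{B}$, and (b) taints $\trg{\sigma_h}$ of heap cells $\trg{H(n)}$ at addresses fixed by the label. By \Cref{def:safeeq}, whenever $\trg{v:\sigma \approx v':\sigma'}$ we already have $\trg{\sigma}=\trg{\sigma'}$ (equality of the underlying value is only required when both sides are $\trg{\safeta}$). Applying \cref{lemma:low-equiv-bindings-low-equiv-results1} to each sub-expression with $\trg{B \approx B'}$ yields equal taints for component (a); the cell-wise clause of $\trg{H \approx H'}$ directly gives equal taints for component (b). Substituting into the rule's formula for $\trg{\sigma}$ yields $\trg{\sigma} = \trg{\sigma'}$.

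For the inductive case E-\TR-step, $\trg{\Omega_0} \approx \trg{\Omega_0'}$ and identical top statement $\trg{s;s''}$ give low-equivalence of the inner sub-reducts $\trg{C,H,\OB{B}\triangleright s}$ and $\trg{C,H',\OB{B'}\triangleright s}$; the induction hypothesis applied to their derivations directly delivers $\trg{\sigma} = \trg{\sigma'}$. The cases for E-\TR-lfence, E-\TR-sequence, E-\TR-ret-internal, E-\TR-call-internal, and the two \trg{cmove} rules are trivial since their labels are $\trg{\epsilon}$, for which taint equality is vacuous.

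The only subtlety, which is the main thing to double-check rather than a real obstacle, is confirming rule-by-rule that the label's taint never depends on state components unconstrained by low-equivalence: in particular, that E-\TR-write and E-\TR-write-prv compute the label taint from the address/value expressions and not from the overwritten cell's prior taint (a quick inspection of \Cref{tr:tus-up-com,tr:tus-up-com-p} confirms this), and that $\trg{\rdl{n\mapsto v}}$ labels in the weak language fully determine the tuple $(\trg{n},\trg{v})$ so that the matching rule on the $\trg{\Omega_0'}$ side looks up the same heap cell. Beyond these checks, the proof is a routine parallel of the case analysis already carried out for \cref{lemma:same-obs-preserve-low-equivalence-for-non-spec-semantics}.
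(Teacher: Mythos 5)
Your proof is correct and follows essentially the same route as the paper's own (very terse) argument: the paper also proceeds by structural induction on the rules of $\xtot{}$, observing that the label shape pins down the matching rule, that label taints are computed from bindings and heaps constrained by $\approx$, and invoking \cref{lemma:low-equiv-bindings-low-equiv-results1} for the expression taints. You simply spell out the details the paper leaves implicit.
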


\begin{proof}
By structural induction on the rules defining $\xtot{}$.
It follows from (1) there are no two rules producing the same observation, (2) labels are always derived by computation over bindings and heaps which are low-equivalent from $\trg{\Omega_0} \approx \trg{\Omega_0'}$, and (3) computation over low-equivalent bindings produce the same labels (\cref{lemma:low-equiv-bindings-low-equiv-results1}).
\end{proof}

\subsubsection{Bindings}

\begin{lemma}[Low-equivalent bindings produce low-equivalent results - 1]
\label{lemma:low-equiv-bindings-low-equiv-results1}
\begin{align*}
	&\forall \trg{B}, \trg{B'}, \trg{e}, \trg{v}, \trg{v'}, \trg{\taintt}, \trg{\taintt'}. \\
	& \qquad \text{ if } \trg{B} \approx \trg{B'}, \trg{B \triangleright e\bigred v : \taintt}, \trg{B' \triangleright e\bigred v' : \taintt'}\\
	& \qquad \text{ then }  \trg{v:\taintt} \approx \trg{v':\taintt'} 
\end{align*}	
\end{lemma}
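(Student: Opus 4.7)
\begin{proofsketch}
The plan is to proceed by structural induction on the expression $\trg{e}$, inverting the expression big-step judgement $\trg{B \triangleright e \bigred v : \taintt}$ (which, by \Cref{tr:e-comb}, decomposes into a value judgement $\trg{\Bv \triangleright e \bigred v}$ and a taint judgement $\trg{\Bs \triangleright e \bigred \taintt}$ acting on the split $\trg{\Bv + \Bs} \equiv \trg{B}$). The structure of the proof follows the four forms of expressions: values, variables, arithmetic operations, and comparisons.

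For the base cases, the value case $\trg{e \equiv v_0}$ is immediate: rules T-$\TR$-val and E-$\TR$-val force $\trg{v} = \trg{v}' = \trg{v_0}$ and $\trg{\taintt} = \trg{\taintt'} = \trg{\safeta}$, so $\trg{v : \taintt} \approx \trg{v' : \taintt'}$ holds by \Cref{def:safeeq}. For the variable case $\trg{e \equiv x}$, rules T-$\TR$-var and E-$\TR$-var give $\trg{B(x)} = \trg{v : \taintt}$ and $\trg{B'(x)} = \trg{v' : \taintt'}$; then from $\trg{B} \approx \trg{B'}$ we unfold the definition of binding low-equivalence to directly conclude $\trg{v : \taintt} \approx \trg{v' : \taintt'}$.

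For the inductive cases $\trg{e \equiv e_1 \op e_2}$ and $\trg{e \equiv e_1 \bop e_2}$, I invert the corresponding rules (T-$\TR$-op / T-$\TR$-comparison and E-$\TR$-op / E-$\TR$-comparison) to obtain evaluations $\trg{B \triangleright e_i \bigred n_i : \taintt_i}$ and $\trg{B' \triangleright e_i \bigred n_i' : \taintt_i'}$, with $\trg{\taintt} = \trg{\taintt_1 \sqcup \taintt_2}$ and $\trg{\taintt'} = \trg{\taintt_1' \sqcup \taintt_2'}$, and with $\trg{v}$, $\trg{v'}$ computed from the $n_i$ and $n_i'$ via the operator's interpretation. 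The induction hypothesis yields $\trg{n_i : \taintt_i} \approx \trg{n_i' : \taintt_i'}$ for $i \in \{1,2\}$. Unfolding $\approx$ on values, I get $\trg{\taintt_i} = \trg{\taintt_i'}$, hence $\trg{\taintt} = \trg{\taintt'}$ by congruence of $\sqcup$. I then split on whether $\trg{\taintt} = \trg{\safeta}$: if so, then both $\trg{\taintt_i} = \trg{\safeta}$ (since $\sqcup$ is lub on $\safeta \leq \unta$), so $\trg{n_i} = \trg{n_i'}$, and the deterministic interpretation of $\op$/$\bop$ gives $\trg{v} = \trg{v'}$; otherwise $\trg{\taintt} = \trg{\unta}$ and low-equivalence of the results requires only agreement on taints, which we already have.

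I do not anticipate a significant obstacle: the argument is a straightforward structural induction, and the only subtlety is propagating taint equality correctly through $\sqcup$ in the operator cases, which is handled by noting that $\sqcup$ is the lub in the integrity lattice. The companion lemma \cref{lemma:low-equiv-bindings-low-equiv-results} (which additionally asserts existence of the evaluation on the other side) can be proved by the same induction augmented with the determinism and totality of the expression semantics on well-formed bindings.
\end{proofsketch}
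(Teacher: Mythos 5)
Your proof is correct, but it inverts the paper's decomposition. The paper carries out the structural induction in the companion lemma \cref{lemma:low-equiv-bindings-low-equiv-results} (the existential version: given one evaluation under $\trg{B}$, there \emph{exists} a related evaluation under $\trg{B'}$), and then obtains the present lemma as a two-line corollary: instantiate that lemma to get some $\trg{v'':\taintt''}$ with $\trg{v:\taintt} \approx \trg{v'':\taintt''}$, and use determinism of $\trg{\bigred}$ to conclude $\trg{v'':\taintt''} = \trg{v':\taintt'}$. You instead run the induction directly on the two given derivations, which makes this lemma self-contained and avoids any appeal to determinism; the case analysis you give (values, variables, and the operator cases with taint-equality propagated through $\sqcup$ and the split on whether the combined taint is $\trg{\safeta}$) matches the paper's induction for the companion lemma essentially verbatim. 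The trade-off is that the existential version is still needed elsewhere in the development (e.g., in the low-equivalence preservation lemmas for $\xtot{}$), so it must be proved separately anyway; note that for it the extra ingredient is not totality of the semantics but the fact that $\trg{B} \approx \trg{B'}$ forces the two bindings to have the same domain, so the variable case can still fire under $\trg{B'}$. Either ordering is sound; the paper's simply places the induction where it is reused and gets this lemma for free.
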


\begin{proof}
Let $\trg{B}$, $\trg{B'}$, $\trg{e}$, $\trg{v}$, $\trg{v'}$, $\trg{\taintt}$, and $\trg{\taintt'}$ be such that $\trg{B} \approx \trg{B'}$, $\trg{B \triangleright e\bigred v : \taintt}$, $\trg{B' \triangleright e\bigred v' : \taintt'}$.
From $\trg{B} \approx \trg{B'}$, $\trg{B \triangleright e\bigred v : \taintt}$, and \cref{lemma:low-equiv-bindings-low-equiv-results}, there are $\trg{v''}$, $\trg{\taintt''}$ such that $\trg{B' \triangleright e\bigred v'' : \taintt''}$ and $\trg{v:\taintt} \approx \trg{v'':\taintt''}$.
Since  $\trg{\bigred}$ is deterministic, we immediately get that $\trg{v'':\taintt''} = \trg{v':\taintt'}$.
Hence, $\trg{v:\taintt} \approx \trg{v':\taintt'}$.
\end{proof}

\begin{lemma}[Low-equivalent bindings produce low-equivalent results - 2]
\label{lemma:low-equiv-bindings-low-equiv-results}
\begin{align*}
	&\forall \trg{B}, \trg{B'}, \trg{e}, \trg{v}, \trg{\taintt}. \\
	& \qquad \text{ if } \trg{B} \approx \trg{B'}, \trg{B \triangleright e\bigred v : \taintt}, \\
	& \qquad \text{ then } \exists \trg{v'}, \trg{\sigma'}.\ \trg{B' \triangleright e\bigred v' : \taintt'}, \trg{v:\taintt} \approx \trg{v':\taintt'} 
\end{align*}	
\end{lemma}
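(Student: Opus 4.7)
The plan is to prove the lemma by structural induction on the expression \com{e}, following the grammar of expressions from \Cref{sec:src-syn} (values, variables, binary operations \com{\op}, and comparisons \com{\bop}). The statement of the lemma factors cleanly along this grammar, so induction is the natural choice, and the definition of $\approx$ on tainted values (\Cref{def:safeeq}) — namely, that $\com{v:\sigma \approx v':\sigma'}$ iff $\com{\sigma = \sigma'}$ and additionally $\com{v = v'}$ when $\com{\sigma = \safeta}$ — is what will make the cases go through.

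First, I would dispatch the two base cases. For $\com{e} \equiv \com{v}$, the judgement $\trg{B \triangleright v \bigred v : \safeta}$ holds independently of the bindings (by \Cref{tr:es-big-val} combined with \Cref{tr:e-comb}), so we pick $\trg{v'} = \trg{v}$ and $\trg{\taintt'} = \trg{\safeta}$, and $\trg{v:\safeta} \approx \trg{v:\safeta}$ is immediate. For $\com{e} \equiv \com{x}$, we have $\trg{B(x)} = \trg{v:\taintt}$; from $\trg{B \approx B'}$ and the definition of $\approx$ on bindings, $\trg{B'(x)}$ is defined and equal to some $\trg{v':\taintt'}$ with $\trg{v:\taintt} \approx \trg{v':\taintt'}$, which is exactly what we need.

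Next, I would handle the inductive cases $\com{e_1 \op e_2}$ and $\com{e_1 \bop e_2}$ uniformly. By the induction hypothesis applied to $\com{e_1}$ and $\com{e_2}$, there exist $\trg{v_1', \taintt_1'}$ and $\trg{v_2', \taintt_2'}$ such that $\trg{B' \triangleright e_i \bigred v_i' : \taintt_i'}$ and $\trg{v_i : \taintt_i} \approx \trg{v_i' : \taintt_i'}$. Applying \Cref{tr:es-big-op}/\Cref{tr:es-big-bop} together with \Cref{tr:ts-big-op}/\Cref{tr:ts-big-bop} and \Cref{tr:e-comb}, the combined result under $\trg{B'}$ has taint $\trg{\taintt_1' \sqcup \taintt_2'}$. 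I would then do a case split on whether $\trg{\taintt} = \trg{\taintt_1 \sqcup \taintt_2}$ is \com{\safeta} or \com{\unta}: if \com{\unta}, then $\trg{\taintt_1' \sqcup \taintt_2'} = \trg{\unta}$ as well (since $\approx$ preserves taints), and $\approx$ on tainted values with unsafe taint imposes no constraint on the underlying values; if \com{\safeta}, then both sub-taints are \com{\safeta}, so $\trg{v_i = v_i'}$, and the concrete computed value under $\trg{B'}$ matches that under $\trg{B}$.

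The proof is essentially routine — I do not expect any real obstacle, since the lemma mirrors standard non-interference lemmas for expression evaluation. The only mildly delicate point is keeping the split between the value-level judgement (\Cref{tr:es-big-val}–\Cref{tr:es-big-bop}) and the taint-level judgement (\Cref{tr:ts-big-val}–\Cref{tr:ts-big-bop}) straight when invoking the combined judgement via \Cref{tr:e-comb}; once that bookkeeping is done, the lattice identity $\safeta \sqcup \safeta = \safeta$ and monotonicity of $\sqcup$ carry the induction through.
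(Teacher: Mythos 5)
Your proposal is correct and follows essentially the same route as the paper's proof: induction over the expression (equivalently, over the big-step derivation), with the value and variable base cases handled directly and the operator cases closed by applying the induction hypothesis to the subexpressions and case-splitting on whether the combined taint is \com{\safeta} or \com{\unta}. The key observations — that $\approx$ forces equal taints always and equal values only under \com{\safeta}, and that the safe case propagates value equality through $\sqcup$ — are exactly the ones the paper uses.
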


\begin{proof}
Let $\trg{B}, \trg{B'}, \trg{e}, \trg{v},  \trg{\taintt}$ be such that $\trg{B} \approx \trg{B'}$ and $\trg{B \triangleright e\bigred v : \taintt}$.
We show that the lemma holds by structural induction on the rule used to derive  $\trg{B \triangleright e\bigred v : \taintt}$:
\begin{description}
	\item[Base case:]
	There are two cases based on the rule used to derive $\trg{B \triangleright e\bigred v : \taintt}$:
	\begin{description}
		\item[E-\TR-val:]
		Then, $\trg{e} = \trg{v}$ and $\trg{\taintt} = \trg{\safeta}$.
		Hence, we can derive $\trg{B' \triangleright e\bigred v' : \taintt'}$ using the  E-\TR-val rule, by picking $\trg{v'} = \trg{v}$ and  $\trg{\taintt'} = \trg{\safeta'}$.
		Therefore, $\trg{v:\taintt} \approx \trg{v':\taintt'}$.

		\item[E-\TR-var:]
		Then, $\trg{e} = \trg{x}$ and $\trg{B}(\trg{x}) = \trg{v:\taintt}$.
		From $\trg{B}(\trg{x}) = \trg{v:\taintt}$, we get that $\vdash \trg{B(x)} : \mathit{def}$.
		From $\trg{B} \approx \trg{B'}$, $\vdash \trg{B(x)} : \mathit{def}$, and $\trg{B}(\trg{x}) = \trg{v:\taintt}$, we get $\trg{B'}(\trg{x}) = \trg{v':\taintt'}$ and $\trg{v:\taintt} \approx \trg{v':\taintt'}$.
		Hence, we can derive  $\trg{B' \triangleright e\bigred v' : \taintt'}$ using the  E-\TR-var rule and $\trg{v:\taintt} \approx \trg{v':\taintt'}$.
	\end{description}
	
	\item[Induction step:]
	There are two cases based on the rule used to derive $\trg{B \triangleright e\bigred v : \taintt}$:
	\begin{description}
		\item[E-\TR-op:]
		Then, $\trg{e} = \trg{e_1 \op e_2}$, $\trg{B \triangleright e_1\bigredt n_1 : \sigma_1}$, $\trg{B \triangleright e_2\bigredt n_2 : \sigma_2}$, $\trg{v} = [\trg{n_1\op n_2}]$, and $\trg{\sigma} = \trg{\sigma_1}\sqcup\trg{\sigma_2}$.
		From  $\trg{B} \approx \trg{B'}$, $\trg{B \triangleright e_1\bigredt n_1 : \sigma_1}$, $\trg{B \triangleright e_2\bigredt n_2 : \sigma_2}$, and the induction hypothesis, we get that there are $\trg{n_1'}$, $\trg{n_2}'$, $\trg{\sigma_1}'$, $\trg{\sigma_2'}$ such that $\trg{B' \triangleright e_1\bigredt n_1' : \sigma_1'}$, $\trg{B' \triangleright e_2\bigredt n_2' : \sigma_2'}$, $\trg{n_1' : \sigma_1'} \approx \trg{n_1 : \sigma_1}$ and $\trg{n_2 : \sigma_2} \approx \trg{n_2' : \sigma_2'}$.
		Hence, we can apply the E-\TR-rule to derive $\trg{B' \triangleright e\bigred v' : \taintt'}$, where $\trg{v'} = [\trg{n_1'\op n_2'}]$, and $\trg{\sigma'} = \trg{\sigma_1'}\sqcup\trg{\sigma_2'}$.
		There are two cases:
		\begin{description}
			\item[$\trg{\sigma} = \trg{\safeta}$:]
			Then, $\trg{\sigma_1} = \trg{\safeta}$ and $\trg{\sigma_2} = \trg{\safeta}$.
			From this, $\trg{n_1' : \sigma_1'} \approx \trg{n_1 : \sigma_1}$, and $\trg{n_2 : \sigma_2} \approx \trg{n_2' : \sigma_2'}$, we get $\trg{v_1} = \trg{v_1'}$, $\trg{v_2} = \trg{v_2'}$, $\trg{\sigma_1'} = \trg{\safeta}$, and $\trg{\sigma_2'} = \trg{\safeta}$.
			Hence, $[\trg{n_1\op n_2}] = [\trg{n_1'\op n_2'}]$ and $\trg{\sigma'} = \trg{\safeta}$.
			Thus, $\trg{v:\sigma} = \trg{v' : \sigma'}$ and thus $\trg{v:\taintt} \approx \trg{v':\taintt'}$.
			
			\item[$\trg{\sigma} = \trg{\unta}$:]
			Then, $\trg{\sigma_1} = \trg{\unta} \vee \trg{\sigma_2} = \trg{\unta}$. 
			From this, $\trg{n_1' : \sigma_1'} \approx \trg{n_1 : \sigma_1}$, and $\trg{n_2 : \sigma_2} \approx \trg{n_2' : \sigma_2'}$, we also get that $\trg{\sigma_1'} = \trg{\unta} \vee \trg{\sigma_2'} = \trg{\unta}$. 
			Hence, $\trg{\sigma'} = \trg{\unta}$ as well.
			Therefore, $\trg{v:\taintt} \approx \trg{v':\taintt'}$ holds.

		\end{description}

		\item[E-\TR-comparison:]
		The proof of this case is similar to the case for the E-\TR-op rule.
	\end{description}
\end{description}
This completes the proof.
\end{proof}

\subsubsection{Non-speculative projection $\nspecProject{\cdot}$}

\begin{lemma}[Properties of non-speculative projection $\nspecProject{\cdot}$]\label{lemma:non-speculative-projection}
	\begin{align*}
		\forall \trg{P}, \trg{\Sigma}, \trg{\Sigma'}, \trg{\lambda}, \trg{\lambda'}, \trg{n}.\
			& \text{if } \trg{({0},\SInit{P}) \Xtot{\lambda} ({n},\Sigma)}, \trg{\Sigma \xltot{\lambda'} \Sigma'},
				\safe{\trg{\Sigma'}}\\
			& \text{then } \trg{\nspecProject{(\lambda \cdot \lambda')}} = \trg{\nspecProject{\lambda} \cdot \lambda'} \\
			& \text{if } \trg{({0},\SInit{P}) \Xtot{\lambda} ({n},\Sigma)}, \trg{\Sigma \xltot{\lambda'} \Sigma'},
				\unsafe{\trg{\Sigma'}}\\
			& \text{then } \trg{\nspecProject{(\lambda \cdot \lambda')}} = \trg{\nspecProject{\lambda} }
	\end{align*}
\end{lemma}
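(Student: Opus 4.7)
My plan is to prove both clauses simultaneously by establishing a tight correspondence between the speculation stack of $\trg{\Sigma}$ and the internal counter of the $\nspecProject{\cdot}$ definition. Specifically, define $d(\trg{\Sigma})$ as the height of $\trg{\Sigma}$'s speculation stack minus one (so the initial state has depth $0$), and let $c(\trg{\lambda})$ denote the final value of the counter when $\nspecProject{(\trg{\lambda},0)}$ is computed by threading the counter through every rule. The core invariant I would prove is that for every trace $\trg{\lambda}$ with $\trg{(0,\SInit{P}) \Xtot{\lambda} (n,\Sigma)}$, we have $d(\trg{\Sigma}) = c(\trg{\lambda})$, and hence $\safe{\trg{\Sigma}}$ iff $c(\trg{\lambda}) = 0$ while $\unsafe{\trg{\Sigma}}$ iff $c(\trg{\lambda}) > 0$. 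This works because the only labels on the trace that change the stack height are $\trg{\ifl{v}}$ (push, via E-\TR-speculate-if) and $\trg{\rollbl}$ (pop, via E-\TR-speculate-rollback), and these match exactly the increment/decrement the counter undergoes in the definition of $\nspecProject{\cdot}$. Crucially, attacker-side $\trg{\ifl{v}}$ actions never appear on the trace (they are filtered by \Cref{tr:eut-tr-sin}), so every $\trg{\ifl{v}}$ on $\trg{\lambda}$ genuinely corresponds to a push in the speculative semantics.

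\textbf{Step 1.} Establish the invariant by induction on the length of the derivation $\trg{(0,\SInit{P}) \Xtot{\lambda} (n,\Sigma)}$. The base case ($\trg{\lambda}=\trg{\epsilon}$, $\trg{\Sigma}=\SInit{P}$) is immediate since $\SInit{P}$ has stack depth $0$ and $c(\trg{\epsilon})=0$. For the induction step I case-split on whether the last step of $\Xtot{}$ is E-\TR-silent, E-\TR-single or E-\TR-init, and then on the single-step label produced by $\xltot{}$, checking that both sides of the invariant move in lockstep.

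\textbf{Step 2.} With the invariant in hand, conclude the lemma by case analysis on $\trg{\lambda'}$ and on the safety status of $\trg{\Sigma'}$. For $\trg{\lambda'} = \trg{\epsilon}$ both conclusions are immediate. For $\trg{\lambda'}$ a call/return or a non-$\trg{\ifl{v}}$/$\trg{\rollbl}$ microarchitectural action, the stack height is unchanged, so $\safe{\trg{\Sigma'}}$ forces $\safe{\trg{\Sigma}}$ (counter $=0$ throughout, and by the appropriate $\nspecProject{(\cdot,0)}$ rule the action is appended) while $\unsafe{\trg{\Sigma'}}$ forces $\unsafe{\trg{\Sigma}}$ (counter $> 0$, and the $\nspecProject{(\cdot,n+1)}$ rule discards the action). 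For $\trg{\lambda'} = \trg{\ifl{v}^\sigma}$, the push makes $\trg{\Sigma'}$ unsafe regardless of $\trg{\Sigma}$, and the invariant guarantees the counter at the end of $\trg{\lambda}$ matches the old depth, so the appropriate increment rule of $\nspecProject{\cdot}$ yields the claimed equation. For $\trg{\lambda'}= \trg{\rollbl}$, the pop decrements depth and the decrement rule in $\nspecProject{\cdot}$ mirrors this exactly.

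\textbf{Main obstacle.} The delicate point is the boundary behaviour at counter $0$ versus counter $\geq 1$, specifically for $\trg{\ifl{v}}$ and $\trg{\rollbl}$ labels, where the definition of $\nspecProject{\cdot}$ treats these asymmetrically (an $\trg{\ifl{v}}$ is kept \emph{and} increments, while a $\trg{\rollbl}$ is dropped \emph{and} decrements). I expect that matching the lemma's conclusion pointwise in these cases will require leaning on the semantic invariant that rollbacks on the trace only arise from previous unmatched $\trg{\ifl{v}}$s in the component, so that the counter is always $\geq 1$ exactly when a $\trg{\rollbl}$ is observed and $=0$ exactly when a top-level $\trg{\ifl{v}}$ is observed. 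This structural fact, which falls out of the invariant in Step 1, is what prevents the two clauses from interfering and is the crux of the argument.
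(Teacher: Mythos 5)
Your proposal is sound and, in substance, it supplies the argument the paper elides: the paper's entire proof of this lemma is the single sentence ``The lemma follows by inspection of the semantics and of the non-speculative projection,'' so there is no detailed paper proof to diverge from. Your invariant $d(\trg{\Sigma}) = c(\trg{\lambda})$ --- speculation-stack depth equals the counter threaded through $\nspecProject{\cdot}$ --- is exactly the structural fact that ``inspection'' has to verify, and your identification of the crux is right: a $\trg{\rollbl}$ can only be observed when the counter is already $\geq 1$, and component-level $\trg{\ifl{v}}$ actions are the only pushes that reach the trace, because attacker branches neither push a speculation instance nor survive the trace filter of \Cref{tr:eut-tr-sin}.

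One concrete warning for your Step 1. The appendix definition of the projection contains the clause $\nspecproj{(\acac{^\sigma} \cdot \trac{}, n+1)} = \nspecproj{(\trac{}, 0)}$, i.e., a call/return action encountered at a positive counter is dropped \emph{and resets the counter to zero}. The speculative semantics does allow a cross-boundary call or return to be executed, and recorded on the trace, while the speculation stack has depth $\geq 2$ (it is an ordinary statement under \Cref{tr:et-sp-act}, and \Cref{tr:eut-tr-sin} keeps boundary-crossing actions). Read literally, that clause breaks your invariant: after such an action the counter is $0$ while the stack depth is unchanged, so every subsequent speculative observation would be \emph{kept} by the projection, which would also falsify the lemma's second conclusion on longer traces. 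This is almost certainly a typo for $\nspecproj{(\trac{}, n+1)}$ --- the only reading consistent with the informal description (``remove all sub-strings enclosed between $\com{\ifl{v}}$ and $\com{\rollbl}$'') and with the lemma itself --- but your induction is precisely where the discrepancy surfaces, so you should either repair the definition or add the side condition that no boundary-crossing transfer occurs speculatively. Apart from that, your plan carries through.
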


\begin{proof}
The lemma follows by inspection of the semantics and of the non-speculative projection.	
\end{proof}

\subsubsection{Properties of components}

\begin{lemma}[Only safe values in components]
\label{lemma:trans-semantics-in-components-only-safe-values}
\begin{align*}
\forall \trg{P}, \trg{\Sigma}, \trg{\lambda}, \trg{n}.
	& \text{if } \trg{(0,\SInit{P}) \Xtot{\lambda} (n,\Sigma)}, \trg{\Sigma} \isdef {\trg{(w, \OB{(\Omega,m,\taintt)}\cdot (\OB{F},\OB{I},
	H, \OB{B} \cdot B\triangleright \proc{s}{\OB{f}\cdot f} , n, \taintt')) }}, \trg{f} \in \trg{\OB{I}},\\
	& \text{then } \forall \trg{x}, \trg{v}, \trg{\sigma}.\ \trg{B}(\trg{x})  = \trg{v: \sigma} \rightarrow \trg{\sigma} = \trg{\safeta} 
\end{align*}
\end{lemma}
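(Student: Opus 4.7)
\medskip

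\noindent\textbf{Proof plan.} The statement as given is slightly too weak to go through directly by induction on the length of the derivation $\trg{(0,\SInit{P}) \Xtot{\lambda} (n,\Sigma)}$: when the attacker returns to itself or performs a sequence of internal steps, the top binding $\trg{B}$ came from an earlier attacker frame and we need the inductive hypothesis on \emph{those} frames too. My plan is therefore to strengthen the claim to a conjunction of three invariants that mutually support each other along the reduction:
\begin{enumerate}
\item[(I1)] \emph{Safe public heap:} for every $\trg{n}\mapsto \trg{v}:\trg{\sigma}$ in the current heap with $\trg{n}\geq 0$, we have $\trg{\sigma}=\trg{\safeta}$.
\item[(I2)] \emph{Safe attacker stack:} whenever the currently executing $\trg{f}$ satisfies $\trg{f}\in\trg{\OB{I}}$, every binding in the current frame $\trg{B}$ and in every attacker-owned frame below it on $\trg{\OB{B}}$ maps variables only to $\trg{\safeta}$-tagged values; moreover, the program-counter taint of the top speculation instance is $\trg{\safeta}$.
\item[(I3)] \emph{No attacker speculation:} the speculation stack $\trg{\OB{\Phi}}$ has height $1$ whenever the active function is in $\trg{\OB{I}}$.
\end{enumerate}
The original lemma then follows from (I2).

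\medskip

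\noindent The plan is to prove the strengthened statement by induction on the number of steps of $\Xtot{}$ used to derive $\trg{(0,\SInit{P}) \Xtot{\lambda} (n,\Sigma)}$. For the base case, $\SInit{P}$ defines the heap $\trg{H_0}$ so that positive addresses are initialised to $\trg{0}:\trg{\safeta}$ (\Cref{tr:ini-ut}), the single frame is $\trg{x\mapsto 0 :\trg{\safeta}}$, and the active function is $\trg{main}\in\trg{\OB{I}}$, so (I1)--(I3) hold. For the inductive step, I would case split on the rule deriving the last small step of $\xltot{}$, using the combined semantics of \Cref{tr:t-comb}. The cases fall into three groups: (a) non-speculative execution inside attacker code, (b) boundary crossings (call/callback/return/retback), and (c) component-internal steps (including any speculation).

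\medskip

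\noindent For group (a), invariant (I3) together with \Cref{tr:et-sp-if}'s side condition $\trg{f}\notin\trg{\OB{I}}$ guarantees that the attacker never pushes a new speculation instance, so (I3) is preserved. Because the attacker is a valid attacker (\Cref{def:atk}) it cannot contain $\trg{\letreadp{\cdot}{\cdot}{\cdot}}$ or $\trg{\asgnp{\cdot}{\cdot}}$, so the two rules that introduce fresh $\trg{\unta}$ taints (\Cref{tr:tus-rd-com-p-strong} or its weak variant) are never triggered in attacker code. Public reads (\Cref{tr:tus-rd-com}) only produce values with the tag found in the heap, which is $\trg{\safeta}$ by (I1); $\trg{\letin{}{}{}}$, operators, and conditional moves propagate taints via $\sqcup$, so safe inputs yield safe outputs; and public writes (\Cref{tr:tus-up-com}) always store $\trg{\safeta}$ into the heap, which preserves (I1). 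Since the pc-taint remains $\trg{\safeta}$ (no attacker speculation), every label emitted from attacker code is $\trg{\safeta}$, which is consistent with (I2).

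\medskip

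\noindent For group (b), the key observation is that the cross-boundary call rule \Cref{tr:tus-call} explicitly binds the formal parameter to $\trg{\safeta}$ regardless of the taint of the passed expression — this is precisely why the invariant can be re-established when control enters the attacker from the component. Similarly, callback from component to attacker binds the attacker's formal parameter to $\trg{\safeta}$; return/retback merely pop the attacker's own stack, which (I2) has already maintained safe by induction. For group (c), the active function is not in $\trg{\OB{I}}$, so (I2) and (I3) are vacuously relevant; only (I1) is at stake, and as above, the only rule that writes the public heap always stamps it $\trg{\safeta}$. Rollback pops speculation frames without affecting (I1), and when the rollback lands back in attacker code it restores a previously-valid attacker state recorded lower in the stack, for which (I2) already held.

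\medskip

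\noindent The main obstacle I expect is not any individual case but the bookkeeping for (I2) across the full call stack $\trg{\OB{B}}$: a careful characterisation of ``attacker-owned frames'' is needed so that, when control returns from a component function into an attacker frame buried in $\trg{\OB{B}}$, we can still conclude safety of that frame from the induction hypothesis applied at the point it was pushed. I would handle this by decorating $\trg{\OB{f}}$ with its origin (import vs. component-defined) and proving as an auxiliary invariant that the $i$-th frame of $\trg{\OB{B}}$ is attacker-safe iff the $i$-th entry of $\trg{\OB{f}}$ is in $\trg{\OB{I}}$. Everything else is a routine, if tedious, case analysis over the rules of \Cref{sec:trg}.
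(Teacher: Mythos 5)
Your core argument is the same as the paper's: the paper proves this lemma by induction on $\trg{n}$ combined with exactly your two key observations, namely that valid attackers can only touch the public heap (which is always tagged $\trg{\safeta}$) and that the cross-boundary call/callback rules tag the formal parameter $\trg{\safeta}$ regardless of the argument's taint. Your write-up is a much more explicit elaboration of that sketch, and the heap invariant (I1) plus the binding-safety part of (I2) are the right strengthening to make the induction close.

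However, two of the auxiliary invariants you propose are false and would make the induction break as literally stated. Invariant (I3) claims the speculation stack has height $1$ whenever the active function is in $\trg{\OB{I}}$, and (I2) additionally claims the pc taint is $\trg{\safeta}$ in that situation. Neither survives the case where the \emph{component} mis-speculates on a branch (\Cref{tr:et-sp-if}, which pushes a frame tagged $\trg{\unta}$) and the speculatively executed branch then performs a callback into the attacker: \Cref{tr:et-sp-act} applies to call statements during speculation, so you reach a configuration whose active function is in $\trg{\OB{I}}$ while the stack has height greater than one and the top pc taint is $\trg{\unta}$. You cannot re-establish (I3) in that inductive step. The fix is simply to drop (I3) and the pc clause of (I2) --- neither is needed for the conclusion. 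The binding-safety claim still goes through in the speculative-callback case because the callback rule binds the attacker's parameter to $\trg{\safeta}$ and a valid attacker (\Cref{def:atk}) contains no private-heap instructions, so even under an $\trg{\unta}$ program counter no $\trg{\unta}$-tagged value can enter an attacker frame; note that the lemma's conclusion concerns only the taints stored in $\trg{B}$, not the taints of emitted labels, so the unsafe pc is harmless here. Your side conditions for group (a) also do not need (I3): the rules \Cref{tr:et-sp-if} and the attacker-if rule already guarantee attacker branches never push speculation frames, independently of any invariant.
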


\begin{proof}
By induction on $\trg{n}$ combined with (1)  contexts can only write and read information 	from the public heap which is always labelled $\trg{\safeta}$, and (2) E-\TR-call and E-\TR-callback rules tag the variable $\trg{x}$ with $\trg{\safeta}$.
\end{proof}

\subsubsection{Weak Variants}

\begin{theorem}[\ssdef{} implies \snidef (weak)]\label{thm:ss-impl-sni-weak}
\begin{align*}
	\forall \trg{P} \in \weak{\TR}.
	\text{ if } \vdash \trg{P} : \ss(\weak{\TR}),  
	\text{ then } \vdash \trg{P} : \sni(\weak{\TR})
\end{align*}
\end{theorem}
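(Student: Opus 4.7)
The plan is to mirror the proof of \Cref{thm:ss-impl-sni} but reinterpreted under the weak taint-tracking discipline of \trg{\wTR}. The overall shape is a proof by contradiction: assume $\vdash \trg{P} : \ss(\trg{\wTR})$ yet $\nvdash \trg{P} : \sni(\trg{\wTR})$, obtain a witness $\trg{P'}$ with $\trg{P} \loweq \trg{P'}$ and two traces $\trat{_1} \in \behavt{\SInit{\trg{P}}}$, $\trat{_2} \in \behavt{\SInit{\trg{P'}}}$ such that $\nspecproj{\trat{_1}} = \nspecproj{\trat{_2}}$ but $\trat{_1} \neq \trat{_2}$. As in the strong proof, by \Cref{lemma:low-equivalent-programs-low-equivalent-states} (which is independent of the taint mechanism and carries over verbatim) the two initial configurations are low-equivalent. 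From there we take the longest common prefix of the traces and split into the three cases of \Cref{lemma:trans-semantics-preserve-low-equivalence-or-different-non-spec-projections} reused in the weak setting.

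The key adaptation is to re-prove the one-step simulation lemmas (\Cref{lemma:safe-obs-preserve-low-equivalence-for-spec-semantics,lemma:same-obs-preserve-low-equivalence-for-spec-semantics,lemma:no-obs-preserve-low-equivalence-spec-semantics,lemma:safe-obs-preserve-low-equivalence-for-non-spec-semantics,lemma:same-obs-preserve-low-equivalence-for-non-spec-semantics}) for \trg{\wTR}. All cases but \textsf{E-\TR-read-prv} are identical, because only that rule differs between \TR and \trg{\wTR}. For \textsf{E-\TR-read-prv-weak}, the observation emitted is $\trg{\rdl{-\abs{n} \mapsto v}^{\sigma \lub \sigma_{pc}}}$ and the bound variable receives taint $\sigma' \lub \sigma_{pc}$ instead of \trg{\unta}. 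In the ``same observation'' lemma, equal labels now force equality of \emph{both} address and read value (not just address), which is exactly what is required to re-establish binding low-equivalence $\trg{B \cup x \mapsto v : \sigma'' \approx B' \cup x \mapsto v' : \sigma''' }$ in the post-state. In the ``safe observation'' lemma, a safe read-prv action has $\sigma \lub \sigma_{pc} = \trg{\safeta}$, so both pc and address are safe, the heaps agree on that cell (by $\trg{H \approx H'}$), and the matching rule fires on the mirrored state producing the same label. The weak lemma for \Cref{lemma:low-equiv-bindings-low-equiv-results} transfers unchanged.

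The main new subtlety is in the $\unsafe{\trg{\Sigma_0}}$ branch of the speculative safe-observation lemma. Under strong taint, a safe label during unsafe pc could only arise from operations on safe operands; under weak taint, it may additionally arise from a private read whose value happens to be safe-tagged because it was loaded non-speculatively and then stored back. This does not break the argument: safety of the label still forces equality of the accessed address (since $\sigma \lub \sigma_{pc} = \safeta$ requires $\sigma = \safeta$ which, for a read-prv, comes from the address taint), and thus the two low-equivalent heaps return related (and in this case equal-tagged) values. This is the only place where I expect to have to rework the strong argument in any nontrivial way; everything else is essentially bookkeeping.

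Once the simulation lemmas are in place for \trg{\wTR}, the top-level induction on the length of the common prefix proceeds exactly as in the proof of \Cref{thm:ss-impl-sni}: the three subcases (neither side can step, one side cannot step, both step with possibly different labels) each lead to a contradiction either with $\nspecproj{\trat{_1}} = \nspecproj{\trat{_2}}$ (via \Cref{lemma:non-speculative-projection}, whose proof is by inspection and is insensitive to the read-prv refinement) or with $\trat{_1} \neq \trat{_2}$. This establishes the theorem.
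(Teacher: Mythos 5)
Your proposal is correct and takes essentially the same approach as the paper: the paper likewise reduces the weak case to the strong proof and isolates the single new obligation---showing that a non-speculative private load either yields distinct non-speculative observations or loads the same value and thus preserves indistinguishability---which it discharges exactly as you do, by noting that the weak read action discloses both the address and the loaded value. Your extra detail on the unsafe-pc branch of the simulation lemmas is a faithful elaboration of the same argument rather than a different route.
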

\begin{proof}
The proof of this result is similar to the one of \Thmref{thm:ss-impl-sni}.
The key differences is that all data loaded non-speculatively is tagged as $\trg{\safeta}$ (rather than $\trg{\unta}$).
Therefore, one has to show that executing a non-speculative memory load preserves indistinguishability or results in different non-speculative observations.
This immediately follows from the fact that actions generated by non-speculative memory loads disclose both the memory address and the loaded value (so either we have different non-speculative observations or we load the same value from memory and therefore we preserve indistinguishability). 
\end{proof}

\subsection{RSS implies RSNI}

\begin{corollary}[\rssdef{} implies \rsnidef{}]\label{corollary:rss-impl-rsni}
\begin{align*}
	\forall \trg{P} \in \TR.
	\text{ if } \vdash \trg{P} : \rss(\TR)
	\text{ then } \vdash \trg{P} :\rsni(\TR)
\end{align*}
\end{corollary}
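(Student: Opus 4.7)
The plan is to observe that this corollary is essentially a direct lifting of \Thmref{thm:ss-impl-sni} through the universal quantification over attackers that both \rss and \rsni share. Since \rss and \rsni only differ from \ss and \sni by requiring the respective property to hold for the program linked against every valid attacker, and since the pointwise implication \ss{}(\TR) $\Rightarrow$ \sni{}(\TR) has already been established for whole programs, the corollary should follow by a one-line quantifier shuffling argument.

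Concretely, the plan is as follows. First, I would fix an arbitrary $\trg{P} \in \TR$ and assume $\vdash \trg{P} : \rss(\TR)$. Unfolding \Cref{def:rdss}, this means that for every attacker $\ctxt{}$ with $\vdash \ctxt{} : \com{atk}$, the whole program $\trg{\ctxt{}\hole{P}}$ satisfies $\vdash \trg{\ctxt{}\hole{P}} : \ss(\TR)$. Next, I would fix an arbitrary such attacker $\ctxt{}$ and apply \Thmref{thm:ss-impl-sni} to the whole program $\trg{\ctxt{}\hole{P}}$, which is a target program in \TR. This yields $\vdash \trg{\ctxt{}\hole{P}} : \sni(\TR)$. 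Since $\ctxt{}$ was arbitrary, unfolding \Cref{def:rsni} gives $\vdash \trg{P} : \rsni(\TR)$, completing the proof.

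There is no real obstacle here beyond the bookkeeping of unfolding and re-folding the two definitions; the heavy lifting has already been done inside \Thmref{thm:ss-impl-sni}, which establishes the per-whole-program implication using the low-equivalence and safe/unsafe-state machinery in \Cref{lemma:trans-semantics-preserve-low-equivalence-or-different-non-spec-projections} and its auxiliaries. One minor check to perform is that the attacker quantified in \rss and \rsni is the same notion (the validity predicate $\vdash \ctxt{} : \com{atk}$ from \Cref{def:atk}), so that the whole program $\trg{\ctxt{}\hole{P}}$ obtained after linking is the same object in both definitions; this is immediate from the identical formulation in \Cref{def:rdss} and \Cref{def:rsni}. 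An analogous argument, using \Thmref{thm:ss-impl-sni-weak} instead of \Thmref{thm:ss-impl-sni}, yields the corresponding corollary for the weak languages \src{\wSR}-\trg{\wTR}, though that variant is not stated here.
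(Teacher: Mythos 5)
Your proposal is correct and matches the paper's own proof of \cref{corollary:rss-impl-rsni} essentially verbatim: both fix an arbitrary valid attacker, invoke \cref{thm:ss-impl-sni} on the linked whole program, and re-fold the robust definitions (the paper merely makes the vacuous case of an invalid attacker explicit). No gaps.
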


\begin{proof}
Let $\trg{P}$ be an arbitrary program in $\TR$ such that $\vdash \trg{P} :\rss$ holds.
Let $\trg{A}$ be an arbitrary context. 
Then, there are two cases:
\begin{enumerate}
\item [$ \vdash \ctxt{} : \com{atk}$:]
From $\vdash \trg{P}:\rss$, it follows that $\vdash\trg{\ctxt{}\hole{P}}:\ss(\TR)$.
By applying \cref{thm:ss-impl-sni}, we have $\vdash\trg{\ctxt{}\hole{P}}:\sni(\TR)$.
Hence, $\vdash \ctxt{} : \com{atk} \Rightarrow \vdash\trg{\ctxt{}\hole{P}}:\sni(\TR)$ holds.
\item [$ \not \vdash \ctxt{} : \com{atk}$:]
Then, $\vdash \ctxt{} : \com{atk} \Rightarrow \vdash\trg{\ctxt{}\hole{P}}:\sni(\TR)$ trivially holds.
\end{enumerate}
Since $\vdash \ctxt{} : \com{atk} \Rightarrow \vdash\trg{\ctxt{}\hole{P}}:\sni(\TR)$ holds for an arbitrary context $\ctxt{}$ and program $\trg{P}$, we have that $\forall \ctxt{}. \vdash \ctxt{} : \com{atk} \Rightarrow \vdash\trg{\ctxt{}\hole{P}}:\sni(\TR)$ holds as well.
Hence, $\vdash \trg{P} :\rsni(\TR)$ holds.
This completes the proof of our corollary.
\end{proof}

\begin{corollary}[\rssdef{} implies \rsnidef{}]\label{corollary:rss-impl-rsni-weak}
\begin{align*}
	\forall \trg{P} \in \weak{\TR}.
	\text{ if } \vdash \trg{P} : \rss(\weak{\TR})
	\text{ then } \vdash \trg{P} :\rsni(\weak{\TR})
\end{align*}
\end{corollary}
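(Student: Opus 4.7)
The plan is to mirror exactly the structure of the proof of Corollary~\ref{corollary:rss-impl-rsni} (the strong variant), substituting the weak versions of the underlying definitions and theorems wherever the strong ones were used. Since \rss and \rsni are both defined by universal quantification over valid attackers, with the inner condition being \ss (respectively \sni) of the linked whole program, the corollary reduces entirely to the pointwise implication at the whole-program level, which is precisely the content of \Thmref{thm:ss-impl-sni-weak}.

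First, I would fix an arbitrary $\trg{P} \in \weak{\TR}$ with $\vdash \trg{P} : \rss(\weak{\TR})$, and an arbitrary target-level context $\ctxt{}$. The goal is to show $\vdash \ctxt{} : \com{atk} \Rightarrow \vdash \trg{\ctxt{}\hole{P}} : \sni(\weak{\TR})$. I split on whether $\ctxt{}$ is a valid attacker: if $\not\vdash \ctxt{} : \com{atk}$, the implication is vacuously true; if $\vdash \ctxt{} : \com{atk}$, then unfolding $\vdash \trg{P} : \rss(\weak{\TR})$ via Definition~\ref{def:rdss} yields $\vdash \trg{\ctxt{}\hole{P}} : \ss(\weak{\TR})$. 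Applying \Thmref{thm:ss-impl-sni-weak} to this whole program gives $\vdash \trg{\ctxt{}\hole{P}} : \sni(\weak{\TR})$, discharging the implication. Since $\ctxt{}$ was arbitrary, universally quantifying recovers $\vdash \trg{P} : \rsni(\weak{\TR})$ by Definition~\ref{def:rsni}.

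I do not expect any genuine obstacle here, because all the semantic content has already been isolated in \Thmref{thm:ss-impl-sni-weak}, whose proof (per the excerpt) itself reuses the machinery of the strong case but exploits the key extra fact that in $\weak{\TR}$ non-speculative private reads emit actions of the form $\trg{\rdl{n \mapsto v}}$ that disclose the loaded value, so low-equivalent configurations executing such a read either already produce different non-speculative projections or load identical values and remain low-equivalent. The only mild bookkeeping is making sure the case split on $\vdash \ctxt{} : \com{atk}$ is handled explicitly, exactly as in Corollary~\ref{corollary:rss-impl-rsni}, to avoid applying \Thmref{thm:ss-impl-sni-weak} to non-attacker contexts (for which \ss of the composition need not hold). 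The resulting proof should be a four-line adaptation of the strong corollary's proof, with \rss, \ss, \sni, \rsni and \Thmref{thm:ss-impl-sni} all replaced by their weak counterparts.
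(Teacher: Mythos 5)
Your proposal is correct and matches the paper's own proof exactly: the paper proves the weak corollary by noting it is identical to the strong one (Corollary~\ref{corollary:rss-impl-rsni}) with \Cref{thm:ss-impl-sni-weak} substituted for \Cref{thm:ss-impl-sni}, and the strong corollary's proof is precisely your case split on $\vdash \ctxt{} : \com{atk}$ followed by unfolding \rss{} and applying the whole-program implication. Nothing further is needed.
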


\begin{proof}
The proof is similar to the one of \cref{corollary:rss-impl-rsni} except that we use \cref{thm:ss-impl-sni-weak} instead of \cref{thm:ss-impl-sni}.
\end{proof}

\subsection{SNI does not imply SS}

\begin{theorem}[\snidef{} not imply \ss]\label{thm:sni-not-impl-ss}
\begin{align*}
	\exists \trg{P} \in \TR. \vdash \trg{P} :\sni(\TR) \wedge \not\vdash \trg{P}:\ss(\TR) \\
	\exists \trg{P} \in \weak{\TR}. \vdash \trg{P} :\sni(\weak{\TR}) \wedge \not\vdash \trg{P}:\ss(\weak{\TR})
\end{align*}
\end{theorem}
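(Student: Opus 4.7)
The plan is to exhibit, for each of $\TR$ and $\wTR$, a concrete whole program that is $\sni$ but not $\ss$. A natural candidate is the whole-program analogue of the witness in \Cref{lis:rel-rss-rsni}: let $\trg{P}$ consist of an attacker $\mi{main}$ that invokes a component function $\mi{get\_nc}(0)$ whose body is $\trg{\ifzte{y<\mi{size}}{B[A[y]]}{B[A[y]]}}$, with array $A$ resident in the private heap and array $B$ in the public heap.

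For $\sni$, the key observation is that the two branches of the conditional are syntactically identical, so the always-mispredict semantics performs exactly the same two memory reads (of $A[y]$ and of $B[A[y]]$) as the non-speculative semantics, regardless of the concrete value of $y<\mi{size}$. Consequently, for any low-equivalent $\trg{P'}$, the non-speculative projection already exposes the address $n_B+A[y]$ through the $\rdl{\cdot}$ action of the $B$-read, which fully determines $A[y]$. Hence whenever the two non-speculative projections coincide, $A[y]$ must agree on both sides, forcing the speculatively-executed actions and the surrounding rollback to coincide as well, so the full traces match. The same argument works verbatim for both $\TR$ and $\wTR$.

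For $\ss$, I would exhibit a $\unta$-tagged action in a trace of $\trg{P}$. When $\mi{size}$ is not cached, \Cref{tr:et-sp-if} triggers a speculative instance executing one of the (identical) branches, which performs a private read of $A[y]$ followed by a public read at address $n_B+A[y]$. Under the strong taint tracking (\Cref{tr:tus-rd-com-p-strong}), the variable holding $A[y]$ is unconditionally tagged $\unta$, so the address of the subsequent $B$-read inherits $\unta$ and the emitted $\rdl{\cdot}$ action is $\unta$-tagged; analogously, under the weak taint tracking (\Cref{tr:tus-rd-com-p-weak}), the combination of the private-heap taint with the unsafe speculative program counter taint produces a $\unta$-tagged $\rdl{\cdot}$ action in the speculative segment. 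In either language, this single action witnesses $\nvdash \trg{P} : \ss$.

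The main obstacle lies in the $\sni$ step, since it requires formally showing that the speculatively-executed actions together with the rollback introduce no additional dependence on the private heap beyond what the non-speculative projection already captures. Because the two branches are isomorphic, this reduces to a case analysis on the always-mispredict semantics establishing that the speculative reads coincide pointwise with their non-speculative counterparts; it is conceptually straightforward but requires careful bookkeeping of the speculation stack and of rollback positions within the trace.
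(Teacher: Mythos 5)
Your proposal is correct and follows essentially the same route as the paper: the paper's witness is likewise a conditional whose two branches are identical and perform a private read followed by a dependent read, \sni{} holds because the speculative segment merely repeats what the non-speculative projection already reveals (so agreement of projections forces agreement of the dependent address), and \ss{} fails because the second, dependent read is emitted with an \unta{} tag while speculating, under both the strong and the weak taint tracking. The only side condition worth making explicit is that the global speculation window $\trgb{\omega}$ must be large enough for that second read to actually occur speculatively, which the paper handles by fixing, e.g., $\trg{w}=\trg{10}$.
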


\begin{proof}
Consider the following program $\trg{P}$:

	\trg{
					\begin{aligned}[c]
						&
						\ifztet{
							\trg{(y < size)}
						}{
							\\
							&\ \
							\begin{aligned}
								&
								\trg{\letreadpt{x_a}{0+y}{}}
								\\
								&
								\trg{\letreadpt{x_b}{4+x_a}{}}
								\\
								&
								\trg{\letint{temp}{x_b}{\skipt}}
							\end{aligned}
							\\
							&
						}{
							\\
							&\ \
							\begin{aligned}
								&
								\trg{\letreadpt{x_a}{0+y}{}}
								\\
								&
								\trg{\letreadpt{x_b}{4+x_a}{}}
								\\
								&
								\trg{\letint{temp}{x_b}{\skipt}}
							\end{aligned}
							\\
							&						}		
					\end{aligned}
				}

The above program clearly satisfies speculative non-interference for, e.g., 	$\trg{w} = \trg{10}$ since the program leaks the same information under the speculative and non-speculative semantics.
However, the program violates speculative safety (see the step-by-step example in  \cref{sec:example}).	
Additionally, observe that the program satisfy $\sni(\weak{\TR})$ but still violates $\ss(\weak{\TR})$. 	
\end{proof}

\subsection{RSNI does not imply RSS}

\begin{corollary}[\rsnidef{} not imply \rss]\label{corollary:rsni-not-impl-rss}
\begin{align*}
	\exists \trg{P} \in \TR. \vdash \trg{P} :\rsni(\TR) \wedge \not\vdash \trg{P}:\rss(\TR) \\
		\exists \trg{P} \in \weak{\TR}. \vdash \trg{P} :\rsni(\weak{\TR}) \wedge \not\vdash \trg{P}:\rss(\weak{\TR})
\end{align*}
\end{corollary}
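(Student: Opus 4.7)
The plan is to lift the witness from \Cref{thm:sni-not-impl-ss} to the robust setting by showing that essentially the same counterexample program, when viewed as a component rather than a whole program, separates \rsni from \rss. Concretely, I would take as witness the component $\trg{P}$ consisting of the function \lstinline{get_nc} from \Cref{lis:rel-rss-rsni} (together with an import for \lstinline{main}), whose two branches perform identical memory accesses \trg{B[A[y]]}. The goal is then to establish the two clauses $\vdash \trg{P} : \rsni(\TR)$ and $\nvdash \trg{P} : \rss(\TR)$, and analogously in the weak setting.

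For the positive direction, $\vdash \trg{P} : \rsni(\TR)$, I would argue that for every valid attacker $\trg{A}$ with $\vdash \trg{A} : \com{atk}$, the whole program $\trg{A\hole{P}}$ satisfies \sni{}(\TR). The key observation is that inside $\trg{P}$ the then- and else-branches are syntactically interchangeable with respect to the observations they produce: every speculative execution of \lstinline{get_nc} generates exactly the \emph{same} sequence of $\mu$arch actions as the corresponding non-speculative execution of the opposite branch would. Thus for any low-equivalent pair $\trg{A\hole{P}} \loweq \trg{A\hole{P'}}$ whose non-speculative projections agree, the speculatively-executed fragments on the two sides produce identical actions as well (since the speculative observations are already a prefix of what the non-speculative projection of the other branch would contain). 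This is a mild generalisation of the whole-program argument used in \Cref{thm:sni-not-impl-ss}, lifted pointwise over all attackers.

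For the negative direction, $\nvdash \trg{P} : \rss(\TR)$, I only need to exhibit one particular attacker $\trg{A}$ such that $\trg{A\hole{P}}$ violates \ss{}(\TR). Take the attacker whose \trg{main} simply invokes \lstinline{get_nc} with some concrete argument $\trg{y}$; by \Cref{ex:rss} (and the detailed derivation in \Cref{ex:spv1}) this produces, during speculation, a trace containing an action $\trg{(\rdl{n_B + v_A^y})^{\unta}}$ because $\trg{v_A^y}$ is read from the private heap and the program counter is tainted $\trg{\unta}$ inside the speculative window. A single \trg{\unta}-tagged action suffices to refute $\vdash \trg{A\hole{P}} : \ss(\TR)$, hence $\vdash \trg{P} : \rss(\TR)$ fails.

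For the weak variant $\rsni(\weak{\TR})$ vs $\rss(\weak{\TR})$, exactly the same component and attacker work: the \rsni-argument only relies on the branches being observation-equivalent, which is independent of the chosen taint-tracking mechanism, and on the \rss-side the speculative read of $\trg{B[A[y]]}$ still generates an unsafe action because the address depends on $\trg{v_A^y}$ and the pc-taint is raised to $\trg{\unta}$ by speculation (\Cref{tr:tt-sp-if}). The main (minor) obstacle is the \rsni-clause: one has to be careful that the argument ``both branches produce the same observations'' is robust against attacker-induced state in which \lstinline{size} or the contents of \lstinline{A} differ between the two low-equivalent worlds. This is handled by noting that the attacker cannot write the private heap, and that the observations produced speculatively on one side coincide with those produced non-speculatively on the corresponding opposite branch on the other side, so any putative distinguishing action already appears in the non-speculative projections, contradicting their assumed equality.
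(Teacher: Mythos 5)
Your proposal is correct and follows essentially the same route as the paper: the paper's proof is precisely to port the whole-program counterexample of \Cref{thm:sni-not-impl-ss} (the two-identical-branches program of \Cref{lis:rel-rss-rsni}) to the robust setting by placing it in a component and letting the context call it. Your additional elaboration of why the \rsni{} clause holds uniformly over all valid attackers (the speculative observations of either branch are already contained in the non-speculative projection of the other, and attackers cannot touch the private heap) is exactly the justification the paper leaves implicit.
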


\begin{proof}
The counter-example provided in \cref{thm:sni-not-impl-ss} can be directly ported to the robust setting (by moving the code in the component and having a context simply calling the component).	
\end{proof}

\newpage
\section{Compiler Criteria and their Implications}\label{sec:compcrit}
\subsection{Strong Criteria for Secure Compilers}
\begin{definition}[Robust Speculative Safety-Preserving Compiler (\rdsspdef)]\label{def:rdssp}
	\begin{align*}
		\vdash \comp{\cdot} : \strong{\rdssp} \isdef&\ 
		\forall\src{P}\ldotp 
		\text{ if } \vdash\src{P} : \rss(\SR)
		\text{ then } \vdash\comp{\src{P}} : \rss(\TR)
	\end{align*}
\end{definition}
This gets expanded to:
\begin{align*}
	\forall\src{P}\ldotp 
	\text{ if }&\ \forall \ctxs{}\ldotp \forall \tras{^\sigma}\in\behavs{\ctxs{}\hole{P}}\ldotp \forall \acas{^\sigma}\in\tras{^\sigma}\ldotp \src{\sigma}\equiv\src{\safeta}
	\\
	\text{ then }&\ \forall \ctxt{}\ldotp \forall \trat{^{\taintt}}\in\behavt{\ctxt{}\hole{\comp{\src{P}}}}\ldotp \forall \acat{^{\taintt}}\in\trat{^{\taintt}}\ldotp \taintt\equiv\trg{\safeta}
\end{align*}

We say that a source and a target trace are related ($\rels$) if the latter contains the source trace plus interleavings of only safe actions.
The trace relation relies on a relation on actions which in turn relies on a relation on values and heaps.

The last two are compiler-dependent, so they are presented later, for lfence in \Cref{sec:sim-code-fence} and for slh in \Cref{sec:rels-slh}.

\mytoprule{\text{Trace relation} \reldef }
\begin{center}
	\typerule{Trace-Relation}{
	}{
		\srce\rels\trge
	}{tr-rel-empty}
	\typerule{Trace-Relation-Same-Act}{
		\src{\tras{^\sigma} } \rels \trg{\trat{^{\taintt}} }	
		&
		\src{\alpha^\sigma} \arel \trg{ \acat{^{\taintt}} }
	}{
		\src{\tras{^\sigma} \cdot \alpha^\sigma} \rels \trg{\trat{^{\taintt}} \cdot \acat{^{\taintt}} }
	}{tr-rel-same}
	\typerule{Trace-Relation-Same-Heap}{
		\src{\tras{^\sigma} } \rels \trg{\trat{^{\taintt}} }	
		&
		\src{\delta^\sigma} \arel \trg{ \trgb{\delta}^{\taintt} }
	}{
		\src{\tras{^\sigma} \cdot \delta^\sigma} \rels \trg{\trat{^{\taintt}} \cdot \trgb{\delta}^{\taintt} }
	}{tr-rel-same-h}
	\typerule{Trace-Relation-Safe-Act}{
		\src{\tras{^\sigma} } \rels \trg{\trat{^{\taintt}} }
		&
		\src{\epsilon} \arel \acat{^{\safeta}}
	}{
		\src{\tras{^\sigma}} \rels \trg{\trat{^{\taintt}} \cdot \acat{^{\safeta}} }
	}{tr-rel-safe-a}
	\typerule{Trace-Relation-Safe-Heap}{
		\src{\tras{^\sigma} } \rels \trg{\trat{^{\taintt}} }	
		&
		\src{\epsilon} \arel \trg{\trgb{\delta}^{\safeta}}
	}{
		\src{\tras{^\sigma} } \rels \trg{\trat{^{\taintt}} \cdot \trgb{\delta}^{\safeta} }
	}{tr-rel-safe-h}
	\typerule{Trace-Relation-Rollback}{
		\src{\tras{^\sigma} } \rels \trg{\trat{^{\taintt}} }	
		&
		\src{\epsilon}\arel\trgb{\rollbl}^{\taintt}
	}{
		\src{\tras{^\sigma} } \rels \trg{\trat{^{\taintt}} \cdot \trgb{\rollbl}^{\taintt} }
	}{tr-rel-rollb}

\mytoprule{\text{Action relation} \areldef }

	\typerule{Action Relation - call}{
		\src{f}\equiv\trg{f}
		&
		\src{v}\vrel\trg{v}
		&
		\src{\sigma}\equiv\taintt
	}{
		\src{\clh{f}{v}{H}^\sigma} \arel \trg{\clh{f}{v}{H}^{\taintt}}
	}{ac-rel-cl}
	\typerule{Action Relation - return}{
	}{
		\src{\rth{}{H}^\safeta} \arel \trg{\rth{}{H}^\safeta}
	}{ac-rel-rt}
	\typerule{Action Relation - callback}{
		\src{f}\equiv\trg{f}
		&
		\src{v}\vrel\trg{v}
		&
		\src{\sigma}\equiv\taintt
	}{
		\src{\cbh{f}{v}{H}^\sigma} \arel \trg{\cbh{f}{v}{H}^{\taintt}}
	}{ac-rel-cb}
	\typerule{Action Relation - returnback}{
	}{
		\src{\rbh{}{H}^\safeta} \arel \trg{\rbh{}{H}^\safeta}
	}{ac-rel-rb}
	\typerule{Action Relation - read}{
		\src{n} \vrel \trg{n}
		&
		\src{\sigma}\equiv\taintt
	}{
		\src{\rdl{n}^\sigma} \arel \trg{\rdl{n}^{\taintt}}
	}{ac-rel-rd}
	\typerule{Action Relation - write}{
		\src{n} \vrel \trg{n}
		&
		\src{\sigma}\equiv\taintt
	}{
		\src{\wrl{n}^\sigma} \arel \trg{\wrl{n}^{\taintt}}
	}{ac-rel-wr}
	\typerule{Action Relation - write 2}{
		\src{n} \vrel \trg{n}
		&
		\src{v} \vrel \trg{v}
		&
		\src{\sigma}\equiv\taintt
	}{
		\src{\wrl{n\mapsto v}^\sigma} \arel \trg{\wrl{n\mapsto v}^{\taintt}}
	}{ac-rel-wr2}
	\typerule{Action Relation - if}{
		\src{n} \vrel \trg{n}
		&
		\src{\sigma}\equiv\taintt
	}{
		\src{\ifl{n}^\sigma} \arel \trg{\ifl{n}^{\taintt}}
	}{ac-rel-if}
	\typerule{Action Relation - epsi alpha}{
		\taintt\equiv \trg{\safeta}
	}{
		\src{\epsilon} \arel \acat{^{\taintt}}
	}{ac-rel-ep-al}
	\typerule{Action Relation - epsi heap}{
		\taintt\equiv \trg{\safeta}
	}{
		\src{\epsilon} \arel \trgb{\delta}^{\taintt}
	}{ac-rel-ep-hp}
	\typerule{Action Relation - rlb}{
		\taintt\equiv \trg{\safeta}
	}{
		\src{\epsilon}\arel\trgb{\rollbl}^{\taintt}
	}{ac-rel-rlb}
\end{center}
\botrule

\begin{definition}[Robust Speculative Safety Compilation (\rdssdef)]\label{def:rdss}
	\begin{align*}
		\vdash \comp{\cdot} : \strong{\rdss} \isdef&\
			\forall\src{P}, \ctxt{}, \trat{^{\taintt}},
			\exists \ctxs{}, \tras{^\sigma} \ldotp 
			\\
			\text{ if }&\
			\trg{\ctxt{}\hole{\comp{\src{P}}}} \semt \trat{^{\taintt}}
			\text{ then }
			\src{\ctxs{}\hole{P}} \sems \tras{^\sigma}
			\text{ and }
			\tras{^\sigma}\rels\trat{^{\taintt}}
	\end{align*}
\end{definition}

\begin{theorem}[\rdss implies \rdssp]\label{thm:rdss-impl-rdsp}(\showproof{rdss-impl-rdsp})
With the relation \relref:
\begin{align*}
	\forall \comp{\cdot}
	\text{ if } \vdash\comp{\cdot} : \strong{\rdss}
	\text{ then } \vdash\comp{\cdot} : \strong{\rdssp}
\end{align*}
\end{theorem}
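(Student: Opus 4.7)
\begin{proofsketch}
The plan is to prove the contrapositive direction directly: assume $\vdash \comp{\cdot} : \strong{\rdss}$ (the property-free/backtranslation-style formulation) and derive $\vdash \comp{\cdot} : \strong{\rdssp}$. So fix an arbitrary source component $\src{P}$ with $\vdash \src{P} : \rss(\SR)$; we must show that $\vdash \comp{\src{P}} : \rss(\TR)$, i.e., that for every target attacker $\ctxt{}$ and every trace $\trat{^{\taintt}} \in \behavt{\ctxt{}\hole{\comp{\src{P}}}}$, every action of $\trat{^{\taintt}}$ carries tag $\trg{\safeta}$.

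First I would fix such a target attacker $\ctxt{}$ and trace $\trat{^{\taintt}}$ and invoke the hypothesis $\vdash \comp{\cdot} : \strong{\rdss}$ on them: this yields a source attacker $\ctxs{}$ and a source trace $\tras{^{\sigma}} \in \behavs{\ctxs{}\hole{\src{P}}}$ with $\tras{^{\sigma}} \rels \trat{^{\taintt}}$. Next, since $\vdash \src{P} : \rss(\SR)$ holds \emph{robustly} (i.e., for any attacker, in particular $\ctxs{}$), we get that every action occurring in $\tras{^{\sigma}}$ is tagged $\src{\safeta}$. The remaining task is to transport this fact across the relation $\rels$ to conclude that every action in $\trat{^{\taintt}}$ is tagged $\trg{\safeta}$.

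The key step is thus a structural induction on the derivation of $\tras{^{\sigma}} \rels \trat{^{\taintt}}$ using the rules of \Cref{tr:tr-rel-empty,tr:tr-rel-same,tr:tr-rel-same-h,tr:tr-rel-safe-a,tr:tr-rel-safe-h,tr:tr-rel-rollb}. Inspecting each rule, any target action introduced on the right-hand side is either
\begin{inparaitem}[(i)]
\item paired via \Cref{tr:tr-rel-same,tr:tr-rel-same-h} with a source action through an action relation $\arel$, where the action-relation rules (\Cref{tr:ac-rel-cl,tr:ac-rel-cb,tr:ac-rel-rd,tr:ac-rel-wr,tr:ac-rel-wr2,tr:ac-rel-if}, etc.) force $\src{\sigma} \equiv \taintt$, so a safe source tag transfers directly to the target tag; or
\item introduced unmatched via \Cref{tr:tr-rel-safe-a,tr:tr-rel-safe-h,tr:tr-rel-rlb}, whose premises (\Cref{tr:ac-rel-ep-al,tr:ac-rel-ep-hp,tr:ac-rel-rlb}) explicitly require the target tag to be $\trg{\safeta}$.
\end{inparaitem}
In both cases every target action inherits or is stipulated to have tag $\trg{\safeta}$, finishing the induction and the theorem.

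The main obstacle I anticipate is not the induction itself, which is a routine case analysis, but rather the bookkeeping around \emph{which} source attacker the relation produces and ensuring that $\rss(\SR)$ being robust really gives safety of $\tras{^{\sigma}}$ for \emph{that} specific $\ctxs{}$. This is precisely why \Cref{def:rss} quantifies over all attackers, and it is also why the action-relation rules were designed to equate source and target taints on matched actions and to force $\trg{\safeta}$ on unmatched ones; a slightly weaker trace relation (e.g., one that allowed unmatched unsafe target actions) would break this implication and, dually, would break the converse direction whose proof by backtranslation is the companion of this result.
\end{proofsketch}
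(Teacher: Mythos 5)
Your proposal is correct and follows essentially the same route as the paper's proof: instantiate the property-free criterion to obtain a related source trace, use robust speculative safety of the source to make all source actions safe, and then walk through the rules of the trace/action relation to show that matched target actions inherit the safe taint while unmatched ones are forced to be $\trg{\safeta}$ by construction. The only difference is presentational — the paper argues by contradiction on a hypothetical unsafe target action whereas you argue directly by induction on the relation derivation — but the case analysis and the lemmas invoked are the same.
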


\begin{theorem}[\rdssp implies \rdss]\label{thm:rdssp-impl-rdss}(\showproof{rdssp-impl-rdss})
With the relation \relref:
\begin{align*}
	\forall \comp{\cdot}
	\text{ if } \vdash\comp{\cdot} :\strong{\rdssp}
	\text{ then } \vdash\comp{\cdot} : \strong{\rdss}
\end{align*}
\end{theorem}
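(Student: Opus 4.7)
\textbf{Proof proposal for \Cref{thm:rdssp-impl-rdss}.}
The plan is to exploit the fact that, once $\strong{\rdssp}$ is assumed, every target trace produced by any compiled program must consist entirely of $\trg{\safeta}$-tainted actions; once this is in hand, the trace relation's ability to absorb safe target actions without any source counterpart lets us close the proof with a minimal source witness. No serious backtranslation is required, in contrast to the concrete compiler proofs sketched in \Cref{sec:countermeasures:proving}.

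Fix arbitrary $\src{P}$, $\ctxt{}$, and $\trat{^{\taintt}}$ such that $\trg{\ctxt{}\hole{\comp{\src{P}}}} \semt \trat{^{\taintt}}$. First, \Cref{thm:all-s-rdss} gives $\vdash \src{P} : \rss(\SR)$, and the hypothesis $\vdash \comp{\cdot} : \strong{\rdssp}$ then yields $\vdash \comp{\src{P}} : \rss(\TR)$. Since the very existence of the target execution implies that $\ctxt{}$ is a valid attacker, unfolding the definition of $\rss$ forces every action occurring in $\trat{^{\taintt}}$ to be tagged $\trg{\safeta}$.

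Next, pick any source attacker $\ctxs{}$ that links well with $\src{P}$ --- for instance the trivial one whose $\src{main}$ body immediately returns and whose implementations of $\src{P}$'s imports are no-ops --- and let $\tras{^\sigma}$ be the trace given by $\src{\ctxs{}\hole{P}} \sems \tras{^\sigma}$. For the trivial attacker, $\tras{^\sigma} = \srce$ since all calls and returns stay internal to the attacker and so emit $\src{\epsilon}$ by \Cref{tr:eus-tr-sin}. Prove $\tras{^\sigma} \rels \trat{^{\taintt}}$ by induction on the length of $\trat{^{\taintt}}$: the base case $\srce \rels \trge$ is \Cref{tr:tr-rel-empty}, and in the inductive case the next target action is $\trg{\safeta}$-tagged, so one of \Cref{tr:tr-rel-safe-a}, \Cref{tr:tr-rel-safe-h}, or \Cref{tr:tr-rel-rollb} applies (with the matching action-relation instance among \Cref{tr:ac-rel-ep-al}, \Cref{tr:ac-rel-ep-hp}, \Cref{tr:ac-rel-rlb}), extending the relation without advancing the source trace.

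The main obstacle is the purely technical bookkeeping step of verifying that the trivial source attacker actually yields a terminating source execution with empty trace, given the initial-state conventions and the requirement on the call/return structure at the attacker/component boundary in \Cref{tr:eus-tr-sin}. If this proves awkward, one can instead take $\ctxs{} = \backtr{\ctxt{}}$ via the attacker backtranslation of \Cref{sec:countermeasures:proving}; the same inductive argument still closes, because Step~1 already forces every target action to be safe and the safe-relation rules apply irrespective of which specific source trace is obtained. The argument for $\vdash\comp{\cdot} : \weak{\rdssp} \Rightarrow \vdash\comp{\cdot} : \weak{\rdss}$ is identical, substituting $\src{\wSR}$, $\trg{\wTR}$ and the weak instances of \Cref{thm:all-s-rdss} throughout.
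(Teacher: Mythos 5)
Your core argument is sound, but it takes a genuinely different route from the paper's. The paper proves this direction by contradiction: it assumes the target trace is related to no source trace, case-analyses how the relation $\rels$ can fail, appeals to the universal quantification over source attackers to dismiss the purely structural mismatches (a call matched by a return, calls to different functions, etc.), concludes that the only surviving failure mode is a $\trg{\unta}$-tagged target action, and contradicts that with $\strong{\rdssp}$. You instead argue directly: $\strong{\rdssp}$ together with \Cref{thm:all-s-rdss} forces every action of $\trat{^{\taintt}}$ to be tagged $\trg{\safeta}$, and the safe-absorption rules of the trace relation then let you relate the \emph{empty} source trace to $\trat{^{\taintt}}$, so a trivial source attacker is already an adequate existential witness. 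Your version is cleaner in one important respect: the paper's step "the universal quantification over $\ctxs{}$ rules out the trivial mismatches" is really an unsubstantiated claim that \emph{some} source attacker produces a structurally matching trace, whereas your witness needs no structural matching at all. The price is the bookkeeping you already flag, namely checking that the trivial attacker terminates with an empty trace (it does: its execution never leaves attacker code, so every label is elided by rule E-\SR-single).

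Two caveats. First, your justification that "the very existence of the target execution implies that $\ctxt{}$ is a valid attacker" is not supported by the semantics --- nothing in the plugging or reduction rules prevents an attacker that fails $\vdash \ctxc{} : \com{atk}$ from executing --- so the attacker-validity guard in the definition of $\rss$ is not discharged by that remark. The paper's own proof silently works with the guard-free unfolding of $\strong{\rdssp}$, so this is a shared imprecision rather than a defect specific to your route, but the stated reason is wrong. Second, your fallback of taking $\ctxs{} = \backtr{\ctxt{}}$ does \emph{not} "close with the same inductive argument": the backtranslated attacker in general produces a non-empty source trace containing boundary call/return actions, and the safe-absorption rules never consume a source action ($\src{\alpha^{\sigma}} \rels \trge$ is not derivable), so relating a non-empty source trace forces you through the Same-rules and hence through the full backtranslation simulation. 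Keep the trivial attacker as the primary witness; the fallback is not a drop-in replacement.
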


\begin{theorem}[\rdss and \rdssp are equivalent]\label{thm:rdss-eq-rdsp}(\showproof{rdss-eq-rdsp})
With the relation \relref:
\begin{align*}
	\forall \comp{\cdot}
	\vdash\comp{\cdot} : \strong{\rdss}
	\iff
	\vdash\comp{\cdot} : \strong{\rdssp}
\end{align*}
\end{theorem}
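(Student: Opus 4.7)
My plan is to prove the biconditional by establishing the two implications separately, which correspond exactly to Theorems~\ref{thm:rdss-impl-rdsp} and~\ref{thm:rdssp-impl-rdss} and are each interesting enough to deserve their own argument.

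For the $(\Rightarrow)$ direction, I would assume $\vdash \comp{\cdot} : \strong{\rdss}$ and a source program $\src{P}$ with $\vdash \src{P} : \rss(\SR)$. Fixing an arbitrary target attacker $\ctxt{}$ and a trace $\trat{^{\taintt}} \in \behavt{\ctxt{}\hole{\comp{\src{P}}}}$, the \rdss{} hypothesis yields a source attacker $\ctxs{}$ and a trace $\tras{^\sigma} \in \behavs{\ctxs{}\hole{\src{P}}}$ with $\tras{^\sigma} \rels \trat{^{\taintt}}$. Because $\src{P}$ is \rss, every action in $\tras{^\sigma}$ carries tag $\src{\safeta}$. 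The proof is then closed by inspecting the rules for $\rels$: the matched-action rules (\ref{tr:tr-rel-same}, \ref{tr:tr-rel-same-h}) force equal tags on corresponding actions, while the ``extra'' rules (\ref{tr:tr-rel-safe-a}, \ref{tr:tr-rel-safe-h}, \ref{tr:tr-rel-rollb}) only insert $\trg{\safeta}$-tagged target actions. So every action in $\trat{^{\taintt}}$ is tagged $\trg{\safeta}$, yielding $\vdash \comp{\src{P}} : \rss(\TR)$.

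For the $(\Leftarrow)$ direction, assume $\vdash \comp{\cdot} : \strong{\rdssp}$ and take $\src{P}$, $\ctxt{}$, and $\trat{^{\taintt}} \in \behavt{\ctxt{}\hole{\comp{\src{P}}}}$ as given. I would exhibit the required source witness via the attacker-based backtranslation of \Cref{sec:countermeasures:proving}, setting $\ctxs{} \triangleq \backtr{\ctxt{}}$. By \Cref{thm:all-s-rdss} every source program is trivially \rss, so by \rdssp every action in $\trat{^{\taintt}}$ is tagged $\trg{\safeta}$. A lock-step simulation on the attacker portion (analogous to the yellow regions of \Cref{fig:proof-slh}, but now compiler-agnostic) then shows that $\backtr{\ctxt{}}\hole{\src{P}}$ produces a trace $\tras{^\sigma}$ whose visible actions match those of $\trat{^{\taintt}}$ outside of speculation windows, while all purely speculative target actions and the closing rollbacks are absorbed by the safe-action and rollback clauses of $\rels$.

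The main obstacle is the backward direction: defining the backtranslation $\backtr{\cdot}$ uniformly (independently of any particular compiler) and discharging the simulation argument. Fortunately, the design of $\rels$ is what makes this tractable. The safe-action and rollback clauses were chosen precisely so that every $\trg{\safeta}$-tagged target action, which \rdssp guarantees is \emph{all} of them whenever the source is \rss, can be absorbed without needing a matching source counterpart. This is exactly what makes the equivalence hold, and it also explains why $\rels$ cannot be strengthened to require matching on every action without breaking this direction; conversely, if $\rels$ were weakened further, the $(\Rightarrow)$ direction would fail because unsafe target actions could sneak in unmatched. Verifying this tight calibration of $\rels$ is the conceptual heart of the proof.
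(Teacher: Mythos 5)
Your forward direction is sound and is essentially the paper's argument for \Cref{thm:rdss-impl-rdsp} (stated directly rather than by contradiction): since every action of the related source trace is tagged $\src{\safeta}$, \Cref{tr:tr-rel-same,tr:tr-rel-same-h} force the matched target actions to carry the same tag, and the remaining rules only insert $\trg{\safeta}$-tagged actions.

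The backward direction, however, has a genuine gap. The theorem quantifies over \emph{all} compilers $\comp{\cdot}$, not just the behaviour-preserving ones studied later, so your chosen witness $\ctxs{} \triangleq \backtr{\ctxt{}}$ together with a lock-step simulation cannot be made to work. The simulation in \Cref{fig:proof-slh} has a ``green region'' (component steps) as well as the yellow attacker regions, and for an arbitrary \rdssp{} compiler nothing relates the non-speculative actions of $\comp{\src{P}}$ to those of $\src{P}$: the compiled component may perform different heap accesses, or call back the attacker with different values, entirely outside any speculation window. In that case $\backtr{\ctxt{}}\hole{\src{P}}$ emits, say, $\src{\cbh{g}{3}{}}$ where $\trg{\ctxt{}\hole{\comp{\src{P}}}}$ emits $\trg{\cbh{g}{7}{}}$; the target action can be discarded by \Cref{tr:tr-rel-safe}, but there is no rule that discards the leftover \emph{source} action, so the two traces are simply not in $\rels$. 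The rollback/safe-action clauses you invoke only absorb extra target actions, and only speculative ones were ever your concern; the real obstruction is mismatched non-speculative component behaviour. The paper's proof of \Cref{thm:rdssp-impl-rdss} avoids committing to any particular witness: it argues by contradiction that if \emph{no} source attacker yields a related trace, then (using the universal quantification over $\ctxs{}$ to discharge the structural mismatches) the failure must be caused by an action of $\trat{^{\taintt}}$ tagged $\trg{\unta}$, which contradicts \rdssp{} combined with \Cref{thm:all-s-rdss}. Equivalently, the slack built into $\rels$ means that once \rdssp{} forces every target action to be $\trg{\safeta}$, it suffices to exhibit \emph{some} source attacker whose trace embeds into $\trat{^{\taintt}}$ --- no simulation against $\ctxt{}$ is needed, and none is available in this generality.
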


\subsection{Strong Criteria for Insecure Compilers}
\begin{definition}[\snipdef]\label{def:rsnip}
	\begin{align*}
		\vdash \comp{\cdot} : \rsnip \isdef&\ 
		\forall\src{P}\ldotp 
		\text{ if } \src{\vdash\src{P} : \rsni(\SR)}
		\text{ then } \trg{\vdasht\comp{\src{P}} : \rsni(\TR)}
	\end{align*}
\end{definition}
\begin{corollary}[\com{\nvdash \comp{\cdot}:\rsnip}]\label{def:not-rsnip}
	\begin{align*}
		\nvdash \comp{\cdot} : \rsnip \isdef&\ 
		\exists \src{P}\ldotp 
		\text{ } \src{\vdash\src{P} : \rsni(\SR)}
		\text{ and } \trg{\trgb{\nvdash}\comp{\src{P}} : \rsni(\TR)}
	\end{align*}
\end{corollary}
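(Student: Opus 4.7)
The statement is a corollary that unfolds what it means for a compiler to fail $\strong{\rsnip{}}$, so the proof is essentially propositional and requires only careful symbol pushing rather than any deep semantic argument. My plan is to start from \Cref{def:rsnip}, which defines
$$\vdash \comp{\cdot} : \strong{\rsnip} \isdef \forall \src{P} \in \SR.\ \vdash \src{P} : \rsni{}(\SR) \Rightarrow \vdash \comp{\src{P}} : \rsni{}(\TR),$$
and then apply the standard logical equivalence $\neg \forall x.\ (\varphi(x) \Rightarrow \psi(x)) \iff \exists x.\ (\varphi(x) \wedge \neg \psi(x))$, instantiating $\varphi$ with $\vdash \src{P} : \rsni{}(\SR)$ and $\psi$ with $\vdash \comp{\src{P}} : \rsni{}(\TR)$.

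Concretely, I would proceed in two short steps. First, I would explicitly write out $\nvdash \comp{\cdot} : \strong{\rsnip}$ by placing a negation in front of the definiens in \Cref{def:rsnip}, yielding $\neg \forall \src{P} \in \SR.\ (\vdash \src{P} : \rsni{}(\SR) \Rightarrow \vdash \comp{\src{P}} : \rsni{}(\TR))$. Second, I would push the negation through the universal quantifier (classically) and through the implication (using $\neg(\varphi \Rightarrow \psi) \iff \varphi \wedge \neg \psi$), obtaining exactly $\exists \src{P} \in \SR.\ \vdash \src{P} : \rsni{}(\SR) \wedge \nvdash \comp{\src{P}} : \rsni{}(\TR)$, which is the corollary's right-hand side.

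There is no real obstacle in this proof: it is a one-line symbolic manipulation and does not touch the semantics of $\SR$, $\TR$, $\rsni$, or the compiler. The only thing worth remarking on is that the manipulation relies on classical logic (to move the negation past the universal quantifier), which is unproblematic in the meta-logic used throughout the paper. In the interest of precision, I would explicitly note in the proof that no properties of $\comp{\cdot}$, $\rsni{}(\SR)$, or $\rsni{}(\TR)$ are used beyond their being predicates, and that the corollary holds purely by definitional unfolding; this makes it clear why it can be used downstream (as in the insecurity portion of \Cref{sec:methodology}) as a template for producing counterexample witnesses to $\strong{\rsnip{}}$.
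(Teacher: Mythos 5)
Your proposal is correct and matches the paper's treatment: the corollary is obtained exactly by negating the universally quantified implication in \Cref{def:rsnip} and pushing the negation through, and the paper likewise presents it as a purely definitional unfolding (before further unrolling the second clause for practical use). No semantic content is needed, so your one-line classical-logic manipulation is the intended proof.
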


\subsection{Weak Criteria for Secure Compilers}

\begin{definition}[Robust Speculative Safety-Preserving Compiler (\weak{\rdssp})]\label{def:rdssp-weak}
	\begin{align*}
		\vdash \comp{\cdot} : \weak{\rdssp} \isdef&\ 
		\forall\src{P}\ldotp 
		\text{ if } \vdash\src{P} : \rss(\weak{\SR})
		\text{ then } \vdash\comp{\src{P}} : \rss(\weak{\TR})
	\end{align*}
\end{definition}

\begin{definition}[Robust Speculative Safety Compilation (\weak{\rdss})]\label{def:rdss-weak}
	\begin{align*}
		\vdash \comp{\cdot} : \weak{\rdss} \isdef&\
			\forall\src{P}, \ctxt{}, \trat{^{\taintt}},
			\exists \ctxs{}, \tras{^\sigma} \ldotp 
			\\
			\text{ if }&\
			\trg{\ctxt{}\hole{\comp{\src{P}}}} \semt \trat{^{\taintt}}
			\text{ then }
			\src{\ctxs{}\hole{P}} \sems \tras{^\sigma}
			\text{ and }
			\tras{^\sigma}\rels\trat{^{\taintt}}
	\end{align*}
\end{definition}

\begin{theorem}[\weak{\rdss} implies \weak{\rdssp}]\label{thm:rdss-impl-rdsp-weak}(\showproof{rdss-impl-rdsp-weak})
With the relation \relref:
\begin{align*}
	\forall \comp{\cdot}
	\text{ if } \vdash\comp{\cdot} : \weak{\rdss}
	\text{ then } \vdash\comp{\cdot} : \weak{\rdssp}
\end{align*}
\end{theorem}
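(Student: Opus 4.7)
The plan is to mirror the strong-variant argument of \Cref{thm:rdss-impl-rdsp}, exploiting the fact that the trace relation \reldef{} and the action relation \areldef{} are defined uniformly: they inspect the taint tags $\src{\sigma}$ and $\taintt$ carried by actions but are insensitive to the only syntactic difference between the weak and strong languages (whether a private read action records the value read). Since \weak{\rdssp} is a property-ful statement and \weak{\rdss} is its property-free, backtranslation-style counterpart, the game is to convert a target unsafe action into a source unsafe action via the cross-language trace relation.

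First I would fix arbitrary $\src{P}\in\src{\wSR}$ with $\vdash\src{P}:\rss(\src{\wSR})$, an arbitrary target attacker $\ctxt{}$ with $\vdash\ctxt{}:\com{atk}$, an arbitrary trace $\trat{^{\taintt}}\in\trg{\behavt{\ctxt{}\hole{\comp{\src{P}}}}}$, and an arbitrary tainted action $\acat{^{\taintt}}$ occurring in $\trat{^{\taintt}}$. Unfolding \Cref{def:rdssp-weak} and \Cref{def:ss}, the goal reduces to $\taintt\equiv\trg{\safeta}$. Applying the hypothesis $\vdash\comp{\cdot}:\weak{\rdss}$ to $\src{P}$, $\ctxt{}$, and $\trat{^{\taintt}}$ yields a source attacker $\ctxs{}$ and a source trace $\tras{^\sigma}$ with $\src{\ctxs{}\hole{P}}\sems\tras{^\sigma}$ and $\tras{^\sigma}\rels\trat{^{\taintt}}$. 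Feeding $\ctxs{}$ into $\vdash\src{P}:\rss(\src{\wSR})$ gives $\vdash\src{\ctxs{}\hole{P}}:\ss(\src{\wSR})$, so every action tag appearing in $\tras{^\sigma}$ is $\src{\safeta}$.

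Then I would do a routine induction on the derivation of $\tras{^\sigma}\rels\trat{^{\taintt}}$ and locate the action $\acat{^{\taintt}}$ of interest. Exactly two families of cases arise. If $\acat{^{\taintt}}$ is introduced by \Cref{tr:tr-rel-same} or \Cref{tr:tr-rel-same-h}, it is paired with some $\src{\alpha^\sigma}$ (or $\src{\delta^\sigma}$) via \areldef{}; inspecting the action-relation rules \Cref{tr:ac-rel-cl}--\Cref{tr:ac-rel-if}, each of them enforces $\src{\sigma}\equiv\taintt$, and $\src{\sigma}=\src{\safeta}$ by the previous step, so $\taintt=\trg{\safeta}$. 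Otherwise $\acat{^{\taintt}}$ is introduced by \Cref{tr:tr-rel-safe-a}, \Cref{tr:tr-rel-safe-h}, or \Cref{tr:tr-rel-rollb}, whose side conditions \Cref{tr:ac-rel-ep-al}, \Cref{tr:ac-rel-ep-hp}, \Cref{tr:ac-rel-rlb} literally demand $\taintt\equiv\trg{\safeta}$. Either way the target tag is safe, closing the argument.

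The main obstacle, and the only place the weak/strong distinction could bite, is justifying that the backtranslated $\ctxs{}$ really is a valid source attacker so that $\vdash\src{P}:\rss(\src{\wSR})$ is applicable to $\src{\ctxs{}\hole{P}}$; this is a side-condition implicit in \Cref{def:rdss-weak} that I would discharge by observing that the backtranslation used throughout the paper produces attacker-shaped contexts by construction (as needed already in \Cref{thm:rdss-impl-rdsp}). The change in the semantics of private reads (\Cref{tr:tus-rd-com-p-weak} versus \Cref{tr:tus-rd-com-p-strong}) is immaterial here, because the argument never inspects the payload $\trg{v}$ of a read label: it only reads off $\sigma$ and $\taintt$ from the action, and on those the two rule sets agree. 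Consequently the strong proof transports verbatim, and the equivalence counterpart \weak{\rdssp}$\Rightarrow$\weak{\rdss} (needed for the weak analogue of \Cref{thm:rdssp-rdss-eq}) would follow by the same adaptation of \Cref{thm:rdssp-impl-rdss}.
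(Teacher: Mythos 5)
Your proposal is correct and follows essentially the same route as the paper: the paper proves the weak case as a "trivial adaptation" of the strong case (\Cref{thm:rdss-impl-rdsp}), whose proof is exactly your case analysis on the trace relation \relref{} and the action-relation rules forcing $\src{\sigma}\equiv\taintt$ or $\taintt\equiv\trg{\safeta}$, merely phrased as a contradiction instead of your direct argument. Your observation that the weak/strong difference is immaterial because the relations only inspect taints (not read payloads) is precisely the justification the paper leaves implicit.
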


\begin{theorem}[\weak{\rdssp} implies \weak{\rdss}]\label{thm:rdssp-impl-rdss-weak}(\showproof{rdssp-impl-rdss-weak})
With the relation \relref:
\begin{align*}
	\forall \comp{\cdot}
	\text{ if } \vdash\comp{\cdot} :\weak{\rdssp}
	\text{ then } \vdash\comp{\cdot} : \weak{\rdss}
\end{align*}
\end{theorem}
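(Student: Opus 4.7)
The plan is to mirror the argument used for the strong version (\Cref{thm:rdssp-impl-rdss}), but carried out in the weak languages $\src{\wSR}$ and $\trg{\wTR}$. Fix a compiler $\comp{\cdot}$ with $\vdash\comp{\cdot} : \weak{\rdssp}$, a source component $\src{P}$, a target attacker $\ctxt{}$, and a trace $\trat{^{\taintt}} \in \behavt{\ctxt{}\hole{\comp{\src{P}}}}$. My obligation is to exhibit a source attacker $\ctxs{}$ and a source trace $\tras{^\sigma}$ with $\src{\ctxs{}\hole{P}} \sems \tras{^\sigma}$ and $\tras{^\sigma}\rels\trat{^{\taintt}}$.

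First I would invoke \Cref{thm:all-s-rdss}: every source program, in particular $\src{P}$, is $\rss(\src{\wSR})$. The $\weak{\rdssp}$ hypothesis then yields $\vdash\comp{\src{P}} : \rss(\trg{\wTR})$. Robustness gives that, for the valid attacker $\ctxt{}$, every label in every trace of $\trg{\ctxt{}\hole{\comp{\src{P}}}}$ is tagged $\trg{\safeta}$; so every action in $\trat{^{\taintt}}$ carries tag $\trg{\safeta}$.

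Second I would construct $\ctxs{}$ by applying the homomorphic backtranslation $\backtr{\cdot}$ used throughout the paper, setting $\ctxs{} \equiv \backtr{\ctxt{}}$. Since $\ctxt{}$ does not manipulate the private heap, and since $\backtr{\cdot}$ maps the target-only constructs $\trg{\lfence}$ and \trg{cmov} to source statements with the same non-speculative effect, $\backtr{\ctxt{}}$ is a valid source attacker. A standard simulation argument about $\backtr{\cdot}$ then yields some $\tras{^\sigma}$ with $\src{\backtr{\ctxt{}}\hole{P}} \sems \tras{^\sigma}$ whose boundary actions track those of $\trat{^{\taintt}}$ (with the speculative sub-traces of $\trat{^{\taintt}}$ having no source counterpart).

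Third I would build the derivation of $\tras{^\sigma}\rels\trat{^{\taintt}}$ by induction on $\trat{^{\taintt}}$. Because every target label is $\safeta$-tagged, the ``safe'' rules \Cref{tr:tr-rel-safe,tr:tr-rel-safe-h,tr:tr-rel-rollb} apply to every target-only observation, i.e., to speculative memory and control-flow actions and to the $\trgb{\rollbl}^{\safeta}$ labels that close them. The remaining boundary labels (calls, callbacks, returns, and returnbacks) are matched to their source counterparts via \Cref{tr:tr-rel-same,tr:tr-rel-same-h}, using the value/heap relations from \Cref{sec:sim-code-fence,sec:rels-slh}.

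The main obstacle is verifying that $\backtr{\ctxt{}}\hole{P}$ genuinely takes a matching boundary step whenever $\ctxt{}\hole{\comp{\src{P}}}$ does, with related values and heaps, so that the $\arel$-instances required by \Cref{tr:tr-rel-same,tr:tr-rel-same-h} can indeed be discharged. The weak-language twist is that non-speculative private reads now carry tag $\sigma'\lub\sigma_{pc}$ rather than $\trg{\unta}$, so the value-relation obligations on boundary labels must be rechecked for this tainting discipline; once this is done, the trace-relation derivation proceeds exactly as in the proof of \Cref{thm:rdssp-impl-rdss}.
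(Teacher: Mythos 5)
Your first step is exactly the right one and matches the paper: by \Cref{thm:all-s-rdss} every $\src{P}$ is $\rss(\src{\wSR})$, so $\weak{\rdssp}$ gives $\vdash\comp{\src{P}}:\rss(\trg{\wTR})$, and hence every action in $\trat{^{\taintt}}$ is tagged $\trg{\safeta}$. But the middle of your argument has a genuine gap: the theorem quantifies over \emph{all} compilers $\comp{\cdot}$ satisfying $\weak{\rdssp}$, whereas the backtranslation-plus-simulation step you invoke ("a standard simulation argument about $\backtr{\cdot}$ yields $\tras{^\sigma}$ whose boundary actions track those of $\trat{^{\taintt}}$") is only established in the paper for the \emph{specific} compilers $\complfence{\cdot}$, $\compslh{\cdot}$, etc., and depends essentially on their definitions. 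For an arbitrary $\comp{\cdot}$ known only to satisfy the semantic property $\weak{\rdssp}$, there is no reason that $\src{\backtr{\ctxt{}}\hole{P}}$ simulates $\trg{\ctxt{}\hole{\comp{\src{P}}}}$ at the boundary — the compiled code's calls, callbacks and returns need bear no relation to $\src{P}$'s — so the $\arel$-obligations you defer to \Cref{tr:tr-rel-same,tr:tr-rel-same-h} cannot be discharged, and the "main obstacle" you flag at the end is in fact insurmountable at this level of generality.

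The fix is that the simulation is unnecessary: once every label of $\trat{^{\taintt}}$ is $\trg{\safeta}$, the rules \Cref{tr:tr-rel-safe,tr:tr-rel-safe-h} let you append \emph{every} target action — boundary actions included, since they too are safe — to the target side with no source counterpart, so $\srce \rels \trat{^{\taintt}}$ holds outright; it then suffices to pick any $\ctxs{}$ whose $\src{main}$ produces the empty trace. Equivalently (and this is how the paper phrases the strong version in \Cref{thm:rdssp-impl-rdss}, of which the weak case is a direct adaptation): argue by contradiction that if no source attacker yields a related trace, the case analysis of $\rels$ forces some action of $\trat{^{\taintt}}$ to be tagged $\trg{\unta}$, contradicting $\rss(\trg{\wTR})$ of the compiled program. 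Your closing remark about the weak tainting of non-speculative private reads is then moot, since no per-action value relation ever needs to be checked.
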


\begin{theorem}[\weak{\rdss} and \weak{\rdssp} are equivalent]\label{thm:rdss-eq-rdsp-weak}(\showproof{rdss-eq-rdsp-weak})
With the relation \relref:
\begin{align*}
	\forall \comp{\cdot}
	\vdash\comp{\cdot} : \weak{\rdss}
	\iff
	\vdash\comp{\cdot} : \weak{\rdssp}
\end{align*}
\end{theorem}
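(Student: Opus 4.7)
The plan is to observe that this equivalence is the immediate combination of the two previously stated implications, \Cref{thm:rdss-impl-rdsp-weak} ($\vdash \comp{\cdot} : \weak{\rdss} \Rightarrow \vdash \comp{\cdot} : \weak{\rdssp}$) and \Cref{thm:rdssp-impl-rdss-weak} ($\vdash \comp{\cdot} : \weak{\rdssp} \Rightarrow \vdash \comp{\cdot} : \weak{\rdss}$). So the entire statement follows by a one-line chaining of those two. What remains worth discussing is how the two underlying implications are established, since they mirror the strong case but must be re-verified in the weak setting.

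For the forward direction ($\weak{\rdss}\Rightarrow\weak{\rdssp}$), I would fix a source program $\src{P}$ with $\vdash\src{P}:\rss(\src{\wSR})$ and an arbitrary target attacker $\ctxt{}$, and take any $\trat{^{\taintt}}\in\trg{\behavt{\ctxt{}\hole{\comp{\src{P}}}}}$. By $\weak{\rdss}$, I get a source attacker $\ctxs{}$ and a trace $\tras{^\sigma}$ with $\tras{^\sigma}\rels\trat{^{\taintt}}$. Using \Cref{thm:all-s-rdss} (or the hypothesis $\vdash\src{P}:\rss(\src{\wSR})$ directly), every action in $\tras{^\sigma}$ is tainted $\src{\safeta}$. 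A quick inspection of the trace-relation rules shows that every action in $\trat{^{\taintt}}$ is either related to a $\src{\safeta}$-tainted source action (hence $\trg{\safeta}$ by the action-relation rules) or is one of the interleaved $\trg{\safeta}$-tainted actions permitted by \Cref{tr:tr-rel-safe,tr:tr-rel-safe-h,tr:tr-rel-rollb}. Either way, all actions in $\trat{^{\taintt}}$ are $\trg{\safeta}$, which gives $\vdash\comp{\src{P}}:\rss(\trg{\wTR})$. This direction is essentially routine once the trace-relation is in place.

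The backward direction ($\weak{\rdssp}\Rightarrow\weak{\rdss}$) is the real work, and I expect this to be the main obstacle. Starting from $\vdash\comp{\cdot}:\weak{\rdssp}$, a target attacker $\ctxt{}$, and a target trace $\trat{^{\taintt}}$, I need to exhibit a source attacker $\ctxs{}$ whose trace $\tras{^\sigma}$ relates to $\trat{^{\taintt}}$ via $\rels$. The natural approach is a backtranslation, analogous to the one described in \Cref{sec:countermeasures:proving} for $\compslh{\cdot}$: define $\backtr{\ctxt{}}$ homomorphically on attacker heaps, functions, and statements so that any attacker-side reduction on the target can be mimicked by the source, while the component-side steps are already handled by compiler-correctness-style reasoning. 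The subtle point is that the semantics of $\trg{\wTR}$ differs from that of $\trg{\TR}$ only in the shape of read actions on the private heap ($\trg{\rdl{n\mapsto v}}$ rather than $\trg{\rdl{n}}$) and in the corresponding taint-tracking rule \Cref{tr:tus-rd-com-p-weak}; the backtranslation itself is insensitive to these changes, and the trace-relation is already defined on both action forms via \Cref{tr:ac-rel-rd} and its variant, so the proof goes through unchanged modulo this case.

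Putting the two directions together yields the biconditional. The main obstacle is confirming that the backtranslation used in the strong case ports verbatim to the weak languages, i.e.\ that the new weak read action $\trg{\rdl{n\mapsto v}^{\taintt}}$ is related by $\arel$ to the corresponding source action and that the invariants maintained by the cross-language simulation continue to hold when reads only propagate the combined pc-and-address taint rather than the blanket $\unta$. Once that is verified, \Cref{thm:rdss-eq-rdsp-weak} follows by combining the two implications.
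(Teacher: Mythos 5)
Your top-level decomposition --- the biconditional is just the conjunction of \Cref{thm:rdss-impl-rdsp-weak} and \Cref{thm:rdssp-impl-rdss-weak} --- is exactly what the paper does, and your forward direction matches the paper's argument for the strong case (safe source actions plus the action-relation rules force every target action to be tainted \trg{\safeta}). The gap is in the backward direction. You propose to prove $\weak{\rdssp}\Rightarrow\weak{\rdss}$ by a backtranslation \backtr{\ctxt{}} together with a cross-language simulation in which ``the component-side steps are already handled by compiler-correctness-style reasoning.'' But the theorem quantifies over \emph{all} compilers \comp{\cdot}: the only hypothesis available is $\weak{\rdssp}$, and nothing guarantees that \comp{\src{P}} simulates \src{P} in any sense. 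A compiler that, say, maps every source component to a fixed all-safe target component can satisfy $\weak{\rdssp}$ while admitting no simulation between source and compiled code, so the invariants your backtranslation proof needs simply do not hold. The backtranslation-plus-simulation technique is what the paper uses to establish \rssc{} for \emph{specific, concrete} compilers (\complfence{\cdot}, \compslh{\cdot}, etc.), not for this equivalence over arbitrary compilers.

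The paper's proof of the backward direction (see the proof of \Cref{thm:rdssp-impl-rdss}, of which the weak version is declared a ``trivial adaptation'') avoids any simulation and argues by contradiction. Negating $\weak{\rdss}$ yields a target attacker and a trace \trat{^{\taintt}} such that \emph{no} source attacker produces a related trace. The universal quantification over source attackers in that negation is then used to discharge every cause of non-relatedness that some source attacker could reproduce (mismatched calls/returns, different call arguments, different heap addresses), leaving as the only remaining possibility a target action tainted \trg{\unta}. Since every source program is \rss{} (\Cref{thm:all-s-rdss}), $\weak{\rdssp}$ forces all target actions to be \trg{\safeta}, a contradiction. To repair your write-up, replace the backtranslation sketch in the backward direction with this contradiction argument; the rest of your proof then goes through.
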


\subsection{Weak Criteria for Insecure Compilers}
\begin{definition}[\weak{\snipdef}]\label{def:rsnip-weak}
	\begin{align*}
		\vdash \comp{\cdot} : \weak{\rsnip} \isdef&\ 
		\forall\src{P}\ldotp 
		\text{ if } \src{\vdash\src{P} : \rsni(\weak{\SR})}
		\text{ then } \trg{\vdasht\comp{\src{P}} : \rsni(\weak{\TR})}
	\end{align*}
\end{definition}
\begin{corollary}[\com{\nvdash \comp{\cdot}:\weak{\rsnip}}]\label{def:not-rsnip-weak}
	\begin{align*}
		\nvdash \comp{\cdot} : \weak{\rsnip} \isdef&\ 
		\exists \src{P}\ldotp 
		\text{ } \src{\vdash\src{P} : \rsni(\weak{\SR})}
		\text{ and } \trg{\trgb{\nvdash}\comp{\src{P}} : \rsni(\weak{\TR})}
	\end{align*}
\end{corollary}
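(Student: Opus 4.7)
The statement to prove is essentially a logical unfolding: Corollary~\ref{def:not-rsnip-weak} characterizes what it means for a compiler to fail weak robust speculative non-interference preservation. The plan is to derive it directly from Definition~\ref{def:rsnip-weak} by standard propositional manipulation, with no new semantic content required.

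First, I would start from the definition $\vdash \comp{\cdot} : \weak{\rsnip} \isdef \forall\src{P}.\ \vdash\src{P} : \rsni(\weak{\SR}) \Rightarrow \vdash\comp{\src{P}} : \rsni(\weak{\TR})$ and compute its classical negation. Applying the standard equivalences $\neg\forall x.\ \phi(x) \equiv \exists x.\ \neg\phi(x)$ and $\neg(A \Rightarrow B) \equiv A \wedge \neg B$, the negation becomes $\exists\src{P}.\ \vdash\src{P} : \rsni(\weak{\SR}) \wedge \nvdash\comp{\src{P}} : \rsni(\weak{\TR})$, which is exactly the right-hand side of Corollary~\ref{def:not-rsnip-weak}. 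This gives one direction of the bi-implication implicit in the definitional $\isdef$.

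For the converse, I would argue that if a witness $\src{P}$ exists satisfying the source property $\rsni(\weak{\SR})$ while its compilation fails $\rsni(\weak{\TR})$, then the universally quantified implication defining $\vdash \comp{\cdot} : \weak{\rsnip}$ is falsified at that particular $\src{P}$, so $\comp{\cdot}$ cannot satisfy $\weak{\rsnip}$. Together, the two directions establish the corollary. Since Corollary~\ref{def:not-rsnip-weak} is stated with $\isdef$, it really only needs to record this equivalence, so no further semantic lemmas (e.g., about $\weak{\SR}$, $\weak{\TR}$, or the trace relation) are invoked.

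There is no substantive obstacle: the only subtle point is to be explicit that we are working in classical logic so that $\neg(A \Rightarrow B) \equiv A \wedge \neg B$ is available, but this is standard throughout the paper. The corollary is best viewed as a convenient reformulation of Definition~\ref{def:rsnip-weak} used in Section~\ref{sec:countermeasures:microsoft} to exhibit insecurity counterexamples, and its proof is a one-line appeal to propositional logic.
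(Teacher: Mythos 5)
Your proposal is correct and matches the paper's treatment: the corollary is just the classical negation of the universally quantified implication in the definition of $\weak{\rsnip}$, and the paper likewise presents it as a direct logical unfolding with no additional semantic content. Nothing further is needed.
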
 
\newpage
\section{Compiler Insecurity Results}\label{sec:compinsec}

\subsection{Unsafe SLH }\label{sec:form-slh-not-sec-general}
In the following, assume that the compilation of \lstinline{A} (i.e., \trg{n_a} or \trg{n_a-1}) contains a value \trg{v_a} and its low-equivalent counterpart contains \trg{v_a'}.
As before, assume \trg{size} is \trg{4} and \trg{y} is \trg{8}.
We indicate the two traces for the two low-equivalent states as \trg{t} and \trg{t'} respectively and highlight in \hl{yellow} where they differ.
Note that these traces contain more heap actions, specifically those required to read and write the predicate bit when it is stored on the heap (location \trg{-1}).

\begin{theorem}[This SLH compiler is not \strong{\rsnip}]\label{thm:slh-comp-not-strong-rsnip}
	\begin{align*}
		\nvdash \compslh{\cdot} : \strong{\rsnip}
	\end{align*}
\end{theorem}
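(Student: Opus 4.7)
The plan is to apply Corollary~\ref{def:not-rsnip} by exhibiting a single source program $\src{P}$ and an attacker that together witness the failure. The source program is the formalisation of Listing~\ref{lis:spectre-variant-listing}: the private load $\src{x = A[y]}$ happens \emph{non-speculatively}, before the bounds check, and only then is there a branch on $\src{y < size}$ leading to $\src{B[x]}$. By Theorem~\ref{thm:all-s-rsni}, $\vdash \src{P} : \rsni(\SR)$ holds trivially because $\SR$ has no speculative semantics, so this discharges the first half of Corollary~\ref{def:not-rsnip}.

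To show $\nvdash \compslh{\src{P}} : \rsni(\TR)$, I would instantiate the definition with: (i) an attacker $\ctxt{}$ that simply calls $\src{get}$ with an out-of-bounds argument (say $\trg{y}=\trg{8}$ while $\trg{size}=\trg{4}$) and then returns; and (ii) a low-equivalent companion $\src{P}'$ whose private heap differs from that of $\src{P}$ only in the slot corresponding to $\src{A[8]}$, with values $\trg{v_a}$ on one side and $\trg{v_a'} \neq \trg{v_a}$ on the other. By construction $\compslh{\src{P}'} \loweq \compslh{\src{P}}$, the two compiled whole programs coincide on the public heap and on every other private cell, and the compilation shifts the relevant heap locations by one to accommodate the predicate-bit slot at $\trg{-1}$.

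The key step is then to exhibit the two target traces $\trg{t_1}$ and $\trg{t_2}$ produced by the two runs. Both execute identical non-speculative prefixes: the call action, the read of $\trg{size}$, the load of the predicate bit from $\trg{-1}$, the non-speculative private read of $\src{A[y]}$ at $\trg{n_a+8+1}$, and the masking $\trg{cmov}$; crucially, because $\trg{pr}=\trg{1}$ at that point, the $\trg{cmov}$ leaves $\trg{x}$ bound to the true value ($\trg{v_a}$ on one side, $\trg{v_a'}$ on the other). The compiled branch then triggers speculation via \Cref{tr:et-sp-if}; during mis-speculation the speculative load of $\src{B[x]}$ emits the action $\trg{\rdl{n_b+v_a}^{\unta}}$ on one side and $\trg{\rdl{n_b+v_a'}^{\unta}}$ on the other; eventually the speculation window is exhausted and \Cref{tr:et-sp-rb} emits a $\trg{\rollbl}$ before control returns. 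Hence $\trg{t_1} \neq \trg{t_2}$, yet $\nspecProject{\trg{t_1}} = \nspecProject{\trg{t_2}}$, since the distinguishing observations lie strictly between the matching $\trg{\ifl{\cdot}}$ and $\trg{\rollbl}$ and are therefore discarded by the non-speculative projection. Unfolding Corollary~\ref{def:not-rsnip} this directly yields $\nvdash \compslh{\src{P}} : \rsni(\TR)$ and thus $\nvdash \compslh{\cdot} : \strong{\rsnip}$.

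The main obstacle is the finicky but routine bookkeeping required to confirm that the non-speculative prefixes really do coincide and that no stray action distinguishes them: one must track the $+1$ offset introduced by $\compslh{\cdot}$ to reserve $\trg{-1}$ for the predicate bit, verify that every intermediate action (load of $\trg{size}$, loads of $\trg{-1}$, the post-branch write to $\trg{-1}$, and the $\trg{cmov}$) is emitted with identical address and taint in both runs, and check that the speculation window $\trgb{\omega}$ is large enough for the speculative read of $\src{B[x]}$ to fire before rollback. None of this is conceptually deep, but together these checks constitute essentially the whole content of the insecurity argument.
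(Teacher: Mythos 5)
Your proposal is correct and follows essentially the same route as the paper: the paper's own proof also instantiates Corollary~\ref{def:not-rsnip} with the program of Listing~\ref{lis:spectre-variant-listing}, the attacker $\trg{main(x)\mapsto \call{get}~8; \ret}$, and two low-equivalent private heaps differing in $\src{A[8]}$, and then exhibits the two target traces that agree on their non-speculative projections but differ in the speculative action $\trg{\rdl{n_b+v_a}^{\unta}}$ versus $\trg{\rdl{n_b+v_a'}^{\unta}}$. The bookkeeping you flag (the $+1$ heap shift and the extra reads/writes of location $\trg{-1}$ for the predicate bit) is exactly what appears in the paper's explicit traces.
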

\begin{proof}

The attacker is the same:
\begin{align*}
	\ctxt{^{\!8}} \isdef&\ \trg{main(x)\mapsto \call{get}~8; \ret}
\end{align*}

Below are the two different target traces for the code of \Cref{sec:code2}.

\vspace{-1em}
{\small
\begin{align*}
	\trg{t'} =&\ 
		\trg{ 
			\clh{get}{8}{}^\safeta \cdot 
			\rdl{1}^\safeta \cdot
			\rdl{-(n_a-1+8+1)}^\safeta \cdot
		}
		\\&\ \ 
		\trg{
			\rdl{-1}^\safeta \cdot
			\ifl{1}^\safeta \cdot
			\rdl{-1}^\safeta \cdot
			\wrl{-1}^\safeta \cdot
		}
		\\&\ \ 
		\trg{
			\hl{\trg{\rdl{n_b + v_a}^\unta}} \cdot
			\rollbl^\safeta \cdot
			\rdl{-1}^\safeta \cdot
			\wrl{-1}^\safeta \cdot
			\rth{}{}^\safeta
		}
	\\
	\trg{t'} =&\  
		\trg{ 
			\clh{get}{8}{}^\safeta \cdot 
			\rdl{1}^\safeta \cdot
			\rdl{-(n_a-1+8+1)}^\safeta \cdot
		}
		\\&\ \ 
		\trg{
			\rdl{-1}^\safeta \cdot
			\ifl{1}^\safeta \cdot
			\rdl{-1}^\safeta \cdot
			\wrl{-1}^\safeta \cdot
		}
		\\&\ \ 
		\trg{
			\hl{\trg{\rdl{n_b + v_a'}^\unta}} \cdot
			\rollbl^\safeta \cdot
			\rdl{-1}^\safeta \cdot
			\wrl{-1}^\safeta \cdot
			\rth{}{}^\safeta
		}
	\\
	\nspecProject{\trg{t}} =&\ \nspecProject{\trg{t'}} =
		\trg{ 
			\clh{get}{8}{}^\safeta \cdot 
			\rdl{1}^\safeta \cdot
			\rdl{-(n_a-1+8+1)}^\safeta \cdot
		}
		\\&\ \ 
		\trg{
			\ifl{1}^\safeta \cdot
			\rdl{-1}^\safeta \cdot
			\wrl{-1}^\safeta \cdot
			\rth{}{}^\safeta
		}
\end{align*}
}
	
\end{proof}

\subsection{Unsafe Inter-procedural SLH}
This SLH compiler does not pass the pr state across procedures and stores it in a local variable.
\begin{align*}
	\compslht{ {f(x)\mapsto s;\ret} } &= 
		\trg{f(x)\mapsto  
			\begin{aligned}[t]
				&
				\letint{\trg{x_{pr}}}{\falset}{\compslht{s}};
				\trg{\ret}
			\end{aligned}
		}
	\\
	\compslht{ \ifzte{e}{s}{s'} } &= \trg{ 
		\begin{aligned}[t]
			&
			\letint{\trg{x_g}}{\compslh{e}}{
			\\&\
				\ifztet{\trg{x_g}
					}{ 
					\letint{\trg{x_{pr}}}{\trg{x_{pr} \vee \neg x_g}}{\compslht{s}}
				\\
				&\
				}{
					\letint{\trg{x_{pr}}}{\trg{x_{pr} \vee x_g}}{\compslht{s'}}
				}
			}
		\end{aligned}
	} \\
	\compslht{\letreadp{x}{e}{s}} &= 
		\trg{
			\letreadpt{\trg{x}}{\trg{e}}{ 
			\cmovet{\trg{x}}{\trg{0}}{\trg{x_{pr}}}{\compslht{s}}
			}			
		} 
\end{align*}

In order to prove \rdss for this compiler, we need a strong relation between states that instead of asserting that \trg{H(-1)} keeps a bool of the speculation, each state has the first binding for a variable which captures speculation.

When proving \Thmref{thm:spec-most-omega}, in the case of a call from a context to compiled component, we are not able to instate this invariant.
So, there, we need to add \trg{\lfence}, so that we stop speculation altogether when jumping into compiled code.
This is noted in \showproof{spec-most-omega}.

Crucially, this compiler is not \rsnip.
\begin{theorem}[\compslht{\cdot} is not \rsnip]\label{thm:compslht-not-rsni}
	\begin{align*}
		\nvdash \compslht{\cdot} : \rsnip
	\end{align*}
\end{theorem}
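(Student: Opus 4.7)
The plan is to instantiate \Cref{def:not-rsnip}: exhibit a source program $\src{P}$ that is $\rsni(\SR)$ together with a target attacker $\ctxt{}$ witnessing $\trgb{\nvdash}\compslht{\src{P}} : \rsni(\TR)$. The $\rsni(\SR)$ side is immediate from \Thmref{thm:all-s-rsni}, so the real work is constructing the counterexample in the target. Note that, as discussed in the summary table in \Cref{sec:countermeasures:summary}, the statement is about $\strong{\rsnip}$ rather than $\weak{\rsnip}$ (the latter holds by \Thmref{thm:slh-comp-rdss-ni} combined with \Thmref{thm:rdssp-impl-rdss-weak} and \Thmref{thm:ss-impl-sni-weak}).

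For $\src{P}$ I would take the source program of \Cref{lis:spectre-variant-listing}: the function $\src{get}(\src{y})$ performs $\src{x := A[y]}$ \emph{non-speculatively} and only afterward branches on $\src{y < size}$. Under $\compslht{\cdot}$, the $\trg{\lfence}$ inserted at function entry forces $\trg{x := A[y]}$ to execute non-speculatively in $\TR$ as well, so $\trg{x}$ is bound to the true contents of the private heap before any speculation begins. The attacker $\ctxt{}$ simply calls $\src{get}$ with $\src{y}\geq\src{size}$ (the classical confused-deputy). The mispredicted branch opens a speculative window over the compilation of $\src{temp := B[x]}$. Because $\compslht{\cdot}$ (like $\compslh{\cdot}$) masks only the \emph{output} of loads, not their input address, the emitted action $\trg{\rdl{n_b + x}^{\unta}}$ discloses $\trg{x}$---which equals the non-speculatively loaded value $\trg{A[y]}$---through the memory address.

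To witness the failure of $\rsni(\TR)$, I would instantiate two low-equivalent target programs that differ only in $\trg{A[y]} = \trg{v_a}$ versus $\trg{A[y]} = \trg{v_a'}$ with $\trg{v_a} \neq \trg{v_a'}$. Both runs yield identical non-speculative projections: the non-speculative read emits $\trg{\rdl{n_a+y}}$ (only an address under $\TR$'s strong read observation), followed by the same $\trg{\ifl{n}}$, and the same call/return labels. The full traces, however, diverge on the speculatively-emitted $\trg{\rdl{n_b + v_a}}$ versus $\trg{\rdl{n_b + v_a'}}$, satisfying the two clauses ($\nspecproj{\cdot}$ equal, full traces unequal) of \Cref{def:not-rsnip}.

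The main obstacle is bookkeeping around the speculation window $\trgb{\omega}$: the compiled code inserts several statements between the branch and the leaking load (evaluation of the guard into $\trg{x_f}$, the $\trg{ifz}$, the $\trg{\letin{\predState}{\predState\vee\neg x_f}{\cdot}}$, plus the setup for the public load of $\trg{B[x]}$). I must choose $\trgb{\omega}$ large enough that the \trg{\rdl{n_b+x}} action fires before the rollback; since $\trgb{\omega}$ is a global parameter of the speculative semantics, this is always achievable, but the argument has to be made explicit by tracing the compiled code step-by-step through the rules \Cref{tr:et-sp-if}--\Cref{tr:et-sp-act}. A secondary check is that the $\trg{cmov}$-based masking that $\compslht{\cdot}$ applies to subsequent control-flow and store inputs does not interpose on the leaking load's address expression---and it does not, because the load address is computed directly from $\trg{x}$, which was bound before speculation started and therefore carries its true value.
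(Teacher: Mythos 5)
Your proof is correct and establishes the stated theorem, but it takes a genuinely different route from the paper's. The paper's own counterexample for \Cref{thm:compslht-not-rsni} is the \emph{cross-procedural} program of \Cref{lis:spectre-variant-listing-proc}: \lstinline{get} loads \lstinline{A[y]} non-speculatively, mis-speculates on the bounds check, and speculatively calls \lstinline{get_2}, whose freshly initialised local predicate bit wrongly records ``not speculating''; the homomorphically-compiled public read \lstinline{B[x]} then emits $\trg{\rdl{n_b+v_a}}$, and the paper exhibits the two low-equivalent traces with equal non-speculative projections exactly as you do. Your single-procedure counterexample (\Cref{lis:spectre-variant-listing}) is simpler and equally valid for the \emph{strong} criterion, which you correctly identify as the one at stake (via \Cref{thm:slh-comp-rdss-ni} and the summary table): any speculative disclosure of a non-speculatively loaded private value suffices, public reads are compiled homomorphically so their address is never masked, and strong-$\TR$ private reads expose only addresses, so the non-speculative projections coincide. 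What you lose relative to the paper is diagnostic specificity: your counterexample exploits the output-versus-input masking weakness that $\compslht{\cdot}$ merely inherits from $\compslh{\cdot}$ --- it is essentially \Cref{thm:slh-comp-not-strong-rsnip} replayed, and it would go through unchanged for the inter-procedural compiler --- whereas the paper's cross-procedural program is chosen to also exhibit the flaw that is \emph{specific} to dropping inter-procedural predicate tracking (the predicate bit being reset inside a speculatively-invoked callee). Your side-condition bookkeeping (speculation window large enough, the \lstinline{cmov} masking not interposing on the leaking address, the entry \trg{\lfence} being harmless for a non-speculative call) is sound; note only that the leak does not actually depend on the presence or absence of that \trg{\lfence}, so your argument covers both the appendix's no-fence variant and the hardened NISLH of \Cref{sec:non-interp-slh}.
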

\begin{proof}
\begin{align*}
	&
	\src{get(y)\mapsto}
	\letreads{
		\src{size}
	}{
		\src{1}
	}{
		\letreadps{
			\src{x}
		}{
			\src{n_a+y}
		}{
			\ifztes{
				\src{y<size}
			\\
			&\quad
			}{
				\src{\call{get2}~ x}
			\quad
			\quad
			}{
				\skips
			}
		}
	}
	\\
	&
	\src{get2(x)\mapsto}
	\
	\src{
		\letreads{
			{temp}
		}{
			{n_B + x} 
		}{
			\skips
		}
	}
\\
	&
	\trg{get(y) \mapsto}
	\\
	&\enskip
		\letint{
			\trg{\predState}
		}{
			\trg{1}
		}{
			\letreadt{
				\trg{size}
			}{
				\trg{1}
			}{
				\letreadpt{
					\trg{x}
				}{
					\trg{n_a + y}
				}{
				\\
				&\enskip
					\letint{
						\trg{x_g}
					}{
						\trg{y < size}
					}{
						\ifztet{
							\trg{x_g}
						\\
						&\quad
						}{
							\letint{
								\trg{\predState}
							}{
								\trg{\predState \vee \neg x_g }
							}{
								\trg{\call{get2}~x}
							}
						\\
						&\quad
						}{
							\letint{
								\trg{\predState}
							}{
								\trg{\predState \vee x_g }
							}{
								\skipt
							}
						}
					}
				}
			}
		}
	\\
	&
	\trg{get2(x) \mapsto}
	\trg{
		\letint{
			\trg{\predState}
		}{
			\trg{1}
		}{
			\letreadt{
				\trg{temp}
			}{
				\trg{n+b + x}
			}{
				\skipt
			}
		}
	}
\end{align*}

\begin{align*}
	\trg{t'} =&\ 
		\trg{ 
			\clh{get}{8}{}^\safeta \cdot 
			\rdl{1}^\safeta \cdot
			\rdl{-(n_a+42)}^\safeta \cdot
		}
		\\&\ \ 
		\trg{
			\ifl{1}^\safeta \cdot
			\hl{\trg{\rdl{n_b + v_a}^\unta}} \cdot
			\rollbl^\safeta \cdot
			\rth{}{}^\safeta
		}
	\\
	\trg{t'} =&\  
		\trg{ 
			\clh{get}{8}{}^\safeta \cdot 
			\rdl{1}^\safeta \cdot
			\rdl{-(n_a+42)}^\safeta \cdot
		}
		\\&\ \ 
		\trg{
			\ifl{1}^\safeta \cdot
			\hl{\trg{\rdl{n_b + v_a'}^\unta}} \cdot
			\rollbl^\safeta \cdot
			\rth{}{}^\safeta
		}
	\\
	\nspecProject{\trg{t}} =&\ \nspecProject{\trg{t'}} =
		\trg{ 
			\clh{get}{8}{}^\safeta \cdot 
			\rdl{1}^\safeta \cdot
			\rdl{-(n_a+42)}^\safeta \cdot
		}
		\\&\ \ 
		\trg{
			\ifl{1}^\safeta \cdot
			\rth{}{}^\safeta
		}
\end{align*}	
\end{proof}

\subsection{MSVC Details}\label{sec:msvc-details}
\lstset{language=Asm}
Unlike ICC, MSVC tries to reduce the number of \lstinline{lfence}s by selectively determining which branches to patch. 
\Cref{example:msvc:secure} below illustrates how MSVC works on the standard Spectre v1 snippet while \Cref{example:mvcc:insecure} (and as pointed out in~\cite{kocher2018examples,spectector}) illustrates that MSVC sometimes omits necessary \lstinline{lfence}s.

\begin{example}[MSVC in action]
\label{example:msvc:secure}
\Cref{lis:spectre-1-listing-compiled-msr} presents the (simplified) assembly produced by MSVC on the code of \Cref{lis:spectrev1}.
\begin{lstlisting}[label=lis:spectre-1-listing-compiled-msr,caption={\Cref{lis:spectrev1} compiled with MSVC with \texttt{/Qspectre} flag enabled.}] 
	mov     rax, size // load size
	cmp     rcx, rax // compare y (in rcx) and size 
	jae     END // jump if y is out-of-bound
	lfence // halt speculative execution
	movzx   eax, A[rcx] // load A[y]
	movzx   eax, B[rax] // load B[A[y]]
	mov     temp, al  // assignment to temp
END:
	ret     0
\end{lstlisting} 
In this case, MSVC correctly inserts the \lstinline{lfence} (line 4) just after the branch instruction that checks whether \lstinline{x} (stored in register \lstinline{rcx}) is in-bound.
The \lstinline{lfence} stops the mis-speculated execution of the two memory accesses and it effectively prevents speculative leaks.
\end{example}

\begin{example}[MSVC is not \rsnip]\label{example:mvcc:insecure}
\lstset{language=Java}
Consider now the code in \Cref{lis:spectre-10-listing}, which is adapted from from~\cite[Example~10]{kocher2018examples}.
In contrast to \Cref{lis:spectrev1}, this example speculatively leaks whether \lstinline{A[y]} is \lstinline{0} through the branch statement in line 3.

When compiling this code with the \lstinline{lfence}-countermeasure enabled, MSVC produces the snippet shown in  \Cref{lis:spectre-10-listing-compiled}.
\lstset{language=Asm}
\begin{lstlisting}[label=lis:spectre-10-listing-compiled,caption={\Cref{lis:spectre-10-listing} compiled with MVCC with \texttt{/Qspectre} flag enabled.}] 
		mov     rax, size // load size
		cmp     rcx, rax // compare y (in rcx) and size
		jae     END // jump if out-of-bound
		cmp     [A+rcx], 0 // compare A[y] and 0
		jne     END // jump if A[y] is not 0
		movzx   eax, B // load B[0]
		mov     temp, al // assignment to temp            
	END:
		ret     0
\end{lstlisting} 
In this case, the compiler does not insert an \lstinline{lfence} after the first branch instruction on line 3.
Therefore, the compiled program still contains a speculative leak.

In our framework, the source program from \Cref{lis:spectre-10-listing} trivially satisfies \rsni, because the source language \SR{} does not allow speculative execution.
Its compilation in \Cref{lis:spectre-10-listing-compiled}, however, violates \rsni.
To show this, consider two low-equivalent initial states $\trg{\Omega^0}$, $\trg{\Omega^1}$ where \trg{y} is out-of-bound, \lstinline{A[y]} is \trg{0} in $\trg{\Omega^0}$ and \trg{1} in $\trg{\Omega^1}$, and the value of \trg{y} is \trg{42} in both.
The corresponding traces are: 

\vspace{-1em}
{\small
\begin{align*}
	\trg{t_{\Omega^0}} =&\ 
		\trg{ \clh{get}{42}{}^\safeta \cdot \ifl{0}^\safeta \cdot \rdl{n_A+42}{}^\safeta \cdot  }
		\\
		& \quad 
		\trg{ \ifl{0}^\unta \cdot  \rollbl^\safeta \cdot \rdl{n_B+0}^\safeta  \cdot \rollbl^\safeta }
	\\
	\trg{t_{\Omega^1}} =&\ 
		\trg{ \clh{get}{42}{}^\safeta \cdot \ifl{0}^\safeta \cdot \rdl{n_A+42}{}^\safeta \cdot } 
		\\
		& \quad 
		\trg{  \ifl{1}^\unta \cdot \rdl{n_B+0}^\safeta   \cdot \rollbl^\safeta \cdot \rollbl^\safeta }
\end{align*}
}
These two traces have the same non-speculative projection 

\noindent$\trg{ \clh{get}{42}{}^\safeta \cdot \ifl{0}^\safeta}$ but they differ in the observation associated with the branch instruction from line 5 
(which is $\trg{\ifl{0}^\unta}$ in \trg{t_{\Omega^0}} and $\trg{\ifl{1}^\unta}$ in \trg{t_{\Omega^1}}).
Therefore, they are a counterexample to \rsni.
As a result, MVCC violates \rsnip{} since it does not preserve \rsni{}.
\end{example} %

\subsection{SLH Details}\label{sec:slh-details}
\begin{example}[SLH in action]\label{example:clang:slh} 
Consider again the Spectre v1 snippet from \Cref{lis:spectrev1}.
Clang with SLH enabled compiles the program into the (simplified) assembly in \Cref{lis:spectre-1-listing-compiled-clang}.
\lstset{language=Asm}
\begin{lstlisting}[label=lis:spectre-1-listing-compiled-clang,caption={Compiled version of  \Cref{lis:spectrev1} produced by Clang with \texttt{-x86-speculative-load-hardening} flag enabled.}] 
	mov     rax, rsp // load predicate bit from stack pointer
	sar     rax, 63 // initialize mask (0xF...F if left-most bit of rax is 1)
	mov     edx, size // load size
	cmp     rdx, rdi // compare size and y
	jbe     ELSE // jump if out-of-bound
THEN:
	cmovbe  rax, rcx // set mask to -1 if out-of-bound
	movzx    ecx, [A + rdi] // load A[y]
	or      rcx, rax // mask A[y]
	mov     cl, [B + rcx] // load B[mask(A[y])]
	or      cl, al // mask B[mask(A[y])]
	mov     temp, cl // assignment to temp
	jmp     END
ELSE:
	cmova   rax, -1 // set mask to -1 if in bound
END:
	shl     rax, 47
	or      rsp, rax // store predicate bit on stack pointer
	ret
\end{lstlisting} 
\lstset{language=Java}
The masking introduced by SLH is sufficient to avoid speculative leaks.
Indeed, if the processor speculates over the branch instruction in line 5 and speculatively executes the first memory access on line 7, the loaded value is masked immediately afterwards (line 8) and it is set to \trg{0xF..F}.
Thus, the second memory access (line 9) will not depend on sensitive information; thereby preventing the leak.
\end{example}
\begin{example}[SLH is not \rsnip]\label{example:clang:slh:insecure}
Consider the variant of Spectre v1 illustrated in \Cref{lis:spectre-variant-listing}.
The main difference with the standard Spectre v1 example (\Cref{lis:spectrev1}) is that the first memory access is performed non-speculatively (line 2).
Its value, however, is still leaked through the speculatively-executed memory access in line 4.
Clang with SLH compiles this code into the snippet of \Cref{lis:spectre-variant-listing-compiled-clang}.
\lstset{language=Asm}
\begin{lstlisting}[label=lis:spectre-variant-listing-compiled-clang,caption={Compiled version of  \Cref{lis:spectre-variant-listing} produced by Clang with \texttt{-x86-speculative-load-hardening} flag enabled.}] 
	mov     rax, rsp 	// load predicate bit from stack pointer
	sar     rax, 63 // initialize mask (0xF...F if left-most bit of rax is 1)
	movzx   edx, [A + rdi] 	// load A[y]
	or      edx, eax 	// mask A[y]
	mov     x, edx 	// assignment to x
	mov     esi, size 	// load size
	cmp     rsi, rdi 	// compare size and y
	jbe     ELSE 	// jump if out-of-bound
THEN:
	cmovbe  rax, -1 	// set mask to -1 if out-of-bound
	mov     cl, [B + rdx] 	// load B[x]
	or      cl, al 	// mask B[x]
	mov     temp, cl 	// assignment to temp
	jmp     END
ELSE:
	cmova   rax, -1 	// set mask to -1 if in-bound
END:
	shl     rax, 47 
	or      rsp, rax 	// store predicate bit on stack pointer
	ret
\end{lstlisting} 
In the compiled code, the value of \lstinline{A[y]} is hardened using the mask retrieved from the stack pointer (line 4).
As a result, if the \lstinline{get} function is invoked non-speculatively, then the mask is set to \trg{0x0..0} and the value of \lstinline{A[y]} is not protected.
Therefore, speculatively executing the load in line 11 may still leak the value of \lstinline{A[y]} speculatively, which will be different in traces generated from different, low-equivalent states.
\end{example}

\begin{example}[Non-interprocedural SLH is not \rsnip]\label{ex:slh-nonint-insec}
	\lstset{language=Java}
	The program of \Cref{lis:spectre-variant-listing-proc} splits the memory accesses of \lstinline{A} and \lstinline{B} of the classical snippet across functions \lstinline{get} and \lstinline{get_2}.
	\begin{lstlisting}[mathescape,label=lis:spectre-variant-listing-proc,caption={Inter-procedural variant of the Spectre v1 snippet~\cite{crossproc}.}] 
	void get (int y) 
		x = A[y] ; 
		if (y < size) then  get_2 (x);

	void get_2 (int x)  temp = B[x];
	\end{lstlisting}
	\lstset{language=Asm}
	Intuitively, once compiled, \lstinline{get} starts the speculative execution (line 3), then the compiled code corresponding to \lstinline{get_2} is executed speculatively.
	However, the predicate bit of \lstinline{get_2} is set to $\trg{0}$ upon calling the function and therefore the memory access corresponding to \lstinline{B[x]} is not masked and it leaks the value of \lstinline{x} (which is equivalent to \lstinline{A[y]}).
\end{example}

\newpage
\section{The lfence Compiler \complfence{\cdot}}\label{sec:comp-lfence} 

The lfence compiler (as implemented in Intel ICC).

The main feature is that the `then' and `else' branches of the conditionals start with an \trg{\lfence}, so no speculation is possible in the branches.
We do not add a speculation barrier at function boundaries for the same reason why we do not let the context speculate (see \Cref{src:trg-sem-com}).
Since the context speculates and since the only source of speculation is branching, we do not need to add \trg{\lfence} at function boundaries.
We would need to do so were we to model speculation on return addresses too.

\begin{align*}
	\complfence{ H ; \OB{F} ; \OB{I}} &= \trg{ \complfence{H} ; \complfence{\OB{F}} ; \complfence{\OB{I}}}
	\\\\
	\complfence{ \srce} &= \trge
	\\
	\complfence{ \OB{I}\cdot f } &= \trg{ \complfence{ \OB{I} }\cdot f }
	\\\\
	\complfence{ H ; -n\mapsto v : \unta} &= \trg{\complfence{H} ; -\complfence{n} \mapsto \complfence{v} : \unta}
	\\\\
	\complfence{ {f(x)\mapsto s;\ret} } &= \trg{f(x)\mapsto \complfence{s};\ret}
	\\
	\complfence{ {s;s'} } &= \complfence{s}\trg{;} \complfence{s'}
	\\
	\complfence{ \skips } &= \trg{\skipt}
	\\
	\complfence{ \letin{x}{e}{s} } &= \trg{ \letin{x}{\complfence{e}}{\complfence{s}}}
	\\
	\complfence{ \ifzte{e}{s}{s'} } &= \trg{ \ifzte{\complfence{e}}{ \{\lfence; \complfence{s} \}}{\{\lfence; \complfence{s'}\}}}
	\\
	\complfence{ \call{f}~e } &= \trg{ \call{f}~\complfence{e} }
	\\
	\complfence{ \asgn{e}{e'} } &= \trg{ \asgn{\complfence{e}}{\complfence{e'}} }
	\\
	\complfence{ \letread{x}{e}{s} } &= \trg{ \letread{x}{\complfence{e}}{ \complfence{s} } }
	\\
	\complfence{ \asgnp{e}{e'} } &= \trg{ \asgnp{\complfence{e}}{\complfence{e'}} }
	\\
	\complfence{ \letreadp{x}{e}{s} } &= \trg{ \letreadp{x}{\complfence{e}}{ \complfence{s} } }
	\\
	\\
	\complfence{ n } &= \trg{n}
	\\
	\complfence{ e \op e' } &= \trg{ \complfence{ e } \trgb{\op} \complfence{ e' }}
	\\
	\complfence{ e \bop e' } &= \trg{ \complfence{ e } \trgb{\bop} \complfence{ e' }}
\end{align*}

\begin{theorem}[The lfence compiler is \strong{\rdss}]\label{thm:lfence-comp-rdss}(\showproof{lfence-comp-rdss})
	\begin{align*}
		\vdash \complfence{\cdot} : \strong{\rdss}
	\end{align*}
\end{theorem}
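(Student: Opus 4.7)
I will establish the property-free form $\strong{\rdss{}}$ directly, since \Cref{thm:rdss-eq-rdsp} equates it with $\strong{\rdssp{}}$. Given a target attacker $\ctxt{}$, I define $\ctxs{} = \backtr{\ctxt{}}$ by a homomorphic translation that maps $\trg{\lfence}$ to $\skips$ (the fence is a no-op without speculation) and $\trg{\cmove{x}{e_1}{e_2}{s}}$ to an equivalent source if-then-else that rebinds $\src{x}$ in the true branch and leaves it untouched in the false one. The extra $\src{\ifl{v}}$ labels produced by the encoded conditional move are invisible on the trace because \Cref{tr:eus-tr-sin}, and symmetrically \Cref{tr:eut-tr-sin}, discards microarchitectural actions fired inside attacker-internal steps.

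\textbf{Simulation.} I will show by induction on the length of the target derivation $\trg{\ctxt{}\hole{\complfence{\src{P}}}} \semt \trat{^{\taintt}}$ that there exists a source execution $\ctxs{}\hole{\src{P}} \sems \tras{^\sigma}$ with $\tras{^\sigma}\rels\trat{^{\taintt}}$. The simulation rests on a cross-language state relation requiring the target speculative stack to be a singleton with no ongoing speculation and the underlying states to agree on heaps, bindings, and control point modulo $\complfence{\cdot}$ and $\backtr{\cdot}$. Under this invariant, every non-branch statement reduces in lock-step, emitting labels that are either $\arel$-identical---for memory, call, and return actions, whose addresses and values are computed from identical bindings---or pairwise safe-tagged, since the pc taint is $\safeta$ by the invariant.

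\textbf{The branch case.} The only rule that breaks the singleton invariant is \Cref{tr:et-sp-if}, which pushes a mis-speculating frame whenever a component branch fires. By construction of $\complfence{\cdot}$, the first instruction of each compiled branch body is $\trg{\lfence}$, so the mis-speculating frame takes exactly two further steps before being popped: \Cref{tr:et-sp-lf} reduces the fence (emitting $\trg{\epsilon}$) and sets the remaining window to $\trg{0}$, which triggers \Cref{tr:et-sp-rb} (emitting $\trg{\rollbl}$ tagged $\trg{\safeta}$). By \Cref{tr:tr-rel-safe-a} and \Cref{tr:tr-rel-rollb}, these two extra target labels are absorbed against the empty source trace, and the singleton invariant is restored before the source has to commit to either branch. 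The corresponding source step then fires \Cref{tr:eus-ift} (or its symmetric rule), emitting an $\src{\ifl{v}}$ tagged $\src{\safeta}$ that matches the action already emitted in the target via $\arel$.

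\textbf{Main obstacle.} The delicate part is coordinating the speculation window across attacker--component boundaries: I must guarantee that when control enters $\complfence{\src{P}}$ from $\ctxt{}$ no previous mis-speculation is pending. This follows from the side condition $\trg{f}\notin\trg{\OB{I}}$ in \Cref{tr:et-sp-if}, which blocks speculation from being initiated inside the attacker, so during attacker execution the speculative stack stays a singleton and all attacker-generated $\mu$arch actions are discarded by the clause of \Cref{tr:eut-tr-sin} that collapses internal actions. Thus the invariant is preserved at every call and callback boundary; a symmetric argument handles returns, after which the induction goes through.
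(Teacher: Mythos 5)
Your proposal is correct and follows essentially the same route as the paper: backtranslate the attacker homomorphically (with $\trg{\lfence}\mapsto\skips$ and $\trg{cmov}$ encoded as a source conditional whose extra labels are attacker-internal and hence dropped), maintain a cross-language relation forcing the target speculation stack to be a singleton, simulate in lock-step everywhere except at component branches, and there observe that the leading $\trg{\lfence}$ zeroes the window so the mis-speculated frame is rolled back after emitting only $\trg{\safeta}$-tagged labels absorbed by \Cref{tr:tr-rel-rollb}, with the attacker's inability to initiate speculation preserving the invariant across boundaries. This matches the paper's decomposition into \Cref{thm:corr-bt-lfence}, \Cref{thm:ini-state-rel}, and \Cref{thm:bwd-sim-lfence}.
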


\begin{theorem}[All lfence-compiled programs are \rss(\SR)]\label{thm:all-lfence-comp-are-rdss}(\showproof{19})
	\begin{align*}
		\forall\src{P}\ldotp \vdash\complfence{P} : \rss(\SR)
	\end{align*}
\end{theorem}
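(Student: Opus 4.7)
The plan is to obtain this claim as a direct corollary of the three results already established in the lfence section and \Cref{sec:compcrit}, using exactly the chain of implications sketched in the methodology of \Cref{sec:methodology}. First I would invoke \Cref{thm:all-s-rdss}, which guarantees that every source program $\src{P} \in \SR$ satisfies $\rss{}(\SR)$ trivially, since \SR has no speculation and thus every action on every trace is tagged $\src{\safeta}$. This discharges the premise needed to apply any \rssp-style property-preserving compiler criterion to $\src{P}$.

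Next I would invoke \Cref{thm:lfence-comp-rdss}, which states that $\complfence{\cdot}$ satisfies $\strong{\rdss}$, i.e.\ the property-free, backtranslation-style criterion that relates target traces to source traces via $\rels$. To turn this into a property-preservation statement I would apply the equivalence \Cref{thm:rdss-eq-rdsp}, which converts $\strong{\rdss}$ into $\strong{\rdssp}$. Unfolding \Cref{def:rdssp}, this says precisely that whenever $\vdash \src{P} : \rss{}(\SR)$, also $\vdash \complfence{\src{P}} : \rss{}(\TR)$.

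Combining these two facts for an arbitrary $\src{P} \in \SR$ immediately yields the desired conclusion: for every source program $\src{P}$, the compiled component $\complfence{\src{P}}$ is robustly speculatively safe in the target language.

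There is no real obstacle here, since each of the three ingredients is stated as a previously established theorem. The only subtlety is notational: the claim as written ($\rss{}(\SR)$) is only well-typed once one reads it as $\rss{}(\TR)$, because $\complfence{\src{P}}$ lives in \TR; so in the write-up I would remark on (or silently correct) the target-language annotation and then close with the one-line chain
\[
  \Cref{thm:all-s-rdss} \ \Rightarrow\ \vdash \src{P} : \rss{}(\SR)
  \ \stackrel{\Cref{thm:lfence-comp-rdss},\,\Cref{thm:rdss-eq-rdsp}}{\Longrightarrow}\
  \vdash \complfence{\src{P}} : \rss{}(\TR).
\]
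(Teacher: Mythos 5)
Your proof is correct and follows essentially the same route as the paper's own proof: it chains \Cref{thm:all-s-rdss}, \Cref{thm:lfence-comp-rdss}, and the passage from \rdss{} to \rdssp{} (the paper cites the one-directional \Cref{thm:rdss-impl-rdsp} where you cite the full equivalence \Cref{thm:rdss-eq-rdsp}, which is immaterial), and then unfolds \Cref{def:rdssp}. Your remark that the $(\SR)$ annotation in the statement should read $(\TR)$ is also apt; the paper's own proof simply drops the language annotation and concludes $\vdash\complfence{\src{P}}:\rss$.
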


\subsection{Backtranslation}\label{sec:bt-fence}
We need a backtranslation for the proof.
In this case, given that the languages are so close, we build both a context-based backtranslation (\Cref{sec:ctx-bt-fence}) and a trace-based backtranslation (analogous to the one of the SLH compiler).

\subsubsection{Context-based Backtranslation}\label{sec:ctx-bt-fence}
\begin{align*}
	\backtrfencec{ \trg{H;\OB{{F}}} } =&\ \src{\backtrfencec{ \trg{H} };\backtrfencec{ \trg{\OB{F}} } }
	\\\\
	\backtrfencec{ \trge } =&\ \srce
	\\
	\backtrfencec{ \trg{H; n \mapsto v : \taintt} } =&\ \src{\backtrfencec{ \trg{H} } ; \backtrfencec{\trg{n}}\mapsto\backtrfencec{\trg{v}}:\backtrfencec{\taintt}}
	\\\\
	\backtrfencec{ \trg{f(x) \mapsto s;\ret}} =&\ \src{f(x)\mapsto \backtrfencec{\trg{s}};\ret}
	\\\\
	\backtrfencec{\taintt} =&\ \src{\sigma}
	\\\\
	\backtrfencec{ \trg{n} } =&\ \src{n}
	\\
	\backtrfencec{ \trg{e\op e'} } =&\  \src{\backtrfencec{ \trg{e} }\op\backtrfencec{ \trg{e'} }}
	\\
	\backtrfencec{ \trg{e\bop e'} } =&\ \src{\backtrfencec{ \trg{e} }\bop\backtrfencec{ \trg{e'} }}
	\\\\
	\backtrfencec{ \trg{\skipt} } =&\ \src{\skips}
	\\
	\backtrfencec{ \trg{s;s'} } =&\ \src{\backtrfencec{ \trg{s} };\backtrfencec{ \trg{s'} }}
	\\
	\backtrfencec{ \trg{\letin{x}{e}{s}} } =&\ \src{\letin{x}{\backtrfencec{ \trg{e} }}{\backtrfencec{ \trg{s} }}}
	\\
	\backtrfencec{ \trg{\ifzte{e}{s}{s'}} } =&\ \src{\ifzte{\backtrfencec{ \trg{e} }}{\backtrfencec{ \trg{s} }}{\backtrfencec{ \trg{s'} }}}
	\\
	\backtrfencec{ \trg{\call{f}~ e} } =&\ \src{\call{f}~\backtrfencec{ \trg{e} }}
	\\
	\backtrfencec{ \trg{\asgn{e}{e'}} } =&\ \src{\asgn{\backtrfencec{ \trg{e} }}{\backtrfencec{ \trg{e'} }}}
	\\
	\backtrfencec{ \trg{\letread{x}{e}{s}} } =&\ \src{\letread{x}{\backtrfencec{ \trg{e} }}{\backtrfencec{ \trg{s} }}}
	\\
	\backtrfencec{ \trg{\lfence} } =&\ \src{\skips}
	\\
	\backtrfencec{ \trg{\cmove{x}{e}{e'}{s}} } =&\ \src{ \ifzte{\backtrfencec{ \trg{e'} }}{\letin{x}{\backtrfencec{\trg{e}}}{\skips}}{\skips};\backtrfencec{ \trg{s} } }
\end{align*}
Note that we define the backtranslation of heaps because attackers define them.
We do not define compilation of heaps because components do not define them, though adding them would be simple.

We can use this backtranslation to prove \rdss.

\subsubsection{Properties of the Context-based Backtranslation}\label{sec:ctx-bt-fence-props}
We want the backtranslation to be correct, so given a compiled program and a context, the backtranslation generates a source context that with the program generates a trace that is related to the target one.

\begin{theorem}[Correctness of the Backtranslation for lfence]\label{thm:corr-bt-lfence}(\showproof{corr-bt-lfence})
	\begin{align*}
		\text{ if } 
			&\
			\trg{\ctxt{}\hole{\complfence{P}} \sem \trat{^{\taintt}}}
		\\
		\text{ then }
			&\
			\src{\backtrfencec{\ctxt{}}\hole{P} \sem \tras{^{\sigma}}}
		\\
		\text{ and }
			&\
			\tras{^{\sigma}} \rels \trat{^{\taintt}}
	\end{align*}
\end{theorem}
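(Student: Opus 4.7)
The plan is to prove correctness by exhibiting a cross-language simulation between target executions of $\trg{\ctxt{}\hole{\complfence{\src{P}}}}$ and source executions of $\src{\backtrfencec{\trg{\ctxt{}}}\hole{\src{P}}}$, and then reading off the trace relation. Concretely, I will define a state relation $\Omega \sim \Sigma$ that holds when (i) the target stack has the form $\trg{\OB{\Phi}\cdot(\Omega_t, \omega, \taintt)}$ with either a singleton (non-speculating) or with any non-top frame corresponding to a suspended compiled branch beginning with $\trg{\lfence}$; (ii) heaps agree in the sense of $\vrel$ on values and taints (attacker-defined public heap backtranslated identically, compiler-untouched private heap matching pointwise); (iii) variable bindings agree pointwise via $\vrel$; (iv) the current statements are related, meaning that if control is in compiled code then the target statement is $\complfence{\src{s}}$ (possibly with an extra $\trg{\lfence};$ prefix left over after a speculative branch entry), and if control is in attacker code then the target statement is the structural image of the source one under $\backtrfencec{\cdot}$. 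The backtranslation is a homomorphism on essentially everything except $\trg{\lfence}$ and $\trg{\mt{cmov}}$, which are mapped to $\skips$ and an equivalent $\ifzte{}{}{}$, so these cases will be handled by direct rule matching.

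Next I will prove a two-sided step-simulation lemma by case analysis on the target reduction. For attacker steps (current function in $\trg{\OB{I}}$), I establish lock-step: since the attacker code is obtained by $\backtrfencec{\cdot}$ from the target, a matching source rule fires, and the produced labels agree by $\arel$ modulo the value relation, which is preserved by $\vrel$-related bindings and heaps. For component steps in non-speculative mode (singleton speculation stack), I use ordinary compiler-correctness reasoning: each source statement $\src{s}$ compiles to a target statement $\complfence{\src{s}}$ that reduces in one step (all rules except the $\trg{\lfence}$-prefix on branches), emitting a label whose underlying action matches the source label and whose taint is $\safeta$ (because the pc taint is $\safeta$ and we are applying $\glb$). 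The cross-boundary call/ret cases fall out of the label-building rules being identical between source and target.

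The only subtle case is when a compiled $\src{\ifzte{e}{s_1}{s_2}}$ triggers speculation via \Cref{tr:et-sp-if}. Here the source takes a single step to the correct branch (emitting an $\src{\ifl{\cdot}^\safeta}$), and the target does the same on the new top of the speculation stack, while also pushing a new frame whose statement is $\trg{\lfence;\complfence{s_{\text{wrong}}};\cdots}$ with pc taint $\trg{\unta}$. I will then prove a small self-contained invariant: any speculative run started from such a frame immediately applies \Cref{tr:et-sp-lf} (emitting $\epsilon$), sets the window to $\trg{0}$, and fires \Cref{tr:et-sp-rb}, producing only the single action $\trg{\rollbl^\safeta}$ before the stack collapses back to the pre-speculation configuration. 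None of these labels carry an $\trg{\unta}$ action, so by \Cref{tr:tr-rel-safe-a,tr:tr-rel-safe-h,tr:tr-rel-rollb} they can be absorbed by the trace relation against an empty source suffix, and the invariant is re-established.

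With the step-simulation and the speculation-window invariant in place, the theorem follows by induction on the length of the target reduction sequence: at each step we either (a) produce a related pair of source/target actions (keeping the state relation intact), or (b) produce a safe target action with no corresponding source step (during the speculative lfence/rollback interlude). In either case the accumulated traces satisfy $\tras{^\sigma} \rels \trat{^\taintt}$, and at termination we obtain the required source behaviour $\src{\backtrfencec{\trg{\ctxt{}}}\hole{\src{P}} \sem \tras{^\sigma}}$. The main obstacle I anticipate is bookkeeping the state relation across the speculation stack, i.e.\ articulating precisely what "related" means for frames below the top, and ruling out the pathological case in which a second $\ifzte$ could fire before the pending $\trg{\lfence}$ in a suspended frame; but since $\complfence{\cdot}$ places $\trg{\lfence}$ as the \emph{first} statement of each branch, no such second speculation can arise before the window is zeroed, which is exactly the invariant the proof turns on.
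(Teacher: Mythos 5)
Your proposal is correct and follows essentially the same route as the paper: the paper proves this theorem by combining a lemma that initial states are related with a generalised backward simulation that case-splits on attacker vs.\ compiled code, uses lock-step simulation for backtranslated attacker steps, compiler-correctness-style simulation for component steps, and handles the one non-lock-step case (the compiled \src{ifz}) by observing that the pushed speculative frame begins with \trg{\lfence}, so the window is immediately zeroed, a \trg{\rollbl} fires, and the resulting $\trg{\safeta}$/rollback actions are absorbed by \Cref{tr:tr-rel-safe-a,tr:tr-rel-safe-h,tr:tr-rel-rollb} against an empty source suffix. The only cosmetic differences are that the paper keeps a singleton-stack state relation and confines the temporary violation to the \src{ifz} case of a forward simulation (recovering backward simulation via determinism), whereas you bake the speculating configurations into the relation and argue backward directly; both are sound for these deterministic languages.
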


\begin{theorem}[Generalised Backward Simulation for lfence]\label{thm:bwd-sim-lfence}(\showproof{bwd-sim-lfence})
	\begin{align*}
		\text{ if }
			&\
			\text{ if } \src{f}\in\src{\OB{f''}} \text{ then } \trg{s} = \complfence{s} \text{ else } \src{s} = \backtrfencec{\trg{s}}
			\\
		\text{ and}
			&\
			\text{ if } \src{f'}\in\src{\OB{f''}} \text{ then } \trg{s'} = \complfence{s'} \text{ else } \src{s'} = \backtrfencec{\trg{s'}}
		\\
		\text{ and }
			&\
			\trg{\Sigma}=\trg{w( C, H, \OB{B} \triangleright \proc{s;s''}{\OB{f}\cdot f}, \bot,\safeta)}
		\\
		\text{ and }
			&\
			\trg{\Sigma'}=\trg{ w(C, H', \OB{B'} \triangleright \proc{s';s''}{\OB{f'}\cdot f'},\bot,\safeta)}
		\\
		\text{ and }
			&\
			\trg{(n,\Sigma) \Xtot{\trat{^{\taintt}}} (n',\Sigma')}
		\\
		\text{ and }
			&\
			\src{\Omega} \srel_{\src{\OB{f''}}} \trg{\Sigma}
		\\
		\text{ then }
			&\
			\src{\Omega}=\src{C, H, \OB{B} \triangleright \proc{s;s''}{\OB{f}\cdot f} \Xtos{\tras{^\sigma}} C, H', \OB{B'} \triangleright \proc{s';s''}{\OB{f'}\cdot f'}}=\src{\Omega'}
		\\
		\text{ and }
			&\
			\src{\tras{^\sigma}} \tracerel \trg{\trat{^{\taintt}}}
		\\
		\text{ and }
			&\
			\src{\Omega'} \srelref_{\src{\OB{f''}}}\, \trg{\Sigma'}
	\end{align*}
\end{theorem}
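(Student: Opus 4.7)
The plan is to prove the statement by induction on the number $k$ of target transitions in $\trg{(n,\Sigma) \Xtot{\trat{^{\taintt}}} (n',\Sigma')}$. In the base case $k=0$, we have $\trg{\trat{^{\taintt}}} = \trg{\epsilon}$ and $\trg{\Sigma} = \trg{\Sigma'}$; take $\src{\Omega'} = \src{\Omega}$ and $\src{\tras{^\sigma}} = \srce$, which is related to $\trg{\epsilon}$ by \Cref{tr:tr-rel-empty}, and the state relation is preserved trivially.

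For the inductive step, I split the derivation as $\trg{\Sigma} \xltot{\trgb{\lambda}_0^{\taintt_0}} \trg{\Sigma_1} \Xtot{\trat_1^{\taintt_1}} \trg{\Sigma'}$ and case-analyze the rule deriving the first step. Because $\trg{\Sigma}$ has the stable shape with window $\trgb{\bot}$ and a single speculation instance, the one active frame is the outermost one, and the top statement is either $\backtrfencec{\trg{s}}$-compatible (context, $\trg{f} \in \trg{\OB{I}}$) or a \complfence{\cdot}-image (compiled, $\trg{f} \notin \trg{\OB{I}}$). For each non-$\trg{ifz}$ statement form, the compilation is homomorphic and the backtranslation is homomorphic too, so the target single step is matched in lock-step by a source single step, producing a label related by the appropriate clause of $\arel$; then we invoke the induction hypothesis on $\trg{\Sigma_1}$, which is again of the stable shape, and concatenate the resulting source derivation and trace. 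Cross-boundary calls and returns are handled similarly: the call/return label structure of the target semantics agrees with the source, and the $\OB{f''}$ bookkeeping is updated consistently across the transition so that the next frame's top statement again sits on the correct side of the compiled/backtranslated split.

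The only subtle case is when the top statement is $\complfence{\ifztes{e}{s_t}{s_f}} = \trg{\ifztet{\complfence{e}}{\{\lfence;\complfence{s_t}\}}{\{\lfence;\complfence{s_f}\}}}$. Here \Cref{tr:et-sp-if} fires, pushing a new speculation instance whose body begins with $\trg{\lfence}$. The immediately following reduction must fire \Cref{tr:et-sp-lf}, clamping the window to $\trg{0}$, and the step after that fires \Cref{tr:et-sp-rb}, popping the speculative frame. Across these three target transitions, the trace fragment emitted is exactly $\trg{\ifl{n}^{\safeta} \cdot \epsilon \cdot \rollbl^{\safeta}}$: the $\trg{\ifl{n}}$ action inherits the parent pc-taint $\trg{\safeta}$ (by \Cref{tr:tt-sp-if}), the \trg{\lfence} step is silent, and rollbacks are always tainted $\trg{\safeta}$ (by \Cref{tr:tt-sp-rb}). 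This fragment is related via \Cref{tr:tr-rel-same-h} (for the $\trg{\ifl{n}^{\safeta}}$), \Cref{tr:tr-rel-safe-h} (for the silent \trg{\lfence}) and \Cref{tr:tr-rel-rollb} (for the rollback) to the single source step taken by $\src{\ifztes{e}{s_t}{s_f}}$ emitting $\src{\ifl{n}^{\safeta}}$; after the rollback, we land in a state of the stable shape whose source counterpart is the post-$\src{ifz}$ state, and the induction hypothesis dispatches the remainder.

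The main obstacle is precisely this speculation bracket: we must certify that nothing unsafe or visible escapes between the $\trg{\ifl}$ and the $\trg{\rollbl}$. This reduces to two structural facts about \complfence{\cdot}, both provable by inspection of its clauses: first, every compiled branch body literally starts with $\trg{\lfence}$, so no other speculative action can precede the window clamp; second, no nested $\trg{\ifz}$ can be reached before the clamp, which rules out the possibility of further speculative pushes inside the bracket. Together with the fact that the source state relation $\srel_{\src{\OB{f''}}}$ is only required at the endpoints of the derivation (not at intermediate speculating states where the target stack has more than one instance), this lets the three-step bracket be absorbed cleanly into one source if-step, and the induction goes through.
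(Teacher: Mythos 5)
Your proposal is correct and follows essentially the same route as the paper: induction on the target derivation, a case split on whether the executing function lies in the compiled component or in the attacker, explicit lock-step treatment of cross-boundary calls and returns, and the key observation that the compiled conditional produces a three-step bracket --- \Cref{tr:et-sp-if}, then \Cref{tr:et-sp-lf} clamping the window, then \Cref{tr:et-sp-rb} --- whose emitted fragment is absorbed into the single source \src{ifz} step via the safe-action and rollback clauses of the trace relation (\Cref{tr:tr-rel-rollb}, \Cref{tr:ac-rel-rlb}). The one genuine methodological difference is how you discharge the component-side steps: you assert directly that each target step is ``matched in lock-step by a source single step,'' whereas the paper obtains the backward direction indirectly, by first proving a \emph{forward} simulation for compiled statements (\Cref{thm:fwd-sim-stm-lfence}, where the \trg{\lfence} bracket actually lives) and then combining it with determinism of the semantics to flip it around (\Cref{thm:bwd-sim-comp-steps-lfence}); the attacker side is handled by a separate backward lemma for backtranslated code (\Cref{thm:back-sim-bts-lfence}). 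The paper's detour buys something your direct argument leaves implicit: given only a target step, one still owes a proof that the source state is not stuck and that the matching source step is the \emph{unique} one landing in a related state; forward simulation plus determinism packages exactly that obligation. For this homomorphic compiler the gap is easily closed, so I would not call it an error, but if you keep the direct backward phrasing you should at least state and use determinism of $\xtos{}$ explicitly. One further bookkeeping point your bracket (and, to be fair, the paper's sketch) glosses over: after the rollback, the surviving correct-branch frame still begins with the second \trg{\lfence} inserted by \complfence{\cdot}, so either that silent step must be folded into the bracket as a fourth target transition or the endpoint state relation must be restated up to leading \trg{\lfence}s; otherwise the ending target statement is not literally of the form \complfence{s'}.
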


\subsubsection{Simulation and Relation for Compiled Code}\label{sec:sim-code-fence}
We need the usual backward simulation result which we derive from forward simulation plus determinism of the semantics.

Values are only nats, so values are related if they are the same nat.
Heaps are related if they map related nats (the same address) to related values.

\mytoprule{\text{Heap relation} \hreldef \text{Value relation} \vreldef }
\begin{center}
	\typerule{Heap - base }{}{
		\srce \hrel \trge
	}{hrel-b}
	\typerule{Heap - ind }{
		\src{H} \hrel \trg{H}
		\\
		\src{z} \vrel \trg{z}
		&
		\src{v} \vrel \trg{v}
		&
		\src{\sigma} \equiv \taintt
	}{
		\src{H ; z\mapsto v:\sigma} \hrel \trg{H; z\mapsto v:\taintt}
	}{hrel-i}
	\typerule{Heap - start }{
		\src{H} \hrel \trg{H}
		&
		\src{H'} \hrel \trg{H'}
		\\
		\src{v} \vrel \trg{v}
		&
		\src{\sigma} \equiv \taintt
	}{
		\src{H ; 0\mapsto v:\sigma ; H'} \hrel \trg{H;0\mapsto v:\taintt ; H'}
	}{hrel-start}
	\typerule{Value - num }{
		\src{z} \equiv \trg{z} &
		\com{z}\in\mb{Z}
	}{
		\src{z} \vrel \trg{z}
	}{vr}
\end{center}
\botrule

Bindings are related if they map same-named variables to related values.
Components are related according to a list of function names \src{\OB{f}} which identify compiled code.
All functions in the list have a compiled counterpart in the target component while all functions not in the list have a backtranslated counterpart in the source component.
States are related if the target is not speculating and the $\Omega$ sub-component of the target state is related to the source state. 
This relation is the key one for this compiler, the SLH compiler will need a different one.	
 
\mytoprule{\text{Binding relation} \breldef \text{ Component relation} \creldef  \text{ State relation} \sreldef}
\begin{center}
	\typerule{ Binding - base }{}{
		\srce \brel \trge
	}{brel-b}
	\typerule{ Binding - ind }{
		\src{B} \brel \trg{B}
		&
		\src{v} \vrel \trg{v}	
		&
		\src{\sigma} \equiv \taintt
	}{
		\src{B \cdot x\mapsto v:\sigma} \brel \trg{B\cdot x\mapsto v:\taintt}
	}{brel-i}
	\typerule{ Bindings }{
		\src{\OB{B}} \brel \trg{\OB{B}}
		&
		\src{B} \brel \trg{B}
	}{
		\src{\OB{B}\cdot B} \brel \trg{\OB{B}\cdot B}
	}{bsrel}
	\typerule{ Components }{
		\forall \src{f}\in\src{\OB{f}}.
		\text{ if } \src{f(x)\mapsto s} \in \src{\OB{F}}
		\text{ then } \trg{f(x)\mapsto \complfence{s}} \in \trg{\OB{F}}
		\\
		\forall \trg{f(x)\mapsto s} \in \trg{\OB{F}}
		\text{ if } \trg{f}\notin\src{\OB{f}}
		\text{ then } \src{f(x)\mapsto \backtrfencec{\trg{s}}} \in \src{\OB{F}}
		\\
		\src{\OB{I}}\equiv\trg{\OB{I}}
	}{
		\src{\OB{F};\OB{I}} \crel_{\src{\OB{f}}} \trg{\OB{F};\OB{I}} 
	}{comps}
	\typerule{ States }{
		\src{\OB{B}} \brel \trg{\OB{B}}
		&
		\src{H} \hrel \trg{H}
		&
		\src{\OB{f}} \equiv \trg{\OB{f}}
		&
		\src{C} \crel_{\src{\OB{f''}}} \trg{C}
	}{
		\src{C,H,\OB{B}\triangleright \proc{s}{\OB{f}} } \srel_{\src{\OB{f''}}} \trg{w, (C,H,\OB{B}\triangleright \proc{s}{\OB{f}}, \bot, \safeta) }
	}{stats}
\end{center}
\botrule
The state relation is very powerful because the target cannot be in a speculation state.
We can break this invariant temporarily (at the beginning of a compiled if) but we need to reinstate it (by executing the lfence).

Starting with a related stack frame, if a source expression with a substitution star-reduces to a value, then the compiled expression with the compiled substitution star-reduces to the compiled value.
\begin{lemma}[Forward Simulation for Expressions in lfence]\label{thm:fwd-sim-exp-lfence}(\showproof{fwd-sim-exp-lfence})
	\begin{align*}
		\text{ if }
			&\ 
			\src{B\triangleright e \bigreds v : \sigma}
		\\
		\text{ and }
			&\
			\src{B}\brelref\trg{B}
		\\
		\text{ and }
			&\
			\src{\sigma}\equiv\taintt
		\\
		\text{ then }
			&\
			\trg{B \triangleright \complfence{e}\bigredt\complfence{v} : \taintt}
	\end{align*}
\end{lemma}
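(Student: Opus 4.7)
The plan is to proceed by structural induction on the expression \src{e}, or equivalently on the derivation of $\src{B\triangleright e \bigreds v : \sigma}$. Since \complfence{\cdot} is homomorphic on expressions (it leaves values and variables unchanged and distributes over $\op$ and $\bop$), and since the binding relation $\breldef$ preserves values and taints strictly (by \Cref{tr:brel-b,tr:brel-i,tr:vr}, related bindings map each variable to the identical value with the identical taint), each case should close by directly mirroring the source derivation at the target and appealing to the matching rule in the target expression semantics.

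First I would handle the base cases. For $\src{e} = \src{v}$, the source derivation uses \Cref{tr:es-big-val} combined with \Cref{tr:ts-big-val}, giving $\src{\sigma} = \src{\safeta}$; at the target, $\complfence{\src{v}} = \trg{v}$ (since \complfence{n} = \trg{n}), and the analogous target rules give $\trg{B \triangleright \trg{v}\bigredt \trg{v} : \trg{\safeta}}$. For $\src{e} = \src{x}$, the source derivation applies \Cref{tr:es-big-var,tr:ts-big-var} after looking up $\src{B}(\src{x}) = \src{v:\sigma}$; from $\src{B}\brelref\trg{B}$ and the inductive structure of \Cref{tr:brel-b,tr:brel-i,tr:bsrel}, the target binding satisfies $\trg{B}(\trg{x}) = \trg{v}:\trg{\taintt}$ with $\src{v}\vrel\trg{v}$ and $\src{\sigma}\equiv\trg{\taintt}$, so the lookup rule fires at the target with matching value and taint. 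Note $\complfence{\src{x}} = \trg{x}$ follows from the homomorphic action of the compiler on variables.

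For the inductive cases $\src{e} = \src{e_1 \op e_2}$ and $\src{e} = \src{e_1 \bop e_2}$, I would invoke the induction hypothesis on each subexpression, yielding $\trg{B \triangleright \complfence{e_i}\bigredt\complfence{v_i}:\trg{\taintt_i}}$ with $\src{\sigma_i}\equiv\trg{\taintt_i}$, then apply the corresponding target rule (analogue of \Cref{tr:es-big-op,tr:ts-big-op} or \Cref{tr:es-big-bop,tr:ts-big-bop}) to assemble the result. The computed value $[\src{n\op n'}]$ and the taint $\src{\sigma_1}\sqcup\src{\sigma_2}$ are preserved since the operators are common between the two languages and the lub is compatible with the tag-equivalence $\equiv$.

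There is no substantive obstacle here: the compiler is literally the identity on the expression sublanguage, the binding relation is pointwise equality on values and taints, and the two expression semantics coincide modulo colour. The only point needing care is threading the combined judgement $\bigred \cdot : \cdot$ through \Cref{tr:e-comb}, i.e.\ ensuring the value and taint subderivations are independently simulated and then recombined at the target by the analogue of \Cref{tr:e-comb}; this is mechanical. The lemma will be invoked in the larger simulation argument whenever a statement rule (assignments, guards, let-bindings, calls) needs to evaluate an expression in a related stack frame, so keeping its hypotheses minimal (a single binding frame, not the full stack) is deliberate and sufficient.
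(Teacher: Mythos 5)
Your proof is correct and matches the paper's approach: the paper's own proof of this lemma is simply ``Trivial induction on $\src{e}$,'' and your case analysis (values, variables, then $\op$/$\bop$ via the induction hypothesis, using that \complfence{\cdot} is the identity on expressions and that $\brelref$ is pointwise equality of values and taints) is exactly the elaboration of that one-liner. No gap.
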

Starting with related components, heaps and stack frames, if a source statement takes a step emitting a label, then the compiled statement can take several steps and emit a trace that is related to the label.
Effectively, the target also only emits a single label, but we need to account for multiple steps (for the compilation of the if).
We keep track of arbitrary source and target continuations \src{s''} and \trg{s''} that are not touched by the reductions in order to use this result in a general setting.
\begin{lemma}[Forward Simulation for Compiled Statements in lfence]\label{thm:fwd-sim-stm-lfence}(\showproof{fwd-sim-stm-lfence})
	\begin{align*}
		\text{ if }
			&\
			\src{\Omega}=\src{C, H, \OB{B} \triangleright \proc{s;s''}{\OB{f}} \xtos{\lambda^\sigma} C, H', \OB{B'} \triangleright \proc{s';s''}{\OB{f'}}}=\src{\Omega'}
		\\
		\text{ and }
			&\
			\src{\Omega} \srelref_{\src{\OB{f''}}}\, \trg{\Sigma}
		\\
		\text{ and }
			&\
			\trg{\Sigma}=\trg{w, (C, H, \OB{B} \triangleright \proc{\complfence{s};s''}{\OB{f}}, \bot, \safeta) }
		\\
		\text{ and }
			&\
			\trg{\Sigma'}=\trg{ w (C, H', \OB{B'} \triangleright \proc{\complfence{s'};s''}{\OB{f'}}, \bot, \safeta)}
		\\
		\text{ then }
			&\
			\trg{(n,\Sigma) \Xtot{\trat{^{\taintt}}} (n',\Sigma')}
		\\
		\text{ and }
			&\
			\src{\lambda^\sigma} \tracerel \trg{\trat{^{\taintt}}} \qquad \text{( using the trace relation!)}
		\\
		\text{ and }
			&\
			\src{\Omega'} \srelref_{\src{\OB{f''}}}\, \trg{\Sigma'}
	\end{align*}
\end{lemma}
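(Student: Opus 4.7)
The proof will proceed by case analysis on the rule used to derive the source single-step $\src{C,H,\OB{B}\triangleright \proc{s;s''}{\OB{f}}} \xtos{\lambda^\sigma} \src{C,H',\OB{B'}\triangleright \proc{s';s''}{\OB{f'}}}$. The state relation $\srelref_{\src{\OB{f''}}}$ gives us matching heaps ($\src{H}\hrelref\trg{H}$), bindings ($\src{\OB{B}}\brelref\trg{\OB{B}}$), components ($\src{C}\crelref_{\src{\OB{f''}}}\trg{C}$), a singleton target stack, window $\bot$, and pc taint $\safeta$. I first observe that whenever source expression evaluation produces $\src{v:\sigma}$, Lemma \ref{thm:fwd-sim-exp-lfence} gives the matching target evaluation $\trg{\complfence{v}:\taintt}$ with $\src{\sigma}\equiv\taintt$, which will be the hook for action taints.

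For the homomorphic (non-branching) cases, i.e.\ sequence, letin, write, read, assign-prv, read-prv, call variants, and returns, the compiled statement has the same top-level shape as the source. The target takes a single step through E-\TR-speculate-action (or E-\TR-speculate-epsilon for silent steps), which is applicable because $\bot$ matches $n+1$ and none of these statements are $\trg{\ifzte{\cdots}{\cdots}{\cdots}}$ or $\trg{\lfence}$. The emitted target label is identical to the source label modulo $\complfence{\cdot}$ on values/addresses; combined with pc taint $\safeta$ (so $\taintt'\glb\safeta=\taintt'$), the action relations from \Cref{sec:sim-code-fence} give $\src{\lambda^\sigma}\areldef\trg{\lambda^{\taintt}}$, which yields $\src{\lambda^\sigma}\tracerel\trg{\lambda^{\taintt}}$ by \Cref{tr:tr-rel-same} or \Cref{tr:tr-rel-same-h}. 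State relation is preserved since heap/binding updates are homomorphic. The inductive E-\SR-step case is handled by the structural subcase, applied to $\src{s}$ alone with the residual continuation $\src{s''}$ unchanged.

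The key case is the conditional $\src{\ifzte{e}{s_1}{s_2}}$, where the target compilation inserts \trg{\lfence} at the head of each branch. Source emits $\src{\ifl{v}^\safeta}$ and steps to, say, $\src{s_1;s''}$. In the target, E-\TR-speculate-if fires: the committed state becomes $\trg{\{\lfence;\complfence{s_1}\};s''}$ with window $\bot$, and a speculated instance carrying the wrong branch $\trg{\{\lfence;\complfence{s_2}\};s''}$ is pushed with some finite window $j$ and pc taint $\trg{\unta}$. The very first instruction in the mis-speculated branch is $\trg{\lfence}$, so E-\TR-speculate-lfence immediately collapses its window to $0$; E-\TR-speculate-rollback then pops the instance emitting $\trg{\rollbl^\safeta}$. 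Finally, the committed branch's own $\trg{\lfence}$ runs silently, restoring a singleton stack with window $\bot$ and pc taint $\safeta$. The target trace $\trg{\ifl{v}^\safeta\cdot\epsilon\cdot\rollbl^\safeta\cdot\epsilon}$ relates to the source $\src{\ifl{v}^\safeta}$ using \Cref{tr:tr-rel-same-h} for the if action, \Cref{tr:tr-rel-rollb} to absorb the rollback, and the safe-$\epsilon$ cases for the silent transitions. The heap and the full stack of bindings are untouched by the speculative sub-execution, so the state relation is reinstated.

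The main obstacle is the conditional case, specifically keeping three things under control simultaneously: (i) the mis-speculated branch must produce no observable actions before its window collapses, which is guaranteed only because $\trg{\lfence}$ is syntactically the first statement after the compiled branch; (ii) the pc taint and speculation stack must be restored to the invariant required by $\srelref$ (singleton stack, window $\bot$, pc $\safeta$), which follows from the rollback popping the single pushed instance; and (iii) the correspondence between the single source label and the multi-step target trace must match the trace relation exactly, which is where \Cref{tr:tr-rel-rollb} together with the action relation $\src{\epsilon}\arel\trg{\rollbl^\safeta}$ does the bulk of the work. All other cases are routine bookkeeping that closely mirrors standard compiler-correctness simulation arguments.
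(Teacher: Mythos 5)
Your proof is correct and follows essentially the same route as the paper's: structural induction on the source statement, with the homomorphic cases discharged by the expression-simulation lemma and the action relation, and the conditional singled out as the only case needing multiple target steps, where the \trg{\lfence} at the head of the mis-speculated branch zeroes the window so the rollback fires immediately and is absorbed by \Cref{tr:tr-rel-rollb}/\Cref{tr:ac-rel-rlb}. Your write-up is in fact considerably more explicit than the paper's one-line treatment of the \src{ifz} case, but the argument is the same.
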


In the general theorem we need backward simulation, which we derive from forward simulation in a standard way.
Note that by ending up in a state with a compiled statement, we rule out cross-boundary calls and returns.
These cases pop up in the proofs using this theorem and we deal with them there. 
\begin{theorem}[Backward Simulation for Compiled Steps in lfence]\label{thm:bwd-sim-comp-steps-lfence}(\showproof{bwd-sim-comp-steps-lfence})
	\begin{align*}
		\text{ if }
			&\
			\trg{\Sigma}=\trg{w, (C, H, \OB{B} \triangleright \proc{\complfence{s};s''}{\OB{f}}, \bot, \safeta) }
		\\
		\text{ and }
			&\
			\trg{\Sigma'}=\trg{ w (C, H', \OB{B'} \triangleright \proc{\complfence{s'};s''}{\OB{f'}}, \bot, \safeta)}
		\\
		\text{ and }
			&\
			\trg{(n,\Sigma) \Xtot{\trat{^{\taintt}}} (n',\Sigma')}
		\\
		\text{ and }
			&\
			\src{\Omega} \srelref_{\src{\OB{f''}}}\, \trg{\Sigma}
		\\
		\text{ then }
			&\
			\src{\Omega}=\src{C, H, \OB{B} \triangleright \proc{s;s''}{\OB{f}} \xtos{\lambda^\sigma} C, H', \OB{B'} \triangleright \proc{s';s''}{\OB{f'}}}=\src{\Omega'}
		\\
		\text{ and }
			&\
			\src{\lambda^\sigma} \tracerel \trg{\trat{^{\taintt}}} \qquad \text{( using the trace relation!)}
		\\
		\text{ and }
			&\
			\src{\Omega'} \srelref_{\src{\OB{f''}}}\, \trg{\Sigma'}
	\end{align*}
\end{theorem}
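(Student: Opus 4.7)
The plan is to derive this backward simulation from the already-established forward simulation (Theorem~\ref{thm:fwd-sim-stm-lfence}) by exploiting the determinism of the target semantics (and of the source). I would proceed by strong induction on the length of the target reduction $\trg{(n,\Sigma) \Xtot{\trat{^{\taintt}}} (n',\Sigma')}$, i.e.\ on $n' - n$.

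For the base case, zero target steps means $\trg{\Sigma} = \trg{\Sigma'}$, which forces $\complfence{\src{s}} = \complfence{\src{s'}}$ and $\trg{\OB{B}} = \trg{\OB{B'}}$, hence $\src{s} = \src{s'}$ by injectivity of $\complfence{\cdot}$ on statements, so the source takes the empty trace with $\src{\lambda^\sigma} = \src{\epsilon}$ and $\src{\Omega'} = \src{\Omega}$; the state relation is preserved trivially. For the inductive step, I would first argue that the source can take at least one step: the target is poised to reduce a compiled statement $\complfence{\src{s}}$ whose shape matches $\src{s}$ case-by-case (skip, sequence, let-in, if, call, read/write on either heap, return), and by inspection of each case $\src{s}$ admits a reduction $\src{\Omega \xtos{\lambda^\sigma} \Omega_1}$ to some $\src{\Omega_1} = \src{C, H_1, \OB{B_1} \triangleright \proc{s_1;s''}{\OB{f_1}}}$ whenever $\src{\Omega}$ is not already final (and the final case only arises in the base case above).

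Then I would apply forward simulation (Theorem~\ref{thm:fwd-sim-stm-lfence}) to this source step, obtaining a target reduction $\trg{(n,\Sigma) \Xtot{\trat{_1^{\taintt}}} (n_1, \Sigma_1)}$ with $\src{\lambda^\sigma} \tracerel \trg{\trat{_1^{\taintt}}}$ and $\src{\Omega_1}\, \srelref_{\src{\OB{f''}}}\, \trg{\Sigma_1}$, where $\trg{\Sigma_1}$ is again a single-instance state with window $\trgb{\bot}$. By determinism of the target single-step relation $\xltot{}$, the transitive closure $\Xtot{}$ is also deterministic up to the choice of how many silent steps to absorb, so $\trg{\trat{_1^{\taintt}}}$ must be a prefix of $\trg{\trat{^{\taintt}}}$ and $\trg{\Sigma_1}$ must lie on the path to $\trg{\Sigma'}$. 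I would then apply the induction hypothesis to $\trg{(n_1,\Sigma_1) \Xtot{\trat{_2^{\taintt}}} (n', \Sigma')}$ to obtain a matching source reduction $\src{\Omega_1 \Xtos{\tras{_2^{\sigma}}} \Omega'}$, and concatenate to get $\src{\Omega \Xtos{\lambda^\sigma \cdot \tras{_2^\sigma}} \Omega'}$ with $\src{\lambda^\sigma \cdot \tras{_2^\sigma}} \tracerel \trg{\trat{^{\taintt}}}$ by compositionality of the trace relation.

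The main obstacle is the determinism argument in the presence of compiled if-statements: when the target reduces $\complfence{\ifzte{e}{s}{s'}}$, it pushes a mis-speculating instance, executes exactly one \trg{\lfence} (which zeroes the window), triggers a \trg{\rollbl}, and only then proceeds down the correct branch. I need forward simulation to produce precisely this speculation/lfence/rollback micro-pattern, and I need to argue that any other candidate target reduction sequence of the same length starting from $\trg{\Sigma}$ coincides with it. This is where the specific shape of $\complfence{\cdot}$—placing \trg{\lfence} at the head of both branches—is essential, since it guarantees the speculation window is immediately exhausted and the stack returns to a singleton with window $\trgb{\bot}$, reinstating the strong cross-language state relation and keeping the determinism argument clean. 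A secondary subtlety is handling cross-boundary calls and returns: the hypothesis that both the starting and ending statements are of the form $\complfence{\cdot}$ rules these out, so I would only need to treat internal-to-component reductions in the induction.
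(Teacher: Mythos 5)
Your proposal is correct and rests on exactly the same key idea as the paper: reduce backward simulation to the already-proved forward simulation (\Cref{thm:fwd-sim-stm-lfence}) together with determinism of the target semantics. The paper packages this as a one-shot proof by contradiction on the final state (assume the source reaches a different $\src{\Omega''}$, forward-simulate it, and let determinism force the two target endpoints to coincide), whereas you unfold it into an induction on the target reduction with prefix matching; your version does somewhat more bookkeeping than the single-source-step statement requires, but it is the same argument and additionally makes explicit the existence of the source step, which the paper's contradiction argument leaves implicit.
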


\subsubsection{Simulation and Relation for Backtranslated Code}\label{sec:bwd-sim-lfence}
We need backward simulation for backtranslated code.
Since we are doing a context-based backtranslation, and since backtranslated values are related to source values using $\vrelref$ as for compiled values, we do not need extra relations.

As before, we need two lemmas on the backward simulation for backtranslated expressions and on the backward simulation for statements.
\begin{lemma}[Backward Simulation for Backtranslated Expressions in lfence]\label{thm:back-sim-bte-lfence}(\showproof{back-sim-bte-lfence})
	\begin{align*}
		\text{ if }
			&\ 
			\trg{B \triangleright {e}\bigredt{v} : \taintt}
		\\
		\text{ and }
			&\
			\src{B}\brelref\trg{B}
		\\
		\text{ and }
			&\
			\src{\sigma}\equiv\taintt
		\\
		\text{ then }
			&\
			\src{B\triangleright \backtrfencec{\trg{e}} \bigreds \backtrfencec{\trg{v}} : \sigma}
	\end{align*}
\end{lemma}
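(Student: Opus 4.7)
The plan is to prove this by straightforward structural induction on the target expression $\trg{e}$ (equivalently, on the derivation of $\trg{B \triangleright e \bigredt v : \taintt}$). The backtranslation $\backtrfencec{\cdot}$ on expressions is homomorphic (it preserves numerals, variables, and both arithmetic/comparison operators), and the source big-step expression semantics mirrors the target one rule-for-rule. Combined with the fact that the binding relation $\brelref$ pointwise relates variables to related values with matching taints, this should make the proof essentially mechanical.

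First I would handle the two base cases. For $\trg{e} \equiv \trg{n}$, the derivation is by E-$\TR$-val, giving $\trg{v} = \trg{n}$ and $\taintt = \trg{\safeta}$. Since $\backtrfencec{\trg{n}} = \src{n}$, applying E-$\SR$-val yields $\src{B \triangleright n \bigreds n : \safeta}$, and the matching taint requirement follows from $\src{\sigma} \equiv \taintt = \trg{\safeta}$. For $\trg{e} \equiv \trg{x}$, the derivation is by E-$\TR$-var, giving $\trg{B(x)} = \trg{v:\taintt}$. From $\src{B} \brelref \trg{B}$ (unfolded through brel-b/brel-i/bsrel), we obtain $\src{B(x)} = \src{v':\sigma'}$ with $\src{v'} \vrelref \trg{v}$ and $\src{\sigma'} \equiv \taintt$. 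Since $\vrelref$ on numerals is identity and $\backtrfencec{\trg{v}} = \src{v}$ for numerals, $\src{v'} = \backtrfencec{\trg{v}}$, so E-$\SR$-var delivers the result with $\src{\sigma} \equiv \taintt$.

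Next I would handle the two inductive cases, which are symmetric. For $\trg{e} \equiv \trg{e_1 \op e_2}$ (derived by E-$\TR$-op), the premises give $\trg{B \triangleright e_i \bigredt n_i : \sigma_i}$ with $\trg{v} = [\trg{n_1 \op n_2}]$ and $\taintt = \trg{\sigma_1} \sqcup \trg{\sigma_2}$. Applying the induction hypothesis twice (with $\src{\sigma_i}$ chosen to equal $\trg{\sigma_i}$) yields $\src{B \triangleright \backtrfencec{\trg{e_i}} \bigreds \backtrfencec{\trg{n_i}} : \sigma_i}$. Since $\backtrfencec{\trg{e_1 \op e_2}} = \src{\backtrfencec{\trg{e_1}} \op \backtrfencec{\trg{e_2}}}$ and $\backtrfencec{\trg{v}} = \backtrfencec{[\trg{n_1 \op n_2}]} = [\src{n_1 \op n_2}]$, the rules E-$\SR$-op (for values) and T-$\SR$-op (for taints, combined via $\sqcup$) together give the conclusion. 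The case for $\bop$ is identical modulo replacing E-$\SR$-op by E-$\SR$-comparison.

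I do not expect any real obstacle here: the only mildly delicate point is threading the taint consistently through the binding relation in the variable case, and confirming that the source T-$\SR$-op rule computes taints with the same $\sqcup$ operator as the target T-$\TR$-op rule. Both checks are discharged by inspecting the semantics rules given earlier. The lemma is essentially the dual of \Cref{thm:fwd-sim-exp-lfence} and follows by the same inductive scheme, exploiting homomorphicity of $\backtrfencec{\cdot}$ on expressions.
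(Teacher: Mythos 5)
Your proof is correct and follows exactly the route the paper takes: structural induction on $\trg{e}$, with the numeral case trivial, the variable case discharged by the binding relation $\brelref$, and the $\op$/$\bop$ cases by the induction hypothesis plus homomorphicity of $\backtrfencec{\cdot}$. The paper's own proof is just a one-line sketch of this same scheme, so your more detailed write-out adds nothing structurally new but fills in the details faithfully.
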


\begin{lemma}[Backward Simulation for Backtranslated Statements]\label{thm:back-sim-bts-lfence}(\showproof{back-sim-bts-lfence})
	\begin{align*}
		\text{ if }
			&\
			\trg{\Sigma}=\trg{w(C, H, \OB{B} \triangleright \proc{{s};s''}{\OB{f}},\bot,\safeta) }
		\\
		\text{ and }
			&\
			\trg{\Sigma'}=\trg{ w(C, H', \OB{B'} \triangleright \proc{{s'};s''}{\OB{f'}},\bot\safeta)}
		\\
		\text{ and }
			&\
			\trg{\Sigma \xltot{{\trgb{\lambda}^{\taintt}}} \Sigma'}
		\\
		\text{ and }
			&\
			\src{\Omega} \srel_{\src{\OB{f''}}} \trg{\Sigma}
		\\
		\text{ then }
			&\
			\src{\Omega}=\src{C, H, \OB{B} \triangleright \proc{\backtrfencec{\trg{s}};s''}{\OB{f}} \xtos{\lambda^\sigma} C, H', \OB{B'} \triangleright \proc{\backtrfencec{\trg{s'}};s''}{\OB{f'}}}=\src{\Omega'}
		\\
		\text{ and }
			&\
			\src{\lambda^\sigma} \arelref \trg{{\trgb{\lambda}^{\taintt}}} \qquad \text{( using the action relation!)}
		\\
		\text{ and }
			&\
			\src{\Omega'} \srel_{\src{\OB{f''}}} \trg{\Sigma'}
	\end{align*}
\end{lemma}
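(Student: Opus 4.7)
The plan is to prove the lemma by case analysis on the rule deriving the target transition $\trg{\Sigma} \xltot{\trgb{\lambda}^{\taintt}} \trg{\Sigma'}$. Because both $\trg{\Sigma}$ and $\trg{\Sigma'}$ have a singleton speculation stack with window $\trgb{\bot}$ and pc taint $\trg{\safeta}$, only the ``base-level'' rules can actually fire: E-\TR-speculate-epsilon and E-\TR-speculate-action for silent and observable steps, and the attacker-branch variant E-\TR-speculate-if-att for conditionals (the main E-\TR-speculate-if is disabled because backtranslated code executes inside functions belonging to the import list $\trg{\OB{I}}$, while rules that shrink the stack or zero the window would break the shape of $\trg{\Sigma'}$). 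Each such rule reduces the transition to an underlying non-speculative step $\trg{\Omega_0 \xtot{\trgb{\lambda}_0} \Omega_1}$ on the top frame, so after unfolding I intend to proceed by case analysis on the head statement $\trg{s}$.

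For every syntactic form of $\trg{s}$, the backtranslation $\backtrfencec{\trg{s}}$ either produces its homomorphic source counterpart (primitive statements, sequences, let-ins, reads, writes, calls, returns, branches) or collapses to simpler source code (lfence becomes skip; cmove becomes a branching let-in). In each case I will construct the matching source reduction using the corresponding E-\SR rule, invoking Lemma~\ref{thm:back-sim-bte-lfence} to evaluate sub-expressions to related values with equal taints, and using the component relation $\src{C} \crel_{\src{\OB{f''}}} \trg{C}$ to look up callee bodies for call and return. Heap and binding updates on the two sides mirror each other pointwise, so re-establishing $\src{\Omega'} \srel_{\src{\OB{f''}}} \trg{\Sigma'}$ follows from the preservation of $\brel$, $\hrel$, and $\crel$, together with the fact that backtranslated code never jumps into functions in $\src{\OB{f''}}$ (so the side ``backtranslated vs.\ compiled'' partition of function names is respected).

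The two cases I expect to require particular care are lfence and cmove. Lfence backtranslates to skip and the target emits no observation, so matching the $\src{\epsilon}$ source step against $\trg{\epsilon}^{\trg{\safeta}}$ through \Cref{tr:ac-rel-ep-hp} is easy, but I must argue that, at the base level, executing lfence preserves the window shape assumed by the lemma (intuitively: when not speculating, lfence is a no-op). Cmove produces no target observation while its backtranslation to $\src{\ifzte{\cdots}{\cdots}{\cdots}}$ would emit an $\src{\ifl{\cdot}}$ action at the statement semantics; however, since the executing function lies in the attacker's import set, rule E-\SR-single filters this action to $\src{\epsilon}$ at the trace level, keeping the action- and trace-relations intact. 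Beyond these, the main bookkeeping hurdle will be threading the function-name stack $\src{\OB{f}}$ through call and return boundaries so that the $\src{\OB{f}} \equiv \trg{\OB{f}}$ clause of $\srel$ is preserved, but this reduces to routine stack arithmetic once the case analysis is in place.
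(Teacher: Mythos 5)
Your overall strategy---unfolding the speculative step to the underlying non-speculative one and then doing a case analysis on the head statement, discharging each case with \Cref{thm:back-sim-bte-lfence} and preservation of $\brel$, $\hrel$, $\crel$---is the same structural induction the paper performs, and your preliminary observation that only \Cref{tr:et-sp-eps}, \Cref{tr:et-sp-act} and the attacker variant of the if-rule can fire on a singleton stack with window $\trgb{\bot}$ matches the paper's remark that speculation cannot start in context code. On \trg{cmove} you are in fact more careful than the paper (which dismisses the case with ``by IH''): the mismatch between the target's $\trg{\epsilon}$ and the source $\src{\ifl{v}}$ action is real. Note, however, that your fix appeals to the trace-level filtering of \Cref{tr:eus-tr-sin}, while the lemma is stated as a one-step simulation against the \emph{action} relation $\arelref$, which has no rule relating a source if-action to a target $\trg{\epsilon}$; so strictly the cmove case needs several source steps and can only be closed at the level of the enclosing theorem, a tension you should make explicit rather than paper over.

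Two points need repair. First, the claim that ``backtranslated code never jumps into functions in $\src{\OB{f''}}$'' is false: the attacker does call the component---that is the point of the imports. What actually closes the call and return cases is the shape of the lemma's conclusion: the post-state must again carry a backtranslated statement $\backtrfencec{\trg{s'}}$, so cross-boundary transfers simply cannot arise within this lemma and are dealt with in the general backward simulation (\Cref{thm:bwd-sim-lfence}); this is exactly the ``cannot arise'' argument the paper makes. Second, you list reads and writes among the homomorphic cases without separating the private variants: $\backtrfencec{\cdot}$ is not even defined on $\trg{\letreadp{x}{e}{s}}$ and $\trg{\asgnp{e}{e'}}$, and these cases must be discharged by attacker validity (\Cref{def:atk}), which forbids private-heap instructions in context code. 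Finally, for the public read/write cases you should say a word about why the emitted taints agree; the paper points out that this follows from $\srel$ forcing the pc taint to be $\trg{\safeta}$.
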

Note that by ending up in a state with a backtranslated statement, we rule out cross-boundary calls and returns, they are dealt with in \Cref{thm:bwd-sim-lfence}.

The final result we will need for the general theorem is that initial states made of a compiled program and a backtranslated context are related.
This relies on heap and value cross-language relation holding for compiled and backtranslated heaps and values.

\begin{lemma}[Initial States are Related]\label{thm:ini-state-rel}(\showproof{ini-state-rel})
	\begin{align*}
		&
		\forall \src{P}, \forall \src{\OB{f}}=\dom{\src{P}.\src{F}}, \forall \ctxt{}
		\\
		&
		\SInits{\backtrfencec{\ctxt{}}\hole{P}} \srelref_{\src{\OB{f}}}\, \SInitt{\ctxt{}\complfence{\src{P}}}
	\end{align*}
\end{lemma}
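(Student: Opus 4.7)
The plan is to unfold both $\SInits{\backtrfencec{\ctxt{}}\hole{P}}$ and $\SInitt{\ctxt{}\hole{\complfence{\src{P}}}}$ using \Cref{tr:ini-us} and \Cref{tr:ini-ut}, and to verify each conjunct of the state relation \Cref{tr:stats} separately. Concretely, I would first use \Cref{tr:plug-us} (and its target analogue) to observe that both linked programs have the shape $\src{\OB{F_a} \cup \OB{F_p} , \src{H_a} \cup \src{H_p} , \OB{I}}$ and $\trg{\OB{F_a} \cup \complfence{\OB{F_p}} , \trg{H_a} \cup \complfence{\src{H_p}} , \OB{I}}$ respectively, where $\src{\OB{F_a}} = \backtrfencec{\trg{\OB{F_a}}}$ and $\src{\OB{H_a}} = \backtrfencec{\trg{\OB{H_a}}}$, and where $\src{\OB{f}} = \dom{\src{P}.F}$ enumerates precisely the component's functions.

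Once the shapes are aligned, the state relation reduces to four obligations: (i) related initial bindings, i.e. $\srce\cdot x\mapsto 0 \brel \trge\cdot x\mapsto 0$, which is immediate by \Cref{tr:brel-b} together with \Cref{tr:brel-i} since $\src{0} \vrel \trg{0}$ by \Cref{tr:vr}; (ii) $\src{\OB{f}} \equiv \trg{\OB{f}}$, where both function-name stacks are the singleton containing only the currently-executing attacker entry point (\src{main}); (iii) related components $\src{C} \crel_{\src{\OB{f}}} \trg{C}$, which splits into showing each $\src{f} \in \src{\OB{f}}$ has its target body in $\trg{\OB{F}}$ equal to $\complfence{s}$ (by the homomorphic action of $\complfence{\cdot}$ on $\src{P.F}$) and each target-defined attacker function $\trg{f} \notin \src{\OB{f}}$ has its source counterpart equal to $\backtrfencec{\trg{s}}$ (by construction of $\backtrfencec{\ctxt{}}$), with matching interfaces $\src{\OB{I}} \equiv \trg{\OB{I}}$ inherited from \Cref{tr:plug-us}; and (iv) the heap relation $\src{H_0}\hrel\trg{H_0}$.

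The heap relation is the only step that takes real work and is where I expect the main obstacle. I would decompose $\src{H_0}$ and $\trg{H_0}$ as in \Cref{tr:ini-us} and \Cref{tr:ini-ut} into the explicitly provided bindings plus the $\safeta$-tagged positive padding and the $\unta$-tagged negative padding. The padding sets coincide because $\dom{\src{H_a} \cup \src{H_p}} = \dom{\trg{H_a} \cup \complfence{\src{H_p}}}$: the attacker part matches because $\backtrfencec{\cdot}$ preserves domains and the values/tags of heap entries (see the $\backtrfencec{\trg{H; n\mapsto v :\taintt}}$ case), and the component part matches because $\complfence{\cdot}$ acts as the identity on heap addresses and values/tags (see $\complfence{H ; -n\mapsto v : \unta}$), so \Cref{tr:hrel-i} and \Cref{tr:hrel-start} can be applied pointwise. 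Then, since the target state is initialised with $\trgb{\omega}=\bot$ and pc-taint $\safeta$, the singleton-speculation shape required by \Cref{tr:stats} is immediate, completing the proof.

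The subtle point is keeping track of $\src{\OB{f''}}$ versus $\src{\OB{f}}$: the proof uses $\src{\OB{f''}} = \dom{\src{P}.F}$ both to index the component relation (via \Cref{tr:comps}) and to match the attacker function name $\src{main}$ sitting on the call stack (which must lie outside $\src{\OB{f''}}$, i.e., in the backtranslated context, as required for later call/return reasoning in \Cref{thm:bwd-sim-lfence}). Verifying this last side condition is routine but worth stating explicitly.
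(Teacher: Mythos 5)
Your proposal is correct and follows essentially the same route as the paper: unfold the two initial states and discharge each conjunct of the state relation (\Cref{tr:stats}) — bindings via \Cref{tr:brel-i}/\Cref{tr:vr}, components by inspection of $\complfence{\cdot}$ and $\backtrfencec{\cdot}$, heaps split into the component (negative) and attacker (non-negative) parts, and the $\trgb{\bot}$ window and $\trg{\safeta}$ pc-taint from \Cref{tr:ini-ut}. The only cosmetic difference is that you argue the heap relation pointwise (including the padded locations) where the paper delegates the explicitly-declared entries to \Cref{thm:heap-rel-comp-lfence} and \Cref{thm:heap-rel-bt-lfence}; your extra care about the $\safeta$/$\unta$ padding and about $\src{main}$ lying outside $\src{\OB{f}}$ is consistent with, and slightly more explicit than, the paper's argument.
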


\begin{lemma}[A Value is Related to its Compilation for lfence]\label{thm:val-rel-comp-lfence}(\showproof{val-rel-comp-lfence})
	\begin{align*}
		{\src{v}}\vrel\complfence{v}
	\end{align*}
\end{lemma}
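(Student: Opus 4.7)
The plan is to prove this by direct unfolding of the three definitions involved: the syntactic category of values, the action of the compiler $\complfence{\cdot}$ on values, and the value relation $\vreldef$.

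First, I would observe that by the syntax of \SR and \TR given in \Cref{sec:syn}, any value $\src{v}$ has the form $\src{n}$ for some $\src{n} \in \mb{N}$, and similarly target values have the form $\trg{n}$ for some $\trg{n} \in \mb{N}$. Next, I would invoke the defining clause of $\complfence{\cdot}$ on values, namely $\complfence{n} = \trg{n}$, to rewrite the goal as $\src{n} \vrel \trg{n}$ where both sides denote the same natural number.

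Then I would close the proof by a single application of rule \Cref{tr:vr}, whose premises are $\src{z} \equiv \trg{z}$ and $\com{z} \in \mb{Z}$. Both premises are immediate: the two sides are literally the same number, and every $\src{n} \in \mb{N}$ is in particular in $\mb{Z}$.

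There is no real obstacle here, since the statement is essentially a sanity check that the compiler preserves values on the nose and that the value relation is reflexive on numeric literals. The only thing to watch is that the lemma is used, at the call sites in \Cref{thm:fwd-sim-exp-lfence} and \Cref{thm:ini-state-rel}, with $\src{v}$ ranging over all syntactic values; since the grammar fixes values to be natural numbers, the single case above suffices.
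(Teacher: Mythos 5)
Your proof is correct and matches the paper's own argument, which simply says ``trivial analysis of the compiler'': values are natural-number literals, $\complfence{n} = \trg{n}$, and rule \Cref{tr:vr} closes the goal. Nothing further is needed.
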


\begin{lemma}[A Heap is Related to its Compilation for lfence]\label{thm:heap-rel-comp-lfence}(\showproof{heap-rel-comp-lfence})
	\begin{align*}
		{\src{H}}\hrel\complfence{H}
	\end{align*}
\end{lemma}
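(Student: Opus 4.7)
The plan is to proceed by straightforward structural induction on the source heap $\src{H}$, using the homomorphic nature of $\complfence{\cdot}$ on heaps together with \Cref{thm:val-rel-comp-lfence} (a value is related to its compilation).

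For the base case, $\src{H} = \srce$. By the definition of the compiler, $\complfence{\srce} = \trge$, and $\srce \hrel \trge$ holds by rule \Cref{tr:hrel-b}.

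For the inductive step, consider $\src{H} = \src{H'; z \mapsto v : \sigma}$. By the definition of $\complfence{\cdot}$ on heaps, which acts homomorphically, we get $\complfence{H} = \trg{\complfence{H'}; \complfence{z} \mapsto \complfence{v} : \sigma}$ (recalling that source and target taints are identified, i.e.\ $\src{\sigma} \equiv \taintt$). By the induction hypothesis, $\src{H'} \hrel \complfence{H'}$. By \Cref{thm:val-rel-comp-lfence} applied to both $\src{z}$ (an integer key) and $\src{v}$ (a stored value), we obtain $\src{z} \vrel \complfence{z}$ and $\src{v} \vrel \complfence{v}$. Finally, $\src{\sigma} \equiv \sigma$ holds trivially since the compiler preserves taints verbatim. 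We can therefore apply rule \Cref{tr:hrel-i} (or \Cref{tr:hrel-start} when $\src{z} = 0$, which is handled analogously) to conclude $\src{H} \hrel \complfence{H}$.

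There is no real obstacle here: the result is a routine consequence of (i) compilation being defined homomorphically on heap entries, (ii) the cross-language value relation being just equality on integers, which is preserved by $\complfence{\cdot}$ on values (which maps $\src{n}$ to $\trg{n}$), and (iii) taints being propagated unchanged. The only mild subtlety is ensuring the case split on the sign of the key (public vs.\ private part) lines up with the two heap-relation rules \Cref{tr:hrel-i,tr:hrel-start}, but both cases follow the same pattern.
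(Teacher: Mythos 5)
Your proof is correct and takes essentially the same approach as the paper, whose entire proof is ``trivial analysis of the compiler''; your structural induction on the heap entries, combined with \Cref{thm:val-rel-comp-lfence} and the verbatim preservation of taints, is exactly the intended elaboration. (One minor quibble: $\complfence{\cdot}$ is only defined on private heap entries of the form $-n\mapsto v:\unta$, since component heaps are entirely private, so your $z=0$ / \Cref{tr:hrel-start} case is vacuous --- but this does not affect correctness.)
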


\begin{lemma}[A Value is Related to its Backtranslation]\label{thm:val-rel-bt-lfence}(\showproof{val-rel-bt-lfence})
	\begin{align*}
		\backtrfencec{\trg{v}}\vrel\trg{v}
	\end{align*}
\end{lemma}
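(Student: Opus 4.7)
The plan is to prove this by direct case analysis on the syntactic form of the target value $\trg{v}$. Since the grammar for values given in \Cref{sec:trg-syn} inherits from the source grammar, and values are simply natural numbers $\com{n} \in \mb{N}$ (which are a subset of $\mb{Z}$), there is only a single case to consider: $\trg{v} = \trg{n}$ for some $\trg{n} \in \mb{N}$.

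First I would unfold the definition of the backtranslation $\backtrfencec{\cdot}$ on values, which was given in \Cref{sec:ctx-bt-fence} by the clause $\backtrfencec{\trg{n}} = \src{n}$. This reduces the goal to showing $\src{n}\vrel\trg{n}$ for an arbitrary $\trg{n} \in \mb{N}$. Then I would apply \Cref{tr:vr} (the sole rule defining $\vrel$), whose premises require $\src{n} \equiv \trg{n}$ (which holds by construction, since the backtranslation simply reuses the same numeral in the source font) and $\com{n} \in \mb{Z}$ (which holds since $\mb{N} \subseteq \mb{Z}$). Both premises being satisfied, the conclusion $\src{n} \vrel \trg{n}$ follows immediately.

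There is no substantial obstacle here: the lemma is essentially a compatibility check between the backtranslation and the value relation, both of which act as the identity on numerals up to colouring. I would expect the proof to be one or two lines in the final write-up, and its only role is as a leaf lemma feeding into the more substantial proofs about heaps (\Cref{thm:heap-rel-comp-lfence}'s backtranslation analogue) and initial state relatedness (\Cref{thm:ini-state-rel}), where it is used pointwise on the entries of backtranslated heaps and bindings defined by the target attacker.
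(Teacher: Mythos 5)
Your proof is correct and matches the paper's approach, which simply records this as a trivial analysis of the backtranslation: values are numerals, $\backtrfencec{\trg{n}} = \src{n}$, and \Cref{tr:vr} closes the goal. Nothing further is needed.
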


\begin{lemma}[A Taint is Related to its Backtranslation]\label{thm:taint-rel-bt-lfence}(\showproof{taint-rel-bt-lfence})
	\begin{align*}
		\backtrfencec{\taintt}\equiv\taintt
	\end{align*}
\end{lemma}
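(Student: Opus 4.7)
The plan is to proceed by a trivial case analysis on the structure of the target taint $\taintt$, using the definition of the backtranslation $\backtrfencec{\cdot}$ on taints given in \Cref{sec:ctx-bt-fence}.

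Since taints $\taintt$ range over only two values $\trg{\safeta}$ and $\trg{\unta}$ (cf.\ the grammar of $\src{\sigma}$ in \Cref{sec:src-syn}, which is shared between source and target), I would split into two cases. In each case, the defining clause $\backtrfencec{\taintt} = \src{\sigma}$ in the backtranslation simply reinterprets the same constructor ($\safeta$ or $\unta$) in the source colour. Hence $\backtrfencec{\trg{\safeta}} = \src{\safeta}$ and $\backtrfencec{\trg{\unta}} = \src{\unta}$, and in both cases $\backtrfencec{\taintt} \equiv \taintt$ holds by the equivalence of taint constructors across the colour convention.

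There is no meaningful obstacle here: the lemma is a pure unfolding of the backtranslation's definitional equation on taints, recorded separately only so that it can be invoked as a named rewrite lemma in the action relation proofs (e.g.\ in the discharge of the premise $\src{\sigma} \equiv \taintt$ required by rules such as \Cref{tr:ac-rel-cl,tr:ac-rel-cb,tr:ac-rel-rd,tr:ac-rel-wr,tr:ac-rel-if} when we instantiate the source trace with backtranslated actions in the proof of \Cref{thm:corr-bt-lfence}). The entire proof is a two-line exhaustive case analysis, requiring no induction and no appeal to other lemmas.
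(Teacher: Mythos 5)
Your proof is correct and matches the paper's own argument, which simply states ``trivial analysis of the backtranslation''; your two-case unfolding of $\backtrfencec{\taintt}$ on $\trg{\safeta}$ and $\trg{\unta}$ is exactly that analysis made explicit. Nothing further is needed.
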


\begin{lemma}[A Heap is Related to its Backtranslation]\label{thm:heap-rel-bt-lfence}(\showproof{heap-rel-bt-lfence})
	\begin{align*}
		\backtrfencec{\trg{H}}\hrel\trg{H}
	\end{align*}
\end{lemma}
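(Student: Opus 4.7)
The plan is to prove $\backtrfencec{\trg{H}} \hrel \trg{H}$ by straightforward structural induction on the target heap $\trg{H}$, following its BNF definition $\trg{H} \bnfdef \trge \mid \trg{H; n \mapsto v : \taintt}$. The backtranslation is homomorphic on heaps (see the definition of $\backtrfencec{\cdot}$ on heaps in \Cref{sec:ctx-bt-fence}), and the heap relation $\hreldef$ is itself defined inductively on the heap structure (\Cref{tr:hrel-b}, \Cref{tr:hrel-i}, \Cref{tr:hrel-start}), so the structure of the proof mirrors both definitions directly.

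For the base case $\trg{H} = \trge$, by definition of the backtranslation we have $\backtrfencec{\trge} = \srce$, and $\srce \hrel \trge$ holds immediately by \Cref{tr:hrel-b}. For the inductive step $\trg{H} = \trg{H_0 ; n \mapsto v : \taintt}$, the backtranslation yields $\backtrfencec{\trg{H_0 ; n \mapsto v : \taintt}} = \src{\backtrfencec{\trg{H_0}} ; \backtrfencec{\trg{n}} \mapsto \backtrfencec{\trg{v}} : \backtrfencec{\taintt}}$. The induction hypothesis gives $\backtrfencec{\trg{H_0}} \hrel \trg{H_0}$. The value relation $\backtrfencec{\trg{n}} \vrel \trg{n}$ holds for the integer address and $\backtrfencec{\trg{v}} \vrel \trg{v}$ for the stored value by \Cref{thm:val-rel-bt-lfence} (since these are just natural numbers and the backtranslation acts as identity on them), while $\backtrfencec{\taintt} \equiv \taintt$ follows from \Cref{thm:taint-rel-bt-lfence}. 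Applying \Cref{tr:hrel-i} then yields the desired relation.

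The only minor subtlety is deciding whether to use \Cref{tr:hrel-i} or \Cref{tr:hrel-start} when the extended address is $\trg{0}$; both rules are applicable in that case, and either works since they impose the same constraints on the newly added mapping. No obstacle is anticipated: the lemma is a routine compatibility check that the homomorphic backtranslation preserves the inductive structure that the heap relation keys on, relying only on the previously established value-relatedness (\Cref{thm:val-rel-bt-lfence}) and taint-relatedness (\Cref{thm:taint-rel-bt-lfence}) for the point-wise mappings.
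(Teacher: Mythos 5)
Your proof is correct and matches the paper's approach: the paper's own justification is exactly ``trivial analysis of the backtranslation'' combined with \Cref{thm:val-rel-bt-lfence} and \Cref{thm:taint-rel-bt-lfence}, which is precisely the structural induction you spell out. No issues.
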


\subsection{A Stronger Property for \complfence{\cdot}}\label{sec:stronger-prop-lfence}
\begin{theorem}[The lfence compiler is \rdss with more leaks]\label{thm:lfence-comp-stc}(\showproof{lfence-comp-stc})
	\begin{align*}
		\vdash \complfence{\cdot} : \rdss 
	\end{align*}
\end{theorem}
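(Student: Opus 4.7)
The plan is to unfold Definition \ref{def:rdss} and, given an arbitrary source program $\src{P}$, target attacker $\ctxt{}$, and trace $\trat{^{\taintt}}$ with $\trg{\ctxt{}\hole{\complfence{\src{P}}}} \semt \trat{^{\taintt}}$, to pick $\ctxs{} = \backtrfencec{\ctxt{}}$ using the context-based backtranslation of Section \ref{sec:ctx-bt-fence}. Correctness of this choice is precisely the statement of Theorem \ref{thm:corr-bt-lfence}: it produces a source trace $\tras{^{\sigma}}$ such that $\src{\backtrfencec{\ctxt{}}\hole{P}} \sems \tras{^{\sigma}}$ and $\tras{^{\sigma}} \rels \trat{^{\taintt}}$, which is exactly what $\strong{\rdss}$ requires. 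So the real content of the theorem is Theorem \ref{thm:corr-bt-lfence}.

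To prove Theorem \ref{thm:corr-bt-lfence}, I would reduce it to the generalized backward simulation, Theorem \ref{thm:bwd-sim-lfence}, applied to the entire whole-program run. By Lemma \ref{thm:ini-state-rel}, the initial states of $\src{\backtrfencec{\ctxt{}}\hole{P}}$ and $\trg{\ctxt{}\hole{\complfence{\src{P}}}}$ are related by $\srelref_{\src{\OB{f}}}$ for $\src{\OB{f}} = \dom{\src{P}.F}$, and the given target execution satisfies the premise of Theorem \ref{thm:bwd-sim-lfence}; its conclusion yields the matching source execution, the trace relation $\rels$, and preservation of the state relation at the final state.

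To prove Theorem \ref{thm:bwd-sim-lfence} itself, I would induct on the length of the target reduction and case-split on which side of the boundary the current function $\src{f}$ sits. When $\src{f} \in \src{\OB{f''}}$ the running statement comes from compiled code, so I invoke Theorem \ref{thm:bwd-sim-comp-steps-lfence} (backward simulation for compiled steps); when $\src{f} \notin \src{\OB{f''}}$ it is backtranslated, and Lemma \ref{thm:back-sim-bts-lfence} applies. The boundary-crossing labels (calls, callbacks, returns) are discharged using the compatibility of $\complfence{\cdot}$ and $\backtrfencec{\cdot}$ with values, heaps, and taints (Lemmas \ref{thm:val-rel-comp-lfence}--\ref{thm:heap-rel-bt-lfence}), plus the simple action-relation rules from Section \ref{sec:compcrit}.

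The main obstacle is the speculative-$\ifztes{}{}{}$ case inside the compiled-side simulation. The state relation $\srelref_{\src{\OB{f''}}}$ requires the target speculation stack to be a singleton with an empty window ($\trgb{\bot}$), but compilation of $\ifzte{e}{s}{s'}$ deliberately breaks this invariant: mis-speculation (rule \ref{tr:et-sp-if}) pushes a wrong-branch frame whose very first instruction is, by construction of $\complfence{\cdot}$, an $\trg{\lfence}$. The proof obligation is to show that this mis-speculation bubble emits only $\trg{\safeta}$-tagged actions before being discarded, so that it relates to the empty source trace via rules \ref{tr:tr-rel-safe-a}--\ref{tr:tr-rel-rollb}. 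Concretely, rule \ref{tr:et-sp-lf} immediately zeros the window on the next step, then rule \ref{tr:et-sp-rb} fires and emits $\trgb{\rollbl}^{\safeta}$; no other $\mu$arch action from the wrong branch can interleave because the $\trg{\lfence}$ is syntactically first. Once the rollback completes, the singleton-frame invariant is restored and the induction continues. Outside of branches, compilation is homomorphic, so all other cases collapse to a lock-step simulation driven by Lemma \ref{thm:fwd-sim-exp-lfence} and the determinism of $\xtot{}$; these I expect to be routine plumbing.
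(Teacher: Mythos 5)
Your argument, as written, is a proof of \Cref{thm:lfence-comp-rdss} (plain $\strong{\rdss}$ for $\complfence{\cdot}$), and for that statement it follows the paper's decomposition faithfully: backtranslate the attacker, relate initial states, run the generalised backward simulation, and discharge the $\src{ifz}$ case by the immediate $\trg{\lfence}$-induced rollback. The problem is that \Cref{thm:lfence-comp-stc} is not that theorem. The point of the ``with more leaks'' statement is that the target semantics is first \emph{enriched} with the extra reduction \Cref{tr:et-eta}: every silent step whose statement is not $\trg{\lfence}$ now emits an observable action $\trgb{\eta}^{\taintt}$ tagged with the program-counter taint, so the attacker can in effect count instructions. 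The claim is that $\complfence{\cdot}$ remains \rdss{} against this strictly stronger observer. Your proposal never mentions $\trgb{\eta}$, so it establishes nothing beyond what \Cref{thm:lfence-comp-rdss} already gives.

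What is missing is the (short) argument for why the simulation survives the extra rule, and it has two parts. First, outside speculation the pc taint is $\trg{\safeta}$, so each newly emitted $\trgb{\eta}^{\safeta}$ must be absorbed against an empty source step by the ``safe actions are droppable'' clauses of the trace relation (in the style of \Cref{tr:tr-rel-safe-h} and \Cref{tr:ac-rel-ep-hp}); this has to be threaded through \Cref{thm:bwd-sim-comp-steps-lfence} and the backtranslated-code lemma, whose current statements pair one source label with one target label. Second, and this is the crux, one must show that no $\trgb{\eta}^{\unta}$ is ever produced: the mis-speculated frame pushed by \Cref{tr:et-sp-if} in compiled code begins syntactically with $\trg{\lfence}$, which is excluded from \Cref{tr:et-eta} and whose execution zeroes the window (\Cref{tr:et-sp-lf}) so that the rollback of \Cref{tr:et-sp-rb} fires before any other silent step can run under an $\trg{\unta}$ pc; and attacker code never speculates by construction of the semantics. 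This is exactly the content of the paper's proof, which adapts the backtranslation-correctness and backward-simulation lemmas to the additional reduction and observes that the unsafe instance of that reduction is never triggered. Your lfence-first observation in the $\src{ifz}$ case is the right germ of this argument, but you need to state it for the $\trgb{\eta}$ rule explicitly; without that, the proof does not address the theorem being claimed.
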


Let's add a reduction that generates an \trg{\eta} action:
\begin{center}
	\typerule{E-\TR-speculate-eta}{
		\trg{\Omega \xtot{\epsilon} \Omega'}	
		&
		\trg{\Omega}\equiv\trg{C, H, \OB{B} \triangleright s;s'}
		&
		\trg{s}\not\equiv\trg{\lfence}
	}{
		\trg{w, \OB{(\Omega,\trgb{\omega},\taintt)}\cdot(\Omega,n+1,\taintt) \xltot{\trgb{\eta}^{\taintt}} w, \OB{(\Omega,\trgb{\omega},\taintt)}\cdot(\Omega',n,\taintt)}
	}{et-eta}
\end{center}
Any silent step that is not lfence now generates an \trgb{\eta} action whose tag is the same as the pc's.
Thus, eta actions are safe when not speculating and they are unsafe when speculating.

\newpage
\section{The Current SLH Compiler \compslh{\cdot}}\label{sec:comp-slh-weak}
\begin{align*}
	\compslh{ \call{f}~e } &= \trg{ \call{f}~\compslh{e} }
	\\
	\compslh{ \asgn{e}{e'} } &= \trg{ \asgn{\compslh{e}}{\compslh{e'}} }
	\\
	\compslh{ \letread{x}{e}{s} } &= \trg{ \letread{x}{\compslh{e}}{ \compslh{s} } }
	\\
	\compslh{ \asgnp{e}{e'} } &= \trg{ \asgnpt{\compslh{e}+1}{\compslh{e'}} }
	\\
	\compslh{ \ifzte{e}{s}{s'} } &= \trg{ 
		\begin{aligned}[t]
			&
			\letint{\trg{x_g}}{\compslh{e}}{
			\\
			&\
				\ifztet{\trg{x_g}
					}{ 
					\letreadpt{\trg{x}}{\trg{-1}}{\asgnpt{\trg{-1}}{\trg{x \vee \neg x_g}}};
					\compslh{s}
				\\
				&\ \
				}{
					\letreadpt{\trg{x}}{\trg{-1}}{\asgnpt{\trg{-1}}{\trg{x \vee x_g}}};
					\compslh{s'}
				}
			}
		\end{aligned}
	} 
	\\
	\compslh{\letreadp{x}{e}{s}} &= 
		\trg{
			\letreadpt{\trg{x}}{\compslh{e}+1}{ 
				\letreadpt{\trg{\predState}}{\trg{-1}}{
				\cmovet{\trg{x}}{\trg{0}}{\trg{\predState}}{\compslh{s}}
				}
			}			
		}
\end{align*}
Omitted cases are as in \Cref{sec:comp-slh}.

\begin{theorem}[This SLH compiler is \weak{\rdss}]\label{thm:slh-comp-rdss-weak}(\showproof{slh-comp-rdss-weak})
	\begin{align*}
		\vdash \compslh{\cdot} : \weak{\rdss}
	\end{align*}
\end{theorem}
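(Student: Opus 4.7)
}
By Theorem \ref{thm:rdss-eq-rdsp-weak}, it suffices to establish \weak{\rdssp}, i.e., that every program compiled by \compslh{\cdot} satisfies \rss{}(\wTR) when linked with any valid attacker. The plan is to follow the backtranslation pattern of \complfence{\cdot} (Appendix \ref{sec:comp-lfence}) but with a strengthened cross-language state relation adapted to SLH, and using the weak taint-tracking of \wSR{}/\wTR{} crucially in the speculation case. Concretely, I will (i) define a homomorphic context backtranslation \backtr{\ctxt{}} that mirrors \backtrfencec{\cdot} but additionally strips the reserved location \trg{-1} and the auxiliary variables \trg{x_f},\trg{\predState},\trg{x_g} introduced by \compslh{\cdot}; (ii) define a state relation $\srel_{\OB{f}}$ parametrised by the set of compiled function names; and (iii) replay the shape of \thmref{thm:bwd-sim-lfence}, with a new lemma handling the speculative window.

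The state relation needs two regimes. In the non-speculative regime (stack of speculation instances is a singleton) the relation is essentially the one from \Cref{sec:sim-code-fence}: related heaps (up to a shift by $1$ for negative addresses and the extra binding $\trg{-1}\mapsto\trg{1}:\trg{\safeta}$ storing the predicate bit), related bindings (ignoring SLH auxiliary variables), and the invariant $\trg{H(-1)}=\trg{1}$. In the speculative regime the relation is instead an over-approximation: the source state is related to the whole stack provided that (a) $\trg{H(-1)}=\trg{0}$ on every speculation instance on top of the base one, and (b) every binding introduced along the mis-speculated path stores a value tagged \trg{\safeta}. Point (b) is exactly what SLH buys: the compilation of \letreadps{x}{e}{s} ends with a \trg{cmov} that overwrites \trg{x} with the constant \trg{0} whenever the predicate bit is set, and the weak taint rule \Cref{tr:tus-rd-com-p-weak} already labels non-speculatively-loaded values as \trg{\safeta} when $\trg{\sigma_{pc}}=\trg{\safeta}$. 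Together these two facts entail that any action emitted during speculation has taint \trg{\safeta\lub\unta}=\trg{\safeta}, as required.

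The proof will then be structured analogously to \Cref{sec:ctx-bt-fence-props,sec:sim-code-fence,sec:bwd-sim-lfence}: forward simulation on expressions (the added \trg{+1} and \trg{cmov} steps only generate \trg{\epsilon} transitions and preserve value taints), a compiled-statement simulation lemma producing the same user-visible actions (modulo the bookkeeping \trg{\rdl{-1}},\trg{\wrl{-1}} actions which are \trg{\safeta} and absorbed by \Cref{tr:tr-rel-safe-h}), a backtranslated-statement backward simulation (identical in shape to \thmref{thm:back-sim-bts-lfence}), and initial-state relatedness. The two new ingredients are: a lemma showing that the compiled branch correctly toggles \trg{H(-1)} so that the regime-switching invariant is preserved across \Cref{tr:et-sp-if} and \Cref{tr:et-sp-rb}; and a speculation-closure lemma showing that under the speculative regime, \emph{any} sequence of target steps in compiled code up to the next \trg{\rollbl} emits a trace related to \src{\epsilon} by \Cref{tr:tr-rel-safe-a,tr:tr-rel-safe-h,tr:tr-rel-rollb}.

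The main obstacle will be this speculation-closure lemma, specifically showing that invariant (b) is an inductive invariant of the speculative semantics in compiled code. The hard cases are (1) \trg{cmov}: when \trg{\predState} itself was loaded from \trg{-1} during mis-speculation, its taint is \trg{\unta} under the strong reading of \Cref{tr:tus-rd-com-p-strong}, but under the \emph{weak} rule \Cref{tr:tus-rd-com-p-weak} its value is the constant \trg{0} (fixed by invariant (a)) and the destination is thus \trg{\safeta}; (2) expression evaluation over bindings populated during speculation, which is safe because invariant (b) propagates through \trg{\op}/\trg{\bop} via the lub rule; and (3) nested branches inside the speculative window, where the inner \trg{x_g} is safe and the update to \trg{-1} keeps invariant (a). Once these cases are discharged, concatenating the non-speculative simulation with the speculation-closure lemma along the lines of the grey region in \Cref{fig:proof-slh} yields the required trace relation and hence \weak{\rdss}.
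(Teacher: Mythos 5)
Your proposal is correct and follows essentially the same route as the paper: a context-based backtranslation, a two-regime cross-language state relation whose speculative regime carries exactly your invariants (a) and (b), forward/backward simulation lemmas for compiled and backtranslated code, and a speculation-closure lemma relating the trace produced inside the speculative window to the empty source trace via the safe-action rules — with the crux being, as you identify, that the weak taint-tracking makes non-speculatively loaded data safe while the output \trg{cmov} masks speculatively loaded data to a safe constant, so every speculative action is tainted \trg{\safeta}. The only cosmetic differences are that the detour through \weak{\rdssp{}} is unnecessary (the backtranslation argument establishes \weak{\rdss{}} directly, and the paper obtains it by adapting the simulation and compiled-speculation lemmas of the \compsslh{\cdot} proof to \compslh{\cdot}), and that the backtranslation need not ``strip'' location \trg{-1} or the auxiliary variables, since valid attackers never contain them.
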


\newpage
\section{The Strong SLH Compiler \compsslh{\cdot}}\label{sec:comp-slh}

\begin{align*}
	\compsslh{ H ; \OB{F} ; \OB{I}} &= \trg{  \compsslh{H}\cup(-1\mapsto\falset : \safeta) ; \compsslh{\OB{F}} ; \compsslh{\OB{I}}}
	\\
	\\
	\compsslh{ \srce} &= \trge
	\\
	\compsslh{ \OB{I}\cdot f } &= \trg{ \compsslh{ \OB{I} }\cdot f }
	\\\\
	\compsslh{H , -n\mapsto v:\unta} &= \trg{\compsslh{H}, -\compsslh{n}-1\mapsto\compsslh{v}:\unta}
	\\\\
	\compsslh{ {f(x)\mapsto s;\ret} } &= 
		\trg{f(x)\mapsto  
				\compsslh{s};
				\trg{\ret}
		}
	\\
	\\
	\compsslh{ n } &= \trg{n}
	\\
	\compsslh{ e \op e' } &= \trg{ \compsslh{ e } \trgb{\op} \compsslh{ e' }}
	\\
	\compsslh{ e \bop e' } &= \trg{ \compsslh{ e } \trgb{\bop} \compsslh{ e' }}
	\\
	\\
	\compsslh{ {s;s'} } &= \compsslh{s}\trg{;} \compsslh{s'}
	\\
	\compsslh{ \skips } &= \trg{\skipt}
	\\
	\compsslh{ \letin{x}{e}{s} } &= \trg{ \letin{x}{\compsslh{e}}{\compsslh{s}}}
	\\
	\compsslh{ \call{f}~e } &= 
		\trg{
			\begin{aligned}[t]
				&
				\letint{\trg{x_f}}{\compsslh{e}}{
				\\&\
					\letreadpt{\trg{\predState}}{\trg{-1}}{
					\\&\ \ 
						\cmovet{\trg{x_f}}{\trg{0}}{\trg{\predState}}{\trg{\call{f}~x_f}}
					}
				}
			\end{aligned}
		}
	\\
	\compsslh{ \asgn{e}{e'} } &= 
		\trg{
			\begin{aligned}[t]
				&
				\letint{\trg{x_f}}{\compsslh{e}}{
					\\
					&\
					\letint{\trg{x_f'}}{\compsslh{e'}}{
						\\
						&\ \ 
						\letreadpt{\trg{\predState}}{-1}{
						\\
						&\  \ \
							\cmovet{\trg{x_f}}{\trg{0}}{\trg{\predState}}{ 
							\\
							&\ \ \ \
								\cmovet{\trg{x_f'}}{\trg{0}}{\trg{\predState}}{ 
								\\
								&\ \ \ \ \ 
									\trg{\asgn{x_f}{x_f'}}
								}
							}
						}
					}
				}
			\end{aligned}
		}
	\\
	\compsslh{ \letread{x}{e}{s} } &= 
		\trg{ 
			\begin{aligned}[t]
				&
				\letreadt{\trg{x}}{\compsslh{e}}{ 
				\\&\
					\letreadpt{\trg{\predState}}{\trg{-1}}{
					\\&\ \ 
						\cmovet{\trg{x}}{\trg{0}}{\trg{\predState}}{ 
						\\
						&\ \ \ 
							\compsslh{s} 
						}
					}
				} 
			\end{aligned}
		}
	\\
	\compsslh{ \asgnp{e}{e'} } &= 
		\trg{
			\begin{aligned}[t]
				&
				\letint{\trg{x_f}}{\compsslh{e}\trg{+1}}{
					\\
					&\
					\letint{\trg{x_f'}}{\compsslh{e'}}{
						\\
						&\ \ 
						\letreadpt{\trg{\predState}}{\trg{-1}}{
						\\
						&\ \ \
							\cmovet{\trg{x_f}}{\trg{0}}{\trg{\predState}}{ 
							\\
							&\ \ \ \
								\cmovet{\trg{x_f'}}{\trg{0}}{\trg{\predState}}{ 
								\\
								&\ \ \ \ \ 
									\trg{\asgnp{x_f}{x_f'}}
								}
							}
						}
					}
				}
			\end{aligned}
		}
	\\
	\compsslh{ \ifzte{e}{s}{s'} } &= \trg{ 
		\begin{aligned}[t]
			&
			\letint{\trg{x_g}}{\compsslh{e}}{
			\\
			&\
				\letreadpt{\trg{\predState}}{\trg{-1}}{
					\\
					&\ \
					\cmovet{\trg{x_g}}{\trg{0}}{\trg{\predState}}{
					\\
					&\ \ \ 
						\ifztet{\trg{x_g}
							\\
							&\ \ \ \ 
							}{ 
							\letreadpt{\trg{x}}{\trg{-1}}{\asgnpt{\trg{-1}}{\trg{x \vee \neg x_g}}};
							\compsslh{s}
						\\
						&\ \ \ \ 
						}{
							\letreadpt{\trg{x}}{\trg{-1}}{\asgnpt{\trg{-1}}{\trg{x \vee x_g}}};
							\compsslh{s'}
						}
					}
				}
			}
		\end{aligned}
	} 
	\\
	\compsslh{\letreadp{x}{e}{s}} &= 
		\trg{
			\begin{aligned}[t]
				&
				\letint{\trg{x}}{\compsslh{e}\trg{+1}}{
					\\
					&\ 
					\letreadpt{\trg{\predState}}{\trg{-1}}{
					\\
					&\ \ 
						\cmovet{\trg{x}}{\trg{0}}{\trg{\predState}}{
						\\
						&\ \ \ 
							\letreadpt{\trg{x}}{\trg{x}}{\compsslh{s}}
						}
					}	
				}	
			\end{aligned}
		} 
\end{align*}

We compile all private reads and write to operate on an address that is greater than the expected one by 1.
This ensures that location \trg{-1} is untouched by the compiled code, so it can be used to keep information, namely the \trg{pr} state.

\begin{theorem}[Our SLH compiler is \strong{\rdss}]\label{thm:slh-comp-rdss}(\showproof{slh-comp-rdss})
	\begin{align*}
		\vdash \compsslh{\cdot} : \strong{\rdss}
	\end{align*}
\end{theorem}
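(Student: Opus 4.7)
The plan is to prove \strong{\rdss} for \compsslh{\cdot} by leveraging the equivalence with \strong{\rdssp} established in \Cref{thm:rdssp-rdss-eq}, and then mimicking the backtranslation proof scheme used for \complfence{\cdot} (\Cref{thm:lfence-comp-rdss}) and for \weak{\rdss} of \compslh{\cdot} (\Cref{thm:slh-comp-rdss-weak}), but strengthening the invariants so that \emph{every} speculatively-emitted action is tainted \trg{\safeta}, not just those originating from speculative loads. Concretely, given \ctxt{} and \trat{^{\taintt}}, I would construct \ctxs{} as the homomorphic context backtranslation \backtr{\ctxt{}} from \Cref{sec:ctx-bt-fence} and exhibit a related source trace \tras{^\sigma} with $\tras{^\sigma}\rels\trat{^{\taintt}}$.

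The backbone of the argument is the three-region diagram of \Cref{fig:proof-slh}. First I would define a cross-language state relation \srel{} that requires the target state to be a non-speculating singleton stack whose heap, modulo the shift of private addresses by one to reserve location \trg{-1} for the predicate bit, relates to the source heap, whose bindings relate to the source bindings, and whose code comes either from \compsslh{\cdot} or from \backtr{\ctxt{}} according to the active function. Second, for attacker code I would prove lock-step backward simulation analogous to \Cref{thm:back-sim-bte-lfence,thm:back-sim-bts-lfence}, ensuring emitted labels coincide and \srel{} is preserved. Third, for compiled component code I would prove forward simulation analogous to \Cref{thm:fwd-sim-stm-lfence}: a single source step is matched by a burst of target steps whose emitted trace is \relssa-related to the source label, then dualise via determinism to obtain backward simulation.

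The only genuine difficulty arises, as in the \compslh{\cdot} case, when the compilation of $\src{\ifzte{e}{s}{s'}}$ forces speculation: the target stack grows, the pc-taint becomes \trg{\unta}, and \srel{} is temporarily broken until rollback. To handle the grey region I would introduce a \emph{speculation invariant} that holds throughout the misspeculated window: (i) the cell \trg{-1} stores \trg{0} outside the current speculative frame, (ii) the local \trg{\predState} loaded at the branch faithfully reflects misspeculation and is tainted \trg{\safeta}, and (iii) every variable that is subsequently used as an operand to a control-flow, store, \emph{or load} instruction is \trg{\lstinline{cmov}}-ed to the constant \trg{0} with taint \trg{\safeta} before that use. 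Point (iii) is the crux and is exactly the property that distinguishes \compsslh{\cdot} from \compslh{\cdot}: since SSLH masks the \emph{input} of memory reads, even speculative addresses \trg{x_f} are driven to \trg{0} when \trg{\predState=1}, so the resulting \trg{\rdl{\cdot}} actions cannot carry unsafe data.

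From invariant (iii) every action emitted during speculation is tagged with $\safeta\lub\unta=\safeta$, so the sub-trace is related to the empty source trace by repeated applications of \Cref{tr:tr-rel-safe,tr:tr-rel-safe-h,tr:tr-rel-rollb}; after at most \trgb{\omega} steps \Cref{tr:et-sp-rb} fires, the speculative frame is popped, any speculative \trg{\asgnp{-1}{\cdots}} is rolled back, and \srel{} is reinstated so induction proceeds. The main obstacle is therefore formalising invariant (iii) uniformly: I would prove it by case analysis on each compiled statement, showing that the masking pattern $\trg{\cmove{x_f}{0}{\predState}{\cdots}}$ that guards every potentially leaking operand indeed forces \trg{0 : \safeta} whenever \trg{\predState} evaluates to the speculation-indicating value, using lemmas analogous to \Cref{thm:fwd-sim-exp-lfence} to push safe taints through expressions. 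Once this invariant is in place the remainder of the proof is a routine adaptation of the simulation lemmas of \Cref{sec:sim-code-fence,sec:bwd-sim-lfence}.
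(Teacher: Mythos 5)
Your proposal reproduces the paper's proof architecture almost exactly: the homomorphic context backtranslation, a state relation in which private addresses are shifted by one to reserve location $\trg{-1}$ for the predicate bit, lock-step backward simulation for attacker code, forward-then-backward simulation (via determinism) for compiled code, and a speculation invariant in which the predicate bit correctly records mis-speculation and the \trg{cmov} masks force every operand of a leaking instruction to $\trg{0:\safeta}$, so that the speculative sub-trace relates to the empty source trace and the relation is reinstated at rollback.

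There is one case your invariants (i)--(iii) do not cover, and it is a case that must be argued separately: attacker code executing \emph{inside} the speculative window. Speculation is only ever triggered in compiled code, but the mis-speculated branch can call back into \ctxt{}, whose body is arbitrary and contains none of the masking your invariant (iii) relies on. Attacker-internal heap actions are elided from traces, but boundary-crossing actions are not, and a call back into the component carries the taint of the attacker-computed argument; if the attacker could ever hold an $\trg{\unta}$ value while speculating, it could emit an unsafe action and defeat \ss. You therefore need a complementary invariant for speculating attacker frames --- all attacker bindings are $\trg{\safeta}$ --- which holds because the public heap is safe, the masked compiled code only ever passes out the constant $\trg{0:\safeta}$, and attacker expressions over safe bindings evaluate to safe values; this is exactly what \Cref{thm:exp-red-safe}, \Cref{thm:ctx-spec-sing-safe}, and the cross-boundary cases of \Cref{thm:spec-most-omega} establish. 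A minor slip: with the $\trg{ifz}$ convention ($\trg{0}$ is true), it is the \emph{speculating} frames that see $\trg{0}$ (``speculating'') at location $\trg{-1}$, while the non-speculative frame keeps $\trg{1}$ and the rollback restores it; your point (i) states this the other way around. With these two repairs the argument goes through as you describe.
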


\subsection{Inter-procedural SLH}
This SLH compiler does not pass the pr state across procedures and it needs the lfence to still be \strong{\rdss}.
\begin{align*}
	\compslht{ {f(x)\mapsto s;\ret} } &= 
		\trg{f(x)\mapsto  
			\begin{aligned}[t]
				&
				\trg{\lfence;}
				\\
				&
				\letint{\trg{x_{pr}}}{\falset}{\compslht{s}};
				\trg{\ret}
			\end{aligned}
		}
	\\
	\compslht{ \ifzte{e}{s}{s'} } &= \trg{ 
		\begin{aligned}[t]
			&
			\letint{\trg{x_g}}{\compsslh{e}}{
			\\&\
				\ifztet{\trg{x_g}
					}{ 
					\letint{\trg{x_{pr}}}{\trg{x_{pr} \vee \neg x_g}}{\compslht{s}}
				\\
				&\
				}{
					\letint{\trg{x_{pr}}}{\trg{x_{pr} \vee x_g}}{\compslht{s'}}
				}
			}
		\end{aligned}
	} \\
	\compslht{\letreadp{x}{e}{s}} &= 
		\trg{
			\letreadpt{\trg{x}}{\trg{e}}{ 
			\cmovet{\trg{x}}{\trg{0}}{\trg{x_{pr}}}{\compslht{s}}
			}			
		} 
\end{align*}

In order to prove \rdss for this compiler, we need a strong relation between states that instead of asserting that \trg{H(-1)} keeps a bool of the speculation, each state has the first binding for a variable which captures speculation.

When proving \Thmref{thm:spec-most-omega}, in the case of a call from a context to compiled component, we need to add \trg{\lfence}, so that we stop speculation altogether to instate this invariant.
This is noted in \showproof{spec-most-omega}.

\begin{theorem}[Our inter-procedural SLH compiler is \strong{\rdss}]\label{thm:slh-comp-rdss-proc}(\showproof{slh-comp-rdss-proc})
	\begin{align*}
		\vdash \compslht{\cdot} : \strong{\rdss}
	\end{align*}
\end{theorem}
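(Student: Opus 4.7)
The plan is to follow the blueprint of \Cref{thm:slh-comp-rdss} (the proof that \compsslh{\cdot} is \strong{\rdss}), adapted to the inter-procedural setting where the predicate bit lives in a local variable \trg{\predState} that is reset on every call and speculation crossing call boundaries is neutralised by the injected \trg{\lfence}. By \Cref{thm:rdss-eq-rdsp} it would suffice to establish \strong{\rdssp}, but since the backtranslation infrastructure is already in place for \complfence{\cdot} and \compsslh{\cdot} I would attack \Cref{def:rdss} directly, building the source attacker from the target one.

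First I would define a context-based backtranslation analogous to \backtrfencec{\cdot}, homomorphic on functions, statements, and heaps, that erases the book-keeping introduced by \compslht{\cdot} (the leading \trg{\lfence}, the local \trg{\predState}, the \trg{\predState}-updates in the branches of every \trg{ifz}, and the \trg{cmov}-zeroing of loaded values). I would then reuse the value, heap, and binding relations of \Cref{sec:sim-code-fence} and lift them to a state relation $\srel_{\src{\OB{f''}}}$ that (i) only relates source states to target states whose speculation stack is a singleton, exactly as for \complfence{\cdot}, and (ii) additionally carries the SLH-specific invariant: in every active compiled-function frame, \trg{\predState} is bound to \trg{1} whenever the target is not mis-speculating, and to a value that would mask any subsequent load to \trg{0} otherwise.

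The bulk of the work would then sit in two simulation lemmas modelled on \Cref{thm:fwd-sim-stm-lfence} and \Cref{thm:back-sim-bts-lfence}. Over the \complfence{\cdot} proof, the only genuinely new cases are function entry, the compiled \trg{ifz}, and the compiled private read; the speculative-execution case of the \trg{ifz} is handled exactly as in \compsslh{\cdot}, since mis-speculation pushes an instance whose \trg{\predState} forces every subsequent masked load to \trg{0}, so every speculative label is tainted \trg{\safeta} and the whole mis-speculated segment is related to the empty source trace through \Cref{tr:tr-rel-safe,tr:tr-rel-safe-h,tr:tr-rel-rollb}.

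The main obstacle, as the excerpt itself flags before \Cref{thm:slh-comp-rdss-proc}, is preserving the \trg{\predState}-invariant across function boundaries: because \trg{\predState} is a fresh local at every call, control could in principle reach the reinitialisation of \trg{\predState} while mis-speculation from the caller is still in flight, which is exactly the scenario of \Cref{lis:spectre-variant-listing-proc}. The injected \trg{\lfence} resolves this because by \Cref{tr:et-sp-lf} it drives the remaining window to \trg{0}, triggering \Cref{tr:et-sp-rb} and popping the mis-speculated frame before the body executes, so the reinitialisation runs in a genuinely non-speculative configuration. I would formalise this as a small re-synchronisation lemma---after the prologue fires from any speculative configuration the state relation is re-established---and plug it into the call and callback cases of the backward-simulation theorem along the lines of \Cref{thm:bwd-sim-lfence}. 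Once this is in hand, initial-state relatedness (an analogue of \Cref{thm:ini-state-rel}) and the top-level combination of the two simulations yield the analogue of \Cref{thm:corr-bt-lfence} for \compslht{\cdot}, from which \strong{\rdss} and hence \Cref{thm:slh-comp-rdss-proc} follow.
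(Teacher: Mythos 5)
Your proposal matches the paper's proof essentially step for step: the paper reuses the homomorphic context-based backtranslation, adapts the cross-language state relation so that speculating frames of compiled functions carry the local \trg{\predState} invariant (the relation $\cssrel$), and isolates the only genuinely new lemma as the speculation-safety lemma for compiled code, whose call case is discharged exactly by your observation that the function-entry \trg{\lfence} zeroes the speculation window and forces a rollback before the body (and the \trg{\predState} reinitialisation) executes. Your ``re-synchronisation lemma'' is precisely the paper's \Cref{thm:comp-spec-safe-proc}, so the approach is the same; the only cosmetic slip is that the backtranslation is applied to the attacker context (which contains no compiler book-keeping to erase), not to compiled code.
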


\subsection{Backtranslation}\label{sec:bt-slh}
We can use the same context-based backtranslation of \Cref{sec:ctx-bt-fence}, what changes are the cross-language relations.

\subsubsection{Relations for the SLH Compiler}\label{sec:rels-slh}
The state relation extends the previous one because now we can relate a source state to a target state that is speculating.
Given a target state \trg{\Sigma} that is a stack of operational states \trg{\OB{\Omega}}, we require that the first element of the stack is in the strong state relation ($\srel$) with a source state \src{\Omega}.
All other states need to be related at a weaker state relation ($\ssrel$), which only enforces that all bindings map variables to \trg{\safeta} values.
That is, a target an a source list of bindings are related if all that the target binds is \trg{\safeta}, though the target can bind possibly more variables and the target stack of bindings can be 
While speculating, target state do not mimick source reductions anymore but we need to enforce this taint on variables in order to ensure that any target action will be \trg{\safeta}.

Since our target language uses possibly more variables than the source (e.g., \trg{\predState}), we need to sometimes forget them.
So the binding relation is parametrised by a stack of possibly growing list of name variables \trg{\OB{D}} that tell us when to not care for a binding, i.e., when we find a variable in \trg{B} that is in the current \trg{D}.
We adapt the state relation to forward that (later).

\mytoprule{\text{Binding relation} \breldef }
\begin{center}
	\typerule{ Binding - base }{}{
		\srce \brel^{\trge} \trge
	}{sbrel-b}
	\typerule{ Binding - ind }{
		\src{B} \brel^\trg{D} \trg{B}
		&
		\src{v} \vrel \trg{v}	
		&
		\src{\sigma} \equiv \taintt
	}{
		\src{B \cdot x\mapsto v:\sigma} \brel^\trg{D} \trg{B\cdot x\mapsto v:\taintt}
	}{sbrel-i}
	\typerule{ Binding - ind }{
		\src{B} \brel^\trg{D} \trg{B}
		&
		\trg{x}\in\trg{D}
	}{
		\src{B} \brel^\trg{D} \trg{B\cdot x\mapsto v:\taintt}
	}{sbrel-f}
	\typerule{ Bindings }{
		\src{\OB{B}} \brel^{\trg{\OB{D}}} \trg{\OB{B}}
		&
		\src{B} \brel^{\trg{D}} \trg{B}
	}{
		\src{\OB{B}\cdot B} \brel^{\trg{\OB{D}\cdot D}} \trg{\OB{B}\cdot B}
	}{sbsrel}
\end{center}
\botrule

We need to change the heap relation to account for the fact that private heaps are related when the addresses are -1 in the target.
So we define the relation $\shrel$ for private heaps (whose domain is negative integers) to account for this.

\mytoprule{\text{Heap relation} \shreldef}
\begin{center}
	\typerule{Heap - base }{}{
		\srce \shrel \trge
	}{shrel-b}
	\typerule{Heap - negative }{
		\src{H} \shrel \trg{H}
		\\
		\src{z-1} \vrel \trg{z}
		&
		\src{v} \vrel \trg{v}
		&
		\src{\sigma} \equiv \taintt
	}{
		\src{H ; z\mapsto v:\sigma} \shrel \trg{H; z\mapsto v:\taintt}
	}{shrel-i}
	\typerule{Heap - start }{
		\src{H} \shrel \trg{H}
		&
		\src{H'} \hrel \trg{H'}
		&
		\src{v} \vrel \trg{v}
		&
		\src{\sigma} \equiv \taintt
	}{
		\src{H ; 0\mapsto v:\sigma ; H'} \hrel \trg{H;-1 \mapsto \falset : \safeta;0\mapsto v:\taintt ; H'}
	}{shrel-start}
\end{center}
\botrule

\mytoprule{\text{Binding relation} \sbreldef  \text{ State relation} \ssreldef}
\begin{center}
	\typerule{States relation}{
		\src{\Omega}\srel_{\src{\OB{f}}}^\trg{\OB{D}}\trg{(w,(\Omega,\bot,\safeta))}
		&
		\forall \trg{\Omega_s}\in\trg{\OB{\Omega}},~
		\src{\Omega}\ssrel_{\src{\OB{f}}}\trg{\Omega_s}
	}{
		\src{\Omega}\ssrel_{\src{\OB{f}}}^\trg{\OB{D}}\trg{(w,(\Omega,\bot,\safeta)\cdot\OB{(\Omega,W,\unta)})}
	}{}
	\typerule{ Base States }{
		\src{\OB{B}} \brel^\trg{\OB{D}} \trg{\OB{B}}
		&
		\src{H} \hrel \trg{H}
		&
		\src{\OB{f}} \equiv \trg{\OB{f}}
		&
		\src{C} \crel_{\src{\OB{f''}}} \trg{C}
	}{
		\src{C,H,\OB{B}\triangleright \proc{s}{\OB{f}} } \srel_{\src{\OB{f''}}}^\trg{\OB{D}} \trg{w, (C,H,\OB{B}\triangleright \proc{s}{\OB{f}}, \bot, \safeta) }
	}{sstats}
	\typerule{Single state relation - ctx}{
		(\text{ if }\trg{f}\notin\src{\OB{f'}} \text{ then } \vdash \trg{{B}} : \sbrel)
		&
		\src{C}\crel_{\src{\OB{f'}}}\trg{C}
		&
		\vdash\trg{H}:\shrel
	}{
		\src{C, H, \OB{B}\triangleright \proc{s}{\OB{f}} } \ssrel_{\src{\OB{f'}}} \trg{C, H, \OB{B}\cdot B\triangleright \proc{s}{\OB{f}\cdot f} } 
	}{}

	\typerule{Heap relation same}{
		\forall \trg{n\mapsto v : \taintt} \in \trg{H} 
		\text{ if } \trg{n}\geq\trg{0} \text{ then } \taintt=\trg{\safeta}
		&
		\trg{H(-1)}=\trg{\truet : \safeta}
	}{
		\vdash \trg{H} : \shrel
	}{}
	\typerule{Target Bindings ok base}{
	}{
		\vdash \trge : \sbrel
	}{}
	\typerule{Target Bindings ok sing}{
		\vdash \trg{{B}} : \sbrel	
	}{
		\vdash \trg{ B \cdot x \mapsto v : \safeta} : \sbrel
	}{}

\end{center}
\botrule
When speculating, any heap is related: the target may write extra things speculatively.

\subsubsection{Relation for Inter-procedural SLH}\label{sec:rel-interproc-slh}

\mytoprule{\text{Binding relation} \csbreldef  \text{ State relation} \cssreldef}
\begin{center}
	\typerule{States relation}{
		\src{\Omega}\srel_{\src{\OB{f}}}^\trg{\OB{D}}\trg{(w,(\Omega,\bot,\safeta))}
		&
		\forall \trg{\Omega_s}\in\trg{\OB{\Omega}},~
		\src{\Omega}\cssrel_{\src{\OB{f}}}\trg{\Omega_s}
	}{
		\src{\Omega}\cssrel_{\src{\OB{f}}}^\trg{\OB{D}}\trg{(w,(\Omega,\bot,\safeta)\cdot\OB{(\Omega,W,\unta)})}
	}{}
	\typerule{Single state relation - ctx}{
		(\text{ if }\trg{f}\notin\src{\OB{f'}} \text{ then } \vdash \trg{{B}} : \csbrel)
		&
		\src{C}\crel_{\src{\OB{f'}}}\trg{C}
		&
		\vdash\trg{H}:\cshrel
	}{
		\src{C, H, \OB{B}\triangleright \proc{s}{\OB{f}} } \cssrel_{\src{\OB{f'}}} \trg{C, H, \OB{B}\cdot B\triangleright \proc{s}{\OB{f}\cdot f} } 
	}{aasd}
	\typerule{Heap relation same}{
		\forall \trg{n\mapsto v : \taintt} \in \trg{H} 
		\text{ if } \trg{n}\geq\trg{0} \text{ then } \taintt=\trg{\safeta}
	}{
		\vdash \trg{H} : \cshrel
	}{}
	\typerule{Target Bindings ok base}{
	}{
		\vdash \trg{\predState\mapsto 0 : \safeta} : \csbrel
	}{}
	\typerule{Target Bindings ok sing}{
		\vdash \trg{{B}} : \csbrel	
	}{
		\vdash \trg{ B \cdot x \mapsto v : \safeta} : \csbrel
	}{}
\end{center}
\botrule

Speculating states are related $\cssrel$ when, in case the executing function is not attacker (\Cref{tr:aasd}), the heap is whatever but the bindings contain the predicate bit set to true.

\begin{lemma}[Initial States are Related for SLH]\label{thm:ini-state-rel-slh}(\showproof{ini-state-rel-slh})
	\begin{align*}
		&
		\forall \src{P}, \forall \src{\OB{f}}=\dom{\src{P}.\src{F}}, \forall \ctxt{}
		\\
		&
		\SInits{\backtrfencec{\ctxt{}}\hole{P}} \ssrelref_{\src{\OB{f}}}\, \SInitt{\ctxt{}\compsslh{\src{P}}}
	\end{align*}
\end{lemma}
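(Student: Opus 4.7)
\smallskip

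The plan is to prove the lemma by unfolding the initial-state construction on both sides and verifying each conjunct of the state relation. From \Cref{tr:ini-us} and \Cref{tr:ini-ut}, the left-hand state has the form $\src{\OB{F},\OB{I},H_0,\srce\cdot x\mapsto 0\triangleright \call{main}~x}$ and the right-hand state has the form $\trg{(0, w, (\OB{F'},\OB{I'},H_0',\trge\cdot x\mapsto 0\triangleright \call{main}~x ;\skipt,\bot,\safeta))}$, where $\src{\OB{F}},\src{\OB{I}}$ come from linking $\backtrfencec{\ctxt{}}$ with $\src{P}$ and $\trg{\OB{F'}},\trg{\OB{I'}}$ come from linking $\ctxt{}$ with $\compsslh{\src{P}}$. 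Crucially, the target speculative stack is a singleton with window $\trgb{\bot}$ and pc-taint $\trg{\safeta}$, so the topmost (and only) rule of $\ssrel$ applies and the goal reduces to establishing $\src{\Omega}\srel_{\src{\OB{f}}}^{\trg{\OB{D}}}\trg{(w,(\Omega,\bot,\safeta))}$ for the appropriate operational state $\trg{\Omega}$.

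Next, I would apply the Base States rule for $\srel$, which produces four sub-obligations. Three are immediate: (i) the binding stacks $\srce\cdot x\mapsto 0:\safeta$ and $\trge\cdot x\mapsto 0:\safeta$ are related by $\brel^{\trg{\OB{D}}}$ taking $\trg{\OB{D}}=\trge\cdot\trge$, since values are equal nats and taints agree; (ii) the function-name stacks agree because both initial states push the single name $\src{main}$; (iii) the codebase relation $\crel_{\src{\OB{f}}}$ follows by inspection of the linking and of the two transformations---for every $\src{f}\in\src{\OB{f}}$ the definition of $\compsslh{\src{P}}$ produces exactly $\trg{f(x)\mapsto \compsslh{s}}$, and for every attacker $\trg{f}\notin\src{\OB{f}}$ the definition of $\backtrfencec{\ctxt{}}$ produces exactly $\src{f(x)\mapsto \backtrfencec{\trg{s}}}$.

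The main obstacle, and the only non-routine step, is the heap relation. The source initial heap is $\src{H_0}$, obtained by padding $\backtrfencec{\ctxt{}}$'s public heap and $\src{P}$'s private heap with defaults from \Cref{tr:ini-us}, while the target initial heap additionally contains the reserved cell $\trg{-1\mapsto 1:\safeta}$ injected by $\compsslh{\cdot}$ and shifts every private address by one via $\compsslh{H,-n\mapsto v:\unta}=\trg{\compsslh{H},-\compsslh{n}-1\mapsto\compsslh{v}:\unta}$. Hence the standard $\hrel$ from the lfence proof does not apply; I would instead use the SLH-specific relation $\shrel$ from \Cref{sec:rels-slh}. Walking through its rules: Heap-base handles empty heaps, Heap-negative aligns each $\src{-n\mapsto v:\unta}$ with $\trg{-n-1\mapsto v:\unta}$ (since $\src{z-1}\vrel\trg{z}$ holds for equal integers offset by one), and Heap-start pivots on the transition from negative to non-negative addresses, consuming exactly the reserved cell $\trg{-1\mapsto \falset:\safeta}$ placed by the compiler. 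The public part then proceeds by $\hrel$ with identical values and $\safeta$ taints, which is verified routinely because both default-fill and context-provided public entries use matching values and taints.

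I would finish by remarking that the $\srel$ rule must be read with $\shrel$ in place of $\hrel$ for the SLH instance (the state-relation scheme is parametric in the heap relation); this is consistent with how $\compsslh{\cdot}$ treats the private heap. The rest of the argument is straightforward bookkeeping: verifying that taints on the default-filled cells agree ($\safeta$ on the positive side, $\unta$ on the negative side), and that the function stack $\src{\OB{f}}\equiv\trg{\OB{f}}$ used in the Base States rule is the empty stack at initialization, so the $\trg{\OB{D}}$ parameter can be chosen trivially.
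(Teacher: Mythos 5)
Your proposal is correct and follows essentially the same route as the paper, which proves this lemma by replaying the lfence-case argument (unfold the initial states, check bindings, codebase, taint, and window) while swapping in the SLH-specific value- and heap-relation lemmas to handle the reserved cell $\trg{-1}$ and the shift of private addresses. The only cosmetic difference is that in the paper the $\shrel$ rules are set up so that the full initial heap is still concluded at $\hrel$ (via the Heap-start pivot rule), rather than the state relation being literally reparametrised by $\shrel$, but this does not affect the substance of your argument.
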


\begin{lemma}[A Value is Related to its Compilation for SLH]\label{thm:val-rel-comp-slh}(\showproof{val-rel-comp-slh})
	\begin{align*}
		{\src{v}}\vrel\compsslh{v}
	\end{align*}
\end{lemma}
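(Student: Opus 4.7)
The plan is to prove this by a direct case analysis on the syntactic form of a value $\src{v}$. Since values in \SR are defined as $\src{v} \bnfdef \src{n}$ with $\src{n} \in \mb{N}$, there is only one case to consider: $\src{v} = \src{n}$ for some natural number $\src{n}$.

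First I would unfold the definition of the compiler on values. Inspecting the definition of $\compsslh{\cdot}$ in \Cref{sec:comp-slh}, the clause for numeric literals is $\compsslh{n} = \trg{n}$, i.e., the compiler is the identity on numeric values (the only nontrivial compilation happens at the level of statements and compound expressions that introduce masks, the predicate bit, and conditional moves). Hence $\compsslh{v}$ is syntactically the same integer as $\src{v}$.

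Then I would apply rule \Cref{tr:vr} of the value relation $\vrel$ defined in \Cref{sec:sim-code-fence}, which states that $\src{z} \vrel \trg{z}$ whenever $\src{z} \equiv \trg{z}$ with $\src{z} \in \mb{Z}$. Since $\src{v}$ and $\compsslh{v}$ are the same integer, the premise of \Cref{tr:vr} is trivially satisfied, and we conclude $\src{v} \vrel \compsslh{v}$.

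There is no real obstacle here: the lemma is essentially a sanity check that ensures the value-compilation clause is compatible with the cross-language value relation, and it serves as a base case for the companion lemmas on heap and binding relations (cf. \Cref{thm:heap-rel-comp-lfence} and the structural rules \Cref{tr:hrel-b,tr:hrel-i,tr:brel-b,tr:brel-i}). The only thing worth double-checking is that no value case was silently elided from the compiler definition (for instance, a boolean or location case), but a quick scan of the grammar in \Cref{sec:src-syn} confirms $\src{v} \bnfdef \src{n}$ is the only production, so the single-case analysis is exhaustive.
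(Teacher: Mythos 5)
Your proof is correct and matches the paper's own argument, which simply says ``trivial analysis of $\compsslh{\cdot}$'': values are only naturals, the compiler is the identity on them, and \Cref{tr:vr} closes the case. Your write-up just spells out the details the paper leaves implicit.
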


\begin{lemma}[A Heap is Related to its Compilation for SLH]\label{thm:heap-rel-comp-slh}(\showproof{heap-rel-comp-slh})
	\begin{align*}
		{\src{H}}\hrel\compsslh{H}
	\end{align*}
\end{lemma}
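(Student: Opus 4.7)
The plan is to proceed by structural induction on the shape of $\src{H}$, decomposed at the $0$-address boundary into its private (strictly negative addresses) and public (non-negative addresses) sub-heaps $\src{H_{\mathrm{prv}}}$ and $\src{H_{\mathrm{pub}}}$. The crucial observation is that rule shrel-start is, despite its name, a derivation rule for $\hrel$: its conclusion puts $\hrel$ between a source heap that contains the $0$-entry and a target heap that additionally carries the dedicated $\trg{-1 \mapsto \falset : \safeta}$ slot used by SLH to store the predicate bit, with $\shrel$ required only on the private side and $\hrel$ on the public side. This matches exactly the shape produced by $\compsslh{\cdot}$, which injects the $\trg{-1}$ slot at the program level and shifts every private source address $\src{-n}$ to $\trg{-\compsslh{n}-1}$ via the clause $\compsslh{H, -n\mapsto v:\unta} = \trg{\compsslh{H}, -\compsslh{n}-1 \mapsto \compsslh{v} : \unta}$, while leaving public addresses untouched.

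First I would prove two auxiliary claims by induction on heap length. Claim (a): $\src{H_{\mathrm{prv}}} \shrel \compsslh{H_{\mathrm{prv}}}$. The base case is shrel-b on $\srce$; the step uses shrel-i, where the address premise $\src{z-1} \vrel \trg{z}$ is discharged by instantiating $\src{z} = \src{-n}$ and observing that the target address produced by the compiler is $\trg{-n-1}$, so $\src{z-1} = \src{-n-1} \vrel \trg{-n-1}$ by rule vr; the value and taint premises come from Lemma thm:val-rel-comp-slh and the fact that $\compsslh{\cdot}$ preserves taints. Claim (b): $\src{H_{\mathrm{pub}}} \hrel \compsslh{H_{\mathrm{pub}}}$, by analogous induction using hrel-b and hrel-i, this time with identity on addresses since public addresses are not shifted.

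Then I would conclude by a single application of shrel-start, using (a) and (b) together with $\src{v_0:\sigma_0} \vrel \compsslh{v_0}:\taintt_0$ for the $0$-address entry, to obtain $\src{H_{\mathrm{prv}} ; 0 \mapsto v_0 : \sigma_0 ; H_{\mathrm{pub}}} \hrel \trg{\compsslh{H_{\mathrm{prv}}} ; -1 \mapsto \falset : \safeta ; 0 \mapsto \compsslh{v_0} : \taintt_0 ; \compsslh{H_{\mathrm{pub}}}}$. Since initial states preallocate the entire address space (see the \SR and \TR initial-state rules), a $0$-entry is always present, so shrel-start is always the correct top-level rule.

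The main obstacle will be the address bookkeeping in claim (a): shrel-i demands $\src{z-1} \vrel \trg{z}$ while the compiler-generated target address is $\trg{-\compsslh{n}-1}$ for source address $\src{-n}$, so one has to check carefully that $\compsslh{n} = \trg{n}$ for numerals (immediate from $\compsslh{n} = \trg{n}$) and then verify the arithmetic match. A secondary subtlety is handling the top-level $\cup (-1 \mapsto \falset : \safeta)$ introduced at the program-compilation level rather than at the heap-compilation level: the lemma's $\compsslh{H}$ must be read as the heap that appears inside $\compsslh{H ; \OB{F} ; \OB{I}}$, i.e., including the injected $\trg{-1}$ slot, and shrel-start accommodates precisely this shape.
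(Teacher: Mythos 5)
Your proof is correct and follows essentially the same route as the paper, whose entire argument is the one-line observation that $\trg{-1}$ is allocated by the compiler and all private addresses are shifted by one so that \Cref{tr:shrel-i} applies; you have simply made the induction, the use of \Cref{tr:shrel-start}, and the address arithmetic explicit. Your two flagged subtleties (the $\trg{-1}$ slot being injected at the program- rather than heap-compilation clause, and the intended reading of the address premise in \Cref{tr:shrel-i}) are real infelicities in the paper's definitions, and you resolve them the way the paper's proof implicitly does.
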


\begin{lemma}[Forward Simulation for Expressions in SLH]\label{thm:fwd-sim-exp-slh}(\showproof{fwd-sim-exp-slh})
	\begin{align*}
		\text{ if }
			&\ 
			\src{B\triangleright e \bigreds v : \sigma}
		\\
		\text{ and }
			&\
			\src{B}\brelref\trg{B}
		\\
		\text{ and }
			&\
			\src{\sigma}\equiv\taintt
		\\
		\text{ then }
			&\
			\trg{B \triangleright \compsslh{e}\bigredt\compsslh{v} : \taintt}
	\end{align*}
\end{lemma}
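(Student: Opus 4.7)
The plan is to prove this by straightforward structural induction on the source expression $\src{e}$ (equivalently, on the derivation of $\src{B\triangleright e \bigreds v : \sigma}$), exploiting the fact that the SLH compiler $\compsslh{\cdot}$ acts homomorphically on expressions: $\compsslh{n} = \trg{n}$, $\compsslh{e \op e'} = \compsslh{e}\, \trgb{\op}\, \compsslh{e'}$, and similarly for $\bop$. Crucially, $\compsslh{\cdot}$ introduces auxiliary variables and masking only at the \emph{statement} level (via fresh let-bindings for $\trg{x_f}, \trg{\predState}$), so the expression semantics is preserved step for step.

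For the base cases, I would first handle $\src{e} = \src{n}$: the source rule E-\SR-val yields $\src{v} = \src{n}$ with $\src{\sigma} = \src{\safeta}$, and by Lemma~\ref{thm:val-rel-comp-slh} we have $\compsslh{\src{n}} = \trg{n}$, so applying T-\SR-val (and its value counterpart) on the target gives $\trg{B \triangleright \trg{n}\bigredt \trg{n} : \trg{\safeta}}$, which matches since $\src{\safeta} \equiv \trg{\safeta}$. For $\src{e} = \src{x}$: the source rule E-\SR-var gives $\src{B}(\src{x}) = \src{v:\sigma}$; by the binding-relation premise $\src{B}\brelref\trg{B}$, we obtain $\trg{B}(\src{x}) = \trg{v':\taintt}$ with $\src{v}\vrel\trg{v'}$ and $\src{\sigma}\equiv\taintt$. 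Since $\vrel$ is identity on numbers (the only values), $\trg{v'} = \compsslh{v}$, so T-\TR-var applies to yield the required target derivation.

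For the inductive cases $\src{e} = \src{e_1 \op e_2}$ and $\src{e} = \src{e_1 \bop e_2}$, the source rules E-\SR-op / E-\SR-comparison decompose the derivation into sub-derivations $\src{B \triangleright e_i \bigreds n_i : \sigma_i}$ with $\src{v} = [\src{n_1 \op n_2}]$ and $\src{\sigma} = \src{\sigma_1 \sqcup \sigma_2}$. Two applications of the induction hypothesis give $\trg{B \triangleright \compsslh{e_i}\bigredt \compsslh{n_i} : \taintt_i}$ with $\src{\sigma_i}\equiv \taintt_i$. Combining them via E-\TR-op (resp.\ E-\TR-comparison) yields the desired conclusion, using that $\compsslh{[\src{n_1 \op n_2}]} = [\trg{\compsslh{n_1} \op \compsslh{n_2}}]$ (arithmetic commutes with the identity-on-numbers compilation) and that $\sqcup$ respects the equivalence $\equiv$ pointwise.

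The argument is essentially identical to Lemma~\ref{thm:fwd-sim-exp-lfence} for $\complfence{\cdot}$, and there is no genuine obstacle: the only mildly delicate point is making sure the binding-relation premise suffices even when $\brelref$ is the $\brel^{\trg{D}}$-flavored relation with droppable names, but since $\src{e}$ only mentions source-level variables, any extra droppable targets in $\trg{B}$ are simply irrelevant to the lookup, so the variable case still goes through unchanged.
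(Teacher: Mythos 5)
Your proof is correct and matches the paper's approach: the paper proves this lemma by exactly the structural induction on $\src{e}$ that you spell out (it records it only as ``trivial induction on $\src{e}$''), relying as you do on $\compsslh{\cdot}$ being homomorphic on expressions and on the binding relation for the variable case.
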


\begin{lemma}[Forward Simulation for Compiled Statements in SLH]\label{thm:fwd-sim-stm-slh}(\showproof{fwd-sim-stm-slh})
	\begin{align*}
		\text{ if }
			&\
			\src{\Omega}=\src{C, H, \OB{B} \triangleright \proc{s;s''}{\OB{f}} \xtos{\lambda^\sigma} C, H', \OB{B'} \triangleright \proc{s';s''}{\OB{f'}}}=\src{\Omega'}
		\\
		\text{ and }
			&\
			\src{\Omega} \ssrelref_{\src{\OB{f''}}}^{\trg{\OB{D}}} \trg{\Sigma}
		\\
		\text{ and }
			&\
			\trg{\Sigma}=\trg{w(C, H, \OB{B} \triangleright \proc{\compsslh{s};s''}{\OB{f}},\bot,\safeta) }
		\\
		\text{ and }
			&\
			\trg{\Sigma'} = \trg{w(C, H', \OB{B'} \triangleright \proc{\compsslh{s'};s''}{\OB{f'}},\bot,\safeta)}
		\\
		\text{ then }
			&\
			\trg{(n,\Sigma) \Xtot{\trat{^{\taintt}}} (n',\Sigma')}
		\\
		\text{ and }
			&\
			\src{\lambda^\sigma} \tracerel \trg{\trat{^{\taintt}}} \qquad \text{( using the trace relation!)}
		\\
		\text{ and }
			&\
			\exists {\trg{\OB{D'}}}\supseteq{\trg{\OB{D}}}.~
			\src{\Omega'} \ssrelref_{\src{\OB{f''}}}^{\trg{\OB{D'}}} \trg{\Sigma'}
	\end{align*}
\end{lemma}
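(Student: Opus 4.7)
The plan is to proceed by case analysis on the last rule used to derive the source single step $\src{\Omega \xtos{\lambda^\sigma} \Omega'}$, following the pattern established in the analogous \complfence{\cdot} proof (\Thmref{thm:fwd-sim-stm-lfence}) but adapted to the richer \compsslh{\cdot} compilation scheme. In each case I unpack the compiled statement $\compsslh{s}$ into its constituent target instructions (a sequence of let-bindings for auxiliary variables $\trg{x_f}$, a read of the predicate bit from location $\trg{-1}$ into $\trg{\predState}$, one or more \trg{cmov}s, and finally the ``real'' operation), and show that the target can execute this sequence producing a trace $\trat{^{\taintt}}$ related to $\src{\lambda^\sigma}$ while preserving the cross-language state relation $\ssrel$.

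First I would dispatch the ``easy'' cases ($\skips$, sequencing, $\letin{x}{e}{s}$, internal call/return) by invoking \Thmref{thm:fwd-sim-exp-slh} to convert source expression evaluations into target ones over related bindings, and then observing that the extra auxiliary let-bindings and reads introduced by the compiler only produce $\epsilon$ transitions, which are trivially related to the empty source trace by \Cref{tr:tr-rel-safe,tr:tr-rel-safe-h}. For the memory cases (public/private read, public/private write, cross-boundary call/return), the work is to check two invariants: (i) the address arithmetic in $\compsslh{\cdot}$ (the $\trg{+1}$ shift for private accesses) is matched by the shifted heap relation $\shrel$ from \Cref{sec:rels-slh}, so that reads and writes hit corresponding cells; and (ii) because $\trg{\Sigma}$ is non-speculating, the predicate bit at $\trg{-1}$ is $\safeta$-tagged $\trg{0}$ (guaranteed by $\shrel$), so every \trg{cmov} guarded by $\trg{\predState}$ is an identity operation and preserves the value relation $\vrel$. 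The label emitted by the ``real'' operation is then tagged $\safeta$ (since the pc taint is $\safeta$ outside of speculation) and matches the source label via the action relation \areldef; all the auxiliary $\epsilon$ steps extend $\trg{\OB{D}}$ to absorb the new bindings of $\trg{x_f}$, $\trg{\predState}$, etc., giving the promised $\trg{\OB{D'}} \supseteq \trg{\OB{D}}$.

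The main obstacle is the $\ifzte{e}{s_1}{s_2}$ case, because $\compsslh{\ifzte{e}{s_1}{s_2}}$ is the only construct that triggers speculation in the target via \Cref{tr:et-sp-if}, temporarily breaking the ``singleton-stack'' shape required by $\srel$ and forcing the weaker $\ssrel$ to carry its weight. Here I would: (a) let the target execute the prelude (evaluate $\trg{x_g}$, read $\trg{\predState}$, perform the guard-masking \trg{cmov} which is again an identity since $\trg{\predState}=\trg{0}$); (b) when the $\ifztet{}{}{}$ fires, split into the ``correct branch'' instance at the bottom of the new stack (which satisfies $\srel$ after the assignment $\asgnpt{-1}{\trg{\predState \vee \neg x_g}}$ correctly sets the bit back to $\trg{0}$ since $\trg{\predState}=\trg{0}$ and $\trg{\neg x_g}$ evaluates to $\trg{0}$ on the taken branch), and the ``mis-speculated'' instance on top (which needs to satisfy $\ssrel$); (c) invoke the speculation-handling lemma (the one the excerpt flags as \Thmref{thm:spec-most-omega}) to run the entire mis-speculation window, showing that every target label it emits is tagged $\safeta$ because during mis-speculation the bit at $\trg{-1}$ is $\trg{1}$, hence every output-masking \trg{cmov} forces results to the constant $\trg{0:\safeta}$, so by the taint-tracking rules every emitted action carries tag $\unta \lub \safeta = \safeta$; and (d) use \Cref{tr:tr-rel-safe,tr:tr-rel-safe-h,tr:tr-rel-rollb} to relate this entire speculative sub-trace to the empty source trace, concluding with a $\rollbl^\safeta$. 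After rollback the stack returns to a singleton with the correct branch at the top, so $\srel$ is restored on the head and the whole state is in $\ssrelref$.

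The only subtle bookkeeping is the induction on the structural rule E-\SR-step for the sequencing case $\src{s;s'''}$: I would apply the induction hypothesis to the head statement $\src{s}$, carrying along the arbitrary continuations $\src{s''}$ and $\trg{s''}$ exactly as in \Thmref{thm:fwd-sim-stm-lfence}, so that the relation on the residual statements $\compsslh{s'};s''$ and $\compsslh{s'''};s''$ follows directly from the compositionality of $\compsslh{\cdot}$. Throughout, determinism of the target semantics combined with \Thmref{thm:fwd-sim-exp-slh} and the binding relation $\brel^{\trg{\OB{D}}}$ ensures that the $\trg{x_f}$ auxiliaries are well-defined at each step and can be added to $\trg{\OB{D}}$ when the proof descends into a nested compiled statement; the resulting $\trg{\OB{D'}}$ is monotone in the number of auxiliaries introduced.
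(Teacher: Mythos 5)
Your proposal follows essentially the same route as the paper's proof: structural induction on the source statement, \Thmref{thm:fwd-sim-exp-slh} plus ``auxiliary steps are safe/droppable'' for the easy and memory cases, and, for the conditional, an explicit unfolding of the compiled prelude followed by delegation to the speculation lemmas (\Thmref{thm:spec-rel-satte-safe}, which chains through \Thmref{thm:spec-most-omega}) to discharge the entire mis-speculation window against the empty source trace, ending with the rollback and the extension of $\trg{\OB{D}}$ by the fresh auxiliaries. That is exactly the paper's decomposition.

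One recurring detail is wrong and would make several of your concrete steps fail as literally stated: you have the predicate-bit encoding inverted. In the paper, $\trg{ifz}$ treats $\trg{0}$ as true and the conditional move fires (overwrites with the mask) precisely when its guard reduces to $\trg{0}$ (\Cref{tr:eut-cmv-t}); accordingly location $\trg{-1}$ holds $\falset$ (a nonzero value) when \emph{not} speculating and is set to $\truet$ (i.e.\ $\trg{0}$) while mis-speculating (see the rule for $\vdash \trg{H} : \shrel$ and \Cref{tr:shrel-start}). So outside speculation the $\trg{cmov}$ is a no-op because $\trg{\predState}$ is \emph{nonzero}, not zero; with your reading ($\trg{\predState}=\trg{0}$ when idle) every non-speculative load would be masked to $\trg{0}$ and the value relation would break. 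The same inversion appears in your claims about the branch bookkeeping $\asgnpt{-1}{\predState \vee \neg x_f}$ and about the bit being $\trg{1}$ during mis-speculation. The mechanism you describe is the intended one, but the concrete values must be flipped throughout. Two smaller points: cross-boundary calls and returns are not cases of this lemma at all --- the conclusion forces the target to end in a compiled statement, so those transitions are vacuous here and are handled in \Thmref{thm:bwd-sim-slh} --- and the correct branch's read/write of $\trg{-1}$ executes \emph{after} the rollback (the mis-speculated instance sits on top of the stack and runs first), not before the speculation window as your ordering of steps (b) and (c) suggests.
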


\begin{theorem}[Backward Simulation for Compiled Statements in SLH]\label{thm:back-sim-stm-slh}(\showproof{back-sim-stm-slh})
	\begin{align*}
		\text{ if }
			&\
			\trg{(n,\Sigma) \Xtot{\trat{^{\taintt}}} (n',\Sigma')}
		\\
		\text{ and }
			&\
			\src{\Omega} \ssrelref_{\src{\OB{f''}}}^{\trg{\OB{D}}} \trg{\Sigma}
		\\
		\text{ and }
			&\
			\trg{\Sigma}=\trg{w(C, H, \OB{B} \triangleright \proc{\compsslh{s};s''}{\OB{f}},\bot,\safeta) }
		\\
		\text{ and }
			&\
			\trg{\Sigma'} = \trg{w(C, H', \OB{B'} \triangleright \proc{\compsslh{s'};s''}{\OB{f'}},\bot,\safeta)}
		\\
		\text{ then }
			&\
			\src{\Omega}=\src{C, H, \OB{B} \triangleright \proc{s;s''}{\OB{f}} \xtos{\lambda^\sigma} C, H', \OB{B'} \triangleright \proc{s';s''}{\OB{f'}}}=\src{\Omega'}
		\\
		\text{ and }
			&\
			\src{\lambda^\sigma} \tracerel \trg{\trat{^{\taintt}}} \qquad \text{( using the trace relation!)}
		\\
		\text{ and }
			&\
			\exists {\trg{\OB{D'}}}\supseteq{\trg{\OB{D}}}.~
			\src{\Omega'} \ssrelref_{\src{\OB{f''}}}^{\trg{\OB{D'}}} \trg{\Sigma'}
	\end{align*}
\end{theorem}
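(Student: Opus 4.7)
The plan is to derive this backward simulation from the forward simulation lemma (\Cref{thm:fwd-sim-stm-slh}) by exploiting determinism of both semantics, exactly as the paper does for \complfence{\cdot} in \Cref{thm:bwd-sim-comp-steps-lfence}. The source semantics $\xtos{}$ is clearly deterministic, and so is the target speculative semantics $\Xtot{}$: the ``always mispredict'' rule \Cref{tr:et-sp-if} removes the only possible source of non-determinism at branches, and the window arithmetic fixes when rollback fires. So once we fix a source step, forward simulation produces a unique block of target steps with a unique trace; conversely, a given final target configuration pins down a unique source step.

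I would proceed by induction on the length of $\trat{^{\taintt}}$ (equivalently, on the number of $\xltot{}$-steps). In the base case the target takes zero steps, $\trg{\Sigma}\equiv\trg{\Sigma'}$, and by the injectivity of $\compsslh{\cdot}$ on statements together with the state relation $\ssrel$ we read off $\src{s}\equiv\src{s'}$, $\src{\Omega}\equiv\src{\Omega'}$, and $\src{\lambda^\sigma}=\src{\epsilon}$, which trivially relates to $\trg{\epsilon}$. For the step case I would split on whether the top speculation instance of $\trg{\Sigma}$ is non-speculating ($\bot$-window, $\trg{\safeta}$ pc-taint) or speculating. In the non-speculating case, the state relation reduces to the strong $\srel$ and I would invoke \Cref{thm:fwd-sim-stm-slh} to guess the unique source step from $\src{s}$; determinism of $\Xtot{}$ guarantees that the resulting target trace is a prefix of $\trat{^{\taintt}}$, and I apply the induction hypothesis to the remainder. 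In the speculating case the step is either an internal speculative reduction or a rollback (\Cref{tr:et-sp-rb}); in both sub-cases the source must stutter ($\src{\lambda^\sigma}=\src{\epsilon}$), the trace relation is preserved by \Cref{tr:tr-rel-safe,tr:tr-rel-safe-h,tr:tr-rel-rollb}, and the $\ssrel$-invariant is maintained by the weaker sub-relation on speculating instances (\Cref{sec:rels-slh}); possibly the drop-set $\trg{\OB{D}}$ must be enlarged to absorb the fresh compiler temporaries $\trg{x_f},\trg{x_f'},\trg{\predState}$ introduced by $\compsslh{\cdot}$.

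The main obstacle will be the branch case, where $\src{s}\equiv\src{\ifzte{e}{s_1}{s_2}}$ unfolds into a long target fragment that reads the predicate bit from location $\trg{-1}$, conditionally masks the guard, branches, and then writes the updated predicate bit. Within this fragment a full mis-speculation epoch may start and end, during which the state relation temporarily weakens to $\ssrel$ and location $\trg{-1}$ and the bindings can be transiently tainted $\trg{\unta}$. The delicate bookkeeping is to show that (i) every speculatively produced label is $\trg{\safeta}$ thanks to the mask invariant on $\trg{\predState}$ reused from \Cref{thm:slh-comp-rdss}, and (ii) after the roll-back the heap and the relevant bindings match those of the source program well enough to re-establish the strong $\srel$ against $\src{s_1;s''}$ (or $\src{s_2;s''}$) with an updated $\trg{\OB{D'}}\supseteq\trg{\OB{D}}$. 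Once this invariant is in place, the rest of the induction is a standard determinism-plus-forward-simulation bounce; the argument then lifts directly to the $\xtos{}$/$\Xtot{}$ layer because these are defined homomorphically over $\xltot{}$.
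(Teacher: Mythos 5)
Your proposal is correct and matches the paper's approach: the paper proves this theorem exactly by combining the forward simulation lemma (\Cref{thm:fwd-sim-stm-slh}) with determinism of the target semantics, just as it does for \complfence{\cdot} in \Cref{thm:bwd-sim-comp-steps-lfence}. The additional machinery you describe (the induction over the trace, the branch/speculation case, the $\ssrelref$ weakening and the drop-set $\trg{\OB{D'}}$) is where the real work lies, but in the paper's decomposition that work is discharged inside the forward simulation lemma and its supporting speculation lemmas rather than in this proof.
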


\begin{lemma}[Expression Reductions with Safe Bindings are Safe]\label{thm:exp-red-safe}(\showproof{exp-red-safe})
	\begin{align*}
		\text{ if }
			&\
			\vdash \trg{{B}} : \sbrel	
		\\
		\text{ then }
			&\
			\trg{B \triangleright e \bigredt v : \safeta}
	\end{align*}
\end{lemma}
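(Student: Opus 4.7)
\medskip

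\noindent\textbf{Proof proposal.} The plan is to proceed by structural induction on the expression $\trg{e}$, using the taint-tracking big-step rules for expressions (\Cref{tr:ts-big-val,tr:ts-big-var,tr:ts-big-op,tr:ts-big-bop}) together with the inductive definition of $\vdash \trg{B} : \sbrel$. The crucial observation is that $\vdash \trg{B} : \sbrel$ forces every binding in $\trg{B}$ to have the form $\trg{x \mapsto v : \safeta}$, so looking up any variable already yields a $\safeta$ taint; the rest is just propagation through operators using that $\safeta \sqcup \safeta = \safeta$.

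First I would handle the two base cases. For $\trg{e} = \trg{v}$, \Cref{tr:ts-big-val} directly gives $\trg{B \triangleright v \bigredt \safeta}$, regardless of $\trg{B}$. For $\trg{e} = \trg{x}$, \Cref{tr:ts-big-var} yields $\trg{B \triangleright x \bigredt \sigma}$ where $\trg{B}(\trg{x}) = \trg{\sigma}$; by a straightforward side induction on the derivation of $\vdash \trg{B} : \sbrel$, every $\trg{x} \in \dom{\trg{B}}$ is mapped to a value tagged $\safeta$, hence $\trg{\sigma} = \trg{\safeta}$.

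For the inductive step I would treat $\trg{e} = \trg{e_1 \op e_2}$ and $\trg{e} = \trg{e_1 \bop e_2}$ uniformly. By \Cref{tr:ts-big-op} (resp.\ \Cref{tr:ts-big-bop}) there exist $\trg{\sigma_1}, \trg{\sigma_2}$ such that $\trg{B \triangleright e_1 \bigredt \sigma_1}$, $\trg{B \triangleright e_2 \bigredt \sigma_2}$, and the tag of $\trg{e}$ is $\trg{\sigma_1 \sqcup \sigma_2}$. Applying the induction hypothesis to each sub-expression gives $\trg{\sigma_1} = \trg{\sigma_2} = \trg{\safeta}$, and since $\trg{\safeta \sqcup \safeta = \safeta}$ in the integrity lattice, we conclude $\trg{B \triangleright e \bigredt \safeta}$.

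There is no real obstacle here: the statement is essentially a routine taint-monotonicity lemma, and the hard work has already been done in defining $\sbrel$ to only admit $\safeta$-tagged entries. The only point requiring a little care is the auxiliary fact that $\vdash \trg{B} : \sbrel$ implies $\trg{B}(\trg{x}) = \trg{v : \safeta}$ for every $\trg{x} \in \dom{\trg{B}}$, which falls out of a one-line induction on the derivation of $\vdash \trg{B} : \sbrel$ and is used only in the variable base case.
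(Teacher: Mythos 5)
Your proposal is correct and matches the paper's own argument, which is exactly a structural induction on $\trg{e}$ whose only nontrivial case is the variable lookup, discharged by the fact that $\vdash \trg{B} : \sbrel$ only admits $\trg{\safeta}$-tagged bindings. The extra detail you give on the operator cases and the side induction for the lookup fact is a faithful elaboration of the same route, not a different one.
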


\begin{lemma}[Any Speculation from Related States is Safe]\label{thm:spec-rel-satte-safe}(\showproof{spec-rel-satte-safe})
	\begin{align*}
		\text{ if }
			&\
			\src{\Omega} \ssrelref_{\src{\OB{f_c}}}^{\trg{\OB{D}}} \trg{\Sigma}
		\\
		\text{ and }
			&\
			\trg{\Sigma}=\trg{w (C, H_b, \OB{B_b} \triangleright \proc{s_b}{\OB{f_b}},\bot,\safeta) \cdot \OB{(C, H', \OB{B'} \triangleright \proc{s'}{\OB{f'}},\omega',\unta)} \cdot (C, H, \OB{B} \triangleright \proc{s}{\OB{f}\cdot f},\omega,\unta)}
		\\
		\text{ and }
			&\
			\trg{\Sigma'}=\trg{w (C, H_b, \OB{B_b} \triangleright \proc{s_b}{\OB{f_b}},\bot,\safeta)}
		\\
		\text{ and }
			&\
			\trg{n'}\leq \trg{n+w}
		\\
		\text{ and }
			&\
			\text{ let } \trg{\OB{\omega'}} =  \trg{\omega_1', \cdots, \omega_k'}, \trg{\omega_1'+ \cdots+ \omega_k' + \omega} \leq \trg{w}
		\\
		\text{ then }
			&\
			\trg{(n,\Sigma) \Xtot{\trat{^{\taintt}}} (n',\Sigma')}
		\\
		\text{ and }
			&\
			\srce \tracerel \trg{\trat{^{\taintt}}}
		\\
		\text{ and }
			&\
			\src{\Omega} \ssrelref_{\src{\OB{f_c}}}^{\trg{\OB{D}}} \trg{\Sigma'}
	\end{align*}
\end{lemma}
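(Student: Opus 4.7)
The plan is to induct on the measure $m = \trg{\omega_1'} + \cdots + \trg{\omega_k'} + \trg{\omega}$, which bounds the total number of speculative steps that can still be taken before all mis-speculated instances get rolled back and we return to $\trg{\Sigma'}$. The strategy is to show a single-step version of the claim and then chain it: from any related speculating configuration, the top speculation instance can either (a) take one step producing a $\trg{\safeta}$-tainted action while preserving $\ssrelref$, (b) push a new speculation instance while still preserving the invariant, or (c) get rolled back producing a $\trg{\rollbl^{\safeta}}$ action and popping the stack. In every case the emitted label relates to the empty source trace via rules \Cref{tr:tr-rel-safe-a}, \Cref{tr:tr-rel-safe-h}, or \Cref{tr:tr-rel-rollb}, so concatenating them keeps $\srce \tracerel \trg{\trat{^{\taintt}}}$.

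The core invariant I would maintain throughout the induction is the one already baked into the speculating part of $\ssrel$: every binding on every speculation instance satisfies $\vdash \trg{B} : \sbrel$, i.e.\ stores only $\trg{\safeta}$ values. Combined with \Cref{thm:exp-red-safe}, this immediately gives that any expression evaluated on such bindings reduces to a $\trg{\safeta}$ value. The case analysis on the top reduction then proceeds rule-by-rule on \Cref{tr:et-sp-eps,tr:et-sp-lf,tr:et-sp-act,tr:et-sp-if,tr:et-sp-rb,tr:et-sp-rb-s}: for action-producing rules, the emitted label carries data taint $\trg{\safeta}$, and since it is combined with the pc taint via $\glb$, the whole observation is $\trg{\safeta} \glb \trg{\unta} = \trg{\safeta}$; for \Cref{tr:et-sp-if}, the newly pushed instance inherits the same bindings and therefore remains in $\sbrel$; for the rollback rules, the popped stack is strictly smaller and still matches $\ssrelref_{\src{\OB{f_c}}}^{\trg{\OB{D}}}$ with the same source state $\src{\Omega}$. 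In each subcase $m$ strictly decreases, so the induction hypothesis closes the goal.

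The main obstacle is preservation of the safe-bindings invariant across the compiled-code steps that create or update bindings inside a speculating instance, i.e.\ the compilations of $\src{letin}$, $\src{letread}$, $\src{letreadp}$, and the auxiliary $\trg{x_f}$ bindings used in $\compsslh{\cdot}$ for writes, calls, and conditionals. The delicate point is that a speculatively-loaded value from the heap need not be $\trg{\safeta}$, and the $\trg{\predState}$ bit that guards the \trg{cmov} masking is itself read from the heap and thus, on instances nested inside speculation, may carry an $\trg{\unta}$ tag. What rescues the argument is that in $\compsslh{\cdot}$ every such binding is immediately hit by a $\trg{cmov}$ against $\trg{\predState}$, and while speculation is ongoing $\trg{\predState}$ evaluates to a non-zero value (the base-state invariant $\trg{H(-1) = \truet : \safeta}$ combined with the update pattern emitted at each compiled branch ensures this). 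Hence the conditional move forces the target variable to the constant $\trg{0 : \safeta}$, restoring the invariant. I would therefore prove a separate auxiliary lemma stating that for any compiled statement $\compsslh{s}$, one step of $\xltot{}$ starting from a speculating configuration satisfying the invariant produces an $\trg{\safeta}$-tainted label and preserves the invariant (possibly extending $\trg{\OB{D}}$ with freshly introduced auxiliary names), and then use this lemma as the engine of the main induction.

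Finally, once the induction reaches $m = 0$ (or earlier, via a stuck state), the remaining step is \Cref{tr:et-sp-rb} or \Cref{tr:et-sp-rb-s}, which strips the last speculating instance and yields exactly $\trg{\Sigma'}$. Since the base instance was never touched during the speculating steps and is related to $\src{\Omega}$ via the strong relation $\srel_{\src{\OB{f_c}}}^{\trg{\OB{D}}}$ by assumption, $\src{\Omega} \ssrelref_{\src{\OB{f_c}}}^{\trg{\OB{D}}} \trg{\Sigma'}$ follows directly, closing the proof.
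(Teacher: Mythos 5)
Your proposal is correct and matches the paper's argument in all essentials: the same safe-bindings invariant on speculating instances, the same per-statement preservation argument driven by the \trg{cmov}/\trg{\predState} masking in \compsslh{\cdot} (the paper's \Cref{thm:comp-spec-safe} and \Cref{thm:ctx-spec-sing-safe}), and the same conclusion that every emitted label is \trg{\safeta} or a rollback and hence relates to \srce via \Cref{tr:tr-rel-safe-a,tr:tr-rel-safe-h,tr:tr-rel-rollb}. The only cosmetic difference is that you induct once on the summed remaining speculation budget, whereas the paper nests an induction on the stack of speculation instances around \Cref{thm:spec-most-omega}, which bounds each instance's lifetime by its window.
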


\begin{lemma}[Speculation Lasts at Most Omega]\label{thm:spec-most-omega}(\showproof{spec-most-omega})
	\begin{align*}
		\text{ if }
			&\
			\src{\Omega} \ssrelref_{\src{\OB{f_c}}}^{\trg{\OB{D}}} \trg{\Sigma}
		\\
		\text{ and }
			&\
			\trg{\Sigma}=\trg{w (C, H_b, \OB{B_b} \triangleright \proc{s_b}{\OB{f_b}},\bot,\safeta) \cdot \OB{(C, H', \OB{B'} \triangleright \proc{s'}{\OB{f'}},\omega',\unta)} \cdot (C, H, \OB{B} \triangleright \proc{s}{\OB{f}\cdot f},\omega,\unta)}
		\\
		\text{ and }
			&\
			\trg{\Sigma'}=\trg{w (C, H_b, \OB{B_b} \triangleright \proc{s_b}{\OB{f_b}},\bot,\safeta) \cdot \OB{(C, H', \OB{B'} \triangleright \proc{s'}{\OB{f'}},\omega',\unta)} \cdot (C, H'', \OB{B''} \triangleright \proc{s''}{\OB{f''}\cdot f''},0,\unta)}
		\\
		\text{ and }
			&\
			\trg{n'}\leq \trg{n+\omega}
		\\
		\text{ then }
			&\
			\trg{(n,\Sigma) \Xtot{\trat{^{\taintt}}} (n',\Sigma')}
		\\
		\text{ and }
			&\
			\srce \tracerel \trg{\trat{^{\taintt}}}
		\\
		\text{ and }
			&\
			\src{\Omega} \ssrelref_{\src{\OB{f_c}}}^{\trg{\OB{D}}} \trg{\Sigma'}
	\end{align*}
\end{lemma}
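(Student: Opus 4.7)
The plan is to induct on the remaining window $\omega$ of the top speculation instance. The base case $\omega = 0$ is immediate: take $\trg{n'} = \trg{n}$, no reduction is required, and the relation and empty-trace relation hold vacuously since $\trg{\Sigma} = \trg{\Sigma'}$.

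For the inductive step, I would perform a case analysis on the shape of the statement at the top of the speculation stack (i.e., the head of $\trg{s''}$ prior to reaching $\trg{\Sigma'}$). The easy cases are plain (non-branching, non-fence) statements: by applying \Cref{tr:et-sp-act} or \Cref{tr:et-sp-eps}, the window drops from $\omega = k+1$ to $k$, possibly emitting a microarchitectural action. The critical sub-claim here is that this action is tagged $\trg{\safeta}$: this follows from the relation $\ssrelref$, which forces $\vdash \trg{B} : \sbrel$ on the speculating bindings when the top function is not attacker-controlled, combined with \Cref{thm:exp-red-safe} (expression reductions from safe bindings are safe) and the taint rule $\safeta \glb \unta = \safeta$. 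By the semantics of the compiler $\compsslh{\cdot}$, each compiled statement that could extend the bindings does so with a value that is either a constant, an expression over already-safe bindings, or the result of a $\trg{cmov}$ masking against $\trg{\predState}$, so the $\sbrel$ invariant is preserved. Hence a single action is emitted that is trace-related to $\src{\epsilon}$ by \Cref{tr:ac-rel-ep-al} or \Cref{tr:ac-rel-ep-hp}, and the inductive hypothesis at $\omega = k$ closes the case. The lfence case is trivial: \Cref{tr:et-sp-lf} sets the remaining window directly to $0$, matching $\trg{\Sigma'}$.

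The main obstacle is the branch case, where \Cref{tr:et-sp-if} pushes a fresh unsafe speculation instance on top of the stack with window $\fun{min}{\trg{w},\trg{n}}$ while decrementing the current window. Here the shape of $\trg{\Sigma'}$ in the lemma's conclusion assumes the top speculation has window $0$ on the same stack frame, so we need to dispatch of the nested speculation first. I would do this by appealing to \Cref{thm:spec-rel-satte-safe} (or, more accurately, an adapted version that collapses only the newly pushed speculation instance rather than all the way to the base): the nested instance runs for at most its own window, produces only $\trg{\safeta}$ actions (since the invariant on bindings propagates to it), and finally rolls back via \Cref{tr:et-sp-rb}. The total step count consumed (nested speculation steps plus the single $\trg{\rollbl}$ step) is bounded by the size of the window we inherited from the parent, which is precisely what the accounting $\trg{n'} \leq \trg{n+\omega}$ tracks. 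After the nested rollback we are back with the parent's window decremented by $1$ (for the branch step itself, \Cref{tr:et-sp-if}), and the inductive hypothesis applies.

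Throughout, the action and trace relations compose cleanly: every emitted label is $\trg{\safeta}$-tagged, and \Cref{tr:tr-rel-safe-a,tr:tr-rel-safe-h,tr:tr-rel-rollb} together with the induction hypothesis give $\srce \tracerel \trat{^{\taintt}}$. Preservation of $\ssrelref_{\src{\OB{f_c}}}^{\trg{\OB{D}}}$ follows because none of the steps modify the base speculation instance nor violate the $\sbrel$/$\shrel$ invariants on the top frame; the variable set $\trg{\OB{D}}$ may need to be extended with freshly introduced compiler variables (e.g.\ $\trg{x_f}$, $\trg{\predState}$), but this is exactly the kind of growth already supported by the $\brel$ relation. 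The principal delicate point to get right in the write-up is ensuring that the window-accounting of the nested-speculation collapse matches the parent's budget so that the outer induction on $\omega$ remains well-founded.
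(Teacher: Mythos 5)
Your core invariant is the right one --- safe bindings during speculation, \Cref{thm:exp-red-safe}, and the data-taint/pc-taint combination give safe actions, which are then droppable against $\srce$ --- but your decomposition differs from the paper's. The paper does not induct on $\trgb{\omega}$ with a case analysis on statement shape; it first splits on whether $\trg{f}$ and $\trg{f''}$ lie in the compiled component $\src{\OB{f_c}}$ or in the attacker, delegates the two ``pure'' cases to \Thmref{thm:comp-spec-most-omega} and \Thmref{thm:ctx-spec-most-omega} (which in turn rest on \Thmref{thm:comp-spec-safe} and \Thmref{thm:ctx-spec-sing-safe}), and only then inducts on $\trgb{\omega}$ for the boundary-crossing cases. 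Your handling of nested branches by ``collapsing'' the pushed instance is also structurally inverted relative to the paper: \Thmref{thm:spec-rel-satte-safe} is proved \emph{from} this lemma, so invoking it (even adapted) here is a recursive appeal to the statement being proved on a strictly smaller window $\fun{min}{\trgb{\omega},\trg{n}}$. That can be made well-founded, as you note, but it must be set up as an explicit strong induction on the window rather than a citation of a downstream lemma.

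The genuine gap is your treatment of the cross-boundary cases. Your case analysis is over ``compiled statements,'' yet speculation can flow from compiled code into the attacker via $\compsslh{\call{f}~e}$ with $\trg{f}\in\trg{\OB{I}}$, and back via $\trg{\ret}$. These are exactly the cases the paper isolates and works through in detail: for the outgoing call one must show the argument is masked to $\trg{0}:\trg{\safeta}$ by the $\trg{cmov}$ against $\trg{\predState}$ (so the $\labelfont{call}$ action --- which, unlike internal attacker $\mu$arch actions, is \emph{not} dropped from the trace --- is $\trg{\safeta}$), and for the return into compiled code one must re-establish that the bindings returned to are all safe, using \Thmref{thm:ctx-spec-sing-safe} plus the safety of the binding installed at call time. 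This is not a corner case: the paper's own proof marks the call case as precisely the point where the argument fails for the insecure non-interprocedural variant, so a proof that does not explicitly discharge it has not engaged with the load-bearing step. Relatedly, when the top frame is attacker code your $\sbrel$ invariant is not imposed by $\ssrelref$, so you need the separate argument (the attacker cannot load $\trg{\unta}$ values and its internal $\mu$arch actions are elided by \Cref{tr:eut-tr-sin}) to conclude safety of the emitted labels there; your write-up gestures at this but does not supply it.
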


\begin{lemma}[Context Speculation Lasts at Most Omega]\label{thm:ctx-spec-most-omega}(\showproof{ctx-spec-most-omega})
	\begin{align*}
		\text{ if }
			&\
			\src{\Omega} \ssrelref_{\src{\OB{f_c}}}^{\trg{\OB{D}}} \trg{\Sigma}
		\\
		\text{ and }
			&\
			\trg{\Sigma}=\trg{w (C, H_b, \OB{B_b} \triangleright \proc{s_b}{\OB{f_b}},\bot,\safeta) \cdot \OB{(C, H', \OB{B'} \triangleright \proc{s'}{\OB{f'}},\omega',\unta)} \cdot (C, H, \OB{B} \triangleright \proc{s}{\OB{f}\cdot f},\omega,\unta)}
		\\
		\text{ and }
			&\
			\trg{\Sigma'}=\trg{w (C, H_b, \OB{B_b} \triangleright \proc{s_b}{\OB{f_b}},\bot,\safeta) \cdot \OB{(C, H', \OB{B'} \triangleright \proc{s'}{\OB{f'}},\omega',\unta)} \cdot (C, H'', \OB{B''} \triangleright \proc{s''}{\OB{f''}\cdot f''},0,\unta)}
		\\
		\text{ and }
			&\
			\trg{n'}\leq \trg{n+\omega}
		\\
		\text{ and }
			&\
			\trg{f},\trg{f''}\notin\src{\OB{f_c}}
		\\
		\text{ then }
			&\
			\trg{(n,\Sigma) \Xtot{\trat{^{\taintt}}} (n',\Sigma')}
		\\
		\text{ and }
			&\
			\srce \tracerel \trg{\trat{^{\taintt}}}
		\\
		\text{ and }
			&\
			\src{\Omega} \ssrelref_{\src{\OB{f_c}}}^{\trg{\OB{D}}} \trg{\Sigma'}
	\end{align*}
\end{lemma}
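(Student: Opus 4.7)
The plan is to prove this by induction on $\omega$, the remaining speculation window of the topmost (context-executing) instance, using the same overall template as the general \Cref{thm:spec-most-omega} but specialising each case to exploit the fact that the executing function lies outside the component function set $\src{\OB{f_c}}$. The base case $\omega = 0$ is immediate: we take $\trg{\Sigma} = \trg{\Sigma'}$, the trace is empty, and the state relation holds trivially. For the inductive step, I will examine which rule the top instance fires from \Cref{tr:et-sp-act,tr:et-sp-if,tr:et-sp-lf} and its attacker variants.

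The crucial invariant carried through the induction is that whenever the speculating function $\trg{f}$ is a context function, the associated binding stack $\trg{B}$ satisfies $\vdash\trg{B}:\sbrel$ (i.e., contains only $\trg{\safeta}$-tagged values), and the context may only access the public heap. This is guaranteed by the definition of $\ssrelref$ in \Cref{sec:rels-slh}: attacker states $(\trg{f}\notin\src{\OB{f_c}})$ have safe bindings and a $\shrel$-heap. Then \Cref{thm:exp-red-safe} ensures that every expression evaluated by the context reduces to a $\trg{\safeta}$ value, so by the taint-tracking rules for reads, writes, conditionals, and calls, every action produced has tag $\trg{\safeta\lub\taintt_{pc}} = \trg{\safeta}$. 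Hence every emitted label relates to the empty source trace via \Cref{tr:tr-rel-safe,tr:tr-rel-safe-h,tr:tr-rel-rollb}. The bookkeeping case analysis splits on: (i) a silent or observable context step, which decrements the window by one; (ii) an attacker if-statement, which uses the attacker if-rule \Cref{tr:et-sp-if} that does not push a new speculation frame; and (iii) a cross-boundary transition, handled below.

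The main obstacle will be case (iii): the context may, during speculation, call into the component, transferring control to a function $\trg{f'}\in\src{\OB{f_c}}$. At that point the hypothesis on the top $\trg{f}$ breaks, so we cannot apply the inductive hypothesis of this lemma directly. My plan is to resolve this by invoking \Cref{thm:spec-most-omega} (which does not require the current function to be in the context) to consume the component-executing portion of the speculation, producing only $\trg{\safeta}$ labels by the argument for $\compsslh{\cdot}$ already established for the symmetric case. When control eventually returns to the context (or the window hits zero), we continue with the inductive hypothesis of the present lemma. This combination is well-founded because each invocation strictly decreases the remaining speculation window of the top instance, which is bounded above by the global window $\trg{w}$.

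Finally, to close out, I will verify that when the top instance reaches window $0$ (or gets stuck), the resulting $\trg{\Sigma'}$ still satisfies $\src{\Omega} \ssrelref_{\src{\OB{f_c}}}^{\trg{\OB{D}}} \trg{\Sigma'}$: the base speculation instance $(C, H_b, \OB{B_b}\triangleright\proc{s_b}{\OB{f_b}},\bot,\safeta)$ is untouched throughout, the intermediate speculating frames remain in $\ssrel$ by the weak state relation (which only requires $\trg{\safeta}$-bindings, a condition preserved by context execution), and the new top frame, being still attacker-executing by hypothesis, inherits the safe-binding invariant from the previous attacker frame through the case analysis. The generated trace $\trat{^{\taintt}}$ is a concatenation of $\trg{\safeta}$-labelled labels, hence $\srce \tracerel \trat{^{\taintt}}$ by repeated application of the safe-label clauses of the trace relation.
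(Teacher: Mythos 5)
The core of your argument is exactly the paper's: the paper proves this lemma ``by induction on the reduction'' using \Cref{thm:ctx-spec-sing-safe}, whose content is precisely your carried invariant (attacker bindings are $\trg{\safeta}$ under $\ssrelref$, so by \Cref{thm:exp-red-safe} every expression the context evaluates is safe, every emitted label is $\trg{\safeta}$, and each label relates to $\srce$ via the safe-label clauses of the trace relation). Your induction on $\trgb{\omega}$ is the same well-founded measure, since each non-rollback step decrements the window.

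Where you diverge is your case (iii). The paper does not handle a mid-speculation crossing into the component inside this lemma at all: in its decomposition, \Cref{thm:ctx-spec-most-omega} is invoked only on runs that stay in the context, and boundary crossings are dealt with in the parent lemma \Cref{thm:spec-most-omega} (its two ``mixed'' cases on $\trg{f}$ and $\trg{f''}$), with the component-only stretches delegated to \Cref{thm:comp-spec-most-omega}. Your instinct that the statement as written only constrains the endpoints, so intermediate crossings need an argument, is fair --- but your specific fix does not compose. Invoking \Cref{thm:spec-most-omega} to ``consume the component-executing portion'' drives the top instance's window all the way to $\trg{0}$ while still inside the component, so control never returns to the context within that frame and the resulting top function would lie in $\src{\OB{f_c}}$, contradicting the hypothesis $\trg{f''}\notin\src{\OB{f_c}}$ you are trying to re-establish; you would need a different intermediate lemma that stops at the return to the context, which the paper does not provide. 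It also creates a mutual recursion with \Cref{thm:spec-most-omega} (whose proof calls this very lemma in its ``both context'' case); your well-foundedness remark on the shrinking window is the right idea for discharging that, but it must be set up as an explicit mutual induction on the window across all three lemmas rather than an appeal to an already-proved \Cref{thm:spec-most-omega}. If you instead restrict this lemma to context-only runs and leave crossings to the parent lemma, your proof coincides with the paper's.
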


\begin{lemma}[Single Context Speculation is Safe]\label{thm:ctx-spec-sing-safe}(\showproof{ctx-spec-sing-safe})
	\begin{align*}
		\text{ if }
			&\
			\src{\Omega} \ssrelref_{\src{\OB{f_c}}}^{\trg{\OB{D}}} \trg{\Sigma}
		\\
		\text{ and }
			&\
			\trg{\Sigma}=\trg{w (C, H_b, \OB{B_b} \triangleright \proc{s_b}{\OB{f_b}},\bot,\safeta) \cdot \OB{(C, H', \OB{B'} \triangleright \proc{s'}{\OB{f'}},\omega',\unta)} \cdot (C, H, \OB{B} \triangleright \proc{s}{\OB{f}\cdot f},\omega,\unta)}
		\\
		\text{ and }
			&\
			\trg{\Sigma'}=\trg{w (C, H_b, \OB{B_b} \triangleright \proc{s_b}{\OB{f_b}},\bot,\safeta) \cdot \OB{(C, H', \OB{B'} \triangleright \proc{s'}{\OB{f'}},\omega',\unta)} \cdot (C, H'', \OB{B''} \triangleright \proc{s''}{\OB{f''}\cdot f''},\omega-1,\unta)}
		\\
		\text{ and }
			&\
			\trg{f},\trg{f''}\notin\src{\OB{f_c}}
		\\
		\text{ then }
			&\
			\trg{(\Sigma) \xltot{\acat{^{\taintt}}} (\Sigma')}
		\\
		\text{ and }
			&\
			\srce \tracerel \trg{\acat{^{\taintt}}}
		\\
		\text{ and }
			&\
			\src{\Omega} \ssrelref_{\src{\OB{f_c}}}^{\trg{\OB{D}}} \trg{\Sigma'}
	\end{align*}
\end{lemma}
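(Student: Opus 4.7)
My plan is to proceed by case analysis on the statement $\trg{s}$ at the head of the top speculative instance, applying the appropriate single-step rule of $\xltot{}$. Two preliminary observations structure the argument. First, from $\src{\Omega} \ssrelref_{\src{\OB{f_c}}}^{\trg{\OB{D}}} \trg{\Sigma}$ together with $\trg{f} \notin \src{\OB{f_c}}$, I would extract that the top binding frame $\trg{B}$ satisfies $\vdash \trg{B} : \sbrel$, i.e., every variable is bound to a $\safeta$-tagged value; combined with \Cref{thm:exp-red-safe}, this ensures every expression evaluated in $\trg{B}$ reduces to a $\safeta$-tagged value. Second, since $\trg{f} \in \trg{\OB{I}}$, the speculation-pushing rule E-\TR-speculate-if cannot fire: instead the attacker variant E-\TR-speculate-if-att applies for conditionals, preserving the stack shape required by the conclusion. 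E-\TR-speculate-rollback is ruled out by the $\omega - 1$ post-window, and E-\TR-speculate-lfence only applies in the degenerate sub-case $\omega = 1$.

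The remaining cases — sequencing, letin, cmove, conditional (via E-\TR-speculate-if-att), public read/write (the attacker cannot touch private locations by validity of $\ctxt{}$), and context-to-context call/return — all admit the expected single speculative reduction, and for each I must check two things: that the emitted action is tagged $\safeta$, and that the $\ssrel$ invariant is re-established at $\trg{\Sigma'}$. For the action taint, the base taint produced by the underlying non-speculative taint rule depends only on the taints of expressions evaluated in $\trg{B}$, which are $\safeta$; the speculative combination with pc-taint $\unta$ via $\lub$ gives $\safeta \lub \unta = \safeta$. Hence the action relates to the empty source trace by \Cref{tr:tr-rel-safe-a} or \Cref{tr:tr-rel-safe-h} via \Cref{tr:ac-rel-ep-al} / \Cref{tr:ac-rel-ep-hp}.

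For invariant preservation, the base non-speculative instance $(\Omega,\bot,\safeta)$ is untouched by any top-of-stack speculative step, so $\src{\Omega} \srel \trg{(w,(\Omega,\bot,\safeta))}$ persists verbatim. For the top frame, any new binding introduced by letin, cmove, or the single binding $x \mapsto v : \safeta$ created by a context-to-context call inherits a $\safeta$ tag from expression safety, maintaining $\vdash \trg{B''} : \sbrel$. The heap condition $\vdash \trg{H''} : \shrel$ is preserved because the context never writes to negative locations — in particular, location $-1$ is untouched — and public writes store $\safeta$-taints at non-negative addresses by the non-speculative write rule.

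The main obstacle will be the careful bookkeeping of the safe-binding invariant through call and return transitions: one must confirm that the newly pushed frame (respectively, the uncovered frame after a return) still satisfies $\sbrel$ under the weak state relation on context frames. This is where the structural information in $\ssrel$ — that every context frame in the speculation stack has $\sbrel$-safe bindings — is essential, and where the premise $\trg{f''} \notin \src{\OB{f_c}}$ prevents leakage into a component frame that would break the invariant. Everything else is a mechanical rule-by-rule verification using the expression-safety lemma and the attacker-validity restrictions on syntactic form.
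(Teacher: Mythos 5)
Your proposal is correct and follows essentially the same route as the paper's proof: a case analysis on $\trg{s}$ in which every case is discharged by observing that, since the attacker's bindings are all $\safeta$ (via \Cref{thm:exp-red-safe}) and the context cannot load unsafe values or touch the private heap, every emitted action is tagged $\safeta$ and thus relates to $\srce$, while the $\ssrelref$ invariant is preserved. The paper dismisses all cases as "trivial" in one line; your write-up merely spells out the same argument (including the attacker-specific conditional rule and the call/return bookkeeping) in more detail.
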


\begin{lemma}[Compiled Speculation Lasts at Most Omega]\label{thm:comp-spec-most-omega}(\showproof{comp-spec-most-omega})
	\begin{align*}
		\text{ if }
			&\
			\src{\Omega} \ssrelref_{\src{\OB{f_c}}}^{\trg{\OB{D}}} \trg{\Sigma}
		\\
		\text{ and }
			&\
			\trg{\Sigma}=\trg{w (C, H_b, \OB{B_b} \triangleright \proc{s_b}{\OB{f_b}},\bot,\safeta) \cdot \OB{(C, H', \OB{B'} \triangleright \proc{s'}{\OB{f'}},\omega',\unta)} \cdot (C, H, \OB{B} \triangleright \proc{\compsslh{s}}{\OB{f}\cdot f},\omega,\unta)}
		\\
		\text{ and }
			&\
			\trg{\Sigma'}=\trg{w (C, H_b, \OB{B_b} \triangleright \proc{s_b}{\OB{f_b}},\bot,\safeta) \cdot \OB{(C, H', \OB{B'} \triangleright \proc{s'}{\OB{f'}},\omega',\unta)} \cdot (C, H'', \OB{B''} \triangleright \proc{\compsslh{s''}}{\OB{f''}\cdot f''},0,\unta)}
		\\
		\text{ and }
			&\
			\trg{n'}\leq \trg{n+\omega}
		\\
		\text{ and }
			&\
			\trg{f},\trg{f''}\in\src{\OB{f_c}}
		\\
		\text{ then }
			&\
			\trg{(n,\Sigma) \Xtot{\trat{^{\taintt}}} (n',\Sigma')}
		\\
		\text{ and }
			&\
			\srce \tracerel \trg{\trat{^{\taintt}}}
		\\
		\text{ and }
			&\
			\src{\Omega} \ssrelref_{\src{\OB{f_c}}}^{\trg{\OB{D}}} \trg{\Sigma'}
	\end{align*}
\end{lemma}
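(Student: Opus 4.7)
The plan is to prove \Thmref{thm:comp-spec-most-omega} by strong induction on the number of reduction steps $k \leq \omega$ separating $\trg{\Sigma}$ from $\trg{\Sigma'}$. The base case ($k=0$) is immediate: $\trg{\Sigma}=\trg{\Sigma'}$, the emitted trace is $\trge$, and $\srce \tracerel \trge$ holds trivially while the state relation is unchanged.

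For the inductive step, I would decompose the reduction $\trg{(n,\Sigma) \Xtot{\trat{^{\taintt}}} (n',\Sigma')}$ into a single step $\trg{\Sigma \xltot{\trgb{\lambda}^{\taintt_0}} \Sigma_1}$ followed by a shorter run $\trg{(n{+}1,\Sigma_1) \Xtot{\trat{_1^{\taintt}}} (n',\Sigma')}$, with $\trat{^{\taintt}} = \trgb{\lambda}^{\taintt_0} \cdot \trat{_1^{\taintt}}$. The first step can only be generated by one of the speculation rules (\Cref{tr:et-sp-eps,tr:et-sp-act,tr:et-sp-lf,tr:et-sp-if}); the rollback rule is excluded because the window at the top frame is still $\omega > 0$ in $\trg{\Sigma}$ and remains $\geq 0$ in $\trg{\Sigma'}$, which by hypothesis still contains the top speculation frame. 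For the non-branching cases, case analysis on the top compiled statement $\compsslh{s}$ is routine: let-in, skip, and sequence emit $\trg{\epsilon}$; memory and branch-relevant statements emit actions whose taint I claim is $\trg{\safeta}$. For the branching case (\Cref{tr:et-sp-if}), a new speculation frame of window $\min(\trgb{\omega},n)$ is pushed; I apply the induction hypothesis first to exhaust that nested frame (which is itself speculation inside compiled code, since the conditional is inside $\src{\OB{f_c}}$), then invoke induction again to handle the remainder.

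The linchpin is the invariant carried by $\ssrelref$: during speculation in compiled code, every local binding is tagged $\trg{\safeta}$ and the heap satisfies $\vdash\trg{H}:\shrel$. From this, \Cref{thm:exp-red-safe} gives that every expression $\compsslh{e}$ evaluates to a $\trg{\safeta}$ value, and inspection of each compilation case shows that the masking pattern (the $\trg{cmov}$ against $\trg{\predState}$ read from location $\trg{-1}$) conditionally overwrites the ``dangerous'' operand with the constant $\trg{0}$—which has taint $\trg{\safeta}$—before it is used as an address in a load, store, branch, or call. Together with $\trg{\taintt_{pc}} = \trg{\unta}$ during speculation, the emitted label taint is $\trg{\taintt_0 \glb \unta} = \trg{\safeta \glb \unta} = \trg{\safeta}$, so $\src{\epsilon} \arelref \trgb{\lambda}^{\trg{\safeta}}$ via \Cref{tr:ac-rel-ep-al,tr:ac-rel-ep-hp}. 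Preservation of the speculative state relation (strengthening $\trg{\OB{D}}$ to $\trg{\OB{D'}}$ to accommodate freshly introduced target-only variables such as $\trg{x_f}, \trg{\predState}$) follows by the same argument that underlies \Thmref{thm:fwd-sim-stm-slh}, with the additional observation that any speculative write into $\trg{-1}$ is itself masked to $\trg{0}$ so that $\vdash\trg{H}:\shrel$ is preserved.

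The main obstacle I anticipate is justifying the persistence of the predicate-bit invariant $\trg{H(-1)} = \trg{\truet : \safeta}$ across the entire speculative run. Speculation can in principle execute both the guarded assignment to $\trg{-1}$ found inside $\compsslh{\ifzte{...}{...}{...}}$ and arbitrary speculative stores arising from other compiled statements, and any of these could in principle corrupt the predicate bit and thereby break the masking discipline of subsequent loads. Discharging this requires showing that the $\trg{cmov}$-based masking in $\compsslh{\asgnp{e}{e'}}$ neutralises any speculative write to $\trg{-1}$ (because the masked address collapses to a $\trg{\safeta}$ constant whenever the bit is already set), and that the in-branch assignments $\trg{\asgnpt{-1}{\predState \vee \neg x_g}}$ and $\trg{\asgnpt{-1}{\predState \vee x_g}}$ cannot accidentally clear the bit during misspeculation, since $\trg{\predState}$ has been loaded \emph{after} speculation began and is therefore $\trg{1}$. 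Once this invariant is carefully phrased inside $\shreldef$ and shown to be preserved step-by-step, the remaining case analysis becomes mechanical, and the lemma follows by combining the per-step safeness of observations with \Cref{tr:tr-rel-safe-a,tr:tr-rel-safe-h} to conclude $\srce \tracerel \trat{^{\taintt}}$.
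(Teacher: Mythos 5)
Your overall strategy --- induction on the length of the speculative run, case analysis on the compiled statement at the top of the speculation stack, and discharging each case by showing the emitted action is tainted $\trg{\safeta}$ and hence droppable against the empty source trace via \Cref{tr:tr-rel-safe-a,tr:tr-rel-safe-h} --- is essentially the paper's: the paper factors the per-statement case analysis into \Cref{thm:comp-spec-safe} and then closes \Cref{thm:comp-spec-most-omega} by induction on the reduction, whereas you inline the two. Your identification of the persistence of the predicate-bit invariant $\vdash\trg{H}:\shrel$ (in particular $\trg{H(-1)}=\trg{\truet:\safeta}$ throughout the speculative run) as the crux also matches the paper's treatment.

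The one step that would fail as written is your ``linchpin'' invariant that \emph{every} local binding in a speculating \emph{compiled} frame is tagged $\trg{\safeta}$, from which you invoke \Cref{thm:exp-red-safe} to conclude that every $\compsslh{e}$ evaluates to a safe value. That invariant is not maintainable in the strong language: the strong taint rule for private reads (\Cref{tr:tus-rd-com-p-strong}) unconditionally taints the bound variable $\trg{\unta}$, so the first speculative private read in compiled code already violates it, and \Cref{thm:exp-red-safe} becomes inapplicable because its hypothesis $\vdash\trg{B}:\sbrel$ fails. The paper's relation $\ssrel$ is deliberately asymmetric on exactly this point: $\vdash\trg{B}:\sbrel$ is imposed only on \emph{attacker} frames ($\trg{f}\notin\src{\OB{f_c}}$), while for compiled frames only the heap is constrained, and the safety of each emitted action is obtained \emph{locally} from the per-use masking --- every address, branch guard, call argument and store operand is conditionally collapsed to the constant $\trg{0}:\trg{\safeta}$ immediately before the action-generating step, so the label taint is $\trg{\safeta}$ even though surrounding bindings may be $\trg{\unta}$. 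Since you also state this masking argument as a secondary justification, the proof is repairable by dropping the global binding invariant for compiled frames and arguing action-safety per use; but the appeal to \Cref{thm:exp-red-safe} inside compiled code, as proposed, does not go through.
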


\begin{lemma}[Compiled Speculation is Safe]\label{thm:comp-spec-safe}(\showproof{comp-spec-safe})
	\begin{align*}
		\text{ if }
			&\
			\src{\Omega} \ssrelref_{\src{\OB{f_c}}}^{\trg{\OB{D}}} \trg{\Sigma}
		\\
		\text{ and }
			&\
			\trg{\Sigma}=\trg{w (C, H_b, \OB{B_b} \triangleright \proc{s_b}{\OB{f_b}},\bot,\safeta) \cdot \OB{(C, H', \OB{B'} \triangleright \proc{s'}{\OB{f'}},\omega',\unta)} \cdot (C, H, \OB{B} \triangleright \proc{\compsslh{s}}{\OB{f}\cdot f},\omega,\unta)}
		\\
		\text{ and }
			&\
			\trg{\Sigma'}=\trg{w (C, H_b, \OB{B_b} \triangleright \proc{s_b}{\OB{f_b}},\bot,\safeta) \cdot \OB{(C, H', \OB{B'} \triangleright \proc{s'}{\OB{f'}},\omega',\unta)} \cdot (C, H'', \OB{B''} \triangleright \proc{\compsslh{s''}}{\OB{f''}\cdot f''},\omega'',\unta)}
		\\
		\text{ and }
			&\
			\trg{f},\trg{f''}\in\src{\OB{f_c}}
		\\
		\text{ then }
			&\
			\trg{(n,\Sigma) \Xtot{\trat{^{\taintt}}} (n',\Sigma')}
		\\
		\text{ and }
			&\
			\srce \tracerel \trg{\trat{^{\taintt}}}
		\\
		\text{ and either }
			&\
			\src{\Omega} \ssrelref_{\src{\OB{f_c}}}^{\trg{\OB{D}}} \trg{\Sigma'}
		\\
		\text{ or }
			&\
			\trg{\omega''} = \trg{0}
	\end{align*}
\end{lemma}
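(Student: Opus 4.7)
}
The plan is to proceed by induction on the length of the reduction $\trg{(n,\Sigma) \Xtot{\trat{^{\taintt}}} (n',\Sigma')}$, with case analysis on the rule used for the most recent $\xltot{}$ step. The base case (zero steps) is immediate: the trace is empty, so $\srce \tracerel \trg{\trat{^{\taintt}}}$ holds trivially and the state relation is preserved by assumption. For the inductive step, let $\trg{\Sigma''}$ be the penultimate state, so $\trg{(n,\Sigma) \Xtot{\trat{_0^{\taintt_0}}} (n'-1,\Sigma'')}$ and $\trg{\Sigma'' \xltot{\trgb{\lambda}^{\taintt'}} \Sigma'}$, with $\trat{^{\taintt}} = \trat{_0^{\taintt_0}} \cdot \trgb{\lambda}^{\taintt'}$. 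By the induction hypothesis, $\srce \tracerel \trat{_0^{\taintt_0}}$ and either the state relation still holds at $\trg{\Sigma''}$ or the top speculation window has already reached zero (in which case the remaining step can only be a rollback and we easily conclude).

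The core of the argument is showing that the single step $\trg{\Sigma'' \xltot{\trgb{\lambda}^{\taintt'}} \Sigma'}$ preserves both the state relation (or exhausts the window) and produces only safe observations, i.e.~$\taintt' = \trg{\safeta}$. The key invariant, inherited from the $\ssrel$ relation, is that in the topmost speculation instance all variable bindings visible to compiled code hold values tagged $\trg{\safeta}$. I would then case split on the compiled statement at the top of the stack. For the constructs output by $\compsslh{\cdot}$, each load, store, branch guard, and call argument is first bound to an auxiliary variable $\trg{x_f}$ and then conditionally masked to $\trg{0}$ via $\trg{cmov}$ against the predicate bit stored at $\trg{-1}$. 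Since during speculation $\trg{H(-1)}$ is (by a separate, already-established invariant about the predicate bit) dictating the mask, the masked value is the constant $\trg{0}$ whose taint is $\trg{\safeta}$. Thus by \Cref{thm:exp-red-safe} and the taint-combination rules in \Cref{tr:tt-sp-act}, the emitted label is tagged $\trg{\safeta}\glb\trg{\unta} = \trg{\safeta}$, and the written bindings remain safe, preserving $\sbrel$.

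Two subtle cases deserve care. First, for the compiled $\src{ifz}$, the step may push a new speculation instance; the guard $\trg{x_g}$ is conditionally masked before being tested, so the new instance inherits a binding environment that still satisfies $\vdash \trg{B} : \sbrel$ (modulo added $\trg{x_g}, \trg{\predState}$, which we absorb into $\trg{\OB{D}}$), and the new window $\fun{min}{\trgb{\omega},\trg{n}}$ is bounded, so the induction over the whole lemma on a possibly enlarged stack goes through by a standard lexicographic measure on total remaining speculation budget. Second, if the step is $\trg{\call{f}}$ or a $\trg{\ret}$ that crosses into attacker code, the step escapes into a function $\trg{f}\notin\src{\OB{f_c}}$; by the semantics of $\xltot{}$ attackers cannot speculate, which collapses to the case handled by \Cref{thm:ctx-spec-sing-safe}, which we chain. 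For the $\trg{rollback}$ rule, either $\omega'' = 0$ (the second conclusion) or the stack shrinks and we reapply the outer induction.

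\emph{Main obstacle.} The hard part will be keeping two invariants in lockstep as the stack grows and shrinks: (i) the predicate bit at $\trg{H(-1)}$ correctly reflects ``we are speculating'' from the compiled-code perspective, and (ii) every binding reachable by the compiled code is $\trg{\safeta}$-tagged. Invariant~(ii) is what we need to conclude $\taintt' = \trg{\safeta}$ for the emitted action, but (ii) only survives because (i) holds and triggers the $\trg{cmov}$ to mask. Tracking both across an $\trg{if}$ that pushes a mis-speculated state (where $\trg{H(-1)}$ is updated in the unchosen branch by the compiler's $\trg{\asgnpt{-1}{\cdots}}$ statement) is the delicate point; I would isolate it in a small auxiliary lemma saying ``after compiling the prelude of either branch, the $\trg{predState}$ variable and $\trg{H(-1)}$ consistently evaluate to $\trg{1}$ in the speculating copy and $\trg{0}$ in the architecturally-taken copy,'' and then reuse this lemma wherever a $\trg{cmov}$ is evaluated inside $\compsslh{\cdot}$.
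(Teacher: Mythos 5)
Your proposal is correct in substance and rests on exactly the same two pillars as the paper's proof: (i) the predicate bit at $\trg{-1}$ evaluates to ``speculating'' in every mis-speculated instance, so every $\trg{cmov}$ inside $\compsslh{\cdot}$ masks its operand to the constant $\trg{0}$ tagged $\trg{\safeta}$; and (ii) consequently all bindings reachable by compiled code during speculation stay $\trg{\safeta}$, so every emitted label is tagged $\trg{\safeta}\glb\trg{\unta}=\trg{\safeta}$ and is droppable against $\srce$ via the $\arel$ epsilon rules. The organizational difference is that the paper does structural induction on the source statement $\src{s}$ (skip, assign, call, seq, letin, if, read, \dots) and unfolds the compiled prelude of each construct step by step, whereas you induct on the length of the reduction and case-split on the last $\xltot{}$ rule. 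Both scaffolds work; the paper's version makes the ``trace of safe heap actions on $\trg{-1}$'' for each construct explicit, while yours localizes the argument in a single-step preservation lemma. Your proposed auxiliary lemma about $\trg{\predState}$/$\trg{H(-1)}$ consistency is, in the paper, baked directly into the definitions of $\shrel$ and $\sbrel$ (i.e.\ into what $\ssrel$ requires of speculating states), so no separate lemma is needed there. Two small corrections: in this particular lemma both $\trg{f}$ and $\trg{f''}$ are assumed to be in $\src{\OB{f_c}}$, so a step that crosses into attacker code is a contradiction with the hypotheses rather than a case to discharge via \Cref{thm:ctx-spec-sing-safe} (that chaining happens one level up, in \Cref{thm:spec-most-omega}); and you should also cover the case where a compiled subexpression gets stuck, which the paper closes with the stuck-rollback rule \Cref{tr:et-sp-rb-s}, as well as the short-window case where the prelude runs out of budget and \Cref{tr:et-sp-rb} fires, yielding the $\trg{\omega''}=\trg{0}$ disjunct.
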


\begin{theorem}[Correctness of the Backtranslation for SLH]\label{thm:corr-bt-slh}(\showproof{corr-bt-slh})
	\begin{align*}
		\text{ if } 
			&\
			\trg{\ctxt{}\hole{\compsslh{P}} \sem \trat{^{\taintt}}}
		\\
		\text{ then }
			&\
			\src{\backtrfencec{\ctxt{}}\hole{P} \sem \tras{^{\sigma}}}
		\\
		\text{ and }
			&\
			\tras{^{\sigma}} \rels \trat{^{\taintt}}
	\end{align*}
\end{theorem}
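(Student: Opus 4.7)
The plan is to mirror the structure of the proof of \Cref{thm:corr-bt-lfence} (correctness of the backtranslation for \complfence{\cdot}), but replacing the strong state relation \srelref{} with the weaker \ssrelref{} that tolerates target speculation stacks, and plugging in the SLH-specific simulation lemmas where the lfence proof used its own. I would proceed by induction on the length of the target reduction $\trg{(0,\SInitt{\ctxt{}\hole{\compsslh{P}}}) \Xtot{\trat{^{\taintt}}} (n,\Sigma)}$, maintaining as invariant that at every prefix there exist a source trace $\src{\tras{^\sigma}}$ and a source state $\src{\Omega}$ with $\src{\backtrfencec{\ctxt{}}\hole{P}}\src{\Xtos{\tras{^\sigma}}}\src{\Omega}$, $\src{\tras{^\sigma}}\rels\trg{\trat{^{\taintt}}}$, and $\src{\Omega} \ssrelref_{\src{\OB{f_c}}}^{\trg{\OB{D}}} \trg{\Sigma}$ for some bookkeeping $\trg{\OB{D}}$ of droppable names. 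The base case uses \Cref{thm:ini-state-rel-slh} to establish the relation on initial states with an empty trace (which is trivially $\srel$-related to itself).

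For the inductive step I would case-split on the top speculation instance of $\trg{\Sigma}$ and on which side of the boundary the currently executing function $\trg{f}$ lies. When $\trg{\Sigma}$ is a singleton non-speculating stack and $\trg{f}$ is compiled (i.e.\ $\trg{f}\in\src{\OB{f_c}}$), the next step is in compiled code and I invoke \Cref{thm:back-sim-stm-slh} to obtain a matching source step with related traces and updated droppables. When $\trg{f}$ is from the backtranslated context, I invoke \Cref{thm:back-sim-bts-lfence} (which is language-generic enough to apply here since the context fragment is identical and $\ssrelref{}$ agrees with $\srelref{}$ on non-speculating top-of-stack). Cross-boundary calls and returns are handled as in the lfence proof: each boundary action is emitted verbatim on both sides and the relations on \com{v}, \com{H}, $\src{\OB{B}}$ are preserved because the compiler is homomorphic on expressions and the backtranslation inverts it, plus \Cref{thm:val-rel-comp-slh} and \Cref{thm:heap-rel-comp-slh} for newly exposed values and heaps.

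The genuinely new case, which is also the main obstacle, is when $\trg{\Sigma}$ has a speculation stack of height $>1$. Here the target is executing inside the wrong branch of some mispredicted \trg{ifz}, and the source has taken \emph{no} corresponding step. My plan is to consume the entire misspeculation episode in one ``chunk'' of the induction rather than step-by-step: from the moment mispeculation is pushed until either the window reaches \trg{0} or a stuck state triggers a rollback (\Cref{tr:et-sp-rb,tr:et-sp-rb-s}). For this chunk I apply \Cref{thm:comp-spec-safe} (if \trg{f} is compiled) or \Cref{thm:ctx-spec-most-omega}/\Cref{thm:ctx-spec-sing-safe} (if $\trg{f}$ is in the context), which together guarantee that every $\acat{^{\taintt}}$ and $\trgb{\delta}^{\taintt}$ emitted during the chunk is tainted \trg{\safeta} and that the resulting state is still $\ssrelref{}$-related to the same $\src{\Omega}$. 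These safe actions are then absorbed into the trace relation using \Cref{tr:tr-rel-safe-a}, \Cref{tr:tr-rel-safe-h}, and \Cref{tr:tr-rel-rollb}, so the source trace is left unchanged while the target trace grows.

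The delicate point in that chunk-based argument is keeping $\ssrelref{}$ stable across an arbitrary interleaving of compiled-component and context speculation: the invariant that \trg{pr}-tracking, bindings-are-\trg{\safeta} (for context), and $\trg{H(-1)}$ discipline (for compiled code) all hold must survive each individual speculation step. This is exactly what the speculation lemmas \Cref{thm:spec-rel-satte-safe,thm:spec-most-omega,thm:comp-spec-most-omega} were designed to give, so the proof reduces to carefully threading their preconditions through each sub-case; in particular, when a call crosses the boundary during speculation the proof must use \Cref{thm:exp-red-safe} to argue the argument value carried in the \clh{f}{v}{} action is \trg{\safeta}, which is what justifies $\src{\epsilon}\arel\trg{\clh{f}{v}{}^\safeta}$. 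Once the chunk is closed we are back to a singleton stack and the outer induction resumes with the non-speculating cases above, ultimately yielding the desired $\src{\tras{^\sigma}}\rels\trg{\trat{^{\taintt}}}$ at termination.
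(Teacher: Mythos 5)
Your proposal matches the paper's proof in all essentials: the paper also obtains the result from \Cref{thm:ini-state-rel-slh} plus the generalised backward simulation \Cref{thm:bwd-sim-slh}, which splits into compiled-to-compiled (\Cref{thm:back-sim-stm-slh}), context-to-context (\Cref{thm:back-sim-bts-lfence}), and cross-boundary steps, with the misspeculation episode consumed as a single chunk via the speculation-safety lemmas (\Cref{thm:spec-rel-satte-safe}, \Cref{thm:comp-spec-safe}, \Cref{thm:ctx-spec-sing-safe}) and its safe actions discharged by \Cref{tr:tr-rel-safe-a}, \Cref{tr:tr-rel-safe-h}, and \Cref{tr:tr-rel-rollb}. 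The only (purely presentational) difference is that you surface the speculation chunk as a top-level case of the induction on the target reduction, whereas the paper buries it inside the forward simulation of the compiled \src{ifz} statement between non-speculating endpoints.
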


\begin{theorem}[Generalised Backward Simulation for SLH]\label{thm:bwd-sim-slh}(\showproof{bwd-sim-slh})
	\begin{align*}
		\text{ if }
			&\
			\text{ if } \src{f}\in\src{\OB{f''}} \text{ then } \trg{s} = \compsslh{s} \text{ else } \src{s} = \backtrfencec{\trg{s}}
			\\
		\text{ and}
			&\
			\text{ if } \src{f'}\in\src{\OB{f''}} \text{ then } \trg{s'} = \compsslh{s'} \text{ else } \src{s'} = \backtrfencec{\trg{s'}}
		\\
		\text{ and }
			&\
			\trg{\Sigma}=\trg{w( C, H, \OB{B} \triangleright \proc{s;s''}{\OB{f}\cdot f}, \bot,\safeta)}
		\\
		\text{ and }
			&\
			\trg{\Sigma'}=\trg{ w(C, H', \OB{B'} \triangleright \proc{s';s''}{\OB{f'}\cdot f'},\bot,\safeta)}
		\\
		\text{ and }
			&\
			\trg{(n,\Sigma) \Xtot{\trat{^{\taintt}}} (n',\Sigma')}
		\\
		\text{ and }
			&\
			\src{\Omega} \srel_{\src{\OB{f''}}}^{\trg{\OB{D}}} \trg{\Sigma}
		\\
		\text{ then }
			&\
			\src{\Omega}=\src{C, H, \OB{B} \triangleright \proc{s;s''}{\OB{f}\cdot f} \Xtos{\tras{^\sigma}} C, H', \OB{B'} \triangleright \proc{s';s''}{\OB{f'}\cdot f'}}=\src{\Omega'}
		\\
		\text{ and }
			&\
			\src{\tras{^\sigma}} \tracerel \trg{\trat{^{\taintt}}}
		\\
		\text{ and }
			&\
			\exists {\trg{\OB{D'}}}\supseteq{\trg{\OB{D}}}.~
			\src{\Omega'} \srelref_{\src{\OB{f''}}}^{\trg{\OB{D'}}} \trg{\Sigma'}
	\end{align*}
\end{theorem}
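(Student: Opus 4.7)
The plan is to proceed by induction on the length of the transitive-closure reduction $\trg{(n,\Sigma) \Xtot{\trat{^{\taintt}}} (n',\Sigma')}$, using the single-step results already established for compiled code (\Cref{thm:back-sim-stm-slh}), for backtranslated code (the SLH-analogue of \Cref{thm:back-sim-bts-lfence}, which goes through unchanged since backtranslated code does not modify location $\trg{-1}$ and preserves the state relation $\ssrel$), and the various speculation-containment lemmas (\Cref{thm:ctx-spec-sing-safe,thm:comp-spec-safe,thm:spec-rel-satte-safe}). In the base case the trace is empty, the source side takes zero steps, $\src{\epsilon}\rels\trg{\epsilon}$ holds trivially via \Cref{tr:tr-rel-empty}, and the state relation is inherited unchanged.

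For the inductive step I would decompose the reduction as $\trg{\Sigma}\Xtot{\trat{_1^{\taintt}}}\trg{\Sigma_m}\xltot{\trgb{\lambda}^\taintt}\trg{\Sigma'}$, apply the induction hypothesis to the prefix to obtain a matching source reduction $\src{\Omega}\Xtos{\tras{_1^\sigma}}\src{\Omega_m}$ with $\src{\Omega_m}\ssrelref_{\src{\OB{f''}}}^{\trg{\OB{D_m}}}\trg{\Sigma_m}$ and $\src{\tras{_1^\sigma}}\rels\trg{\trat{_1^\taintt}}$, and then do a case analysis on the final single step. The cases split as follows: (i) a non-speculative step inside compiled code (top of the speculation stack has $\trg{\omega}=\trg{\bot}$ and the executing $\trg{f}\in\src{\OB{f''}}$) is handled by invoking \Cref{thm:back-sim-stm-slh}, possibly iterated to absorb the multi-step expansion of a single source statement; (ii) a non-speculative step inside backtranslated context code ($\trg{f}\notin\src{\OB{f''}}$) is handled by the SLH analogue of \Cref{thm:back-sim-bts-lfence}; (iii) a speculative step --- either pure speculation in the context (\Cref{thm:ctx-spec-sing-safe}) or in compiled code (\Cref{thm:comp-spec-safe}) or a rollback --- produces a target-only action related to $\src{\epsilon}$ via \Cref{tr:tr-rel-safe-a,tr:tr-rel-safe-h,tr:tr-rel-rollb}, and does not require any source progress; (iv) a cross-boundary call or return, which must generate matching action labels on both sides and preserve the component/binding invariants encoded by $\ssrel$.

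The main obstacle will be case (iv), the cross-boundary transitions, combined with the interaction between the droppable-name stack $\trg{\OB{D}}$ and the non-speculative projection of the speculation stack. Specifically: on a target call from context into compiled code, the SLH compiler prelude reads and conditionally overwrites location $\trg{-1}$ and introduces fresh auxiliary variables (like $\trg{x_f}$, $\trg{\predState}$), which means the related source step (a plain $\src{\call{f}~e}$) produces no analogous observations; I must show that all those extra target observations carry taint $\trg{\safeta}$ (so that \Cref{tr:tr-rel-safe-a,tr:tr-rel-safe-h} apply) and that the new variables can be absorbed into a suitably extended $\trg{\OB{D'}}\supseteq\trg{\OB{D}}$. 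Dually, on a return from compiled code back into the context, I must re-establish that on top of the speculation stack the invariant $\vdash\trg{H}:\shrel$ holds, which in turn requires showing that SLH never clobbers $\trg{-1}$ with a non-$\trg{\safeta}$ value in non-speculative execution --- this is exactly what the compilation of the branch instruction maintains, and it is the crucial bookkeeping step.

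Finally, throughout the induction I will thread the monotonicity of $\trg{\OB{D}}$ (each sub-case produces some $\trg{\OB{D'}}\supseteq\trg{\OB{D}}$, so composition preserves the $\supseteq$-chain), argue that trace relatedness composes by appending using \Cref{tr:tr-rel-same,tr:tr-rel-same-h,tr:tr-rel-safe-a,tr:tr-rel-safe-h,tr:tr-rel-rollb}, and conclude that $\src{\tras{^\sigma}}\rels\trg{\trat{^\taintt}}$ with $\src{\Omega'}\ssrelref_{\src{\OB{f''}}}^{\trg{\OB{D'}}}\trg{\Sigma'}$ for some $\trg{\OB{D'}}\supseteq\trg{\OB{D}}$, which is exactly the required conclusion. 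The proof does not require the strengthening to $\srel$ (over $\ssrel$) because the post-state after a step inside or crossing the speculation stack only guarantees $\ssrel$; this weakening is already reflected in the statement's use of $\srelref$ at the end (inherited via the subrelation $\srel\subseteq\ssrel$ on singleton speculation stacks).
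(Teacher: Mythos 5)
Your proposal is correct and follows essentially the same route as the paper: the paper's proof is a three-way case split on the kind of reduction (compiled-to-compiled via \Cref{thm:back-sim-stm-slh}, context-to-context via \Cref{thm:back-sim-bts-lfence}, and cross-boundary calls/returns handled as in the lfence generalised simulation since they trigger no speculation), which is exactly your inductive decomposition with your case (iii) folded into the compiled-code case, where the speculation-containment lemmas are already invoked inside \Cref{thm:fwd-sim-stm-slh}. Your identification of the cross-boundary bookkeeping (the $\trg{-1}$ invariant and the growth of $\trg{\OB{D}}$) as the delicate point is accurate; the paper elides it by appeal to the lfence case.
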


\begin{lemma}[Compiled Speculation is Safe Inter-procedurally]\label{thm:comp-spec-safe-proc}(\showproof{comp-spec-safe-proc})
	\begin{align*}
		\text{ if }
			&\
			\src{\Omega} \cssrelref_{\src{\OB{f_c}}}^{\trg{\OB{D}}} \trg{\Sigma}
		\\
		\text{ and }
			&\
			\trg{\Sigma}=\trg{w (C, H_b, \OB{B_b} \triangleright \proc{s_b}{\OB{f_b}},\bot,\safeta) \cdot \OB{(C, H', \OB{B'} \triangleright \proc{s'}{\OB{f'}},\omega',\unta)} \cdot (C, H, \OB{B} \triangleright \proc{\compslht{s}}{\OB{f}\cdot f},\omega,\unta)}
		\\
		\text{ and }
			&\
			\trg{\Sigma'}=\trg{w (C, H_b, \OB{B_b} \triangleright \proc{s_b}{\OB{f_b}},\bot,\safeta) \cdot \OB{(C, H', \OB{B'} \triangleright \proc{s'}{\OB{f'}},\omega',\unta)} \cdot (C, H'', \OB{B''} \triangleright \proc{\compslht{s''}}{\OB{f''}\cdot f''},\omega'',\unta)}
		\\
		\text{ and }
			&\
			\trg{f},\trg{f''}\in\src{\OB{f_c}}
		\\
		\text{ then }
			&\
			\trg{(n,\Sigma) \Xtot{\trat{^{\taintt}}} (n',\Sigma')}
		\\
		\text{ and }
			&\
			\srce \tracerel \trg{\trat{^{\taintt}}}
		\\
		\text{ and either }
			&\
			\src{\Omega} \cssrelref_{\src{\OB{f_c}}}^{\trg{\OB{D}}} \trg{\Sigma'}
		\\
		\text{ or }
			&\
			\trg{\omega''} = \trg{0}
	\end{align*}
\end{lemma}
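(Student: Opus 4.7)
The plan is to prove this lemma by induction on the length of the multi-step derivation $\trg{(n,\Sigma) \Xtot{\trat{^{\taintt}}} (n',\Sigma')}$, mirroring the strategy already used for \Cref{thm:comp-spec-safe} but adapting the key invariant to the local (per-frame) storage of the predicate bit and the function-entry $\trg{\lfence}$ that characterise \compslht{\cdot}. The base case is trivial: when no steps have been taken, $\trg{\Sigma} = \trg{\Sigma'}$, the emitted trace is $\trg{\epsilon}$, and the hypothesis $\src{\Omega} \cssrel_{\src{\OB{f_c}}}^{\trg{\OB{D}}} \trg{\Sigma}$ gives the required conclusion directly.

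For the inductive step I would case-split on the rule used for the first reduction $\trg{\Sigma \xltot{\acat{^{\taintt}}} \Sigma_1}$, where the top frame is $(C, H, \OB{B} \triangleright \proc{\compslht{s}}{\OB{f}\cdot f},\omega,\unta)$. Most cases are handled as in the proof of \Cref{thm:comp-spec-safe}: the $\cssrel$ invariant, in particular that the local $\trg{\predState}$ in every component-executing speculation frame equals $\trg{1}$ (with appropriate taint), guarantees that the conditional-move instructions inserted by \compslht{\cdot} (for compiled reads, writes, calls, and branch guards) mask their operands to the safe constant $\trg{0}:\trg{\safeta}$, so that every emitted action $\acat{^{\taintt}}$ is tainted $\trg{\safeta}$ and hence related to $\srce$ via rules \Cref{tr:tr-rel-safe-a} and \Cref{tr:tr-rel-safe-h}. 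For rules E-\TR-speculate-epsilon and E-\TR-speculate-action inside a compiled statement that is not a branch or an $\trg{\lfence}$, the invariants on bindings and heap are preserved (possibly extending the drop set $\trg{\OB{D}}$ to hide any freshly bound auxiliary variable like $\trg{x_f}, \trg{x_g}, \trg{\predState}$), and the inductive hypothesis finishes the case. For E-\TR-speculate-if, the newly pushed frame is again executing compiled code and its $\trg{\predState}$ is updated to $\trg{1}$ by the $\trg{\letin{\trg{\predState}}{\trg{\predState} \vee \neg \trg{x_g}}{\cdots}}$ (or the symmetric $\vee \trg{x_g}$) injected by the compiler, so the $\cssrel$ invariant extends to the new top frame. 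For rollback rules, the stack shrinks and the relation is trivially preserved, leaving the second disjunct ($\trg{\omega''} = \trg{0}$) as a possible witness when the window has just been exhausted.

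The genuinely new ingredient compared with \Cref{thm:comp-spec-safe} is the handling of speculative function entry, and this is where the $\trg{\lfence}$ at the start of every compiled function body plays its role. When the currently speculating frame executes the call-compilation of $\compslht{\cdot}$, the conditional masking forces the call target's argument to $\trg{0}:\trg{\safeta}$, so the generated action (either internal $\trg{\epsilon}$ or a callback $\trg{\cbh{f}{0}{H}^{\safeta}}$) is safe. Immediately after the call, the callee's first statement is $\trg{\lfence}$, so the very next reduction fires E-\TR-speculate-lfence and sets $\omega$ to $\trg{0}$; the subsequent rollback then pops the speculation frame that had propagated from the caller. In this sub-case the inductive conclusion is reached via the second disjunct, $\trg{\omega''} = \trg{0}$, if the trace ends precisely at the point where the window was just exhausted, and otherwise via the first disjunct after the rollback restores a properly $\cssrel$-related configuration.

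The main obstacle is the bookkeeping of the drop set $\trg{\OB{D}}$ and the $\csbrel$ invariant on bindings across the various auxiliary variables introduced by the compiler: we must show that $\trg{\predState}$ never acquires a $\trg{\safeta}$-value $\trg{0}$ during speculation (which would break masking), and that the compiler-introduced temporaries ($\trg{x_f}, \trg{x_g}$, and the read-result variable $\trg{x}$ that is subsequently cmov-masked) do not leak through the trace before being masked. This is handled by a companion invariant that every observable action within the compiled expansion of a single source statement is produced only after the corresponding cmov has executed, which can be established by a straightforward inner induction on the structure of $\src{s}$ following the shape of the \compslht{\cdot} schemas, reusing \Cref{thm:exp-red-safe} for subexpressions and \Cref{thm:spec-rel-satte-safe} for the sequence of silent reductions internal to each compiled statement.
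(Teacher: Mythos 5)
Your proposal follows essentially the same route as the paper's proof: a case analysis over the compiled statements/reduction rules that reuses the argument of \Cref{thm:comp-spec-safe} to show every speculative action is tainted \trg{\safeta} and hence related to \srce, together with the one genuinely new observation that the \trg{\lfence} placed by \compslht{\cdot} at the entry of every compiled function zeroes the speculation window on a speculative cross-function call (\Cref{tr:et-sp-lf}), forcing an immediate rollback and letting the conclusion go through via the $\trg{\omega''}=\trg{0}$ disjunct. The paper phrases this as a structural induction on \src{s} rather than on the length of the reduction sequence, but that is a cosmetic difference; the cases and the key facts invoked are the same.

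One point needs fixing, because as literally written it would make the central masking step fail: the polarity of your \trg{\predState} invariant is inverted. In these languages \com{0} encodes ``true'' (cf.\ \Cref{tr:eus-ift}), so the predicate bit is \trg{1} when \emph{not} mis-speculating and is driven to \trg{0} by the compiled branch code precisely when mis-speculation begins; accordingly the relation \csbrel{} for speculating compiled frames requires $\trg{\predState\mapsto 0:\safeta}$, and it is the value \trg{0} that makes $\cmovet{\trg{x}}{\trg{0}}{\trg{\predState}}{\cdot}$ fire \Cref{tr:eut-cmv-t} and overwrite the operand with the safe constant. Your invariant ``\trg{\predState} equals \trg{1} in every component-executing speculation frame'' (and the companion claim that \trg{\predState} must never become \trg{0} during speculation) has the sign flipped: with $\trg{\predState}=\trg{1}$ the conditional move takes the false branch and nothing is masked. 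Once the invariant is stated as $\trg{\predState}=\trg{0}$ (tainted \trg{\safeta}) in speculating compiled frames, the rest of your argument, including the bookkeeping of the drop set \trg{\OB{D}} and the reuse of \Cref{thm:exp-red-safe}, matches the paper's proof.
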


\newpage
\section{A Complete Insight on the Proofs}\label{sec:full-insight}

This section describes how to prove that compiler countermeasures are secure (i.e, \rdss), starting with \complfence{\cdot}.

To prove that a compiler is \rdss we need to backtranslate target attackers \ctxt{} into source ones \ctxs{}.
Our setup has very similar source and target languages to enable this kind of backtranslation; the alternative would have been to rely on the target trace in order to build \ctxs{}.
In this case, the backtranslation function (\backtr{\cdot}) homomorphically translates target heaps, functions, statements etc$\ldotp$ into source ones (see \Cref{sec:bt-fence}).
Since our compilers are devised for essentially the same languages (or at least, languages with the same notions of attacker), we can define a single backtranslation to use for the security of all compilers we define.

To prove the compiler is \rssc we need to show that given a trace produced by the execution of attacker and compiled code, the backtranslated attacker and the source code produce a related trace (according to the trace relation of \Cref{sec:compcrit}).
This can be broken down in a sequence of canonical steps:
\begin{itemize}
 	\item we first set up a cross-language relation between source and target states;
 	\item we prove that initial states are related;
 	\item we prove that reductions preserve this relation and generate related traces.

 		We reason about reductions depending on whether the pc that is triggering the reduction is in attacker or in program code.
 \end{itemize}
The state relation we use is strong: a source state is related to a target state if the latter is a singleton stack and all the sub-part of the state are identical.
That is, the target state must not be speculating (starting speculation adds a state to the stack, which is not a singleton in that case), the heaps bind the same locations to the same values, the bindings bind the same variables to the same values.

\myfig{
	\centering
	\tikzpic{
		\node[draw=\stlccol,rounded corners,font=\footnotesize](sc1){};
		\node[draw=\stlccol,rounded corners,font=\footnotesize, right = .5 of sc1](sc2){};
		\node[draw=\stlccol,rounded corners,font=\footnotesize, right = .7 of sc2](sc3){};
		\node[draw=\stlccol,rounded corners,font=\footnotesize, right = .5 of sc3](sc4-if){\src{ifz}};

		\node[draw=\ulccol,rounded corners,font=\footnotesize, below = .8 of sc1](tc1){};
		\node[draw=\ulccol,rounded corners,font=\footnotesize,] at(sc2|-tc1) (tc2){};
		\node[draw=\ulccol,rounded corners,font=\footnotesize,] at(sc3|-tc1) (tc3){};
		\node[draw=\ulccol,rounded corners,font=\footnotesize,] at(sc4-if|-tc1) (tc4-if){\complfence{\src{ifz}}};

		\node[draw=\ulccol,dashed,rounded corners,font=\footnotesize, below= .5 of tc4-if](ts1){\trg{\lfence}};
		\node[draw=\ulccol,dashed,rounded corners,font=\footnotesize, right = .5 of ts1](ts2){\trg{w}=\trg{0}};		

		\node[draw=\ulccol,rounded corners,font=\footnotesize,] at(ts2|-tc1) (tc5){};
		\node[draw=\ulccol,rounded corners,font=\footnotesize, right = .5 of tc5](tc6){};
		\node[draw=\ulccol,rounded corners,font=\footnotesize, right = .7 of tc6](tc7){};
		\node[draw=\ulccol,rounded corners,font=\footnotesize, right = .5 of tc7](tc8){};

		\node[draw=\stlccol,rounded corners,font=\footnotesize,] at(tc5|-sc1) (sc5){};
		\node[draw=\stlccol,rounded corners,font=\footnotesize,] at(tc6|-sc1) (sc6){};
		\node[draw=\stlccol,rounded corners,font=\footnotesize,] at(tc7|-sc1) (sc7){};
		\node[draw=\stlccol,rounded corners,font=\footnotesize,] at(tc8|-sc1) (sc8){};

		\draw[dashed, draw = \stlccol] (sc1) to (sc2);
		\draw[dashed, draw = \stlccol, bend left] (sc2) to node[above,font=\footnotesize](as1){\src{\acas{?^\sigma}}} (sc3);
		\draw[dashed, draw = \stlccol] (sc3) to (sc4-if);
		\draw[->, draw = \stlccol] (sc4-if) to (sc5);
		\draw[dashed, draw = \stlccol] (sc5) to (sc6);
		\draw[dashed, draw = \stlccol, bend left] (sc6) to node[above,font=\footnotesize](as2){\src{\acas{!^\sigma}}} (sc7);
		\draw[dashed, draw = \stlccol] (sc7) to (sc8);
		\draw[draw = \stlccol, - ] ([yshift=.3em]sc3.north) to ([yshift=.6em]sc3.north) to node[above,font=\footnotesize](sd1){\src{\OB{\delta^\sigma}}} ([yshift=.6em]sc6.north) to ([yshift=.3em]sc6.north);

		\draw[dashed, draw = \ulccol] (tc1) to (tc2);
		\draw[dashed, draw = \ulccol, bend left] (tc2) to node[above,font=\footnotesize](at1){\trg{\acat{?^{\taintt}}}} (tc3);
		\draw[dashed, draw = \ulccol] (tc3) to (tc4-if);
		\draw[dashed, draw = \ulccol] (tc5) to (tc6);
		\draw[dashed, draw = \ulccol, bend left] (tc6) to node[above,font=\footnotesize](at2){\src{\acat{!^{\taintt}}}} (tc7);		
		\draw[dashed, draw = \ulccol] (tc7) to (tc8);
		\draw[draw = \ulccol, - ] ([yshift=.3em]tc3.north) to ([yshift=.6em]tc3.north) to node[above,font=\footnotesize](td1){\trg{\OB{\trgb{\delta}^{\taintt}}}} ([yshift=.6em]tc6.north) to ([yshift=.3em]tc6.north);

		\draw[draw = \ulccol, decoration={zigzag, segment length=4, amplitude=.9, post=lineto, post length=2pt},decorate,->] (tc4-if) to (ts1);
		\draw[draw = \ulccol, decoration={zigzag, segment length=4, amplitude=.9, post=lineto, post length=2pt},decorate,->] (ts1) to (ts2);
		\draw[draw = \ulccol, decoration={zigzag, segment length=4, amplitude=.9, post=lineto, post length=2pt},decorate,->] (ts2) to  node[right,font=\footnotesize](ar){\trg{\rollbl}} (tc5);

		\draw[rounded corners, dotted, fill=yellow, opacity = .2 ] (as1.north) -| ([xshift=.2em]sc3.east) |- ([yshift=-.2em]tc2.south) -| ([xshift=-.2em]sc1.west) |- (as1.north);
		\draw[rounded corners, dotted, fill=yellow, opacity = .2 ] ([yshift=.2em]sc7.north) -| ([xshift=.2em]sc8.east) |- ([yshift=-.2em]tc8.south) -| ([xshift=-.2em]sc7.west) |- ([yshift=.2em]sc7.north);

		\draw[rounded corners, dotted, fill=green, opacity = .1 ] (sd1.north) -| ([xshift=.2em]sc7.east) |- ([yshift=-.2em]ts1.south) -| ([xshift=-.2em]sc3.west) |- (sd1.north);

		\draw[rounded corners, dotted, fill=black, opacity = .1 ] (tc4-if.south) -- (tc4-if.south east) -- (tc5.south) -| ([xshift=.2em]ar.east) |- ([yshift=-.2em]ts1.south) -| ([xshift=-.2em]ts1.north west) |- (tc4-if.south);		

		\draw[-] ([yshift=3em]sc3.west) -- ([yshift=-4em]tc3.west);
		\draw[-] ([yshift=3em]sc7.east) -- ([yshift=-4em]tc7.east);

		\node[font = \footnotesize,above = .15 of sd1.center](y1){\src{P} / \complfence{P} executes};
		\node[font = \footnotesize, left = of y1, align = center](yl){\backtr{\ctxt{}} / \ctxt{} \\ executes};
		\node[font = \footnotesize,right = 1.5 of y1, align = center](yr){\backtr{\ctxt{}} / \ctxt{} \\ executes};

		\draw[-,thin, dotted] (sc1) to (tc1);
		\draw[-,thin, dotted] (sc2) to (tc2);
		\draw[-,thin, dotted] (sc3) to (tc3);
		\draw[-,thin, dotted] (sc4-if) to (tc4-if);
		\draw[-,thin, dotted] (sc5) to (tc5);
		\draw[-,thin, dotted] (sc6) to (tc6);
		\draw[-,thin, dotted] (sc7) to (tc7);
		\draw[-,thin, dotted] (sc8) to (tc8);
		\draw[-,thin, dotted] (as1) to (at1);
		\draw[-,thin, dotted] (as2) to (at2);
		\draw[-,thin, dotted] (sd1) to (td1);
	}
}{proof-lfence}{A diagram depicting the proof that \complfence{\cdot} is \rssc.}
We depict our proof approach for \complfence{\cdot} in \Cref{fig:proof-lfence}.
The top half of the picture represents the source reductions: source states (circles) perform multiple steps (dashed lines) and reduce producing traces (annotations on reductions).
We highlight the single reduction caused by the execution of an \src{ifz} statement since that has relevance in the target language.

The bottom half of the picture represents the target states and their reductions, here we have an additional kind of program states: dashed ones.
Intuitively, these states are not related to any source state, while other states are.

This is symbolised by black dotted connections between source and target states.
The same kind of connection between source and target actions indicates that these actions are related.

In our setup, execution either happens with the pc in attacker code or in component code.
We now describe how to reason about these reductions.

To reason about attacker code, we use a lock-step simulation: we show that starting from related states, if \ctxt{} does a step, then its backtranslation \backtr{\ctxt{}} does the same step and ends up in related states.
This is what happens in the yellow areas in the picture.

To reason about component code, we adapt a reasoning commonly used in compiler correctness results~\cite{Leroy09b,barthe-ct2}.
That is, if \src{s} steps and emits a trace, then \complfence{s} does one or more steps and emits a trace such that
\begin{itemize}
	\item the ending states are related;
	\item the emitted source and target traces are related;
\end{itemize}
This is what happens in the green area, recall that related traces are connected by black-dotted lines.

The only place where this proof is not straightforward is the case for compilation of \src{ifz} i.e., the statement that triggers speculation in \TR.
This is what happens in the grey area.
When observing target-level executions for \complfence{ifz}, we see that the cross-language state relation is temporarily broken.
After the \complfence{ifz} is executed, speculation starts, so the stack of target states is not a singleton and therefore the cross-language state relation cannot hold.
However, if we unfold the reductions, we see that compiled code immediately triggers an \trg{\lfence}, which rolls the speculation back (the speculation window \trg{w} is \trg{0}) reinstating the cross-language state relation.
Thus, for the case of \src{ifz} to go through, we see that the target effectively does more steps than the source (it starts speculation, it executes the \trg{\lfence} and then it rolls speculation back) before ending up in a state related to the source one.

\myfig{
	\centering
	\tikzpic{
		\node[draw=\stlccol,rounded corners,font=\footnotesize](sc1){};
		\node[draw=\stlccol,rounded corners,font=\footnotesize, right = .5 of sc1](sc2){};
		\node[draw=\stlccol,rounded corners,font=\footnotesize, right = .7 of sc2](sc3){};
		\node[draw=\stlccol,rounded corners,font=\footnotesize, right = .5 of sc3](sc4-if){\src{ifz}};

		\node[draw=\ulccol,rounded corners,font=\footnotesize, below = .8 of sc1](tc1){};
		\node[draw=\ulccol,rounded corners,font=\footnotesize,] at(sc2|-tc1) (tc2){};
		\node[draw=\ulccol,rounded corners,font=\footnotesize,] at(sc3|-tc1) (tc3){};
		\node[draw=\ulccol,rounded corners,font=\footnotesize,] at(sc4-if|-tc1) (tc4-if){\compslh{\src{ifz}}};

		\node[draw=\ulccol,dashed,rounded corners,font=\footnotesize, below= .5 of tc4-if](ts1){};
		\node[draw=\ulccol,dashed,rounded corners,font=\footnotesize, right = .5 of ts1](ts3){};
		\node[draw=\ulccol,dashed,rounded corners,font=\footnotesize, right = .5 of ts3](ts4){};
		\node[draw=\ulccol,dashed,rounded corners,font=\footnotesize, right = .5 of ts4](ts2){\trg{w}=\trg{0}};		
		\node[font=\footnotesize, left = .5 of ts1](ph){};

		\node[draw=\ulccol,rounded corners,font=\footnotesize,] at(ts2|-tc1) (tc5){};
		\node[draw=\ulccol,rounded corners,font=\footnotesize, right = .5 of tc5](tc6){};
		\node[draw=\ulccol,rounded corners,font=\footnotesize, right = .7 of tc6](tc7){};
		\node[draw=\ulccol,rounded corners,font=\footnotesize, right = .5 of tc7](tc8){};

		\node[draw=\stlccol,rounded corners,font=\footnotesize,] at(tc5|-sc1) (sc5){};
		\node[draw=\stlccol,rounded corners,font=\footnotesize,] at(tc6|-sc1) (sc6){};
		\node[draw=\stlccol,rounded corners,font=\footnotesize,] at(tc7|-sc1) (sc7){};
		\node[draw=\stlccol,rounded corners,font=\footnotesize,] at(tc8|-sc1) (sc8){};
		\node[font=\footnotesize, ] at (ph -| sc6)(ph2){};

		\draw[dashed, draw = \stlccol] (sc1) to (sc2);
		\draw[dashed, draw = \stlccol, bend left] (sc2) to node[above,font=\footnotesize](as1){\src{\acas{?^\sigma}}} (sc3);
		\draw[dashed, draw = \stlccol] (sc3) to (sc4-if);
		\draw[->, draw = \stlccol] (sc4-if) to node[](sd0){} (sc5);
		\draw[dashed, draw = \stlccol] (sc5) to (sc6);
		\draw[dashed, draw = \stlccol, bend left] (sc6) to node[above,font=\footnotesize](as2){\src{\acas{!^\sigma}}} (sc7);
		\draw[dashed, draw = \stlccol] (sc7) to (sc8);
		\draw[draw = \stlccol, - ] ([yshift=.6em]sc3.center) to ([yshift=.9em]sc3.center) to node[above,font=\footnotesize](sd1){\src{\OB{\delta_1^\sigma}}} ([yshift=.9em]sc4-if.center) to ([yshift=.6em]sc4-if.center);
		\draw[draw = \stlccol, - ] ([yshift=.3em]sc5.north) to ([yshift=.6em]sc5.north) to node[above,font=\footnotesize](sd2){\src{\OB{\delta_2^\sigma}}} ([yshift=.6em]sc6.north) to ([yshift=.3em]sc6.north);

		\draw[dashed, draw = \ulccol] (tc1) to (tc2);
		\draw[dashed, draw = \ulccol, bend left] (tc2) to node[above,font=\footnotesize](at1){\trg{\acat{?^{\taintt}}}} (tc3);
		\draw[dashed, draw = \ulccol] (tc3) to (tc4-if);
		\draw[dashed, draw = \ulccol] (tc5) to (tc6);
		\draw[dashed, draw = \ulccol, bend left] (tc6) to node[above,font=\footnotesize](at2){\src{\acat{!^{\taintt}}}} (tc7);		
		\draw[dashed, draw = \ulccol] (tc7) to (tc8);
		\draw[draw = \ulccol, - ] ([yshift=.6em]tc3.center) to ([yshift=.9em]tc3.center) to node[above,font=\footnotesize](td1){\trg{\OB{\trgb{\delta}_1^{\taintt}}}} ([yshift=.9em]tc4-if.center) to ([yshift=.6em]tc4-if.center);
		\draw[draw = \ulccol, - ] ([yshift=.3em]tc5.north) to ([yshift=.6em]tc5.north) to node[above,font=\footnotesize](td2){\trg{\OB{\trgb{\delta}_2^{\taintt}}}} ([yshift=.6em]tc6.north) to ([yshift=.3em]tc6.north);

		\draw[draw = \ulccol, - ] ([yshift=-.6em]ts1.center) to ([yshift=-.9em]ts1.center) to node[below,font=\footnotesize](td0){\trat{^{\taintt}}} ([yshift=-.9em]ts2.center) to ([yshift=-.6em]ts2.center);

		\draw[draw = \ulccol, decoration={zigzag, segment length=4, amplitude=.9, post=lineto, post length=2pt},decorate,->] (tc4-if) to (ts1);
		\draw[draw = \ulccol, decoration={zigzag, segment length=4, amplitude=.9, post=lineto, post length=2pt},decorate,->] (ts1) to (ts3);
		\draw[draw = \ulccol, loosely dotted, thick ] (ts3) to (ts4);
		\draw[draw = \ulccol, decoration={zigzag, segment length=4, amplitude=.9, post=lineto, post length=2pt},decorate,->] (ts4) to (ts2);
		\draw[draw = \ulccol, decoration={zigzag, segment length=4, amplitude=.9, post=lineto, post length=2pt},decorate,->] (ts2) to  node[pos=.4,left,font=\footnotesize](ar){\trg{\rollbl}} (tc5);

		\draw[rounded corners, dotted, fill=yellow, opacity = .2 ] (as1.north) -| ([xshift=.2em]sc3.east) |- ([yshift=-.2em]tc2.south) -| ([xshift=-.2em]sc1.west) |- (as1.north);
		\draw[rounded corners, dotted, fill=yellow, opacity = .2 ] ([yshift=.2em]sc7.north) -| ([xshift=.2em]sc8.east) |- ([yshift=-.2em]tc8.south) -| ([xshift=-.2em]sc7.west) |- ([yshift=.2em]sc7.north);

		\draw[rounded corners, dotted, fill=green, opacity = .1 ] (sd1.north) -| ([xshift=.2em]sc7.east) |- ([yshift=-.2em]tc4-if.south) -| ([xshift=-.2em]sc3.west) |- (sd1.north);

		\draw[rounded corners, dotted, fill=black, opacity = .1 ] (tc4-if.south) -| ([xshift=.2em]ts2.east) |- (td0.south) -| ([xshift=-.2em]ts1.north west) |- (tc4-if.south);		

		\draw[-] ([yshift=3em]sc3.west) -- ([yshift=-1em]tc3.west);
		\draw[-] ([yshift=3em]sc7.east) -- ([yshift=-1em]tc7.east);
		\draw[-] ([yshift=-.9em,xshift=-4.5em]tc3.west) -- ([yshift=-.9em,xshift=2em]tc7.east);

		\node[font = \footnotesize,above = .5 of sd0.north](y1){\src{P} / \compslh{P} executes};
		\node[font = \footnotesize, left = 1 of y1, align = center](yl){\backtr{\ctxt{}} / \ctxt{} \\ executes};
		\node[font = \footnotesize,right = 2 of y1, align = center](yr){\backtr{\ctxt{}} / \ctxt{} \\ executes};

		\node[font = \footnotesize, align = center] at (as1 |- ts2) (yx){either \ctxt{} \\ or \compslh{P} \\ executes};

		\draw[-,thin, dotted] (sc1) to (tc1);
		\draw[-,thin, dotted] (sc2) to (tc2);
		\draw[-,thin, dotted] (sc3) to (tc3);
		\draw[-,thin, dotted] (sc4-if) to (tc4-if);
		\draw[-,thin, dotted] (sc5) to (tc5);
		\draw[-,thin, dotted] (sc6) to (tc6);
		\draw[-,thin, dotted] (sc7) to (tc7);
		\draw[-,thin, dotted] (sc8) to (tc8);
		\draw[-,thin, dotted] (as1) to (at1);
		\draw[-,thin, dotted] (as2) to (at2);
		\draw[-,thin, dotted] (sd1) to (td1);
		\draw[-,thin, dotted] (sd2) to (td2);
	}
	\vspace{-.5em}
}{proof-slh2}{A diagram depicting the proof that \compslh{\cdot}, \compsslh{\cdot} and, \compslht{\cdot} are \protect\rssc.}
This is the part where the proofs of the SLH-related countermeasures gets more complicated (\Cref{fig:proof-slh2}), though the general structure remains unchanged.
In compiled code speculation is not rolled back immediately after the \trg{then} or \trg{else} branch start executing.
Instead, execution can continue for \trgb{\omega} steps, spanning both attacker and compiled code and generating a trace \trat{^{\taintt}}.
Our proof here relies on an auxiliary lemma stating the following:
\begin{itemize}
	\item in the target, speculation lasts at most \trgb{\omega} steps and then it will be rolled back;
	\item after the rollback, the strong state relation we need is reinstated;
	\item during this speculation any trace produced in the target is related to the empty source trace.

	This is needed because such a relation is only possible when all actions in the target trace \trat{^{\taintt}} are tainted \trg{\safeta}: i.e., they do not leak.

\end{itemize}
Ensuring that all target actions are \trg{S} is achieved through declaring a property on target (speculating) states and prove that any speculating transition \emph{preserves} that property.
Specifically, the property is that the bindings always contain \trg{\safeta} values.
From this property we can easily see that any generated action is \trg{\safeta}.
To prove that this property holds right after speculation, we need 
\begin{itemize}
 	\item \trg{\predState} correctly captures whether speculation is ongoing or not;
 	\item the mask used by the compiler taints the variable it is applied to as \trg{\safeta}.
 \end{itemize} 
As already shown, both conditions hold for \compslh{\cdot}, \compsslh{\cdot} and, \compslht{\cdot}, so we can conclude that they are \rssc.

\subsection{Failing \rssc Proofs }\label{sec:proofexists}
When a countermeasure is not \rssc we can use the insights of its failed proof to understand whether it is also not \rsnip.
In fact, while MSVC was already known to be insecure, this was not true for SLH.
When we modelled vanilla SLH and started proving \rssc, the proof broke in the ``gray area''.
While this does not directly mean that SLH is insecure, the way the proof broke provided insights on the insecurity of SLH.
Concretely, we were not able to show that the property on speculating target states holds when speculating reductions are done and this led to \Cref{ex:slh-nonint-insec,example:clang:slh:insecure}. %
We believe the insights of this proof technique can guide proofs of (in)security of other countermeasures too.
 
\newpage
\section{Proofs for Countermeasures and Criteria}

\begin{proof}[Proof of \Thmref{thm:rdss-impl-rdsp}]\proofref{}{rdss-impl-rdsp}\hfill
	
	We have (HPSSC)
	\begin{align*}
		\vdash \comp{\cdot} : \rdss \isdef&\
			\forall\src{P}, \ctxt{}, \trat{^{\taintt}},
			\exists \ctxs{}, \tras{^\sigma} \ldotp 
			\\
			\text{ if }&\
			\trg{\ctxt{}\hole{\comp{\src{P}}}} \semt \trat{^{\taintt}}
			\text{ then }
			\src{\ctxs{}\hole{P}} \sems \tras{^\sigma}
			\text{ and }
			\tras{^\sigma}\rels\trat{^{\taintt}}
	\end{align*}
	We need to prove:
	\begin{align*}
		\forall\src{P}\ldotp 
		\text{ if }&\ \forall \ctxs{}\ldotp \forall \tras{^\sigma}\in\behavs{\ctxs{}\hole{P}}\ldotp \forall \acas{^\sigma}\in\tras{^\sigma}\ldotp \src{\sigma}\equiv\src{\safeta}
		\\
		\text{ then }&\ \forall \ctxt{}\ldotp \forall \trat{^{\taintt}}\in\behavt{\ctxt{}\hole{\comp{\src{P}}}}\ldotp \forall \acat{^{\taintt}}\in\trat{^{\taintt}}\ldotp \taintt\equiv\trg{\safeta}
	\end{align*}

	We proceed by contradiction:
	\begin{align*}
		&\ \forall \ctxs{}\ldotp \forall \tras{^\sigma}\in\behavs{\ctxs{}\hole{P}}\ldotp \forall \acas{^\sigma}\in\tras{^\sigma}\ldotp \src{\sigma}\equiv\src{\safeta} \text{ (HPS) }
		\\
		\text{ and }&\ \exists \ctxt{}\ldotp \exists \trat{^{\taintt}}\in\behavt{\ctxt{}\hole{\comp{\src{P}}}}\ldotp \exists \acat{^{\taintt}}\in\trat{^{\taintt}} \text{ (HPT)}
		\\
		\text{ and }&\ \taintt\equiv\trg{\unta} \text{ HPU }
	\end{align*}	

	We instantiate HPSSC with HPS and HPT so we get that $\tras{^\sigma}\relref\trat{^{\taintt}}$.

	By definition of $\relref$ we have two main cases:
	\begin{itemize}
		\item source and target actions are the same \Cref{tr:tr-rel-same,tr:tr-rel-same-h}.

			By \Cref{tr:ac-rel-cl,tr:ac-rel-cb,tr:ac-rel-rt,tr:ac-rel-rb,tr:ac-rel-rd,tr:ac-rel-wr} and by \Cref{thm:all-s-rdss}, we conclude that all target taint is \trg{\safeta}, which contradicts (HPU);
		
		\item there is no source action and a single target one \Cref{tr:tr-rel-safe-h,tr:tr-rel-safe-a,tr:tr-rel-rollb} 

			By \Cref{tr:ac-rel-ep-al,tr:ac-rel-ep-hp,tr:ac-rel-rlb}, the target taint is \trg{\safeta}, which contradicts (HPU)
	\end{itemize}

	Having found a contradiction in all cases, this theorem holds.
\end{proof}

\BREAK

\begin{proof}[Proof of \Thmref{thm:rdssp-impl-rdss}]\proofref{}{rdssp-impl-rdss}\hfill
	
	We have (RSSP)
	\begin{align*}
		\forall\src{P}\ldotp 
		\text{ if }&\ \forall \ctxs{}\ldotp \forall \tras{^\sigma}\in\behavs{\ctxs{}\hole{P}}\ldotp \forall \acas{^\sigma}\in\tras{^\sigma}\ldotp \src{\sigma}\equiv\src{\safeta}
		\\
		\text{ then }&\ \forall \ctxt{}\ldotp \forall \trat{^{\taintt}}\in\behavt{\ctxt{}\hole{\comp{\src{P}}}}\ldotp \forall \acat{^{\taintt}}\in\trat{^{\taintt}}\ldotp \taintt\equiv\trg{\safeta}
	\end{align*}
	We need to prove:
	\begin{align*}
		&\
			\forall\src{P}, \ctxt{}, \trat{^{\taintt}},
			\exists \ctxs{}, \tras{^\sigma} \ldotp 
			\\
			\text{ if }&\
			\trg{\ctxt{}\hole{\comp{\src{P}}}} \semt \trat{^{\taintt}}
			\text{ then }
			\src{\ctxs{}\hole{P}} \sems \tras{^\sigma}
			\text{ and }
			\tras{^\sigma}\rels\trat{^{\taintt}}
	\end{align*}

	We proceed by contradiction, assuming that the target reduction is not related to the source one:
	\begin{align*}
			&\
			\exists \ctxt{}, \trat{^{\taintt}},
			\forall \ctxs{}, \tras{^\sigma} \ldotp 
			\\
			&\
			\trg{\ctxt{}\hole{\comp{\src{P}}}} \semt \trat{^{\taintt}}
			\text{ and }
			\src{\ctxs{}\hole{P}} \sems \tras{^\sigma} \text{ (HPSR) }
			\text{ and }
			\tras{^\sigma}\nrels\trat{^{\taintt}} \text{ (HPA)}
	\end{align*}
	We analyse HPA.

	By $\relref$ we determine when the two traces are not related.

	By the universal quantification over \ctxs{} rules out all cases when there are trivial mismatches: a source call/ret and a target ret/call, calls to two different functions, a source write/read and a target read/write.

	So we are left with these cases:
	\begin{itemize}
		\item a target call matched by a source call with unrelated argument %

			This contradicts \Cref{tr:ac-rel-cl,tr:ac-rel-cb}.			
		\item a target call matched by a source call with related arguments but with different taint.

			In this case, since all source taints are \src{\safeta}, we conclude that we have a target action that is tagged as \trg{\unta}.
		\item a target return matched by a source return with unrelated heaps

			This contradicts \Cref{tr:ac-rel-rt,tr:ac-rel-rb}.
		\item a target read/write matched by no action in the source
			
			This contradicts \Cref{tr:ac-rel-ep-hp,tr:ac-rel-ep-al}.
		\item a target read/write matched by a source read/write to a different location.

			This contradicts \Cref{tr:ac-rel-rd,tr:ac-rel-wr}.
		\item a target read/write matched by a source read/write to the same location but with different taint.

			In this case, since all source taints are \src{\safeta}, we conclude that we have a target action that is tagged as \trg{\unta}.
	\end{itemize}
	We can therefore conclude that $\exists \ctxt{}\ldotp \exists \trat{^{\taintt}}\in\behavt{\ctxt{}\hole{\comp{\src{P}}}}\ldotp \exists \acat{^{\taintt}}\in\trat{^{\taintt}}\ldotp \taintt\equiv\trg{\unta}$ (HPU).

	We instantiate RSSP with HPSR and conclude 
	\begin{align*}
		&\ \forall \ctxt{}\ldotp \forall \trat{^{\taintt}}\in\behavt{\ctxt{}\hole{\comp{\src{P}}}}\ldotp 
		\\
		&\ \forall \acat{^{\taintt}}\in\trat{^{\taintt}}\ldotp \taintt\equiv\trg{\safeta} \text{ (HPS)}
	\end{align*}
	We obtain the contradiction between HPU and HPS.
\end{proof}

\BREAK

\begin{proof}[Proof of \Thmref{thm:rdss-eq-rdsp}]\proofref{}{rdss-eq-rdsp}\hfill
	
	By \Thmref{thm:rdss-impl-rdsp} and \Thmref{thm:rdssp-impl-rdss}.
\end{proof}

\BREAK

\begin{proof}[Proof of \Thmref{thm:rdss-impl-rdsp-weak}]\proofref{}{rdss-impl-rdsp-weak}\hfill
	
	Trivial adaptation of \Thmref{thm:rdss-impl-rdsp}.
\end{proof}

\BREAK

\begin{proof}[Proof of \Thmref{thm:rdssp-impl-rdss-weak}]\proofref{}{rdssp-impl-rdss-weak}\hfill
	
	Trivial adaptation of \Thmref{thm:rdssp-impl-rdss}.
\end{proof}

\BREAK

\begin{proof}[Proof of \Thmref{thm:rdss-eq-rdsp-weak}]\proofref{}{rdss-eq-rdsp-weak}\hfill
	
	By \Thmref{thm:rdss-impl-rdsp-weak} and \Thmref{thm:rdssp-impl-rdss-weak}.
\end{proof}

\BREAK

\begin{proof}[Proof of \Thmref{thm:lfence-comp-rdss}]\proofref{}{lfence-comp-rdss}\hfill

	Instantiate \ctxs{} with \backtrfencec{\ctxt{}}.

	This holds by \Thmref{thm:corr-bt-lfence}.
\end{proof}

\BREAK

\begin{proof}[Proof of \Thmref{thm:all-lfence-comp-are-rdss}]\proofref{}{19}\hfill

	By \Cref{thm:all-s-rdss} we have HPS: $\forall\src{P}\ldotp \vdash\src{P}:\rss$.

	By \Cref{thm:lfence-comp-rdss} we have HPC: $\vdash\complfence{\cdot}:\rdss$.

	By \Cref{thm:rdss-impl-rdsp} with HPC we have HPP: $\vdash\complfence{\cdot}:\rdssp$.

	By \Thmref{def:rdssp} of HPP with HPS we conclude that $\forall\src{P}\ldotp \vdash\complfence{P} : \rss$.
\end{proof}

\BREAK

\begin{proof}[Proof of \Thmref{thm:corr-bt-lfence}]\proofref{}{corr-bt-lfence}\hfill

	This holds by \Thmref{thm:ini-state-rel} and by \Thmref{thm:bwd-sim-lfence}.
\end{proof}

\BREAK

\begin{proof}[Proof of \Thmref{thm:bwd-sim-lfence}]\proofref{}{bwd-sim-lfence}\hfill
	
	We proceed by induction on the reduction \Xtot{\trat{^{\taintt}}}
	\begin{description}
		\item[Base]

		no reductions, this case is trivial

		\item[Inductive]

		we have $n$ target steps to states $\trg{\Sigma_i} = \trg{C, H_i, \OB{B_i} \triangleright \proc{s_i;s_i''}{\OB{f_i}\cdot f_i}}$ and $\src{\Omega_i}=\src{C, H_i, \OB{B_i} \triangleright \proc{s_i;s_i''}{\OB{f_i}\cdot f_i}}$

		where $\src{\Omega_i}\srel\trg{\Sigma_i}$ and the traces produced so far are related via $\rels$.

		We proceed by case analysis on \src{f_i}
		\begin{description}
			\item[in \src{\OB{f''}} (in the compiled component)] 

			By case analysis on \src{f'}
			\begin{description}
				\item[in \src{\OB{f''}} (in the compiled component)] 

				This holds by \Thmref{thm:bwd-sim-comp-steps-lfence};

				\item[not in \src{\OB{f''}} (in the context)] 

				This happens by the execution of two statements:
				\begin{description}
					\item[call]

					This is a call from a compiled function to a context function.

					In this case we have that \trg{\Sigma_i} = \trg{C, H_i, \OB{B_i}\cdot B_i \triangleright \proc{\complfence{\call{f'}\ e};s_i'' }{\OB{f_i}\cdot f_i}}

					so by \Cref{tr:et-sp-act} with \Cref{tr:eus-call}(in the target ofc) we know 

					\trg{\Sigma_i \xtot{ (\clh{f'}{\complfence{v}}{H_i})^\safeta } \Sigma'} = \trg{w, (C, H', \OB{B'}\cdot x\mapsto \complfence{v} \triangleright \proc{ s_f ;s_i'' }{\OB{f'}\cdot f'}, \bot, \safeta)}

					and we know \trg{B_i\triangleright \complfence{e} \bigredt \complfence{v}} (HPE)

					where $\trg{f(x)\mapsto s_f}\in\trg{C}$ and \trg{\OB{B'}} = \trg{\OB{B_i}\cdot B_i} and \trg{\OB{f'}} = \trg{\OB{f_i}\cdot f_i} and \trg{H'}=\trg{H_i}

					and the taint is \trg{\safeta} by definition of $\sqcap$ since by $\srel$ the pc taint is \trg{\safeta}.

					By \Thmref{thm:back-sim-bte-lfence}, HPE yields  \src{B_i\triangleright {e} \bigreds {v}} (HPSE)

					We have that \src{\Omega_i} = \src{C, H_i, \OB{B_i}\cdot B_i \triangleright \proc{\call{f}\ e ; s_i''}{\OB{f_i}\cdot f_i}}

					by definition of \backtrfencec{\cdot} we know that $\src{f(x)\mapsto \backtrfencec{\trg{s_f}}}\in\src{C}$

					we take \src{\OB{B'}} = \src{\OB{B_i}\cdot B_i} so that by hypothesis we have that $\src{\OB{B'}}\brel\trg{\OB{B'}}$ (HPB)
 
					we take \src{\OB{f'}} = \src{\OB{f_i}\cdot f_i} so that by hypothesis we have that $\src{\OB{f'}}\equiv\trg{\OB{f'}}$ (HPF)
 
					we take \src{H'} = \src{H_i} so that by hypothesis we have that $\src{H'}\hrel\trg{H'}$ (HPH)

					By \Cref{tr:eus-call}(in the source) with HPSE and the hypotheses above, we have that 

					\src{\Omega_i \xtos{ (\clh{f'}{v}{H_i})^\safeta } \Omega'} = \src{C,H',\OB{B'}\cdot x\mapsto v \triangleright \proc{\backtrfencec{\trg{s_f}};s_i''}{\OB{f'}\cdot f'}}

					We need to prove that:
					\begin{itemize}
						\item $\src{\Omega'}\srelref\trg{\Sigma'}$, which by \Cref{tr:stats} means proving that:
						\begin{itemize}
							\item $\src{\OB{B'} \cdot x\mapsto v}\brel\trg{\OB{B'}\cdot x \mapsto \complfence{v}}$, which holds by (HPB) and by \Cref{tr:brel-i} with \Cref{tr:vr} with \Thmref{thm:val-rel-comp-lfence};
							\item $\src{\OB{f'}}\equiv\trg{\OB{f'}}$, which holds by (HPF);
							\item $\src{C}\crel_{\src{\OB{f''}}}\trg{C}$, which holds by $\srel$ of the initial states since components do not change;
							\item $\src{H'}\hrel\trg{H'}$, which holds by (HPH)
						\end{itemize}
						\item $\src{(\clh{f'}{v}{H_i})^\safeta}\arelref\trg{(\clh{f'}{\complfence{v}}{H_i})^\safeta}$, which by \Cref{tr:ac-rel-cl} means proving that:
						\begin{itemize}
							\item $\src{f'}\equiv\trg{f'}$, which holds;
							\item $\src{v}\vrel\complfence{v}$, which holds by \Cref{thm:val-rel-comp-lfence};
							\item $\src{\safeta}\equiv\trg{\safeta}$
						\end{itemize}
					\end{itemize}
					
					so this case holds.

					\item[ret]  

					This is a return from a compiled function to a context function.

					This is the dual of the case below for return from context to code.
				\end{description}
			\end{description}

			\item[not in \src{\OB{f''}} (in the context)] 

			By case analysis on \src{f'}
			\begin{description}
				\item[in \src{\OB{f''}} (in the compiled component)] 

				This happens by the execution of two statements:
				\begin{description}
					\item[call]

					This is a call from a context function to a compiled function.

					This is the dual of the case for call above.

					\item[ret]  

					This is a return from a context function to a compiled function.
					
					In this case we have that \trg{\Sigma_i} = \trg{C, H_i, \OB{B_i}\cdot B_i \triangleright \proc{\ret;\complfence{s_f};s_i'' }{\OB{f_i}\cdot f_i}}

					so by \Cref{tr:et-sp-act} with \Cref{tr:eus-ret}(in the target ofc) we know 

					\trg{\Sigma_i \xtot{ (\rth{}{H_i})^\safeta } \Sigma'} = \trg{w, (C, H_i, \OB{B_i} \triangleright \proc{\complfence{s_f};s_i'' }{\OB{f_i}}, \bot, \safeta)}

					We have that \src{\Omega_i} = \src{C, H_i, \OB{B_i}\cdot B_i \triangleright \proc{\ret ; s_f ; s_i''}{\OB{f_i}\cdot f_i}}

					By \Cref{tr:eus-call}(in the source), we have that 

					\src{\Omega_i \xtos{ (\rth{}{H_i})^\safeta } \Omega'} = \src{C,H_i,\OB{B_i} \triangleright \proc{s_f;s_i''}{\OB{f_i}}}

					We need to prove that:
					\begin{itemize}
						\item $\src{\Omega'}\srelref\trg{\Sigma'}$, which by \Cref{tr:stats} means proving that:
						\begin{itemize}
							\item $\src{\OB{B_i}}\brel\trg{\OB{B_i}}$, which holds by hypothesis;
							\item $\src{\OB{f_i}}\equiv\trg{\OB{f_i}}$, which holds by hypothesis;
							\item $\src{C}\crel_{\src{\OB{f''}}}\trg{C}$, which holds by $\srel$ of the initial states since components do not change;
							\item $\src{H_i}\hrel\trg{H_i}$, which holds by hypothesis
						\end{itemize}
						\item $\src{(\rth{}{H_i})^\safeta}\arelref\trg{(\rth{}{H_i})^\safeta}$, which by \Cref{tr:ac-rel-rt} is trivially true.
					\end{itemize}
					
					so this case holds.
				\end{description}

				\item[not in \src{\OB{f''}} (in the context)] 

				This holds by \Thmref{thm:back-sim-bts-lfence};
			\end{description}
		\end{description}
	\end{description}
\end{proof}

\BREAK

\begin{proof}[Proof of \Thmref{thm:bwd-sim-comp-steps-lfence}]\proofref{}{bwd-sim-comp-steps-lfence}\hfill
	
	By contradiction, assume the source ends up in a different state $\src{\Omega''}$ such that $\src{\Omega''}\neq\src{\Omega'}$.

	By \Thmref{thm:fwd-sim-stm-lfence} we have that the target also ends up in state \trg{\Sigma''} such that $\src{\Omega''}\srel_{\src{\OB{f''}}}\trg{\Sigma''}$ and such that $\trg{\Sigma''}\neq\trg{\Sigma'}$ (HPC).

	So we have that \trg{\Sigma} reaches both \trg{\Sigma'} and \trg{\Sigma''}.

	Since the semantics is deterministic, it must be that $\trg{\Sigma''}=\trg{\Sigma'}$ (HPE).
	
	We now have a contradiction between HPC and HPE.
\end{proof}

\BREAK

\begin{proof}[Proof of \Thmref{thm:fwd-sim-exp-lfence}]\proofref{}{fwd-sim-exp-lfence}\hfill
	
	Trivial induction on \src{e}.
\end{proof}

\BREAK

\begin{proof}[Proof of \Thmref{thm:fwd-sim-stm-lfence}]\proofref{}{fwd-sim-stm-lfence}\hfill
	
	The proof proceeds by structural induction on \src{s}.
	\begin{description}
		\item[Base]
		\begin{description}
			\item[skip] trivial;
			\item[call]
			Two cases arise:
			\begin{enumerate}
				\item \src{f} is defined by the component.

				This holds by \Thmref{thm:fwd-sim-exp-lfence} and by relatedness of heaps;
				\item \src{f} is defined by the context.

				This cannot arise as in the target we don't step to a compiled statement \complfence{s'}.
			\end{enumerate}
			\item[return]
			Two cases arise:
			\begin{enumerate}
				\item \src{f} is defined by the component.

				This holds by relatedness of heaps;
				\item \src{f} is defined by the context.

				This cannot arise as in the target we don't step to a compiled statement \complfence{s'}.
			\end{enumerate}
			\item[write] by \Thmref{thm:fwd-sim-exp-lfence} and \Cref{tr:ac-rel-wr2};
			\item[private write] by \Thmref{thm:fwd-sim-exp-lfence} and \Cref{tr:ac-rel-wr}.
		\end{description}
		\item[Inductive] 
		\begin{description}
			\item[sequencing] by IH;
			\item[letin] by IH and \Thmref{thm:fwd-sim-exp-lfence};
			\item[if zero] by IH and \Thmref{thm:fwd-sim-exp-lfence}.

				By definition of \complfence{\cdot}, this is the only case where we need to account for multiple steps in the target, since there is an \trg{\lfence}.

				By \Cref{tr:eut-lf}, a rollback is triggered via \Cref{tr:et-sp-rb}.

				Relatedness of states is therefore ensured, while relatedness of traces is ensured by: \Cref{tr:tr-rel-rollb} and \Cref{tr:ac-rel-rlb}.
			\item[let read] by IH and \Thmref{thm:fwd-sim-exp-lfence} and \Cref{tr:ac-rel-rd};
			\item[let private read] by IH and \Thmref{thm:fwd-sim-exp-lfence} and \Cref{tr:ac-rel-rd};
		 	\item[conditional letin] by IH and \Thmref{thm:fwd-sim-exp-lfence}.
		 \end{description} 
	\end{description}
\end{proof}

\BREAK

\begin{proof}[Proof of \Thmref{thm:back-sim-bte-lfence}]\proofref{}{back-sim-bte-lfence}\hfill

	The proof proceeds by structural induction on \trg{e}.
	\begin{description}
		\item[Base]
		\begin{description}
			\item[number] trivial;
			\item[variable] this follows from the relatedness of stack frames;
		\end{description}
		\item[Inductive] 
		\begin{description}
			\item[ops] by IH;
			\item[bops] by IH.
		\end{description}
	\end{description}
\end{proof}

\BREAK

\begin{proof}[Proof of \Thmref{thm:back-sim-bts-lfence}]\proofref{}{back-sim-bts-lfence}\hfill

	The proof proceeds by structural induction on \trg{s}.
	\begin{description}
		\item[Base]
		\begin{description}
			\item[skip] trivial;
			\item[call]
			Two cases arise:
			\begin{enumerate}
				\item \trg{f} is defined by the component.

				This holds by \Thmref{thm:back-sim-bte-lfence} and relatedness of heaps.
				\item \trg{f} is defined by the context.

				This cannot arise as in the source we don't step to a backtranslated statement \backtrfencec{\trg{s'}}.
			\end{enumerate}
			\item[return]
			Two cases arise:
			\begin{enumerate}
				\item \trg{f} is defined by the component.

				This holds by relatedness of heaps.
				\item \trg{f} is defined by the context.

				This cannot arise as in the source we don't step to a backtranslated statement \backtrfencec{\trg{s'}}.
			\end{enumerate}
			\item[write] by \Thmref{thm:back-sim-bte-lfence}.

				One complexity is showing that the action taint \taintt=\trg{\safeta}, but this follows from $\srel$ which tells that the pc taint is \trg{\safeta}.
			\item[private write] this cannot arise by \Cref{def:atk};
			\item[lfence] trivial.
		\end{description}
		\item[Inductive] 
		\begin{description}
			\item[sequencing] by IH;
			\item[letin] by IH and \Thmref{thm:back-sim-bte-lfence};
			\item[if zero] by IH and \Thmref{thm:back-sim-bte-lfence}.

				In this case, \Cref{tr:et-sp-if} is not applicable, so we cannot speculate.
			\item[let read] by IH and \Thmref{thm:back-sim-bte-lfence}.

				One complexity is showing that the action taint \taintt=\trg{\safeta}, but this follows from $\srel$ which tells that the pc taint is \trg{\safeta}.
			\item[let private read] this cannot arise by \Cref{def:atk};
		 	\item[conditional letin] by IH and \Thmref{thm:back-sim-bte-lfence}.
		 \end{description} 
	\end{description}
\end{proof}

\BREAK

\begin{proof}[Proof of \Thmref{thm:ini-state-rel}]\proofref{}{ini-state-rel}\hfill
	
	By definition of \srelref, we need to prove that:
	\begin{itemize}
		\item bindings are related via $\brelref$ (by \Cref{tr:bsrel}, then \Cref{tr:brel-i} where $\src{0}\vrel\trg{0}$ by \Cref{tr:vr});
		\item heaps are related via $\hrelref$: 
			\begin{itemize}
				\item for the program heap, \Cref{tr:} tells us that its domain is negative numbers, and the negative heap relatedness holds by \Thmref{thm:heap-rel-comp-lfence};
				\item for the context heap, \Cref{def:atk} tells us that its domain is natural numbers, so this holds by \Cref{thm:heap-rel-bt-lfence};
			\end{itemize}
		\item components are related via $\crelref$ by simple inspection of \complfence{\cdot} and \backtrfencec{\cdot};
		\item the target taint is \trg{\safeta}: this holds by \Cref{tr:ini-ut};
		\item the target window is \trg{\bot}: this holds by \Cref{tr:ini-ut};
	\end{itemize}
\end{proof}

\BREAK

\begin{proof}[Proof of \Thmref{thm:val-rel-comp-lfence}]\proofref{}{val-rel-comp-lfence}\hfill
	
	Trivial analyisis of the compiler.
\end{proof}

\BREAK

\begin{proof}[Proof of \Thmref{thm:heap-rel-comp-lfence}]\proofref{}{heap-rel-comp-lfence}\hfill
	
	Trivial analyisis of the compiler.
\end{proof}

\BREAK

\begin{proof}[Proof of \Thmref{thm:val-rel-bt-lfence}]\proofref{}{val-rel-bt-lfence}\hfill
	
	Trivial analyisis of the backtranslation.
\end{proof}

\BREAK

\begin{proof}[Proof of \Thmref{thm:taint-rel-bt-lfence}]\proofref{}{taint-rel-bt-lfence}\hfill
	
	Trivial analyisis of the backtranslation.
\end{proof}

\BREAK

\begin{proof}[Proof of \Thmref{thm:heap-rel-bt-lfence}]\proofref{done}{heap-rel-bt-lfence}\hfill 
	
	Trivial analyisis of the backtranslation with \Cref{thm:val-rel-bt-lfence,thm:taint-rel-bt-lfence}.
\end{proof}

\BREAK

\begin{proof}[Proof of \Thmref{thm:lfence-comp-stc}]\proofref{}{lfence-comp-stc}\hfill
	Instantiate \ctxs{} with \backtrfencec{\ctxt{}}.

	This holds by an adaptation of \Thmref{thm:corr-bt-lfence} to the extra reduction, which in turn holds by 
		\Thmref{thm:ini-state-rel} and by an adaptation of \Thmref{thm:bwd-sim-lfence} to the extra reduction,
			which in turn holds by \Thmref{thm:back-sim-bte-lfence} and by an adaptation of \Thmref{thm:bwd-sim-comp-steps-lfence} to the extra reduction,
				where the additional reduction must be considered.

				Since that reduction is not triggered, these adaptations trivially hold.
\end{proof}

\BREAK

\begin{proof}[Proof of \Thmref{thm:ini-state-rel-slh}]\proofref{}{ini-state-rel-slh}\hfill
	
	Analogous to the proof of \Thmref{thm:ini-state-rel} but with \Thmref{thm:val-rel-comp-slh} and \Thmref{thm:heap-rel-comp-slh}.
\end{proof}

\BREAK

\begin{proof}[Proof of \Thmref{thm:val-rel-comp-slh}]\proofref{}{val-rel-comp-slh}\hfill

	Trivial analysis of \compsslh{\cdot}.
\end{proof}

\BREAK

\begin{proof}[Proof of \Thmref{thm:heap-rel-comp-slh}]\proofref{}{heap-rel-comp-slh}\hfill
	
	Trivial analysis of \compsslh{\cdot} given that \trg{-1} is allocated by the compiler and that all addresses are shifted by 1, so they account for \Cref{tr:shrel-i}.
\end{proof}

\BREAK

\begin{proof}[Proof of \Thmref{thm:fwd-sim-exp-slh}]\proofref{}{fwd-sim-exp-slh}\hfill
	Trivial induction on \src{e}.
\end{proof}

\BREAK

\begin{proof}[Proof of \Thmref{thm:fwd-sim-stm-slh}]\proofref{}{fwd-sim-stm-slh}\hfill

	The proof proceeds by induction on \src{s}.
	\begin{description}
		\item[Base]	
			\begin{description}
				\item[skip] Trivial.
				\item[call f] 
					We have two cases:
					\begin{itemize}
						\item \src{f} is component-defined.

						This follows from \Thmref{thm:fwd-sim-exp-slh}.

						\item \src{f} is context-defined.

						This is a contradiction because the ending statement is not a compiled one.
					\end{itemize}

				\item[assign]

				This follows from \Thmref{thm:fwd-sim-exp-slh}.

				\item[private assign]    

				This follows from \Thmref{thm:fwd-sim-exp-slh}.

				\item[return]

				Trivial. 
			\end{description}
		\item[Inductive]  
			\begin{description}
				\item[sequencing] 

				By IH.

				\item[let-in] 

				By IH and \Thmref{thm:fwd-sim-exp-slh}.

				\item[if then else]

				In this case we have this source reduction, wlog assume HE \src{B \triangleright e \bigreds \trues : \sigma }
				\begin{align*}
					\src{C, H, \OB{B}\cdot B \triangleright \ifzte{e}{s}{s'};s'' \xtos{(\ifl{0})^\safeta} C, H, \OB{B}\cdot B \triangleright s;s''}
				\end{align*}

				By \Thmref{thm:fwd-sim-exp-slh} with HE we get HET : \trg{B \triangleright \compsslh{e} \bigreds \compsslh{\trues} : \taintt}.

				In the target we have these reductions (by HET):
				\begin{align*}
					&
					\trg{\Sigma} =
					\\
					&
					\trg{ w (C, H, \OB{B}\cdot B, \bot, \safeta) \triangleright \compsslh{\ifzte{e}{s}{s'}};s''}
					\\
					\equiv
					&
					\trg{ w (C, H, \OB{B}\cdot B, \bot, \safeta) \triangleright
						\begin{aligned}[t]
							&
							\letint{\trg{x_g}}{\compsslh{e}}{
							\\
							&\
								\letreadpt{\trg{\predState}}{-1}{}
								\\&\
								\cmovet{\trg{x_g}}{\trg{0}}{\trg{\predState}}{
								\\&\ 
									\ifztet{\trg{x_g}
										}{ 
										\letreadpt{\trg{x}}{\trg{-1}}{\asgnpt{\trg{-1}}{\trg{x \vee \neg x_g}}};
										\compsslh{s}
									\\
									&\ \
									}{
										\letreadpt{\trg{x}}{\trg{-1}}{\asgnpt{\trg{-1}}{\trg{x \vee x_g}}};
										\compsslh{s'}
									}
								}
							}
						\end{aligned};s''
					}
					\\
					&
					\text{ let } \trg{B'} = \trg{B}\cdot \trg{x_g\mapsto \truet:\taintt}
					\\
					\xltot{}
					&
					\trg{ w (C, H, \OB{B}\cdot B', \bot, \safeta) \triangleright
						\begin{aligned}[t]
							&
							\letreadpt{\trg{\predState}}{-1}{}
							\\&\
							\cmovet{\trg{x_g}}{\trg{0}}{\trg{\predState}}{
								\\
								&\
								\ifztet{\trg{x_g}
									}{ 
									\letreadpt{\trg{x}}{\trg{-1}}{\asgnpt{\trg{-1}}{\trg{x \vee \neg x_g}}};
									\compsslh{s}
								\\
								&\ \
								}{
									\letreadpt{\trg{x}}{\trg{-1}}{\asgnpt{\trg{-1}}{\trg{x \vee x_g}}};
									\compsslh{s'}
								}
							}
						\end{aligned};s''
					}
					\\
					&
					\text{ since } \trg{H(-1)\mapsto \falset : \safeta}
					\\
					&
					\text{ let } \trg{B''} = \trg{B'}\cup\trg{\predState\mapsto \falset:\safeta}
					\\
					\xltot{(\rdl{-1})^\safeta}
					&
					\trg{ w (C, H, \OB{B}\cdot B'', \bot, \safeta) \triangleright
						\begin{aligned}[t]
							&
							\cmovet{\trg{x_g}}{\trg{0}}{\trg{\predState}}{
								\\
								&\
								\ifztet{\trg{x_g}
									}{ 
									\letreadpt{\trg{x}}{\trg{-1}}{\asgnpt{\trg{-1}}{\trg{x \vee \neg x_g}}};
									\compsslh{s}
								\\
								&\ \
								}{
									\letreadpt{\trg{x}}{\trg{-1}}{\asgnpt{\trg{-1}}{\trg{x \vee x_g}}};
									\compsslh{s'}
								}
							}
						\end{aligned};s''
					}
					\\
					&
					\text{ since } \trg{\predState\mapsto \falset:\safeta}
					\\
					\xltot{}
					&
					\trg{ w (C, H, \OB{B}\cdot B'', \bot, \safeta) \triangleright
						\begin{aligned}[t]
							&
							\ifztet{\trg{x_g}
								}{ 
								\letreadpt{\trg{x}}{\trg{-1}}{\asgnpt{\trg{-1}}{\trg{x \vee \neg x_g}}};
								\compsslh{s}
							\\
							&\ \
							}{
								\letreadpt{\trg{x}}{\trg{-1}}{\asgnpt{\trg{-1}}{\trg{x \vee x_g}}};
								\compsslh{s'}
							}
						\end{aligned};s''
					}
					\\
					\\
					&
					\text{ By \Cref{tr:et-sp-if} }
					\\
					\xltot{ (\ifl{0})^\safeta }
					&
					\trg{ w 
						\begin{aligned}[t]
							& \trg{
								(C, H, \OB{B}\cdot B'', \bot, \safeta \triangleright \letreadpt{\trg{x}}{\trg{-1}}{\asgnpt{\trg{-1}}{\trg{x \vee \neg x_g}}}; \compsslh{s} ; s'') 
								}
							\\
							\cdot
							&
							\trg{
								(C, H, \OB{B}\cdot B'', w, \unta \triangleright 
									\begin{aligned}[t]
										&
										\letreadpt{\trg{x}}{\trg{-1}}{\asgnpt{\trg{-1}}{\trg{x \vee x_g}}};
										\compsslh{s'}
									\end{aligned};s''
								)
							}
						\end{aligned}
					}
					\\
					&
					\text{ since \trg{H(-1)\mapsto \falset : \safeta}}
					\\
					&
					\text{ let } \trg{B'''} = \trg{B''}\cdot \trg{x\mapsto \falset:\safeta}
					\\
					\xltot{ (\rdl{-1})^\safeta }
					&
					\trg{ w 
						\begin{aligned}[t]
							& \trg{
								(C, H, \OB{B}\cdot B'', \bot, \safeta \triangleright \letreadpt{\trg{x}}{\trg{-1}}{\asgnpt{\trg{-1}}{\trg{x \vee \neg x_g}}}; \compsslh{s} ; s'') 
								}
							\\
							\cdot
							&
							\trg{
								(C, H, \OB{B}\cdot B''', w, \unta \triangleright 
									\begin{aligned}[t]
										&
										\asgnpt{\trg{-1}}{\trg{x \vee x_g}};
										\compsslh{s'}
									\end{aligned};s''
								)
							}
						\end{aligned}
					}
					\\
					&
					\text{ since \trg{B''' \triangleright x\vee x_g \bigredt \truet : \safeta} let } \trg{H'} = \trg{H}\cup\trg{-1\mapsto\truet:\safeta}
					\\
					\xltot{ (\wrl{-1})^\safeta }
					&
					\trg{ w 
						\begin{aligned}[t]
							& \trg{
								(C, H, \OB{B}\cdot B'', \bot, \safeta \triangleright \letreadpt{\trg{x}}{\trg{-1}}{\asgnpt{\trg{-1}}{\trg{x \vee \neg x_g}}}; \compsslh{s} ; s'') 
								}
							\\
							\cdot
							&
							\trg{
								(C, H', \OB{B}\cdot B''', w, \unta \triangleright 
									\begin{aligned}[t]
										&
										\compsslh{s'}
									\end{aligned};s''
								)
							}
						\end{aligned}
					}
				\end{align*}
				Call this last state \trg{\Sigma_i}.

				Let \trg{\OB{D}} = \trg{\OB{D}\cdot D}, consider \trg{\OB{D'}} = \trg{\OB{D}\cdot D,x_g}.

				We can easily prove that $\src{\Omega}\ssrel_{\src{\OB{f_c}}}^{\trg{\OB{D'}}}\trg{\Sigma_i}$ since by HP the first states are related at $\srel_{\src{\OB{f_c}}}^{\trg{\OB{D'}}}$ and the speculating states in \trg{\Sigma_i} only have safe bindings.

				\begin{center}
				\fbox{
					\parbox{.75\textwidth}{
					Note: this is where the proof for \compslhd{\cdot} breaks, because there we need a stronger invariant, namely that all bindings are safe when speculating, and we can't prove that here because we inherit \trg{B} with all it can have, including \trg{\unta} bindings.
					If we do not require the bindings to all be safe when speculating, the speculation lemmas break (\Thmref{thm:comp-spec-safe}): without this we cannot show that actions are safe.
					}
				}
				\end{center}

				By \Thmref{thm:spec-rel-satte-safe} we know that: 
				\begin{align*}
					&
					\trg{(n'',\Sigma_i) \Xtot{\trat{^{\taintt}}} (n',\Sigma'')} 
				\end{align*}
				Where 
				\begin{align*}
					\trg{\Sigma''} =&\ 
						\trg{ w 
						\begin{aligned}[t]
							& \trg{
								(C, H, \OB{B}\cdot B'', \bot, \safeta \triangleright \letreadpt{\trg{x}}{\trg{-1}}{\asgnpt{\trg{-1}}{\trg{x \vee \neg \truet}}}; \compsslh{s} ; s'') 
								}
						\end{aligned}
					}
				\end{align*}
				and (HL): $\srce\tracerel\trat{^{\taintt}}$

				and (HS): $\src{\Omega}\ssrel_{\src{\OB{f_c}}}^{\trg{\OB{D'}}}\trg{\Sigma''}$

				and that the first element in the stack of states in \trg{\Sigma''} is the same as the first element in the stack of states of \trg{\Sigma'}, which is still $\srel$ with \src{\Omega}.
				
				The reductions proceed as follows:
				\begin{align*}
					&
					\trg{\Sigma''}
					\\
					\xltot{ (\rdl{-1})^\safeta }
					& 
					\trg{
						(C, H, \OB{B}\cdot B'', \bot, \safeta \triangleright {\asgnpt{\trg{-1}}{\trg{\falset \vee \neg \truet}}}; \compsslh{s} ; s'') 
					}
					\\
					&
					\text{ since \trg{\_ \triangleright \falset\vee\neg\truet \bigredt \falset : \safeta}}
					\\
					\xltot{ (\wrl{-1})^\safeta }
					& 
					\trg{
						(C, H, \OB{B}\cdot B'', \bot, \safeta \triangleright \compsslh{s} ; s'') 
					}
				\end{align*}

				At this point, we have this target trace:
				\begin{align*}
					&
					\trg{(n,\Sigma) \Xtot{(\rdl{-1})^\safeta \cdot (\ifl{0})^\safeta \cdot (\rdl{-1})^\safeta \cdot (\wrl{-1})^\safeta } (n'',\Sigma_i) \Xtot{\trat{^{\taintt}}} (n'',\Sigma'') }
					\\
					&\
					\trg{ \Xtot{ %
						(\rdl{-1})^\safeta \cdot (\wrl{-1})^\safeta} (n',\Sigma')}
					\\
					&
					\text{ i.e.,}
					\\
					&
					\trg{(n,\Sigma) \Xtot{ (\rdl{-1})^\safeta \cdot (\ifl{0})^\safeta \cdot (\rdl{-1})^\safeta \cdot (\wrl{-1})^\safeta \cdot \trat{^{\taintt}} \cdot 
						(\rdl{-1})^\safeta \cdot (\wrl{-1})^\safeta} (n',\Sigma')}
				\end{align*}
				We need to show that this trace is $\tracerel$ to the source trace \src{(\ifl{0})^\safeta}.

				This holds because 
				\begin{itemize}
					\item the first read action can be dropped by \Cref{tr:tr-rel-safe-h} and \Cref{tr:ac-rel-ep-hp};
					\item the if actions are related by \Cref{tr:tr-rel-same-h} and \Cref{tr:ac-rel-if};
					\item the second read action can be dropped by \Cref{tr:tr-rel-safe-h} and \Cref{tr:ac-rel-ep-hp};
					\item the first write action can be dropped by \Cref{tr:tr-rel-safe-h} and \Cref{tr:ac-rel-ep-hp};
					\item the trace \trat{^{\taintt}} can be dropped by HL;
					\item the third read action can be dropped by \Cref{tr:tr-rel-safe-h} and \Cref{tr:ac-rel-ep-hp};
					\item the second write action can be dropped by \Cref{tr:tr-rel-safe-h} and \Cref{tr:ac-rel-ep-hp}.
				\end{itemize}
				So this case holds.

				\item[read]

				By IH and \Thmref{thm:fwd-sim-exp-slh}.

				\item[private read]   

				By IH and \Thmref{thm:fwd-sim-exp-slh}.
			\end{description}
	\end{description}
\end{proof}

\BREAK

\begin{proof}[Proof of \Thmref{thm:back-sim-stm-slh}]\proofref{}{back-sim-stm-slh}\hfill

	Analogous to \Thmref{thm:bwd-sim-comp-steps-lfence} with \Thmref{thm:fwd-sim-stm-slh}.
\end{proof}

\BREAK

\begin{proof}[Proof of \Thmref{thm:exp-red-safe}]\proofref{}{exp-red-safe}\hfill
	
	Trivial induction on \trg{e}, the only nontrivial case is when $\trg{e}=\trg{x}$ but this follows from the safety of bindings.
\end{proof}

\BREAK

\begin{proof}[Proof of \Thmref{thm:spec-most-omega}]\proofref{}{spec-most-omega}\hfill

	This proceeds by cases on \trg{f} and \trg{f''}:
	\begin{description}
		\item[Both in \src{\OB{f_c}}] 

			These are compiled reductions, this holds by \Thmref{thm:comp-spec-most-omega};

		\item[Both not in \src{\OB{f_c}}] 

			These are context reductions, this holds by \Thmref{thm:ctx-spec-most-omega};

		\item[$\trg{f}\in \src{\OB{f_c}}$ and $\trg{f''}\notin \src{\OB{f_c}}$] 

			This is a reduction going from compiled code to context.

			We proceed by induction on \trgb{\omega}.

			The base case is trivial by \Cref{tr:eut-tr-init}, the inductive case has two cases:
			\begin{description}
				\item[call]  

					By analising the case of \compsslh{\cdot} for call, we have:

					\trg{
						\letin{x_f}{\compsslh{e}}{
							\cmovet{\trg{x_f}}{\trg{0}}{\trg{\predState}}{\call{f}~x_f}
						}
					}

					We have two cases: \trg{B\triangleright \compsslh{e}\bigredt v : \taintt} or \trg{B\triangleright \compsslh{e}\bigredt e' : \taintt}, i.e., the execution of \compsslh{e} gets stuck.

					The latter case is trivially true by \Cref{tr:et-sp-rb-s}: when speculation gets stuck it gets rolled back.

					The former case proceeds as follows.

					We have two cases, either there speculation window is long enough ($\trg{\omega}>3$) or not.

					In the latter case, some of the reductions below happen and then a \Cref{tr:et-sp-rb} is triggered, so this holds.

					Otherwise, if the window is long enough, we have the following.

					By HP we have that \trg{H(-1)\mapsto \truet:\safeta} so the code above will step as follows:

					(for simplicity we only keep track of the top of the stack of execution states)
					\begin{align*}
						&
						\trg{
							(C,H,\OB{B}\cdot B \triangleright
														\letin{x_f}{\compsslh{e}}{
															\cmovet{\trg{x_f}}{\trg{0}}{\trg{\predState}}{\call{f}~x_f}
														}), \omega ,\unta
						}						
						\\
						&\text{ assuming } \trg{B\triangleright \compsslh{e}\bigredt v : \taintt}
						\\
						\xtot{}
						&
						\trg{
							(C,H,\OB{B}\cdot B\cdot x_f\mapsto v:\taintt \triangleright
															\cmovet{\trg{x_f}}{\trg{0}}{\trg{\predState}}{\call{f}~x_f}), \omega-1 ,\unta
						}						
						\\
						&\text{ since } \trg{H(-1)\mapsto \truet:\safeta}
						\\
						\xtot{}
						&
						\trg{
							(C,H,\OB{B}\cdot B\cdot x_f\mapsto 0:\safeta \triangleright
															{\call{f}~x_f}), \omega-2 ,\unta
						}						
						\\
						&\text{ where }\trg{\OB{B'}}\text{ is the current stack and the body of \trg{f} is \trg{s}}
						\\
						\xtot{\cbh{f}{0}{H}^\safeta}
						&
						\trg{
							(C,H,\OB{B'}\cdot x\mapsto 0:\safeta \triangleright
														s), \omega-3 ,\unta
						}
					\end{align*}

					We need to prove that the new binding is safe (which is true) and that the action is droppable (which holds by \Cref{tr:ac-rel-ep-al}), so this case holds.

					\begin{center}
					\fbox{
						\parbox{.75\textwidth}{
						Note: this is where the proof for \compslht{\cdot} without \trg{\lfence} breaks, 
						because there we need a stronger invariant, namely that all bindings start with a variable capturing speculation, and here we cannot set that up correctly.
						
						\compslht{\cdot} with \trg{\lfence} goes through because a rollback is triggered, and the stack of bindings goes back to what was related.
						}
					}
					\end{center}

				\item[ret] 

					This is analogous to the point above.

					Additionally, we need to prove that the bindings we go back to are all safe.

					This trivially holds because a context cannot create unsafe bindings (\Thmref{thm:ctx-spec-sing-safe})
					 and the binding created in a call is safe (\Cref{tr:eus-call}).
			\end{description}

		\item[$\trg{f''}\in \src{\OB{f_c}}$ and $\trg{f}\notin \src{\OB{f_c}}$] 

			This is a reduction going from context to compiled code.

			We proceed by induction on \trgb{\omega}.

			The base case is trivial by \Cref{tr:eut-tr-init}, the inductive case has two cases:
			\begin{description}
				\item[call] 
				\item[ret] 
			\end{description}
			Both are trivially true since nothing extra needs to be enforced.
	\end{description}
\end{proof}

\BREAK

\begin{proof}[Proof of \Thmref{thm:ctx-spec-most-omega}]\proofref{}{ctx-spec-most-omega}\hfill

	By induction on the reduction and with \Thmref{thm:ctx-spec-sing-safe}.
\end{proof}

\BREAK

\begin{proof}[Proof of \Thmref{thm:ctx-spec-sing-safe}]\proofref{}{ctx-spec-sing-safe}\hfill

	By induction on \trg{s}:
	\begin{description}
		\item[Base]
		\begin{description}
			\item[skip]
			\item[assignment] 
			\item[lfence] 
		\end{description}
		\item[Inductive]  
		\begin{description}
			\item[call]  
			\item[sequence]
			\item[letin]
			\item[if]
			\item[read]
			\item[cmove]
		\end{description}
	\end{description}
	All cases are trivial, all expressions evaluate to \trg{\safeta} due to \Thmref{thm:exp-red-safe} and no reduction can load \trg{\unta} values, so the conditions are met.

	All actions are tagged \trg{\safeta} so they can be related to \srce.
\end{proof}

\BREAK

\begin{proof}[Proof of \Thmref{thm:comp-spec-most-omega}]\proofref{}{comp-spec-most-omega}\hfill

	By induction on the reduction with \Thmref{thm:comp-spec-safe}
\end{proof}

\BREAK

\begin{proof}[Proof of \Thmref{thm:comp-spec-safe}]\proofref{}{comp-spec-safe}\hfill

	By induction on \src{s}:
	\begin{description}
		\item[Base]
			\begin{description}
				\item[skip] Trivial.

				\item[assign]

				This is analogous to the call case, save that there are two case analyses for both expressions.

				\item[private assign] 

				This is analogous to the call case, save that there are two case analyses for both expressions.

			\end{description}
		\item[Inductive] 
			\begin{description}
				\item[call] 

				If \trg{\omega} = \trg{0} then this trivially holds by \Cref{tr:eut-tr-init}.

				If \trg{\omega} = \trg{n+1} then we have two cases:
				\begin{itemize}
					\item \src{f} is compiled code.

					We have two cases:
					\begin{enumerate}
						\item \trg{B\triangleright \compsslh{e}\bigredt v:\taintt}

						We have two cases here:

						\begin{enumerate}
							\item \trg{\omega>3}

							By HP we have that \trg{H(-1)\mapsto \truet:\safeta} so we have:

							(for simplicity we only keep track of the top of the stack of execution states)
							\begin{align*}
								&
								\trg{
									(C,H,\OB{B}\cdot B \triangleright
																\letin{x_f}{\compsslh{e}}{
																	\cmovet{\trg{x_f}}{\trg{0}}{\trg{\predState}}{\call{f}~x_f}
																}), \omega ,\unta
								}						
								\\
								&\text{ assuming } \trg{B\triangleright \compsslh{e}\bigredt v : \taintt}
								\\
								\xtot{}
								&
								\trg{
									(C,H,\OB{B}\cdot B\cdot x_f\mapsto v:\taintt \triangleright
																	\cmovet{\trg{x_f}}{\trg{0}}{\trg{\predState}}{\call{f}~x_f}), \omega-1 ,\unta
								}						
								\\
								&\text{ since } \trg{H(-1)\mapsto \truet:\safeta}
								\\
								\xtot{}
								&
								\trg{
									(C,H,\OB{B}\cdot B\cdot x_f\mapsto 0:\safeta \triangleright
																	{\call{f}~x_f}), \omega-2 ,\unta
								}						
								\\
								&\text{ where }\trg{\OB{B'}}\text{ is the current stack and the body of \trg{f} is \compsslh{s}}
								\\
								\xtot{\cbh{f}{0}{H}^\safeta}
								&
								\trg{
									(C,H,\OB{B'}\cdot x\mapsto 0:\safeta \triangleright
																\compsslh{s}), \omega-3 ,\unta
								}
							\end{align*}

							So this case holds by IH and by \Cref{tr:ac-rel-ep-al} since the action is safe.

							\item \trg{\omega\leq 3}

							In this case the execution will run out of steps and the case holds by \Cref{tr:et-sp-rb}.
						\end{enumerate}

						\item \trg{B\triangleright \compsslh{e}\bigredt e:\taintt}

						If \compsslh{e} gets stuck, this holds by \Cref{tr:et-sp-rb-s}.
					\end{enumerate}

					\item \src{f} is context.

					This is a contradiction. 
				\end{itemize}

				\item[seq]

				This is analogous to the if case save for the considerations on the expressions.

				\item[letin]

				This is analogous to the if case.

				\item[if]

				We have the same cases as in the call case, so we take a look at the most interesting one, namely when all reductions go through:

				We have these reductions:

				\begin{align*}
					&
					\trg{ 
						\begin{aligned}[t]
							&
							\trg{C, H, \OB{B}\cdot B \triangleright}
							\\
							&
							\letint{\trg{x_g}}{\compsslh{e}}{
							\\
							&\
								\letreadpt{\trg{\predState}}{-1}{}
								\\
								&\
								\cmovet{\trg{x_g}}{\trg{0}}{\trg{\predState}}{
								\\
								&\
									\ifztet{\trg{x_g}
										}{ 
										\letreadpt{\trg{x}}{\trg{-1}}{\asgnpt{\trg{-1}}{\trg{x \vee \neg x_g}}};
										\compsslh{s}
									\\
									&\ \
									}{
										\letreadpt{\trg{x}}{\trg{-1}}{\asgnpt{\trg{-1}}{\trg{x \vee x_g}}};
										\compsslh{s'}
									}
								}
							}
						\end{aligned}
					} 
					\\
					\xtot{}
					&
					\trg{ 
						\begin{aligned}[t]
							&
							\trg{C, H, \OB{B}\cdot B	\cdot x_g \mapsto v : \taintt \triangleright}
							\\
							&
								\letreadpt{\trg{\predState}}{-1}{}
								\\
								&\
								\cmovet{\trg{x_g}}{\trg{0}}{\trg{\predState}}{
								\\
								&\
									\ifztet{\trg{x_g}
										}{ 
										\letreadpt{\trg{x}}{\trg{-1}}{\asgnpt{\trg{-1}}{\trg{x \vee \neg x_g}}};
										\compsslh{s}
									\\
									&\ \
									}{
										\letreadpt{\trg{x}}{\trg{-1}}{\asgnpt{\trg{-1}}{\trg{x \vee x_g}}};
										\compsslh{s'}
									}
								}
						\end{aligned}
					} 
					\\
					\xtot{}
					&
					\trg{ 
						\begin{aligned}[t]
							&
							\trg{C, H, \OB{B}\cdot B	\cdot x_g \mapsto v :\taintt \cdot \predState \mapsto \truet:\safeta \triangleright}
							\\
								&\
								\cmovet{\trg{x_g}}{\trg{0}}{\trg{\predState}}{
								\\
								&\
									\ifztet{\trg{x_g}
										}{ 
										\letreadpt{\trg{x}}{\trg{-1}}{\asgnpt{\trg{-1}}{\trg{x \vee \neg x_g}}};
										\compsslh{s}
									\\
									&\ \
									}{
										\letreadpt{\trg{x}}{\trg{-1}}{\asgnpt{\trg{-1}}{\trg{x \vee x_g}}};
										\compsslh{s'}
									}
								}
						\end{aligned}
					} 
					\\
					&
					\text{ this step is key: note that \trg{x_g} becomes \trg{\safeta}}
					\\
					\xtot{\ifl{0}^\safeta}
					&
					\trg{ 
						\begin{aligned}[t]
							&
							\trg{C, H, \OB{B}\cdot B	\cdot x_g \mapsto 0 : \safeta \cdot \predState \mapsto \truet:\safeta \triangleright}
							\\
								&\
									\ifztet{\trg{x_g}
										}{ 
										\letreadpt{\trg{x}}{\trg{-1}}{\asgnpt{\trg{-1}}{\trg{x \vee \neg x_g}}};
										\compsslh{s}
									\\
									&\ \
									}{
										\letreadpt{\trg{x}}{\trg{-1}}{\asgnpt{\trg{-1}}{\trg{x \vee x_g}}};
										\compsslh{s'}
									}
						\end{aligned}
					} 
					\\
					\xtot{\rdl{-1}^\safeta}
					&
					\trg{ 
						\begin{aligned}[t]
							&
							\trg{C, H, \OB{B}\cdot B	\cdot x_g \mapsto 0 : \safeta \cdot \predState \mapsto \truet:\safeta \triangleright}
							\\
							&
								\letreadpt{\trg{x}}{\trg{-1}}{\asgnpt{\trg{-1}}{\trg{x \vee \neg x_g}}};
										\compsslh{s}
						\end{aligned}
					} 
					\\
					\xtot{\rdl{-1}^\safeta}
					&
					\trg{ 
						\begin{aligned}[t]
							&
							\trg{C, H, \OB{B}\cdot B	\cdot x_g \mapsto 0 : \safeta \cdot \predState \mapsto \truet:\safeta \cdot x \mapsto \truet :\safeta \triangleright}
							\\
							&
								\asgnpt{\trg{-1}}{\trg{x \vee \neg x_g}};
										\compsslh{s}
						\end{aligned}
					} 
					\\
					\xtot{\wrl{-1}^\safeta}
					&
					\trg{ 
						\begin{aligned}[t]
							&
							\trg{C, H\cup -1\mapsto\truet:\safeta, \OB{B}\cdot B	\cdot x_g \mapsto 0 : \safeta \cdot \predState \mapsto \truet:\safeta \cdot x \mapsto \truet :\safeta \triangleright}
							\\
							&
										\compsslh{s}
						\end{aligned}
					} 
				\end{align*}
				The rest holds by IH so long as the states are related by $\ssrel$, which is trivially true and if the trace is related to \srce by $\tracerel$.
 
				We have this trace
				\begin{align*}
					\Xtot{\rdl{-1}^\safeta \cdot \ifl{0}^\safeta \cdot \rdl{-1}^\safeta \cdot \wrl{-1}^\safeta}
				\end{align*}
				and each action is related to \srce by \Cref{tr:ac-rel-ep-hp,tr:ac-rel-ep-al}, so the whole trace is related to \srce.

				Thus this case holds.

				\item[read]

				This is analogous to the if case.

				\item[private read]     

				This is analogous to the if case.
			\end{description}
	\end{description}
\end{proof}

\BREAK

\begin{proof}[Proof of \Thmref{thm:spec-rel-satte-safe}]\proofref{}{spec-rel-satte-safe}\hfill

	This proof proceeds by induction on the stack of configurations.
	\begin{description}
		\item[Base] 
			Empty stack:

			By \Thmref{thm:spec-most-omega} we have the first reductions and $\srce \tracerel \trg{\trat{^{\taintt}}}$:
			\begin{align*}
				&\
				\trg{w (C, H_b, \OB{B_b} \triangleright \proc{s_b}{\OB{f_b}},\bot,\safeta) \cdot (C, H, \OB{B} \triangleright \proc{s}{\OB{f}},\omega,\unta)}
				\\
				\Xtot{\trat{^{\taintt}}}
				&\
				\trg{w (C, H_b, \OB{B_b} \triangleright \proc{s_b}{\OB{f_b}},\bot,\safeta) \cdot (C, H', \OB{B'} \triangleright \proc{s'}{\OB{f'}},0,\unta)}
				\\
				\Xtot{\rollbl}
				&\
				\trg{w (C, H_b, \OB{B_b} \triangleright \proc{s_b}{\OB{f_b}},\bot,\safeta)}
			\end{align*}
			Given the rollback reduction and \Cref{tr:ac-rel-rlb} this case holds.

		\item[Inductive]  

			This holds by IH plus the same reasoning as in the base case.
	\end{description}
\end{proof}

\BREAK

\begin{proof}[Proof of \Thmref{thm:bwd-sim-slh}]\proofref{}{bwd-sim-slh}\hfill

	We have these kinds of reductions:
	\begin{itemize}
		\item compiled-to-compiled code: this holds by \Thmref{thm:back-sim-stm-slh};
		\item backtranslated-to-backtranslated code: this holds by \Thmref{thm:back-sim-bts-lfence};
		\item compiled-to-backtranslated or backtranslated-to-compiled code: this is analogous to the cases discussed in \Thmref{sec:bwd-sim-lfence} since these do not trigger any speculation.
	\end{itemize}
\end{proof}

\BREAK

\begin{proof}[Proof of \Thmref{thm:corr-bt-slh}]\proofref{}{corr-bt-slh}\hfill

	By \Thmref{thm:bwd-sim-slh} with \Thmref{thm:ini-state-rel-slh}.
\end{proof}

\BREAK

\begin{proof}[Proof of \Thmref{thm:slh-comp-rdss}]\proofref{}{slh-comp-rdss}\hfill

	Instantiate \ctxs{} with \backtrfencec{\ctxt{}}.

	This holds by \Thmref{thm:corr-bt-slh}.
\end{proof}

\BREAK

\begin{proof}[Proof of \Thmref{thm:slh-comp-rdss-proc}]\proofref{}{slh-comp-rdss-proc}\hfill

	Instantiate \ctxs{} with \backtrfencec{\ctxt{}}.

	This holds by a variation of \Thmref{thm:corr-bt-slh} to account for the different state relation ($\cssrel$) which in turns holds by a variation of both \Thmref{thm:bwd-sim-slh} and \Thmref{thm:ini-state-rel-slh}.

	The latter is a trivial variation of the same theorem to account for the different trace relation.

	The former holds by a variation of three results : \Thmref{thm:back-sim-stm-slh} and \Thmref{thm:back-sim-bts-lfence} and \Thmref{sec:bwd-sim-lfence} with a variation to account for the different state relation.

	Of these three, only the first is effectively affected by the change of state relation, so the former holds by a variation of \Thmref{thm:fwd-sim-stm-slh}, which relies on \Thmref{thm:spec-rel-satte-safe} and then on \Thmref{thm:spec-most-omega} and then on an adaptation of \Thmref{thm:comp-spec-safe}, where the new trace relation plays a role.

	We provide only the proof of the last theorem (in \Thmref{thm:comp-spec-safe-proc}) since it is the only one with any change.
\end{proof}

\BREAK
\begin{proof}[Proof of \Thmref{thm:comp-spec-safe-proc}]\proofref{}{comp-spec-safe-proc}\hfill

	By induction on \src{s}:
	\begin{description}
		\item[Base]
			\begin{description}
				\item[skip] Trivial.

				\item[assign]

				This is analogous to the call case, save that there are two case analyses for both expressions.

				\item[private assign] 

				This is analogous to the call case, save that there are two case analyses for both expressions.

			\end{description}
		\item[Inductive] 
			\begin{description}
				\item[call] 

				If \trg{\omega} = \trg{0} then this trivially holds by \Cref{tr:eut-tr-init}.

				If \trg{\omega} = \trg{n+1} then we have two cases:
				\begin{itemize}
					\item \src{f} is compiled code.

					We have two cases:
					\begin{enumerate}
						\item \trg{B\triangleright \compsslh{e}\bigredt v:\taintt}

						We have two cases here:

						\begin{enumerate}
							\item \trg{\omega>1}

							Compiled code starts with an \trg{\lfence} so the execution is immediately rolled back and this case holds by \Cref{tr:et-sp-rb}.

							\item \trg{\omega\leq 1}

							In this case the execution will run out of steps and the case holds by \Cref{tr:et-sp-rb}.
						\end{enumerate}

						\item \trg{B\triangleright \compsslh{e}\bigredt e:\taintt}

						If \compsslh{e} gets stuck, this holds by \Cref{tr:et-sp-rb-s}.
					\end{enumerate}

					\item \src{f} is context.

					This is a contradiction. 
				\end{itemize}

				\item[seq]

				This is analogous to the if case save for the considerations on the expressions.

				\item[letin]

				This is analogous to the if case.

				\item[if]

				This is analogous to \showproof{comp-spec-safe} of \Thmref{thm:comp-spec-safe}.

				\item[read]

				This is analogous to the if case.

				\item[private read]     

				This is analogous to the if case.
			\end{description}
	\end{description}
\end{proof}

\BREAK

\begin{proof}[Proof of \Thmref{thm:slh-comp-rdss-weak}]\proofref{}{slh-comp-rdss-weak}\hfill

	This is analogous to the proof of \Thmref{thm:slh-comp-rdss}.

	The only changes are in the compilation of calls, public reads, private reads, public writes and private writes in two theorems:

	The former is the adaptation of \Thmref{thm:fwd-sim-stm-slh} from \compsslh{\cdot} to \compslh{\cdot}.

	The second is the adaptation of \Thmref{thm:comp-spec-most-omega} from \compsslh{\cdot} to \compslh{\cdot}.

\end{proof}

\BREAK

\end{document}